%% file: main.tex
\documentclass[11pt, letterpaper]{article}
\usepackage{graphicx}
\usepackage[
letterpaper,
top=1in,
bottom=1in,
left=1in,
right=1in]{geometry}

\input{preamble.tex}

\hypersetup{
    pdfauthor = {Haitong Liu, Deepak Narayanan Sridharan, David Steurer, Manuel Wiedmer},
}

\title{Fast algorithms for learning a Gaussian under halfspace truncation with optimal sample complexity}

\author{
  \begin{minipage}[t]{0.3\textwidth}
    \centering
    Haitong Liu \\
    \footnotesize \href{mailto:liuhai@student.ethz.ch}{liuhai@student.ethz.ch}\\
    \small ETH Zurich
  \end{minipage}
  \hfill
  \begin{minipage}[t]{0.3\textwidth}
    \centering
    Deepak Narayanan {Sridharan}\\
    \footnotesize \href{mailto:dsridharan@inf.ethz.ch}{dsridharan@inf.ethz.ch}\\
    \small ETH Zurich
  \end{minipage}
  \\\\
  \begin{minipage}[t]{0.3\textwidth}
    \centering
    David Steurer\\
    \footnotesize \href{mailto:david.steurer@inf.ethz.ch}{david.steurer@inf.ethz.ch}\\
    \small ETH Zurich
  \end{minipage}
  \hfill
  \begin{minipage}[t]{0.3\textwidth}
    \centering
    Manuel Wiedmer\\
    \footnotesize \href{mailto:manuel.wiedmer@inf.ethz.ch}{manuel.wiedmer@inf.ethz.ch}\\
    \small ETH Zurich
  \end{minipage}
}

\date{June 25, 2026}

\begin{document}

\maketitle
\thispagestyle{empty}

\begin{abstract}
We study the fundamental problem of learning a high-dimensional Gaussian truncated to an unknown halfspace. 
Lee, Mehrotra and Zampetakis (FOCS'24) recently obtained the first polynomial time algorithm for this problem, but their resulting sample and time complexity bounds are not optimal. 
Under non-trivial truncation, for any target accuracy $\varepsilon > 0$ and dimension $d$ we give an efficient algorithm that uses $n = \tilde{O}(d^2/\varepsilon^2)$ samples and learns the underlying Gaussian to error~$\varepsilon$ in total variation distance. Our algorithm is also fast: its runtime is dominated by the cost of computing the empirical covariance matrix. Both our sample and time complexity are optimal in terms of $d$ and $\varepsilon$ even \emph{without} truncation: in this regard, we can learn a Gaussian under halfspace truncation for free.

The key ingredient behind our result is a novel reinterpretation of the low-degree moments of the truncated Gaussian in terms of a relative truncation parameter. This relative truncation parameter uniquely determines the parameters of the untruncated Gaussian and enables direct parameter recovery. This reinterpretation allows us to circumvent the time intensive projected stochastic gradient descent procedure that is widely used in learning under truncation.
\end{abstract}

\setcounter{tocdepth}{2}
\newpage
\thispagestyle{empty}
\tableofcontents
\newpage

\setcounter{page}{1}

\input{content/introduction}

\input{content/techniques}

\input{content/preliminaries}

\input{content/preparation}

\input{content/algorithms}

\input{content/robustness}

\input{content/extensions}

\section*{Acknowledgments}
\addcontentsline{toc}{section}{Acknowledgments}
This research was supported by the Swiss National Science Foundation (SNSF), grant no. 10004947. We thank the anonymous reviewers for their helpful comments.

\appendix
\crefalias{section}{appendix}
\input{content/appendix}

\bibliographystyle{amsalpha}
\bibliography{refs}

\end{document}

%% file: preamble.tex
\usepackage[utf8]{inputenc}

\usepackage{amsmath, amsthm, amssymb}
\usepackage{mathtools}

\usepackage[dvipsnames]{xcolor}
\usepackage[T1]{fontenc}
\usepackage[full]{textcomp}
\usepackage[american]{babel}

\usepackage{caption}
\usepackage{booktabs}
\usepackage{siunitx}

\usepackage{newpxtext}
\usepackage{textcomp}
\usepackage[varg,bigdelims]{newpxmath}
\usepackage[scr=rsfso]{mathalfa}
\usepackage{bm}

\usepackage[pagebackref,colorlinks=true,urlcolor=blue,linkcolor=blue,citecolor=OliveGreen]{hyperref}

\usepackage{aliascnt}
\usepackage{enumitem}  
\usepackage{listings}
\usepackage{array}
\usepackage[ruled]{algorithm2e}
\usepackage{float}

\DeclareSymbolFont{bbold}{U}{bbold}{m}{n}
\DeclareSymbolFontAlphabet{\mathbbb}{bbold}

\usepackage[capitalize,nameinlink]{cleveref}

\newtheorem{theorem}{Theorem}[section]
\newtheorem*{theorem*}{Theorem}

\newaliascnt{lemma}{theorem}
\newtheorem{lemma}[lemma]{Lemma}
\aliascntresetthe{lemma}
\newtheorem*{lemma*}{Lemma}

\newaliascnt{fact}{theorem}
\newtheorem{fact}[fact]{Fact}
\aliascntresetthe{fact}
\newtheorem*{fact*}{Fact}

\newaliascnt{proposition}{theorem}

\aliascntresetthe{proposition}
\newtheorem*{proposition*}{Proposition}

\newaliascnt{corollary}{theorem}
\newtheorem{corollary}[corollary]{Corollary}
\aliascntresetthe{corollary}
\newtheorem*{corollary*}{Corollary}

\newaliascnt{claim}{theorem}
\newtheorem{claim}[claim]{Claim}
\aliascntresetthe{claim}
\newtheorem*{claim*}{Claim}

\newaliascnt{remark}{theorem}
\newtheorem{remark}[remark]{Remark}
\aliascntresetthe{remark}
\newtheorem*{remark*}{Remark}

\theoremstyle{definition}

\newaliascnt{definition}{theorem}
\newtheorem{definition}[definition]{Definition}
\aliascntresetthe{definition}
\newtheorem*{definition*}{Definition}

\crefname{claim}{Claim}{Claims}
\Crefname{claim}{Claim}{Claims}

\usepackage{boxedminipage}
\newcommand{\paren}[1]{(#1)}
\newcommand{\Paren}[1]{\left(#1\right)}

\newcommand{\brac}[1]{[#1]}
\newcommand{\Brac}[1]{\left[#1\right]}

\newcommand{\abs}[1]{\lvert#1\rvert}
\newcommand{\Abs}[1]{\left\lvert#1\right\rvert}

\newcommand{\Set}[1]{\left\{#1\right\}}

\newcommand{\norm}[1]{\lVert#1\rVert}
\newcommand{\Norm}[1]{\left\lVert#1\right\rVert}

\newcommand{\iprod}[1]{\langle#1\rangle}
\newcommand{\Iprod}[1]{\left\langle#1\right\rangle}

\newcommand{\Esymb}{\mathbb{E}}
\newcommand{\Psymb}{\mathbb{P}}
\newcommand{\Rsymb}{\mathbb{R}}

\DeclareMathOperator*{\E}{\Esymb}

\DeclareMathOperator*{\ProbOp}{\Psymb}
\renewcommand{\Pr}{\ProbOp}

\newcommand\bdot\bullet

\newcommand{\ind}[1]{\mathbbb{1}\Brac{#1}}

\DeclareMathOperator{\poly}{poly}

\DeclareMathOperator{\polylog}{polylog}

\newcommand{\iid}{i.i.d.\xspace}

\newcommand{\Z}{\mathbb Z}
\newcommand{\N}{\mathbb N}
\newcommand{\R}{\mathbb R}

\newcommand{\x}{X}
\newcommand{\y}{Y}

\newcommand{\cD}{\mathcal D}

\newcommand{\cN}{\mathcal N}

\newcommand{\cS}{\mathcal S}

\newcommand{\bbE}{\mathbb E}

\renewcommand{\leq}{\leqslant}
\renewcommand{\le}{\leqslant}
\renewcommand{\geq}{\geqslant}
\renewcommand{\ge}{\geqslant}
\let\epsilon=\varepsilon
\numberwithin{equation}{section}

\newcommand{\eps}{\epsilon}

\allowdisplaybreaks
\newcommand*{\Lowner}{L\"owner\xspace}

\renewcommand{\epsilon}{\ensuremath\varepsilon}

\renewcommand{\phi}{\ensuremath{\varphi}}

\newcommand{\vnorm}[1]{\left\lVert#1\right\rVert}

\newcommand{\normal}[1]{\mathcal{N}\left(#1\right)}

\newcommand{\sk}{\textup{skew}}
\newcommand{\tr}{\mathrm{tr}}

\newcommand{\truncatedGauss}[3]{\cN(#1,#2)|_{#3}}
\newcommand{\TVdist}{d_\mathrm{TV}}

\newcommand{\truncparaword}{relative truncation parameter}
\newcommand{\Truncparaword}{Relative truncation parameter}
\newcommand{\truncpara}{\gamma}

\newcommand\tp[2][-6]{{#2}^{\mkern#1mu\top}}

\newcommand{\ddt}{\frac{\mathrm{d}}{\mathrm{d}t}}
\newcommand{\ddx}{\frac{\mathrm{d}}{\mathrm{d}x}}

\definecolor{codegreen}{rgb}{0,0.6,0}
\definecolor{codegray}{rgb}{0.5,0.5,0.5}
\definecolor{codepurple}{rgb}{0.58,0,0.82}
\definecolor{backcolour}{rgb}{0.95,0.95,0.92}
\usepackage{algpseudocode}

\lstdefinestyle{mystyle}{
    backgroundcolor=\color{backcolour},   
    commentstyle=\color{codegreen},
    keywordstyle=\color{magenta},
    numberstyle=\tiny\color{codegray},
    stringstyle=\color{codepurple},
    basicstyle=\ttfamily\footnotesize,
    breakatwhitespace=false,         
    breaklines=true,                 
    captionpos=b,                    
    keepspaces=true,                 
    numbers=left,                    
    numbersep=5pt,                  
    showspaces=false,                
    showstringspaces=false,
    showtabs=false,                  
    tabsize=2,
    language=Mathematica
}

%% file: content/introduction.tex
\newcommand{\package}{\emph}

\section{Introduction}
\label{sec:introduction}
We study statistical estimation under truncation. In this setting, data is only observed when it falls in some survival set.
There is a long line of work on learning based on truncated or censored samples; dating back to at least the work of Daniel Bernoulli \cite{Bernoulli:smallpoxvaccination}.
Bernoulli studied the effectiveness of the smallpox vaccination and encountered the problem of missing data from people who died of smallpox; thus he only had access to `truncated' data.
Since then, statistical estimation from truncated or censored samples has been studied by numerous scientists, including \cite{Galton:horses}, \cite{Pearson:truncated}, \cite{PearsonLee:truncated}, \cite{Lee:truncated}, \cite{Fisher:truncated}, and \cite{BlissStevens:censored}; we refer the reader to the book \cite{Cohen:book} for a more comprehensive treatment.
Importantly however, all these works either study the one-dimensional setting or utilize methods that are not computationally efficient in higher dimensions.

In the last decade, the following setting has received renewed attention:
Given samples from a multivariate Gaussian with unknown parameters truncated to some set $S$, learn the Gaussian in total variation (TV) distance.
We note that for this, it is enough to learn the mean and covariance up to small error.
\cite{DGTZ18:efficientforknownset} showed that, given oracle access to the truncation set, the parameters can be learned in TV-distance up to error $\varepsilon$ in polynomial time and in sample complexity $\tilde{O}(d^2/\varepsilon^2)$.
This sample complexity is optimal up to logarithmic factors since already for learning the covariance of a Gaussian, even without truncation, $n = \Omega(d^2/\varepsilon^2)$ samples are information-theoretically necessary (cf. \cite{cai2010optimal}).

If the truncation set is unknown and {\em arbitrary}, \cite{DGTZ18:efficientforknownset} also showed that learning up to arbitrarily small error in TV-distance is impossible.
\cite{kontonis2019} studied the more tractable setting where the truncation set is unknown but belongs to some well-behaved class of possible truncation sets.
They are able to learn the parameters of a Gaussian with \emph{diagonal} covariance matrix given a bound on the Gaussian surface area of the truncation set.
If the Gaussian surface area is at most $\Gamma$, then they can learn the Gaussian up to TV-error $\varepsilon$ with time and sample complexity $d^{\poly(1/\varepsilon) \cdot \Gamma^2}$.
Without further assumptions on the truncation set, \cite{diakonikolas2024} showed that the exponential dependence on $1/\eps$ is necessary in the SQ model.

More recently, \cite{lee2024} extended the result of \cite{kontonis2019} to general Gaussians.
Furthermore, they showed that for two simple classes of truncation sets, namely, axis-aligned rectangles and halfspaces, one can even learn in time and sample complexity $\poly(d/\varepsilon)$.
This is the first algorithm that achieves fully polynomial runtime for learning a Gaussian under truncation to some unknown survival set.

In this work, we continue the investigation of learning from truncated samples for `simple' survival sets. We focus on the case of halfspace truncation.
For halfspaces, the result by \cite{lee2024} needs at least $\Omega(d^3)$ samples and might not be practical.
This is because an important step of their algorithm is based on projected stochastic gradient descent, for which they use the ellipsoid method for the projection step.
It is important to note that the projection is \emph{not} to the survival set, but instead to some set in the parameter space.
Therefore, even for simple classes it is unclear whether this step can be made fast.
Concretely, a direct application of the ellipsoid method (without using any specifics about the problem) yields a lower bound of $\Omega(d^{10})$ on the runtime of their algorithm (\cite[Algorithm 3]{lee2024}).\footnote{Algorithm 3 in \cite{lee2024} runs the ellipsoid method on a problem in $m = \Theta(d^2)$ dimensions. The ellipsoid method needs $\Omega(m^2)$ iterations to terminate and each iteration takes $\Omega(m^2)$ operations (as we need to update the $m \times m$ matrix describing the ellipsoid). Thus, one run of the ellipsoid method needs $\Omega((d^2)^4) = \Omega(d^8)$ time. Finally, they run the ellipsoid method $\Omega(d^2)$ times, giving a runtime lower bound of $\Omega(d^{10})$.}
Thus, their paper leaves open the following natural question for halfspace truncation:
\begin{center}
    \textit{Are there fast algorithms with optimal sample complexity, matching the untruncated setting?}
\end{center}
Naturally, the best one could hope for is to match the results of the untruncated setting --- which has been studied for decades and is well-understood.
There, the best known algorithm is to compute the sample mean and the sample covariance; see e.g. \cite{cai2010optimal}.
To achieve error $\varepsilon$, one needs $N = \Theta(d^2/\varepsilon^2)$ many samples.
The time needed to compute this estimator is asymptotically the same as the time needed to multiply a $d \times N$ matrix with its transpose, which we denote by $T(N,d)$.
Currently, the best-known runtime for this is $O(d^{3.250035}/\varepsilon^2)$ as a result of recent advances in fast matrix multiplication due to \cite{AlmanDuanWilliamsXuXuZhou:fastmatrixmultplication}.

\subsection{Main contributions}
In this paper, we answer the above question in the affirmative by designing an algorithm that acheives the following guarantees.
\begin{theorem}[Informal, see \Cref{thm:main}]\label{THM:INTRO:main}
    There is an algorithm with the following guarantee:
    The algorithm takes as input~$N$ samples from a $d$-dimensional Gaussian $\normal{\mu, \Sigma}$ truncated to some halfspace of mass $\alpha$ as well as a lower bound $\alpha_0$ on $\alpha$.
    If ${N = d^2 \cdot \poly(1/\varepsilon, \log(1/\alpha_0))}$, the algorithm outputs $\hat{\mu}$ and $\widehat{\Sigma}$ such that, with probability at least~0.99, we have:
    \[
        \TVdist\left(\normal{\mu, \Sigma}, \normal{\hat{\mu}, \widehat{\Sigma}}\right) \leq \varepsilon.
    \]
    The time complexity of the algorithm is $O(T(N,d)) \leq O(d^{3.250035}) \cdot \poly(1/\varepsilon, \log(1/\alpha_0))$.
\end{theorem}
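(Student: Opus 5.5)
Our plan is to recover $(\mu,\Sigma)$ directly from the first two moments of the truncated distribution, avoiding projected SGD. Let the halfspace be $\{x : \langle v, x\rangle \ge t\}$ with unknown unit normal $v$ (in the whitened geometry). Write $m = \E[X]$ and $C = \operatorname{Cov}(X)$ for the truncated sample. Whitening by $\Sigma^{1/2}$ and centering at $\mu$ reduces the problem to a standard Gaussian truncated to $\{\langle u, z\rangle \ge \gamma\}$, with $u = \Sigma^{1/2}v/\|\Sigma^{1/2}v\|$ and $\gamma$ the relative truncation parameter, i.e.\ the threshold in standard units.

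For this standard truncated Gaussian, the mean is $\lambda(\gamma)\,u$, where $\lambda$ is the inverse Mills ratio $\varphi(\gamma)/(1-\Phi(\gamma))$. The covariance is $I - \delta(\gamma)\,uu^\top$ with $\delta(\gamma) = 1 - \operatorname{Var}(Z \mid Z\ge\gamma) \in (0,1)$. Undoing the whitening gives
\[ m = \mu + \lambda(\gamma)\,\Sigma^{1/2}u, \qquad C = \Sigma - \delta(\gamma)\,\Sigma^{1/2}uu^\top\Sigma^{1/2}. \]
Set $w = \Sigma^{1/2}u$. Then $\Sigma = C + \delta\, ww^\top$, and by Sherman--Morrison $w^\top C^{-1} w = s/(1-\delta s)$ with $s=\|u\|^2=1$, so $w^\top C^{-1}w = 1/(1-\delta)$. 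The direction of $w$ is still unknown. To get it, we use the third moment in the direction $C^{-1}$-whitened, or alternatively the rank-one structure: $C^{-1/2}\Sigma C^{-1/2} = I + \delta\,C^{-1/2}ww^\top C^{-1/2}$. So the only non-isotropic direction is $C^{-1/2}w$.

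We plan to identify that direction from the empirical third moment $\E[(C^{-1/2}(X-m))^{\otimes 3}]$, which equals $\kappa_3(\gamma)\,\tilde w^{\otimes 3}$ for the whitened direction $\tilde w$. Contracting it against $I$ gives a vector proportional to $\tilde w$ in only $O(Nd)$ time. The magnitude of that vector pins down $\gamma$, because the standardized skewness of the one-dimensional truncated normal is monotone in $\gamma$; this is the reparametrization by the relative truncation parameter. Given $\gamma$ and the direction, we get $\delta$, $\lambda$, and $w$ (via $w^\top C^{-1}w = 1/(1-\delta)$), hence $\Sigma = C + \delta ww^\top$ and $\mu = m - \lambda w$. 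The running time is dominated by forming the empirical covariance, i.e.\ $T(N,d)$.

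For sample complexity, the empirical covariance estimates $C$ in relative Frobenius norm $\varepsilon$ with $\tilde O(d^2/\varepsilon^2)$ samples, since the truncated Gaussian has subexponential tails after whitening. The skewness vector needs only $\tilde O(d/\varepsilon^2)$ samples. We will then bound the TV distance through $\|\Sigma^{-1/2}(\hat\mu-\mu)\|$ and $\|\Sigma^{-1/2}\widehat\Sigma\Sigma^{-1/2} - I\|_F$.

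The main obstacle is the stability of inverting $\gamma \mapsto$ (skewness, $\delta$). The map degenerates as $\gamma\to-\infty$ (no truncation: skewness $\to 0$). In that regime, though, $\delta\to0$ and $\lambda\to0$, so errors in $\gamma$ barely affect $\Sigma$ and $\mu$. As $\gamma\to+\infty$, i.e.\ $\alpha\to0$, the derivatives grow like $\poly(\gamma) = \poly\log(1/\alpha_0)$. We will handle this by a quantitative derivative lower bound on the skewness map, weighted against the sensitivity of $\delta,\lambda$, yielding the $\poly(1/\varepsilon,\log(1/\alpha_0))$ factor.
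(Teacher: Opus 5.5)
Your plan follows the paper's route. The third central moment is rank one along $\Sigma w$, monotonicity of $\sk$ inverts for the relative truncation parameter, and rank-one corrections to the first two moments give $\mu$ and $\Sigma$. Whitening by the empirical covariance and then contracting against $I_d$ is a legitimate variant of the paper's preconditioning, random contraction, top eigenvector and sign test. After whitening, the tensor is $\sk(\gamma)\,\tilde w^{\otimes 3}$ with $\tilde w$ normalized, so one contraction returns $\sk(\gamma)\tilde w$: this gives both $|\sk(\gamma)|$ and the signed direction. The paper computes the same number as $(\hat u^\top\widehat M_2^{-1}\hat u)^{3/2}\iprod{\hat u^{\otimes 3},\widehat M_3}$ and needs Step~7 of \Cref{algo:main} for the sign. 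Three points, however, are wrong or missing.

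(i) The claim that the contraction vector needs only $\tilde O(d/\varepsilon^2)$ samples is false. With centering and empirical whitening, the estimator equals exactly $\frac1N\sum_i(\norm{\hat Y_i}^2-d)\hat Y_i$. Each coordinate of its noise has variance about $2d/N$, so its error is $\Theta(d/\sqrt N)$ in norm. Since direction error enters $\norm{\Sigma^{-1/2}\widehat\Sigma\Sigma^{-1/2}-I_d}_F$ linearly, you need $N\gtrsim d^2/(\varepsilon^2\sk(\gamma)^2)$. This is the same order in $d$ and $\varepsilon$ as the paper's count from \Cref{lem:jenrich,lem:tensor-spectral-concentration}. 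Also, whitening the $N$ samples is a $d\times d$ by $d\times N$ product, not $O(Nd)$ work, though it is still $O(T(N,d))$ with fast rectangular multiplication. The $d^2\poly(1/\varepsilon,\log(1/\alpha_0))$ bound survives, but only once (ii) is handled.

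(ii) In the weak-truncation regime the difficulty is not small errors in $\gamma$. As $\sk(\gamma)\to0$, the direction is simply not identifiable from the noisy tensor, and $\hat\gamma=\sk^{-1}(\cdot)$ can be arbitrarily far from $\gamma$. You must show that the \emph{estimated} corrections $\hat\delta\hat w\hat w^\top$ and $\hat\lambda\hat w$ are small whatever $\hat w$ is. That is, $|\sk(\hat\gamma)|\lesssim|\sk(\gamma)|+\text{noise}$ must force $\hat\delta$ and $\hat\lambda$ to be small. You must also control the range where $|\sk(\gamma)|$ is small but not negligible, where the direction error is of order $\text{noise}/|\sk(\gamma)|$. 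There are two ways to do this:
\begin{itemize}
\item pay $1/\sk(\gamma)^2$ in samples there, as the paper does via \Cref{lem:unifiedalphacloseto1} and the three-case split in the proof of \Cref{thm:main}; this is where its $d^2/\varepsilon^4$ term comes from;
\item weight the direction error against $\delta,\lambda\lesssim|\sk(\gamma)|$ (cf.\ \Cref{claim:one-minus-k2,claim:skew-asymptotic}), as you propose for $\alpha\to0$; this must actually be carried out, and your proposal does not do it at this end.
\end{itemize}

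(iii) You assume, rather than prove, the monotonicity of $\sk$ and the uniform derivative comparisons you invoke. These are the technical core of the paper (\Cref{lemma:monotonicity-of-skewness,CLAIM:ddtskewoverddtkappa2,CLAIM:ddtskewoverddtpsi}). The paper writes the numerator of $\ddt\sk$ as a polynomial in $t$ and the Mills ratio, and sandwiches the Mills ratio with continued-fraction and algebraic bounds. It then verifies positivity and the limits at $\pm\infty$ with computer assistance. Your proposal gives no route to these facts.
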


\begin{remark*}
    If $\alpha \leq 0.99$, then our algorithm needs only $N = d^2/\varepsilon^2 \cdot \polylog(1/\alpha_0)$ samples; otherwise it needs $N = O(d^2/\varepsilon^4)$ many samples. 
    The algorithm is the same for both cases.

    We also note that the `$0.99$' both in the condition on $\alpha$ and in the success probability can be replaced by arbitrary constants $c < 1$.
\end{remark*}
Our algorithm is the first algorithm that achieves optimal sample complexity dependence on~$d$.
The runtime of our algorithm matches the runtime of computing the sample covariance in the untruncated setting and hence is also optimal in terms of $d$.
Furthermore, the dependence of the sample and time complexity on $\varepsilon$ are optimal whenever there is non-trivial truncation (meaning, $\alpha \leq 0.99$).

Regarding our dependence of the sample complexity on $\alpha_0$, we remark that all prior works, including inefficient algorithms, also have a dependence on $\alpha_0$.
In fact, under the assumption that the set is unknown, all previous \emph{efficient} algorithms incur either an exponential dependence on $1/\alpha_0$ or work under the assumption that $\alpha_0 \geq \Omega(1)$ (information-theoretically, \cite{kontonis2019} showed that one can achieve polynomial dependence).
Interestingly, in the case where the set is known, more recently, \cite{KaratapanisKontonisTzamos:polylogalphaknown} showed that one can achieve $\polylog(1/\alpha_0)$ dependence on $\alpha_0$ for the sample complexity (their runtime dependence on $1/\alpha_0$ is polynomial).
In the case of halfspaces, our algorithm matches this guarantee even when the set is \emph{unknown}.
This is the first efficient algorithm to get better than exponential dependence on $1/\alpha_0$ when the set is unknown.
An overview of the known algorithms for learning a Gaussian under halfspace truncation can be found in \Cref{tab:overview}.
We note that all of these cited results are more general than halfspaces; in order to be able to compare them to our result, the results mentioned in the table are specialized to the case of halfspace truncation.

\begin{table}[H]
\centering
\caption{Overview of efficient algorithms for learning a Gaussian under halfspace truncation}
\label{tab:overview}
\vspace{1mm}
\small
\setlength{\tabcolsep}{3pt}
\setlength{\extrarowheight}{2pt}
\begin{tabular}{|l|c|c|c|c|}
\hline
Paper & \begin{tabular}[c]{@{}c@{}}Truncation set\\(known/unknown)\end{tabular} & \begin{tabular}[c]{@{}c@{}}Underlying\\distribution\end{tabular} & \begin{tabular}[c]{@{}c@{}}Sample complexity\\(in $d$ and $\varepsilon$)\end{tabular}& \begin{tabular}[c]{@{}c@{}}Dependence\\on $\alpha_0$\end{tabular} \\\hline

\cite{DGTZ18:efficientforknownset} & Known & $\normal{\mu,\Sigma}$ & $\tilde{O}(d^2/\varepsilon^2)$ & \begin{tabular}[c]{@{}c@{}}Assumption:\\$\alpha_0 \geq \Omega(1)$\end{tabular} \\ \hline

\cite{kontonis2019} & Unknown & \begin{tabular}[c]{@{}c@{}}$\normal{\mu,\Sigma}$\\for $\Sigma$ diagonal\end{tabular} & $d^{\poly(1/\varepsilon)}$ & $\exp(1/\alpha_0)$ \\ \hline

\cite{lee2024} & Unknown & $\normal{\mu,\Sigma}$ & $\mathrm{poly}(d,1/\varepsilon)$ & $\exp(1/\alpha_0)$ \\ \hline

\cite{KaratapanisKontonisTzamos:polylogalphaknown} & Known & $\normal{\mu, \Sigma}$ & $O(d^2/\varepsilon^2)$ & $\poly\log(1/\alpha_0)$ \\ \hline

{\hypersetup{linkcolor=red}
\hyperref[THM:INTRO:main]{This work}} & Unknown & $\normal{\mu,\Sigma}$ & \begin{tabular}{c l}$O(d^2/\varepsilon^2)$ &if $\alpha \leq 0.99$\\$O(d^2/\varepsilon^4)$ &otherwise\end{tabular} & $\poly\log(1/\alpha_0)$ \\ \hline

\end{tabular}
\end{table}

\paragraph{High-level overview of our techniques and comparison to prior work.} 

At a high level, our algorithm is based on the method of moments. 
We use the first three moments of the distribution $\truncatedGauss{\mu}{\Sigma}{\iprod{w,x}\leq\tau}$\footnote{Here, $\truncatedGauss{\mu}{\Sigma}{\iprod{w,x}\leq\tau}$ denotes the Gaussian distribution~$\cN(\mu, \Sigma)$ truncated to the halfspace $\{x : \langle w, x \rangle \leq \tau\}$ for some unit vector $w$.} to estimate a \emph{\truncparaword}.
Our key insight is that this \truncparaword{} \emph{fully} characterizes how truncation changes the moments compared to the untruncated distribution.
This enables us to directly estimate the parameters $\mu$ and $\Sigma$ of the Gaussian from the moments themselves.

We now compare our approach to \cite{lee2024}, who gave the first efficient algorithm for learning the parameters of a Gaussian under halfspace truncation.
They use the method of moments to first learn the \emph{truncation set}. Then, they\ apply projected stochastic gradient descent to estimate the parameters.
At a high level, their proof can be divided into the following four steps:
\begin{enumerate}[noitemsep]
    \item First, they use the third central moment to find a vector close to $u = \frac{\Sigma w}{\Norm{\Sigma w}}$.\label{LMZ:step1}
    \item Then, they compute a vector close to the vector $w$ of the halfspace.\footnote{In their paper, \cite{lee2024} claim that the third central moment $M_3$ is proportional to \[M_3 \coloneqq \mathbb{E}[\Paren{X-\mathbb{E}[X]}^{\otimes 3}] \propto w^{\otimes 3}.\] They use this to determine $w$. However, there is small gap in this argument: As we show later, the third central moment is in fact proportional to $(\Sigma w)^{\otimes 3}$ (and not $w^{\otimes 3}$); see \Cref{lem:first3moments}. Since our goal is to estimate $\Sigma$, the third central moment therefore \emph{cannot} be directly used to estimate $w$. Instead, one can compute a vector $u$ proportional to $\Sigma w$ using the third central moment. Then, one can solve the following linear system involving the second central moment $M_2$ to recover $w$. Namely, it holds that \[M_2^{-1}u \propto w.\] We include a proof of this for completeness in \Cref{APP:howtogetw}.}\label{LMZ:step2}
    \item Next, they find a value close to $\tau$ by picking the smallest threshold such that the resulting set contains all the samples.\label{LMZ:step3}
    \item Finally, given the estimate for the truncation set, they use their general projected stochastic gradient descent procedure to learn the parameters.\label{LMZ:step4}
\end{enumerate}
While our approach is also based on the method of moments, we use the moments for a different purpose: rather than estimating the truncation set as in \cite{lee2024}, we use the moments to \emph{directly} estimate the parameters of the \mbox{underlying} Gaussian. By using the method of moments for direct parameter recovery, we are able to get a fast algorithm: crucially, we can avoid the time-\hspace{0pt}consuming projected stochastic gradient step (step \ref{LMZ:step4}) and achieve optimal runtime in terms of~$d$.

As a first step in our algorithm, we also estimate a vector $\hat{u}$ as \cite{lee2024} do in step~\ref{LMZ:step1}.
However, we then do not use it to estimate the vector $w$.
Instead, the key insight in our algorithm is that the first non-central moment and the second and third central moments all share an important structural property:
They are equal to the untruncated moments plus an additional term involving \emph{only} the following vector $\Sigma w/\Norm{\Sigma^{1/2} w}$ and the \emph{same} \truncparaword{} $\truncpara$.
Importantly, we show that the first three moments \emph{uniquely} determine $\truncpara$ and $\|\Sigma w\|/\norm{\Sigma^{1/2}w}$.
This allows us to use the moments to get an estimate for $\truncpara$ and $\|\Sigma w\|/\norm{\Sigma^{1/2}w}$.
Combining this with our first step enables us to directly compute estimates for the parameters.
Together, these ideas yield a simple and fast algorithm for estimating the parameters of a halfspace truncated Gaussian. 

Regarding the sample complexity, \cite{lee2024} require a high-accuracy estimate of the third moment in their step~\ref{LMZ:step2}.
This step requires at least $\Omega(d^3)$ samples.
In contrast, for our algorithm, when computing a vector close to $u = \Sigma w/\norm{\Sigma w}$, we use a random contraction of the third central moment and then use a spectral decomposition of the resulting matrix, which allows us to get optimal sample complexity.

\paragraph{Robustly learning the parameters of a Gaussian under halfspace truncation.}
In contrast to the algorithm by~\cite{lee2024}, we utilize the moments of the truncated distribution to directly estimate the parameters. 
Motivated by this, we introduce the problem of \emph{robust estimation of a truncated Gaussian}.
We consider the following corruption model: First, samples from a halfspace truncated Gaussian $\truncatedGauss{\mu}{\Sigma}{\iprod{w,x}\leq\tau}$ are drawn. Then, an adversary is allowed to modify an \mbox{$\eta$-fraction} of the samples arbitrarily; in particular the samples \emph{do not} need to lie in the set $\{x \in \R^d \mid \iprod{w,x}\leq\tau\}$ anymore. The adversary is computationally unbounded, knows the truncation set as well as the parameters and can adaptively change the samples after they have been drawn.
We believe that this corruption model, which is analogous to the strong contamination model in robust statistics \cite{diakonikolaskane2023algorithmic}, might be of independent interest for learning under truncation.

Given that our algorithm only uses the first three moments, we can robustify it by using robust estimates for the moments instead of relying on the sample moments.
Our robust algorithm uses a black-box robust estimator due to \cite{KothariSteinhardtSteurer:robustmomentestiamtion} and the fact that the truncated distribution is sub-Gaussian.
We prove the following result.
To simplify the exposition, we state our result for constant $\alpha$ and for $\Sigma \succeq \Omega(1) \cdot I_d$.
The full result can be found in \Cref{THM:robust} in \Cref{sec:robustness}.
\begin{theorem}[Informal, see \Cref{THM:robust}]\label{THM:INTRO:robust}
    Consider the truncated distribution $\truncatedGauss{\mu}{\Sigma}{\iprod{w,x}\leq\tau}$, where the mass $\alpha$ of the survival set satisifies $\Omega(1) \leq \alpha \leq 0.99$.
    Assume $\Sigma \succeq \Omega(1) \cdot I_d$.
    Let $0 < \delta < 1/2$ be an arbitrary constant. 
    Let $\eta > 0$ be the corruption parameter and assume that $\eta \leq 1/d^{1-2\delta}$.
    For $n = \poly(d)$, given an $\eta$-corruption $X_1, \ldots, X_n$ of $\truncatedGauss{\mu}{\Sigma}{\{w^\top x \leq \tau\}}$, we can compute, in polynomial time in $d$ and with probability at least $0.99$, estimators $\hat{\mu}$ and $\widehat{\Sigma}$ such that
    \[
        \Norm{\Sigma^{-1/2}(\hat{\mu} - \mu)} \leq O_\delta\Paren{\sqrt{d} \cdot \eta^{1/2-\delta}}
    \]
    and
    \[
        \norm{\Sigma^{-1/2}\widehat{\Sigma}\Sigma^{-1/2}-I_d}_F
        \leq O_\delta\Paren{\sqrt{d} \cdot \eta^{1/2-\delta}}.
    \]
\end{theorem}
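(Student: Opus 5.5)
The plan is to robustify the moment-based algorithm of \Cref{THM:INTRO:main} by replacing every empirical moment with a robust estimate, and then to push the moment errors through the (Lipschitz) map from moments to parameters.

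\textbf{Step 1: reduce to identity-like covariance.} Whiten by the robustly estimated second moment. The truncated distribution $\truncatedGauss{\mu}{\Sigma}{\iprod{w,x}\leq\tau}$ with $\alpha\geq\Omega(1)$ has covariance spectrally comparable to $\Sigma$. Along $\Sigma^{1/2}w$ the variance shrinks by at most a constant factor, and it is unchanged in the orthogonal directions. Together with $\Sigma\succeq\Omega(1)I_d$, this gives constant conditioning after whitening, and all moment errors can be measured in the whitened frame.

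\textbf{Step 2: robust moment estimation.} The truncated Gaussian is sub-Gaussian with constant parameter, since truncation to a set of constant mass preserves sub-Gaussian tails up to constants. It also has certifiably bounded moments (sum-of-squares certifiable, since it is the image of a truncated one-dimensional Gaussian times an independent Gaussian). So the estimator of \cite{KothariSteinhardtSteurer:robustmomentestiamtion}, run with $k=O(1/\delta)$ bounded moments, gives the following with $n=\poly(d)$ samples:
\begin{itemize}
\item a mean estimate with error $O(\eta^{1-1/k})$;
\item a second moment in relative Frobenius norm with error roughly $O(\sqrt d\,\eta^{1-2/k})$;
\item a third moment (in the contracted or injective-norm sense needed) with error roughly $O(\eta^{1-3/k})$ per direction.
\end{itemize}
The Frobenius bound is where the $\sqrt d$ factor enters, since operator-norm errors add up over $d$ eigen-directions. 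Choosing $k$ as a function of $\delta$ gives exponents $\geq 1/2-\delta$. The assumption $\eta\leq d^{-(1-2\delta)}$ makes $\sqrt d\,\eta^{1/2-\delta}\leq 1$, so all errors are small and we stay in the regime where the perturbation analysis is valid.

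\textbf{Step 3: stability of parameter recovery.} Rerun the algorithm's logic on the robust moments.
\begin{enumerate}
\item Estimate the direction $u\propto\Sigma w$ from the top eigenvector of a random contraction of the robust third central moment. Because $\alpha\leq0.99$, the skewness coefficient is bounded below by a constant, so Davis--Kahan turns a third-moment error $\epsilon_3$ into a direction error $O(\epsilon_3)$.
\item Solve the scalar equations linking the first three moments along $u$ to the relative truncation parameter $\truncpara$ and to $\|\Sigma w\|/\|\Sigma^{1/2}w\|$. I would use the earlier lemma that these quantities are uniquely determined. That lemma should upgrade to a quantitative statement: the one-dimensional map has derivative bounded away from zero on the compact range of $\truncpara$ allowed by $\Omega(1)\leq\alpha\leq0.99$, so it is Lipschitz-invertible there.
\item Form $\hat\mu$ and $\widehat\Sigma$ by subtracting the rank-one corrections $\truncpara$-terms $\cdot\,\hat u$ and $\hat u\hat u^\top$ from the robust first and second moments.
\end{enumerate}
The error splits into two parts. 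The bulk $d$-dimensional error is inherited from the robust covariance estimate, which is $O(\sqrt d\,\eta^{1/2-\delta})$ in Frobenius norm. The rank-one correction contributes only an $O(1)$ times the scalar and direction errors. The mean bound follows in the same way, and the $\sqrt d$ there is only slack.

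\textbf{Main obstacle.} The hardest step should be the quantitative inversion in the third item of Step 3. Uniqueness alone gives no rate, so I must show the Jacobian of the map (moments along $u$) $\mapsto(\truncpara,\ \|\Sigma w\|/\|\Sigma^{1/2}w\|)$ is nondegenerate uniformly over the allowed range. I would first try to reduce this to monotonicity of explicit one-dimensional functions of the inverse Mills ratio, using known log-concavity facts for the truncated normal. A secondary difficulty is verifying the SoS-certifiability hypothesis needed to apply the black-box robust estimator to the third moment. If that fails, I would fall back to estimating only the one-dimensional projected third moment robustly after fixing $\hat u$ from second-moment-based filtering.
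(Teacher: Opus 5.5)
Your architecture (robust whitening, robust moment estimates, random contraction of the third moment, inversion of the skewness, rank-one corrections) is the paper's. The gap is in the error bookkeeping of Steps 2 and 3.1, and with it your account of where the $\sqrt d$ and the exponent $1/2-\delta$ come from.

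\textbf{Third-moment rate.} The black box \Cref{thm:KSSMain} controls the third moment only in squared form, $\langle M_{3,\mathrm{nc}}-\widehat M_{3,\mathrm{nc}},v^{\otimes3}\rangle^2\lesssim\eta^{1-3/k}\langle M_{2,\mathrm{nc}},v^{\otimes2}\rangle^3$. It also only covers the non-central moment, so you must re-center with a robust mean from a fresh batch, as in \Cref{LEM:robustmomentestimationforourcase}. Hence the injective-norm error is $\epsilon_3\approx\eta^{1/2-3/(2k)}$, not $\eta^{1-3/k}$. This square root is the sole origin of the exponent $1/2-\delta$; your rates $\eta^{1-1/k},\eta^{1-2/k},\eta^{1-3/k}$ could never produce it.

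\textbf{Direction error.} Davis--Kahan on $R=(I_d\otimes I_d\otimes g^\top)\widehat M_3$ does not give direction error $O(\epsilon_3)$. An injective-norm bound only yields $\|(I_d\otimes I_d\otimes g^\top)(\widehat M_3-M_3)\|\le\|g\|\,\epsilon_3\approx\sqrt d\,\epsilon_3$, against an eigengap of order $|\kappa_3(\gamma)|$. So you get $O(\sqrt d\,\epsilon_3/|\kappa_3(\gamma)|)$, which is \Cref{lem:jenrich}.

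In the paper this is the dominant error. Up to constants, $\beta_{\mathrm{robust}}$ in \Cref{THM:robust} is the error bound for $\hat u$. It propagates into $\hat\gamma$, into the terms $a_1$--$a_3$ of \Cref{lem:putting-together-2}, and into $A_2$--$A_4$ of \Cref{lem:putting-together}. The Frobenius error of $\widehat M_2$ is only $\sqrt d\,\eta^{1-\delta}$, so it is lower order. Consequently, for the method you describe, two of your claims are false: that the rank-one corrections contribute only dimension-free errors, and that the $\sqrt d$ in the mean bound is slack. The target bound $O_\delta(\sqrt d\,\eta^{1/2-\delta})$ is nonetheless what the corrected bookkeeping gives.

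If you want the dimension-free direction error you asserted, add one tensor power step $v\mapsto\widehat M_3(I_d,v,v)/\|\widehat M_3(I_d,v,v)\|$ from the random-contraction warm start. For a symmetric error tensor, $\|(\widehat M_3-M_3)(I_d,v,v)\|\le\epsilon_3$, so this step yields $O(\epsilon_3/|\kappa_3(\gamma)|)$ and also fixes the sign. The paper does not need this.

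\textbf{Smaller points.}
\begin{itemize}
\item The hypothesis $\eta\le d^{-(1-2\delta)}$ does not give $\sqrt d\,\eta^{1/2-\delta}\le 1$; it only gives $\sqrt d\,\eta^{1/2-\delta}\le d^{2\delta(1-\delta)}$. The perturbation lemmas need $\sqrt d\,\eta^{1/2-\delta}\le c$, i.e.\ $\eta\lesssim d^{-1/(1-2\delta)}$, which is what \eqref{EQ:robustnessconditiononeta} in \Cref{THM:robust} imposes. In the remaining range the target bound is $\Omega(1)$, and it is met by outputting the robust truncated mean and covariance themselves: their errors in the relevant norms are $|\kappa_1(\gamma)|=O(1)$ and $1-\kappa_2(\gamma)\le 1$, plus the robust-estimation error.
\item The mean correction is sign-sensitive, so you must orient $\hat u$. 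The paper uses the sign of $\langle\hat u^{\otimes3},\widehat M_3\rangle$ for this, and feeds $|\langle\hat u^{\otimes3},\widehat M_3\rangle|$ into the skewness estimate.
\item What you call the main obstacle is already settled in the paper. The scalar to invert is $\sk(\gamma)=-(\tilde u^\top M_2^{-1}\tilde u)^{3/2}\langle\tilde u^{\otimes3},M_3\rangle$. \Cref{lemma:monotonicity-of-skewness} together with \Cref{CLAIM:ddtskewoverddtkappa2,CLAIM:ddtskewoverddtpsi} gives Lipschitz control, uniformly in $\gamma$, of the quantities that actually enter the estimators, namely $\kappa_2(\gamma)$ and $\kappa_1(\gamma)/\sqrt{\kappa_2(\gamma)}$, in terms of $\sk(\gamma)$.
\end{itemize}
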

While our algorithm can only tolerate a very small amount of corruption, we believe that already this is non-trivial because we are \emph{not} trying to estimate the moments of the distribution to which we have (sample) access to (which would be the typical goal in robust statistics) but instead of the underlying untruncated distribution.
In particular, the adversary can place samples outside the survival set. Thus, it is unclear if one can even estimate the truncation set.
We remark that previous work in truncated statistics cannot (easily) be made robust. For example, even if one could learn the truncation set under adversarial corruption, it is unclear whether the algorithm of \cite{lee2024} can be made to succeed, since it is not clear how to robustify their final step involving projected stochastic gradient descent.

That being said, we believe that it is likely that the amount of corruption and the error rate can be improved.
We leave it as an open question for future work whether using a more specialized robust estimator for the moments can improve \Cref{THM:INTRO:robust}.
We believe that our result can act as a starting point for further investigation of robust estimation given only truncated samples.
In particular, our result also leaves open what happens in weaker corruption models (e.g., a model where only oblivious corruptions are allowed).

\paragraph{Extensions to other truncation sets.}
Finally, we show that our techniques can be used beyond halfspace truncation.
In \Cref{thm:intersec-of2orthogonal-halfspaces}, we show that using similar ideas we can also learn the mean of an identity-covariance Gaussian $\normal{\mu, I_d}$ truncated to an \emph{intersection of two orthogonal halfspaces}.
Similar to our main result, we can achieve sample complexity ${O\Paren{\frac{d^2}{\varepsilon^2}} \cdot \polylog(1/\alpha_0) + O\Paren{\frac{d^2}{\varepsilon^4}}}$.
Our algorithm for this case also works using only the first three moments of the truncated distribution.
At a very high level, instead of recovering one \truncparaword{}, for the case of truncation by an intersection of two orthogonal halfspaces, we recover two \truncparaword s; one for each truncation direction. 
For details, we refer to \Cref{sec:extensions}.

While we only prove the case of truncation by an intersection of 2 orthogonal halfspaces, we believe that our proof can be extended to an intersection of $k$ (pairwise) orthogonal halfspaces, yielding a sample complexity of $O\Paren{\frac{d^2 \cdot \poly(k)}{\varepsilon^2}} \cdot \polylog(1/\alpha_0) + O\Paren{\frac{d^2 \cdot \poly(k)}{\varepsilon^4}}$; see \Cref{rem:furtherextensions}.

\subsection{Related work}
\paragraph{Method of Moments.}
Solving problems in statistical estimation using the Method of Moments (MoM) has a long history. It can be dated back to the famous statistician Karl Pearson~\cite{pearson1894contributions}. This method has led to several recent advances in high-dimensional parameter estimation tasks, including those involving learning mixture models~\cite{moitra2010settling, liu2022clustering} and independent component analysis (ICA)~\cite{belkin2013blind, bhaskara2014uniqueness}. In this work we adopt the MoM for learning the parameters of a halfspace truncated Gaussian.

\paragraph{Truncated Learning.}
Beyond the results described earlier, estimation under truncation has seen applications to \emph{differentially private} statistical estimation~\cite{zampetakis2025private}. There has also been recent work on mean testing under  truncation~\cite{canonne2025gaussian} and on detecting truncation~\cite{de2023testing, de2024detecting}. A recent work~\cite{lee2025learning} studies learning under positive and imperfectly labeled samples and extends the general learning algorithm of~\cite{lee2024} for learning under general truncation sets.

\paragraph{Robust Estimation.} Robust parameter estimation has a long history in statistics~\cite{huber1992robust, huber2004estimation}. Recent algorithmic breakthroughs by~\cite{lai2016agnostic, diakonikolas2019robust} have led to the new subfield of algorithmic robust statistics, which primarily deals with parameter estimation in the presence of adversarial outliers (see the textbook~\cite{diakonikolaskane2023algorithmic} for a comprehensive treatment). A very useful method that has emerged for robust statistical estimation is the powerful Sum-of-Squares hierarchy of semi-definite programs~\cite{barak2016proofs} that often efficiently achieves the best possible statistical guarantees for a large range of robust estimation tasks~(see e.g. \cite{KothariSteinhardtSteurer:robustmomentestiamtion}). In this work, we use Sum-of-Squares based robust moment estimation algorithms as a black-box to solve the robust version of the learning under truncation problem that we introduce.

%% file: content/techniques.tex
\section{Technical overview}\label{sec:techniques}
In order to prove our main result \Cref{THM:INTRO:main}, we want to use the moments of the truncated distribution to estimate the parameters.
Consider the distribution $\truncatedGauss{\mu}{\Sigma}{\{w^\top x \leq \tau\}}$, where $w$ is a unit vector.
Recall that our goal is to find estimates $\hat{\mu}$ and $\widehat{\Sigma}$ such that
\[
    \TVdist\left(\normal{\mu, \Sigma}, \normal{\hat{\mu}, \widehat{\Sigma}}\right) \leq \varepsilon.
\]
To show closeness in total variation distance as above, it is enough to show
\[
    \norm{\Sigma^{-1/2}(\hat{\mu} - \mu)} \leq \varepsilon \quad \text{and} \quad \norm{\Sigma^{-1/2} \widehat{\Sigma} \Sigma^{-1/2} - I_d}_F \leq \varepsilon,
\]
see e.g. \cite{diakonikolas2019robust}.
To achieve this, on a high level, we first determine the moments of the truncated distribution.
Then, we observe that the moments are determined by $\mu$ and $\Sigma$ as well as a \emph{\truncparaword}~$\truncpara$ and a certain vector.
We show that if we can determine this parameter and vector, then based on the first non-central and the second central moment of the truncated distribution, we would be able to determine the $\mu$ and $\Sigma$.
Finally, we show how to determine $\truncpara$ and this vector based on the second and third central moment.

\paragraph{The moments of the truncated distribution.}
As a warm up, we start by considering the truncated distribution~${Y \sim \truncatedGauss{0}{I_d}{\{e_1^\top x \leq \truncpara\}}}$.
Intuitively, the moments of this distribution should be the ones of the standard Gaussian, but shifted in the direction of $e_1^{\otimes k}$ since $e_1$ is the direction in which the distribution is truncated.\footnote{We remark that this is only true since the coordinates are independent. As we will see later, for $\truncatedGauss{\mu}{\Sigma}{\{e_1^\top x \leq \truncpara\}}$, the moments would be shifted in the direction of $(\Sigma e_1)^{\otimes k}$ and not $e_1^{\otimes k}$.}
For example, $\bbE[Y] = c_1 e_1$ since in all but the first coordinate, $Y$ is a standard Gaussian.
More precisely, $c_1 = \kappa_1(\truncpara)$ is the first cumulant\footnote{The first and second cumulants are just the mean and the covariance. The third cumulant is the third central moment. For more details, see \Cref{DEF:cumulant}.} of a one-dimensional standard Gaussian truncated to $(-\infty, \truncpara]$.
More generally, higher moments can be related to the higher order cumulants of this one-dimensional distribution.

To compute the mean (and more generally the moments) of a general truncated Gaussian ${X \sim \truncatedGauss{\mu}{\Sigma}{\{w^\top x \leq \tau\}}}$, our goal is to relate it to the mean (and the corresponding moments) of~$Y$.
To this end, consider the linear transformation
\[
    y \mapsto \psi(y) = \mu + \Sigma^{1/2} U y,
\]
where $U$ is a rotation matrix. For any $U$, this map transforms a standard Gaussian $\normal{0, I_d}$ to $\normal{\mu, \Sigma}$.
We want to pick $U$ and $\gamma$ in a way such that the truncation set of $\psi(Y)$ is equal to the truncation set of $X$.
This will show that $\psi(Y)$ and $X$ have the same distribution, thus allowing us to express the moments of $X$ in terms of the moments of $Y$.

First, we show how to pick $U$ such that the direction of truncation of $\psi(Y)$ is equal to $w$.
Since $Y$ is truncated to $\{x \mid e_1^\top x \leq \truncpara\}$, $\psi(Y)$ is truncated to the set
\begin{align}
    \{\psi(y) \mid y \text{ such that } e_1^\top y \leq \truncpara\} &= \{x \mid e_1^\top \psi^{-1}(x) \leq \truncpara\}\nonumber\\
    &= \{x \mid e_1^\top U^{-1}\Sigma^{-1/2}(x-\mu) \leq \truncpara\}\nonumber\\
    &= \{x \mid e_1^\top U^\top \Sigma^{-1/2}x \leq \truncpara + e_1^\top U^\top \Sigma^{-1/2} \mu\}\label{EQ:TO:truncationsetaftertransform}
\end{align}
Thus, we want to pick $U$ such that
\[
    \Sigma^{-1/2}Ue_1 \propto w \: \Longleftrightarrow \: Ue_1 \propto \Sigma^{1/2}w.
\]
As a rotation matrix, $U$ preserves the norm and hence
\begin{equation}\label{EQ:TO:definitionofU}
    Ue_1 = \frac{\Sigma^{1/2}w}{\Norm{\Sigma^{1/2}w}}.
\end{equation}
In order to determine the value of $\truncpara$ needed such that $\psi(Y)$ has the same distribution as $X$, we want to write \eqref{EQ:TO:truncationsetaftertransform} in the form $\{x \mid w^\top x \leq \tau\}$.
By \eqref{EQ:TO:definitionofU}, we can write the truncation set for $Y$ as
\[
    \{\psi(y) \mid y \text{ such that } e_1^\top y \leq \truncpara\} = \{x \mid w^\top x \leq \norm{\Sigma^{1/2}w}\truncpara + w^\top \mu\}.
\]
Hence, we need that
\begin{equation}\label{EQ:TO:defoftruncpara}
    \truncpara = \frac{\tau - w^\top \mu}{\norm{\Sigma^{1/2} w}}.
\end{equation}
For $U$ as in \eqref{EQ:TO:definitionofU} and $\truncpara$ as in \eqref{EQ:TO:defoftruncpara}, $\psi(Y) \sim \truncatedGauss{\mu}{\Sigma}{\{w^\top x \leq \tau\}}$.
In particular, we can now compute $\bbE[X]$ using this transformation $\psi$ and $\bbE[Y]$:
\[
    \bbE[X] = \mu + \Sigma^{1/2}U \bbE[Y] = \mu + \kappa_1(\truncpara) \cdot \Sigma^{1/2}Ue_1 = \mu + \kappa_1(\truncpara) \cdot \frac{\Sigma w}{\norm{\Sigma^{1/2}w}}.
\]
Analogously we can also compute the higher moments of $Y$ and use them, together with the the transformation $\psi$ to compute the moments of $X$.
Namely, we have that
\begin{align}
    \bbE[X] &= \mu + \kappa_1(\truncpara) \cdot \frac{\Sigma w}{\vnorm{\Sigma^{1/2} w}}\label{EQ:TO:firstmoment}\\
    \bbE[(X-\bbE[X])^{\otimes 2}] &= \Sigma - (1- \kappa_2(\truncpara)) \cdot \Paren{\frac{\Sigma w}{\norm{\Sigma^{1/2} w}}}^{\otimes 2}\label{EQ:TO:secondmoment}\\
    \bbE[(X-\bbE[X])^{\otimes 3}] &= \kappa_3(\truncpara) \cdot \Paren{\frac{\Sigma w}{\norm{\Sigma^{1/2} w}}}^{\otimes 3}\label{EQ:TO:thirdmoment}
\end{align}
For more details on this computation, see \Cref{lem:first3moments}.
In particular, these moments are determined by the parameters of the untruncated Gaussian, the vector $\Sigma w/\norm{\Sigma^{1/2} w}$ and $\truncpara$ defined as in \eqref{EQ:TO:defoftruncpara}.
We will refer to $\truncpara = \truncpara(\mu, \Sigma, w,\tau)$ as the \emph{\truncparaword} of the distribution $\truncatedGauss{\mu}{\Sigma}{\{w^\top x \leq \tau\}}$.
In particular, since the third moment is rank 1, if we would know the \truncparaword{}, then we could use \eqref{EQ:TO:thirdmoment} to estimate $\Sigma w/\norm{\Sigma^{1/2} w}$.
Once we determined this vector, we could then use \eqref{EQ:TO:firstmoment} and \eqref{EQ:TO:secondmoment} respectively to estimate $\mu$ and $\Sigma$.
To summarize, since we can approximate the moments of $\truncatedGauss{\mu}{\Sigma}{\{w^\top x \leq \tau\}}$ by the sample moments, the main question is how to estimate the \truncparaword{} defined in \eqref{EQ:TO:defoftruncpara}.

\paragraph{First three moments determine the \truncparaword{}.}
Before showing how we estimate the \truncparaword{}~$\truncpara$, we want to establish that the information in the first three moments is in fact enough to uniquely determine $\truncpara$.
For this, consider two truncated Gaussians $\truncatedGauss{\mu_1}{\Sigma_1}{\{w_1^\top x \leq \tau_1\}}$ and $\truncatedGauss{\mu_2}{\Sigma_2}{\{w_2^\top x \leq \tau_2\}}$.
We want to show that if the first non-central moment as well as the second and third central moments of these two distributions are identical (cf. \eqref{EQ:TO:firstmoment}, \eqref{EQ:TO:secondmoment} and \eqref{EQ:TO:thirdmoment}), then we need to have 
$\gamma_1 = \gamma_2$.
This will show that it is information-theoretically possible to determine the \truncparaword{} based only on these low-degree moments.

To formally show that the first three moments of the two distributions $\truncatedGauss{\mu_1}{\Sigma_1}{\{w_1^\top x \leq \tau_1\}}$ and $\truncatedGauss{\mu_2}{\Sigma_2}{\{w_2^\top x \leq \tau_2\}}$ being equal implies that $\truncpara_1 = \truncpara_2$, consider first the second moment.
By assumption, we have that
 \[
  \Sigma_1 - (1- \kappa_2(\truncpara_1)) \cdot \Paren{\frac{\Sigma_1 w_1}{\vnorm{\Sigma_1^{1/2} w_1}}}^{\otimes 2} = \Sigma_2 - (1- \kappa_2(\truncpara_2)) \cdot \Paren{\frac{\Sigma_2 w_2}{\vnorm{\Sigma_2^{1/2} w_2}}}^{\otimes 2}
 \]
By multiplying from the left by $w_1^\top$ and from the right by $w_2$ as well as identifying $a^{\otimes 2}$ with $aa^\top$, we get that
\[
    w_1^\top \Sigma_1 w_2 - \frac{(1- \kappa_2(\truncpara_1))}{\vnorm{\Sigma_1^{1/2}w_1}^2} \cdot w_1^\top \Sigma_1 w_1 \cdot  w_1^\top \Sigma_1 w_2 =  w_1^\top \Sigma_2 w_2 - \frac{(1- \kappa_2(\truncpara_2))}{\vnorm{\Sigma_2^{1/2}w_2}^2} \cdot w_1^\top \Sigma_2 w_2 \cdot w_2^\top \Sigma_2 w_2.
\]
Using that $w_i^\top \Sigma_i w_i = \vnorm{\Sigma_i^{1/2} w_i}^2$, the above simplifies to
\begin{equation}\label{eqn:secmomentiden}
    \kappa_2(\truncpara_1) \cdot w_1^\top \Sigma_1 w_2 = \kappa_2(\truncpara_2) \cdot w_1^\top \Sigma_2 w_2.
\end{equation}
Our goal now is to relate the two numbers $w_1^\top \Sigma_1 w_2$ and $w_1^\top \Sigma_2 w_2$. We do so using the third central moment (cf. \eqref{EQ:TO:thirdmoment}).
Since we assumed that the third moments also agrees, we have that
\[
    \kappa_3(\truncpara_1) \cdot \Paren{\frac{\Sigma_1 w_1}{\norm{\Sigma_1^{1/2} w_1}}}^{\otimes 3} = \kappa_3(\truncpara_2) \cdot \Paren{\frac{\Sigma_2 w_2}{\norm{\Sigma_2^{1/2} w_2}}}^{\otimes 3}
\]
This is a rank-$1$ tensor, so by considering the diagonal elements, we get that also the corresponding vectors need to agree, i.e., that
\[
    \frac{\Sigma_1 w_1}{\vnorm{\Sigma_1^{1/2}w_1}} = \left(\frac{\kappa_3 (\truncpara_2) }{\kappa_3(\truncpara_1) }  \right)^{1/3} \cdot \frac{\Sigma_2 w_2}{\vnorm{\Sigma_2^{1/2} w_2}}.
\]
By multiplying by $w_2^\top$ from the left, we can get an expression for $w_1^\top \Sigma_1 w_2 = w_2^\top \Sigma_1 w_1$.
By multiplying instead by $w_1^\top$ from the left, we can get an expression for $w_1^\top \Sigma_2 w_2$.
Using again that $w_i^\top \Sigma_i w_i = \vnorm{\Sigma_i^{1/2}w_i}^2$ and combining these we get that
\[
    \left(\frac{\kappa_3 (\truncpara_1) }{\kappa_3(\truncpara_2)}\right)^{1/3}w_2^\top\Sigma_1w_1 = \vnorm{\Sigma_1^{1/2}w_1} \cdot \vnorm{\Sigma_1^{1/2}w_2} = \left(\frac{\kappa_3 (\truncpara_2) }{\kappa_3(\truncpara_1)}\right)^{1/3}w_1^\top\Sigma_2w_2.
\]
We can now use this equality in \eqref{eqn:secmomentiden} to get
\[
    \frac{\kappa_2(\truncpara_1)}{\kappa_3(\truncpara_1)^{2/3}} = \frac{\kappa_2(\truncpara_2)}{\kappa_3(\truncpara_2)^{2/3}} \quad \Longleftrightarrow \quad \frac{\kappa_3(\truncpara_1)}{\kappa_2(\truncpara_1)^{3/2}} = \frac{\kappa_3(\truncpara_2)}{\kappa_2(\truncpara_2)^{3/2}}.
\]
We now observe that the function $\kappa_3(\truncpara)/(\kappa_2(\truncpara))^{3/2}$ is the \emph{skewness} of the one-dimensional standard Gaussian truncated to $(-\infty, \truncpara]$. We denote this function by $\sk(\truncpara)$. Thus, we have that if the first three moments of $\truncatedGauss{\mu_1}{\Sigma_1}{\{w_1^\top x \leq \tau_1\}}$ and $\truncatedGauss{\mu_2}{\Sigma_2}{\{w_2^\top x \leq \tau_2\}}$ agree, then we need to have
\[
    \sk(\truncpara_1) = \sk(\truncpara_2).
\]
The key argument in this proof now is that for our specific case this function $\sk$ is \emph{monotonic} (cf. \Cref{lemma:monotonicity-of-skewness}) and thus the above implies that also $\truncpara_1 = \truncpara_2$.

\paragraph{Estimating the \truncparaword{} algorithmically.}
We now want to show how we can use the proof above to determine the parameters $\mu$ and $\Sigma$ given samples from the distribution $\truncatedGauss{\mu}{\Sigma}{\{w^\top x \leq \tau\}}$.
We showed above that the first three moments uniquely determine $\truncpara = \frac{\tau - w^\top \mu}{\vnorm{\Sigma^{1/2}w}}$.
The above also lays down a clear path: first, estimate $\truncpara$ and then determine the parameters $\mu$ and~$\Sigma$.
To do this, we first compute $\sk(\truncpara)$ based on the moments.
Then, using the fact that $\sk$ is monotonic, we can invert this \emph{one-dimensional} function and recover the \truncparaword{}.
This then allows us to use \eqref{EQ:TO:firstmoment} and \eqref{EQ:TO:secondmoment} to recover the parameters.

It thus remains to show how to determine $\truncpara$ algorithmically based on the first three moments.
Define the vector
\[
    u^* = \kappa_3(\truncpara)^{1/3} \cdot \frac{\Sigma w}{\vnorm{\Sigma^{1/2}w}}.
\]
The third central moment equals ${u^*}^{\otimes 3}$ and hence, we can use it to obtain a unit vector $\tilde{u}$ in the direction of $u^*$, that is, $\tilde{u} = (\Sigma w)/\vnorm{\Sigma w}$.
By computing the inner product of the third central moment with the tensor $\tilde{u}^{\otimes 3}$, we can even get the norm of $u^*$ since
\begin{equation}\label{EQ:TO:expressionkappa3}
    \Iprod{\tilde{u}^{\otimes 3}, \bbE[(X-\bbE[X])^{\otimes 3}]} = \Iprod{\tilde{u},u^*}^3 = \kappa_3(\truncpara) \cdot \frac{\|\Sigma w\|^3}{\|\Sigma^{1/2} w\|^3}.
\end{equation}
We now want to use the second central moment to compute a similar expression involving the same norms but $\kappa_2(\truncpara)$ instead of $\kappa_3(\truncpara)$.
Combining this with the above then allows us to compute $\sk(\truncpara)$.
It turns out that we can do so by computing $\tilde{u}^\top M_2^{-1} \tilde{u}$, where $M_2$ is the second central moment.
Since $M_2$ differs from $\Sigma$ only by a rank-$1$ term, we can use the Sherman-Morrison formula to compute its inverse
\begin{align*}
    M_2^{-1} &= \Sigma^{-1} + (1-\kappa_2(\truncpara)) \cdot \frac{\Sigma^{-1} \frac{\Sigma w}{\vnorm{\Sigma^{1/2} w}} \cdot \frac{w^\top \Sigma}{\vnorm{\Sigma^{1/2} w}} \Sigma^{-1}}{1 - (1-\kappa_2(\truncpara)) \cdot \frac{w^\top \Sigma}{\vnorm{\Sigma^{1/2} w}} \Sigma^{-1} \frac{\Sigma w}{\vnorm{\Sigma^{1/2} w}}}\\
    &= \Sigma^{-1} + (1-\kappa_2(\truncpara)) \cdot \frac{w w^\top}{\vnorm{\Sigma^{1/2} w}^2 \cdot \kappa_2(\truncpara)}.
\end{align*}
Now, left-multiplying by $\tilde{u}^\top$ and right-multiplying by $\tilde{u}$, we get that
\begin{align}
    \tilde{u}^\top M_2^{-1} \tilde{u} &= \frac{w^\top \Sigma \Sigma^{-1} \Sigma w}{\vnorm{\Sigma w}^2} + (1-\kappa_2(\truncpara)) \cdot \frac{w^\top \Sigma w w^\top \Sigma w}{\vnorm{\Sigma w}^2 \cdot \vnorm{\Sigma^{1/2}w}^2 \cdot \kappa_2(\truncpara)}\nonumber\\
    &= \frac{\vnorm{\Sigma^{1/2}w}^2}{\vnorm{\Sigma w}^2} + (1-\kappa_2(\truncpara)) \cdot \frac{\vnorm{\Sigma^{1/2}w}^2}{\vnorm{\Sigma w}^2 \cdot \kappa_2(\truncpara)}\nonumber\\
    &= \frac{1}{\kappa_2(\truncpara)} \cdot \frac{\vnorm{\Sigma^{1/2}w}^2}{\vnorm{\Sigma w}^2} \label{EQ:TO:expressionkappa2}.
\end{align}
Now, combining \eqref{EQ:TO:expressionkappa2} with \eqref{EQ:TO:expressionkappa3}, we get that
\[
    \Iprod{\tilde{u}^{\otimes 3}, \bbE[(X-\bbE[X])^{\otimes 3}]} \cdot \Paren{\tilde{u}^\top M_2^{-1} \tilde{u}}^{3/2} = \frac{\kappa_3(\truncpara)}{\kappa_2(\truncpara)^{3/2}} = \sk(\truncpara).
\]
We remark that we can compute both quantities on the left-hand side of the above equation using~$\tilde{u}$ (that we computed using the third moment) and the moments.
Hence, we can compute $\sk(\gamma)$.
We now crucially use the fact that $\sk$ is monotonic, which allows us to recover $\truncpara$.
As already discussed before, this allows us to recover the parameters: we can use the third moment to recover the vector $(\Sigma w)/\vnorm{\Sigma^{1/2} w}$ (which we do implicitly in our algorithm) and then use the first and second moment to recover $\mu$ and $\Sigma$.
Our algorithm is summarized in \Cref{algo:intro:main}.
\begin{algorithm}[!ht]
\caption{Parameter estimation of a halfspace-truncated Gaussian based on first three moments}
\label{algo:intro:main}
\KwIn{First non-central moment $\mu_\mathrm{t}$, second central moment $M_2$ and third central moment $M_3$ of the distribution $\truncatedGauss{\mu}{\Sigma}{\{w^\top x \leq \tau\}}$}
\KwOut{Mean $\mu$ and covariance $\Sigma$.}

1. Compute $\tilde{u}$ from the third central moment $M_3$.

2. Compute the relative truncation parameter: $\gamma=\sk^{-1}\Paren{\Paren{\tilde{u}^\top M_2^{-1}\tilde{u}}^{3/2}\cdot \Iprod{\tilde{u}^{\otimes 3},M_3}}$.

\BlankLine

3. Compute the mean of the Gaussian: $\mu=\mu_\mathrm{t}-\frac{\kappa_1(\gamma)}{\sqrt{\kappa_2(\gamma)}}\Paren{\tilde{u}^\top M_2^{-1}\tilde{u}}^{-1/2}\tilde{u}$.

\BlankLine
4. Compute the covariance of the Gaussian: $\Sigma=M_2+\Paren{\frac{1}{\kappa_2(\gamma)}-1}\cdot\Paren{\tilde{u}^\top M_2^{-1}\tilde{u}}^{-1}\tilde{u}\tilde{u}^\top$.
\end{algorithm}

We emphasize here that our particular choice of expressing both the second and third moment in \eqref{EQ:TO:secondmoment} and \eqref{EQ:TO:thirdmoment} in terms of the \emph{same} \truncparaword{}~$\truncpara$ is precisely what enables \emph{direct} parameter recovery without having to learn the truncation set --- deviating from existing approaches in the truncated statistics literature ~\cite{DGTZ18:efficientforknownset,lee2024}.
An important upshot is that this also enables us to design \emph{simple} and \emph{fast} algorithms, making learning under truncation potentially practical.

\paragraph{Sample complexity.}
In \Cref{sec:fullproof}, we will show how to implement the above strategy when only having access to the sample moments.
We now give a brief overview how the errors in estimating the sample moments influence the sample complexity of our algorithm.

The main bottlenecks for the sample complexity are the approximation $\hat{u}$ to $\tilde{u}$ and the approximation $\widehat{M}_2$ to $M_2$.
We need to ensure that the expressions in \eqref{EQ:TO:expressionkappa3} and \eqref{EQ:TO:expressionkappa2} are close to their true values.
If we have sufficiently good approximations to these, then with $O_\alpha(d^2/\varepsilon^2)$ extra samples we can get good approximations to the mean and covariance of the truncated distribution and use the strategy described above to get estimators for $\mu$ and $\Sigma$ with error $\varepsilon$.
We show in \Cref{sec:fullproof} that it is sufficient to get estimates
\begin{equation}\label{EQ:TO:necessaryestimatesforsamplecomplexity}
    \Norm{\hat{u}-\frac{u^{\ast}}{\norm{u^{\ast}}}}_2\leq O_\alpha(\varepsilon) \:\:\: \text{and} \:\:\: \norm{M_2^{-1/2}\widehat{M}_2 M_2^{-1/2}-I_d}_F \leq O_\alpha(\varepsilon).
\end{equation}
We note that our algorithm contains a preconditioning step that allows us to reduce our problem to a well-conditioned one.
This allows us to invert the matrix $\widehat{M}_2$ without incurring an error that depends on the condition number of $\Sigma$.

Achieving the estimate of $\widehat{M}_2$ as in \eqref{EQ:TO:necessaryestimatesforsamplecomplexity} is possible with $O_\alpha(d^2/\varepsilon^2)$ samples by standard concentration bounds.
For the estimate of $\tilde{u}$, we could estimate it by estimating the diagonal elements of the third moment tensor.
However, the diagonal elements of the third moment tensor are $(u^*_i)^3$ (and not $u^*_i$).
Using this strategy would lead to suboptimal sample complexity in $d$ since we would need a very good approximation to each entry $(u^*_i)^3$.
This is because $(u^*_i)^3$ can be small and thus taking the cube-root can amplify the error.

To improve the sample complexity, we instead compute a random contraction of the third moment tensor.\footnote{Random contractions are typically used in decomposition of higher-order tensors and have seen applications in learning Gaussian Mixture Models, Independent Component Analysis and Dictionary Learning. The specific type of random contraction we utilize is attributed to Jennrich~\cite{harshman1970foundations, leurgans1993decomposition}.}
Then, on the resulting matrix we can compute the largest eigenvalue (in absolute value) and compute the corresponding vector. This vector is close to $\tilde{u}$ since in expectation (over the samples) this random contraction will yield a rank one matrix proportional to $\tilde{u} \tilde{u}^\top$.
We show in \Cref{lem:jenrich,lem:tensor-spectral-concentration} that this strategy allows us to get an estimate $\hat{u}$ satisfying the guarantee in \eqref{EQ:TO:necessaryestimatesforsamplecomplexity} using $O_\alpha(d^{2}/\varepsilon^2)$ samples.
This allows us to get optimal sample complexity in $d$.

Putting it together, we show that using $\frac{d^2}{\varepsilon^2} \cdot \poly(1/\kappa_2(\truncpara), 1/\kappa_3(\truncpara))$ samples, we can get estimates $\hat{\mu}$ and $\widehat{\Sigma}$ such that
\[
    \Norm{\Sigma^{-1/2}(\hat{\mu}-\mu)} \leq \varepsilon \quad \text{and} \quad \Norm{I_d - \Sigma^{-1/2}\widehat{\Sigma}\Sigma^{-1/2}}_F \leq \varepsilon.
\]
We show in \Cref{sec:millsratio} that for $\alpha \leq 0.99$ we can bound $1/\kappa_2(\truncpara), 1/\kappa_3(\truncpara) \leq \poly\log(1/\alpha_0)$.
For $\alpha \to 1$, $\kappa_3(\truncpara) \to 0$, which leads to an increase in the sample complexity.
If $\alpha$ is very close to $1$ (meaning $\alpha \geq 1 - \tilde{O}(\varepsilon)$), then the sample mean and sample covariance of the truncated distribution are good enough estimates and we show that our estimators are close to these.
For the in-between case (i.e., $0.99 \leq \alpha < 1 - \tilde{O}(\varepsilon)$), we can bound $\kappa_3(\truncpara) \geq \Omega(\varepsilon)$, which gives the sample complexity $O(d^2/\varepsilon^4)$.
This shows the sample complexity as described in \Cref{THM:INTRO:main}.

The fact that we need more samples for the case $\alpha > 0.99$ seems somewhat artificial.
The problem is that, when $\alpha$ is close to $1$, the norm of the third central moment tensor $M_3$ goes to $0$, which makes it harder to estimate $\tilde{u}$.
On the other hand it seems that the problem of estimating the parameters of a truncated Gaussian should become easier if there is less truncation.
However, it is unclear to us whether and how the sample complexity could be improved with our approach.

\paragraph{Runtime.}
We now briefly discuss the runtime of our algorithm.
As a first step, we need to compute the vector $\hat{u}$.
For this, we need to compute a random contraction of the third central sample moment tensor.
If we first perform the contraction and then average over the samples, we need to add~$N$ rank-$1$ matrices of size $d^2$, which can be done in time $O(T(N,d))$ using fast rectangular matrix multiplication.
Next, we need to compute the spectral decomposition of the resulting matrix, which can be done in time $O(d^3)$.
For computing the estimate $\widehat{M}_2$ we need to compute the sample covariance, which again can be done in time $O(T(N,d))$.
We then need to invert the matrix $\widehat{M}_2$, which can be done in time $O(d^3)$.
Next, we compute the expressions in \eqref{EQ:TO:expressionkappa3} and \eqref{EQ:TO:expressionkappa2}.
If we again first compute the inner product and then average over the samples (for \eqref{EQ:TO:expressionkappa3}), these can be done in time $O(N \cdot d)$ and $O(d^3)$ respectively.
Since $T(N,d) \geq \Omega(N \cdot d) \geq \Omega(d^3)$, the terms involving $T(N,d)$ dominate and we can thus compute an estimate for $\sk(\truncpara)$ in time $O(T(N,d))$.

Once we know an estimate of $\sk(\truncpara)$, we use the inverse of $\sk$ to compute $\hat{\truncpara}$.
We note that since this is a one-dimensional monotonic function, we can compute an approximate inverse using binary search.
The computational complexity of this step is negligible relative to the primary calculations (see \Cref{rem:approximateinverse}).
The final step requires adding a single term to the first and second moments, which is achievable in time $O(d^2)$.

Hence, the dominating term in the overall runtime is the computation of the sample covariance, which, as argued in \Cref{sec:introduction}, is $O(T(N,d)) \leq O(d^{3.250035}) \cdot \poly(1/\varepsilon, \log(1/\alpha_0))$.

\paragraph{Organization.}
In \Cref{sec:preliminaries}, we cover the necessary preliminaries for our paper and in particular prove that the function $\sk$ is monotonic.
In \Cref{sec:setup}, we compute the moments of the truncated distribution and discuss our preconditioning.
Then, in \Cref{sec:fullproof}, we give a complete proof of our main result.
In \Cref{sec:robustness}, we give our result on robustly estimating the parameters of a Gaussian under halfspace truncation.
In \Cref{sec:extensions}, we prove the extension of our result to an intersection of two orthogonal halfspaces.
In \Cref{APP:facts,APP:computations}, we prove some facts and computations needed for our results.
Finally, in \Cref{APP:howtogetw}, we show how one could get $w$ based on the moments.
This is not needed in our proof, but we include it for completeness since this is necessary for the proof of \cite{lee2024}.

%% file: content/preliminaries.tex
\section{Preliminaries}\label{sec:preliminaries}
\subsection{Notation}
We use the following convention:   
We denote the unit sphere in $\R^d$ by $\cS^{d-1}$. For $n \in \N$, $[n]$ denotes the set~$\{1, 2, \dots, n\}$.
Unless explicitly stated, the base of the logarithm is $e$. For two vectors $a,b\in\R^d$, we denote their inner product by $\iprod{a,b}$ or $a^\top b$. Unless otherwise specified, all vector norms $(\Vert \cdot \Vert$ or $\Vert \cdot \Vert_2)$ are the Euclidean norm.
For matrices, $\Vert\cdot\Vert$ denotes the spectral norm and $\Vert\cdot\Vert_F$ denotes the Frobenius norm, $\Vert\cdot\Vert_{\ast}$ denotes the nuclear norm. We use the notation $O(\cdot), \Theta(\cdot), \Omega(\cdot), \lesssim, \gtrsim $ to hide absolute constants. We use $\widetilde{O}(\cdot)$ to hide logarithmic factors.
We use $\mathbbb{1}[\cdot]$ for the indicator variable.
For positive semi-definite (PSD) matrices $A, B \in \R^{d \times d}$, we use $A \preccurlyeq B$ to mean that $A$ is at most $B$ in the \Lowner order (meaning $B-A$ is PSD).
We use $\lambda_{\min}(A)$ to denote the minimum eigenvalue of a square matrix $A$.
We denote the total variation distance by $\TVdist$. We denote the density function of the standard Gaussian by $\phi(\cdot)$ and its cumulative density function by $\Phi(\cdot)$. 

\paragraph{Tensor norm.} For two order-3 tensors $T, S\in \Rsymb^{d\times d\times d}$, we define $\iprod{T,S}=\sum_{ijk}T_{i,j,k} S_{i,j,k}$. Furthermore, we define the injective tensor norm of $T$ as $\norm{T}=\sup_{\norm{x}=\norm{y}=\norm{z}=1} \Set{\iprod{x\otimes y\otimes z, T}}$.
Note that if $T$ is a symmetric tensor, as a consequence of Banach's Theorem (see e.g. \cite[Theorem 1]{friedland2012} or \cite[Corollary 4.2]{ChenHeLiZhang:symmetrictensors}), it holds that \[\Norm{T}=\sup_{\norm{x}=\norm{y}=\norm{z}=1} \Set{\iprod{x\otimes y\otimes z, T}}=\underset{\norm{x}=1}{\sup}\Set{\Iprod{x^{\otimes 3},T}}.\]

\paragraph{Truncated Gaussian.} For a measurable set $S\subseteq\R^d$, let $\alpha>0$ to be the probability mass of the set $S$ under the true Gaussian distribution $\cN(\mu,\Sigma)$, i.e., $\alpha=\Esymb_{x\sim \normal{\mu, \Sigma}}\mathbbb{1}[x\in S]$.
We say that a random vector $\x$ follows a Gaussian $\cN(\mu,\Sigma)$ truncated to $S$, if it follows the distribution $\cN(\mu,\Sigma)$ conditioned on the event $x\in S$. For a halfspace $H=\Set{x:\iprod{w,x}\leq \tau}$, we denote a Gaussian $\cN(\mu,\Sigma)$ truncated to a halfspace $H$ by \[
\truncatedGauss{\mu}{\Sigma}{\iprod{w,x}\leq\tau}
\]
\subsection{Inverse Mill's ratio}\label{sec:millsratio}
For $\gamma \in \R$, consider the standard Gaussian random variable $\normal{0,1}$ conditioned on $(-\infty,\gamma]$, denoted by $Z_\gamma$.
In this section, we introduce the first three cumulants of $Z_\gamma$, show how to describe them in terms of the density function and the cumulative density function of the standard Gaussian and give several bounds that will be needed later in our proofs.

For a general random variable $X$ on $\R$, the cumulants are defined as follows.
\begin{definition}\label{DEF:cumulant}
    Let $X$ be a random variable on $\R$.
    We define its cumulant generating function by $K(t)=\log\E e^{t X}$.
    If $K(t)$ exists and is analytic, we define the $n$-th cumulant of $X$, denoted by $\kappa_n(X)$ as the $n$-th order derivative of $K(t)$ evaluated at $t=0$, i.e., $\kappa_n(X) = K^{(n)}(0)$.
\end{definition}
In particular, one can compute that the first cumulant is the mean of $X$, the second cumulant is the variance of $X$ and the third cumulant is the third central moment.
Thus, the first three cumulants of $Z_\gamma$ (which we denote by $\kappa_1(\gamma)$, $\kappa_2(\gamma)$, and $\kappa_3(\gamma)$ respectively) are as follows:
\begin{equation}\label{EQ:ThreeCumulantsfor1DGauss}
    \begin{aligned}
        \kappa_1(\gamma) &=\E Z_\gamma,\\
        \kappa_2(\gamma) &=\E \Paren{Z_\gamma-\E Z_\gamma}^2,\\
        \kappa_3(\gamma) &=\E \Paren{Z_\gamma-\E Z_\gamma}^3.
    \end{aligned}
\end{equation}
They can be characterized in terms of $\phi(\gamma)$ and $\Phi(\gamma)$ as follows.
\newcommand{\factfirstthreecumulantstext}{
    Let $\gamma \in \R$. We have that
    \begin{align*}
    \kappa_1(\gamma) &= \frac{1}{\Phi(\gamma)} \cdot \int_{-\infty}^{\gamma} t \phi(t) dt = - \frac{\phi(\gamma)}{\Phi(\gamma)},\\
        \kappa_2(\gamma)&=1-\gamma\frac{\phi(\gamma)}{\Phi(\gamma)}-\Paren{\frac{\phi(\gamma)}{\Phi(\gamma)}}^2,\\
        \kappa_3(\gamma)&=\frac{\phi(\gamma)}{\Phi(\gamma)} \cdot \left( 1 - \gamma^2 - 3 \gamma \cdot \frac{\phi(\gamma)}{\Phi(\gamma)} - 2 \cdot \left(\frac{\phi(\gamma)}{\Phi(\gamma)}\right)^2 \right).
    \end{align*}
}
\begin{fact}\label{fact:first-3-cumulants}\factfirstthreecumulantstext
\end{fact}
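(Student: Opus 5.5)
We compute the first three cumulants of $Z_\gamma$ directly from its density $\phi(t)/\Phi(\gamma)$ on $(-\infty,\gamma]$. The main tool is that $\phi'(t) = -t\phi(t)$, together with integration by parts. First we compute the truncated raw moments
\[
m_k(\gamma) \coloneqq \frac{1}{\Phi(\gamma)}\int_{-\infty}^{\gamma} t^k \phi(t)\,dt, \qquad k = 1,2,3 .
\]
Writing $h \coloneqq \phi(\gamma)/\Phi(\gamma)$ for the inverse Mill's ratio, we then assemble the central moments from these raw moments.

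For $k=1$, the identity $t\phi(t) = -\phi'(t)$ gives $\int_{-\infty}^{\gamma} t\phi(t)\,dt = -\phi(\gamma)$. Hence $\kappa_1(\gamma) = m_1 = -h$, which is the first formula.

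For $k \ge 2$, we write $t^k\phi(t) = -t^{k-1}\phi'(t)$ and integrate by parts. The boundary term at $-\infty$ vanishes because of Gaussian decay. This yields the recursion
\[
\int_{-\infty}^{\gamma} t^k\phi(t)\,dt = -\gamma^{k-1}\phi(\gamma) + (k-1)\int_{-\infty}^{\gamma} t^{k-2}\phi(t)\,dt .
\]
Dividing by $\Phi(\gamma)$, we get $m_2 = 1 - \gamma h$ and $m_3 = -\gamma^2 h + 2m_1 = -h(\gamma^2+2)$.

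It remains to use the standard relations $\kappa_2 = m_2 - m_1^2$ and $\kappa_3 = m_3 - 3m_1 m_2 + 2m_1^3$. Substituting gives
\[
\kappa_2 = 1 - \gamma h - h^2 .
\]
For the third cumulant,
\[
\kappa_3 = -h(\gamma^2+2) + 3h(1-\gamma h) - 2h^3 = h\bigl(1 - \gamma^2 - 3\gamma h - 2h^2\bigr),
\]
which matches the stated formula.

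The only step that needs any care is the expansion of $\kappa_3$, where it is easy to slip on signs. As a sanity check, letting $\gamma\to\infty$ gives $h\to 0$, so the cumulants tend to $(0,1,0)$, as they should for an untruncated standard Gaussian.
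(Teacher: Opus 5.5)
Your proof is correct and follows the same route as the paper. The paper also uses integration by parts with $\phi'(t) = -t\phi(t)$ to obtain the truncated raw moments $1-\gamma\frac{\phi(\gamma)}{\Phi(\gamma)}$ and $-(\gamma^2+2)\frac{\phi(\gamma)}{\Phi(\gamma)}$, and then assembles $\kappa_3 = \E[Z_\gamma^3] - 3\E[Z_\gamma]\kappa_2 - (\E[Z_\gamma])^3$, which is equivalent to your $m_3 - 3m_1m_2 + 2m_1^3$. One notational caution: the paper reserves $h(x) = \phi(x)/(1-\Phi(x))$ for the inverse Mills ratio, so your $h$ is the paper's $h(-\gamma)$.
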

The proof of \Cref{fact:first-3-cumulants} follows by direct computation. We include a proof in~\Cref{APP:facts:cumulantsof1dtruncated}.

Moreover, the \textit{skewness} of a random variable $X$, denoted by $\text{skew}(X)$ on $\R$, is defined as the third moment of the normalized random variable $\frac{X-\mu}{\sigma}$, where $\mu$ denotes the mean of $X$ and $\sigma$ the standard deviation.
This can also be written in terms of the cumulants.
Namely, we have that
\[
    \text{skew}(X) = \frac{\kappa_3(X)}{\kappa_2(X)^{3/2}}.
\]
We denote the skewness of $Z_\gamma$ by $\text{skew}(\gamma)$.
Using \Cref{fact:first-3-cumulants}, we immediately get that
\begin{equation}\label{EQ:defofskew}
        \text{skew}(\gamma) =\frac{\phi(\gamma)}{\Phi(\gamma)} \left( 1 - \gamma^2 - 3 \gamma \frac{\phi(\gamma)}{\Phi(\gamma)} - 2 \left(\frac{\phi(\gamma)}{\Phi(\gamma)}\right)^2 \right) \left(1 - \gamma \frac{\phi(\gamma)}{\Phi(\gamma)}  - \left(\frac{\phi(\gamma)}{\Phi(\gamma)}\right)^2 \right)^{-\frac{3}{2}}.
\end{equation}
We now discuss some properties of $\kappa_1,\kappa_2,\kappa_3$ and give some estimates for them that we will need for our analysis later.
Observe that $\kappa_1(x),\kappa_2(x),\kappa_3(x)$ can be viewed as polynomials in $x$ and~$\frac{\phi(x)}{\Phi(x)}$.
We have that the ratio $\frac{\phi(x)}{\Phi(x)}\approx\phi(x)$ as $x\rightarrow +\infty$ and thus the non-trivial case is when~$x\rightarrow -\infty$.
To address this latter case, as in \cite{lee2024}, we introduce the following function, commonly known as the \textit{inverse Mills ratio} or \textit{hazard function}:
\[
    h(x)=\frac{\phi(x)}{1-\Phi(x)}.
\]
Observe that bounds on $h(x)$ (which typically only work for $x\geq 0$) can be related to $\frac{\phi(-x)}{\Phi(-x)}$ as follows
\[
    \frac{\phi(-x)}{\Phi(-x)} = \frac{\phi(x)}{1-\Phi(x)} = h(x).
\]
We will need the following three facts.
\newcommand{\factrationalappxofimrtext}{
    Define the following functions for $k \in \N_0$ and $x > 0$
    \[
        r_{k+1}(x) \coloneqq {x+\frac{1}{x+\frac{2}{x+\frac{3}{\underset{x+\frac{k}{x}}{\ddots}}}}}.
    \]
    Then, for any $m \in \N$ and $x \geq 0$ we have
    \[
        r_{2m-1}(x) < h(x) < r_{2m}(x).
    \]
    In particular, we have
    \[
        x < h(x) < x+\frac{1}{x}.
    \]
}
\begin{fact}[\cite{approximatingmillsratio}]\label{FACT:rational-appx-of-imr}\factrationalappxofimrtext    
\end{fact}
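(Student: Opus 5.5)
The plan is to derive Laplace's continued fraction for the Mills ratio from a family of integrals satisfying a three-term recurrence. For $k \in \N_0$ and $x \geq 0$, define
\[
    I_k(x) \coloneqq \int_0^\infty s^k e^{-xs - s^2/2}\, ds > 0 .
\]
The substitution $t = x+s$ gives $1-\Phi(x) = \phi(x) \int_0^\infty e^{-xs - s^2/2} ds$, and hence $h(x) = 1/I_0(x)$.

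\textbf{Step 1: the recurrence.} Note that $\frac{d}{ds} e^{-xs-s^2/2} = -(x+s)e^{-xs-s^2/2}$. Integrating $s^k (x+s) e^{-xs-s^2/2}$ by parts yields
\[
    x I_0 + I_1 = 1, \qquad x I_k + I_{k+1} = k I_{k-1} \quad (k \geq 1).
\]
All the $I_k$ are strictly positive and finite.

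\textbf{Step 2: the continued fraction with an exact tail.} Set $q_k \coloneqq k I_{k-1}/I_k > 0$ for $k \geq 1$. Dividing the recurrences by $I_0$ and by $I_k$ respectively gives
\[
    h(x) = x + \frac{1}{q_1}, \qquad q_k = x + \frac{k+1}{q_{k+1}} .
\]
Unrolling $k$ times, $h(x)$ equals the finite continued fraction $x + \cfrac{1}{x+\cfrac{2}{\ddots + \cfrac{k}{q_k}}}$, with the exact tail $q_k$ in the last slot. Since $I_{k+1}>0$, the recurrence gives the strict inequality $q_k > x$. Consequently also $q_k < x + k'/x$ for $x>0$, where $k' = k+1$, because $q_k = x + (k+1)/q_{k+1}$ and $q_{k+1} > x$.

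\textbf{Step 3: monotonicity and parity.} Each map $t \mapsto x + j/t$ is strictly decreasing on $(0,\infty)$. So $h(x)$, viewed as a function of the tail value $q_k$, is a composition of $k$ decreasing maps: it is increasing in $q_k$ if $k$ is even and decreasing if $k$ is odd. Replacing $q_k$ by the strictly smaller value $x$ turns the expression into exactly $r_{k+1}(x)$.
\begin{itemize}
    \item For $k = 2m-2$ (even), this shows $r_{2m-1}(x) < h(x)$.
    \item For $k = 2m-1$ (odd), it shows $h(x) < r_{2m}(x)$.
\end{itemize}
The degenerate case $k=0$ is just $h = x + 1/q_1 > x = r_1$. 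The case $m=1$ gives the stated $x < h(x) < x + 1/x$.

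The main care point is the index bookkeeping: one must match the depth $k$ of the unrolled fraction with the subscript of $r$, and track the parity of the number of decreasing maps. It is also necessary to verify strictness, which comes from $I_{k+1}>0$.

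The endpoint $x = 0$ needs a separate remark, since $r_{2m}(0)$ involves division by $0$ and should be read as $+\infty$. There the lower bounds follow by the same argument, or by continuity. In particular $h(0) = \sqrt{2/\pi} > 0 = r_1(0)$.
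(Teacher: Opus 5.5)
Your proof is correct. The paper gives no proof of this statement; it imports it without proof from \cite{approximatingmillsratio}.

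Your argument is a self-contained derivation of the alternation property of Laplace's continued fraction. The key steps are:
\begin{itemize}
\item $h = 1/I_0$;
\item integration by parts gives the recurrence $xI_k + I_{k+1} = kI_{k-1}$;
\item this yields the exact-tail identity $h = F_k(q_k)$, where $F_k = g_1\circ\cdots\circ g_k$ and $g_j(t) = x + j/t$;
\item since $q_k = x + I_{k+1}/I_k > x$ and $r_{k+1}(x) = F_k(x)$, the parity of $k$ fixes the direction of the strict inequality.
\end{itemize}
The index bookkeeping is right: $k = 2m-2$ gives $r_{2m-1} < h$, $k = 2m-1$ gives $h < r_{2m}$, and $k=0$ is handled directly. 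You are also right that the statement's ``$x \geq 0$'' only makes sense for the lower bounds at $x = 0$, since $r_{2m}(0)$ is undefined. The side remark $q_k < x + (k+1)/x$ is true but unused.

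The citation keeps the exposition short. Your route makes the paper independent of the reference. It also gives, by the same parity trick, the monotonicity of the even and odd subsequences stated right after the fact: compare
\[
r_{k+3}(x) = F_k\bigl(x + \tfrac{k+1}{x + (k+2)/x}\bigr)
\quad\text{with}\quad
r_{k+1}(x) = F_k(x).
\]
Their convergence to $h$ would still need a separate argument.
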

We have that $r_{2m}(x) \searrow h(x)$ and $r_{2m+1} \nearrow h(x)$ as $m \to \infty$ for all $x > 0$.
In particular, the above estimates are accurate for large $x>0$.
Intuitively, this is because the higher-order term $x+\frac{k}{x}\approx x$ for large $x$.
However, for small fixed $k$, the approximation quality of $r_k(x)$ can be extremely poor near $x=0$.
In particular, despite $h(0)=\sqrt{\frac{2}{\pi}}$, the upper bound $r_{2m}(x)\rightarrow\infty$ as $x\rightarrow 0$ for \textit{all fixed} integers $m>0$. 
For $x$ close to $0$, we can instead use the following two estimates of the inverse Mills ratio, where the second one is exact at $x=0$.
\begin{fact}[\cite{Sampford1953SomeIO,Birnbaum1942}]\label{FACT:ir-appx-of-imr}
    For $x > 0$, we have 
    \[
        \frac{1}{4}\paren{\sqrt{x^2+8}+3x}< h(x)< \frac{1}{2}\paren{\sqrt{x^2+4}+x}.
    \]
\end{fact}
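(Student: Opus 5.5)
The plan is a comparison argument for the Riccati equation satisfied by the inverse Mills ratio. Since $h(x)=\phi(x)/(1-\Phi(x))$ and $\phi'(x)=-x\phi(x)$, differentiating gives
\[
  h'(x)=h(x)\bigl(h(x)-x\bigr).
\]
Write $F(x,y)=y(y-x)$. For $y>x/2$ we have $\partial_y F(x,y)=2y-x>0$, so $F$ is strictly increasing in $y$ there. Both bounding functions, as well as $h$ itself (by \Cref{FACT:rational-appx-of-imr}), lie strictly above $x$ on $x>0$. Moreover \Cref{FACT:rational-appx-of-imr} gives $0<h(x)-x<1/x$, hence $h(x)-x\to 0$ as $x\to\infty$. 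A direct computation shows that both candidate bounds also satisfy $g(x)-x\to 0$. So in both cases the difference $D=h-g$ tends to $0$ at infinity, and the task is to show that $D$ cannot touch $0$ from the wrong side.

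\emph{Upper bound.} Let $g(x)=\tfrac12(\sqrt{x^2+4}+x)$, the positive root of $y^2-xy-1=0$, so $F(x,g(x))=1$. Also $g'(x)=\tfrac12\bigl(1+x/\sqrt{x^2+4}\bigr)<1$.

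Suppose $D(x_0)\ge 0$ for some $x_0>0$. At any point where $D\ge 0$, monotonicity of $F$ in $y$ gives
\[
  h'=F(x,h)\ge F(x,g)=1>g',
\]
so $D'>0$. Hence $D$ becomes strictly positive just after $x_0$, and from then on it is strictly increasing. This happens even if $D(x_0)=0$: then $D'(x_0)>0$ and $D$ can never return to $0$. Therefore $\lim_{x\to\infty}D(x)>0$, which contradicts $D\to 0$.

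\emph{Lower bound.} Let $\ell(x)=\tfrac14(s+3x)$ with $s=\sqrt{x^2+8}$. Then
\[
  \ell-x=\frac{s-x}{4}=\frac{2}{s+x},\qquad F(x,\ell)=\frac{s+3x}{2(s+x)},\qquad \ell'=\frac{x+3s}{4s}.
\]
The key inequality is $F(x,\ell)<\ell'$. Clearing denominators, this is $2s(s+3x)<(x+3s)(s+x)$, which simplifies to $0<(s-x)^2$ and so holds.

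Now suppose $D=h-\ell\le 0$ at some $x_0$. Wherever $D\le 0$, we have $h'=F(x,h)\le F(x,\ell)<\ell'$, so $D'<0$. Hence $D$ becomes strictly negative and keeps strictly decreasing. Its limit is then negative, again contradicting $D\to 0$.

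The only potentially delicate points are bookkeeping. We need $h>x/2$ so that the monotonicity of $F$ applies, which \Cref{FACT:rational-appx-of-imr} supplies. We also need to handle the boundary case $D(x_0)=0$, which is covered by the strict inequality in $D'$. The main computation, and the step I expect to be the real content, is verifying the algebraic inequalities $F(x,g)=1>g'$ and $F(x,\ell)<\ell'$, i.e.\ that $g$ is a strict supersolution and $\ell$ a strict subsolution of the Riccati equation.
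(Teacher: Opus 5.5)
Your proof is correct. It differs from the paper in that the paper gives no proof of this Fact at all: it imports both inequalities from \cite{Birnbaum1942,Sampford1953SomeIO}. You instead give a self-contained comparison argument for the Riccati equation $h'=h(h-x)$, pinned down by $h(x)-x\to 0$. Your only input is the bound $x<h(x)<x+1/x$ from \Cref{FACT:rational-appx-of-imr}, which the paper cites independently, so nothing is circular. I checked the algebra: $g(g-x)=1$, $g'<1$, $\ell(\ell-x)=\frac{s+3x}{2(s+x)}$, and $(x+3s)(s+x)-2s(s+3x)=(s-x)^2$. Your first-crossing reasoning also correctly handles the case $D(x_0)=0$.

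It is worth seeing how the two halves relate to the paper's cumulants. The upper bound needs no ODE argument. For $Z$ a standard normal conditioned on $[x,\infty)$ one has $\E Z=h(x)$ and $\E Z^2=1+xh(x)$. So strict positivity of the variance says $h^2-xh-1<0$, which gives $h<g$ at once. This is $\kappa_2(-x)>0$ from \Cref{claim:bound-on-kappa2}, or $h'<1$ in your notation.

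The lower bound is the substantive half. One has $h''=h\,(2h^2-3xh+x^2-1)$, and $h>x$ lies above the smaller root of that quadratic. Hence the lower bound is equivalent to convexity of $h$, i.e.\ to $\kappa_3(-x)<0$. The paper obtains $\kappa_3<0$ on the negative axis (case (iii) in the proof of \Cref{claim:monotonicity-of-kappa2}) precisely from this Fact. Your comparison argument therefore gives an elementary, self-contained justification of that step.

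A small simplification: in the lower-bound step, $h>0$ already suffices for $F(x,h)\le F(x,\ell)$. Indeed $F(x,\cdot)\le F(x,\ell)$ on $[x-\ell,\ell]$, and $x-\ell<0$. You still need the bound $x<h<x+1/x$ to get $h-x\to 0$.
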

\begin{fact}[\cite{boyd1959inequalities}]\label{FACT:ir-appx-of-imr2}
    For $x > 0$, we have
    \[
        \frac{\sqrt{(\pi-2)^2x^2+2\pi}+2x}{\pi}<h(x)<\frac{\sqrt{x^2+2\pi}+(\pi-1)x}{\pi}.
    \]
\end{fact}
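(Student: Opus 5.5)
The plan is to pass from $h$ to the Mills ratio $R(x) = 1/h(x) = (1-\Phi(x))/\phi(x)$ and to prove both inequalities with an ODE comparison argument. Differentiating, and using $\phi'(x) = -x\phi(x)$, gives the linear equation
\[
R'(x) = xR(x) - 1 .
\]
Its unique solution that stays bounded as $x\to\infty$ is $R(x) = \int_x^\infty e^{(x^2-t^2)/2}\,dt$.

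The comparison principle I would use is the following. Let $g$ be a smooth positive function on $(0,\infty)$ with $g(x) = O(1/x)$. Suppose $L[g] \coloneqq g' - xg + 1 > 0$ on $(0,\infty)$, so that $g$ is a strict supersolution. Set $D = R - g$. Then $D' = xD - L[g]$, and hence
\[
\Paren{D e^{-x^2/2}}' = -L[g]\, e^{-x^2/2} < 0 .
\]
Because $D e^{-x^2/2} \to 0$ as $x \to \infty$, this gives $D e^{-x^2/2} > 0$, that is, $R > g$ on $(0,\infty)$. If instead $L[g] < 0$, the same argument gives $R < g$.

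I would apply this with $g_- = \pi/\bigl(\sqrt{x^2+2\pi}+(\pi-1)x\bigr)$ and $g_+ = \pi/\bigl(\sqrt{(\pi-2)^2x^2+2\pi}+2x\bigr)$. Showing $R > g_-$ yields the claimed upper bound on $h$, and showing $R < g_+$ yields the lower bound. Both functions are positive and of order $1/x$, so the decay hypothesis holds. Note that both bounds equal $\sqrt{2/\pi} = h(0)$ at $x = 0$. This is consistent with the approach, since the argument integrates in from $+\infty$ and never uses a boundary value at $0$; the inequalities come out strict on $(0,\infty)$.

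Everything therefore reduces to the sign of $L[g]$ for a function of the form $g = \pi/(s + bx)$, where $s = \sqrt{a^2x^2 + 2\pi}$. The two cases are $(a,b) = (1,\pi-1)$, where I need $L[g] > 0$, and $(a,b) = (\pi-2,2)$, where I need $L[g] < 0$. I would compute
\[
g' = -\frac{\pi\,(a^2x/s + b)}{(s+bx)^2},
\]
and multiply $L[g]$ through by $s\,(s+bx)^2 > 0$. The result is an expression $P(x) + Q(x)\,s$ with $P, Q$ polynomials, whose sign is to be determined. I would then replace $s^2$ by $a^2x^2 + 2\pi$ wherever it appears, isolate the $s$-term, check the signs of both sides, and square. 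This leaves a polynomial inequality in $x$ with coefficients involving $\pi$, which I would verify on $x > 0$, for example by checking that all coefficients have the right sign or by a short discriminant argument.

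I expect this algebra to be the main obstacle, specifically in the lower-bound case $a = \pi - 2$, $b = 2$. There the combination $a^2 - b^2 = \pi(\pi-4)$ is negative, so terms of mixed sign appear, and squaring is only legitimate once the sign of each side has been settled. If the polynomial in that case turns out not to have uniformly signed coefficients, my fallback is to split $x > 0$ into a bounded interval and a tail. On the tail I would use the asymptotic expansion $h(x) = x + 1/x - 2/x^3 + \dots$ (equivalently, the bounds of \Cref{FACT:rational-appx-of-imr}). On the bounded interval I would verify the polynomial inequality directly.
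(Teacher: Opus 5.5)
Your comparison step fails in both cases, and the reason is exactly that both bounds are exact at $x=0$; this is not consistent with your approach, it rules it out. Integrate $\bigl(De^{-x^2/2}\bigr)'=-L[g]\,e^{-x^2/2}$ over $(0,\infty)$. Using $De^{-x^2/2}\to 0$ at $\infty$ and $D(0)=R(0)-g(0)=\sqrt{\pi/2}-\sqrt{\pi/2}=0$, you get $\int_0^\infty L[g](t)\,e^{-t^2/2}\,dt=0$. So for any $g\neq R$ with $g(0)=R(0)$, $L[g]$ must take both signs.

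This is visible concretely:
\begin{itemize}
\item For $g_-$: $L[g_-](0)=1-\tfrac{\pi-1}{2}=\tfrac{3-\pi}{2}<0$. Carrying out your squaring step for $(a,b)=(1,\pi-1)$, everything cancels except a term linear in $x^2$ and a constant. Thus $L[g_-]$ has the sign of $(\pi-2)(4-\pi)x^2-2\pi(\pi-3)^2$, which is negative for $0<x<x_0\approx 0.359$.
\item For $g_+$, with $(a,b)=(\pi-2,2)$: $L[g_+](0)=0$, and $L[g_+]$ has the sign of $(4+2\pi-\pi^2)^2-2(\pi-2)^2(4-\pi)(\pi-3)x^2$, which is positive for $0<x<x_1\approx 0.735$.
\end{itemize}
So the polynomial inequality you plan to verify is false on an interval next to $0$ in both cases, not only in the lower-bound case. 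Your fallback of checking it directly on a bounded interval therefore cannot succeed.

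The missing idea is to use the initial value $D(0)=0$ on exactly that interval. Each residual expression above is affine in $x^2$ with nonzero slope, so $L[g]$ changes sign exactly once in each case. The crossings are at $x_0^2=2\pi(\pi-3)^2/\bigl((\pi-2)(4-\pi)\bigr)$ for $g_-$ and $x_1^2=(4+2\pi-\pi^2)^2/\bigl(2(\pi-2)^2(4-\pi)(\pi-3)\bigr)$ for $g_+$.

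Let $F=(R-g)e^{-x^2/2}$. Then $F(0)=0$, $F(x)\to 0$ as $x\to\infty$, and $F'=-L[g]\,e^{-x^2/2}$.
\begin{itemize}
\item For $g_-$, $F$ strictly increases on $[0,x_0]$ and strictly decreases towards $0$ on $[x_0,\infty)$. Hence $F>0$ on $(0,\infty)$, i.e.\ $R>g_-$.
\item For $g_+$, $F$ strictly decreases on $[0,x_1]$ and then strictly increases back towards $0$. Hence $F<0$, i.e.\ $R<g_+$.
\end{itemize}
With this two-endpoint argument in place of the one-sided one, your ODE reduction gives a complete proof. For comparison, the paper gives no argument for this fact and simply imports it from Boyd (1959).
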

Based on these facts, we prove the following claims that are necessary for our analysis.
We provide proofs of these claims in \Cref{APP:facts:cumulantsof1dtruncated}.

\newcommand{\claimonelipschitzkonetext}{
    The function $\kappa_1$ as defined in \eqref{EQ:ThreeCumulantsfor1DGauss} is 1-Lipschitz on $\R$.
}
\begin{claim}\label{claim:1-lipschitz-k1}\claimonelipschitzkonetext     
\end{claim}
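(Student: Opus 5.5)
The plan is to show $0 \le \kappa_1'(\gamma) \le 1$ for every $\gamma$; since $\kappa_1$ is differentiable on $\R$, the mean value theorem then gives 1-Lipschitzness. By \Cref{fact:first-3-cumulants}, $\kappa_1(\gamma) = -\phi(\gamma)/\Phi(\gamma)$. Write $m(\gamma) = \phi(\gamma)/\Phi(\gamma)$. Using $\phi'(\gamma) = -\gamma\phi(\gamma)$ and $\Phi' = \phi$, we get
\[
    m'(\gamma) = \frac{-\gamma\phi(\gamma)\Phi(\gamma) - \phi(\gamma)^2}{\Phi(\gamma)^2} = -\gamma m(\gamma) - m(\gamma)^2 .
\]
Hence
\[
    \kappa_1'(\gamma) = \gamma m(\gamma) + m(\gamma)^2 = 1 - \kappa_2(\gamma),
\]
where the last equality again uses the formula for $\kappa_2$ in \Cref{fact:first-3-cumulants}.

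\emph{Upper bound.} Since $\kappa_2(\gamma)$ is the variance of $Z_\gamma$, we have $\kappa_2(\gamma) \ge 0$, and therefore $\kappa_1'(\gamma) \le 1$.

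\emph{Lower bound.} It remains to show $\kappa_2(\gamma) \le 1$, i.e.\ $\gamma m(\gamma) + m(\gamma)^2 \ge 0$. Because $m > 0$, this is equivalent to $m(\gamma) \ge -\gamma$. I would split into two cases.
\begin{itemize}
    \item If $\gamma \ge 0$, the inequality is trivial, since $m(\gamma) > 0 \ge -\gamma$.
    \item If $\gamma < 0$, set $x = -\gamma > 0$. Then $m(\gamma) = h(x)$, and \Cref{FACT:rational-appx-of-imr} gives $h(x) > x$, which is exactly $m(\gamma) > -\gamma$.
\end{itemize}
Alternatively, the bound $\kappa_2 \le 1$ follows from the standard fact that conditioning a Gaussian on a convex set does not increase its variance. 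The Mills-ratio route is self-contained given the facts already stated, so I would use it.

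Combining the two bounds, $\kappa_1' \in [0,1]$, and the claim follows. There is no real obstacle here. The only step needing care is the lower bound $\kappa_2 \le 1$, and that is settled directly by $h(x) > x$.
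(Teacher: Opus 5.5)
Your proposal is correct and follows the paper's proof. The paper also computes $\kappa_1'(t) = t\frac{\phi(t)}{\Phi(t)} + \bigl(\frac{\phi(t)}{\Phi(t)}\bigr)^2 = 1-\kappa_2(t)$ and bounds this using $0<\kappa_2\le 1$ from \Cref{claim:bound-on-kappa2}, whose proof is exactly your case split, including $h(x)>x$ for $t<0$. Your version also states explicitly the lower bound $\kappa_1'\ge 0$, which is needed for $\abs{\kappa_1'}\le 1$ and which the paper's proof of the Lipschitz claim leaves implicit.
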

\newcommand{\claimboundonkappatwotext}{
    For every $t \in \R$, we have that $0<\kappa_2(t)\leq 1$.
}
\begin{claim}\label{claim:bound-on-kappa2}\claimboundonkappatwotext
\end{claim}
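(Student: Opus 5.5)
The plan is to prove the two inequalities separately: positivity by a variance argument, and the upper bound by a derivative identity.

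\emph{Positivity.} By \eqref{EQ:ThreeCumulantsfor1DGauss}, $\kappa_2(t)$ is the variance of $Z_t$, the standard Gaussian conditioned on $(-\infty,t]$. This conditional law has a density that is positive on the whole interval $(-\infty,t]$, so it is not a point mass. Hence its variance is strictly positive, which gives $\kappa_2(t)>0$.

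\emph{Upper bound.} Write $\lambda(t)=\phi(t)/\Phi(t)$, so that $\kappa_1(t)=-\lambda(t)$ by \Cref{fact:first-3-cumulants}.
\begin{itemize}
\item Using $\phi'(t)=-t\phi(t)$ and $\Phi'=\phi$, differentiate:
\[
\lambda'(t)=\frac{-t\phi(t)\Phi(t)-\phi(t)^2}{\Phi(t)^2}=-t\lambda(t)-\lambda(t)^2 .
\]
\item Comparing with \Cref{fact:first-3-cumulants}, this gives $\kappa_2(t)=1+\lambda'(t)=1-\kappa_1'(t)$.
\item So $\kappa_2(t)\le 1$ is equivalent to $\kappa_1'(t)\ge 0$, i.e.\ the mean of $Z_t$ is nondecreasing in $t$.
\end{itemize}

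\emph{Monotonicity of the mean.} Differentiate $\kappa_1(t)=\Phi(t)^{-1}\int_{-\infty}^t s\phi(s)\,ds$ directly:
\[
\kappa_1'(t)=\frac{t\phi(t)}{\Phi(t)}-\frac{\phi(t)}{\Phi(t)}\kappa_1(t)=\lambda(t)\bigl(t-\kappa_1(t)\bigr).
\]
Here $\lambda(t)>0$. Moreover $\kappa_1(t)=\mathbb E Z_t<t$, because $Z_t\le t$ almost surely and $Z_t$ is non-degenerate. Therefore $\kappa_1'(t)>0$, and hence $\kappa_2(t)<1$, which is stronger than needed.

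As an alternative route for $\kappa_2\le 1$, one could invoke the Brascamp--Lieb (or Poincaré) inequality for a Gaussian restricted to a convex set. However, the elementary derivative computation above is self-contained, so I would use it.

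The only step requiring care is the identity $\kappa_2=1-\kappa_1'$. That is a routine verification against the explicit formulas in \Cref{fact:first-3-cumulants}, so I do not expect any real obstacle.
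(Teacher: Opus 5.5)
Your proof is correct, and the positivity part matches the paper's. For the upper bound, you and the paper rest on the same factorization, $1-\kappa_2(t)=\lambda(t)\bigl(t+\lambda(t)\bigr)$ with $\lambda=\phi/\Phi$. Your identity $\kappa_1'(t)=\lambda(t)\bigl(t-\kappa_1(t)\bigr)$ is exactly this, since $\kappa_1=-\lambda$. The difference is in how you show the second factor is nonnegative. The paper splits into two cases: for $t\ge 0$ both terms are trivially nonnegative, and for $t<0$ it invokes the Mills-ratio bound $h(x)>x$ from \Cref{FACT:rational-appx-of-imr} to get $\lambda(t)=h(-t)\ge -t$. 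You instead write $t+\lambda(t)=t-\mathbb{E}Z_t=\mathbb{E}[t-Z_t]>0$, using $Z_t<t$ almost surely. This needs no case split and no cited inequality; in effect it is a self-contained proof of $h(x)>x$. It also gives the strict bound $\kappa_2<1$. The detour through $\kappa_1'$ is not needed, since you can read $1-\kappa_2=\lambda\,(t-\kappa_1)$ directly off \Cref{fact:first-3-cumulants}. It does have a side benefit, though: together with positivity it yields $0<\kappa_1'=1-\kappa_2<1$, which is exactly \Cref{claim:1-lipschitz-k1}. The paper's route costs little in context, because it relies on the rational Mills-ratio bounds that are used heavily elsewhere anyway.
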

\newcommand{\claimmonotonicityofkappatwotext}{
    For $\kappa_2$ and $\kappa_3$ as in \eqref{EQ:ThreeCumulantsfor1DGauss}, we have $\ddt\kappa_2(t)=-\kappa_3(t)$. Moreover, $\kappa_3(t) < 0$ for every $t \in \R$ and thus $\kappa_2(t)$ is monotonically increasing.
}
\begin{claim}\label{claim:monotonicity-of-kappa2}\claimmonotonicityofkappatwotext
\end{claim}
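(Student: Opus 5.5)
The claim has two parts: the identity $\frac{d}{dt}\kappa_2(t)=-\kappa_3(t)$, and the strict negativity of $\kappa_3$. Monotonicity of $\kappa_2$ then follows at once.

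\textbf{The derivative identity.} I would derive this from the closed forms in \Cref{fact:first-3-cumulants}. Write $m(t)=\phi(t)/\Phi(t)$, so that $\kappa_1=-m$. Since $\phi'(t)=-t\phi(t)$ and $\Phi'=\phi$, we get
\[
m'(t)=-tm(t)-m(t)^2 .
\]
Next differentiate $\kappa_2=1-tm-m^2$:
\[
\kappa_2'=-m-tm'-2mm' = -m+(t+2m)(tm+m^2)=-m\left(1-t^2-3tm-2m^2\right)=-\kappa_3 .
\]
This is a routine computation. A more conceptual alternative is to note that $Z_t$ is the exponential-tilt family viewed through the truncation point. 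Either way, the identity is immediate, and I would present the direct computation.

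\textbf{Negativity of $\kappa_3$.} This is the main obstacle. Since $m>0$, it suffices to show
\[
g(t):=1-t^2-3tm(t)-2m(t)^2<0 \quad\text{for all } t .
\]
I would switch to the variable $x=-t$, so that $m(t)=h(x)$ and $g=1-x^2+3xh-2h^2$. Viewed as a quadratic in $h$, this is $-(2h^2-3xh+x^2-1)$. Its roots are
\[
h=\frac{3x\pm\sqrt{x^2+8}}{4},
\]
so $g<0$ holds exactly when $h>\frac{3x+\sqrt{x^2+8}}{4}$ or $h<\frac{3x-\sqrt{x^2+8}}{4}$.

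For $x>0$, the lower bound in \Cref{FACT:ir-appx-of-imr} gives precisely $h(x)>\frac14(\sqrt{x^2+8}+3x)$. This is the Sampford inequality, and it handles the case $t<0$. At $x=0$, we have $h=\sqrt{2/\pi}>\sqrt{8}/4=\sqrt{1/2}$, since $2/\pi\approx0.637>0.5$.

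For $t>0$ (that is, $x<0$), the bound from \Cref{FACT:ir-appx-of-imr} is not directly available. Here I would argue by another route. We still need $m>\frac{-3t+\sqrt{t^2+8}}{4}$, where $m=\phi(t)/\Phi(t)$ and $\Phi(t)\ge 1/2$. One option is the identity $m(t)=h(-t)$ together with the fact that the Sampford bound $h(x)>\frac14(\sqrt{x^2+8}+3x)$ remains valid for all real $x$ (Sampford's proof covers all $x$). Alternatively, I would use a direct argument: $\kappa_3(t)$ is the third central moment of a distribution with log-concave density supported on $(-\infty,t]$, truncated on the right. I would show it is left-skewed by writing
\[
\kappa_3(t)=-\frac{d}{dt}\kappa_2(t),
\]
which reduces the question to showing that the variance of $Z_t$ is strictly increasing in $t$. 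Log-concavity gives that truncating more increases concentration. The cleanest version is the known result that for a log-concave density truncated to $(-\infty,t]$, the variance is nondecreasing in $t$, with strictness from the fact that the Gaussian is not exponential. As a fallback for large positive $t$, note that $g\to 1-t^2-3t\phi(t)-\dots<0$ once $t\ge1$. For $t\in[0,1]$, compare $m(t)$ against the explicit quadratic root using $\Phi(t)\le1$ and monotonicity. I would attempt the Sampford-for-all-$x$ route first.
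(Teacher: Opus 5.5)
Your derivative computation, your treatment of $t\le 0$ (Sampford's bound at $x=-t>0$ plus the direct check at $t=0$), and your remark that $g<0$ trivially for $t\ge 1$ are exactly the paper's argument. The gap is the range $0<t<1$. This is the only range with $t>0$ where the larger root $\frac{\sqrt{t^2+8}-3t}{4}$ is positive, so it is where something must actually be proved, and none of your three routes closes it.

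\textbf{Route (a): Sampford for all real $x$.} Since $h(x)>x$ always keeps $h$ above the smaller root, your own reduction shows that for $x\in(-1,0)$ the bound $h(x)>\frac14(3x+\sqrt{x^2+8})$ is \emph{equivalent} to $\kappa_3(-x)<0$ (convexity of the inverse Mills ratio). So asserting that Sampford's bound ``remains valid for all real $x$'' cites the claim itself on exactly the range that needs work. Even if that extension appears in the literature, it is not a proof step here. The paper's \Cref{FACT:ir-appx-of-imr} is stated only for $x>0$.

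\textbf{Route (b): log-concavity.} At best this gives that $\kappa_2$ is nondecreasing, i.e.\ $\kappa_3\le 0$. The phrase ``the Gaussian is not exponential'' does not establish strictness.

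\textbf{Route (c): the fallback.} Using $\Phi(t)\le 1$ only gives $m(t)\ge\phi(t)$. Already at $t=0$ this is $\approx 0.40$, while the root is $\sqrt2/2\approx 0.71$, and the bound stays below the root for all $t$ up to about $0.48$. Also, the fact $\Phi(t)\ge\frac12$ that you mention goes the wrong way: it bounds $m$ from above.

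\textbf{The missing ingredient} is a lower bound on $m$ that is sharp near $0$. Since $\Phi''(t)=-t\phi(t)<0$ for $t>0$, $\Phi$ lies below its tangent line at $0$. Hence $\Phi(t)\le\frac12+\frac{t}{\sqrt{2\pi}}$, and so $m(t)\ge L(t):=\frac{e^{-t^2/2}}{\sqrt{\pi/2}+t}$; this is the bound the paper uses. It then suffices to show $L(t)>R(t):=\frac{\sqrt{t^2+8}-3t}{4}$ on $(0,1)$, because $m$ above the larger root gives $t^2-1+3tm+2m^2>0$. Both $L$ and $R$ are decreasing, so it is enough to check $L(b)>R(a)$ on each interval $[a,b]=[k/10,(k+1)/10]$. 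The tightest case is $L(0.1)\approx 0.735>R(0)\approx 0.707$. The paper instead verifies by computer that $t^2-1+3tL(t)+2L(t)^2$ is increasing on $(0,1)$, so it is at least its value $4/\pi-1>0$ at $t=0$.
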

\newcommand{\claimkappaonekappatwoboundedtext}{
    For every $t\in\R$, we have $\abs{\kappa_1(t)}\cdot\sqrt{\kappa_2(t)}\leq O(1)$.
}
\begin{claim}\label{claim:kappa1kappa2bounded}\claimkappaonekappatwoboundedtext
\end{claim}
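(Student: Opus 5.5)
We need $|\kappa_1(t)|\sqrt{\kappa_2(t)}\le O(1)$ for all $t\in\R$. Write $m(t)=\phi(t)/\Phi(t)$, so $|\kappa_1(t)|=m(t)$ by \Cref{fact:first-3-cumulants}. By \Cref{claim:bound-on-kappa2}, $\sqrt{\kappa_2(t)}\le 1$. It therefore suffices to bound $m(t)\sqrt{\kappa_2(t)}$ on two regions separately. For $t\ge 0$ we use $\Phi(t)\ge 1/2$ and $\phi(t)\le 1/\sqrt{2\pi}$, which give $m(t)\le 2/\sqrt{2\pi}$. The product is then at most $O(1)$.

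For $t<0$, set $x=-t>0$, so that $m(t)=h(x)$ and
\[
\kappa_2(t)=1+x h(x)-h(x)^2 = 1-h(x)\bigl(h(x)-x\bigr).
\]
By \Cref{FACT:rational-appx-of-imr}, $x<h(x)<x+1/x$, and hence $0<h(x)-x<1/x$. The plan is to show $\kappa_2(t)\lesssim 1/x^2$ for large $x$, because then $m(t)\sqrt{\kappa_2(t)}\lesssim (x+1/x)\cdot(1/x)=O(1)$.

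\begin{itemize}
\item \emph{Bounded range $0<x\le 1$.} Here $h(x)\le x+1/x$ is useless, so instead use the upper bound in \Cref{FACT:ir-appx-of-imr}: $h(x)<\tfrac12(\sqrt{x^2+4}+x)\le \tfrac12(\sqrt5+1)$. Together with $\kappa_2\le 1$ this bounds the product by a constant.
\item \emph{Large range $x\ge 1$.} We need a lower bound on $h(x)(h(x)-x)$ of the form $1-O(1/x^2)$. We use the sharper rational bound from \Cref{FACT:rational-appx-of-imr} with $m=2$, namely $h(x)>r_3(x)=x+\frac{1}{x+2/x}=x+\frac{x}{x^2+2}$. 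Then
\[
h(x)\bigl(h(x)-x\bigr) > x\cdot \frac{x}{x^2+2} = 1-\frac{2}{x^2+2},
\]
so $\kappa_2(t)<\frac{2}{x^2+2}\le 2/x^2$. Combining this with $h(x)<x+1/x\le 2x$ gives $m(t)\sqrt{\kappa_2(t)}\le 2x\cdot\sqrt2/x=2\sqrt2$.
\end{itemize}

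The main obstacle is the large-$x$ regime. There the crude bound $h>x$ only yields $\kappa_2\le 1$, so the cancellation must be extracted from the next-order rational approximation $r_3$. Once $r_3$ is available from \Cref{FACT:rational-appx-of-imr}, the remaining steps are elementary.
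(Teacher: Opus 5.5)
Your proof is correct and follows the same strategy as the paper. Both arguments take a trivial bound where $\Phi(t)$ is bounded away from $0$, and for very negative $t$ they combine $|\kappa_1(t)| = h(-t) < r_2(-t) \le 2|t|$ with $\kappa_2(t) = O(1/t^2)$, obtained from a rational lower bound on the Mills ratio. The difference is in how that last bound is derived: the paper splits at $t=-4$, writes $\kappa_2(t)=q_t(h(-t))$ with $q_t(y)=1-ty-y^2$, and uses monotonicity of $q_t$ on $(-t/2,\infty)$ together with $h>r_5$. You instead factor $\kappa_2 = 1-h(x)\bigl(h(x)-x\bigr)$ and bound each positive factor below via $h>x$ and $h-x>x/(x^2+2)$ (from $h>r_3$), which reaches $\kappa_2<2/(x^2+2)$ a bit more directly. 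Your appeal to the Sampford--Birnbaum bound on $0<x\le 1$ could equally be replaced by $\Phi(t)\ge\Phi(-1)$.
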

\newcommand{\claimkappaonekappathreeboundedtext}{
    For every $t\in\R$, we have $\abs{\ddt\kappa_2(t)}=\abs{\kappa_3(t)}\leq O(1)$ and ${0<\kappa_1(t)\cdot\kappa_3(t)\leq O(1)}$.
}
\begin{claim}\label{claim:kappa1kappa3bounded}\claimkappaonekappathreeboundedtext
\end{claim}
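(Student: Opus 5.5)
The plan is to substitute $x = -t$ and $r \coloneqq \phi(t)/\Phi(t) = h(x)$. By \Cref{fact:first-3-cumulants}, $\kappa_1(t) = -r$ and
\[
    \kappa_3(t) = r\Paren{1 - x^2 + 3xr - 2r^2}.
\]
Positivity of $\kappa_1\kappa_3$ is then immediate. We have $\kappa_1(t) = -r < 0$, and $\kappa_3(t) < 0$ by \Cref{claim:monotonicity-of-kappa2}, so $\kappa_1(t)\kappa_3(t) > 0$.

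For the upper bounds I split $\R$ into three regions.

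\emph{Region $t \geq 0$.} Here $\Phi(t) \geq 1/2$, so $r \leq 2\phi(t) \leq 2\phi(0)$. Hence both $\kappa_3$ and $\kappa_1\kappa_3 = -r\kappa_3$ are polynomials in $t$ of degree at most $2$ multiplied by a bounded multiple of $\phi(t)$. Since $t^k\phi(t)$ is bounded for each fixed $k$, both quantities are $O(1)$.

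\emph{Region $-1 \leq t \leq 0$.} This is a compact interval and the functions are continuous, so both are bounded.

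\emph{Region $x = -t \geq 1$.} This is the real case, because $r \approx x$ grows and a cancellation must be exhibited. Write $r = x + \delta$. Expanding gives
\[
    1 - x^2 + 3x(x+\delta) - 2(x+\delta)^2 = 1 - x\delta - 2\delta^2 .
\]
To control $\delta$, I invoke \Cref{FACT:rational-appx-of-imr} with $m = 2$, which gives $r_3(x) < h(x) < r_4(x)$. These are explicit rational functions:
\[
    \delta \in \Paren{\frac{x}{x^2+2},\ \frac{x^2+3}{x^3+5x}}.
\]
Both endpoints equal $\frac1x - \frac{2}{x^3} + O(x^{-5})$. Substituting, $x\delta = 1 - \frac{2}{x^2} + O(x^{-4})$ and $2\delta^2 = \frac{2}{x^2} + O(x^{-4})$. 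Therefore $\abs{1 - x\delta - 2\delta^2} = O(x^{-4})$ uniformly for $x \geq 1$.

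Since $r \leq x + 1/x \leq 2x$ on this region, we get $\kappa_1\kappa_3 = r^2 \abs{1 - x\delta - 2\delta^2} = O(x^2 \cdot x^{-4}) = O(1)$. Moreover $\abs{\kappa_1} = r > x \geq 1$, so $\abs{\kappa_3} \leq \abs{\kappa_1}\abs{\kappa_3} = O(1)$ as well. Finally, $\abs{\ddt\kappa_2} = \abs{\kappa_3}$ by \Cref{claim:monotonicity-of-kappa2}.

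The main obstacle is the error bookkeeping in the expansion of $\delta$. The two-sided bounds must be tight to order $x^{-3}$, so the cruder bound $x < h(x) < x + 1/x$ would not suffice; that is why $r_3$ and $r_4$ are used. The plan is to verify the $O(x^{-4})$ bound by directly simplifying the rational expressions $1 - x\delta - 2\delta^2$ at the two endpoints for $x \geq 1$. That expression is monotone in $\delta$ on the relevant range, since its derivative in $\delta$ is $-x - 4\delta < 0$, so it lies between its values at the two endpoints, and checking the endpoints suffices.
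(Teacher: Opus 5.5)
Your proof is correct, and it differs from the paper's in how the cancellation on $t\le -1$ is organized. The region $t>-1$ is handled in the paper essentially as in your first two cases, via $1/\Phi(t)\le 1/\Phi(-1)$ and boundedness of $t^k\phi(t)$.

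\textbf{The paper's route on $t\le -1$.} It bounds $\abs{\kappa_3}$ first. It substitutes $r_3(-t)<h(-t)<r_2(-t)$ term by term into $h\,(2h^2+3th+t^2-1)$. The bracket then collapses to $4(2t^2+1)/(t^2(t^2+2))$, which gives $\abs{\kappa_3(t)}\le 16/\abs{t}$. Multiplying by $h(-t)\le 2\abs{t}$ then gives $\kappa_1\kappa_3\le 32$.

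\textbf{Your route.} You write $r=x+\delta$, which exposes the cancellation as $f(\delta)=1-x\delta-2\delta^2$. Monotonicity in $\delta$ reduces everything to the endpoints. Indeed $f(\delta_{\mathrm{low}})=4/(x^2+2)^2$ and $f(\delta_{\mathrm{up}})=-2(x^2+9)/(x^2(x^2+5)^2)$, so $\abs{f}\le 20/x^4$ for $x\ge 1$. You then bound the product first and recover $\abs{\kappa_3}$ from $\abs{\kappa_1}>1$.

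\textbf{What each buys.} Your version yields the sharp decay $\abs{\kappa_3(t)}=O(\abs{t}^{-3})$. The paper's only gives $O(\abs{t}^{-1})$, which is all the claim needs.

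\textbf{The need for $r_3,r_4$ is overstated.} Boundedness of $\kappa_1\kappa_3=r^2\abs{f}$ only requires $\abs{f}=O(x^{-2})$. The paper's pair $r_3<h<r_2$ already gives this in your framework: $f(1/x)=-2/x^2$, hence $\abs{f}\le 2/x^2$ and $\kappa_1\kappa_3\le 8$. Only the pair $r_1<h<r_2$ is too crude.

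\textbf{A slip you avoid.} Bounding the product directly sidesteps an error in the paper's write-up. It asserts $h(-t)\le -2t\le 2$ for $t\le -1$, which is false. The constant $32$ is recovered only if one keeps the intermediate estimate $16/\abs{t}$ rather than weakening it to $16$.
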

\newcommand{\claimboundkappaonebykappatwotext}{
    For $t \geq 1$, we have $\abs{\kappa_1(t)} \leq 1-\kappa_2(t)$.
}
\begin{claim}\label{claim:boundkappa1bykappa2}\claimboundkappaonebykappatwotext
\end{claim}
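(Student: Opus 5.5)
The claim follows directly from the closed forms in \Cref{fact:first-3-cumulants}; no estimates on the inverse Mills ratio are needed. Write $m(t) \coloneqq \phi(t)/\Phi(t)$. Since $\phi(t)>0$ and $\Phi(t)\in(0,1)$ for every real $t$, we have $m(t)>0$.

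First, by \Cref{fact:first-3-cumulants}, $\kappa_1(t) = -m(t)$, so $\abs{\kappa_1(t)} = m(t)$. Second, the same fact gives
\[
    1-\kappa_2(t) = t\, m(t) + m(t)^2 = m(t)\bigl(t + m(t)\bigr).
\]
The claimed inequality $\abs{\kappa_1(t)} \leq 1-\kappa_2(t)$ therefore reads $m(t) \leq m(t)\bigl(t+m(t)\bigr)$. Dividing by $m(t)>0$, it is equivalent to
\[
    1 \leq t + m(t).
\]

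For $t\geq 1$ this last inequality holds, because $t \geq 1$ and $m(t) > 0$, so in fact it is strict. This completes the argument.

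There is no real obstacle here. The only points to check are the sign convention, namely that $\kappa_1$ is negative so its absolute value equals $m(t)$, and that $m(t)>0$ so the division is legitimate. The assumption $t\geq 1$ is where the argument uses its hypothesis. For smaller $t$ one would instead need the lower bound $m(t) \geq 1-t$, which could be obtained from \Cref{FACT:rational-appx-of-imr} applied at $x=-t$; this is not required for the stated range.
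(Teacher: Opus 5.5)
Your proof is correct and is essentially the paper's argument, which uses the same chain $m(t)\leq t\,m(t)\leq t\,m(t)+m(t)^2 = 1-\kappa_2(t)$, relying on $t\geq 1$ and $m(t)^2\geq 0$. One correction to your closing aside: the bound $m(t)\geq 1-t$ does not follow from the Mills-ratio fact, which only yields $m(t)>-t$ for $t<0$, and it is false for small $t$ (e.g.\ $m(0)=\sqrt{2/\pi}<1$). So the claim itself fails there, since $\abs{\kappa_1(0)}=\sqrt{2/\pi}>2/\pi=1-\kappa_2(0)$, and the restriction $t\geq 1$ is genuinely needed rather than a convenience.
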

\newcommand{\claimkappatwolowertext}{
    For $t\leq -3$, we have $\kappa_2(t)\geq \frac{1}{4t^2}$.
}
\begin{claim}\label{claim:kappa2-lower}\claimkappatwolowertext    
\end{claim}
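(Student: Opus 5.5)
We rewrite $\kappa_2$ in terms of the inverse Mills ratio and then bound it using the rational upper bound $r_4$ from \Cref{FACT:rational-appx-of-imr}.

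\textbf{Step 1 (rewrite).} Set $x \coloneqq -t \geq 3$ and $h \coloneqq h(x) = \phi(t)/\Phi(t)$. By \Cref{fact:first-3-cumulants},
\[
\kappa_2(t) = 1 - t h - h^2 = 1 + x h - h^2 = 1 - h(h-x).
\]
So it suffices to show $h(h-x) \leq 1 - \frac{1}{4x^2}$.

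\textbf{Step 2 (replace $h$ by an upper bound).} The map $g(y)=y(y-x)$ is increasing for $y > x/2$. By \Cref{FACT:rational-appx-of-imr} we have $x < h < r_4(x)$, so
\[
h(h-x) \leq r_4(x)\bigl(r_4(x)-x\bigr).
\]
The cruder bound $r_2(x) = x+1/x$ only gives $h(h-x) < 1+1/x^2$, which is useless here; this is why $r_4$ is needed.

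\textbf{Step 3 (compute $r_4$).} We have
\[
r_4(x) - x = \frac{1}{x + \frac{2}{x+3/x}} = \frac{x^2+3}{x(x^2+5)}.
\]
Writing $q \coloneqq \frac{x^2+3}{x^2+5}$, this gives $r_4 - x = q/x$ and $r_4 = x + q/x$. Hence
\[
r_4(x)\bigl(r_4(x)-x\bigr) = q + \frac{q^2}{x^2}.
\]

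\textbf{Step 4 (reduce to a polynomial inequality).} Since $1-q = \frac{2}{x^2+5}$, the target $q + q^2/x^2 \leq 1 - \frac{1}{4x^2}$ is equivalent to
\[
\frac{2}{x^2+5} - \frac{q^2}{x^2} \geq \frac{1}{4x^2}.
\]
Multiply through by $4x^2(x^2+5)^2$ and set $y = x^2 \geq 9$. The inequality becomes
\[
4\bigl(2y(y+5) - (y+3)^2\bigr) \geq (y+5)^2,
\]
that is, $4(y^2+4y-9) \geq y^2+10y+25$. This simplifies to $3y^2+6y-61 \geq 0$, which holds for all $y \geq 9$. This proves the claim.

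The main obstacle is choosing an approximation of $h$ that is tight enough. The second-order correction matters, because $1-h(h-x)$ is of order $1/x^2$ while $h(h-x)$ itself is close to $1$. The continued-fraction bound $r_4$ is precisely sharp enough for this.
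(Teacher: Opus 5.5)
Your proof is correct and follows essentially the same route as the paper. Both arguments write $\kappa_2(t)$ as a quadratic in the inverse Mills ratio $h(-t)$, use that this quadratic is monotone beyond $-t/2$ together with the upper bound $h(-t) < r_4(-t)$, and arrive at the explicit lower bound $\frac{x^4+4x^2-9}{x^2(x^2+5)^2}$ with $x=-t$. The one small difference is in the final step: you finish with the exact inequality $3y^2+6y-61 \geq 0$ for $y = x^2 \geq 9$, whereas the paper bounds the numerator and denominator separately using $4t^2-9\geq 0$ and $10t^4+25t^2 \leq 3t^6$.
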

\newcommand{\claimkappathreelower}{
    For $t\leq -2$, we have $\Abs{\kappa_3(t)}\geq\frac{1}{50\cdot |t|^3}$.
}
\begin{claim}\label{claim:kappa3-lower}\claimkappathreelower
\end{claim}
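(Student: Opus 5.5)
Set $x = -t \geq 2$ and write $h = h(x) = \phi(t)/\Phi(t)$, using the identity $\phi(-x)/\Phi(-x) = h(x)$ from \Cref{sec:millsratio}. By \Cref{fact:first-3-cumulants} with $\gamma = -x$ we get
\[
\kappa_3(t) = h\left(1 - x^2 + 3xh - 2h^2\right) = -h\left((2h-x)(h-x) - 1\right).
\]
By \Cref{claim:monotonicity-of-kappa2} the bracket $(2h-x)(h-x)-1$ is positive, so
\[
\abs{\kappa_3(t)} = h\left((2h-x)(h-x)-1\right).
\]
Now write $h = x + \delta$. By \Cref{FACT:rational-appx-of-imr} we have $0 < \delta < 1/x$. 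The bracket becomes $f(\delta) \coloneqq x\delta + 2\delta^2 - 1$, which is increasing in $\delta$ for $\delta > 0$. Hence a lower bound on $\abs{\kappa_3(t)}$ needs only a lower bound on $\delta$, together with $h \geq x$.

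The main point is that $f(\delta)$ is a delicate cancellation. The expansion $h = x + 1/x - 2/x^3 + 10/x^5 - \dots$ gives $f(\delta) \approx 2/x^4$, and therefore $\abs{\kappa_3(t)} \approx 2/x^3$. So the lower bound on $\delta$ must be accurate up to order $x^{-5}$.

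The crude bound $r_3(x) = x + x/(x^2+2)$ is not enough. Plugging it in gives $f \geq -4/(x^2+2)^2$, which is useless. I will therefore use the next odd approximant from \Cref{FACT:rational-appx-of-imr},
\[
r_5(x) = x + \cfrac{1}{x + \cfrac{2}{x + \cfrac{3}{x + 4/x}}},
\]
which simplifies to $\delta > \frac{x^3+7x}{x^4+9x^2+8}$.

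Next I substitute this bound $\delta_5$ into $f$, clear denominators, and obtain an explicit rational function $f(\delta_5) = P(x)/(x^4+9x^2+8)^2$. The leading terms should cancel, leaving a numerator with leading behaviour $2x^4$ plus lower order terms. I then verify that $P(x) \geq c\,(x^4+9x^2+8)^2/x^4$ for all $x \geq 2$ with some explicit $c$. This last check is elementary: compare polynomial coefficients, or bound the ratio at $x = 2$ and use monotonicity for larger $x$. Multiplying by $h \geq x$ then gives $\abs{\kappa_3(t)} \geq c/x^3$, and the constant $1/50$ leaves plenty of slack compared with the asymptotic constant $2$.

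The main obstacle is this algebraic verification. It requires checking that $r_5$ gives a strictly positive numerator with the right order uniformly on $x \geq 2$, not only asymptotically. If $r_5$ turns out to be too lossy near $x = 2$, I will switch to $r_7$, which is tighter by one more order. As a further fallback on a compact range such as $x \in [2, 4]$, I can use \Cref{FACT:ir-appx-of-imr2} for $\delta$ together with a direct numeric bound, and use the continued fraction only for large $x$.
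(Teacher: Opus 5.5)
Your proposal is correct and is essentially the paper's proof: the paper also uses that the quadratic factor $2h^2+3th+t^2-1$ is increasing in $h$ for $h>-\tfrac34 t$, lower-bounds it by plugging in the approximant $r_5(-t)$, and then multiplies by $h(-t)\geq -t$. If you carry out your algebra, the numerator is $P(x)=2x^4+10x^2-64$, which is already positive at $x=2$. So $r_5$ suffices (the paper then uses $10x^2-64\geq -\tfrac32 x^4$ and $9x^2+8\leq 4x^4$ to get the constant $1/50$), and the fallbacks to $r_7$ or \Cref{FACT:ir-appx-of-imr2} are unnecessary.
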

\newcommand{\claimoneminusktwo}{
    For $t\geq 1$, we have $1-\kappa_2(t)\leq 16 \cdot (1-\Phi(t))\cdot\log 1/(1-\Phi(t))$.
}
\begin{claim}\label{claim:one-minus-k2}\claimoneminusktwo
\end{claim}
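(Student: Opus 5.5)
The plan is a direct computation. We reduce everything to bounds on the inverse Mills ratio $h(t) = \phi(t)/(1-\Phi(t))$ from \Cref{FACT:rational-appx-of-imr}, and then trade powers of $t$ for $\log(1/q)$, where $q := 1-\Phi(t)$. By \Cref{fact:first-3-cumulants},
\[
1-\kappa_2(t) = t\lambda + \lambda^2, \qquad \lambda := \frac{\phi(t)}{\Phi(t)} > 0 .
\]
So it suffices to bound $\lambda$ by a multiple of $t q$, and $t^2$ by a multiple of $\log(1/q)$.

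\textbf{Step 1 (bound $\lambda$ in terms of $q$).} For $t \ge 1$ we have $\Phi(t) \ge \Phi(1) > 0.84$, hence $\lambda \le 1.2\,\phi(t)$. Write $\phi(t) = h(t)\, q$. The upper bound $h(t) < t + 1/t$ from \Cref{FACT:rational-appx-of-imr} gives $h(t) \le 2t$ for $t \ge 1$. Therefore $\phi(t) \le 2tq$ and $\lambda \le 2.4\, t q$.

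\textbf{Step 2 (combine).} Substituting Step 1 gives
\[
1-\kappa_2(t) \le 2.4\, t^2 q + 5.76\, t^2 q^2 .
\]
Since $q \le 1-\Phi(1) < 0.16$, the second term is at most $0.93\, t^2 q$. Hence $1-\kappa_2(t) \le 4 t^2 q$.

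\textbf{Step 3 (convert $t^2$ into $\log(1/q)$).} This is the only step needing a little care, because it uses the lower bound on $h$ rather than the upper bound. From $h(t) > t \ge 1$ we get $q = \phi(t)/h(t) < \phi(t) = e^{-t^2/2}/\sqrt{2\pi}$. Therefore
\[
\log(1/q) > t^2/2 + \log\sqrt{2\pi} \ge t^2/2, \qquad\text{i.e.}\qquad t^2 \le 2\log(1/q).
\]
Plugging this into Step 2 yields
\[
1-\kappa_2(t) \le 8\, q \log(1/q) \le 16\,(1-\Phi(t))\log\frac{1}{1-\Phi(t)} .
\]
This holds with room to spare in the constant, which absorbs the rough numerics ($\Phi(1)\approx 0.841$, and so on).
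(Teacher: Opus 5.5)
Your proof is correct and follows essentially the same route as the paper: write $1-\kappa_2(t)=t\lambda+\lambda^2$, bound $\lambda$ by a constant times $\phi(t)$ using a lower bound on $\Phi(t)$, use $h(t)<t+1/t\le 2t$ to get $\phi(t)\le 2t(1-\Phi(t))$, and use $h(t)>t$ to get $t^2\le 2\log\bigl(1/(1-\Phi(t))\bigr)$. The only difference is bookkeeping: the paper uses $\Phi(t)\ge 1/2$ and $2\phi(t)\le t$ to reach $8t^2(1-\Phi(t))$, while your sharper $\Phi(1)>0.84$ yields $4t^2(1-\Phi(t))$ and hence the constant $8$ in place of $16$.
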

\newcommand{\claimlogalphaandgammatext}{
    For $t\geq 2$, we have $\frac{t^2}{2}\leq \log \Paren{1/(1-\Phi(t))}\leq 2t^2$.
}
\begin{claim}\label{claim:log-alpha-and-gamma}\claimlogalphaandgammatext    
\end{claim}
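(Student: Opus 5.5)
We need to prove Claim \ref{claim:log-alpha-and-gamma}: for $t\ge 2$, $\frac{t^2}{2}\le \log(1/(1-\Phi(t)))\le 2t^2$. The plan is to write $1-\Phi(t) = \phi(t)/h(t)$ using the inverse Mills ratio. This gives
\[
\log\frac{1}{1-\Phi(t)} = \log\frac{h(t)}{\phi(t)} = \frac{t^2}{2} + \frac{1}{2}\log(2\pi) + \log h(t).
\]
Both bounds then reduce to bounding $\log h(t)$ using \Cref{FACT:rational-appx-of-imr}, which gives $t < h(t) < t + \frac{1}{t}$ for $t>0$.

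\textbf{Lower bound.} Since $h(t) > t \ge 2$, we have $\log h(t) > \log 2 > 0$. Also $\frac{1}{2}\log(2\pi)>0$. Hence $\log(1/(1-\Phi(t))) \ge t^2/2$. Alternatively, we can avoid $h$ altogether here: the standard tail bound $1-\Phi(t)\le e^{-t^2/2}/2$ for $t\ge0$ gives the same conclusion.

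\textbf{Upper bound.} We use $h(t) < t+1/t \le t + 1/2$ for $t\ge 2$, so $\log h(t) \le \log(t+1/2) \le t$. It then remains to show
\[
\frac{1}{2}\log(2\pi) + t \le \frac{3t^2}{2} \quad \text{for } t \ge 2.
\]
At $t=2$ the left side is about $0.92+2=2.92$ and the right side is $6$. The right side minus the left side has derivative $3t-1>0$ on $[2,\infty)$, so the inequality persists for all $t\ge 2$. This yields $\log(1/(1-\Phi(t)))\le 2t^2$.

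\textbf{Main obstacle.} The only delicate point is making sure the constants work at the endpoint $t=2$. The slack there is large, so there is no real difficulty; the rest is routine elementary inequality checking.
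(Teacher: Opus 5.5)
Your proof is correct and follows essentially the same route as the paper. Both write $1/(1-\Phi(t)) = h(t)/\phi(t)$ and use $t < h(t) < t + 1/t$: the lower bound comes from $\log h(t) > 0$ (the paper obtains it in the proof of \Cref{claim:one-minus-k2}), and the upper bound comes from $\log h(t) \le t$ together with $t + \tfrac12\log(2\pi) \le \tfrac32 t^2$ for $t \ge 2$.
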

\newcommand{\claimskewasymptotictext}{
    There are absolute constants $m,M>0$ such that
    \[
        m\leq\frac{\Abs{\sk(t)}}{(1-\Phi(t))\log^{3/2}\Paren{1/(1-\Phi(t))}}\leq M
    \]
    for all $t\geq 0$.
}
\begin{claim}\label{claim:skew-asymptotic}\claimskewasymptotictext
\end{claim}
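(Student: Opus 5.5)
The plan is to write $q \coloneqq 1-\Phi(t)$ and split the range $t \geq 0$ into a compact part $[0,2]$ and a tail part $t \geq 2$. On the tail we control each factor in \eqref{EQ:defofskew} explicitly. On the compact part a continuity argument suffices.

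\emph{Compact part $t\in[0,2]$.} By \Cref{claim:monotonicity-of-kappa2} we have $\kappa_3(t)<0$, and by \Cref{claim:bound-on-kappa2} we have $\kappa_2(t)>0$. Both are continuous, so $\Abs{\sk(t)} = \Abs{\kappa_3(t)}/\kappa_2(t)^{3/2}$ is continuous and strictly positive on $[0,2]$. The denominator $q\log^{3/2}(1/q)$ is also continuous in $t$ there. It is strictly positive on $(0,2]$, but it vanishes at $t=0$, because $q=1/2$ gives $\log(1/q)=\log 2$. I should double check this: in fact $\log 2>0$, so the denominator is $\tfrac12(\log 2)^{3/2}>0$ at $t=0$ and there is no vanishing. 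Hence the ratio is a continuous positive function on a compact interval. It is therefore bounded above and below by absolute positive constants.

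\emph{Tail part $t \geq 2$.} Set $r \coloneqq \phi(t)/\Phi(t)$. We have $\Phi(t)\in[1/2,1]$, so $r \asymp \phi(t)$. By \Cref{FACT:rational-appx-of-imr}, $\phi(t) = h(t)\, q$ with $t < h(t) < t+1/t \leq \tfrac{3}{2}t$. This gives $r \asymp t\, q$ up to absolute constants.

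Next, $r$ and $t r$ are small. Indeed $t r \leq 2 t\phi(t) \leq 4\phi(2) < 0.25$ for $t\geq 2$, since $t\phi(t)$ is decreasing there. Hence $\kappa_2(t) = 1 - tr - r^2 \in [1/2, 1]$, so $\kappa_2^{-3/2}$ is bounded between absolute constants.

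For the cubic factor, $1-t^2-3tr-2r^2 = -(t^2-1) - O(1)$, where the $O(1)$ term is at most $1$ in absolute value. Since $t^2 - 1 \geq \tfrac34 t^2$ for $t\ge2$, the modulus of this factor is $\asymp t^2$. Consequently $\Abs{\kappa_3(t)} = r\cdot\Abs{1-t^2-3tr-2r^2} \asymp t^3 q$, and thus $\Abs{\sk(t)} \asymp t^3 q$.

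Finally, \Cref{claim:log-alpha-and-gamma} gives $t^2/2 \leq \log(1/q) \leq 2t^2$ for $t\geq 2$, so $t^3 \asymp \log^{3/2}(1/q)$. Combining yields $\Abs{\sk(t)} \asymp q\log^{3/2}(1/q)$ on $[2,\infty)$. Taking the min and max of the constants from the two parts gives $m$ and $M$.

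The only mildly delicate step is the tail bookkeeping. There we must ensure that the lower-order terms $3tr+2r^2$ and $tr+r^2$ are uniformly small, which uses that $t\phi(t)$ is decreasing for $t\ge 1$. We must also track that every constant is absolute.
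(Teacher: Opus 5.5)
Your proof is correct and follows essentially the paper's route: continuity on a compact interval near $0$, and on the tail the Mills-ratio bounds $t<h(t)<t+1/t$ combined with $\log(1/q)\asymp t^2$ from \Cref{claim:log-alpha-and-gamma}. The differences are cosmetic: the paper splits the tail lower and upper bounds into \Cref{claim:skew-est} and \Cref{claim:kappa3-upper} and bounds $\kappa_2$ below by $\kappa_2(0)$ via monotonicity rather than by your direct estimate of $tr+r^2$; also, delete the stray remark that the denominator vanishes at $t=0$, since, as you noticed, it equals $\tfrac12(\log 2)^{3/2}>0$ there.
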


The following lemma is central for our proof.
\begin{lemma}\label{lemma:monotonicity-of-skewness}
    The function $\sk$ as defined in \eqref{EQ:defofskew} is monotonically increasing on $\R$.
\end{lemma}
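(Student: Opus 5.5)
I would prove \Cref{lemma:monotonicity-of-skewness} by differentiating $\sk$ and reducing the sign of the derivative to a two-variable polynomial inequality in $\gamma$ and the ratio $\lambda(\gamma) \coloneqq \phi(\gamma)/\Phi(\gamma)$. That inequality would then be verified using the Mills-ratio bounds of \Cref{FACT:rational-appx-of-imr,FACT:ir-appx-of-imr,FACT:ir-appx-of-imr2}.

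\textbf{Step 1: derivative identities.} The basic identity is
\[
\lambda' = -\lambda(\gamma+\lambda).
\]
From it and \Cref{fact:first-3-cumulants}, every cumulant is a polynomial in $(\gamma,\lambda)$, and derivatives stay inside this polynomial ring. Concretely, $\kappa_1' = \lambda(\gamma+\lambda) = 1-\kappa_2$, and \Cref{claim:monotonicity-of-kappa2} gives $\kappa_2' = -\kappa_3$. I would also compute $\kappa_3'$ explicitly as a polynomial $P(\gamma,\lambda)$; this is mechanical.

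\textbf{Step 2: reduce to a sign condition.} Using $\kappa_2' = -\kappa_3$,
\[
\sk'(\gamma) = \frac{\kappa_3'\kappa_2 - \tfrac32 \kappa_3\kappa_2'}{\kappa_2^{5/2}} = \frac{\kappa_3'\kappa_2 + \tfrac32\kappa_3^2}{\kappa_2^{5/2}}.
\]
Since $\kappa_2>0$ by \Cref{claim:bound-on-kappa2}, it suffices to show
\[
Q(\gamma,\lambda) \coloneqq P(\gamma,\lambda)\,\kappa_2(\gamma,\lambda) + \tfrac32\,\kappa_3(\gamma,\lambda)^2 > 0
\]
along the curve $\lambda = \lambda(\gamma)$. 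Here $Q$ is an explicit polynomial of moderate degree. As a sanity check, the limits agree with monotonicity: as $\gamma\to-\infty$ the truncated law is asymptotically a reflected exponential, so $\sk\to-2$, and as $\gamma\to+\infty$ we get $\sk\to 0$.

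\textbf{Step 3: verify $Q>0$ by cases.} I would substitute $x=-\gamma$, so that $\lambda = h(x)$, and split into regimes.
\begin{itemize}
\item \emph{$\gamma \ge 0$.} Here $\lambda \le \phi(\gamma)/\Phi(0)$ is small and $\gamma\lambda$ is bounded. I would expand $Q$ as the leading term plus terms of higher order in $\lambda$, then show the leading term dominates. For $\gamma$ near $0$ this may need an explicit numeric bracket on $\lambda$.
\item \emph{$x$ moderately large (say $x \ge x_0$).} I would write $\lambda = x + \delta$ with $0<\delta<1/x$ and use the continued-fraction bounds $r_{2m-1}(x) < h(x) < r_{2m}(x)$ from \Cref{FACT:rational-appx-of-imr}, with $m$ large enough (probably $r_5, r_6$ or beyond). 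The reason is heavy cancellation: $\kappa_2 \sim 1/x^2$ and $\kappa_3 \sim -2/x^3$, so the leading terms of $Q$ cancel at several orders, and the first surviving order must be shown positive.
\item \emph{$x \in [0,x_0]$.} Here I would use the bounds of \Cref{FACT:ir-appx-of-imr,FACT:ir-appx-of-imr2}, which are accurate near $0$. I would bound $Q$ from below by plugging the interval $[\lambda_-(x),\lambda_+(x)]$ into monotone pieces of $Q$. If needed, I would subdivide $[0,x_0]$ into finitely many intervals and use monotonicity of $h$ on each.
\end{itemize}

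\textbf{Main obstacle.} I expect the large-$x$ regime to be hardest, because of the multi-order cancellation in $Q$ described above. The bounds must be tight enough that the residual, at relative order about $1/x^2$ beyond the naive leading order, keeps its sign. I would first try to compute the asymptotic expansion of $Q$ in $1/x$ symbolically, read off which order is the first nonvanishing one, and then choose $m$ in \Cref{FACT:rational-appx-of-imr} accordingly.

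\textbf{Fallback.} If this becomes unwieldy, a structural alternative is to view $Z_\gamma$ as $\gamma - E$, where $E$ has a log-concave density proportional to $\phi(\gamma - e)$ on $[0,\infty)$. Its skewness could then be compared via a variance-of-score or Prékopa-type argument. I would pursue this only if the direct polynomial approach fails.
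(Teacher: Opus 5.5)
Your overall route is the paper's. You differentiate, use $\kappa_2'=-\kappa_3$, reduce to positivity of a polynomial in $(\gamma,\lambda)$ along $\lambda=\phi(\gamma)/\Phi(\gamma)$, and certify it regime by regime with Mills-ratio brackets. In fact $Q=\tfrac{\lambda}{2}\,p_\gamma(\lambda)$ for the paper's cubic $p_t(x)=2t^3-6t+(t^4+14t^2-5)x+(2t^3+20t)x^2+(t^2+8)x^3$. The paper divides out $\lambda$, shows $\partial_x p_t>0$ on a bracket $[x_\ell(t),x_u(t)]\ni\lambda$, and checks $p_t(x_\ell(t))>0$ by computer. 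Two parts of your case analysis, however, would not go through as described.

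\textbf{The regime $\gamma\ge 0$.} The part of $Q$ linear in $\lambda$ is $\lambda\,\gamma(\gamma^2-3)$, which is \emph{negative} on all of $(0,\sqrt3)$. Moreover $\lambda$ is not small there, since $\lambda(0)=\sqrt{2/\pi}\approx 0.80$. So ``the leading term dominates'' fails on this whole interval, not just near $0$; positivity comes from the higher-order terms.

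At $\gamma=0$ one has $p_0(\lambda)=\lambda(8\lambda^2-5)$. You need $\lambda^2>5/8$ against the true value $2/\pi$, a margin of only $16/\pi-5\approx0.09$. Sampford's lower bound, which gives $\sqrt2/2$ at $0$, is already too weak.

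What you need on $[0,\sqrt3]$ is a lower bound on $\lambda$ accurate to within about $1\%$ at $\gamma=0$, plus monotonicity of $p_\gamma$ in $\lambda$ above it. The paper uses $\lambda\ge e^{-t^2/2}/(\sqrt{\pi/2}+t)$, which follows from $\Phi(t)\le\tfrac12+t/\sqrt{2\pi}$ and is exact at $0$. For $t>0$ the $x$ and $x^2$ coefficients of $\partial_x p_t$ are positive, so no upper bound is needed. For $\gamma\ge\sqrt3$ all coefficients of $p_\gamma$ are nonnegative and the claim is trivial.

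\textbf{Precision for $\gamma<0$.} The cancellation is far deeper than ``relative order $1/x^2$''. Once a bracket for $\lambda$ is substituted into the polynomial, the relevant scale is the size of the monomials, $\sim x^5$ in $p$. Writing $\lambda=x+\delta$ gives $p=x-(2x^2+5)\delta+(x^3+4x)\delta^2+(x^2+8)\delta^3$. Inserting $h(x)=x+x^{-1}-2x^{-3}+10x^{-5}-74x^{-7}+706x^{-9}-\dots$ yields $p_\gamma(h(x))=24x^{-9}+O(x^{-11})$, with $\partial_\lambda p\approx2$. So even asymptotically you need $h-\lambda_-\lesssim12x^{-9}$. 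Here $r_5$ fails, since $h-r_5\approx120x^{-9}$.

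At moderate $x$ the true margin is much smaller than the asymptotics suggest. For example $p\approx6\times10^{-3}$ at $\gamma=-1$ and $p\approx7\times10^{-4}$ at $\gamma=-2$. This forces $\lambda$ to be bracketed to within about $10^{-3}$ and $2\times10^{-4}$ respectively. Choosing $m$ from the asymptotic expansion would therefore mislead you.

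Boyd's bounds certify positivity only for $\gamma>-0.42$ (per the paper's computation). Hence the continued-fraction bracket must work all the way down to $x\approx0.4$. This is why the paper needs $r_{81}$ for the value and $r_{20},r_{27}$ for the derivative check. Subdividing $[0,x_0]$ helps only with certified numerical values of $h$ at grid points. With Sampford/Boyd brackets alone it cannot close the gap, because at each fixed $\gamma<-0.42$ their bracket is already too wide.
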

    
\begin{proof}
    From \Cref{claim:monotonicity-of-kappa2} we know that $\ddt\kappa_2(t)=-\kappa_3(t)$.
    It follows that
    \begin{align*}
        \ddt\sk(t) &=\frac{\Paren{\ddt\kappa_3(t)}\kappa_2(t)^{3/2}-\frac{3}{2}\sqrt{\kappa_2(t)}\Paren{\ddt\kappa_2(t)}\cdot\kappa_3(t)}{\kappa_2(t)^3}\\
        &=\frac{2\Paren{\ddt\kappa_3(t)}\kappa_2(t)+3\kappa_3(t)^2}{2\kappa_2(t)^{5/2}}
    \end{align*}
    Since $\kappa_2(t) >0$ for every $t \in \R$ by \Cref{claim:bound-on-kappa2}, it suffices to show that the numerator is positive.
    Defining $g(t)=\frac{\Phi(t)}{\phi(t)}\cdot \Paren{2\Paren{\ddt\kappa_3(t)}\kappa_2(t)+3\kappa_3(t)^2}$, since $\frac{\Phi(t)}{\phi(t)}>0$, it is enough to show $g(t) > 0$ for all $t \in \R$.
    After some calculation we obtain 
    \begin{equation}\label{EQ:formulaforg}
        g(t)=2t^3-6t+(t^4+14t^2-5)\frac{\phi(t)}{\Phi(t)}+(2t^3+20t)\Paren{\frac{\phi(t)}{\Phi(t)}}^2+(t^2+8)\Paren{\frac{\phi(t)}{\Phi(t)}}^3.
    \end{equation}
    In order to show that $g(t) > 0 $ for all $t \in \R$, we consider the following function
    \[
       p_t(x)=2t^3-6t+(t^4+14t^2-5)x+(2t^3+20t)x^2+(t^2+8)x^3.
    \]
    Note that $g(t) = p_t\Paren{\frac{\phi(t)}{\Phi(t)}}$.
    Our argument is now structured as follows:
    For fixed $t$, we will define values $x_\ell(t) \leq \frac{\phi(t)}{\Phi(t)} \leq x_u(t)$.
    We then show that $p_t$ is monotonically increasing on $[x_\ell(t), x_u(t)]$. Then we lower bound 
    \[
        g(t) = p_t\Paren{\frac{\phi(t)}{\Phi(t)}} \geq p_t(x_\ell(t))
    \]
    and argue that $p_t(x_\ell(t)) > 0$.
    In order to show that $p_t$ is monotonically increasing, we compute the derivative of $p_t$ with respect to $x$:
    \[
        \Paren{\frac{\partial}{\partial x}p_t}(x)=t^4+14t^2-5+4(t^3+10t)\cdot x+3(t^2+8)\cdot x^2
    \]
    and show that this is positive for $x \in [x_\ell(t),x_u(t)]$.
    
    We distinguish the following cases: (i) $0 < t$; (ii) $-\frac{4}{10} < t \leq 0$; and (iii)~$t < -\frac{4}{10}$.
    For each of the cases we give values for $x_\ell(t)$ and $x_u(t)$ based on approximations for $\frac{\phi(x)}{\Phi(x)}$ and show that $g(t) > 0$ using the strategy described above.
    We will defer some of the computations to \Cref{APP:computations}, where we give Mathematica code to verify the statements we need.

    \textbf{Case (i): $0 < t$}.
    For $t > 0$, we define $x_\ell(t) = \frac{e^{-t^2/2}}{\sqrt{\pi/2}+t}>0$.
    Note that since for $t > 0$ we have $\Phi(t) \leq \frac{1}{2} +\frac{t}{\sqrt{2\pi}}$, we get that $\frac{\phi(t)}{\Phi(t)}\geq x_\ell(t)$ for every $t>0$.
    Since we have $4(t^3+10t) > 0$ and $3(t^2+8) > 0$ for $t > 0$, we get that for any $x \geq x_\ell(t)$\footnote{Since $x_\ell(t) > 0$, this also implies $x^2 \geq x_\ell(t)^2$.}
    \[
        \Paren{\frac{\partial}{\partial x}p_t}\Paren{x} \geq \Paren{\frac{\partial}{\partial x}p_t}(x_\ell(t)).
    \]
    We show in \Cref{FACT:computationmillsratio1} that 
    \[
        \Paren{\frac{\partial}{\partial x}p_t}(x_\ell(t)) > 0
    \]
    for all $t > 0$.
    Hence, we get that for $t > 0$, $p_t$ is monotonically increasing on $[x_\ell(t), \infty)$.
    Thus, we get that
    \[
        g(t)=p_t\Paren{\frac{\phi(t)}{\Phi(t)}}\geq p_t\Paren{x_\ell(t)}.
    \]
    In \Cref{FACT:computationmillsratio1}, we also show that $p_t\Paren{x_\ell(t)} > 0$ for $t > 0$, which shows $g(t) > 0$ for $t > 0$.

    \textbf{Case (ii): $-\frac{4}{10} < t \leq 0$}.
    In this case, we define
    \[
    x_\ell(t) = \frac{\sqrt{(\pi-2)^2t^2+2\pi}-2t}{\pi} \quad \text{and} \quad x_u(t) = \frac{\sqrt{t^2+2\pi}-(\pi-1)t}{\pi}.
    \]
    For $t\leq 0$, we have $x_\ell(t) > 0$. By \Cref{FACT:ir-appx-of-imr2} we have $x_\ell(t) \leq \frac{\phi(t)}{\Phi(t)} \leq x_u(t)$.
    Since for $t \leq 0$ we have $4(t^3+10t) \leq 0$ and $3(t^2+8) > 0$, we get that, for $x_\ell(t) \leq x \leq x_u(t)$ we have\footnote{\label{footnote:xlpositive}We are again using that since $x_\ell(t) > 0$ we also have $x_\ell(t)^2 \leq x^2$.}
    \[
        \Paren{\frac{\partial}{\partial x}p_t}\Paren{x} \geq t^4+14t^2-5+4(t^3+10t)\cdot x_u(t)+3(t^2+8)\cdot x_\ell(t)^2.
    \]
    We show in \Cref{FACT:computationmillsratio2} that
    \[
        t^4+14t^2-5+4(t^3+10t)\cdot x_u(t)+3(t^2+8)\cdot x_\ell(t)^2 > 0
    \]
    for $-\frac{4}{10} < t \leq 0$.
    Thus, we have that $p_t$ is monotonically increasing on $[x_\ell(t),x_u(t)]$ and thus in particular
    \[
        g(t)=p_t\Paren{\frac{\phi(t)}{\Phi(t)}}\geq p_t\Paren{x_\ell(t)}.
    \]
    In \Cref{FACT:computationmillsratio2} we also show that $p_t\Paren{x_\ell(t)} > 0$ for $-\frac{4}{10} < t \leq 0$ and thus $g(t) > 0$ for $-\frac{4}{10} < t \leq 0$.

    \textbf{Case (iii): $t \leq -\frac{4}{10}$}.
    In this case, we define $x_\ell(t) = r_{81}(-t)$ and $x_u = r_{20}(-t)$.
    We verify in \Cref{FACT:computationmillsratio3} that for $t \leq -\frac{4}{10}$ we have $x_\ell(t) > 0$.
    By \Cref{FACT:rational-appx-of-imr}, we have that $x_\ell(t) \leq \frac{\phi(t)}{\Phi(t)} \leq x_u(t)$.
    We again get that\textsuperscript{\ref{footnote:xlpositive}}
    \[
        \Paren{\frac{\partial}{\partial x}p_t}\Paren{\frac{\phi(t)}{\Phi(t)}} \geq t^4+14t^2-5+4(t^3+10t)\cdot x_u(t)+3(t^2+8)\cdot x_\ell(t)^2.
    \]
    In \Cref{FACT:computationmillsratio3} we show that
    \[
        t^4+14t^2-5+4(t^3+10t)\cdot x_u(t)+3(t^2+8)\cdot x_\ell(t)^2 > 0
    \]
    for $t \leq -\frac{4}{10}$.
    We thus again get that $p_t$ is monotonically increasing on $[x_\ell(t),x_u(t)]$ and thus
    \[
        g(t)=p_t\Paren{\frac{\phi(t)}{\Phi(t)}}\geq p_t\Paren{x_\ell(t)}.
    \]
    Finally, we also show in \Cref{FACT:computationmillsratio3} that $p_t\Paren{x_\ell(t)} > 0$ for $t \leq - \frac{4}{10}$ and thus also $g(t) > 0$ for these $t$.
    This completes the proof.
\end{proof}

In \Cref{APP:computations}, we will show the following bound on $\sk$.
\newcommand{\claimskewatmosttwotext}{
    We have $\lim_{t \to -\infty} \sk(t) = -2$ and $\lim_{t \to + \infty} \sk(t) = 0$. Thus, for all $t \in \R$ we have $\sk(t) \in [-2,0]$ and in particular $|\sk(t)| \leq 2$.
}
\begin{claim}\label{claim:skew-atmost2}\claimskewatmosttwotext
\end{claim}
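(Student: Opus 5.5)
The plan is to compute the two limits directly from the closed form \eqref{EQ:defofskew}, and then use \Cref{lemma:monotonicity-of-skewness} to get the range. Since $\sk$ is continuous and monotonically increasing on $\R$, once the limits at $\pm\infty$ are known, every value $\sk(t)$ lies between them. So $\sk(t)\in[-2,0]$, and hence $|\sk(t)|\le 2$. The whole task is therefore the two limits.

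\textbf{Limit at $+\infty$.} As $t\to+\infty$ we have $\Phi(t)\to 1$ and $\phi(t)\to 0$ superexponentially, so $r(t):=\phi(t)/\Phi(t)\to 0$. Moreover $t^k r(t)\to 0$ for every fixed $k$.

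In the numerator, $\kappa_3(t)=r(t)(1-t^2-3t\,r(t)-2r(t)^2)$. This is $r(t)$ times a quantity of order $t^2$, so it tends to $0$.

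In the denominator, $\kappa_2(t)=1-t\,r(t)-r(t)^2\to 1$. Hence $\sk(t)\to 0$.

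\textbf{Limit at $-\infty$.} This is the main step. Write $t=-x$ with $x\to+\infty$, so that $r(t)=h(x)$. Use the continued-fraction bounds of \Cref{FACT:rational-appx-of-imr} to get an asymptotic expansion
\[
h(x)=x+\frac1x-\frac{2}{x^3}+O\!\left(\frac{1}{x^5}\right).
\]
Concretely, $r_3(x)=x+\frac{x}{x^2+2}$ and $r_4(x)=x+\frac{x^2+3}{x^3+5x}$ sandwich $h$, and both agree with this expansion to the stated order.

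Substitute $h=x+\delta$ with $\delta=\frac1x-\frac2{x^3}+O(x^{-5})$.

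For $\kappa_2$:
\[
\kappa_2=1+xh-h^2=1-x\delta-\delta^2=1-1+\frac{2}{x^2}-\frac{1}{x^2}+O(x^{-4})=\frac{1}{x^2}+O(x^{-4}).
\]

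For $\kappa_3$:
\[
\kappa_3=h\,(1-x^2+3xh-2h^2)=h\,(1+x\delta-2x\delta-2\delta^2)=h\,(1-x\delta-2\delta^2).
\]
Expanding, $x\delta=1-\frac{2}{x^2}+O(x^{-4})$ and $2\delta^2=\frac{2}{x^2}+O(x^{-4})$. So the bracket is $O(x^{-4})$, and I need its exact leading term. This is where the cancellation is delicate, and I expect the true leading term to be $-2/x^4$. Then $\kappa_3\approx -2/x^3$ and $\sk=\kappa_3/\kappa_2^{3/2}\to -2$, which is the claimed value.

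\textbf{Main obstacle.} The difficulty is precisely this fourth-order cancellation in $\kappa_3$. To resolve it reliably I would need the expansion of $h$ to order $x^{-5}$, namely
\[
h(x)=x+\frac1x-\frac2{x^3}+\frac{10}{x^5}+\dots
\]
which comes from $r_5$ and $r_6$ of \Cref{FACT:rational-appx-of-imr}. I would then verify the coefficient $-2$ symbolically, in the same Mathematica style as \Cref{APP:computations}.

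As a sanity check, there is a conceptual reason for the value $-2$. Conditioned on $Z\le t$ with $t\to-\infty$, the variable $t-Z$ rescaled by $|t|$ converges to an $\mathrm{Exp}(1)$ distribution. Its skewness is $2$, and reflecting the variable gives $-2$. One could also turn this into the proof itself: show convergence of the rescaled second and third cumulants via the moment generating function of $|t|(t-Z_t)$, which converges to $1/(1+s)$ locally uniformly.
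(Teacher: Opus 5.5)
Your proposal is correct. It reaches the two limits by a genuinely different route from the paper, while the concluding step, squeezing $\sk(t)$ between its limits via \Cref{lemma:monotonicity-of-skewness}, is the same.

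The paper evaluates both limits with a symbolic \texttt{Limit} call in Mathematica. You derive them by hand, so the argument can be checked without a computer algebra system and it explains where the value $-2$ comes from.

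The step you left as an expectation does work out. Both $r_4$ and $r_5$ expand as $x+x^{-1}-2x^{-3}+10x^{-5}+O(x^{-7})$, so the sandwich $r_5<h<r_4$ from \Cref{FACT:rational-appx-of-imr} pins $h$ down to that order. This gives $x\delta=1-2x^{-2}+10x^{-4}+O(x^{-6})$ and $2\delta^2=2x^{-2}-8x^{-4}+O(x^{-6})$. Hence $1-x\delta-2\delta^2=-2x^{-4}+O(x^{-6})$, so $\kappa_3=-2x^{-3}(1+O(x^{-2}))$, $\kappa_2^{3/2}=x^{-3}(1+O(x^{-2}))$, and $\sk\to-2$. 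You are right that $r_3,r_4$ alone do not suffice, since they already disagree at order $x^{-5}$ ($4$ versus $10$).

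Your exponential-limit heuristic is in fact the quickest rigorous proof, and it needs no moment generating functions. For $t=-x$, the variable $Y=x(t-Z_t)\geq 0$ has density $\frac{h(x)}{x}e^{-y-y^2/(2x^2)}$ on $[0,\infty)$. Monotone convergence together with $h(x)/x\to 1$ then gives $\E Y^k\to k!$. Since cumulants are polynomials in moments, $\kappa_2(Y)\to 1$ and $\kappa_3(Y)\to 2$. Finally, because $Z_t=t-Y/x$, we have $\sk(t)=-\kappa_3(Y)/\kappa_2(Y)^{3/2}\to -2$.
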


We can even strengthen \Cref{lemma:monotonicity-of-skewness}. In \Cref{APP:computations}, we will show the following two claims that we need for the proofs later. For both of the claims, the idea is that the involved function are point-wise strictly positive. Thus, since the functions are continuous, it is enough to show that the limits as $t \to \pm \infty$ are positive as well to conclude that we can lower bound them by a constant $C > 0$. In \Cref{APP:computations}, we compute these limits and thus prove these claims.
\newcommand{\claimddtskewoverddtkappatwotext}{
    There exists a constant $C > 0$ such that for all $t \in \R$ we have
    \[
        \frac{\ddt\sk(t)}{|\ddt\kappa_2(t)|} \geq C.
    \]
    Moreover, $\Abs{\kappa_2(a)-\kappa_2(b)}\leq \frac{1}{C} \cdot \Abs{\sk(a)-\sk(b)}$ for every $a,b\in\R$.
}
\begin{claim}\label{CLAIM:ddtskewoverddtkappa2}\claimddtskewoverddtkappatwotext
\end{claim}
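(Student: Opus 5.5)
We prove the two assertions of \Cref{CLAIM:ddtskewoverddtkappa2} in turn. The second follows from the first by integration, so the first is the real content.

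\textbf{Step 1: reduce to positivity of a continuous function.} By \Cref{claim:monotonicity-of-kappa2} we have $\ddt\kappa_2=-\kappa_3>0$, so $|\ddt\kappa_2(t)|=-\kappa_3(t)>0$ for every $t$. By the computation in the proof of \Cref{lemma:monotonicity-of-skewness},
\[
\ddt\sk(t)=\frac{2\Paren{\ddt\kappa_3(t)}\kappa_2(t)+3\kappa_3(t)^2}{2\kappa_2(t)^{5/2}} = \frac{\phi(t)}{\Phi(t)}\cdot\frac{g(t)}{2\kappa_2(t)^{5/2}},
\]
where $g(t)>0$ was established there. Hence the ratio
\[
R(t)=\frac{\phi(t)\,g(t)}{2\Phi(t)\kappa_2(t)^{5/2}\,(-\kappa_3(t))}
\]
is continuous and strictly positive on $\R$. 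It therefore suffices to show that $\liminf_{t\to\pm\infty}R(t)>0$. A continuous positive function whose limits at both ends are positive (possibly $+\infty$) is bounded below by some constant $C>0$.

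\textbf{Step 2: asymptotics.} This is the main obstacle, since it requires expanding each factor to leading order.

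For $t\to-\infty$, write $s=-t$ and $x=\phi(t)/\Phi(t)=h(s)$. \Cref{FACT:rational-appx-of-imr} gives $x=s+1/s-2/s^3+O(s^{-5})$. Substituting this into the formulas of \Cref{fact:first-3-cumulants} yields $\kappa_2\sim 1/s^2$ and $\kappa_3\sim -2/s^3$, which is consistent with \Cref{claim:kappa2-lower,claim:kappa3-lower} and with $\sk\to-2$. Note that $Z_\gamma$ is then approximately $\gamma$ minus an exponential with mean $1/s$, whose cumulants are $1/s^2$ and $-2/s^3$. For the exponential, skewness is constant, so the leading term of $\ddt\sk$ cancels. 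We therefore need the next-order term of $g$, or equivalently a higher-order correction. We expect $\ddt\sk$ to decay like $s^{-2}$ or $s^{-3}$ relative to $|\kappa_3|\sim 2/s^3$, giving $R\to c>0$ or $R\to\infty$. Concretely, we will substitute the expansion of $x$ into the explicit polynomial $p_t(x)$ from \eqref{EQ:formulaforg} and take enough terms (or use the rational bounds $r_k$) that the leading cancellation is resolved. This can be checked symbolically, as in \Cref{APP:computations}.

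For $t\to+\infty$, we have $x=\phi(t)/\Phi(t)\approx\phi(t)\to0$ and $\kappa_2\to1$. From the formulas, $\kappa_3\approx\phi(t)(1-t^2)$, so $-\kappa_3\sim t^2\phi(t)$. Also $g(t)\approx 2t^3+t^4\phi(t)+\dots\sim 2t^3$, so $\ddt\sk\sim\phi(t)\cdot t^3$. Therefore $R(t)\sim t^3\phi/(t^2\phi)=t\to\infty$, which is positive.

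\textbf{Step 3: the Lipschitz consequence.} With $C$ from Steps 1 and 2, both $\kappa_2$ and $\sk$ are increasing and $\ddt\kappa_2\le\frac1C\ddt\sk$ pointwise. For $a<b$, integrating over $[a,b]$ gives
\[
|\kappa_2(b)-\kappa_2(a)|=\int_a^b\ddt\kappa_2\le\frac1C\int_a^b\ddt\sk=\frac1C|\sk(b)-\sk(a)|,
\]
which is the second assertion.
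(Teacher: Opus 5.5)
Your proof has the same structure as the paper's: pointwise positivity of $\ddt\sk/\lvert\ddt\kappa_2\rvert$ from \Cref{lemma:monotonicity-of-skewness} and \Cref{claim:monotonicity-of-kappa2}, positive limits at $\pm\infty$ plus continuity, and integration for the Lipschitz statement. The paper obtains both limits only from a Mathematica computation. Your computation at $+\infty$ is correct: $R(t)\sim t\to\infty$. The paper's text reports the value $6$ at $+\infty$ and $\infty$ at $-\infty$, but the true values are the other way round. Its code evaluates $t\to-\infty$ first, so this is presumably a mislabeling, and it is harmless since both limits are positive.

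The gap is at $-\infty$. This is the only nontrivial input beyond the monotonicity lemma, and there you only say ``we expect''. As you note yourself, the leading order cancels, so nothing you have written excludes $\ddt\sk=o(s^{-3})$ and hence $R(t)\to 0$. The expansion $x=s+s^{-1}-2s^{-3}+O(s^{-5})$ that you quote cannot decide this. The cancellation in $p_t(x)$ is severe: at $x=h(s)$, all terms of order $s^{5}$ down to $s^{-7}$ cancel. Since $\partial_x p_t\to 2$ there, you need $x$ to absolute accuracy $o(s^{-9})$. Substitute
\[
h(s)=s+s^{-1}-2s^{-3}+10s^{-5}-74s^{-7}+706s^{-9}+O(s^{-11}),
\]
where a rigorous remainder follows from the sandwich $r_{2m-1}<h<r_{2m}$ of \Cref{FACT:rational-appx-of-imr} with moderate $m$ (e.g.\ $m=4$). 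This gives $p_t(x)=24s^{-9}+O(s^{-11})$. Hence $\ddt\sk=x\,p_t(x)/(2\kappa_2^{5/2})=12s^{-3}(1+o(1))$, consistent with $\sk(t)=-2+6t^{-2}+O(t^{-4})$ as $t\to-\infty$. Meanwhile $\lvert\ddt\kappa_2\rvert=-\kappa_3=2s^{-3}(1+o(1))$, so $R(t)\to 6$. Once you supply this computation (or delegate it to a computer-algebra check, as the paper does), the rest of your argument goes through.
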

\newcommand{\claimddtskewoverddtpsitext}{
    Define $\psi(t) \coloneqq \frac{\kappa_1(t)}{\kappa_2(t)^{1/2}}$. Then, there exists a constant $C > 0$ such that for all $t \in \R$ we have
    \[
        \frac{\ddt\sk(t)}{|\ddt\psi(t)|} \geq C \cdot \kappa_2(t)^{2}.
    \]
    Moreover, $\Abs{\psi(a)-\psi(b)}\leq \frac{1}{C} \cdot \Abs{\sk(a)-\sk(b)} \cdot \max\{\kappa_2(a)^{-2}, \kappa_2(b)^{-2}\}$ for every $a,b\in\R$.
}
\begin{claim}\label{CLAIM:ddtskewoverddtpsi}\claimddtskewoverddtpsitext
\end{claim}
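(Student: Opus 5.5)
The plan is to follow the strategy the paper announces for these claims: show that the ratio $R(t) \coloneqq \ddt\sk(t)/|\ddt\psi(t)|\cdot\kappa_2(t)^{-2}$ is continuous and strictly positive pointwise, and that its limits as $t\to\pm\infty$ are positive (or $+\infty$). Then $R$ is bounded below on $\R$ by some $C>0$. First compute $\ddt\psi$ from $\ddt\kappa_1 = \kappa_2$ and $\ddt\kappa_2=-\kappa_3$. The first identity follows by differentiating $\kappa_1=-\phi/\Phi$ and comparing with \Cref{fact:first-3-cumulants}; the second is \Cref{claim:monotonicity-of-kappa2}. This gives
\[
\ddt\psi(t)=\frac{\kappa_2(t)^{3/2}+\tfrac12\kappa_1(t)\kappa_3(t)\kappa_2(t)^{-1/2}}{\kappa_2(t)}=\frac{2\kappa_2(t)^2+\kappa_1(t)\kappa_3(t)}{2\kappa_2(t)^{3/2}}.
\]
By \Cref{claim:kappa1kappa3bounded} we have $\kappa_1\kappa_3>0$, so $\ddt\psi>0$ and $|\ddt\psi|=\ddt\psi$. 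The proof of \Cref{lemma:monotonicity-of-skewness} gives $\ddt\sk = \frac{\phi}{\Phi}\,g/(2\kappa_2^{5/2})$ with $g>0$. Hence
\[
R(t)=\frac{\frac{\phi(t)}{\Phi(t)}\, g(t)}{\kappa_2(t)^{3}\,\bigl(2\kappa_2(t)^2+\kappa_1(t)\kappa_3(t)\bigr)},
\]
which is continuous and strictly positive on $\R$ by \Cref{claim:bound-on-kappa2} and \Cref{claim:kappa1kappa3bounded}.

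Next I treat the two tails, writing everything as a rational expression in $t$ and $x=\phi(t)/\Phi(t)$ via \Cref{fact:first-3-cumulants}.

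As $t\to+\infty$: $x\sim\phi(t)\to 0$ super-polynomially fast, so $\kappa_2\to1$, $\kappa_1\kappa_3\to 0$, and the denominator tends to $2$. The numerator is $x\,g(t)$ with $g(t)=2t^3-6t+O(t^4 x)$, so $R(t)\sim t^3\phi(t)\to 0$. That is a problem: the limit is $0$, not a positive constant. So at $+\infty$ I must instead compare $\ddt\sk$ and $\ddt\psi$ directly, i.e.\ I need the denominator $|\ddt\psi|$ itself to also be small there. It is not small: $\ddt\psi\to 1$. This suggests the claim as literally stated fails at $+\infty$ unless something else compensates, and this is where I expect the main obstacle.

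I would first double-check whether $\ddt\psi$ really tends to $1$. We have $\psi\approx\kappa_1\approx -\phi(t)\to 0$, so $\ddt\psi\approx t\phi(t)$, not $1$. Redoing the expansion with $\kappa_1=-x$, $\kappa_2\approx 1-tx$ and $\kappa_3\approx x(1-t^2)$ gives
\[
2\kappa_2^2+\kappa_1\kappa_3\approx 2-4tx,
\]
so the plain formula also yields $\approx 1$. The two computations conflict, so the expansions must be reconciled carefully. The formula $\ddt\kappa_1=\kappa_2$ is only valid up to sign and normalization conventions, and I would recheck it: differentiating $-\phi/\Phi$ gives $tx+x^2 = 1-\kappa_2$. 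So the correct identity is $\ddt\kappa_1=1-\kappa_2$, not $\kappa_2$. With this, $\ddt\psi=\bigl(2(1-\kappa_2)\kappa_2+\kappa_1\kappa_3\bigr)/(2\kappa_2^{3/2})$, which behaves like $t\phi(t)$ as $t\to+\infty$. Then $R(t)\to$ a positive multiple of $t^3\phi/(t\phi)\to\infty$, which is fine.

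As $t\to-\infty$: use \Cref{FACT:rational-appx-of-imr} to get $x=|t|+1/|t|-2/|t|^3+\dots$. Then $\kappa_2\sim t^{-2}$, $\kappa_1\sim t$, and $\kappa_3\sim -2|t|^{-3}$, consistent with $\sk\to-2$ from \Cref{claim:skew-atmost2}. I would expand $g$ and the corrected $\ddt\psi$ to enough orders of $1/|t|$ to get their leading terms, since heavy cancellation is expected; this is where I would lean on a Mathematica series computation as in \Cref{APP:computations}. The factor $\kappa_2^{2}\sim t^{-4}$ in the claim is presumably chosen exactly so that the ratio has a positive finite limit.

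The "moreover" part follows from the mean value theorem. Assuming $a<b$, integrate: $|\psi(a)-\psi(b)|\le\int_a^b|\psi'|\le \frac1C\int_a^b \sk'\,\kappa_2^{-2}$. By \Cref{claim:monotonicity-of-kappa2}, $\kappa_2$ is increasing, so $\kappa_2(s)^{-2}\le\kappa_2(a)^{-2}$ on $[a,b]$. Since $\sk$ is increasing by \Cref{lemma:monotonicity-of-skewness}, $\int_a^b\sk'=|\sk(a)-\sk(b)|$, which gives the stated bound.
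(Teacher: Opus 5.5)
Your route is the paper's: show $R$ is continuous and pointwise positive, find its limits at $\pm\infty$, and then integrate $|\ddt\psi|\le \frac1C\,\ddt\sk\,\kappa_2^{-2}$, using that $\kappa_2$ and $\sk$ are increasing, for the second part. That last step matches the paper exactly. Once you correct to $\ddt\kappa_1=1-\kappa_2$, you must also update the displayed formula for $R$ and the positivity argument built on it. The corrected version is
\[
R(t)=\frac{\frac{\phi(t)}{\Phi(t)}\,g(t)}{\kappa_2(t)^{3}\bigl(2(1-\kappa_2(t))\kappa_2(t)+\kappa_1(t)\kappa_3(t)\bigr)},
\]
which is still continuous and positive because $1-\kappa_2>0$ and $\kappa_1\kappa_3>0$. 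This also shows that $\ddt\psi$ never vanishes, a point the paper leaves implicit. With this formula your $t\to+\infty$ analysis ($\ddt\sk\sim t^3\phi$, $\ddt\psi\sim t\phi$, so $R\sim t^2\to\infty$) is correct and agrees with the paper.

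The gap is at $t\to-\infty$, where you only assert a \emph{presumably} positive finite limit. This is the delicate end. Positivity of $\ddt\sk$ from \Cref{lemma:monotonicity-of-skewness} gives no decay rate. The leading orders you list ($\kappa_2\sim t^{-2}$, $\kappa_3\sim-2|t|^{-3}$) only give $\sk\to-2$ and say nothing about $\ddt\sk$. Since $|\ddt\psi|\kappa_2^2\sim 2|t|^{-3}$, the claim would fail if $\ddt\sk$ decayed faster than $|t|^{-3}$. The paper settles this with a Mathematica limit, which equals $6$.

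To do it by hand, set $s=-t$ and use the Mills-ratio expansion one order further than you did: $\phi(t)/\Phi(t)=s+s^{-1}-2s^{-3}+10s^{-5}-74s^{-7}+\dots$. This gives $\kappa_2=s^{-2}-6s^{-4}+O(s^{-6})$ and $\kappa_3=-2s^{-3}+24s^{-5}+O(s^{-7})$, hence
\[
\sk(t)=-2+\frac{6}{t^{2}}+O(t^{-4}),\qquad \psi(t)=-t^{2}-4+O(t^{-2}).
\]
Therefore $\ddt\sk(t)\sim 12|t|^{-3}$ and $|\ddt\psi(t)|\,\kappa_2(t)^2\sim 2|t|\cdot t^{-4}$, so $R(t)\to 6$. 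Differentiating these expansions termwise is legitimate: $\ddt\sk$ and $\ddt\psi$ are explicit algebraic expressions in $t$ and $\phi/\Phi$, so they have asymptotic power expansions of their own, and these must be the termwise derivatives. With this step supplied, your proof is complete and coincides with the paper's.
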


Furthermore, we can also show an analogous statement for the relation between $\kappa_1$ and $\kappa_2$, which we will also prove in \Cref{APP:computations}. The proof idea is the same as for the above two claims (note that $\kappa_2$ is also monotonic by \Cref{claim:monotonicity-of-kappa2}).
\newcommand{\claimddtkappatwooverddtkappaonetext}{
    There exists a constant $C > 0$ such that for all $t \in \R$ we have
    \[
        \frac{\ddt\kappa_2(t)}{|\ddt\kappa_1(t)|} \geq C \cdot \kappa_2(t)^{2}.
    \]
    Moreover, $\Abs{\kappa_1(a)-\kappa_1(b)}\leq \frac{1}{C} \cdot \Abs{\kappa_2(a)-\kappa_2(b)} \cdot \max\{\kappa_2(a)^{-2}, \kappa_2(b)^{-2}\}$ for every $a,b\in\R$.
}
\begin{claim}\label{CLAIM:ddtkappa2overddtkappa1}\claimddtkappatwooverddtkappaonetext
\end{claim}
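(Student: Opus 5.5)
The plan is to express everything in terms of $\lambda(t) \coloneqq \frac{\phi(t)}{\Phi(t)}$, reduce the first inequality to the positivity of a single continuous function together with its behaviour as $t\to\pm\infty$ (the same scheme as for \Cref{CLAIM:ddtskewoverddtkappa2,CLAIM:ddtskewoverddtpsi}), and then integrate to get the second statement.

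\textbf{Step 1: a closed form for $\ddt\kappa_1$.} Since $\phi'(t) = -t\phi(t)$, we have $\lambda'(t) = -\lambda(t)\bigl(t+\lambda(t)\bigr)$. By \Cref{fact:first-3-cumulants}, $\kappa_1 = -\lambda$, so
\[
\ddt\kappa_1(t) = \lambda(t)\bigl(t+\lambda(t)\bigr) = 1-\kappa_2(t).
\]
The last equality is read off from $\kappa_2 = 1 - t\lambda - \lambda^2$.

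This derivative is strictly positive for every $t$. For $t \geq 0$ this is immediate since $\lambda>0$. For $t<0$ we have $\lambda(t) = h(-t) > -t$ by \Cref{FACT:rational-appx-of-imr}. Consequently $\kappa_2(t) < 1$ for all $t$, and $\abs{\ddt\kappa_1} = 1-\kappa_2$. By \Cref{claim:monotonicity-of-kappa2}, $\ddt\kappa_2 = -\kappa_3 > 0$.

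\textbf{Step 2: reduce to a positive continuous function.} The first claim is then equivalent to
\[
F(t) \coloneqq \frac{-\kappa_3(t)}{\bigl(1-\kappa_2(t)\bigr)\,\kappa_2(t)^2} \geq C \quad \text{for all } t \in \R.
\]
$F$ is continuous, and it is strictly positive because $\kappa_3<0$, $0<\kappa_2<1$ (\Cref{claim:bound-on-kappa2} and Step 1). It therefore suffices to show $\liminf_{t\to\pm\infty}F(t)>0$. A continuous positive function whose limits at both ends are positive (or $+\infty$) is bounded below by a positive constant.

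\textbf{Step 3: asymptotics at the two ends.}

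As $t\to+\infty$: $\lambda \approx \phi(t)$ is superexponentially small, so $\kappa_2\to 1$, $1-\kappa_2 = \lambda(t+\lambda) \approx t\lambda$, and $-\kappa_3 \approx \lambda(t^2-1)$. Hence $F(t)\approx t\to\infty$.

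As $t\to-\infty$: write $s=-t$ and use $\lambda = h(s) = s + \frac{1}{s} - \frac{2}{s^3}+O(s^{-5})$, obtained from the continued-fraction bounds $r_k$ in \Cref{FACT:rational-appx-of-imr}. This gives
\[
\kappa_2 = \frac{1}{s^2}+O(s^{-4}), \qquad -\kappa_3 = \frac{2}{s^3}+O(s^{-5}), \qquad 1-\kappa_2\to 1.
\]
Hence $F(t) \approx 2s \to\infty$. These expansions are consistent with \Cref{claim:kappa2-lower,claim:kappa3-lower} and with $\sk\to-2$ in \Cref{claim:skew-atmost2}.

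This expansion, with enough orders of $h$ kept so that the cancellations in $\kappa_2$ and $\kappa_3$ come out correctly, is the only delicate point. If needed, I would verify it with the same symbolic computation used in \Cref{APP:computations}.

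\textbf{Step 4: the Lipschitz-type consequence.} Take $a<b$. Since $\kappa_1$ and $\kappa_2$ are both increasing,
\[
\abs{\kappa_1(b)-\kappa_1(a)} = \int_a^b \ddt\kappa_1 \leq \frac{1}{C}\int_a^b \frac{\ddt\kappa_2(t)}{\kappa_2(t)^2}\,dt \leq \frac{1}{C}\,\kappa_2(a)^{-2}\,\abs{\kappa_2(b)-\kappa_2(a)}.
\]
The last step uses that $\kappa_2$ is increasing, so $\kappa_2(t)\geq\kappa_2(a)$ on $[a,b]$. Finally, $\kappa_2(a)^{-2} \leq \max\{\kappa_2(a)^{-2},\kappa_2(b)^{-2}\}$, which gives the stated bound.
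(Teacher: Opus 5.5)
Your proposal is correct and is essentially the paper's argument: the ratio is continuous and pointwise positive and tends to $+\infty$ at both ends (the paper obtains these limits with Mathematica, you by hand via asymptotics), and the second part follows by integrating $\ddt\kappa_1\le \frac{1}{C}\kappa_2^{-2}\ddt\kappa_2$ and using the monotonicity of $\kappa_2$. One small caveat concerns the end $t\to-\infty$. The expansion $h(s)=s+\frac{1}{s}-\frac{2}{s^3}+O(s^{-5})$ is not accurate enough for $\kappa_3$: the factor $s^2-1-3sh+2h^2$ cancels down to order $s^{-4}$, exactly where that $O(s^{-5})$ error enters, so you need the next term $\frac{10}{s^5}$ (available from $r_5\le h\le r_4$). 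Alternatively, you can skip the expansion entirely: \Cref{claim:kappa3-lower}, the bound $\kappa_2(t)\le 4/t^2$ for $t<-4$ from the proof of \Cref{claim:kappa1kappa2bounded}, and $1-\kappa_2\le 1$ together give $F(t)\ge |t|/800$.
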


Finally, we need tight estimates of $\kappa_2(\truncpara)$ and $\kappa_3(\truncpara)$ for our proofs later.
In order to relate these bounds to the mass $\alpha$ of the truncation set (the halfspace), we use the relation $\alpha = \Phi(\truncpara)$. We defer its proof to \Cref{lem:first3moments}. 

\begin{claim}\label{claim:var-est}
    Let $\truncpara \in \R$.
    For $\alpha=\Phi(\gamma)$, we have
    \[
        \kappa_2(\gamma)\geq \min\Set{\frac{1}{20},\frac{1}{8\log 1/\alpha}}.
    \]
\end{claim}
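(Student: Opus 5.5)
Since $\kappa_2$ is monotonically increasing by \Cref{claim:monotonicity-of-kappa2} and $\alpha=\Phi(\gamma)$ is increasing in $\gamma$, I will split into two cases according to the sign of a threshold on $\gamma$, say $\gamma\ge -3$ and $\gamma\le -3$.

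\textbf{Case $\gamma\ge -3$.} By monotonicity, $\kappa_2(\gamma)\ge\kappa_2(-3)$, so it suffices to show $\kappa_2(-3)\ge 1/20$. I would bound $\kappa_2(-3)=1+3h(3)-h(3)^2$ using \Cref{fact:first-3-cumulants} with $\phi(-3)/\Phi(-3)=h(3)$. Since the expression is increasing in $h$ for $h<3/2$ and decreasing for $h>3/2$, and $h(3)\in(3,3+1/3)$, I need an upper bound on $h(3)$. \Cref{FACT:rational-appx-of-imr} with a few levels, $h(3)<r_{2m}(3)$, should suffice. For instance, $r_4(3)=3+\frac{1}{3+\frac{2}{3+\frac{3}{3}}}=3+\frac{1}{3.5}\approx 3.2857$ gives $\kappa_2(-3)\ge 1+9.857-10.796\approx 0.06>1/20$. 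If $r_4$ were not tight enough, I would take $r_6$; this numerical check is the only delicate point.

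\textbf{Case $\gamma\le -3$.} Here \Cref{claim:kappa2-lower} gives $\kappa_2(\gamma)\ge \frac{1}{4\gamma^2}$. It remains to relate $\gamma^2$ to $\log(1/\alpha)$ with $\alpha=\Phi(\gamma)=1-\Phi(|\gamma|)$. By \Cref{claim:log-alpha-and-gamma} applied to $t=|\gamma|\ge 2$, we have $\log(1/\alpha)\ge \gamma^2/2$, i.e.\ $\gamma^2\le 2\log(1/\alpha)$. Hence $\kappa_2(\gamma)\ge \frac{1}{8\log(1/\alpha)}$.

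Combining the two cases, $\kappa_2(\gamma)\ge\min\{1/20,\,1/(8\log(1/\alpha))\}$ for all $\gamma$. The only step that requires care is the numerical evaluation at $\gamma=-3$ in the first case; everything else is a direct invocation of the earlier claims.
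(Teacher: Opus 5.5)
Your proof is correct and follows the paper's argument: monotonicity of $\kappa_2$ handles $\gamma > -3$, and for $\gamma \le -3$ you combine \Cref{claim:kappa2-lower} with $\log(1/\alpha)\ge\gamma^2/2$. The paper re-derives that last bound directly from $\Phi(\gamma)\le\phi(\gamma)/(-\gamma)$ instead of citing \Cref{claim:log-alpha-and-gamma}, but it is the same inequality. Your explicit check $h(3)<r_4(3)=23/7$, which gives $\kappa_2(-3)>3/49>1/20$, also supplies the numerical step that the paper only asserts.
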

\begin{proof}
    From \Cref{claim:monotonicity-of-kappa2} we know that $\kappa_2(\gamma)$ monotonically increases on $\Rsymb$.
    For $\gamma\rightarrow\infty$ the distribution converges to standard Gaussian, and the variance goes to $1$.
    For the case $\gamma\rightarrow-\infty$, we can use the rational approximation of the Mills ratio from \Cref{FACT:rational-appx-of-imr}.
    Concretely, we distinguish the two cases $\gamma > -3$ and $\gamma \leq -3$:

    \textbf{Case (i): $\gamma > -3$}.
    For $\gamma >-3$ we have $\kappa_2(\gamma)\geq\kappa_2(-3)\geq 1/20$ by monotonocity of $\kappa_2$.

    \textbf{Case (ii): $\gamma \leq -3$}.
    For $\gamma\leq -3$, by \Cref{claim:kappa2-lower} we have that $\kappa_2(\gamma)\geq\frac{1}{4\gamma^2}$.
    It remains to get a bound on $\gamma$ in terms of $\alpha$.
    By \Cref{FACT:rational-appx-of-imr}, we know that $\alpha = \Phi(\gamma) \leq \frac{\phi(\gamma)}{-\gamma}$.
    This implies that
    \[
        \frac{1}{\alpha} \geq -\sqrt{2\pi} \gamma \cdot e^{\gamma^2/2} \geq 3\sqrt{2\pi}\cdot e^{\gamma^2/2} \geq e^{\gamma^2/2}
    \]
    and thus we get that $\log 1/\alpha \geq \gamma^2/2$. Combining this with bound on $\kappa_2(\gamma)$ from above, we get that, for $\gamma \leq -3$,
    \[
        \kappa_2(\gamma)\geq\frac{1}{8}\cdot\frac{1}{\log 1/\alpha}.
    \]
    Combining the two cases, we get the lower bound as claimed.
\end{proof}

\begin{claim}\label{claim:skew-est}
    Let $\truncpara \in \R$.
    For $\alpha=\Phi(\gamma)$, we have
    \[
        |\kappa_3(\gamma)|\geq\min\left\{\frac{1}{50\log^{3/2}1/\alpha},\frac{1}{20},\frac{1}{4\sqrt{2}}(1-\alpha)\log^{3/2}\frac{1}{1-\alpha}\right\}.
    \]
\end{claim}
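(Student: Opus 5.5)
We prove the bound by splitting into three regimes of $\gamma$, mirroring the proof of \Cref{claim:var-est}: (i) $\gamma\le -2$, (ii) $-2<\gamma<2$, and (iii) $\gamma\ge 2$. In each regime one of the three terms in the minimum will serve as the lower bound. Throughout we use $\alpha=\Phi(\gamma)$.

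\textbf{Case (i): $\gamma\le -2$.} By \Cref{claim:kappa3-lower} we have $|\kappa_3(\gamma)|\ge \frac{1}{50|\gamma|^3}$. It remains to bound $|\gamma|$ in terms of $\alpha$. We repeat the argument from \Cref{claim:var-est}. \Cref{FACT:rational-appx-of-imr} gives $\alpha=\Phi(\gamma)\le \phi(\gamma)/(-\gamma)$, hence
\[
\frac{1}{\alpha}\ge \sqrt{2\pi}\,|\gamma|\,e^{\gamma^2/2}\ge e^{\gamma^2/2}.
\]
Therefore $|\gamma|\le \sqrt{2\log(1/\alpha)}$ and $|\gamma|^3\le 2^{3/2}\log^{3/2}(1/\alpha)$. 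This yields $|\kappa_3(\gamma)|\ge \frac{1}{50\cdot 2^{3/2}\log^{3/2}(1/\alpha)}$, which is off by the factor $2^{3/2}$. To remove it, we redo the estimate more carefully. For $|\gamma|\ge 2$ we have $\sqrt{2\pi}|\gamma|\ge 1$, and the leading term of $\log(1/\alpha)$ is $\gamma^2/2+\log(\sqrt{2\pi}|\gamma|)$. So we either absorb the constant using the slack in \Cref{claim:kappa3-lower}, whose constant $1/50$ is presumably loose, or use a sharper version of that claim from its appendix proof. If neither works, the stated constant may need to be read up to this absolute factor.

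\textbf{Case (ii): $-2<\gamma<2$.} Here $|\kappa_3|=-\kappa_3$ is continuous and strictly positive by \Cref{claim:monotonicity-of-kappa2}, so its minimum on the compact interval $[-2,2]$ is positive. We aim to show this minimum is at least $\frac{1}{20}$. One route is to evaluate the closed form from \Cref{fact:first-3-cumulants} numerically at the endpoints and show there is no interior dip, as in the appendix Mathematica checks. An alternative is to use skewness: $|\kappa_3|=|\sk|\,\kappa_2^{3/2}$. Since $\sk$ is increasing and $\kappa_2$ is increasing, $|\kappa_3|\ge |\sk(2)|\,\kappa_2(-2)^{3/2}$. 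Both factors are explicit constants, and we would verify that their product is at least $1/20$.

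\textbf{Case (iii): $\gamma\ge 2$.} Write $|\kappa_3|=|\sk|\,\kappa_2^{3/2}$. By monotonicity of $\kappa_2$, $\kappa_2(\gamma)\ge\kappa_2(2)$, which is a constant close to $1$. \Cref{claim:skew-asymptotic} gives $|\sk(\gamma)|\ge m\,(1-\alpha)\log^{3/2}\frac{1}{1-\alpha}$. To obtain the explicit constant $\frac{1}{4\sqrt2}$, we instead compute directly. With $x=\phi(\gamma)/\Phi(\gamma)\le \phi(\gamma)/(1-\Phi(2))$, which is tiny, we have $|\kappa_3|=x(\gamma^2-1+3\gamma x+2x^2)\ge x(\gamma^2-1)\ge \tfrac34 x\gamma^2$. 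Next, $x\ge\phi(\gamma)\ge \gamma(1-\Phi(\gamma))$ by \Cref{FACT:rational-appx-of-imr}. Finally, \Cref{claim:log-alpha-and-gamma} gives $\gamma\ge \sqrt{\tfrac12\log\frac{1}{1-\alpha}}$, so $\gamma^3\ge 2^{-3/2}\log^{3/2}\frac{1}{1-\alpha}$. Combining these,
\[
|\kappa_3(\gamma)|\ge \tfrac34\cdot 2^{-3/2}(1-\alpha)\log^{3/2}\tfrac{1}{1-\alpha}\ge \tfrac{1}{4\sqrt2}(1-\alpha)\log^{3/2}\tfrac{1}{1-\alpha}.
\]

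\textbf{Main obstacle.} The delicate step is matching the explicit constants, especially in Case (i) and Case (ii). Cases (i) and (iii) mostly reduce to earlier claims, whereas Case (ii) will likely require a numerical verification.
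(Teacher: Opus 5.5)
Your case split and the tools you use in each case match the paper's proof. Case (iii) is fine: you even get $\tfrac{3}{8\sqrt2}$ instead of $\tfrac{1}{4\sqrt2}$, and you are right not to use \Cref{claim:skew-asymptotic}, since its proof relies on the present claim. The gap is in Case (i). You correctly observe that \Cref{claim:kappa3-lower} plus $\log(1/\alpha)\ge\gamma^2/2$ only gives $|\kappa_3(\gamma)|\ge\frac{1}{50\cdot 2^{3/2}\log^{3/2}(1/\alpha)}$. The paper's proof takes exactly this step and asserts the constant $\frac{1}{50}$, so the factor $2\sqrt2$ is silently lost there too. But you leave it unresolved, and reading the constant ``up to an absolute factor'' does not prove the stated inequality.

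Sharpening the relation between $\gamma$ and $\alpha$ alone cannot help, since $\log(1/\alpha)=\gamma^2/2+O(\log|\gamma|)$. The slack must come from the $\kappa_3$ side, which is loose by a factor of about $100$: as $\gamma\to-\infty$, $|\kappa_3(\gamma)|\sim 2/|\gamma|^3$. Concretely, the appendix proof of \Cref{claim:kappa3-lower} actually shows $|\kappa_3(\gamma)|\ge f(\gamma)/|\gamma|^3$ with $f(\gamma)=\frac{\gamma^4(2\gamma^4+10\gamma^2-64)}{(\gamma^4+9\gamma^2+8)^2}$. This $f$ is increasing in $|\gamma|$ for $|\gamma|\ge 2$, with $f(-2)=\frac{8}{225}$ and $f(-\sqrt6)>\frac14$.
\begin{itemize}
\item For $\gamma\le-\sqrt6$, the loss of $2^{3/2}$ is harmless, since $\frac{1}{4\cdot 2^{3/2}}>\frac{1}{50}$.
\item For $-\sqrt6<\gamma\le-2$, combine $f\ge\frac{8}{225}$ with the sharper bound $\log(1/\alpha)\ge\frac{\gamma^2}{2}+\log(\sqrt{2\pi}|\gamma|)\ge\frac{\gamma^2}{2}+\log(2\sqrt{2\pi})$. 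This follows from the same inequality $\alpha\le\phi(\gamma)/|\gamma|$ you already used. It suffices because $\frac{16}{9}\bigl(\frac{\gamma^2}{2}+\log(2\sqrt{2\pi})\bigr)^{3/2}\ge|\gamma|^3$ whenever $\gamma^2\le 6$. Alternatively, verify this compact piece numerically.
\end{itemize}

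In Case (ii), your skewness alternative does not work. We have $\kappa_2(-2)\approx 0.114$ and $|\sk(2)|\approx 0.221$, so $|\sk(2)|\,\kappa_2(-2)^{3/2}\approx 0.0085<\frac{1}{20}$. The two monotone bounds are attained at opposite endpoints and together lose a factor of about $6$. Only the direct numerical minimization goes through, which is what the paper does (\Cref{FACT:computationmillsratio4}). The minimum of $|\kappa_3|$ on $[-2,2]$ is at $\gamma=-2$, with $|\kappa_3(-2)|\approx 0.059$, so the margin over $\frac{1}{20}$ is small. Your ``endpoints plus no interior dip'' therefore has to be an actual verified computation, or an argument that $|\kappa_3|$ is unimodal on this interval, not a heuristic.
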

\begin{proof}
    Unlike $\kappa_2(\gamma)$, the function $\kappa_3(\gamma)$ is not monotonic in $\gamma$.
    Moreover, the function approaches~$0$ both for $\gamma \to + \infty$ and $\gamma \to - \infty$.
    Thus, we split our analysis in three parts:
    First, we consider the two cases $\gamma \to \pm \infty$ where we use the rational approximations of the Mills ratio from \Cref{FACT:rational-appx-of-imr}.
    Finally, we bound $\kappa_2(\gamma)$ on a small interval around $0$ by an explicit computation.

    \textbf{Case (i): $\gamma \leq -2$}.
    By \Cref{claim:kappa3-lower}, for $\gamma\leq -2$, we have the lower bound $|\kappa_3(\gamma)|\geq\frac{1}{50\cdot |\gamma|^3}$.
    From the proof of \Cref{claim:var-est} we already know that $\log 1/\alpha\geq\gamma^2/2$ for $\gamma \leq -3 \leq -2$.
    So it holds that $\kappa_3(\gamma)\geq\frac{1}{50\log^{3/2}1/\alpha}$ for $\gamma\leq -2$.

    \textbf{Case (ii): $\gamma \geq 2$}.
    From \Cref{claim:monotonicity-of-kappa2} we know $\kappa_3(\gamma)<0$ and thus we have that
    \[
        |\kappa_3(\gamma)| = h(\gamma)\Paren{\gamma^2-1+3\gamma h(\gamma)+2 h(\gamma)^2}.
    \]
    Noting that $h(\gamma) > 0$ and $\Phi(\gamma) \leq 1$ for all $\gamma$, we have for $\gamma\geq 2>\sqrt{2}$ that
    \[
        |\kappa_3(\gamma)| \geq \phi(\gamma)(\gamma^2-1)\geq\frac{1}{2}\phi(\gamma)\gamma^2.
    \]
    By \Cref{FACT:rational-appx-of-imr}, we have for $\gamma \geq 2 > 0$ that $\gamma \leq \frac{\phi(\gamma)}{1-\Phi(\gamma)}$.
    Combining this with the above bound on $|\kappa_3(\gamma)|$ and using $\alpha = \Phi(\gamma)$ we get
    \[
        |\kappa_3(\gamma)|\geq\frac{1}{2}\gamma^3 (1-\alpha).
    \]
    It again remains to give a bound on $\gamma$ in terms of $\alpha$.
    Note that, for $\gamma \geq 2$, $\alpha=\Phi(\gamma)$, by \Cref{claim:log-alpha-and-gamma} we have
    \[
    \log 1/(1-\alpha)\leq 2\gamma^2
    \]
    Combining this with the inequality above, we get that
    \[
        |\kappa_3(\gamma)| \geq \frac{1}{2} \Paren{\frac{1}{2} \log\Paren{\frac{1}{1-\alpha}}}^{3/2} (1-\alpha) = \frac{1}{4\sqrt{2}}(1-\alpha)\log^{3/2}\frac{1}{1-\alpha}
    \]

    \textbf{Case (iii): $\gamma \in (-2,2)$}.
    For the last case, we verify in \Cref{FACT:computationmillsratio4} in \Cref{APP:computations} that for $\gamma\in (-2,2)$, we have $|\kappa_3(\gamma)| \geq \frac{1}{20} $.
    
    Putting all the three cases together gives the result.
\end{proof}

\subsection{Gaussian concentration and anti-concentration}
We will need the following two results on the concentration and anti-concentration of a Gaussian random variable.
We note that the first fact follows immediately from the boundedness of the Gaussian density.

\begin{fact}[Gaussian anti-concentration]\label{fact:gaussianspread}
For every $\delta > 0$, we have that
    \begin{align*}
        \Pr_{X \sim N(0, \sigma^2)} \left[-\delta \leqslant X \leqslant \delta \right] \leqslant \sqrt{\frac{2}{\pi}}\cdot\frac{\delta}{\sigma}
    \end{align*}
\end{fact}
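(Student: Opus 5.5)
The plan is to bound the probability by integrating the density of $X \sim \normal{0,\sigma^2}$ over $[-\delta,\delta]$ and replacing the density by its maximum. For $\sigma>0$ the density of $X$ is
\[
    f(x) = \frac{1}{\sqrt{2\pi}\,\sigma} e^{-x^2/(2\sigma^2)}.
\]
This function is maximized at $x=0$, so $f(x) \leq \frac{1}{\sqrt{2\pi}\,\sigma}$ for every $x \in \R$.

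We then get
\[
    \Pr\left[-\delta \leq X \leq \delta\right] = \int_{-\delta}^{\delta} f(x)\,dx \leq 2\delta \cdot \frac{1}{\sqrt{2\pi}\,\sigma} = \sqrt{\frac{2}{\pi}}\cdot\frac{\delta}{\sigma},
\]
since $2/\sqrt{2\pi} = \sqrt{2/\pi}$. This is exactly the claimed bound.

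There is no real obstacle. The only point worth noting is the implicit assumption $\sigma > 0$, which the statement needs for the right-hand side to be defined. Equivalently, one can substitute $X = \sigma Z$ with $Z\sim\normal{0,1}$ and use $\phi \leq \phi(0) = 1/\sqrt{2\pi}$ on the interval $[-\delta/\sigma, \delta/\sigma]$.
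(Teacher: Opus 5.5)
Your proof is correct and is exactly the argument the paper has in mind. The paper only remarks that the fact follows immediately from the boundedness of the Gaussian density, and you carry this out by bounding the density by its maximum $1/(\sqrt{2\pi}\,\sigma)$ and multiplying by the interval length $2\delta$.
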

\begin{fact}[{Gaussian concentration, see e.g. \cite[Theorem 3.25]{vanHandel2016}}]\label{fact:gaussian-concentration}
    Let $\psi:\R^d\rightarrow\R$ be a $\lambda$-Lipschitz function.
    Let $g \sim N(0, I_d)$.
    Then for all $t>0$,
    \[
    \Pr\Brac{\Abs{\psi(g)-\E\psi(g)}\geq t}\leq 2\exp\Paren{-\frac{t^2}{2\lambda^2}}
    \]
\end{fact}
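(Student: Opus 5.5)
The statement is the classical Gaussian concentration inequality for Lipschitz functions, cited from \cite[Theorem 3.25]{vanHandel2016}. I would prove it by the entropy method: the Gaussian logarithmic Sobolev inequality combined with Herbst's argument, which gives exactly the constant $1/(2\lambda^2)$ in the exponent.

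\emph{Step 1: reduction to smooth $\psi$.} Replace $\psi$ by $\psi_\delta = \psi * \rho_\delta$, where $\rho_\delta$ is a smooth mollifier. Then $\psi_\delta$ is smooth, still $\lambda$-Lipschitz (so $\norm{\nabla\psi_\delta}\leq\lambda$ everywhere), and $\psi_\delta\to\psi$ uniformly. Hence it suffices to prove the bound for smooth $\psi$ with $\norm{\nabla\psi}\leq\lambda$ and pass to the limit $\delta\to0$. By replacing $\psi$ with $-\psi$ and taking a union bound over the two tails, it suffices to show
\[
\Pr\Brac{\psi(g)-\E\psi(g)\geq t}\leq \exp\Paren{-\frac{t^2}{2\lambda^2}}.
\]

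\emph{Step 2: the log-Sobolev inequality.} I will use Gross's inequality
\[
\mathrm{Ent}(f^2)\leq 2\,\E\norm{\nabla f(g)}^2
\]
for smooth $f$, where $\mathrm{Ent}(F)=\E[F\log F]-\E F\log\E F$. If needed, it can be proved by first establishing the one-dimensional two-point inequality on the hypercube, tensorizing, and passing to the Gaussian through the central limit theorem. Alternatively, one can use the Ornstein--Uhlenbeck semigroup with $\Gamma_2$-calculus, i.e.\ the Bakry--\'Emery criterion with curvature $1$. This is the main obstacle if it cannot be assumed as known; I would take it as the standard input and apply the tensorization route.

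\emph{Step 3: Herbst's argument.} Set $f=e^{s\psi/2}$ for $s>0$ and $H(s)=\E e^{s\psi(g)}$. Log-Sobolev then gives
\[
sH'(s)-H(s)\log H(s)\leq \frac{s^2\lambda^2}{2}H(s).
\]
Dividing by $s^2H(s)$ yields
\[
\frac{d}{ds}\Paren{\frac{\log H(s)}{s}}\leq \frac{\lambda^2}{2}.
\]
Since $\log H(s)/s\to\E\psi(g)$ as $s\to0$, integrating gives
\[
\E e^{s(\psi(g)-\E\psi(g))}\leq e^{s^2\lambda^2/2}.
\]
Markov's inequality applied with the choice $s=t/\lambda^2$ then yields the one-sided tail bound from Step 1, and the two-sided statement follows by the union bound. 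Apart from Step 2, every step is routine calculus.
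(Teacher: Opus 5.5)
Your proof is correct and follows the same route as the cited result. The paper does not prove this fact itself but quotes \cite[Theorem 3.25]{vanHandel2016}, which is obtained by combining the Gaussian log-Sobolev inequality $\mathrm{Ent}(f^2)\leq 2\,\E\norm{\nabla f(g)}^2$ with Herbst's argument to get $\E e^{s(\psi(g)-\E\psi(g))}\leq e^{s^2\lambda^2/2}$, and then the two-sided tail via Chernoff and a union bound, exactly as you propose. The details you leave implicit (finiteness and differentiability of $H(s)$, and applying the log-Sobolev inequality to the unbounded function $e^{s\psi/2}$) follow from the growth bound $\Abs{\psi(x)}\leq\Abs{\psi(0)}+\lambda\norm{x}$ together with a standard truncation/approximation argument.
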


\subsection{Concentration bounds for sub-Gaussian random variables}
In this section, we state two concentration results for sub-Gaussian random variables.
To do so, we define the following (quasi-)norm on a random variable $X$. Let $\psi:\Rsymb^{\geq 0}\rightarrow\Rsymb^{\geq 0}$ be a non-decreasing function that satisfies $\psi(0)=0$ and $\psi(x)\rightarrow\infty$ as $x\rightarrow\infty$.
Then we define the following (quasi)-norm of $X$ as
\[
    \norm{X}_{\psi}=\inf\left\{t>0 \mid \Esymb \psi\left(\frac{|X|}{t}\right)\leq 1\right\}
\]
For a constant $\theta\in (0,1)\cup\{2\}$, let $\psi_{\theta}(x)=\exp(x^{\theta})-1$.
Setting $\psi=\psi_2$ in the above results in the sub-Gaussian norm. 
Recall that a random vector $\x$ is $r$-sub-Gaussian if $\iprod{v,\x}$ is $r$-sub-Gaussian for every unit vector $v$ (meaning that $\|\iprod{v,\x} - \E[\iprod{v,\x}]\|_{\psi_2} \leq r$).\footnote{Often, such a random variable is also called $\sigma^2$-sub-Gaussian, where $\sigma = r$ is an upper bound on the $\psi_2$-norm of the one-dimensional projections.}
The following fact provides a standard concentration bound for the covariance estimation of sub-Gaussian distributions.
Note that the second part follows directly from the first one by bounding $\norm{A}_F \leq \sqrt{d} \cdot \norm{A}$.
\begin{fact}[{\cite[Theorem 4.6.1]{Vershynin_2018}}]\label{fact:random-matrix-concentration}
Let $y_1,\ldots,y_n\in\Rsymb^d$ be i.i.d. samples with covariance $\Sigma\succ 0$ such that $\Sigma^{-1/2}y_i$ is $r$-sub-Gaussian. Let the sample mean be $\bar{y}=\frac{1}{n}\sum_{i=1}^{n}y_i$ and let the sample covariance be ${\widehat{\Sigma}=\frac{1}{n}\sum_{i=1}^{n}(y_i-\bar{y})^{\otimes 2}}$. Then, with probability at least $1-\delta$, we have that
\[
    \norm{\Sigma^{-1/2}\widehat{\Sigma}\Sigma^{-1/2}-I_d}\lesssim r^2 \cdot\Paren{\sqrt{\frac{d+\log 1/\delta}{n}}+\frac{d+\log 1/\delta}{n}}.
\]
Furthermore, there exists some absolute constant $C>0$ such that for $n\geq  C(d+\log 1/\delta)$ we have with probability at least $1-\delta$
\[
    \norm{\Sigma^{-1/2}\widehat{\Sigma}\Sigma^{-1/2}-I_d}_F\lesssim r^2 \cdot\frac{d+\sqrt{d\log 1/\delta}}{\sqrt{n}}.
\]
\end{fact}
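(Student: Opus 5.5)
We will reduce to the isotropic case and run the standard $\varepsilon$-net plus Bernstein argument; the Frobenius bound is then a corollary of the spectral one.

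\textbf{Reduction and decomposition.} Set $z_i = \Sigma^{-1/2} y_i$, $m = \E z_i$ and $\bar z = \frac1n\sum_i z_i$. The $z_i$ are i.i.d., have covariance $I_d$, and are $r$-sub-Gaussian. Moreover $\Sigma^{-1/2}\widehat{\Sigma}\Sigma^{-1/2} = \frac1n\sum_i (z_i-\bar z)^{\otimes 2}$. Expanding around the true mean gives
\[
\frac1n\sum_{i=1}^n (z_i-\bar z)^{\otimes 2} - I_d = \Paren{\frac1n\sum_{i=1}^n (z_i-m)^{\otimes 2} - I_d} - (\bar z - m)^{\otimes 2}.
\]
Since $\E\iprod{v,z_i-m}^2=1$ for unit $v$, we have $r \gtrsim 1$. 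This will let us absorb lower-order terms into $r^2(\cdot)$.

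\textbf{Main term.} Let $A = \frac1n\sum_i (z_i-m)^{\otimes 2} - I_d$, which is symmetric. Fix a $1/4$-net $\mathcal{M}$ of $\cS^{d-1}$ with $|\mathcal{M}|\le 9^d$. Then $\norm{A} \le 2\max_{v\in\mathcal{M}} |v^\top A v|$. For a fixed $v$, the variables $\iprod{v,z_i-m}^2 - 1$ are centered and sub-exponential with $\psi_1$-norm $\lesssim r^2$, because the square of a sub-Gaussian variable is sub-exponential. Bernstein's inequality gives, with probability at least $1-2e^{-u}$,
\[
|v^\top A v| \lesssim r^2\Paren{\sqrt{u/n} + u/n}.
\]
Take $u = d\log 9 + \log(4/\delta)$ and union bound over $\mathcal{M}$. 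With probability at least $1-\delta/2$ we get $\norm{A} \lesssim r^2\Paren{\sqrt{(d+\log 1/\delta)/n} + (d+\log 1/\delta)/n}$.

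\textbf{Mean-correction term.} The vector $\sqrt n(\bar z - m)$ is $O(r)$-sub-Gaussian. A second net argument therefore gives $\norm{\bar z-m}^2 \lesssim r^2 (d+\log 1/\delta)/n$ with probability at least $1-\delta/2$. This is dominated by the bound on $A$, which proves the first inequality.

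\textbf{Frobenius bound.} Use $\norm{B}_F\le\sqrt d\,\norm{B}$. If $n \ge C(d+\log 1/\delta)$, the linear term is at most the square-root term. The bound becomes
\[
r^2\sqrt d\cdot\sqrt{(d+\log 1/\delta)/n} = r^2\sqrt{d^2+d\log 1/\delta}/\sqrt n \le r^2\,\frac{d+\sqrt{d\log1/\delta}}{\sqrt n}.
\]

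The only step needing care is the bookkeeping for the empirical-mean correction, which must not spoil the rate. The sub-exponential Bernstein step itself is standard, exactly as in \cite[Theorem 4.6.1]{Vershynin_2018}.
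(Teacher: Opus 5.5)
Your proof is correct and follows the same route as the paper, which invokes \cite[Theorem 4.6.1]{Vershynin_2018} (proved by exactly your net-plus-Bernstein argument on the whitened samples) and then gets the Frobenius bound from $\norm{B}_F \le \sqrt{d}\,\norm{B}$. Your explicit treatment of the empirical-mean term $(\bar z - m)^{\otimes 2}$ fills a step the citation leaves implicit, since that theorem is stated for mean-zero rows and the uncentered Gram matrix, not for the sample covariance with $\bar y$ subtracted.
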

In the remainder of this section, we show a concentration result for cubes of sub-Gaussian random variables.
As in \cite{kuchibhotla2022}, we define the function ${\psi^{-1}_{\theta,L}(y)=\sqrt{\log (1+y)}+L\log (1+y)^{1/\theta}}$ for all $y\geq 0$, and let $\psi_{\theta, L}$ be the inverse function of $\psi^{-1}_{\theta,L}$. 
Note that when $\theta\in (0,1)$, $\norm{\cdot}_{\psi_{\theta,L}}$ is not a norm since it only satisfies sub-additivity up to a multiplicative constant depending on $\theta$.
The norm $\norm{\cdot}_{\psi_{\theta, L}}$ can be related to the norm $\norm{\cdot}_{\psi_{\theta}}$ via the following theorem.
\begin{theorem}[Theorem 3.1 in \cite{kuchibhotla2022}]\label{thm:sum-of-rv-of-finite-gbo-norm}
    Fix $\theta\in (0,1)$. Let $k \geq 1$ and consider a sequence of independent zero-mean random variables $\{X_i\}_{i=1}^{k}$ such that ${\Norm{X_i}_{\psi_{\theta}}<\infty}$ for all $i\in [k]$. Then there exist constants $Q_\theta, Q^{\prime}_\theta>0$ that only depend on $\theta$ such that
    \[
    \Norm{\sum_{i=1}^{k}X_i}_{\psi_{\theta, L}}\leq Q_\theta\cdot \sqrt{\sum_{i=1}^{k}\Norm{X_i}_{\psi_{\theta}}^2}
    \]
     where $L=\frac{Q^{\prime}_{\theta}}{\sqrt{k}}$.
\end{theorem}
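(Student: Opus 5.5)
The plan is to go through moments. For sub-Weibull variables the $\psi_{\theta,L}$-quasinorm is equivalent, up to constants depending only on $\theta$, to a weighted moment condition. Concretely, I first want to establish
\[
\Norm{Y}_{\psi_{\theta,L}}\lesssim_\theta \sup_{p\geq 1}\frac{\norm{Y}_p}{\sqrt{p}+L\,p^{1/\theta}}.
\]
This follows by integrating the tail bound $\Pr[|Y|\geq C t(\sqrt{s}+Ls^{1/\theta})]\leq e^{-s}$, since $\psi^{-1}_{\theta,L}(e^s-1)=\sqrt{s}+Ls^{1/\theta}$. Throughout, write $\sigma_i=\Norm{X_i}_{\psi_\theta}$.

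\textbf{Step 1: moment bound for the sum.} Symmetrize and apply the moment inequality for sums of independent symmetric sub-Weibull variables (Latała's inequality, in the form used by Kuchibhotla). For $\theta\in(0,1)$ this gives
\[
\Norm{\textstyle\sum_i X_i}_p\lesssim_\theta \sqrt{p}\,\Paren{\textstyle\sum_i\sigma_i^2}^{1/2}+p^{1/\theta}\max_i\sigma_i .
\]

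\textbf{Step 2: match the form of the statement.} Set $\Sigma_2=\Paren{\sum_i\sigma_i^2}^{1/2}$. Combining Steps 1 and 2 gives the bound with $Q_\theta\Sigma_2$ on the right and
\[
L=\frac{\max_i\sigma_i}{\Sigma_2}.
\]
To replace this by $L=Q'_\theta/\sqrt{k}$, I use that $\psi^{-1}_{\theta,L}$ is increasing in $L$. Hence the quasinorm is non-increasing in $L$, and I need
\[
\max_i\sigma_i\leq Q'_\theta\,\Sigma_2/\sqrt{k}.
\]
This holds whenever the $\sigma_i$ are comparable up to a constant, for instance in the i.i.d.\ setting where the theorem is applied later (cubes of projections of i.i.d.\ truncated-Gaussian samples).

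\textbf{Main obstacle.} The main obstacle is exactly Step 2 for \emph{arbitrary} independent $X_i$, and I expect it cannot be overcome. Take $X_1$ with $\sigma_1=1$ and $X_2=\dots=X_k=0$. Then $\sum_i X_i=X_1$. For large $x$, $\psi_{\theta,L}(x)\approx\exp((x/L)^\theta)$, so $\Norm{X_1}_{\psi_{\theta,L}}\gtrsim 1/L=\sqrt{k}/Q'_\theta$. Meanwhile the right-hand side is $Q_\theta$, so the inequality fails as $k\to\infty$.

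The plan is therefore as follows. Prove the statement with Kuchibhotla's $L=\max_i\sigma_i/\Sigma_2$, which is always at least $1/\sqrt{k}$. Then read the choice $L=Q'_\theta/\sqrt{k}$ as valid under the comparable-norm condition $\max_i\sigma_i\lesssim\Sigma_2/\sqrt{k}$, which is the only case the paper needs. I would flag this dependence explicitly rather than claim the bound unconditionally.
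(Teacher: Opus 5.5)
Your diagnosis is correct, and it exposes a transcription error in the paper rather than a gap in your argument. The paper gives no proof of this statement; it only cites \cite{kuchibhotla2022}. In their Theorem~3.1 (for $\theta<1$ and unit weights), $L$ is a $\theta$-dependent multiple of $\max_i\sigma_i/(\sum_i\sigma_i^2)^{1/2}$ with $\sigma_i=\Norm{X_i}_{\psi_\theta}$. This is of order $1/\sqrt{k}$ only when the $\sigma_i$ are comparable, so the paper's version should either assume identically distributed (or comparably scaled) $X_i$ or use that $L$. Your counterexample is valid and easy to make rigorous. Since $\psi^{-1}_{\theta,L}(e^s-1)=\sqrt{s}+Ls^{1/\theta}\le 2\max\{\sqrt{s},Ls^{1/\theta}\}$, one has $\psi_{\theta,L}(x)\ge\exp(\min\{x^2/4,(x/(2L))^\theta\})-1$. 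Take $X_1$ symmetric with $\Pr[|X_1|>x]=e^{-x^\theta}$, so that $\Norm{X_1}_{\psi_\theta}=2^{1/\theta}$. Then $\E\,\psi_{\theta,L}(|X_1|/t)=\infty$ for every $t\le 1/(2L)$. With $X_2=\dots=X_k=0$, the left-hand side is therefore at least $\sqrt{k}/(2Q'_\theta)$, while the right-hand side equals $2^{1/\theta}Q_\theta$.

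Your proof of the corrected statement is essentially the route of the cited source. You symmetrize and contract to Weibull-type variables, apply the Lata\l{}a / Hitczenko--Montgomery-Smith--Oleszkiewicz moment bound $\sqrt{p}\,\Sigma_2+p^{1/\theta}\max_i\sigma_i$, and then pass to the quasinorm via Markov's inequality and integration of tails. One point should be made explicit: the moments-to-quasinorm constant must not depend on $L$, since your $L$ depends on the $\sigma_i$ and on $k$. It does not. From $\sqrt{4r}+L(4r)^{1/\theta}\le 4^{1/\theta}(\sqrt{r}+Lr^{1/\theta})$ and Markov's inequality, if $\Norm{Y}_p\le K(\sqrt{p}+Lp^{1/\theta})$ for all $p\ge1$, then $\Norm{Y}_{\psi_{\theta,L}}\le e\,4^{1/\theta}K$ for every $L\ge 0$. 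Your monotonicity-in-$L$ reduction is also correct. The paper's only use of the theorem, in \Cref{corollary:subgaussiancubeconcentration}, concerns i.i.d.\ $Y_i=Z_i^3-\E Z^3$, whose $\psi_{2/3}$-norms all coincide, so nothing downstream is affected.
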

Finally, we have the following tail bound for a random variable with finite $\Norm{\cdot}_{\psi_{\theta,L}}$.
\begin{fact}[Proposition A.3 in \cite{kuchibhotla2022}]\label{fact:concentration-of-gbo-norm}
    Fix $\theta\in (0,1)$ and $L>0$. Then, for a random variable $X$ with $\norm{X}_{\psi_{\theta,L}}<\infty$, we have that 
    \[
    \Pr\left[|X|\geq\norm{X}_{\psi_{\theta, L}} (\sqrt{t}+L t^{1/\theta})\right]\leq 2\exp(-t)
    \]
    for all $t>0$.
\end{fact}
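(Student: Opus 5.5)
The plan is to use Markov's inequality applied to $\psi_{\theta,L}(|X|/K)$, where $K \coloneqq \norm{X}_{\psi_{\theta,L}}$, and to choose the threshold so that $\psi_{\theta,L}$ evaluated at it equals $e^t - 1$. First, I will record that $\psi^{-1}_{\theta,L}(y) = \sqrt{\log(1+y)} + L\log(1+y)^{1/\theta}$ is continuous and strictly increasing on $[0,\infty)$, with value $0$ at $0$ and tending to $\infty$ as $y \to \infty$ (because $L > 0$). Hence its inverse $\psi_{\theta,L}$ is a well-defined, continuous, strictly increasing bijection of $[0,\infty)$ with $\psi_{\theta,L}(0) = 0$. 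In particular it is an admissible $\psi$ for the quasi-norm defined above.

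Next I will show that the infimum in the definition of $K$ is attained, i.e.\ $\E \psi_{\theta,L}(|X|/K) \leq 1$. If $K = 0$, then $X = 0$ almost surely (otherwise $\E\psi_{\theta,L}(|X|/t) \to \infty$ as $t \to 0$) and the claim is trivial, so assume $K > 0$. Take $t_n \searrow K$ with $\E\psi_{\theta,L}(|X|/t_n) \leq 1$. By continuity and monotonicity of $\psi_{\theta,L}$ we have $\psi_{\theta,L}(|X|/t_n) \nearrow \psi_{\theta,L}(|X|/K)$ pointwise. Monotone convergence (or Fatou's lemma) then gives $\E\psi_{\theta,L}(|X|/K) \leq 1$. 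This is the only slightly delicate point; everything else is bookkeeping.

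Now fix $t > 0$ and set $s \coloneqq \sqrt{t} + L t^{1/\theta} = \psi^{-1}_{\theta,L}(e^t - 1)$, so that $\psi_{\theta,L}(s) = e^t - 1$. Since $\psi_{\theta,L}$ is increasing, $\{|X| \geq Ks\} \subseteq \{\psi_{\theta,L}(|X|/K) \geq e^t - 1\}$. Markov's inequality and the previous step then give
\[
\Pr\left[|X| \geq K(\sqrt{t} + L t^{1/\theta})\right] \leq \frac{\E\psi_{\theta,L}(|X|/K)}{e^t - 1} \leq \frac{1}{e^t - 1}.
\]
Finally, to match the stated bound, I will split on $t$. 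If $t \geq \log 2$, then $e^t - 1 \geq e^t/2$, so $1/(e^t - 1) \leq 2e^{-t}$. If $0 < t < \log 2$, then $2e^{-t} > 1$ and the bound holds trivially because probabilities are at most $1$. This proves the fact for all $t > 0$.
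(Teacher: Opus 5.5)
The paper does not prove this fact; it imports it from \cite{kuchibhotla2022}, Proposition A.3. Your argument is the standard Markov-inequality proof of that proposition, and it is correct for every $X$ that is not almost surely zero. The key identities and steps all hold:
\begin{itemize}
\item $\psi^{-1}_{\theta,L}(e^t-1)=\sqrt{t}+Lt^{1/\theta}$.
\item The infimum defining $K=\norm{X}_{\psi_{\theta,L}}$ is attained. This follows by monotone convergence, because the set of feasible $t$ is upward closed and $\psi_{\theta,L}$ is continuous and nondecreasing.
\item Markov's inequality then gives the bound $1/(e^t-1)$.
\end{itemize}
A slightly shorter ending avoids your case split on $t$ versus $\log 2$. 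Apply Markov to the event $1+\psi_{\theta,L}(|X|/K)\geq e^t$; this gives $\bigl(1+\E\psi_{\theta,L}(|X|/K)\bigr)e^{-t}\leq 2e^{-t}$ directly.

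One step is wrong, however. In the case $K=0$ you correctly show that $X=0$ almost surely, but the conclusion is then not trivial. The threshold $K(\sqrt{t}+Lt^{1/\theta})$ equals $0$, so the left-hand side is $\Pr[|X|\geq 0]=1$, and $1\leq 2e^{-t}$ fails for every $t>\log 2$. So the fact as stated, with the non-strict inequality, is false for $X\equiv 0$. It must be read with the tacit assumption that $X$ is not almost surely zero, or with $|X|>\dots$ in place of $|X|\geq\dots$, in which case the degenerate case is indeed vacuous. This is a defect of the statement rather than of your method, but you should exclude that case explicitly instead of calling it trivial.
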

We will now put the above results together to get the following corollary about the concentration of cubes of sub-Gaussian random variables. 
\begin{corollary}[Concentration inequality for cubes of sub-Gaussians]
\label{corollary:subgaussiancubeconcentration}
Let \(Z_1,\dots,Z_n\) be i.i.d. samples of a real-valued random variable \(Z\) such that \(\|Z\|_{\psi_2}\le r\).
Then there exists an absolute constant \(K>0\) such that, for every \(t>0\),
\[
    \Pr\left(\left| \frac1n\sum_{i=1}^n Z_i^3-\E Z^3 \right|\ge K r^3\left(\sqrt{\frac{t}{n}}+\frac{t^{3/2}}{n}\right)\right) \le 2e^{-t}.
\]
\end{corollary}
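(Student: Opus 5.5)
The plan is to combine \Cref{thm:sum-of-rv-of-finite-gbo-norm} with $\theta = 2/3$ and the tail bound \Cref{fact:concentration-of-gbo-norm}, applied to the centered variables $X_i \coloneqq Z_i^3 - \E Z^3$. These are independent and zero-mean, so the only real work is bounding $\norm{X_i}_{\psi_{2/3}}$ by $O(r^3)$.

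\textbf{Step 1 (the $\psi_{2/3}$-norm of a cube).} From $\|Z\|_{\psi_2}\le r$ we have $\E\exp(|Z|^2/r^2)\le 2$. Since $|Z^3|^{2/3} = |Z|^2$, this gives
\[
\E \exp\Paren{\Paren{|Z|^3/r^3}^{2/3}} \le 2,
\]
so $\norm{Z^3}_{\psi_{2/3}} \le r^3$. Centering costs a constant. The quantity $\norm{\cdot}_{\psi_{2/3}}$ is only a quasi-norm, so subadditivity holds up to a constant $C_{2/3}$. Also $|\E Z^3| \le \E|Z|^3 \lesssim r^3$ by the moment characterization of sub-Gaussianity, and a constant has $\psi_{2/3}$-norm proportional to its absolute value. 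Hence $\norm{X_i}_{\psi_{2/3}} \le K_1 r^3$ for an absolute constant $K_1$. This is the step I expect to be the main obstacle, though only because of the quasi-norm bookkeeping; the constant is absolute because $\theta$ is fixed.

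\textbf{Step 2 (sum).} Apply \Cref{thm:sum-of-rv-of-finite-gbo-norm} with $k=n$ and $\theta = 2/3$:
\[
\Norm{\sum_{i=1}^n X_i}_{\psi_{2/3,L}} \le Q_{2/3} \sqrt{n}\, K_1 r^3, \qquad L = Q'_{2/3}/\sqrt{n}.
\]

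\textbf{Step 3 (tail).} \Cref{fact:concentration-of-gbo-norm} then gives, with probability at least $1-2e^{-t}$,
\[
\Abs{\sum_{i=1}^n X_i} \le Q_{2/3} K_1 r^3 \sqrt{n}\,\Paren{\sqrt{t} + \frac{Q'_{2/3}}{\sqrt{n}}\, t^{3/2}}.
\]
Dividing by $n$ yields the deviation bound
\[
Q_{2/3} K_1 r^3\Paren{\sqrt{t/n} + Q'_{2/3}\, t^{3/2}/n}.
\]
Taking $K \coloneqq Q_{2/3} K_1 \max\{1, Q'_{2/3}\}$ gives the claimed inequality for every $t>0$.
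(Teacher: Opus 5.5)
Your proposal is correct and follows essentially the same route as the paper. Both arguments bound $\|Z^3\|_{\psi_{2/3}}\le r^3$, absorb the centering via the quasi-triangle inequality, apply \Cref{thm:sum-of-rv-of-finite-gbo-norm} with $\theta=2/3$ and $L=Q'_{2/3}/\sqrt{n}$ followed by \Cref{fact:concentration-of-gbo-norm}, and divide by $n$. The only cosmetic difference is that you bound $|\E Z^3|\lesssim r^3$ via the sub-Gaussian moment characterization, whereas the paper obtains $\E|Z^3|\lesssim\|Z^3\|_{\psi_{2/3}}$ by integrating the $\psi_{2/3}$ tail; both work.
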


\begin{proof}
Since \(\|Z\|_{\psi_2}\le r\), we have from the definition that
\[
    \|Z^3\|_{\psi_{2/3}} \le \|Z\|_{\psi_2}^3 \le r^3.
\]
Also, the \(\psi_{2/3}\)-quasi-norm satisfies the triangle inequality up to an absolute constant, that is,
\[
    \|Z^3-\E Z^3\|_{\psi_{2/3}} \lesssim \|Z^3\|_{\psi_{2/3}} + |\E Z^3|.
\]
We will show that \(|\E Z^3|\) is in fact bounded by \(O\left(\|Z^3\|_{\psi_{2/3}}\right)\). To show this we first note by definition of the $\psi_{2/3}$ norm that 
\[
    \mathbb{E}\left[\exp\left(\frac{|Z^3|}{\|{Z^3}\|_{\psi_{2/3}}} \right)^{2/3} \right] \leq 2.
\]
Using this and applying Markov's inequality we get
\begin{align*}
    \Pr\left(|Z^3|\ge u\,\|Z^3\|_{\psi_{2/3}}\right) &= \Pr\left(\exp\left(\frac{|Z^3|}{\|{Z^3}\|_{\psi_{2/3}}} \right)^{2/3} \geq \exp\left(u^{2/3}\right)\right) \\
    &\leq \mathbb{E}\left[\exp\left(\frac{|Z^3|}{\|{Z^3}\|_{\psi_{2/3}}} \right)^{2/3} \right] \cdot \exp\left(u^{-2/3}\right) \\
    &\leq 2 \exp\left(u^{-2/3}\right)
\end{align*}
Obtaining the above bound will now enable us to apply the integral identity for expectations. We therefore have
\[
    \E |Z^3| = \int_0^\infty \Pr(|Z^3|\ge s)\,\mathrm{d}s \le \|Z^3\|_{\psi_{2/3}} \int_0^\infty 2e^{-u^{2/3}}\,\mathrm{d}u \lesssim \|Z^3\|_{\psi_{2/3}}.
\]
where in the first inequality we used the substitution $s = u \cdot \Vert Z^3 \Vert_{\psi_{2/3}}$.
Hence, we have that
\[
    \|Z^3-\E Z^3\|_{\psi_{2/3}} \lesssim r^3.
\]
Now define
\[
    Y_i := Z_i^3-\E Z_i^3.
\]
Then, the \(Y_1,\dots,Y_n\) are \iid, mean-zero, and satisfy
\[
    \|Y_i\|_{\psi_{2/3}}\lesssim r^3.
\]
Applying the sum-of-sub-Weibulls concentration from \Cref{thm:sum-of-rv-of-finite-gbo-norm} we get
\[
    \left\|\sum_{i=1}^n Y_i\right\|_{\psi_{2/3,L}} \lesssim \sqrt{n}\,r^3,
\]
where $L = C/\sqrt{n}$ for some absolute constant \(C>0\). Using the corresponding tail inequality from \Cref{fact:concentration-of-gbo-norm} we have that
\[
\Pr\left(
\left|\sum_{i=1}^n Y_i\right|
\ge
K \sqrt{n}\,r^3\Paren{\sqrt t + \frac{t^{3/2}}{\sqrt{n}}}
\right)
\le 2e^{-t}.
\]
Finally, dividing by \(n\), we conclude that
\[
\Pr\left(
\left|
\frac1n\sum_{i=1}^n Z_i^3-\E Z^3
\right|
\ge
K r^3\left(\sqrt{\frac{t}{n}}+\frac{t^{3/2}}{n}\right)
\right)
\le 2e^{-t},
\]
which completes the proof.
\end{proof}

%% file: content/preparation.tex
\section{Properities of halfspace truncated Gaussians}\label{sec:setup}
In this section, we prove several statements that we need in the proofs later.
In \Cref{sec:setup:moments}, we compute the first three moments of a truncated Gaussian.
In \Cref{sec:setup:subgaussian}, we show that halfspace truncated Gaussian are sub-Gaussian.
And finally, in \Cref{sec:precondition}, we show how to precondition a truncated Gaussian such that the covariance is well-conditioned.

\subsection{The first three moments}\label{sec:setup:moments}
In order to compute the first three moments, we need the following claim that shows that linear transformations of Gaussians truncated to a halfspace continue to remain Gaussians truncated to a halfspace albeit with different parameters.
We defer the proof of this claim to \Cref{APP:facts:matrixvectorcalc}.
\newcommand{\claimaffinetransformtext}{
    Let $A\in\Rsymb^{d\times d}$ be an invertible matrix and let ${b\in\Rsymb^d}$.
    Consider the random variable ${x\sim\truncatedGauss{\mu}{\Sigma}{\iprod{w,x}\leq\tau}}$.
    Then, the random variable ${y=Ax+b}$ follows the distribution ${\truncatedGauss{A\mu+b}{A\Sigma A^\top}{\Iprod{w',y}\leq\tau'}}$, where 
    \[
        w' = \frac{A^{-\top} w}{\Norm{A^{-\top} w}} \quad \text{and} \quad \tau' = \frac{\tau+\Iprod{A^{-\top} w,b}}{\Norm{A^{-\top} w}}.
    \]
}
\begin{claim}[Linear transformations preserve halfspace truncation]\label{claim:affine-transform}\claimaffinetransformtext
\end{claim}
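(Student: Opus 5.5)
We prove \Cref{claim:affine-transform} in two steps. First we identify the law of $y = Ax+b$ before truncation. Then we rewrite the truncation event in terms of $y$.

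Let $g \sim \normal{\mu,\Sigma}$ and let $H = \{x : \iprod{w,x} \leq \tau\}$, which has positive mass under $\normal{\mu,\Sigma}$. By definition, $x$ has the law of $g$ conditioned on the event $g \in H$. Define the bijection $\varphi(x) = Ax + b$, which is invertible because $A$ is. Then $y = \varphi(x)$ has the law of $\varphi(g)$ conditioned on the event $\varphi(g) \in \varphi(H)$, since the events $\{g \in H\}$ and $\{\varphi(g) \in \varphi(H)\}$ coincide. Concretely, for any measurable $B$,
\[
\Pr[y \in B] = \frac{\Pr[\varphi(g) \in B \cap \varphi(H)]}{\Pr[\varphi(g) \in \varphi(H)]}.
\]
By the standard affine-transformation property of Gaussians, $\varphi(g) \sim \normal{A\mu + b, A\Sigma A^\top}$. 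Here $A\Sigma A^\top$ is positive definite, so this is a nondegenerate Gaussian. It therefore only remains to show that $\varphi(H) = \{y : \iprod{w',y} \leq \tau'\}$.

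For the second step, substitute $x = A^{-1}(y - b)$. A point $y$ lies in $\varphi(H)$ if and only if $\iprod{w, A^{-1}(y-b)} \leq \tau$, which is the condition
\[
\iprod{A^{-\top}w, y} \leq \tau + \iprod{A^{-\top}w, b}.
\]
The vector $A^{-\top}w$ is nonzero because $w \neq 0$ and $A^{-\top}$ is invertible. We may therefore divide both sides by the positive number $\norm{A^{-\top}w}$ without changing the inequality. This yields exactly $\iprod{w',y} \leq \tau'$, with $w'$ and $\tau'$ as in the statement, and $w'$ is a unit vector as the paper's convention requires.

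There is no real obstacle in this argument. The only points needing care are the two places where invertibility is used: it makes $\varphi$ a bijection, so that the conditioning events agree, and it guarantees $A^{-\top}w \neq 0$, so that the normalization is well defined. We should also note that the survival mass is preserved, since $\Pr[\varphi(g) \in \varphi(H)] = \Pr[g \in H] = \alpha$.
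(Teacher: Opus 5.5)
Your proof is correct and matches the paper's argument in substance: both hinge on substituting $x = A^{-1}(y-b)$ into the halfspace condition to get $\iprod{A^{-\top}w, y} \leq \tau + \iprod{A^{-\top}w, b}$ and then normalizing by $\norm{A^{-\top}w}$. The only difference is in how the Gaussian part is handled. The paper transforms the truncated density directly via the change-of-variables formula, rewriting the quadratic form with $(A\Sigma A^\top)^{-1}$. You instead work at the level of laws, pushing forward the untruncated Gaussian and using that conditioning commutes with the bijection $\varphi$; this is equally rigorous and avoids the explicit density manipulation.
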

Using this claim, we can now compute the structure of the first moment as well as the second and third central moments of the Gaussian distribution under arbitrary halfspace truncation.
Each of them can be viewed as the corresponding moment of an untruncated Gaussian and an additive rank-1 term due to truncation.
\begin{lemma}
\label{lem:first3moments}
Let $X\sim\truncatedGauss{\mu}{\Sigma}{\iprod{w,x}\leq\tau}$ for some unit vector $w$. Then, we have
\begin{align*}
\Esymb X&=\mu+\kappa_1(\gamma)\frac{\Sigma w}{\norm{\Sigma^{1/2} w}},
\\
M_2 &\coloneqq \Esymb (X-\Esymb X)^{\otimes 2}=\Sigma-(1-\kappa_2(\gamma))\cdot\frac{\Sigma w}{\norm{\Sigma^{1/2} w}}\tp{\Paren{\frac{\Sigma w}{\norm{\Sigma^{1/2} w}}}},
\\
M_3 &\coloneqq \Esymb (X-\Esymb X)^{\otimes 3}=\kappa_3(\gamma)\cdot\Paren{\frac{\Sigma w}{\norm{\Sigma^{1/2} w}}}^{\otimes 3},
\end{align*}
where $\gamma=\frac{\tau-\iprod{w,\mu}}{\Norm{\Sigma^{1/2}w}}$.
Furthermore, we have that $\alpha=\Phi(\gamma)$.
\end{lemma}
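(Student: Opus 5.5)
I will reduce to the canonical one-dimensional truncation via the affine map from the technical overview. Let $v = \Sigma^{1/2}w/\norm{\Sigma^{1/2}w}$, choose an orthogonal matrix $U$ with $Ue_1 = v$, and set $\psi(y) = \mu + \Sigma^{1/2}Uy$. Let $Y \sim \truncatedGauss{0}{I_d}{\iprod{e_1,y}\leq\gamma}$.

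\textbf{Step 1: $\psi(Y)$ has the law of $X$.} Apply \Cref{claim:affine-transform} with $A=\Sigma^{1/2}U$ and $b=\mu$. The mean becomes $\mu$ and the covariance becomes $\Sigma^{1/2}UU^\top\Sigma^{1/2}=\Sigma$. For the halfspace, note that $A^{-\top}e_1=\Sigma^{-1/2}Ue_1=\Sigma^{-1/2}v = w/\norm{\Sigma^{1/2}w}$. Hence the new direction is $w$, since $w$ is a unit vector. The new threshold is
\[
\Paren{\gamma + \iprod{w,\mu}/\norm{\Sigma^{1/2}w}}\cdot\norm{\Sigma^{1/2}w} = \gamma\norm{\Sigma^{1/2}w}+\iprod{w,\mu} = \tau.
\]
So $X \overset{d}{=} \psi(Y)$. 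I need $\Sigma \succ 0$ for invertibility, which is implicit in the statement.

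The mass claim follows directly. $\alpha = \Pr_{g\sim\normal{0,I_d}}[\iprod{e_1,g}\leq\gamma]=\Phi(\gamma)$, because $\psi$ pushes $\normal{0,I_d}$ forward to $\normal{\mu,\Sigma}$ and maps the survival sets onto each other.

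\textbf{Step 2: moments of $Y$.} Truncating in $e_1$ only affects the first coordinate. So $Y_1\sim Z_\gamma$, the remaining coordinates $Y_2,\dots,Y_d$ are i.i.d.\ $\normal{0,1}$, and $Y_1$ is independent of them. This gives $\E Y=\kappa_1(\gamma)e_1$. Let $\tilde Y = Y-\E Y$. Then $\E\tilde Y^{\otimes 2} = I_d-(1-\kappa_2(\gamma))e_1e_1^\top$.

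For $\E\tilde Y^{\otimes 3}$, I expand entrywise. Every entry containing an index $\geq 2$ vanishes: such an entry either contains an odd power of a centered symmetric independent coordinate, or it factors with a mean-zero $\tilde Y_1$ factor. The only surviving entry is $(1,1,1)$, which equals $\kappa_3(\gamma)$. Hence $\E\tilde Y^{\otimes 3}=\kappa_3(\gamma)e_1^{\otimes 3}$.

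\textbf{Step 3: push forward.} Write $B=\Sigma^{1/2}U$. Then $Be_1=\Sigma^{1/2}v=\Sigma w/\norm{\Sigma^{1/2}w}$ and $BB^\top=\Sigma$. Therefore
\[
\E X = \mu+\kappa_1(\gamma)Be_1,\qquad M_2 = B\Paren{\E\tilde Y^{\otimes2}}B^\top = \Sigma-(1-\kappa_2(\gamma))(Be_1)(Be_1)^\top,
\]
and $M_3 = \kappa_3(\gamma)(Be_1)^{\otimes 3}$, using multilinearity of $B^{\otimes 3}$ acting on tensors. These are exactly the three claimed formulas.

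\textbf{Main obstacle.} There is no real difficulty. The delicate points are the bookkeeping of the threshold in Step 1 and the vanishing of the mixed third-order terms in Step 2.
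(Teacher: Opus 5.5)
Your proof is correct and follows the paper's argument: both realize $X$ as $\mu+\Sigma^{1/2}UY$ with $Y\sim\truncatedGauss{0}{I_d}{\iprod{e_1,y}\leq\gamma}$ via \Cref{claim:affine-transform}, then push the moments of $Y$ forward. The differences are cosmetic: you apply the claim once with $A=\Sigma^{1/2}U$ instead of in two stages, and you spell out why the mixed third-order entries of $\E\tilde Y^{\otimes 3}$ vanish, which the paper simply asserts.
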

\begin{proof}
Consider a random vector $\y$ that follows a $d$-dimensional Gaussian $\truncatedGauss{0}{I_d}{\iprod{e_1,y}\leq\gamma}$. Since only the first coordinate of $\y$ is truncated (and equal to $Z_\gamma$ as in \Cref{sec:millsratio}), we have that
    \begin{align*}
        \E \y&=\kappa_1(\gamma)e_1,\\
        \E(\y-\E\y)^{\otimes 2}&=I_d-(1-\kappa_2(\gamma))\cdot e_1^{\otimes 2},\\
        \E(\y-\E\y)^{\otimes 3}&=\kappa_3(\gamma)\cdot e_1^{\otimes 3}.
    \end{align*}
    Consider an orthonormal matrix $U$ such that $Ue_1=\frac{\Sigma^{1/2}w}{\norm{\Sigma^{1/2}w}}$ or equivalently $e_1=U^\top\frac{\Sigma^{1/2}w}{\Norm{\Sigma^{1/2}w}}$.
    By \Cref{claim:affine-transform}, we have that that $U\y$ follows a standard Gaussian $\cN(0, I_d)$ truncated to the set
    \[
        \Set{x:\Paren{\frac{\Sigma^{1/2}w}{\Norm{\Sigma^{1/2}w}}}^\top x\leq\gamma}.
    \]
    By scaling $U\y$ by $\Sigma^{1/2}$ and shifting by $\mu$, again by \Cref{claim:affine-transform}, we obtain a Gaussian $\cN(\mu,\Sigma)$ truncated to the halfspace
    \[
        H=\Set{x:\iprod{w,x}\leq\gamma\Norm{\Sigma^{1/2}w}+\iprod{w,\mu}}.
    \]
    Thus, we get a direct relationship between $\y$ and $\x$.
    Namely, if we have ${\tau=\gamma\Norm{\Sigma^{1/2}w}+\iprod{w,\mu}}$ or equivalently $\gamma=\frac{\tau-\iprod{w,\mu}}{\Norm{\Sigma^{1/2}w}}$, then we can get $\x$ by ${\x=\Sigma^{1/2}U\y+\mu}$.
    By construction, this also implies
    \[
        \alpha=\E_{x\sim\cN(\mu,\Sigma)}\mathbbb{1}[x\in H] = \E_{y \sim \cN(0,I_n)}\mathbbb{1}[\{e_1^\top y \leq \gamma\}] = \E_{y_1 \sim \cN(0,1)}\mathbbb{1}[y_1 \leq \gamma] = \Phi(\gamma).
    \]
    Finally, from the relationship $\x=\Sigma^{1/2}U\y+\mu$ we compute the moments as follows.
    First, the mean of $X$ is
    \[
        \E\x=\E\Brac{\mu+\Sigma^{1/2}U\y} = \mu+\Sigma^{1/2}U\cdot\kappa_1(\gamma)e_1 = \mu+\kappa_1(\gamma)\frac{\Sigma w}{\Norm{\Sigma^{1/2} w}}.
    \]
    Next, we can compute the second central moment of $X$ as 
    \begin{align*}
        \E\Paren{\x-\E\x}^{\otimes 2}&=(\Sigma^{1/2}U)^{\otimes 2}\E\Paren{\y-\E\y}^{\otimes 2}\\
        &=(\Sigma^{1/2}U)^{\otimes 2}\Paren{I_d-(1-\kappa_2(\gamma))e_1^{\otimes 2}}\\
        &=\Sigma-(1-\kappa_2(\gamma))\Paren{\Sigma^{1/2}Ue_1}^{\otimes 2}\\
        &=\Sigma-(1-\kappa_2(\gamma))\Paren{\frac{\Sigma w}{\Norm{\Sigma^{1/2}w}}}^{\otimes 2}.
    \end{align*}
    Finally, the third central moment equals
    \[
         \E\Paren{\x-\E\x}^{\otimes 3}=(\Sigma^{1/2}U)^{\otimes 3}\E\Paren{\y-\E\y}^{\otimes 3} =(\Sigma^{1/2}U)^{\otimes 3}\kappa_3(\gamma)e_1^{\otimes 3} =\kappa_3(\gamma)\Paren{\frac{\Sigma w}{\Norm{\Sigma^{1/2}w}}}^{\otimes 3},
    \]
    which completes the proof.
\end{proof}

\subsection{Sub-Gaussianity}\label{sec:setup:subgaussian}
For our algorithm we need high accuracy estimates of the moments of a halfspace truncated Gaussian.
To get those, we first establish the following two claims about sub-Gaussianity of the truncated Gaussian.
\begin{claim}\label{claim:subgaussian:M2invX}
    Suppose $\x\sim\truncatedGauss{\mu}{\Sigma}{\iprod{w,x}\leq\tau}$. 
    Then $\Sigma^{-1/2}\x$ is $1$-sub-Gaussian and $M_2^{-1/2}\x$ is $1/\sqrt{\kappa_2(\gamma)}$-sub-Gaussian.
\end{claim}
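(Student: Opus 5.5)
We reduce everything to the canonical truncated Gaussian from the proof of \Cref{lem:first3moments}. There we showed $X = \mu + \Sigma^{1/2} U Y$ with $U$ orthonormal and $Y \sim \truncatedGauss{0}{I_d}{\iprod{e_1,y}\leq\gamma}$. The coordinates of $Y$ are independent: $Y_1 = Z_\gamma$, and $Y_2,\dots,Y_d$ are i.i.d.\ $\normal{0,1}$. Sub-Gaussianity only concerns the centered projections $\iprod{v,\cdot}-\E\iprod{v,\cdot}$, so the shift $\mu$ is irrelevant and we may work with $\Sigma^{1/2}UY$.

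\textbf{First claim.} For a unit vector $v$ we have $\iprod{v,\Sigma^{-1/2}X} - \E[\cdot] = \iprod{U^\top v, Y - \E Y}$, and $a := U^\top v$ is again a unit vector. The sum $a_1(Z_\gamma - \E Z_\gamma) + \sum_{i\geq 2} a_i Y_i$ is a sum of independent terms. The key one-dimensional input is that $Z_\gamma$ is $1$-sub-Gaussian in the sense that its centered moment generating function satisfies $\E e^{s(Z_\gamma-\E Z_\gamma)} \leq e^{s^2/2}$.

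To prove this, write $K_\gamma(s) = \log \E e^{sZ_\gamma} = s^2/2 + \log\Phi(\gamma - s) - \log\Phi(\gamma)$. This formula comes from completing the square: $\E e^{sZ_\gamma} = e^{s^2/2}\Phi(\gamma-s)/\Phi(\gamma)$. Its second derivative is $K_\gamma''(s) = 1 + (\log\Phi)''(\gamma-s)$. Since $\Phi$ is log-concave, $(\log\Phi)'' \leq 0$, so $K_\gamma'' \leq 1$. Taylor expansion around $0$, using $K_\gamma(0)=0$ and $K_\gamma'(0)=\E Z_\gamma$, then gives $\log \E e^{s(Z_\gamma - \E Z_\gamma)} \leq s^2/2$.

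Multiplying the MGF bounds of the independent coordinates, we get $\E e^{s\iprod{a,Y-\E Y}} \leq e^{s^2\norm{a}^2/2} = e^{s^2/2}$. This is exactly the variance-proxy-$1$ sub-Gaussian condition, with the same normalization as a standard Gaussian. The paper's convention is that ``$r$-sub-Gaussian'' is measured relative to the Gaussian proxy with the conventional absolute constants absorbed, and with that convention this establishes the claim. I will state explicitly that the MGF form is the notion being used. This is consistent with how \Cref{fact:random-matrix-concentration} is applied, which only needs the $\psi_2$-norm up to absolute constants.

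\textbf{Second claim.} We write $M_2^{-1/2}X = (M_2^{-1/2}\Sigma^{1/2})\,\Sigma^{-1/2}X$. Any linear image $BW$ of a $1$-sub-Gaussian vector $W$ is $\norm{B}$-sub-Gaussian, because $\iprod{v,BW} = \norm{B^\top v}\iprod{B^\top v/\norm{B^\top v}, W}$. So it suffices to show $\norm{M_2^{-1/2}\Sigma^{1/2}} \leq 1/\sqrt{\kappa_2(\gamma)}$, which is equivalent to $\Sigma^{1/2}M_2^{-1}\Sigma^{1/2} \preceq \kappa_2(\gamma)^{-1} I_d$.

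By \Cref{lem:first3moments}, $M_2 = \Sigma^{1/2}\Paren{I_d - (1-\kappa_2(\gamma))\, bb^\top}\Sigma^{1/2}$ with $b = \Sigma^{1/2}w/\norm{\Sigma^{1/2}w}$ a unit vector. Hence $\Sigma^{1/2}M_2^{-1}\Sigma^{1/2} = \Paren{I_d - (1-\kappa_2)bb^\top}^{-1}$. This matrix has eigenvalue $1/\kappa_2$ in direction $b$ and eigenvalue $1$ on the orthogonal complement. Since $0<\kappa_2\leq1$ by \Cref{claim:bound-on-kappa2}, the largest eigenvalue is $1/\kappa_2(\gamma)$, which gives the second claim.

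\textbf{Main obstacle.} The main work is the one-dimensional sub-Gaussianity of $Z_\gamma$ with constant independent of $\gamma$. The log-concavity argument for $K_\gamma'' \leq 1$ handles it cleanly. The remaining care is to match the constant to the paper's normalization of the sub-Gaussian norm.
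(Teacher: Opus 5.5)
Your proof is correct. It has the same skeleton as the paper's: write $X=\mu+\Sigma^{1/2}UY$, use independence of the coordinates of $Y$, then push the bound through a linear map. It differs in both ingredients.

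\emph{The one-dimensional fact.} The paper cites Barreto et al.\ for the $1$-sub-Gaussianity of $Z_\gamma$. You instead derive it from $\E e^{sZ_\gamma}=e^{s^2/2}\Phi(\gamma-s)/\Phi(\gamma)$ and log-concavity of $\Phi$, which makes the argument self-contained. It can even be phrased with the paper's own tools: $K_\gamma''(s)=1+(\log\Phi)''(\gamma-s)=\kappa_2(\gamma-s)$ is the variance of the exponentially tilted law $s+Z_{\gamma-s}$. So $K_\gamma''\le 1$ is exactly \Cref{claim:bound-on-kappa2}.

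\emph{The second claim.} The paper factors $M_2=\Sigma^{1/2}B\Sigma^{1/2}$ and argues that $B^{-1/2}\Sigma^{-1/2}$ and $M_2^{-1/2}$ differ by an orthogonal factor. You compute $\norm{M_2^{-1/2}\Sigma^{1/2}}^2=\norm{\Sigma^{1/2}M_2^{-1}\Sigma^{1/2}}=\norm{B^{-1}}=1/\kappa_2(\gamma)$ directly, which is a bit cleaner.

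\emph{The sub-Gaussian notion.} Your explicit use of the MGF (variance-proxy) notion is substantive, not cosmetic. Under the literal $\psi_2$ definition in the preliminaries, projecting $\Sigma^{-1/2}X$ onto $Ue_2$ (for $d\ge 2$) gives exactly a $\cN(0,1)$ variable, whose $\psi_2$-norm is $\sqrt{8/3}>1$. So the constant $1$ in the claim holds only in the variance-proxy sense. In the $\psi_2$ sense it holds up to an absolute factor, which is all that downstream uses such as \Cref{fact:random-matrix-concentration} need. The paper's step ``sums of independent sub-Gaussians are sub-Gaussian with the $\ell_2$-combined parameter'' is likewise exact only in the MGF sense.
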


\begin{corollary}\label{cor:subgaussian:X}
    Suppose $\x\sim\truncatedGauss{\mu}{\Sigma}{\iprod{w,x}\leq\tau}$.
    Then $\x$ is $\norm{\Sigma}^{1/2}$-sub-Gaussian.
\end{corollary}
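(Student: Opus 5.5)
The corollary follows from \Cref{claim:subgaussian:M2invX}, using the fact that a linear image of a sub-Gaussian vector is sub-Gaussian with the parameter scaled by the operator norm of the map. That claim gives that $\Sigma^{-1/2}\x$ is $1$-sub-Gaussian. Writing $\x = \Sigma^{1/2}(\Sigma^{-1/2}\x)$, I will reduce every one-dimensional projection of $\x$ to a projection of $\Sigma^{-1/2}\x$.

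Fix a unit vector $v\in\cS^{d-1}$. Then
\[
    \iprod{v,\x} = \iprod{\Sigma^{1/2}v, \Sigma^{-1/2}\x}.
\]
If $\Sigma^{1/2}v = 0$, the projection is a constant, so it is trivially sub-Gaussian with any parameter. Otherwise, set $v' = \Sigma^{1/2}v/\norm{\Sigma^{1/2}v}$, which is a unit vector, so that
\[
    \iprod{v,\x} - \E\iprod{v,\x} = \norm{\Sigma^{1/2}v}\cdot\Paren{\iprod{v',\Sigma^{-1/2}\x} - \E\iprod{v',\Sigma^{-1/2}\x}}.
\]
The $\psi_2$-norm is a genuine norm and hence positively homogeneous. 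Combining this with the $1$-sub-Gaussianity of $\Sigma^{-1/2}\x$ in direction $v'$ gives
\[
    \Norm{\iprod{v,\x} - \E\iprod{v,\x}}_{\psi_2} \le \norm{\Sigma^{1/2}v} \le \norm{\Sigma^{1/2}} = \norm{\Sigma}^{1/2}.
\]
Since $v$ was an arbitrary unit vector, this is exactly the definition of $\x$ being $\norm{\Sigma}^{1/2}$-sub-Gaussian.

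The argument has no real obstacle; all the content lies in \Cref{claim:subgaussian:M2invX}, which we assume. The only points needing care are that the sub-Gaussian definition is on centered projections, so the shift by the mean is harmless, and the degenerate case $\Sigma^{1/2}v=0$. The latter cannot in fact occur, because $\Sigma\succ 0$ is needed for $\Sigma^{-1/2}$ to be defined.
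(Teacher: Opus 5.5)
Your proof is correct and takes the same route as the paper. The paper also treats $X$ as the image of the $1$-sub-Gaussian vector $\Sigma^{-1/2}X$ (equivalently $U^\top\Sigma^{-1/2}X$) under $\Sigma^{1/2}$, and uses that a linear map scales the sub-Gaussian parameter by its operator norm, giving the parameter $\norm{\Sigma^{1/2}} = \norm{\Sigma}^{1/2}$.
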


For the special case of the standard one-dimensional Gaussian, this was already shown in \cite{barreto2024}. We use this fact to prove the general case.
\begin{fact}[cf. Theorem 2.1 in \cite{barreto2024}]\label{fact:sub-gaussian-of-one-sized-gaussian}
    A random variable $Z_\gamma$ from the standard Gaussian distribution $\cN(0,1)$ truncated to $(-\infty,\gamma]$ is $1$-sub-Gaussian.
\end{fact}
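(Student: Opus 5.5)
The plan is to prove the statement in its moment-generating-function form: for every $\lambda\in\R$,
\[
\E\exp\Paren{\lambda(Z_\gamma-\E Z_\gamma)}\leq \exp(\lambda^2/2).
\]
This is the ``variance proxy $1$'' notion of sub-Gaussianity, which is the sense of \cite{barreto2024}. The $\psi_2$-norm version used in our definition then follows with at most an absolute-constant change, by the standard equivalence of sub-Gaussian characterizations. The argument reduces everything to the one-dimensional cumulants from \Cref{sec:millsratio}.

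\textbf{Step 1: compute the cumulant generating function.} Let $K(\lambda)=\log\E e^{\lambda Z_\gamma}$. Completing the square in $\int_{-\infty}^{\gamma} e^{\lambda t}\phi(t)\,dt$ gives
\[
K(\lambda)=\frac{\lambda^2}{2}+\log\Phi(\gamma-\lambda)-\log\Phi(\gamma).
\]
This is finite and analytic for all $\lambda$.

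\textbf{Step 2: identify $K''$ with a variance of a truncated Gaussian.} The exponentially tilted law $e^{\lambda z}\,\mathrm{d}P_{Z_\gamma}(z)/\E e^{\lambda Z_\gamma}$ has density proportional to $\phi(z-\lambda)\ind{z\leq\gamma}$. So it is the law of $\lambda+Z_{\gamma-\lambda}$. Since $K''(\lambda)$ equals the variance of the tilted law, we get $K''(\lambda)=\kappa_2(\gamma-\lambda)$. One can also verify this directly by differentiating the closed form of Step 1 and comparing with \Cref{fact:first-3-cumulants}.

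\textbf{Step 3: bound the centered cumulant generating function.} By \Cref{claim:bound-on-kappa2}, $\kappa_2(\gamma-\lambda)\leq 1$ for every $\lambda$. We also have $K(0)=0$ and $K'(0)=\E Z_\gamma=\kappa_1(\gamma)$. Taylor's theorem with integral remainder then gives
\[
\log\E e^{\lambda(Z_\gamma-\E Z_\gamma)}=K(\lambda)-\lambda K'(0)=\int_0^\lambda(\lambda-s)K''(s)\,ds\leq\frac{\lambda^2}{2},
\]
which is the claim.

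The only step needing care is Step 2, namely recognizing the tilted distribution as a shifted truncated Gaussian. After that, the bound is immediate from $\kappa_2\leq 1$. Equivalently, one can argue that $\log\Phi$ is concave, so $(\log\Phi)''\leq 0$ and hence $K''\leq 1$.

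An alternative route, which I would mention only as a remark, is via log-concavity. The density of $Z_\gamma$ is $1$-strongly log-concave, since it is $\phi$ times the indicator of a convex set. Bakry--Émery then yields a log-Sobolev inequality with constant $1$, and Herbst's argument gives the same bound.
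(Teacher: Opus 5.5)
Your proof is correct, and it takes a different route from the paper, which gives no argument for this fact and simply imports the bound from \cite[Theorem~2.1]{barreto2024}.

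You derive it instead from the paper's own one-dimensional machinery. Completing the square gives the closed form of $K$. Exponential tilting then identifies $K''(\lambda)=\kappa_2(\gamma-\lambda)$, equivalently $1+(\log\Phi)''(\gamma-\lambda)$. Finally, \Cref{claim:bound-on-kappa2} bounds the Taylor remainder $\lambda^2\int_0^1(1-u)K''(\lambda u)\,du$ by $\lambda^2/2$ for both signs of $\lambda$. There is no circularity here, since that claim's proof uses only positivity of the variance and the Mills-ratio bound $h(x)>x$.

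The citation buys a ready-made result covering more general truncation intervals. Your route buys self-containedness and transparency. Because $\kappa_2$ increases to $1$, $K''(\lambda)\to 1$ as $\lambda\to-\infty$, so variance proxy $1$ is actually sharp for one-sided truncation. The same tilting argument also works in $\R^d$: the tilt of a halfspace-truncated standard Gaussian is a shifted halfspace-truncated standard Gaussian with covariance $\preceq I_d$. This would give \Cref{claim:subgaussian:M2invX} directly, without the coordinate decomposition.

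Your caveat about the $\psi_2$ convention is necessary, not merely cautious. With the paper's normalization $\E\exp(X^2/t^2)\le 2$, the Gaussian left tail of $Z_\gamma$ makes this expectation infinite for every $t<\sqrt{2}$. Hence $\|Z_\gamma-\E Z_\gamma\|_{\psi_2}\ge\sqrt{2}$ for all $\gamma$. The fact is therefore literally true only in the moment-generating-function (variance-proxy) sense you prove. In the $\psi_2$ sense it holds up to an absolute constant, which is all the downstream concentration bounds in the paper use.
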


\begin{proof}[Proof of \Cref{claim:subgaussian:M2invX} and \Cref{cor:subgaussian:X}]
    A random vector is $r$-sub-Gaussian if its univariate projection along every direction $v$ is $r$-sub-Gaussian.
    For the first claim it therefore suffices to show that the random variable $\iprod{v,\Sigma^{-1/2}\x}$ is $1$-sub-Gaussian.
    Define $\y \coloneqq U^{\top}\Sigma^{-1/2}\x$.
    We first show that this is $1$-sub-Gaussian.
    Let $v$ be a unit vector. Then, we need to show that $\iprod{v,\y}$ is $1$-sub-Gaussian.
    Consider the canonical basis for $\R^d$ defined using the vectors $\Set{e_j}_{j=1}^{d}$. Then, by expressing $v$ in this canonical basis we have
    \[
\iprod{v,\y}=\sum_{j=1}^{d}v_j\iprod{e_j,\y}=\sum_{j=1}^{d}v_j\y_j.
    \]
    By construction, $\y_1$ is distributed identical to $Z_\tau$. Since one-dimensional truncated Gaussians are $1$-sub-Gaussian due to \Cref{fact:sub-gaussian-of-one-sized-gaussian}, $\y_1$ is $1$-sub-Gaussian. Now, $\y_2,\ldots,\y_d$ are independent one-dimensional standard Gaussians. Therefore they are also $1$-sub-Gaussian. Since the density function is a product distribution over the coordinates, we have independence between the $\{\y_i\}_{i=1}^n$. Using the fact that addition of independent sub-Gaussian random variables is sub-Gaussian, and the fact that $v$ is a unit vector, it follows that $\y$ is $1$-sub-Gaussian.

    Multiplying a random vector by a matrix $A$ scales the sub-Gaussian parameter by $\|A^T\|$ (because $|\iprod{v,A\y}| \leq \|A^\top\| \cdot |\iprod{v', \y}|$ for $v' = A^\top v/\norm{A^\top v}$). Thus, we have that $\Sigma^{-1/2}X$ is also $1$-sub-Gaussian (because $\|U^\top\| = 1$).
    Furthermore, we also get that $X$ is $\|\Sigma^{1/2}\|$-sub-Gaussian.
    Finally, by \Cref{lem:first3moments}, we have that
    \[
        M_2 = \Sigma^{1/2} \cdot \underbrace{\Paren{I_d - (1-\kappa_2(\gamma)) \Paren{\frac{\Sigma^{1/2}w}{\|\Sigma^{1/2}\|}}\Paren{\frac{\Sigma^{1/2}w}{\|\Sigma^{1/2}\|}}^\top}}_{B \coloneqq} \cdot \Sigma^{1/2}.
    \]
    We have that $B^{-1/2}\Sigma^{-1/2}X$ is $\|B^{-1/2}\|$-sub-Gaussian.
    Since we get $B$ by removing a rank-1 component from the identity matrix, the eigenvalues of $B$ are 1 with multiplicity $d-1$ and $\kappa_2(\gamma)$.
    Thus, $\|B^{-1/2}\| = 1/\sqrt{\kappa_2(\gamma)}$.
    Finally, note that $B^{-1/2}\Sigma^{-1/2}$ and $M_2^{-1/2}$ differ by an orthogonal matrix (namely $B^{-1/2}\Sigma^{-1/2}M_2^{1/2}$) and thus also $M_2^{-1/2}X$ is $1/\sqrt{\kappa_2(\gamma)}$-sub-Gaussian, which completes the proof.
\end{proof}

\subsection{Preconditioning}\label{sec:precondition}
For our algorithm, we need to estimate the inverse of $M_2$.
A straight-forward application of \Cref{fact:random-matrix-concentration} to get an estimate $\widehat{M}_2$ for $M_2$ and then taking the inverse of $\widehat{M}_2$ would yield a sample complexity that depends on the condition number of $M_2$.
The reason for this is that from \Cref{fact:random-matrix-concentration} we get a bound on $\norm{M_2^{-1/2}\widehat{M}_2 M_2^{-1/2}-I_d}$ but we will need a bound on $\norm{\widehat{M}_2^{-1}-M_2^{-1}}$.
We can relate these two via the following inequality
\[
    \norm{\widehat{M}_2^{-1}-M_2^{-1}} \leq \norm{\widehat{M}_2^{-1}}\cdot\norm{M_2^{1/2}}\cdot\norm{I_d-M_2^{-1/2}\widehat{M}_2 M_2^{-1/2}}\cdot\norm{M_2^{-1/2}},
\]
but in order to make this norm small, the sample complexity will depend on the condition number of $M_2$, which can be arbitrarily large.
To avoid this additional dependency, we compute a natural preconditioner to ensure that our problem becomes well-conditioned.
The guarantee of this preconditioning is the following, where, as before, $\alpha$ denotes the probability mass of the halfspace under the target Gaussian distribution.
\begin{lemma}
\label{lem:preconditioner}
Without loss of generality, given $n = \Theta(d\cdot\max\{1,\log^4 1/\alpha\})$ samples from $\truncatedGauss{\mu}{\Sigma}{\iprod{w,x}\leq\tau}$, we can assume that $\frac{1}{2}I_d\preceq M_2\preceq 2I_d$  and $\frac{1}{2}I_d\preceq\Sigma\preceq \frac{2}{\kappa_2(\gamma)}I_d$ with high probability.
\end{lemma}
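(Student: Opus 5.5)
We precondition with the empirical covariance. Draw $n$ samples $X_1,\dots,X_n$ and form the sample covariance $\widehat{M}_2=\frac1n\sum_i (X_i-\bar X)^{\otimes 2}$. Then we apply the affine map $x\mapsto \widehat{M}_2^{-1/2}x$ to all subsequent (fresh) samples.

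By \Cref{claim:affine-transform}, the transformed samples again follow a halfspace-truncated Gaussian. Its parameters are $\widehat{M}_2^{-1/2}\mu$ and $\widehat{M}_2^{-1/2}\Sigma\widehat{M}_2^{-1/2}$, together with a new halfspace. The mass $\alpha$ is unchanged, and so is the relative truncation parameter $\gamma$, since by \Cref{lem:first3moments} $\alpha=\Phi(\gamma)$. Its second central moment is $\widehat{M}_2^{-1/2}M_2\widehat{M}_2^{-1/2}$. Moreover, parameter estimates for the transformed problem map back through $x\mapsto\widehat{M}_2^{1/2}x$. The error measures $\norm{\Sigma^{-1/2}(\hat\mu-\mu)}$ and $\norm{\Sigma^{-1/2}\widehat\Sigma\Sigma^{-1/2}-I_d}_F$ are affine invariant, so they are preserved. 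This justifies the phrase ``without loss of generality''.

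\textbf{Step 1: bound on $M_2$.} By \Cref{claim:subgaussian:M2invX}, $M_2^{-1/2}X$ is $1/\sqrt{\kappa_2(\gamma)}$-sub-Gaussian. Applying the spectral part of \Cref{fact:random-matrix-concentration} with $r^2=1/\kappa_2(\gamma)$ gives, with high probability,
\[
\norm{M_2^{-1/2}\widehat{M}_2M_2^{-1/2}-I_d}\lesssim \kappa_2(\gamma)^{-1}\Paren{\sqrt{d/n}+d/n}.
\]
This is at most $1/2$ once $n\gtrsim d/\kappa_2(\gamma)^2$. By \Cref{claim:var-est}, $1/\kappa_2(\gamma)\le \max\{20,8\log 1/\alpha\}$, so $n=\Theta(d\max\{1,\log^2 1/\alpha\})$ suffices. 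This is within the stated $\log^4$ budget, which leaves slack for constants and failure probability. Then $\frac12 M_2\preceq\widehat{M}_2\preceq\frac32 M_2$. Conjugating, the new second moment $\widehat{M}_2^{-1/2}M_2\widehat{M}_2^{-1/2}$ has all eigenvalues in $[2/3,2]\subseteq[1/2,2]$. The eigenvalues coincide because $A^{-1/2}BA^{-1/2}$ and $B^{1/2}A^{-1}B^{1/2}$ have the same spectrum.

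\textbf{Step 2: bound on $\Sigma$.} By \Cref{lem:first3moments}, $M_2=\Sigma-(1-\kappa_2(\gamma))vv^\top$ with $v=\Sigma w/\norm{\Sigma^{1/2}w}$. This gives $\Sigma\succeq M_2\succeq\frac12 I_d$ in the preconditioned coordinates. For the upper bound, write $M_2=\Sigma^{1/2}B\Sigma^{1/2}$ with $B\succeq\kappa_2(\gamma)I_d$, exactly as in the proof of \Cref{claim:subgaussian:M2invX}. Hence $M_2\succeq\kappa_2(\gamma)\Sigma$, i.e.\ $\Sigma\preceq M_2/\kappa_2(\gamma)\preceq\frac{2}{\kappa_2(\gamma)}I_d$.

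The only point needing care is bookkeeping, not any single hard estimate. We must use fresh samples after preconditioning, to avoid dependence between $\widehat{M}_2$ and the later estimators. We must also check that the invariance of $\gamma$ and of the error metrics legitimately transfers the final guarantee back to the original coordinates. The concentration step is routine given \Cref{claim:subgaussian:M2invX} and \Cref{claim:var-est}.
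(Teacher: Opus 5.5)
Your proof is correct and follows the paper's route: precondition by the sample covariance, control it spectrally using the $1/\sqrt{\kappa_2(\gamma)}$ sub-Gaussianity of $M_2^{-1/2}X$, conjugate to bound the new $M_2$, bound $\Sigma$ via $\kappa_2(\gamma)\Sigma\preceq M_2\preceq\Sigma$, and invoke affine invariance for the ``without loss of generality''. The one difference is that you target spectral error $1/2$ rather than the paper's $\kappa_2(\gamma)/2$, which is all the paper's subsequent argument actually uses, so you need only $d\log^2(1/\alpha)$ samples instead of $d\log^4(1/\alpha)$.
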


Using \Cref{lem:preconditioner}, we can obtain an inverse for $M_2$ without incurring a condition number factor in the sample complexity.
Throughout the rest of the paper, we will work with the assumption that we have preconditioned the data as described in this lemma.
The proof follows directly from the following two claims.

\begin{claim}\label{cor:rough-covariance-est}
    Given $n = \Theta(d\cdot\max\{1,\log^4 1/\alpha\})$ i.i.d. samples from $\truncatedGauss{\mu}{\Sigma}{\iprod{w,x}\leq\tau}$, with high probability the sample covariance $\widetilde{M}_2=\frac{1}{n}\sum_{i=1}^{n}(x_i-\bar{x})(x_i-\bar{x})^{\top}$ satisfies
    \[
    \Norm{I_d-M_2^{-1/2}\widetilde{M}_2M_2^{-1/2}}\leq \frac{\kappa_2(\gamma)}{2}.
    \]
\end{claim}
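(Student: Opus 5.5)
The plan is to apply the sub-Gaussian covariance concentration bound, \Cref{fact:random-matrix-concentration}, to the whitened samples $M_2^{-1/2}x_i$. These have identity covariance. By \Cref{claim:subgaussian:M2invX}, each $M_2^{-1/2}X$ is $r$-sub-Gaussian with $r = 1/\sqrt{\kappa_2(\gamma)}$. The matrix $M_2^{-1/2}\widetilde{M}_2M_2^{-1/2}$ is exactly the sample covariance of the whitened samples. So \Cref{fact:random-matrix-concentration}, applied with covariance $M_2$ and failure probability $\delta$ a small constant (or $1/\poly(d)$ for ``high probability''), gives
\[
\Norm{I_d-M_2^{-1/2}\widetilde{M}_2M_2^{-1/2}} \lesssim \frac{1}{\kappa_2(\gamma)}\Paren{\sqrt{\frac{d+\log 1/\delta}{n}}+\frac{d+\log 1/\delta}{n}}.
\]

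Next I would demand that the right-hand side be at most $\kappa_2(\gamma)/2$. For $n \geq d+\log(1/\delta)$ the square-root term dominates, so it suffices that
\[
\sqrt{\frac{d+\log 1/\delta}{n}} \leq c\,\kappa_2(\gamma)^2
\]
for a small absolute constant $c$, i.e.
\[
n \gtrsim \frac{d+\log 1/\delta}{\kappa_2(\gamma)^4}.
\]

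Finally, I would turn this into a condition on $\alpha$ using \Cref{claim:var-est} together with $\alpha=\Phi(\gamma)$ from \Cref{lem:first3moments}. These give
\[
\kappa_2(\gamma)\geq \min\{1/20,\ 1/(8\log 1/\alpha)\},
\]
hence
\[
1/\kappa_2(\gamma)^4 \lesssim \max\{1,\log^4 1/\alpha\}.
\]
Therefore $n=\Theta(d\max\{1,\log^4 1/\alpha\})$ suffices, with the constant in $\Theta$ chosen large enough and $\log(1/\delta)\lesssim d$ absorbed. This matches the claimed sample size exactly.

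I do not expect a real obstacle here. The one point to check is that \Cref{fact:random-matrix-concentration} is stated with the sub-Gaussian parameter of $\Sigma^{-1/2}y_i$ for the \emph{true} covariance of the samples. Here that covariance is $M_2$, not $\Sigma$, which is why the second half of \Cref{claim:subgaussian:M2invX} (parameter $1/\sqrt{\kappa_2}$) is the right input. This also explains the fourth power: one factor of $\kappa_2^{-2}$ comes from $r^2$, and the other from the target accuracy $\kappa_2/2$, after squaring.
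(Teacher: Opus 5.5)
Your proposal is correct and follows the paper's proof. You apply \Cref{fact:random-matrix-concentration} to samples whose covariance is $M_2$, use the sub-Gaussian parameter $1/\sqrt{\kappa_2(\gamma)}$ of $M_2^{-1/2}X$ from \Cref{claim:subgaussian:M2invX}, require $\sqrt{d/n}\lesssim\kappa_2(\gamma)^2$, and convert $\kappa_2(\gamma)^{-4}$ into $\max\{1,\log^4 1/\alpha\}$ via \Cref{claim:var-est}; your bookkeeping (whitening by $M_2^{-1/2}$, and setting the accuracy to a small constant times $\kappa_2(\gamma)^2$) is slightly cleaner than the paper's write-up.
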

\begin{claim}\label{claim:preconditioner}
    Let $\widetilde{M}_2\succ 0$ satisfy $\Norm{I_d-M_2^{-1/2}\widetilde{M}_2M_2^{-1/2}}\leq \frac{\kappa_2(\gamma)}{2}$. Then, without loss of generality, we can assume that $\frac{1}{2}I_d\preceq M_2\preceq 2I_d$  and $\frac{1}{2}I_d\preceq\Sigma\preceq \frac{2}{\kappa_2(\gamma)}I_d$. 
\end{claim}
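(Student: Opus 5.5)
The plan is to precondition with $\widetilde{M}_2^{-1/2}$. We replace every sample $x_i$ by $y_i = \widetilde{M}_2^{-1/2} x_i$. By \Cref{claim:affine-transform} with $A = \widetilde{M}_2^{-1/2}$ and $b=0$, the $y_i$ are samples from a halfspace truncated Gaussian $\truncatedGauss{A\mu}{A\Sigma A^\top}{\iprod{w',y}\leq\tau'}$. So the new problem has the same form. Learning $(A\mu, A\Sigma A)$ to error $\varepsilon$ in the affine-invariant metrics is equivalent to learning $(\mu,\Sigma)$, because $\norm{\Sigma^{-1/2}(\hat\mu-\mu)}$ and $\norm{\Sigma^{-1/2}\widehat{\Sigma}\Sigma^{-1/2}-I_d}_F$ are invariant under invertible linear maps applied to both the estimates and the truth. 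The relative truncation parameter is also unchanged. One can check this from the formula in \Cref{lem:first3moments}, or more simply note that $\alpha=\Phi(\gamma)$ and the mass of the halfspace is preserved under a bijective linear map. Consequently $\kappa_2(\gamma)$ is the same before and after the transformation. This is the content of ``without loss of generality''.

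Next I check the bounds for the transformed moments. The new second central moment is $M_2' = A M_2 A$. The hypothesis $\norm{I_d - M_2^{-1/2}\widetilde{M}_2 M_2^{-1/2}}\le \kappa_2(\gamma)/2 \le 1/2$ (using \Cref{claim:bound-on-kappa2}) gives
\[
\Paren{1-\tfrac{\kappa_2(\gamma)}{2}} M_2 \preceq \widetilde{M}_2 \preceq \Paren{1+\tfrac{\kappa_2(\gamma)}{2}} M_2 .
\]
Conjugating by $\widetilde{M}_2^{-1/2}$ turns this into $(1+\kappa_2/2)^{-1} I_d \preceq M_2' \preceq (1-\kappa_2/2)^{-1} I_d$. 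This places $M_2'$ between $\tfrac{2}{3}I_d$ and $2I_d$, which is the first claim.

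For $\Sigma' = A\Sigma A$, \Cref{lem:first3moments} gives $M_2' = \Sigma' - (1-\kappa_2(\gamma))vv^\top$ with $v = \Sigma' w'/\norm{\Sigma'^{1/2}w'}$. Hence $\Sigma' \succeq M_2' \succeq \tfrac12 I_d$ immediately. For the upper bound I would use the factorization $M_2' = \Sigma'^{1/2} B \Sigma'^{1/2}$ from the proof of \Cref{claim:subgaussian:M2invX}. There $B$ has eigenvalues $1$ and $\kappa_2(\gamma)$, so $B \succeq \kappa_2(\gamma) I_d$ and therefore $M_2' \succeq \kappa_2(\gamma)\Sigma'$. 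Combining this with $M_2'\preceq 2I_d$ gives $\Sigma' \preceq \tfrac{2}{\kappa_2(\gamma)} I_d$.

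The main subtlety is not the algebra but justifying the ``without loss of generality'' reduction. The preconditioner is computed from data, so I must use fresh samples for the rest of the algorithm, so that the later samples remain i.i.d.\ from the transformed truncated Gaussian conditionally on $\widetilde{M}_2$. I also need to confirm that the transformed instance has the same $\gamma$, and hence the same $\alpha$, $\kappa_2$ and $\kappa_3$, so that all later sample complexity bounds are unaffected. Finally, the output must be mapped back via $\hat\mu = \widetilde{M}_2^{1/2}\hat\mu'$ and $\widehat\Sigma = \widetilde{M}_2^{1/2}\widehat\Sigma'\widetilde{M}_2^{1/2}$, and I need to check that this preserves the error guarantees.
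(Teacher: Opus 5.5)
Your proposal is correct and follows the paper's proof. You precondition by $\widetilde{M}_2^{-1/2}$, conjugate the Loewner sandwich to get $\tfrac{2}{3}I_d \preceq M_2' \preceq 2I_d$, use $\kappa_2(\gamma)\Sigma' \preceq M_2' \preceq \Sigma'$ to bound $\Sigma'$, and undo the transformation via affine invariance of the error metrics; your extra remarks on the invariance of $\gamma$ under the linear map and on using fresh samples after computing $\widetilde{M}_2$ make explicit points the paper leaves implicit.
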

\newcommand{\factaffineinvariancetext}{
    Let $A,B\in\R^{d\times d}$ be positive definite matrices. Let $L\in\R^{d\times d}$ be an invertible matrix. Then, we have that
    \[
        \Norm{I_d-A^{-\frac{1}{2}}B A^{-\frac{1}{2}}}_F=\Norm{I_d-(LAL^\top)^{-\frac{1}{2}}LBL^\top(LAL^\top)^{-\frac{1}{2}}}_F
    \]
}

\begin{proof}[Proof of \Cref{cor:rough-covariance-est}]
    As long as we have $n = \Omega(d/\beta^2)$ samples, we know from \Cref{fact:random-matrix-concentration} that
    \[
        \norm{I_d-M_2^{-1/2}\widetilde{M}_2M_2^{-1/2}}\lesssim \frac{1}{\kappa_2(\gamma)}\beta,
    \]
    since for every sample $x_i$, $M_2^{-1}x_i$ is $1/\sqrt{\kappa_2(\gamma)}$-sub-Gaussian by \Cref{claim:subgaussian:M2invX}.
    Setting $\beta = \Omega(\kappa_2(\truncpara)^2)$ and using \Cref{claim:var-est} which lower bounds $\kappa_2(\gamma)$ in terms of $\alpha$ as $\kappa_2(\truncpara) \geq \min\Set{1/20, \frac{1}{8\log 1/\alpha}}$, we obtain the result.
\end{proof}

\begin{proof}[Proof of \Cref{claim:preconditioner}]
    The idea of our preconditioning is to left-multiply the samples by $\widetilde{M}_2^{-1/2}$, then apply our algorithm and in the end undo the preconditioning by changing the estimates appropriately.
    We will first argue that the moments after this preconditioning satisfy what we claim and then explain how to undo the preconditioning while keeping the error the same.

    \textbf{Moments are well-conditioned.}
    By \Cref{claim:affine-transform}, the covariance of the untruncated Gaussian after this preconditioning is $\Sigma' = \widetilde{M}_2^{-1/2} \Sigma \widetilde{M}_2^{-1/2}$.
    The second central moment of the truncated Gaussian is $M_2' = \widetilde{M}_2^{-1/2}M_2\widetilde{M}_2^{-1/2}$.
    We want to argue that $\frac{1}{2}I_d\preceq M_2'\preceq 2I_d$ and $\frac{1}{2}I_d\preceq\Sigma'\preceq \frac{2}{\kappa_2(\gamma)}I_d$.
    From our assumption, we get that (since $\kappa_2(\gamma) \leq 1$ by \Cref{claim:bound-on-kappa2})
    \[
        \frac{1}{2}I_d\preceq M_2^{-1/2}\widetilde{M}_2M_2^{-1/2}\preceq\frac{3}{2}I_d.
    \]
    Left multiplying by $\widetilde{M}_2^{-1/2}M_2^{1/2}$ and right multiplying by the transpose of this matrix (which preserves the Loewner order), we get
    \[
        \frac{1}{2} I_d \preceq \frac{2}{3}I_d \preceq M_2' = \widetilde{M}_2^{-1/2}M_2\widetilde{M}_2^{-1/2} \preceq 2 I_d,
    \]
    which proves that $M_2'$ is well-conditioned.
    For $\Sigma'$, recall from \Cref{lem:first3moments} that
    \begin{align*}
        M_2&=\Sigma-(1-\kappa_2(\gamma))\cdot\Sigma^{1/2}\Paren{\frac{\Sigma^{1/2} w}{\Norm{\Sigma^{1/2}w}} \Paren{\frac{\Sigma^{1/2} w}{\Norm{\Sigma^{1/2} w}}}^\top}\Sigma^{1/2}\\
        &= \Sigma^{1/2}\Paren{I_d - (1-\kappa_2(\gamma))\cdot\frac{\Sigma^{1/2} w}{\Norm{\Sigma^{1/2}w}} \Paren{\frac{\Sigma^{1/2} w}{\Norm{\Sigma^{1/2} w}}}^\top}\Sigma^{1/2}.
    \end{align*}
    We have
    \[
        \kappa_2(\gamma) I_d \preceq I_d - (1-\kappa_2(\gamma))\cdot\frac{\Sigma^{1/2} w}{\Norm{\Sigma^{1/2}w}} \Paren{\frac{\Sigma^{1/2} w}{\Norm{\Sigma^{1/2} w}}}^\top \preceq I_d
    \]
    and thus, by left multiplying by $\widetilde{M}_2^{-1/2}\Sigma^{1/2}$ and right multiplying by the transpose of this matrix, we get
    \[
        \kappa_2(\gamma) \Sigma' \preceq M_2' \preceq \Sigma'.
    \]
    Hence, it follows that
    \[
        \frac{1}{2} I_d \preceq M_2' \preceq \Sigma' \preceq \frac{1}{\kappa_2(\gamma)} M_2' \preceq \frac{2}{\kappa_2(\gamma)} I_d.
    \]

    \textbf{Undoing the preconditioning.}
    Assume we have estimates $\hat{\mu}_T$ and $\widehat{\Sigma}_T$ in the transformed space.
    We claim that $\hat{\mu} \coloneqq \widetilde{M}_2^{1/2} \hat{\mu}_T$ and $\widehat{\Sigma} \coloneqq \widetilde{M}_2^{1/2}\widehat{\Sigma}_T\widetilde{M}_2^{1/2}$ have the same error in the original space as $\hat{\mu}_T$ and $\widehat{\Sigma}_T$ have in the transformed space.
    This relies on the fact that the Mahalanobis norm, in which we measure the error, is affine-invariant, as the following claim shows.
    We defer the proof to \Cref{APP:facts:matrixvectorcalc}.
    \begin{fact}\label{fact:affine-invariance}\factaffineinvariancetext
    \end{fact}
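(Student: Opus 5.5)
The plan is to reduce the Frobenius quantity to a statement about eigenvalues. I would first simplify the left-hand side, then transform the right-hand side by a similarity transformation so that it is visibly the same eigenvalue expression.

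For the left-hand side, note that $\|I_d - A^{-1/2}BA^{-1/2}\|_F^2 = \sum_i (1-\lambda_i)^2$. Here the $\lambda_i$ are the eigenvalues of the symmetric matrix $A^{-1/2}BA^{-1/2}$. This matrix is similar to $A^{-1}B$, since $A^{-1}B = A^{-1/2}(A^{-1/2}BA^{-1/2})A^{1/2}$. So the left-hand side depends only on the spectrum of $A^{-1}B$.

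For the right-hand side, set $A' = LAL^\top$ and $B' = LBL^\top$. Both are positive definite because $L$ is invertible. By the same argument, the right-hand side equals $\sqrt{\sum_i(1-\lambda_i')^2}$, where the $\lambda_i'$ are the eigenvalues of $A'^{-1}B'$. Now compute
\[
A'^{-1}B' = L^{-\top}A^{-1}L^{-1}LBL^\top = L^{-\top}(A^{-1}B)L^\top .
\]
This is similar to $A^{-1}B$, so the two spectra coincide with multiplicity. The two sums of squares are therefore equal.

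Two details need care. First, the eigenvalues of $A^{-1}B$ are real and positive, which holds because $A^{-1}B$ is similar to a symmetric positive definite matrix. Second, the Frobenius norm of a symmetric matrix is the $\ell_2$ norm of its eigenvalues, which is what justifies the identity $\|I_d - A^{-1/2}BA^{-1/2}\|_F^2 = \sum_i (1-\lambda_i)^2$. There is no real obstacle here: everything reduces to the similarity invariance of the spectrum.
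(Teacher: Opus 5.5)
Your proof is correct, but it takes a different route from the paper.

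The paper never passes through eigenvalues. It sets $X = LA^{1/2}$ and $R = (XX^\top)^{-1/2}X$, which is the orthogonal factor in the polar decomposition of $X$. It then observes that $LAL^\top = XX^\top$ and $LBL^\top = X\,(A^{-1/2}BA^{-1/2})\,X^\top$. This rewrites the right-hand matrix as $R\,(I_d - A^{-1/2}BA^{-1/2})\,R^\top$, and invariance of $\|\cdot\|_F$ under orthogonal conjugation finishes the proof.

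Your approach treats the quantity as a function of the spectrum of $A^{-1}B$. This makes invariance under the congruence $(A,B)\mapsto(LAL^\top, LBL^\top)$ transparent, since $A^{-1}B \mapsto L^{-\top}(A^{-1}B)L^\top$ is a similarity. The cost is that you need $B$ to be symmetric, so that the Frobenius norm equals the $\ell_2$ norm of the eigenvalues.

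The paper's explicit conjugation works verbatim for any $B$, symmetric or not. It also shows directly that the two matrices are orthogonally similar, so every unitarily invariant norm is preserved. Under the hypotheses of the fact, your argument yields the same conclusion, because two symmetric matrices with equal spectra are orthogonally similar. So nothing is lost in this setting.
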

    By \Cref{fact:affine-invariance} for $A = \Sigma$, $B = \widetilde{\Sigma}$ and $L = \widetilde{M}^{-1/2}$, we have
    \[
        \Norm{I_d - \Sigma^{-1/2}\widetilde{\Sigma}\Sigma^{-1/2}}_F = \Norm{I_d - \Sigma'^{-1/2}\widetilde{\Sigma}_T\Sigma'^{-1/2}}_F.
    \]
    Furthermore, the mean after preconditioning is $\mu' = \widetilde{M}_2^{-1/2} \mu$ by \Cref{claim:affine-transform} and thus we also have
    \begin{align*}
        &\Norm{\Sigma'^{-1/2}(\hat{\mu}_T-\mu')}^2\\
        &\qquad= \Iprod{\Paren{\widetilde{M}_2^{-1/2} \Sigma \widetilde{M}_2^{-1/2}}^{-1/2}\widetilde{M}_2^{-1/2}(\hat{\mu}-\mu),\Paren{\widetilde{M}_2^{-1/2} \Sigma \widetilde{M}_2^{-1/2}}^{-1/2}\widetilde{M}_2^{-1/2}(\hat{\mu}-\mu)}\\
        &\qquad=\Iprod{\widetilde{M}_2^{-1/2}\Paren{\widetilde{M}_2^{-1/2} \Sigma \widetilde{M}_2^{-1/2}}^{-1}\widetilde{M}_2^{-1/2}(\hat{\mu}-\mu),(\hat{\mu}-\mu)}\\
        &\qquad=\Iprod{\Sigma^{-1}(\hat{\mu}-\mu), \hat{\mu}-\mu} = \Norm{\Sigma^{-1/2}(\hat{\mu}-\mu)}^2.
    \end{align*}
    Thus, both the mean estimator $\hat{\mu}$ and the covariance estimator $\widehat{\Sigma}$ have the same error as $\hat{\mu}_T$ and $\widehat{\Sigma}_T$.
    
    Hence, we can indeed precondition the samples by left-multiplying them by $\widetilde{M}_2^{-1/2}$, then run our algorithm on the well-conditioned instance and output $\hat{\mu}$ and $\widehat{\Sigma}$.
    The error does not change by this and thus we can without loss of generality assume that our instance is well-conditioned.
\end{proof}

%% file: content/algorithms.tex
\section{Proof of the main result}
\label{sec:fullproof}

In this section, we prove our main result, \Cref{THM:INTRO:main}, that we restate below.

\begin{theorem}[Full statement of \Cref{THM:INTRO:main}]\label{thm:main}
    Consider a halfspace truncated Gaussian $\truncatedGauss{\mu}{\Sigma}{\iprod{w,x}\leq\tau}$, where $\Sigma\succ 0$.
    Let $\alpha$ be the mass of the truncation set.
    Then, given 
    \[
        n = \begin{cases} O\Paren{\frac{d^2}{\epsilon^2}}\polylog(\alpha^{-1}) & \text{if $\alpha\leq 0.99$}\\ O\Paren{\frac{d^2}{\epsilon^4}} & \text{if $\alpha>0.99$} \end{cases}
    \]
    samples, \Cref{algo:main} computes estimators $\hat{\mu},\widehat{\Sigma}$ such that with probability at least $0.99$
    \[
        \TVdist\Paren{\cN(\hat{\mu},\widehat{\Sigma}),\cN(\mu,\Sigma)}\leq \epsilon
    \]
    The runtime of the algorithm is $O(T(n,d))$, where $T(n,d)$ is the time needed to multiply a $d \times n$ matrix with its transpose.
\end{theorem}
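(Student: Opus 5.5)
We follow the structure of \Cref{algo:intro:main} and make it quantitative. First, by \Cref{lem:preconditioner} we spend $O(d\log^4(1/\alpha))$ samples to precondition, so we may assume $\frac12 I_d\preceq M_2\preceq 2I_d$ and $\frac12 I_d\preceq \Sigma\preceq \frac{2}{\kappa_2(\gamma)}I_d$. By \Cref{lem:first3moments}, $M_3=(u^*)^{\otimes 3}$ with $u^*=\kappa_3(\gamma)^{1/3}\Sigma w/\norm{\Sigma^{1/2}w}$. We will use three fresh batches of samples: one to estimate $\tilde u=\Sigma w/\norm{\Sigma w}$, one to estimate $\widehat M_2$ and the empirical mean $\hat\mu_t$, and one to estimate the scalar $\iprod{\hat u^{\otimes 3},M_3}$. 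By \Cref{lem:first3moments}, the targets in the transformed space satisfy
\[
    \mu=\mu_t-\psi(\gamma)\,(\tilde u^\top M_2^{-1}\tilde u)^{-1/2}\tilde u
\]
and
\[
    \Sigma=M_2+\Paren{\tfrac1{\kappa_2(\gamma)}-1}(\tilde u^\top M_2^{-1}\tilde u)^{-1}\tilde u\tilde u^\top,
\]
where $\psi=\kappa_1/\sqrt{\kappa_2}$. The final TV bound then follows from the Mahalanobis criterion cited in \Cref{sec:techniques}.

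\textbf{Step 1: estimating $\tilde u$.} We draw a Gaussian $g$ and form $\widehat T=\frac1n\sum_i\iprod{g,x_i-\bar x}(x_i-\bar x)(x_i-\bar x)^\top$, whose expectation is essentially $\iprod{g,u^*}u^*{u^*}^\top$. Gaussian anti-concentration (\Cref{fact:gaussianspread}) makes $|\iprod{g,u^*}|\gtrsim\norm{u^*}$ with constant probability. A spectral-norm concentration bound for $\widehat T-\E\widehat T$ follows from the sub-Gaussianity of \Cref{claim:subgaussian:M2invX}, a net argument, and \Cref{corollary:subgaussiancubeconcentration}, giving error $O(\sqrt{d/n}\cdot\poly(1/\kappa_2))$. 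Davis--Kahan then yields $\norm{\hat u-\tilde u}\lesssim \sqrt{d/n}/|\kappa_3(\gamma)|$, up to $\poly(1/\kappa_2)$ factors. In fact we only need $\norm{\hat u-\tilde u}\le O_\alpha(\epsilon/\sqrt d)$ in the direction that matters. Because the output is a rank-one correction along $\hat u$, the Frobenius error scales like $\norm{\hat u-\tilde u}$. So requiring $\norm{\hat u-\tilde u}\le\epsilon\cdot\poly(\kappa_2,|\kappa_3|)$ costs $n=d\,\poly/\epsilon^2$, and we expect Step 1 to be no worse than $d^2/\epsilon^2$. \Cref{fact:random-matrix-concentration} gives $\norm{M_2^{-1/2}\widehat M_2M_2^{-1/2}-I}_F\le\epsilon\,\poly(\kappa_2)$ with $O(d^2/\epsilon^2)\poly(1/\kappa_2)$ samples. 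Since $M_2$ is well-conditioned, this also controls $\widehat M_2^{-1}$.

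\textbf{Step 2: estimating $\gamma$, then $\mu$ and $\Sigma$.} We estimate $s=\iprod{\hat u^{\otimes3},M_3}$ from the one-dimensional sample third central moment of $\iprod{\hat u,x_i}$ via \Cref{corollary:subgaussiancubeconcentration}, to additive error $\epsilon\poly(\kappa_2)$. We then set $\widehat{\sk}=s\,(\hat u^\top\widehat M_2^{-1}\hat u)^{3/2}$, clip it to $[-2,0]$ using \Cref{claim:skew-atmost2}, and invert $\sk$ by binary search, which is valid by \Cref{lemma:monotonicity-of-skewness}. We avoid bounding $|\hat\gamma-\gamma|$, which is ill-conditioned. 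Instead, \Cref{CLAIM:ddtskewoverddtkappa2} and \Cref{CLAIM:ddtskewoverddtpsi} directly convert $|\widehat{\sk}-\sk(\gamma)|$ into errors on $\kappa_2$ and $\psi$, losing only $\poly(1/\kappa_2)$. The error analysis of the two displayed formulas is then routine perturbation in the well-conditioned space: the quantities $\tilde u^\top M_2^{-1}\tilde u$, $1/\kappa_2$ and $\psi$ are all bounded by $\poly(1/\kappa_2)$ using \Cref{claim:kappa1kappa2bounded}.

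\textbf{Step 3: sample complexity cases (main obstacle).} If $\alpha\le0.99$, then \Cref{claim:var-est} and \Cref{claim:skew-est} give $1/\kappa_2,1/|\kappa_3|\le\polylog(1/\alpha)$, which yields $d^2\polylog/\epsilon^2$. If $\alpha>0.99$, then $\gamma\ge\Omega(1)$ and $|\kappa_3|$ can vanish. We handle this case by noting that for $1-\alpha\le\tilde O(\epsilon)$, \Cref{claim:boundkappa1bykappa2} and \Cref{claim:one-minus-k2} make the correction terms $|\kappa_1|$ and $1-\kappa_2$ of size $O(\epsilon)$, so the empirical mean and covariance already suffice. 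We must then argue the algorithm's output stays close to them even when $\hat u$ is garbage, using that the correction magnitude is controlled by $1-\kappa_2(\hat\gamma)$, and $\hat\gamma$ is large when $\widehat{\sk}\approx0$. In the intermediate regime \Cref{claim:skew-est} gives $|\kappa_3|\gtrsim\epsilon$, so $n=d^2/\epsilon^4$. Making this case split seamless for a single algorithm is where we expect the most care is needed. Finally, the runtime is $O(T(n,d))$ as argued in \Cref{sec:techniques}: the contraction, the covariance, and the preconditioning are each matrix products of this form, and the remaining operations are $O(d^3)$ or cheaper.
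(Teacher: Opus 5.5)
Your route is the paper's: a random contraction to find the direction of $\Sigma w$ (using Davis--Kahan instead of \Cref{claim:best_rank-1_approximation,claim:close-upto-sign} is fine), inversion of the monotone $\sk$, conversion of the skewness error into errors on $\kappa_2$ and $\psi$ via \Cref{CLAIM:ddtskewoverddtkappa2,CLAIM:ddtskewoverddtpsi}, and a three-regime split in which the corrections vanish as $\alpha\to1$. However, one step fails as written: you never fix the sign of $\hat u$. An eigenvector is only determined up to sign, so Davis--Kahan controls only $\min_{\pm}\norm{\hat u\mp\tilde u}$. Both your skewness estimate and your mean formula are sign-sensitive. With your convention $\tilde u=\Sigma w/\norm{\Sigma w}$ you have $\iprod{\tilde u^{\otimes 3},M_3}=-\norm{u^*}^3<0$. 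If the eigensolver returns $\hat u\approx-\tilde u$, then $s\approx+\norm{u^*}^3>0$ and your clipping sets $\widehat{\sk}=0$. Hence $\hat\gamma=+\infty$, $\kappa_1(\hat\gamma)=0$, $\kappa_2(\hat\gamma)=1$, and the output is just the truncated sample mean and covariance. These are at constant TV distance from $\cN(\mu,\Sigma)$ whenever $\alpha\le 0.99$. You need the paper's step~7: orient $\hat u$ by the sign of $\iprod{\hat u^{\otimes 3},\widehat M_3}$. That test is reliable only once the errors are below $c\norm{u^*}^3\gtrsim\abs{\kappa_3(\gamma)}$. This is the extra constraint $\beta\lesssim\abs{\kappa_3(\gamma)}$ in \Cref{lem:unified}, and it is the source of its $d^2/(\kappa_2^3\kappa_3^4)$ term. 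Relatedly, \Cref{thm:main} is about \Cref{algo:main}, which uses a single sample set, so your three-batch splitting analyzes a variant. The splitting is also unnecessary: the injective-norm bound of \Cref{lem:tensor-spectral-concentration} controls $\iprod{\hat u^{\otimes 3},\widehat M_3-M_3}$ uniformly over data-dependent $\hat u$.

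Two further points need fixing. First, your Step~1 rate is off by a factor $\sqrt d$. Since $\norm{g}\approx\sqrt d$, the perturbation $\widehat T-\E\widehat T$ has spectral norm of order $\norm{g}\cdot\sup_{\norm{v}=1}\iprod{v^{\otimes 3},\widehat M_3-M_3}\approx d/\sqrt n$ up to $\poly(1/\kappa_2)$, not $\sqrt{d/n}$. This is the true order: already for an untruncated standard Gaussian, $\E\norm{\widehat T}^2\ge\norm{\E\widehat T^2}\gtrsim\norm{g}^2 d/n$. So Step~1 needs $d^2/(\epsilon^2\kappa_3^2)$ samples, not $d/\epsilon^2$. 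This leaves the $\alpha\le0.99$ bound intact, but it is exactly what produces $d^2/\epsilon^4$ in the intermediate regime. Under your stated rates that regime would cost only $d^2/\epsilon^2+d/\epsilon^4$, so your accounting there is inconsistent.

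Second, the regime $1-\alpha\le\tilde O(\epsilon)$ is only a plan. The quantitative link you need is \Cref{claim:skew-asymptotic}, $\abs{\sk(t)}\asymp(1-\Phi(t))\log^{3/2}\frac{1}{1-\Phi(t)}$. Compare it with the bounds for $t\ge2$: $1-\kappa_2(t)\lesssim(1-\Phi(t))\log\frac{1}{1-\Phi(t)}$ and $\abs{\kappa_1(t)}\lesssim(1-\Phi(t))\log^{1/2}\frac{1}{1-\Phi(t)}$. With the paper's threshold $1-\alpha\le\epsilon/\log(1/\epsilon)$, the naive chain $1-\kappa_2(\hat\gamma)\lesssim\abs{\sk(\hat\gamma)}\lesssim\abs{\sk(\gamma)}+\text{error}$ only gives $O(\epsilon\sqrt{\log(1/\epsilon)})$. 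You must either compare $\Phi(\hat\gamma)$ with $\alpha$ as in the footnote to \Cref{lem:unifiedalphacloseto1}, or lower the threshold to $\epsilon/\log^{3/2}(1/\epsilon)$. Above the lowered threshold, \Cref{claim:skew-est} still gives $\abs{\kappa_3(\gamma)}\gtrsim\epsilon$. In this regime your clipping does make the sign of $\hat u$ irrelevant, since every unit $\hat u$ gives $\abs{s}\le\norm{u^*}^3+\text{error}$.
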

\begin{algorithm}[!ht]
\caption{Algorithm for estimating the mean and covariance of a halfspace truncated Gaussian}
\label{algo:main}
\KwIn{Dimension $d$, sample size $n$, samples $\{x_i\}_{i=1}^{n}$, target accuracy $\varepsilon$.}
\KwOut{Estimated mean $\hat{\mu} \in \mathbb{R}^d$ and covariance $\widehat{\Sigma} \in \mathbb{R}^{d \times d}$.}

\BlankLine

1. Compute sample mean: $\bar{x} = \frac{1}{n}\sum_{i=1}^{n} x_i$.\;

\BlankLine

2. Compute sample second central moment: $\widehat{M}_2 = \frac{1}{n}\sum_{i=1}^{n}(x_i - \bar{x})^{\otimes 2}$.\;

\BlankLine

3. Draw a standard Gaussian vector: $g \sim \mathcal{N}(0, I_d)$.\;

\BlankLine

4. Compute the random contraction: $R = (I_d\otimes I_d\otimes g^\top) \cdot \widehat{M}_3 = \frac{1}{n}\sum_{i=1}^{n}(x_i-\bar{x})^{\otimes 2}\cdot \iprod{g,x_i-\bar{x}}$.

\BlankLine

5. Compute $\hat{u}$, the eigenvector of $R$ corresponding to the largest eigenvalue in absolute value (with unit norm).

\BlankLine

6. Estimate the relative truncation parameter: $\hat{\gamma}=\sk^{-1}\Paren{-\Paren{\hat{u}^\top\widehat{M}_2^{-1}\hat{u}}^{3/2}\cdot\Abs{\Iprod{\hat{u}^{\otimes 3},\widehat{M}_3}}}$.

\BlankLine

7. Flip the sign of $\hat{u}$ if $\Iprod{\hat{u}^{\otimes 3}, \widehat{M}_3}<0$.

\BlankLine

8. Estimate the mean of the Gaussian: $\hat{\mu}=\bar{x}+\frac{\kappa_1(\hat{\gamma})}{\sqrt{\kappa_2(\hat{\gamma})}}\Paren{\hat{u}^\top\widehat{M}_2^{-1}\hat{u}}^{-1/2}\hat{u} $.

\BlankLine

9. Estimate the covariance of the Gaussian: $\widehat{\Sigma}=\widehat{M}_2+\Paren{\frac{1}{\kappa_2(\hat{\gamma})}-1}\cdot\Paren{\hat{u}^\top\widehat{M}_2^{-1}\hat{u}}^{-1}\hat{u}\hat{u}^\top$.

\BlankLine

10. Return estimates $\left(\hat{\mu}, \widehat{\Sigma}
\right)$.

\end{algorithm}

\begin{remark}
    We do not need to know $\alpha$ to run our algorithm. It is sufficient that we get as input a lower bound $\alpha_0$ on $\alpha$. Then we can replace $\alpha$ by $\alpha_0$ in the sample complexity bound of \Cref{thm:main} (which makes the sample complexity bound only larger) and run the algorithm with $O\Paren{d^2/\epsilon^2 \cdot \polylog(\alpha_0^{-1}) + d^2/\varepsilon^4}$ samples.
    If we additionally know whether $\alpha \leq 0.99$ or not, we can run the algorithm with $O\Paren{d^2/\epsilon^2 \cdot \polylog(\alpha_0^{-1})}$ or $O\Paren{d^2/\varepsilon^4}$ samples respectively.
\end{remark}

Note that we argued in \Cref{sec:techniques} that if we would have access to the exact moments, then the estimators $\hat{\mu}$ and $\widehat{\Sigma}$ are equal to $\mu$ and $\Sigma$ respectively.
In this section, we prove that even if we only have access to the sample moments, the above estimators work.
The proof of this theorem is structured as follows:
\begin{itemize}[noitemsep]
    \item In \Cref{SEC:fullproof:directionofsigmaw}, we prove that $\hat{u}$ (after the potential sign flip) is good estimator for $-\frac{\Sigma w}{\norm{\Sigma w}}$, assuming that $\widehat{M}_3$ is a good estimator for $M_3$.
    \item In \Cref{SEC:fullproof:relativetruncpara}, we prove that $\hat{\gamma}$ is a good estimator for $\gamma$, given that we have a good estimator for $\hat{u}$ and that $\widehat{M}_2$ is a good estimator for $M_2$.
    \item In \Cref{SEC:fullproof:meancovariance}, we prove that $\hat{\mu}$ and $\widehat{\Sigma}$ are good estimators for $\mu$ and $\Sigma$, given that $\widehat{M}_2$, $\hat{u}$ and $\hat{\gamma}$ are good estimators.
    \item Finally, in \Cref{SEC:fullproof:puttingtogether}, we put everything together: We determine the sample complexity such that $\widehat{M}_2$ and $\widehat{M}_3$ are good enough estimators such that the lemmas proved in \Cref{SEC:fullproof:directionofsigmaw,SEC:fullproof:relativetruncpara,SEC:fullproof:meancovariance} imply the guarantees of \Cref{thm:main}.
\end{itemize}

We remark that the assumption $\Sigma\succ 0$ is not necessary. Suppose $\Sigma$ is of rank $0<r<d$, then all samples would live in a subspace of $\R^d$ of dimension $r$. One can identify this subspace with $O(d)$ samples with high probability, and this reduces the original problem to an $r$-dimensional problem via an appropriate linear transform (e.g., an orthogonal projection from $\R^d$ to the subspace).

\subsection{\texorpdfstring{Recovering the direction of $\Sigma w$}{Recovering the direction of Σw}}\label{SEC:fullproof:directionofsigmaw}
In order to find the direction of $\Sigma w$, a natural attempt would be to consider the tensor analogue of low-rank approximations of the sample third moment $\widehat{M}_3$. However, unlike the matrix case, it is NP-hard to find the best rank-1 approximation of a general symmetric order-3 tensor \cite{hillar2013most}. To bypass this computational barrier, we reduce the problem to that of finding the eigenvector corresponding to the largest absolute eigenvalue in a certain matrix. We obtain such a matrix by contracting the tensor with a random vector. Throughout this section we will use the notation $u^{\ast} = \kappa_3(\gamma)^{1/3}\frac{\Sigma w}{\norm{\Sigma^{1/2}w}}$. It then follows that the population third central moment $M_3=(u^{\ast})^{\otimes 3}$.
Our estimator $\hat{u}$ will approximate $\frac{u^*}{\norm{u^*}} = - \frac{\Sigma w}{\norm{\Sigma w}}$.
We will require the following claims for our proof that will follow next. We defer the proofs to \Cref{APP:facts:matrixvectorcalc}.
\newcommand{\claimbestrankoneapproximationtext}{
    Let $A,B\in\Rsymb^{d\times d}$ be symmetric matrices with $B$ being an arbitrary rank-1 matrix.
    Consider the spectral decomposition of $A$,
    $$
    A=\sum_{i=1}^{d}\lambda_i v_i v_i^\top
    $$
    where the vectors $\{v_i\}_{i=1}^{d}$ form an orthonormal basis of $\R^d$ and the singular values $\{\lambda_i\}_{i=1}^d$ satisfy ${|\lambda_1|\geq |\lambda_2|\geq\cdots\geq |\lambda_d|}$. We then have that
    \[
    \Norm{\lambda_1 v_1 v_1^\top-B}_F^2\leq 2\iprod{\lambda_1 v_1 v_1^\top-B,A-B}.
    \]
}
\begin{claim}\label{claim:best_rank-1_approximation}\claimbestrankoneapproximationtext    
\end{claim}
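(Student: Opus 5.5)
The plan is to expand $\Norm{\lambda_1 v_1 v_1^\top - B}_F^2$ by splitting $A - B = (\lambda_1 v_1 v_1^\top - B) + (A - \lambda_1 v_1 v_1^\top)$. This gives
\[
2\iprod{\lambda_1 v_1 v_1^\top-B,A-B} = 2\Norm{\lambda_1 v_1 v_1^\top-B}_F^2 + 2\iprod{\lambda_1 v_1 v_1^\top-B,\,A-\lambda_1 v_1 v_1^\top}.
\]
So the claim is equivalent to
\[
\Norm{\lambda_1 v_1 v_1^\top-B}_F^2 + 2\iprod{\lambda_1 v_1 v_1^\top-B,\,R} \geq 0, \qquad R \coloneqq A-\lambda_1 v_1 v_1^\top=\sum_{i\geq 2}\lambda_i v_iv_i^\top .
\]
Since $R$ is orthogonal to $v_1v_1^\top$ in the trace inner product, this reduces to showing
\[
\Norm{\lambda_1 v_1 v_1^\top-B}_F^2 \geq 2\iprod{B,R}.
\]

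Next I would parametrize the symmetric rank-$1$ matrix as $B = s\,bb^\top$ with $\norm{b}=1$ and $s\in\R$. Then $\iprod{B,R} = s\sum_{i\geq 2}\lambda_i \iprod{b,v_i}^2$. By $\abs{\lambda_i}\leq\abs{\lambda_1}$, this is at most
\[
\abs{s}\,\abs{\lambda_1}\,(1-\iprod{b,v_1}^2).
\]
Writing $c=\iprod{b,v_1}$, the left-hand side equals
\[
\lambda_1^2 + s^2 - 2\lambda_1 s c^2 .
\]
So it suffices to prove
\[
\lambda_1^2+s^2-2\lambda_1 s c^2 \geq 2\abs{s}\abs{\lambda_1}(1-c^2).
\]
Since $-2\lambda_1 s c^2 \geq -2\abs{\lambda_1}\abs{s}c^2$, the left side is at least
\[
\lambda_1^2+s^2-2\abs{\lambda_1}\abs{s}c^2 .
\]
Subtracting the right side, it suffices to have
\[
\lambda_1^2+s^2-2\abs{\lambda_1}\abs{s} = (\abs{\lambda_1}-\abs{s})^2\geq 0,
\]
which holds trivially.

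The only delicate point is the sign handling in the bound on $\iprod{B,R}$: the eigenvalues $\lambda_i$ may have mixed signs, so I bound it crudely in absolute value, using $\sum_{i\geq2}\iprod{b,v_i}^2 = 1-c^2$ from the fact that $\{v_i\}$ is an orthonormal basis. The degenerate case $B=0$ is covered by $s=0$. No earlier results from the paper are needed; the argument is purely linear-algebraic.
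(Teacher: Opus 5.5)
Your proof is correct, and it differs from the paper's in where the real work happens. The paper invokes Eckart--Young: $\lambda_1 v_1v_1^\top$ is a best rank-one approximation of $A$, so $\Norm{\lambda_1 v_1v_1^\top - A}_F^2 \le \Norm{A-B}_F^2$. It then expands $\Norm{(\lambda_1 v_1v_1^\top - A) + (A-B)}_F^2$, and the claim follows in two lines.

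Your reduced inequality $\Norm{\lambda_1 v_1v_1^\top - B}_F^2 + 2\iprod{\lambda_1 v_1v_1^\top - B, R} \ge 0$ is exactly $\Norm{A-B}_F^2 \ge \Norm{R}_F^2 = \Norm{A-\lambda_1 v_1v_1^\top}_F^2$. That is the same Eckart--Young inequality, so the two arguments share a skeleton. The difference is that you prove this optimality directly for symmetric rank-one competitors $B = s\,bb^\top$. You use only the spectral decomposition and $\abs{\lambda_i}\le\abs{\lambda_1}$, and the argument closes with $(\abs{\lambda_1}-\abs{s})^2 \ge 0$.

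What each route buys:
\begin{itemize}
\item Your route is fully self-contained. It also makes explicit why the eigenvalue of largest absolute value is the right choice even when the spectrum has mixed signs.
\item The paper's route is shorter. It also applies verbatim to non-symmetric rank-one $B$, since Eckart--Young covers all matrices of rank at most one. Your parametrization, as written, relies on the symmetry of $B$, which the statement does grant.
\end{itemize}
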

\newcommand{\claimcloseuptosigntext}{
    Let $a,b\in\mathbb{R}^d$.
    Let $\beta > 0$.
    Suppose that $\Norm{aa^\top-bb^\top}_F\leq\beta$.
    Then, we have that, for $u=\frac{a}{\norm{a}}$ and $v=\frac{b}{\norm{b}}$,
    $$
    \min\{\norm{u+v}^2,\norm{u-v}^2\}\leq 8\frac{\beta^2}{\norm{a}^4}
    $$
}
\begin{claim}\label{claim:close-upto-sign}\claimcloseuptosigntext
\end{claim}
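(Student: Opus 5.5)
The plan is to reduce to the case of unit vectors and then carry out an elementary expansion. First I dispose of the degenerate cases. If $a=0$ the right-hand side is infinite, so there is nothing to prove. If $b=0$ then $\norm{aa^\top}_F=\norm{a}^2\le\beta$, so $8\beta^2/\norm{a}^4\ge 8$. Since $\min\{\norm{u+v}^2,\norm{u-v}^2\}\le 2$ for any unit vectors $u,v$ (the minimum equals $2-2\abs{\iprod{u,v}}$), the claim holds trivially; the same remark covers an undefined $v$. So I may assume $a,b\neq 0$ and write $s=\abs{\iprod{u,v}}\in[0,1]$, so that
\[
\min\{\norm{u+v}^2,\norm{u-v}^2\}=2(1-s).
\]

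The key step is to lower bound $\norm{aa^\top-bb^\top}_F^2$ in terms of $1-s$. Setting $p=\norm{a}^2$ and $q=\norm{b}^2$, I expand
\[
\norm{aa^\top-bb^\top}_F^2=p^2+q^2-2pq\,s^2 .
\]
Viewed as a quadratic in $q\ge 0$, this is minimized at $q=ps^2$, which gives
\[
\norm{aa^\top-bb^\top}_F^2\ \ge\ p^2(1-s^4)\ \ge\ p^2(1-s^2)\ \ge\ p^2(1-s),
\]
where the last two inequalities use $s\in[0,1]$. Hence $\beta^2\ge\norm{a}^4(1-s)$.

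Combining the two displays gives
\[
\min\{\norm{u+v}^2,\norm{u-v}^2\}=2(1-s)\le 2\beta^2/\norm{a}^4\le 8\beta^2/\norm{a}^4,
\]
which proves the claim, with room to spare in the constant.

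There is no real obstacle here. The only point needing care is to minimize over the unknown norm $\norm{b}$ rather than assume $\norm{a}\approx\norm{b}$. The quadratic minimization handles this cleanly and yields a bound that depends only on $\norm{a}$, as the statement requires.
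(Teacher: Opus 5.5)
Your proof is correct, and it takes a different and somewhat cleaner route than the paper's. The paper writes $\norm{aa^\top-bb^\top}_F^2=(\norm{a}^2-\norm{b}^2)^2+2\norm{a}^2\norm{b}^2\sin^2\theta$, which separates the norm mismatch from the angular mismatch. It must then lower bound $\norm{b}\ge\norm{a}/\sqrt{2}$ from the norm term, and this only works when $\beta<\norm{a}^2/2$. So the paper first splits off the regime $\beta\ge\norm{a}^2/2$ as trivial, which is the only place the constant $8$ is needed. It then assumes $\iprod{a,b}\ge 0$ without loss of generality, shows $\theta<\pi/6$, and finishes with $4\sin^2(\theta/2)\le 2\sin^2\theta$.

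You instead complete the square in the unknown $q=\norm{b}^2$, namely $p^2+q^2-2pqs^2=(q-ps^2)^2+p^2(1-s^4)$. This eliminates $\norm{b}$ in one step and gives $2(1-s)\le 2\beta^2/\norm{a}^4$ uniformly. You need no case split, no sign normalization and no trigonometry, and you get the better constant $2$ in every regime.

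The paper's decomposition does yield a by-product that you discard, namely $\abs{\norm{a}^2-\norm{b}^2}\le\beta$, but that is not used anywhere downstream. Your explicit handling of $b=0$ is also a small improvement, since the paper silently assumes $a,b\neq 0$.
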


The main result of this section is the following lemma that shows that given an accurate estimate for $M_3$, we can recover the direction of $u^*$.
We will show afterwards how many samples we need to get such an estimate for $M_3$.

\begin{lemma}\label{lem:jenrich}
    Given $\beta > 0$ and a symmetric tensor $\widehat{M}_3\in\R^{d\times d\times d}$ satisfying
    $$
    \sup_{\norm{v}=1}\Iprod{v^{\otimes 3}, M_3-\widehat{M}_3}\leq\beta,
    $$
    one can compute a unit vector $\hat{u}\in\Rsymb^d$ such that with probability at least $1-10^{-4}-e^{-2d}$
    $$
    \min\Set{\Norm{\hat{u}+\frac{u^{\ast}}{\norm{u^{\ast}}}}_2,\Norm{\hat{u}-\frac{u^{\ast}}{\norm{u^{\ast}}}}_2}\leq O\Paren{\frac{\sqrt{d}}{|\kappa_3(\gamma)|}\beta}.
    $$
\end{lemma}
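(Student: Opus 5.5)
We follow the random contraction strategy of \Cref{algo:main}. Draw $g \sim \cN(0,I_d)$ and set $\widehat{R} = (I_d\otimes I_d\otimes g^\top)\widehat{M}_3$ and $R = (I_d\otimes I_d\otimes g^\top)M_3 = \iprod{g,u^*}\, u^*{u^*}^\top$. The output $\hat{u}$ is the unit eigenvector of $\widehat{R}$ whose eigenvalue $\lambda_1$ is largest in absolute value. We first bound $\norm{\widehat{R}-R}_F$ in terms of $\beta$. Then we show that $\lambda_1\hat u\hat u^\top$ is close to $R$ in Frobenius norm, and finally convert this into closeness of directions using \Cref{claim:close-upto-sign}.

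\textbf{Step 1 (perturbation bound).} Set $E = \widehat{M}_3 - M_3$, a symmetric tensor. By the symmetric form of the injective norm (Banach's theorem, recalled in the preliminaries), the hypothesis gives $\norm{E}\le\beta$ after applying it to both $v$ and $-v$. For any unit $x,y$, the map $g\mapsto \iprod{x\otimes y\otimes g, E}$ is $\beta$-Lipschitz, so the spectral norm of $E(g) := (I\otimes I\otimes g^\top)E$ is controlled. Since $E(g)$ is symmetric, $\norm{E(g)}_{op} = \sup_{\norm{x}=1}\iprod{x\otimes x\otimes g,E}$. This is a supremum of Gaussian processes with Lipschitz constant $\beta$ and mean at most $O(\beta\sqrt d)$; the mean bound comes from $\norm{E(g)}_{op}\le \sup_{\norm z=1}\iprod{x\otimes x\otimes z,E}\cdot\norm g \le\beta\norm g$. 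Hence \Cref{fact:gaussian-concentration} gives $\norm{E(g)}_{op}\le O(\beta\sqrt d)$ with probability $1-e^{-2d}$ (taking $t = \Theta(\beta\sqrt d)$). However, we need a Frobenius bound, and $\norm{E(g)}_F \le \sqrt d\,\norm{E(g)}_{op}$ would cost an extra $\sqrt d$. To avoid this, we use \Cref{claim:best_rank-1_approximation} with $A=\widehat R$ and $B=R$, which only requires an operator-norm control: $\norm{\lambda_1\hat u\hat u^\top - R}_F^2 \le 2\iprod{\lambda_1\hat u\hat u^\top - R, E(g)}$. The matrix $\lambda_1\hat u\hat u^\top - R$ has rank at most $2$, so its nuclear norm is at most $\sqrt2$ times its Frobenius norm. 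Therefore $\norm{\lambda_1\hat u\hat u^\top-R}_F\le 2\sqrt2\,\norm{E(g)}_{op} = O(\beta\sqrt d)$.

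\textbf{Step 2 (anti-concentration and conversion).} Write $R = \iprod{g,u^*}u^*{u^*}^\top = s\, aa^\top$ with $a = |\iprod{g,u^*}|^{1/2}u^*$ and sign $s$. Because $\iprod{g,u^*}\sim\cN(0,\norm{u^*}^2)$, \Cref{fact:gaussianspread} shows $|\iprod{g,u^*}|\ge c\norm{u^*}$ with probability $1-10^{-4}$ for a suitable absolute constant $c$. Thus $\norm a^4 \ge c^2\norm{u^*}^6$. Since $\norm{u^*}^3 = |\kappa_3(\gamma)|\,\norm{\Sigma w}^3/\norm{\Sigma^{1/2}w}^3$, we need $\norm{u^*}^3\gtrsim|\kappa_3(\gamma)|$. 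This holds after preconditioning, because $\norm{\Sigma w}/\norm{\Sigma^{1/2}w}\ge\lambda_{\min}(\Sigma)^{1/2}\ge 1/\sqrt2$ by \Cref{lem:preconditioner}; the paper's implicit assumption is presumably this normalization.

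Next we handle the sign $s$. When $\norm{\lambda_1\hat u\hat u^\top-R}_F$ is smaller than $\norm R_F$, the sign of $\lambda_1$ matches $s$. Otherwise the error bound already exceeds a constant multiple of $|\kappa_3|$, and the claim is trivial because directions always lie within distance $2$. So we may write $\lambda_1\hat u\hat u^\top = s\,bb^\top$. Applying \Cref{claim:close-upto-sign} with $\beta' = O(\beta\sqrt d)$ then gives a minimum distance of $O(\beta\sqrt d/\norm a^2) = O(\sqrt d\,\beta/|\kappa_3(\gamma)|)$. A union bound over the two failure events finishes the proof.

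\textbf{Main obstacle.} The delicate part is Step 1: obtaining the $\sqrt d$ (rather than $d$) factor. This requires the rank-2 nuclear norm trick combined with a sharp operator-norm bound on the random contraction $E(g)$. The normalization relating $\norm{u^*}$ to $|\kappa_3|$ via preconditioning must also be checked.
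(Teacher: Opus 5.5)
Your proposal is correct and follows the paper's proof essentially step for step: the random contraction, \Cref{claim:best_rank-1_approximation} combined with matrix H\"older and the rank-two nuclear-norm bound, the estimate $\norm{(I_d\otimes I_d\otimes g^\top)(\widehat{M}_3-M_3)}\le\beta\norm{g}=O(\beta\sqrt d)$, anti-concentration of $\iprod{g,u^*}$, \Cref{claim:close-upto-sign}, and $\Sigma\succeq\frac12 I_d$ to get $\norm{u^*}^3\gtrsim|\kappa_3(\gamma)|$. Your Lipschitz-concentration detour is unnecessary, since $\beta\norm{g}$ with $\norm{g}\le 3\sqrt d$ already suffices; on the other hand, your handling of the sign of $\lambda_1$ versus $\iprod{g,u^*}$ is more careful than the paper's. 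The paper reruns the argument with $-g$, which only reproduces the same inequality, whereas your observation that a sign mismatch forces $\norm{\lambda_1\hat u\hat u^\top-R}_F\ge\norm{R}_F\gtrsim|\kappa_3(\gamma)|$, making the bound trivial, is what actually closes that step.
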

\begin{proof}
To compute the unit vector $\hat{u}$, we first draw $g \sim \cN(0, I_d)$ and compute the matrix
\[
R = (I_d \otimes I_d \otimes g^\top) \cdot \widehat{M}_3
\]
We then compute the eigenvector corresponding to the largest (absolute) eigenvalue of $R$. We will show that this gives an estimator $\hat{u}$ with the claimed guarantees.

    We begin by observing that $(I_d \otimes I_d \otimes g^\top) \cdot \widehat{M}_3$ is a symmetric matrix.
    Let $\hat{u}$ be the (unit) eigenvector associated with its largest eigenvalue in absolute value, and $\lambda$ be the corresponding eigenvalue.
    Then $\lambda\cdot \hat{u}\hat{u}^\top$ is the best rank-1 approximation of $(I_d \otimes I_d \otimes g^\top) \cdot \widehat{M}_3$.
    Our goal is to show that $\hat{u}$ is close to $\frac{u^\ast}{\Norm{u^\ast}}$.
    
    First, we will show that $(I_d \otimes I_d \otimes g^\top) \cdot \widehat{M}_3$ and $(I_d \otimes I_d \otimes g^\top) \cdot M_3$ are close in Frobenius norm.
    Then, since $(I_d \otimes I_d \otimes g^\top) \cdot M_3= \Iprod{g, u^{\ast}}\cdot u^{\ast}(u^{\ast})^{\top}$ has rank 1, we will be able to use \Cref{claim:close-upto-sign} that relates closeness in Frobenius norm of rank-1 matrices to the closeness of the vectors themselves to conclude the lemma.
    Invoking \Cref{claim:best_rank-1_approximation}, we obtain
    \begin{align*}
        \norm{\lambda \hat{u}\hat{u}^\top-\Iprod{g, u^{\ast}}u^{\ast}(u^{\ast})^{\top}}_F^2&\leq 2\iprod{\lambda \hat{u}\hat{u}^\top-\Iprod{g, u^{\ast}} \cdot u^{\ast}(u^{\ast})^{\top}, (I_d \otimes I_d \otimes g^\top) \cdot (\widehat{M}_3-M_3)}\\
        &\overset{\text{(i)}}{\leq} 2\norm{\lambda \hat{u}\hat{u}^\top-\Iprod{g, u^{\ast}}\cdot u^{\ast}(u^{\ast})^{\top}}_{\ast}\cdot\norm{(I_d \otimes I_d \otimes g^\top) \cdot (\widehat{M}_3-M_3)}\\
        &\overset{\text{(ii)}}{\leq} 2\sqrt{2}\norm{\lambda \hat{u}\hat{u}^\top-\Iprod{g, u^{\ast}}\cdot u^{\ast}(u^{\ast})^{\top}}_F\cdot\Norm{(I_d \otimes I_d \otimes g^\top) \cdot (\widehat{M}_3-M_3)}
    \end{align*}
    In inequality (i) we used matrix H\"older's inequality. In (ii) we used that $\lambda \hat{u}\hat{u}^\top - \Iprod{g, u^{\ast}}u^{\ast}(u^{\ast})^\top$ has rank at most 2 and thus its nuclear norm is bounded by $\sqrt{2}$ times its Frobenius norm.
    We can do the exact same computation with $-g$ instead of $g$ and thus we get
    \[
        \norm{\lambda \hat{u}\hat{u}^\top\pm \Iprod{g, u^{\ast}}u^{\ast}(u^{\ast})^{\top}}_F \leq 2\sqrt{2} \Norm{(I_d \otimes I_d \otimes g^\top) \cdot (\widehat{M}_3-M_3)}.
    \]
    From our assumption we have
    \[
        \Norm{(I_d \otimes I_d \otimes g^\top) \cdot (\widehat{M}_3-M_3)} \leq \|g\| \cdot \sup_{\norm{v}=1}\iprod{v^{\otimes 3},\widehat{M}_3-M_3} \leq \Norm{g} \cdot \beta.
    \]
    Using \Cref{fact:gaussian-concentration} about the norm of a Gaussian random vector we have
    \[
        \Pr\Brac{\norm{g}\geq 3\sqrt{d}}\leq\Pr\Brac{\norm{g}-\Esymb\norm{g}\geq 2\sqrt{d}}\leq e^{-2d}.
    \]
    Thus, conditioning on the above event we have with probability at least $1 - e^{-2d}$ that
    \[
        \norm{\lambda \hat{u}\hat{u}^\top \pm \Iprod{g, u^{\ast}}u^{\ast}(u^{\ast})^{\top}}_F \leq 6\sqrt{2} \cdot \sqrt{d} \cdot \beta.
    \]
    Dividing by $\Abs{\Iprod{g,u^{\ast}}}$ and choosing the appropriate sign, we get that
    \[
        \Norm{\frac{\Abs{\lambda}}{\Abs{\Iprod{g,u^{\ast}}}} \hat{u}\hat{u}^\top-u^{\ast}(u^{\ast})^{\top}}_F \leq 6\sqrt{2} \cdot \sqrt{d} \cdot \beta \cdot \frac{1}{\Abs{\Iprod{g, u^{\ast}}}}.
    \]
    Since $\iprod{g,u^{\ast}}$ is a Gaussian of variance $\norm{u^{\ast}}^2$, \Cref{fact:gaussianspread} implies that with probability at least ${1-10^{-4}}$ we have $|\iprod{g,u^{\ast}}|\geq 10^{-4}\norm{u^{\ast}}$.
    Combining this with the above using a union bound, with probability $1-e^{-2d}-10^{-4}$ we have
    \[
        \Norm{\frac{\Abs{\lambda}}{\Abs{\Iprod{g,u^{\ast}}}} \hat{u}\hat{u}^\top-u^{\ast}(u^{\ast})^\top}_F \leq 6\sqrt{2} \cdot \sqrt{d} \cdot \beta \cdot \frac{1000}{\Norm{u^{\ast}}}.
    \]
    Using \Cref{claim:close-upto-sign} that relates Frobenius closeness to closeness of vectors themselves we obtain 
    \[
        \min\{\norm{\hat{u}-\tilde{u}},\norm{\hat{u}+\tilde{u}}\}\leq \frac{24000 \sqrt{d} \cdot \beta }{\Norm{u^{\ast}}^3},
    \]
    where $\tilde{u} \coloneqq \frac{u^{\ast}}{\norm{u^{\ast}}}$.
    Now, since $\Sigma\succeq \frac{1}{2}I_d$, we get that
    \[
        \norm{u^{\ast}}^3= |\kappa_3(\gamma)|\cdot\Norm{\frac{\Sigma w}{\norm{\Sigma^{1/2}w}}}^3\geq |\kappa_3(\gamma)|\cdot\lambda_{\min}(\Sigma^{1/2})^3\geq\frac{|\kappa_3(\gamma)|}{2\sqrt{2}}.
    \]
    Therefore we have
    \[
            \min\{\norm{\hat{u}-\tilde{u}},\norm{\hat{u}+\tilde{u}}\}\lesssim \frac{\sqrt{d} \cdot \beta}{|\kappa_3(\gamma)|},
    \]
 completing the proof.
    
\end{proof}
So far we have shown how to recover a unit vector that is (approximately) parallel to $\Sigma w$ using a random contraction on the sample third moment. To this end, we required a strong assumption on the finite-sample error to which we learn the third moment in the \emph{weaker} injective tensor norm as opposed to the Frobenius norm. In particular, we required this error to be as small as inverse polynomial in the dimension (up to $\kappa_2, \kappa_3$ factors) for the statement of the previous lemma. We will now show that we can indeed achieve such a high-accuracy estimate in the injective tensor norm using only $O(d^2)$ samples (up to $\kappa_2, \kappa_3$ factors). We start with the following claim that we will require. 

\begin{claim}\label{claim:tensor-projection-concentration}
    Let $\nu$ be the mean of $\truncatedGauss{\mu}{\Sigma}{\iprod{w,x}\leq\tau}$. 
    Let $v$ be a fixed unit vector.
    Given $n$ i.i.d. samples $\{x_i\}_{i=1}^{n}$ from $\truncatedGauss{\mu}{\Sigma}{\iprod{w,x}\leq\tau}$, for every $t > 0$, we have that 
    \[
    \Pr\Brac{\Abs{\frac{1}{n}\sum_{i=1}^{n} \iprod{v,x_i-\nu}^3-\iprod{v,u^{\ast}}^3}\geq C \kappa_2(\gamma)^{-3/2}\left(\sqrt{\frac{t}{n}}+\frac{t^{3/2}}{n}\right)}\leq 2\exp(-t).
    \]
    for an absolute constant $C > 0$.
\end{claim}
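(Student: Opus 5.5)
The plan is to reduce the claim to \Cref{corollary:subgaussiancubeconcentration}, applied to the scalar random variables $Z_i \coloneqq \iprod{v, x_i - \nu}$. There are three things to check: the $Z_i$ are i.i.d., they have the right third moment, and their sub-Gaussian norm is at most a constant times $\kappa_2(\gamma)^{-1/2}$. Once these hold, the corollary with $r \lesssim \kappa_2(\gamma)^{-1/2}$ gives exactly the stated tail bound, since $r^3 \lesssim \kappa_2(\gamma)^{-3/2}$ and the constant is absorbed into $C$.

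\textbf{Step 1 (third moment).} The $Z_i$ are i.i.d. because the $x_i$ are, and $\nu$ and $v$ are fixed. By \Cref{lem:first3moments},
\[
\E Z^3 = \Iprod{v^{\otimes 3}, M_3} = \kappa_3(\gamma)\Iprod{v, \tfrac{\Sigma w}{\norm{\Sigma^{1/2}w}}}^3 = \iprod{v, u^{\ast}}^3,
\]
using $u^\ast = \kappa_3(\gamma)^{1/3}\Sigma w/\norm{\Sigma^{1/2}w}$, where the real cube root is taken because $\kappa_3(\gamma) < 0$. So the centering term in the corollary is exactly the quantity appearing in the claim.

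\textbf{Step 2 (sub-Gaussian norm).} Write $\iprod{v, X-\nu} = \iprod{M_2^{1/2}v, M_2^{-1/2}(X-\nu)}$. By \Cref{claim:subgaussian:M2invX}, $M_2^{-1/2}X$ is $1/\sqrt{\kappa_2(\gamma)}$-sub-Gaussian, and centering by its mean $M_2^{-1/2}\nu$ is consistent with the definition of sub-Gaussian vectors used there. Hence $\norm{Z}_{\psi_2} \le \norm{M_2^{1/2}v}/\sqrt{\kappa_2(\gamma)}$. Under the standing preconditioning assumption of \Cref{lem:preconditioner}, $M_2 \preceq 2I_d$, so $\norm{M_2^{1/2}v} \le \sqrt 2$ and therefore $\norm{Z}_{\psi_2} \le \sqrt{2/\kappa_2(\gamma)}$.

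The main subtlety is making sure this bound does not depend on $\Sigma$, which could be ill-conditioned. That is exactly why I route the argument through $M_2^{-1/2}$ and the preconditioning, rather than through the $\norm{\Sigma}^{1/2}$ bound of \Cref{cor:subgaussian:X}. That alternative would also suffice after preconditioning, since $\Sigma \preceq 2/\kappa_2(\gamma)\, I_d$ gives the same rate.

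A minor point is the convention for $\psi_2$-norms: the corollary uses $\norm{Z}_{\psi_2}$, while sub-Gaussianity was defined for the centered projection. Since $Z$ is already centered ($\E Z = 0$), the two notions coincide, and any remaining factor is an absolute constant absorbed into $C$.
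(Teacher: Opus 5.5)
Your proposal is correct and is essentially the paper's proof: apply \Cref{corollary:subgaussiancubeconcentration} to $Z_i=\iprod{v,x_i-\nu}$, using $\E Z^3=\iprod{v,u^\ast}^3$ from \Cref{lem:first3moments} and the $O(\kappa_2(\gamma)^{-1/2})$ sub-Gaussian bound that holds after the preconditioning of \Cref{lem:preconditioner}. Your explicit route through $M_2^{-1/2}$ and \Cref{claim:subgaussian:M2invX} (or equivalently through $\norm{\Sigma}^{1/2}$) only spells out the sub-Gaussianity step, which the paper states in one line.
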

\begin{proof}
    The above claim follows by a direct application of \Cref{corollary:subgaussiancubeconcentration} for the concentration of averages-of-cubes of sub-Gaussian random variables that we proved earlier. Since $\iprod{v,x_i-\nu}$ is sub-Gaussian with parameter $O\left(1/\sqrt{\kappa_2(\gamma)}\right)$ (after preconditioning) and $\E\left[ \iprod{v,x_i-\nu}^3 \right]=\iprod{v,u^{\ast}}^{3}$, we apply \Cref{corollary:subgaussiancubeconcentration} with $Z_i = \iprod{v,x_i-\nu}$ and $\sigma = O\left(1/\sqrt{\kappa_2(\gamma)}\right)$ to obtain the result.
\end{proof}

We remark that the above claim argues about concentration of $\emph{zero-mean}$ random variables. We will now account for the error incurred by finite sample approximations to $\nu$. More formally we show the following.
\newcommand{\claimsamplemeannoise}{
    Let $\nu$ be the mean of $\truncatedGauss{\mu}{\Sigma}{\iprod{w,x}\leq\tau}$. 
    Let $v$ be a fixed unit vector.
    Let $\{x_i\}_{i=1}^{n}$ be i.i.d samples from $\truncatedGauss{\mu}{\Sigma}{\iprod{w,x}\leq\tau}$ and let $\bar{x}=\frac{1}{n}\sum_{i=1}^n x_i$. Then, we have the following with probability at least $1-4e^{-10d}$
    \[
    \Abs{\frac{1}{n}\sum_{i=1}^{n} \iprod{v, x_i-\bar{x}}^3-\iprod{v,x_i-\nu}^3} \lesssim \kappa_2(\gamma)^{-3/2}\cdot\Paren{\sqrt{\frac{d}{n}}+\Paren{\frac{d}{n}}^{3/2}}.
    \]
}
\begin{claim}\label{claim:sample-mean-noise}\claimsamplemeannoise    
\end{claim}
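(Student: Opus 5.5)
Writing $a_i \coloneqq \iprod{v, x_i-\nu}$ and $\delta \coloneqq \iprod{v, \bar{x}-\nu}$, we have $\iprod{v,x_i-\bar{x}} = a_i - \delta$. The plan is to expand the cube algebraically and bound each resulting term using sub-Gaussian concentration. Expanding,
\[
    \frac{1}{n}\sum_{i=1}^n (a_i-\delta)^3 - a_i^3 = -3\delta\cdot\frac{1}{n}\sum_{i=1}^n a_i^2 + 3\delta^2\cdot\frac{1}{n}\sum_{i=1}^n a_i - \delta^3 .
\]
Since $\frac{1}{n}\sum_i a_i = \delta$, the last two terms combine to $2\delta^3$. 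The difference therefore equals $-3\delta\cdot\frac{1}{n}\sum_i a_i^2 + 2\delta^3$, and it suffices to bound $|\delta|$ and $\frac{1}{n}\sum_i a_i^2$.

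For $\delta$, I would use the fact that after preconditioning $x_i-\nu$ is $O(\kappa_2(\gamma)^{-1/2})$-sub-Gaussian. This follows from \Cref{claim:subgaussian:M2invX}, since $M_2 \preceq 2I_d$ gives $\norm{M_2^{1/2}}\le \sqrt{2}$. Consequently $\delta$, an average of $n$ independent centered sub-Gaussians, is $O(\kappa_2(\gamma)^{-1/2}/\sqrt{n})$-sub-Gaussian. Applying a Hoeffding-type tail bound with deviation parameter $\Theta(d)$ gives
\[
    |\delta| \lesssim \kappa_2(\gamma)^{-1/2}\sqrt{d/n}
\]
with probability $1-2e^{-10d}$. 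The point of choosing the deviation level $d$ rather than a constant is to match the target bound, and it will later permit a net argument over $v$.

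For the second moment term, note that $\frac{1}{n}\sum_i a_i^2 = v^\top\Paren{\frac{1}{n}\sum_i (x_i-\nu)(x_i-\nu)^\top} v$. I would bound this by $\E a_i^2 = v^\top M_2 v \le 2$ plus a deviation. The deviation is controlled by Bernstein's inequality for the sub-exponential variables $a_i^2$, whose $\psi_1$-norm is $O(1/\kappa_2(\gamma))$. At deviation level $10d$ this gives
\[
    \frac{1}{n}\sum_i a_i^2 \lesssim 1 + \kappa_2(\gamma)^{-1}\Paren{\sqrt{d/n}+d/n}
\]
with probability $1-2e^{-10d}$. More crudely, $\frac{1}{n}\sum_i a_i^2 \lesssim \kappa_2(\gamma)^{-1}(1+d/n)$.

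Combining the two bounds with a union bound (total failure probability $4e^{-10d}$), the difference is at most
\[
    |\delta|\cdot\kappa_2(\gamma)^{-1}(1+d/n) + |\delta|^3 \lesssim \kappa_2(\gamma)^{-3/2}\Paren{\sqrt{d/n} + (d/n)^{3/2} + (d/n)^{3/2}},
\]
where I used $\sqrt{d/n}\cdot d/n = (d/n)^{3/2}$. The main obstacle is bookkeeping the factors of $\kappa_2(\gamma)$: the leading term carries only $\kappa_2^{-1/2}$ from $\delta$ times the $O(1)$ factor coming from $v^\top M_2 v$, and since $\kappa_2\le1$ everything can be uniformly upgraded to $\kappa_2^{-3/2}$. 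The other point to check is that every cross term is indeed captured by $\sqrt{d/n}+(d/n)^{3/2}$; the middle power $d/n$ never appears on its own, because it is always multiplied by $|\delta|$.
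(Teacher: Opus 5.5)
Your proposal is correct and follows the paper's proof essentially step for step. You expand the cube around $\bar{x}-\nu$, bound $\abs{\iprod{v,\bar{x}-\nu}}$ by Hoeffding and $\frac{1}{n}\sum_i \iprod{v,x_i-\nu}^2$ by Bernstein (both at deviation level $\Theta(d)$, using the $O(\kappa_2(\gamma)^{-1/2})$ sub-Gaussian parameter available after preconditioning), and take a union bound to get probability $1-4e^{-10d}$. Your exact cancellation to $2\delta^3$ and your bound $\E a_i^2 = v^\top M_2 v \le 2$ are slightly sharper than the paper's cruder $4\abs{\delta}^3$ and $O(1/\kappa_2(\gamma))$, but they do not change the argument.
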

We defer the proof of \Cref{claim:sample-mean-noise} to the \Cref{sec:erroranalysissamplesmoments}.
Given the above concentration results, we will now present the main sample complexity result for this section. For $n = O(d^2)$ (up to $\kappa_2, \kappa_3$ factors), the result below enables us to satisfy the requirements for the main result of this section \Cref{lem:jenrich}.
\newcommand{\lemtensorspectralconcentrationtext}{
    Let $\{x_i\}_{i=1}^{n}$ be i.i.d. samples from $\truncatedGauss{\mu}{\Sigma}{\iprod{w,x}\leq\tau}$.
    Let $\bar{x}=\frac{1}{n}\sum_{i=1}^n x_i$. Let ${\widehat{M}_3=\frac{1}{n}\sum_{i=1}^{n}\Paren{x_i-\bar{x}}^{\otimes 3}}$. Then we have with probability at least $1-6e^{-7d}$ that the following holds
    \[
        \sup_{\norm{v}=1}\Iprod{v^{\otimes 3},\widehat{M}_3-M_3}\lesssim \kappa_2(\gamma)^{-3/2}\left(\sqrt{\frac{d}{n}}+\frac{d^{3/2}}{n}\right).
    \]
}
\begin{lemma}\label{lem:tensor-spectral-concentration}\lemtensorspectralconcentrationtext
\end{lemma}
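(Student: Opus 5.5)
We will prove the bound by combining the single-direction concentration of \Cref{claim:tensor-projection-concentration} and \Cref{claim:sample-mean-noise} with a union bound over a net of the sphere. Since $\widehat{M}_3-M_3$ is a symmetric tensor, the injective norm equals $\sup_{\norm{v}=1}\Iprod{v^{\otimes 3},\cdot}$, as recalled in the preliminaries. For a fixed unit $v$ we decompose
\[
\Iprod{v^{\otimes 3},\widehat{M}_3-M_3} = \Paren{\frac1n\sum_{i}\iprod{v,x_i-\bar{x}}^3-\frac1n\sum_i\iprod{v,x_i-\nu}^3} + \Paren{\frac1n\sum_i\iprod{v,x_i-\nu}^3-\iprod{v,u^\ast}^3}.
\]
The second term is handled by \Cref{claim:tensor-projection-concentration} with $t = cd$ for a suitable constant $c$ (say $c=10$). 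This gives a deviation of order $\kappa_2(\gamma)^{-3/2}\Paren{\sqrt{d/n}+d^{3/2}/n}$ with failure probability $2e^{-10d}$. The first term is bounded by \Cref{claim:sample-mean-noise} with failure probability $4e^{-10d}$. Its bound, $\kappa_2(\gamma)^{-3/2}\Paren{\sqrt{d/n}+(d/n)^{3/2}}$, is dominated by the target, since $(d/n)^{3/2}\le d^{3/2}/n$ for $n\ge1$.

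Next, fix a $1/4$-net $\mathcal{N}$ of $\cS^{d-1}$ with $|\mathcal{N}|\le 9^d\le e^{2.2d}$. A union bound gives that, with probability at least $1-6e^{-10d}e^{2.2d}\ge 1-6e^{-7d}$, the bound above holds simultaneously for all $v\in\mathcal{N}$. This is exactly the stated probability.

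The main obstacle is passing from the net to the whole sphere for the cubic form $T:=\widehat{M}_3-M_3$. Let $\Lambda=\sup_{\norm{v}=1}\abs{\iprod{v^{\otimes3},T}}$, which equals the injective norm $\norm{T}$ by symmetry. Take any unit $v$ and $v'\in\mathcal{N}$ with $\norm{v-v'}\le 1/4$. Telescoping,
\[
v^{\otimes3}-v'^{\otimes3}=(v-v')\otimes v\otimes v+v'\otimes(v-v')\otimes v+v'\otimes v'\otimes(v-v'),
\]
and each term is paired with $T$ to give at most $\norm{T}\cdot\frac14$. Hence $\Lambda\le \max_{v'\in\mathcal{N}}\abs{\iprod{v'^{\otimes3},T}}+\frac34\Lambda$, which yields $\Lambda\le 4\max_{\mathcal{N}}$. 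This uses the equality of the injective norm with the symmetric supremum (Banach's theorem) for the symmetric tensor $T$. Note that the net bound must control absolute values; this follows by applying the two claims to both $v$ and $-v$, or directly, since they are two-sided. Combining these steps gives $\Lambda\lesssim \kappa_2(\gamma)^{-3/2}\Paren{\sqrt{d/n}+d^{3/2}/n}$, which proves the lemma.
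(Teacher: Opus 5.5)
Your proposal is correct and follows the paper's proof essentially step for step. Both use the same fixed-direction split into \Cref{claim:tensor-projection-concentration} (with $t=\Theta(d)$) and \Cref{claim:sample-mean-noise}, a $1/4$-net of size at most $9^d$ with a union bound giving $1-6e^{-7d}$, and Banach's theorem for symmetric tensors to pass from the net to the sphere. The only difference is cosmetic: your telescoping of $v^{\otimes 3}-v'^{\otimes 3}$ yields the factor $4$, whereas the paper expands $(\Delta+v')^{\otimes 3}$ binomially and gets $(1-3\delta-3\delta^2-\delta^3)^{-1}=64/3$.
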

We give a proof sketch here and show that the above holds for any \emph{fixed} $v$ with probability $1-6e^{-10d}$.
The lemma then follows by a standard $\varepsilon$-net argument over the unit sphere.
We defer the full proof to \Cref{sec:erroranalysissamplesmoments}.

\begin{proof}[Proof sketch]
    For a fixed vector $v$, taking $t = \Theta(d)$ in  \Cref{claim:tensor-projection-concentration}, we have for an absolute constant $K_1 > 0$ that
    \begin{align*}
    &\Pr\Brac{\Abs{\frac{1}{n}\sum_{i=1}^{n} \iprod{v,x_i-\nu}^3-\iprod{v,u^{\ast}}^3}\geq K_1\cdot\kappa_2(\gamma)^{-3/2}\left(\sqrt{\frac{d}{n}}+\frac{d^{3/2}}{n}\right)}
    \leq 2e^{-10d}.
    \end{align*}
    For the same fixed choice of $v$, we have for some other absolute constant $K_2>0$ from \Cref{claim:sample-mean-noise} that
    \[
    \Pr\Brac{\Abs{\frac{1}{n}\sum_{i=1}^{n} \iprod{v, x_i-\bar{x}}^3-\iprod{v,x_i-\nu}^3}\geq K_2\cdot\kappa_2(\gamma)^{-3/2}\cdot\Paren{\sqrt{\frac{d}{n}}+\Paren{\frac{d}{n}}^{3/2}}}\leq 4e^{-10d}.
    \]
    Now since we have that 
    \[
    \Abs{\Iprod{v^{\otimes 3},\widehat{M}_3-M_3}} \leq \Abs{\frac{1}{n}\sum_{i=1}^{n} \iprod{v, x_i-\bar{x}}^3-\iprod{v,x_i-\nu}^3} + \Abs{\frac{1}{n}\sum_{i=1}^{n} \iprod{v,x_i-\nu}^3-\iprod{v,u^{\ast}}^3},
    \] we can conclude that 
    \[
        \Abs{\Iprod{v^{\otimes 3},\widehat{M}_3-M_3}} \leq K\cdot\kappa_2(\gamma)^{-3/2}\left(\sqrt{\frac{d}{n}}+\frac{d^{3/2}}{n}\right)
    \]
    with probability at least $1 - 6e^{-10d}$ via a union bound.
    So far, we have argued that for a fixed vector $v$, it is the case that $\Iprod{v^{\otimes 3}, \widehat{M}_3-M_3}$ is well concentrated. What remains is to show that the supremum of the quantity over the unit sphere can still be controlled. To this end, we construct an $\epsilon$-net, and complete the proof with a union bound argument over this net in \Cref{sec:erroranalysissamplesmoments}. 
\end{proof}

\subsection{\Truncparaword{ recovery }}\label{SEC:fullproof:relativetruncpara}
Our goal in the section will be to recover the \truncparaword{} $\truncpara$. From the previous section, we have a good estimate for the direction of $\Sigma w$. In this section we will show how we can use this vector to recover the \truncparaword{} $\gamma$. Recovering $\truncpara$ allows us to recover the parameters of the Gaussian \emph{directly} from the moments.
We first show how to get an estimate $\sk(\hat{\gamma})$ for $\sk(\gamma)$.
Then, we show that this implies generally (meaning without any assumptions except that $\sk(\hat{\gamma})$ and $\sk(\gamma)$ are close) that also $\hat{\gamma}$ is close to $\gamma$.

\begin{lemma}
\label{lem:threshold-estimation:closeinskew}
    Let $0<\beta<c$, where $c$ is some sufficiently small absolute constant. Assume that 
    \begin{align*}
        &\min\Set{\Norm{\hat{u}+\frac{u^{\ast}}{\norm{u^{\ast}}}}_2,\Norm{\hat{u}-\frac{u^{\ast}}{\norm{u^{\ast}}}}_2}\leq\beta, \\ 
        &\norm{M_2^{-1/2}\widehat{M}_2 M_2^{-1/2}-I_d}\leq\beta
    \end{align*}
    for a unit vector $\hat{u}$ and a symmetric matrix $\widehat{M}_2\succ 0$. Furthermore, let $\widehat{M}_3$ be a symmetric tensor satisfying
    \[
    \sup_{\norm{v}=1}\Iprod{v^{\otimes 3}, \widehat{M}_3-M_3}\leq\beta.
    \]
    Denoting by $\sk^{-1}$ the inverse function of $\sk$, consider the estimator
    \begin{align*}
        \hat{\gamma} \coloneqq \sk^{-1}\Paren{-\Paren{\hat{u}^\top\widehat{M}_2^{-1} \hat{u}}^{3/2}\cdot \lvert\iprod{\hat{u}^{\otimes 3},\widehat{M}_3}\rvert}.
    \end{align*}
    Then, we have that
    \begin{align*}
        \Abs{\sk(\hat{\gamma})-\sk(\gamma)}&\leq O\left(\beta\cdot\kappa_2(\gamma)^{-3/2}\right)
    \end{align*}
    as well as
    \begin{align*}
        \left\lvert \hat{u}^\top\widehat{M}_2^{-1} \hat{u}-{\tilde{u}}^\top M_2^{-1} \tilde{u}\right\rvert \leq O(\beta).
    \end{align*}
\end{lemma}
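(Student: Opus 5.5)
Throughout, we work in the preconditioned setting of \Cref{lem:preconditioner}, so $\frac12 I_d\preceq M_2\preceq 2I_d$ and $\frac12 I_d\preceq \Sigma$. Write $\tilde{u}=u^*/\norm{u^*}$. Since $\hat u^\top \widehat{M}_2^{-1}\hat u$ and $\abs{\iprod{\hat u^{\otimes 3},\widehat M_3}}$ are invariant under $\hat u\mapsto -\hat u$, we may assume $\norm{\hat u-\tilde u}\le\beta$.

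By the computation in \Cref{sec:techniques}, the population identity reads
\[
\sk(\gamma) = -(\tilde u^\top M_2^{-1}\tilde u)^{3/2}\abs{\iprod{\tilde u^{\otimes 3},M_3}},
\]
using $\kappa_3<0$ from \Cref{claim:monotonicity-of-kappa2}. Call the two factors $a=\tilde u^\top M_2^{-1}\tilde u$ and $b=\abs{\iprod{\tilde u^{\otimes3},M_3}}=\norm{u^*}^3$, and their empirical counterparts $\hat a,\hat b$. Since $\sk$ is a bijection onto its range by \Cref{lemma:monotonicity-of-skewness}, we have $\sk(\hat\gamma)=-\hat a^{3/2}\hat b$. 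One point needs care: this argument must actually lie in the range $(-2,0)$. Otherwise we interpret $\sk^{-1}$ as the clipped inverse, which can only decrease the error. So it suffices to bound $\abs{\hat a^{3/2}\hat b-a^{3/2}b}\le \abs{\hat a^{3/2}-a^{3/2}}\hat b + a^{3/2}\abs{\hat b-b}$.

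\textbf{Step 1 (the $a$-term).} We split
\[
\hat a-a = \hat u^\top(\widehat M_2^{-1}-M_2^{-1})\hat u + (\hat u^\top M_2^{-1}\hat u-\tilde u^\top M_2^{-1}\tilde u).
\]
For the first piece, write $\widehat M_2^{-1}-M_2^{-1}=M_2^{-1/2}\bigl((M_2^{-1/2}\widehat M_2M_2^{-1/2})^{-1}-I_d\bigr)M_2^{-1/2}$. Its spectral norm is at most $2\cdot\beta/(1-\beta)=O(\beta)$. For the second piece, $\norm{M_2^{-1}}\le2$ and $\norm{\hat u-\tilde u}\le\beta$ give $O(\beta)$. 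This proves the second claimed bound $\abs{\hat a-a}\le O(\beta)$.

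\textbf{Step 2 (the $b$-term).} We split $\hat b-b$ into $\iprod{\hat u^{\otimes3},\widehat M_3-M_3}$, which is at most $\beta$ by assumption (applied to $\pm\hat u$), plus $\iprod{\tilde u,u^*}^3-\iprod{\hat u,u^*}^3$. The latter is bounded by $3\norm{u^*}^3\norm{\hat u-\tilde u}\le 3\norm{u^*}^3\beta$.

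It remains to bound $\norm{u^*}^3$. We have $\norm{u^*}^3=\abs{\kappa_3}\,\norm{\Sigma w}^3/\norm{\Sigma^{1/2}w}^3$. The route I would take is to note that $a=1/(\kappa_2\,\norm{\Sigma w}^2/\norm{\Sigma^{1/2}w}^2)$ by \eqref{EQ:TO:expressionkappa2}, and $a\ge 1/2$ since $M_2\preceq 2I_d$. Hence $\norm{\Sigma w}^2/\norm{\Sigma^{1/2}w}^2\le 2/\kappa_2$, and therefore $b\le \abs{\kappa_3}(2/\kappa_2)^{3/2}=O(\kappa_2^{-3/2})$ by \Cref{claim:kappa1kappa3bounded}. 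Equivalently, $b=\abs{\sk(\gamma)}a^{-3/2}\le 2\cdot 2^{3/2}$ by \Cref{claim:skew-atmost2}, which is even better. So $\abs{\hat b-b}=O(\beta)$ and $\hat b=O(1)$.

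\textbf{Step 3 (combining).} We have $a\le 2$ because $M_2\succeq\frac12 I_d$, and likewise $\hat a\le 2+O(\beta)$. Hence $\abs{\hat a^{3/2}-a^{3/2}}\le O(1)\abs{\hat a-a}=O(\beta)$. Combining, $\abs{\sk(\hat\gamma)-\sk(\gamma)}=O(\beta)$, which is within the stated $O(\beta\kappa_2^{-3/2})$.

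The main obstacle I anticipate is bookkeeping the normalization: bounding $\norm{u^*}^3$ and $a$ uniformly via preconditioning, rather than via $\kappa_2$-dependent bounds. The other delicate point is the range issue for $\sk^{-1}$. If the clean bound $b=O(1)$ fails for some reason, the fallback $b=O(\kappa_2^{-3/2})$ from the direct estimate still yields exactly the stated rate.
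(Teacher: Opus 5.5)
Your proposal is correct and follows essentially the same route as the paper: the population identity $\sk(\gamma)=-(\tilde u^\top M_2^{-1}\tilde u)^{3/2}\norm{u^*}^3$, separate perturbation bounds for the quadratic form and the cubic form, the mean value theorem for $x^{3/2}$, and a product expansion. Your bookkeeping is sharper than the paper's, and you also handle an edge case the paper silently ignores. Using $\tilde u^\top M_2^{-1}\tilde u\in[1/2,2]$ from $\frac12 I_d\preceq M_2\preceq 2I_d$ (where the paper uses the cruder bound $2/\kappa_2(\gamma)$ together with $\norm{u^*}^3\gtrsim\abs{\kappa_3(\gamma)}$) yields $O(\beta)$ instead of $O(\beta\,\kappa_2(\gamma)^{-3/2})$, and your clipping of $\sk^{-1}$ covers arguments falling outside $(-2,0)$.
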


At a high level, the above lemma say that we can learn the \truncparaword{} accurately given access to high-accuracy estimates for $\Sigma w$, $M_2$ and $M_3$ by inverting the one-dimensional $\sk$ function. We will require the following claim that bounds $\Vert u^\ast \Vert$.

\begin{claim}\label{claim:upper-bound-on-u-ast}
For $u^{\ast} \coloneqq \kappa_3(\gamma)^{1/3}\frac{\Sigma w}{\norm{\Sigma^{1/2}w}}$, we have \(\frac{1}{\sqrt{2}}\abs{\kappa_3(\gamma)}^{1/3}\leq\norm{u^{\ast}}\leq\sqrt{2}\cdot |\sk(\gamma)|^{1/3}\).
\end{claim}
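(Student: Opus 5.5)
The plan is to write $\norm{u^{\ast}} = \abs{\kappa_3(\gamma)}^{1/3}\cdot \norm{\Sigma w}/\norm{\Sigma^{1/2}w}$ and bound the ratio $\rho \coloneqq \norm{\Sigma w}/\norm{\Sigma^{1/2}w}$ using the preconditioning of \Cref{lem:preconditioner}. We work throughout under that assumption, i.e.\ $\frac12 I_d\preceq M_2\preceq 2I_d$ and $\frac12 I_d\preceq \Sigma\preceq \frac{2}{\kappa_2(\gamma)}I_d$.

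\textbf{Lower bound.} Set $z=\Sigma^{1/2}w$. Then $\rho = \norm{\Sigma^{1/2}z}/\norm{z}\ge \lambda_{\min}(\Sigma)^{1/2}\ge 1/\sqrt2$. This gives $\norm{u^{\ast}}\ge \frac{1}{\sqrt2}\abs{\kappa_3(\gamma)}^{1/3}$, the same computation already used at the end of the proof of \Cref{lem:jenrich}.

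\textbf{Upper bound.} The naive estimate $\rho\le\norm{\Sigma}^{1/2}\le \sqrt{2/\kappa_2(\gamma)}$ gives $\norm{u^{\ast}}\le \sqrt2\,\abs{\kappa_3}^{1/3}\kappa_2^{-1/2}$. This is not yet the claim: since $\sk=\kappa_3/\kappa_2^{3/2}$, the claim is equivalent to $\rho\le\sqrt2\,\kappa_2(\gamma)^{-1/2}$ with the $\kappa_2^{-1/2}$ attached to $\abs{\kappa_3}^{1/3}$ as $\abs{\kappa_3}^{1/3}\kappa_2^{-1/2}=\abs{\sk}^{1/3}$. So the naive bound is in fact exactly the claim, and the only step is this algebraic rewriting.

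To be safe, we would also derive $\rho^2\le 2/\kappa_2$ directly from $M_2$, since that route avoids relying on the upper bound on $\Sigma$. Evaluating the quadratic form of $M_2$ from \Cref{lem:first3moments} at $w$ gives
\[
w^\top M_2 w=\norm{z}^2-(1-\kappa_2)\norm{z}^2=\kappa_2\norm{z}^2 .
\]
Evaluating it at $\hat v\coloneqq\Sigma w/\norm{\Sigma w}$ gives
\[
\hat v^\top M_2\hat v=\frac{w^\top\Sigma^3 w}{\norm{\Sigma w}^2}-(1-\kappa_2)\rho^2 .
\]
The second expression is less convenient. Instead, we use that $M_2^{-1}=\Sigma^{-1}+\frac{1-\kappa_2}{\kappa_2}\frac{ww^\top}{\norm{z}^2}$ by Sherman--Morrison, as in \Cref{sec:techniques}. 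Then
\[
\tilde u^\top M_2^{-1}\tilde u=\frac{1}{\kappa_2\rho^2},
\]
by \eqref{EQ:TO:expressionkappa2}. Combining this with $M_2^{-1}\succeq \frac12 I_d$ yields $\rho^2\le 2/\kappa_2$, hence $\norm{u^{\ast}}=\abs{\kappa_3}^{1/3}\rho\le\sqrt2\,\abs{\kappa_3}^{1/3}\kappa_2^{-1/2}=\sqrt2\,\abs{\sk(\gamma)}^{1/3}$.

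No real obstacle is expected. The only care needed is to use the preconditioned bounds and to keep track of absolute values, since $\kappa_3<0$ by \Cref{claim:monotonicity-of-kappa2}, so that cube roots and the identity $\abs{\sk}^{1/3}=\abs{\kappa_3}^{1/3}\kappa_2^{-1/2}$ are applied correctly.
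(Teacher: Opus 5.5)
Your proposal is correct and matches the paper's proof: you write $\norm{u^{\ast}}=\abs{\kappa_3(\gamma)}^{1/3}\rho$ with $\rho=\norm{\Sigma w}/\norm{\Sigma^{1/2}w}$, bound $\rho$ below by $\lambda_{\min}(\Sigma)^{1/2}\geq 1/\sqrt{2}$ and above by $\norm{\Sigma}^{1/2}\leq\sqrt{2/\kappa_2(\gamma)}$ using \Cref{lem:preconditioner}, and rewrite $\abs{\kappa_3(\gamma)}^{1/3}\kappa_2(\gamma)^{-1/2}=\abs{\sk(\gamma)}^{1/3}$. Your initial remark that the naive bound is \emph{not yet the claim} is wrong (as you notice right away), and the Sherman--Morrison cross-check via $\tilde u^\top M_2^{-1}\tilde u=1/(\kappa_2(\gamma)\rho^2)$ and $M_2^{-1}\succeq\frac12 I_d$ is valid but unnecessary.
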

\begin{proof}
    First, note that $\frac{\Sigma w}{\norm{\Sigma^{1/2}w}} = \Sigma^{1/2} \cdot \frac{\Sigma^{1/2}w}{\norm{\Sigma^{1/2}w}}$.
    By \Cref{lem:preconditioner}, we have $\frac{1}{\sqrt{2}}I_d\preceq\Sigma^{1/2}\preceq \frac{\sqrt{2}}{\kappa_2(\gamma)^{1/2}}I_d$ and thus
    \[
        \frac{1}{\sqrt{2}} \leq \Norm{\Sigma^{1/2} \frac{\Sigma^{1/2}w}{\norm{\Sigma^{1/2}w}}} \leq \frac{\sqrt{2}}{\kappa_2(\gamma)^{1/2}}.
    \]
    This proves the claim by recalling that $\sk(\gamma) = \kappa_3(\gamma)/\kappa_2(\gamma)^{3/2}$.
\end{proof}

\begin{proof}[Proof of \Cref{lem:threshold-estimation:closeinskew}]
    The proof of the lemma has two main parts: a population level analysis and a finite-sample error analysis. We further split the finite sample error analysis into two sub-parts. 
    
    We will now begin with a population-level analysis that will show that knowing $\Sigma w$, $M_2$ and $M_3$ will allow us to recover $\gamma$. Recalling the definition of the population second central moment we have
    \begin{align*}
        M_2 = \Sigma - \frac{1 - \kappa_2(\gamma)}{\norm{\Sigma^{1/2} w}^2} \cdot \left( \Sigma  w \right) \left( \Sigma w \right)^\top
    \end{align*}
    This implies that when applying the linear transformation $M_2$ on $w$, we get that
    \begin{align*}
        M_2 w = \kappa_2(\gamma) \cdot  \Sigma w \implies w = \kappa_2(\gamma) \cdot M_2^{-1} \Sigma w 
    \end{align*}
    Now by left-multiplying both sides by $(\Sigma w)^\top$ and dividing by $\kappa_2(\gamma)$ we obtain that
    \begin{align*}
        (\Sigma w)^\top M_2^{-1} (\Sigma w) = \frac{\norm{\Sigma^{1/2}w}^2}{\kappa_2(\gamma)}.
    \end{align*}
    Recall that $u^{\ast}=\kappa_3(\gamma)^{1/3}\frac{\Sigma w}{\norm{\Sigma^{1/2}w}}$.
    We define ${\tilde{u} = \frac{u^*}{\Vert u^*\Vert}} = -\frac{\Sigma w}{\norm{\Sigma w}}$ as the unit vector along the direction of $u^{\ast}$ (note that $\kappa_3(\gamma) < 0$ by \Cref{claim:monotonicity-of-kappa2}).
    Then we have
    \begin{equation}\label{eqn:secondmomentwithsigmaw}
        {\tilde{u}}^\top M_2^{-1} \tilde{u} = \frac{\norm{\Sigma^{1/2}w}^2}{\kappa_2(\gamma) \Vert \Sigma w \Vert^2} = \frac{\kappa_3(\gamma)^{2/3}}{\kappa_2(\gamma) \Vert u^{\ast} \Vert^2} 
    \end{equation}
    Furthermore recalling that the population third central moment is $M_3=\paren{u^{\ast}}^{\otimes 3}$ and taking its projection along $\tilde{u}$ we have that 
    \begin{align*}
        \langle M_3, {\tilde{u}}^{\otimes 3} \rangle = \langle u^{\ast}, \tilde{u} \rangle^3 = -\kappa_3(\gamma) \left\langle \frac{\Sigma w}{\norm{\Sigma^{1/2}w}}, \frac{\Sigma w}{\Vert \Sigma w \Vert} \right\rangle^3 = \Abs{\kappa_3(\gamma)} \cdot \frac{\Vert \Sigma w \Vert^3}{\norm{\Sigma^{1/2}w}^3} = \|u^*\|^3.
    \end{align*}
    This naturally gives us the following identity that we will need for building our estimator:
    \begin{align*}
        -\left({\tilde{u}}^\top M_2^{-1} \tilde{u}\right)^{3/2} \cdot \langle M_3, {\tilde{u}}^{\otimes 3} \rangle = \frac{\kappa_3(\gamma)}{\kappa_2(\gamma)^{3/2}} 
    \end{align*}
    
    In the rest of the proof, we will carry out finite-sample error analysis since we only have access to the sample moments. We will first deal with the first term on the left hand side above.
    \subsection*{Error analysis for $\left({\tilde{u}}^\top M_2^{-1} \tilde{u}\right)^{3/2}$}
    
    Observe that, since $\widehat{M}_2^{-1}$ is symmetric, we get that
    \begin{align*}
        \hat{u}^\top\widehat{M}_2^{-1}\hat{u} &=\tilde{u}^\top\widehat{M}_2^{-1}\tilde{u}+\left(\hat{u}-\tilde{u}\right)^\top \widehat{M}_2^{-1}\left(\hat{u}+\tilde{u}\right)\\
        &=\tilde{u}^\top M_2^{-1}\tilde{u}+\tilde{u}^\top(\widehat{M}_2^{-1}-M_2^{-1})\tilde{u} +\left(\hat{u}-\tilde{u}\right)^\top \widehat{M}_2^{-1}\left(\hat{u}+\tilde{u}\right)\\
        &=\frac{\kappa_3(\gamma)^{2/3}}{\norm{u^{\ast}}^2\cdot\kappa_2(\gamma)}+\tilde{u}^\top\left(\widehat{M}_2^{-1}-M_2^{-1}\right)\tilde{u}+\left(\hat{u}-\tilde{u}\right)^\top \widehat{M}_2^{-1}\left(\hat{u}+\tilde{u}\right)
    \end{align*}
    We will now bound the second and third term in the last equation above. For the second term, since $(1-\beta)M_2\preceq \widehat{M}_2 \preceq (1+\beta)M_2$, it follows that $\frac{1}{1+\beta}M_2^{-1}\preceq\widehat{M}_2^{-1}\preceq\frac{1}{1-\beta}M_2^{-1}$. Therefore we have
    \begin{align*}
        \norm{M_2^{-1}-\widehat{M}_2^{-1}}&=\norm{\widehat{M}_2^{-1}(\widehat{M}_2-M_2)M_2^{-1}}\\
        &=\norm{\widehat{M}_2^{-1}M_2^{1/2}(M_2^{-1/2}\widehat{M}_2 M_2^{-1/2}-I_d)M_2^{-1/2}}\\
        &\leq\beta\cdot\norm{\widehat{M}_2^{-1}}\cdot\norm{M_2^{1/2}}\cdot\norm{M_2^{-1/2}}\\
        &\overset{\text{(i)}}{\leq}\frac{4 \beta}{1-\beta} \leq 8\beta,
    \end{align*}
    as long as $\beta < \frac{1}{2}$, where the inequality (i) follows from using our preconditioning guarantees proved in \Cref{lem:preconditioner}. Therefore
    \[
        \Abs{\tilde{u}^\top(\widehat{M}_2^{-1}-M_2^{-1})\tilde{u}}\leq 8\beta.
    \]
    Now for the third term we have that
    \[
    \lvert (\hat{u}-\tilde{u})^\top \widehat{M}_2^{-1}(\hat{u}+\tilde{u})\rvert\leq \norm{\hat{u}-\tilde{u}}\cdot\norm{\hat{u}+\tilde{u}}\cdot\norm{\widehat{M}_2^{-1}}\leq O(\beta)
    \]
    whenever $\beta < 1/2$. Putting the two bounds together, we obtain,
    \begin{equation*}
    \left\lvert \hat{u}^\top\widehat{M}_2^{-1} \hat{u}-{\tilde{u}}^\top M_2^{-1} \tilde{u}\right\rvert  = 
    \left\lvert \hat{u}^\top\widehat{M}_2^{-1} \hat{u}-\frac{\kappa_3(\gamma)^{2/3}}{\norm{u^{\ast}}^2\cdot\kappa_2(\gamma)}\right\rvert\leq O(\beta)
    \end{equation*}

    We are almost done as we have shown closeness of the quadratic forms. We will now show that raising the quadratic forms to a higher power of $3/2$ still preserves closeness and conclude this part of the proof next. We will do this by an application of the mean value theorem from univariate calculus. To this end, we will try to obtain a bound on the range of values that the quadratic forms can take. Firstly, it is clear that 
    \[
    \hat{u}^\top\widehat{M}_2^{-1}\hat{u} > 0, \ \ \ \frac{\kappa_3(\gamma)^{2/3}} {\norm{u^{\ast}}^2\cdot\kappa_2(\gamma)}>0.
    \]
    Now since $\Vert u^\ast \Vert \geq \left|\kappa_3(\gamma)^{1/3}\right| \cdot \lambda_{\min}(\Sigma^{1/2})$
    we have that 
    \begin{equation}\label{eqn:skew-upper-bound}
        \frac{\kappa_3(\gamma)^{2/3}}{\norm{u^{\ast}}^2\cdot\kappa_2(\gamma)} \leq\frac{1}{\kappa_2(\gamma)\lambda_{\min}^{2}(\Sigma^{1/2})}=\frac{1}{\kappa_2(\gamma)\lambda_{\min}(\Sigma)}\leq\frac{2}{\kappa_2(\gamma)}
    \end{equation} 
    where we used the fact that $\Sigma\succeq\frac{1}{2}I_d$. Applying the mean value theorem for the function ${f(x) = x^{3/2}}$, we know that there exists a $B$ between $a = \hat{u}^\top\widehat{M}_2^{-1} \hat{u}$ and $c = \frac{\kappa_3(\gamma)^{2/3}}{\norm{u^{\ast}}^2\cdot\kappa_2(\gamma)}$ such that the following holds 
    \begin{align*}
        | f(a) - f(c)| = f'(B) \cdot |a - c | = \frac{3}{2} \cdot \sqrt{B} \cdot | a - c |.
    \end{align*} 
    We have $B = O(1/\kappa_2(\gamma))$ since
    \begin{align*}
    B&\leq \max\Set{\hat{u}^\top\widehat{M}_2^{-1} \hat{u}, \frac{\kappa_3(\gamma)^{2/3}}{\norm{u^{\ast}}^2\cdot\kappa_2(\gamma)}} \leq \frac{\kappa_3(\gamma)^{2/3}}{\norm{u^{\ast}}^2\cdot\kappa_2(\gamma)}+\left\lvert \hat{u}^\top\widehat{M}_2^{-1} \hat{u}-\frac{\kappa_3(\gamma)^{2/3}}{\norm{u^{\ast}}^2\cdot\kappa_2(\gamma)}\right\rvert\\
    &\overset{\text{(i)}}{\leq} \frac{2}{\kappa_2(\gamma)}+O(\beta) \overset{\text{(ii)}}{\leq} O\Paren{\frac{1}{\kappa_2(\gamma)}}.
    \end{align*}
    where (i) follows from inequality \eqref{eqn:skew-upper-bound} and (ii) holds as long as $\beta$ is at most a sufficiently small constant. Therefore we have that
    \[
    \left\lvert \Paren{\hat{u}^\top\widehat{M}_2^{-1} \hat{u}}^{3/2}-\frac{|\kappa_3(\gamma)|}{\norm{u^{\ast}}^3\cdot\kappa_2(\gamma)^{3/2}}\right\rvert \leq O(\beta\cdot\kappa_2(\gamma)^{-1/2}),
    \]
    concluding the first part of our error analysis.

\subsection*{Error analysis for $\langle M_3, {\tilde{u}}^{\otimes 3} \rangle$}
    We will first consider the case that $\Norm{\hat{u}-\tilde{u}}\leq\beta$ and analyze the case $\Norm{\hat{u}+\tilde{u}}\leq\beta$ afterwards.
    We have that
    \begin{align*}
        \iprod{\hat{u}^{\otimes 3}, \widehat{M}_3}&=\iprod{\hat{u}^{\otimes 3},\widehat{M}_3-M_3}+ \iprod{\hat{u}^{\otimes 3}, M_3}\\
        &=\iprod{\hat{u}^{\otimes 3},\widehat{M}_3-M_3} + \norm{u^{\ast}}^3\cdot\Paren{1+3\Iprod{\hat{u}-\tilde{u}, \tilde{u}}+3\Iprod{\hat{u}-\tilde{u}, \tilde{u}}^2+\Iprod{\hat{u}-\tilde{u}, \tilde{u}}^3}.
    \end{align*}
    By assumption, we have that $\sup_{\norm{v}=1}\iprod{v^{\otimes 3}, \widehat{M}_3-M_3}\leq\beta$.
    Thus, by applying Cauchy-Schwarz and using $\|\hat{u}-\tilde{u}\| \leq \beta$, we get that
    \[
        \left|\iprod{\hat{u}^{\otimes 3}, \widehat{M}_3}-\norm{u^{\ast}}^3 \right|\leq \beta + O(\beta \cdot\norm{u^{\ast}}^3) \leq O(\beta),
    \]
    where we also used that $\Norm{u^\ast}^3 \leq 2\sqrt{2} |\sk(\gamma)| \leq O(1)$ by \Cref{claim:upper-bound-on-u-ast,claim:skew-atmost2}.
    Since $\|u^*\|^3 > 0$, we even get that
    \begin{equation}\label{eqn:sign-test}
        \left|\Abs{\iprod{\hat{u}^{\otimes 3}, \widehat{M}_3}}-\norm{u^{\ast}}^3\right|\leq O(\beta).
    \end{equation}
    This also holds for the case $\Norm{\hat{u}+\tilde{u}}\leq\beta$ by repeating the above argument for $-\hat{u}$ instead of $\hat{u}$.

    \subsection*{Putting the pieces together}
    We will now look at the total aggregate error that we incur. We begin by defining
    \[
    S_1=\Paren{\hat{u}^\top\widehat{M}_2^{-1} \hat{u}}^{3/2}, \ \  S_2=\left|\iprod{\hat{u}^{\otimes 3}, \widehat{M}_3}\right|
    \]
    and
    \[
    \eps_1 = S_1-\frac{|\kappa_3(\gamma)|}{\norm{u^{\ast}}^3\cdot\kappa_2(\gamma)^{3/2}}, \ \  \eps_2 = S_2-\norm{u^{\ast}}^3.
    \]
    By our computations earlier, we know that $| \eps_1 | \leq O\left(\beta \cdot \kappa_2(\gamma)^{-1/2} \right)$ and $|\eps_2| \leq O\left(\beta\right)$. 
    Now expanding $S_1 S_2$ we have 
    \begin{align*}
        S_1S_2&=\Paren{\frac{|\kappa_3(\gamma)|}{\norm{u^{\ast}}^3\cdot\kappa_2(\gamma)^{3/2}}+\epsilon_1}\Paren{\norm{u^{\ast}}^3+\epsilon_2}\\
        &=\frac{|\kappa_3(\gamma)|}{\norm{u^{\ast}}^3\cdot\kappa_2(\gamma)^{3/2}}\cdot\norm{u^{\ast}}^3+\epsilon_1\cdot\norm{u^{\ast}}^3+\epsilon_2\cdot\frac{|\kappa_3(\gamma)|}{\norm{u^{\ast}}^3\cdot\kappa_2(\gamma)^{3/2}}+\epsilon_1\epsilon_2.
    \end{align*}
    Now rearranging the above inequality gives us 
    \begin{equation}\label{eqn:close-in-skew}
    \begin{aligned}
        &\Abs{\sk(\hat{\gamma})-\sk(\gamma)} =\Abs{S_1S_2-\frac{|\kappa_3(\gamma)|}{\kappa_2(\gamma)^{3/2}}} \\
        &\qquad\leq O\Paren{\beta\cdot\kappa_2(\gamma)^{-1/2}}\cdot\norm{u^{\ast}}^3+O(\beta)\cdot\frac{|\sk(\gamma)|}{\norm{u^{\ast}}^3} +O\Paren{\beta^2 \cdot \kappa_2(\gamma)^{-1/2}}\\
        &\qquad\leq O\Paren{\beta\cdot\kappa_2(\gamma)^{-1/2}}  + O\left(\beta \cdot \frac{|\sk(\gamma)|}{|\kappa_3(\gamma)|}\right) + O\Paren{\beta^2 \cdot \kappa_2(\gamma)^{-1/2}}\\
        &\qquad\leq O\Paren{\beta\cdot \left( \kappa_2(\gamma)^{-1/2} + \kappa_2(\gamma)^{-3/2} + \beta \cdot \kappa_2(\gamma)^{-1/2}\right)}\\
        &\qquad\leq O\Paren{\beta\cdot\kappa_2(\gamma)^{-3/2}},
    \end{aligned}
    \end{equation}
    where the second inequality follows from our bounds on $\Vert u^\ast \Vert$ (cf. \Cref{claim:upper-bound-on-u-ast,claim:skew-atmost2}) and the last inequality follows from $|\kappa_2(\gamma)| \leq 1$ (cf. \Cref{claim:bound-on-kappa2}) and $\beta \leq 1$ by assumption. This completes the proof of \Cref{lem:threshold-estimation:closeinskew}.
\end{proof}

\subsection{\Truncparaword{} recovery implies parameter recovery}\label{SEC:fullproof:meancovariance}
In this section, we show that given an estimate for the \truncparaword{} as well as the vector $\tilde{u} = -\frac{\Sigma w}{\norm{\Sigma w}}$ and the first three moments, we can estimate the mean and covariance of the (untruncated) Gaussian.
\subsubsection{Mean estimation}\label{SEC:fullproof:meanestimation}
The following lemma shows that we can recover the mean.
\begin{lemma}\label{lem:putting-together-2}
    Suppose there is $\beta>0$ such that $\beta < c \cdot \kappa_2(\gamma)^{5/2}$, for some sufficiently small absolute constant $c > 0$, and there are $\bar{x},\widehat{M}_2,\hat{u}, \hat{\gamma}$ such that the following hold
    \begin{align}
    &\Norm{\bar{x}-\mu-\kappa_1(\gamma)\frac{\Sigma w}{\norm{\Sigma^{1/2}w}}}\leq\beta\label{eqn:39-1}\\
    &\norm{\Sigma^{-1/2}\Paren{\widehat{M}_2-M_2}\Sigma^{-1/2}}_F\leq\beta \label{eqn:39-2}\\
    &\Norm{\hat{u}-\frac{u^{\ast}}{\norm{u^{\ast}}}}=\Norm{\hat{u}+\frac{\Sigma w}{\norm{\Sigma w}}}\leq\beta,\label{eqn:39-3}\\
    &\left\lvert \hat{u}^\top\widehat{M}_2^{-1} \hat{u}-\frac{\kappa_3(\gamma)^{2/3}}{\norm{u^{\ast}}^2\cdot\kappa_2(\gamma)}\right\rvert\leq \beta\label{eqn:39-4}\\
    &\abs{\sk(\hat{\gamma})-\sk(\gamma)}\leq O\Paren{\frac{\beta}{\kappa_2(\gamma)^{3/2}}}\label{eqn:39-5}.
    \end{align}
    Then the mean estimator
    \[
    \hat{\mu}=\bar{x}+\frac{\kappa_1(\hat{\gamma})}{\sqrt{\kappa_2(\hat{\gamma})}}\Paren{\hat{u}^\top\widehat{M}_2^{-1}\hat{u}}^{-1/2}\hat{u}
    \]
    satisfies
    \[
    \norm{\hat{\mu}-\mu}\leq O\Paren{\frac{\beta}{\kappa_2(\gamma)^{7/2}}}.
    \]
\end{lemma}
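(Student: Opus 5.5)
The plan is to compare $\hat\mu$ with a population expression for $\mu$ that has the same shape. Set $q:=\tilde u^\top M_2^{-1}\tilde u=\frac{\kappa_3(\gamma)^{2/3}}{\norm{u^\ast}^2\kappa_2(\gamma)}$. By \eqref{eqn:secondmomentwithsigmaw}, $q=\frac{\norm{\Sigma^{1/2}w}^2}{\kappa_2(\gamma)\norm{\Sigma w}^2}$. Since $\tilde u=-\Sigma w/\norm{\Sigma w}$, this gives
\[
\frac{\Sigma w}{\norm{\Sigma^{1/2}w}}=-\frac{\norm{\Sigma w}}{\norm{\Sigma^{1/2} w}}\,\tilde u=-\frac{1}{\sqrt{\kappa_2(\gamma)}}\,q^{-1/2}\,\tilde u .
\]
Substituting into \Cref{lem:first3moments} yields the exact identity
\[
\mu=\E X+\frac{\kappa_1(\gamma)}{\sqrt{\kappa_2(\gamma)}}q^{-1/2}\tilde u .
\]
Together with \eqref{eqn:39-1}, this reduces the claim to bounding
\[
\Norm{\psi(\hat\gamma)\hat q^{-1/2}\hat u-\psi(\gamma)q^{-1/2}\tilde u},
\]
where $\psi=\kappa_1/\sqrt{\kappa_2}$ as in \Cref{CLAIM:ddtskewoverddtpsi} and $\hat q=\hat u^\top\widehat M_2^{-1}\hat u$. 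The final bound is then $\beta$ plus this term.

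I will telescope the difference into three pieces: $(\psi(\hat\gamma)-\psi(\gamma))\hat q^{-1/2}\hat u$, then $\psi(\gamma)(\hat q^{-1/2}-q^{-1/2})\hat u$, then $\psi(\gamma)q^{-1/2}(\hat u-\tilde u)$. The ingredients for each are as follows.

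\textbf{Bounds on $q$.} By preconditioning (\Cref{lem:preconditioner}), $q\ge \lambda_{\min}(M_2^{-1})\ge 1/2$. Since $|\hat q-q|\le\beta$ by \eqref{eqn:39-4} and $\beta$ is small, also $\hat q\ge 1/4$. Hence $q^{-1/2},\hat q^{-1/2}=O(1)$, and by the mean value theorem $|\hat q^{-1/2}-q^{-1/2}|=O(\beta)$.

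\textbf{Size of $\psi(\gamma)$.} By \Cref{claim:kappa1kappa2bounded}, $|\kappa_1|\sqrt{\kappa_2}=O(1)$, so $|\psi(\gamma)|=|\kappa_1(\gamma)|\sqrt{\kappa_2(\gamma)}/\kappa_2(\gamma)=O(\kappa_2(\gamma)^{-1})$. Consequently the second and third pieces are each $O(\beta\,\kappa_2(\gamma)^{-1})$, using \eqref{eqn:39-3} for the third.

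\textbf{Difference $\psi(\hat\gamma)-\psi(\gamma)$.} The second part of \Cref{CLAIM:ddtskewoverddtpsi} gives
\[
|\psi(\hat\gamma)-\psi(\gamma)|\le C^{-1}|\sk(\hat\gamma)-\sk(\gamma)|\max\{\kappa_2(\gamma)^{-2},\kappa_2(\hat\gamma)^{-2}\}.
\]
Here $|\sk(\hat\gamma)-\sk(\gamma)|=O(\beta\kappa_2(\gamma)^{-3/2})$ by \eqref{eqn:39-5}. If $\kappa_2(\hat\gamma)\ge\kappa_2(\gamma)/2$, the first piece is $O(\beta\kappa_2(\gamma)^{-7/2})$, which dominates and matches the target.

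The main obstacle is ruling out that $\kappa_2(\hat\gamma)$ is much smaller than $\kappa_2(\gamma)$, since the $\max$ would then blow up. I plan to use \Cref{CLAIM:ddtskewoverddtkappa2}:
\[
|\kappa_2(\hat\gamma)-\kappa_2(\gamma)|\le C^{-1}|\sk(\hat\gamma)-\sk(\gamma)|=O(\beta\kappa_2(\gamma)^{-3/2}).
\]
By the hypothesis $\beta<c\,\kappa_2(\gamma)^{5/2}$ with $c$ small, this is at most $\kappa_2(\gamma)/2$, so $\kappa_2(\hat\gamma)\ge \kappa_2(\gamma)/2$. The same constraint also keeps $\beta$ below the absolute constant needed for the $\hat q$ bounds. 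Summing the three pieces with the $\beta$ from \eqref{eqn:39-1} gives $\norm{\hat\mu-\mu}\le O(\beta\kappa_2(\gamma)^{-7/2})$, using $\kappa_2\le 1$ from \Cref{claim:bound-on-kappa2}. The sign bookkeeping, namely that $\hat u$ approximates $\tilde u$ itself after step 7, is exactly hypothesis \eqref{eqn:39-3}.
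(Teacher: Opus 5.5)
Your proposal is correct and follows the paper's proof. After you rewrite $\mu=\E X+\psi(\gamma)q^{-1/2}\tilde u$, your three telescoped pieces are exactly the paper's $a_1$, $a_3$ and $a_2$, and you bound them with the same ingredients (\Cref{CLAIM:ddtskewoverddtpsi}, \Cref{CLAIM:ddtskewoverddtkappa2} to get $\kappa_2(\hat\gamma)\ge\kappa_2(\gamma)/2$, and \Cref{claim:kappa1kappa2bounded}). Your treatment of $\hat q^{-1/2}-q^{-1/2}$ via $q\ge 1/2$ (from $M_2\preceq 2I_d$) and \eqref{eqn:39-4} is slightly cleaner than the paper's, which applies the mean value theorem to $q^{-1},\hat q^{-1}$ and only loosely justifies $|q^{-1}-\hat q^{-1}|\le O(\beta)$ from \eqref{eqn:39-4}.
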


\begin{proof}
First, using assumption \eqref{eqn:39-1}, we have that
\begin{align*}
    \Norm{\hat{\mu}-\mu}&=\Norm{\bar{x}-\mu-\kappa_1(\gamma)\frac{\Sigma w}{\norm{\Sigma^{1/2}w}}+\kappa_1(\gamma)\frac{\Sigma w}{\norm{\Sigma^{1/2}w}}+\frac{\kappa_1(\hat{\gamma})}{\sqrt{\kappa_2(\hat{\gamma})}}\Paren{\hat{u}^\top\widehat{M}_2^{-1}\hat{u}}^{-1/2}\hat{u}}\\
    &\leq\Norm{\bar{x}-\mu-\kappa_1(\gamma)\frac{\Sigma w}{\norm{\Sigma^{1/2}w}}}+\Norm{\kappa_1(\gamma)\frac{\Sigma w}{\norm{\Sigma^{1/2}w}}+\frac{\kappa_1(\hat{\gamma})}{\sqrt{\kappa_2(\hat{\gamma})}}\Paren{\hat{u}^\top\widehat{M}_2^{-1}\hat{u}}^{-1/2}\hat{u}}\\
    &\leq\beta+\underbrace{\Norm{\kappa_1(\gamma)\frac{\Sigma w}{\norm{\Sigma^{1/2}w}}+\frac{\kappa_1(\hat{\gamma})}{\sqrt{\kappa_2(\hat{\gamma})}}\Paren{\hat{u}^\top\widehat{M}_2^{-1}\hat{u}}^{-1/2}\hat{u}}}_{(\ast)}.
\end{align*}
We now want to bound $(\ast)$.
We can furthermore split this up as follows
\begin{align*}
    (\ast) &\leq \Norm{\kappa_1(\gamma)\frac{\Sigma w}{\norm{\Sigma^{1/2}w}} + \frac{\kappa_1(\gamma)}{\sqrt{\kappa_2(\gamma)}}\Paren{\hat{u}^\top\widehat{M}_2^{-1}\hat{u}}^{-1/2}\hat{u}}\\
    &\qquad + \underbrace{\Norm{\frac{\kappa_1(\gamma)}{\sqrt{\kappa_2(\gamma)}}\Paren{\hat{u}^\top\widehat{M}_2^{-1}\hat{u}}^{-1/2}\hat{u}-\frac{\kappa_1(\hat{\gamma})}{\sqrt{\kappa_2(\hat{\gamma})}}\Paren{\hat{u}^\top\widehat{M}_2^{-1}\hat{u}}^{-1/2}\hat{u}}}_{a_1 \coloneqq}\\
    &\leq a_1 + \underbrace{\Norm{\kappa_1(\gamma)\frac{\Sigma w}{\norm{\Sigma^{1/2}w}} + \kappa_1(\gamma)\frac{\norm{\Sigma w}}{\norm{\Sigma^{1/2}w}}\hat{u}}}_{a_2 \coloneqq}\\
    &\qquad + \underbrace{\Norm{\kappa_1(\gamma)\frac{\norm{\Sigma w}}{\norm{\Sigma^{1/2}w}}\hat{u} - \frac{\kappa_1(\gamma)}{\sqrt{\kappa_2(\gamma)}}\Paren{\hat{u}^\top\widehat{M}_2^{-1}\hat{u}}^{-1/2}\hat{u}}}_{a_3 \coloneqq}.
\end{align*}
Putting these together, we get that
\[
    \Norm{\hat{\mu} - \mu} \leq \beta + a_1 + a_2 + a_3
\]
and it remains to bound $a_1$, $a_2$ and $a_3$.
We do so now.

\textbf{Bounding $a_1$.}
We have that
\[
    a_1 \leq \Abs{\frac{\kappa_1(\gamma)}{\sqrt{\kappa_2(\gamma)}} - \frac{\kappa_1(\hat{\gamma})}{\sqrt{\kappa_2(\hat{\gamma})}}} \cdot \Abs{\Paren{\hat{u}^\top\widehat{M}_2^{-1}\hat{u}}^{-1/2}} \cdot \Norm{\hat{u}}.
\]
We have that $\norm{\hat{u}} \leq 1$. Thus, defining $\psi(t)=\frac{\kappa_1(t)}{\sqrt{\kappa_2(t)}}$ we get that
\[
    a_1 \leq \Abs{\Paren{\hat{u}^\top\widehat{M}_2^{-1}\hat{u}}^{-1/2}} \cdot \Abs{\psi(\hat{\gamma}) - \psi(\gamma)}.
\]
By \Cref{CLAIM:ddtskewoverddtpsi}, we have that
\[
    \Abs{\psi(\hat{\gamma}) - \psi(\gamma)} \lesssim \Abs{\sk(\hat{\gamma})-\sk(\gamma)} \cdot \max\{\kappa_2(\hat{\gamma})^{-2}, \kappa_2(\gamma)^{-2}\}.
\]
We can use assumption \eqref{eqn:39-5} to bound the first term, so it remains to give a lower bound on $\kappa_2(\hat{\gamma})$.
It follows from \Cref{CLAIM:ddtskewoverddtkappa2} and assumption \eqref{eqn:39-5} that
\[
    \Abs{\kappa_2(\gamma)-\kappa_2(\hat{\gamma})}\lesssim\abs{\sk(\hat{\gamma})-\sk(\gamma)}\leq\frac{\beta}{\kappa_2(\gamma)^{3/2}}
\]
Moreover, by choosing a sufficiently small constant in our assumption $\beta < c \cdot \kappa_2(\gamma)^{5/2}$, we have that $\kappa_2(\hat{\gamma})\geq\kappa_2(\gamma)-\frac{1}{2}\kappa_2(\gamma)=\frac{1}{2}\kappa_2(\gamma)$.
Thus, we get that
\[
    \Abs{\psi(\hat{\gamma}) - \psi(\gamma)} \lesssim \Abs{\sk(\hat{\gamma})-\sk(\gamma)} \cdot \kappa_2(\gamma)^{-2} \leq \frac{\beta}{\kappa_2(\gamma)^{7/2}}.
\]
On the other hand, by \Cref{lem:preconditioner}, we have that $M_2 \preceq 2 I_d$, which together with assumption \eqref{eqn:39-2} implies
\[
    \Abs{\hat{u}^\top\widehat{M}_2^{-1}\hat{u}}^{-1/2}\leq\lambda_{\min}\Paren{\widehat{M}_2^{-1}}^{-1/2}=\Norm{\widehat{M}_2}^{1/2}\leq \sqrt{(1+\beta)\norm{M_2}}\leq 2,
\]
as long as $\beta \leq 1$.
Together with the above this implies 
\[
    a_1 \leq O\Paren{\frac{\beta}{\kappa_2(\gamma)^{7/2} \cdot}}.
\]

\textbf{Bounding $a_2$.}
For $a_2$, we have that
\[
    a_2 \leq \Abs{\kappa_1(\gamma)} \cdot \frac{\norm{\Sigma w}}{\norm{\Sigma^{1/2} w}} \cdot \Norm{\frac{\Sigma w}{\norm{\Sigma w}} + \hat{u}}.
\]
We have that $\Abs{\kappa_1(\gamma)} \leq O(\kappa_2(\gamma)^{-1/2})$ by \Cref{claim:kappa1kappa2bounded} and that $\frac{\norm{\Sigma w}}{\norm{\Sigma^{1/2} w}} \leq \kappa_2(\gamma)^{-1/2}$ by \Cref{lem:preconditioner}.
Together with assumption \eqref{eqn:39-3}, this implies 
\[
   a_2 \leq O\Paren{\frac{\beta}{\kappa_2(\gamma)}}.
\]

\textbf{Bounding $a_3$.}
For $a_3$, we have that
\begin{align*}
    a_3 &\leq \Abs{\kappa_1(\gamma)\frac{\norm{\Sigma w}}{\norm{\Sigma^{1/2}w}} - \frac{\kappa_1(\gamma)}{\sqrt{\kappa_2(\gamma)}}\Paren{\hat{u}^\top\widehat{M}_2^{-1}\hat{u}}^{-1/2}} \cdot \Norm{\hat{u}}\\
    &\leq \frac{\Abs{\kappa_1(\gamma)}}{\sqrt{\kappa_2(\gamma)}} \cdot \Abs{\sqrt{\kappa_2(\gamma)}\frac{\norm{\Sigma w}}{\norm{\Sigma^{1/2} w}} - \Paren{\hat{u}^\top\widehat{M}_2^{-1}\hat{u}}^{-1/2}}.
\end{align*}
Now, consider $f(x)=\sqrt{x}$.
For $a, c \geq B > 0$, we have that, by the mean value theorem,
\[
    \Abs{f(a)-f(c)}\leq\frac{1}{2\sqrt{B}}\abs{a-c}.
\]
We want to apply this for $a = \kappa_2(\gamma)\frac{\norm{\Sigma w}^2}{\norm{\Sigma^{1/2} w}^2}$ and $c = \Paren{\hat{u}^\top\widehat{M}_2^{-1}\hat{u}}^{-1}$.
By \Cref{lem:preconditioner}, we have that $a \geq \frac{1}{2}\kappa_2(\gamma)$.
From assumption \eqref{eqn:39-4} and using also $u^* = \kappa_3(\gamma)^{1/3} \frac{\Sigma w}{\norm{\Sigma^{1/2}w}}$, we furthermore have
\[
    c \geq \kappa_2(\gamma)\frac{\Norm{\Sigma w}^2}{\norm{\Sigma^{1/2}w}^2}-\beta\geq\frac{1}{2} \kappa_2(\gamma) - \beta \geq \frac{1}{4}\kappa_2(\gamma),
\]
as long as $\beta \leq \frac{\kappa_2(\gamma)}{4}$.
Hence, we can choose $B = \frac{1}{4}\kappa_2(\gamma)$.
Thus, we get that
\begin{align*}
    \Abs{\sqrt{\kappa_2(\gamma)}\frac{\norm{\Sigma w}}{\norm{\Sigma^{1/2} w}} - \Paren{\hat{u}^\top\widehat{M}_2^{-1}\hat{u}}^{-1/2}} &\leq \frac{1}{\sqrt{\kappa_2(\gamma)}} \cdot \Abs{\kappa_2(\gamma)\frac{\norm{\Sigma w}^2}{\norm{\Sigma^{1/2} w}^2} - \Paren{\hat{u}^\top\widehat{M}_2^{-1}\hat{u}}^{-1}}\\
    &\leq  \frac{\beta}{\sqrt{\kappa_2(\gamma)}}
\end{align*}
by assumption \eqref{eqn:39-4}.
Together with the above as well as \Cref{claim:kappa1kappa2bounded}, we get that
\[
    a_3 \leq \beta\cdot\frac{\abs{\kappa_1(\gamma)}}{\kappa_2(\gamma)}\leq O\Paren{\beta\cdot\kappa_2(\gamma)^{-3/2}}.
\]

\textbf{Completing the proof.}
Combining the bounds on $a_1$, $a_2$ and $a_3$, we get that
\[
    \Norm{\hat{\mu} - \mu} \leq \beta + O\Paren{\frac{\beta}{\kappa_2(\gamma)^{7/2}}} + O\Paren{\frac{\beta}{\kappa_2(\gamma)}}+ O\Paren{\frac{\beta}{\kappa_2(\gamma)^{3/2}}} \leq O\Paren{\frac{\beta}{\kappa_2(\gamma)^{7/2}}},
\]
where we used $\kappa_2(\gamma) \leq 1$ and $\kappa_3 \leq O(1)$ by \Cref{claim:bound-on-kappa2,claim:kappa1kappa3bounded} for the last inequality.
\end{proof}

\subsubsection{Covariance estimation}\label{SEC:fullproof:covarianceestimation}
For covariance estimation, we need the following fact, whose proof we defer to \Cref{APP:facts:matrixvectorcalc}.
\newcommand{\factupperboundFrobeniustext}{
    For any $a, b \in \R^d$ we have that
    \(
        \Norm{aa^\top-bb^\top}_F\leq\norm{a+b}\cdot\norm{a-b}.
    \)
}
\begin{fact}\label{fact:upper-bound-Frobenius}\factupperboundFrobeniustext
\end{fact}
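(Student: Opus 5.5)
The plan is to compare $aa^\top - bb^\top$ with the symmetrized product of $a+b$ and $a-b$, and then bound its Frobenius norm through the trace. First expand
\[
(a+b)(a-b)^\top + (a-b)(a+b)^\top = 2aa^\top - 2bb^\top,
\]
since the cross terms $-ab^\top + ba^\top$ and $ab^\top - ba^\top$ cancel. Hence
\[
aa^\top - bb^\top = \tfrac12\Paren{pq^\top + qp^\top}
\]
with $p = a+b$ and $q = a-b$, so it suffices to show $\Norm{\tfrac12(pq^\top+qp^\top)}_F \leq \norm{p}\norm{q}$.

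A crude triangle inequality already gives this:
\[
\Norm{\tfrac12(pq^\top+qp^\top)}_F \leq \tfrac12\Paren{\norm{pq^\top}_F + \norm{qp^\top}_F} = \norm{p}\norm{q},
\]
because $\norm{pq^\top}_F = \norm{p}\norm{q}$ for any outer product. This settles the fact.

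For completeness, the exact computation is
\[
\Norm{\tfrac12(pq^\top+qp^\top)}_F^2 = \tfrac14\Paren{2\norm{p}^2\norm{q}^2 + 2\iprod{p,q}^2} = \tfrac12\Paren{\norm{p}^2\norm{q}^2 + \iprod{p,q}^2}.
\]
By Cauchy--Schwarz this is at most $\norm{p}^2\norm{q}^2$, which gives the same bound.

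There is no real obstacle here. The only step needing care is checking that the cross terms cancel in the symmetrized expansion; note that the non-symmetrized product $(a+b)(a-b)^\top$ does \emph{not} equal $aa^\top - bb^\top$, which is why the symmetrization is needed.
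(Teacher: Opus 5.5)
Your proof is correct, but your main argument takes a different route from the paper's. The paper works with the squared norm directly. It reuses the identity $\Norm{aa^\top-bb^\top}_F^2=\norm{a}^4+\norm{b}^4-2\iprod{a,b}^2$, already derived in the proof of \Cref{claim:close-upto-sign}. It then bounds $2\iprod{a,b}^2\le 2\norm{a}^2\norm{b}^2$ by Cauchy--Schwarz and factors the result as $(\norm{a}^2+\norm{b}^2+2\iprod{a,b})(\norm{a}^2+\norm{b}^2-2\iprod{a,b})=\norm{a+b}^2\norm{a-b}^2$.

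You instead write $aa^\top-bb^\top=\tfrac12(pq^\top+qp^\top)$ and apply the triangle inequality to the two rank-one terms. This needs no computation at all. It also works verbatim for any matrix norm with $\norm{pq^\top}=\norm{p}\norm{q}$, e.g.\ the spectral norm, the nuclear norm, and every Schatten norm.

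Your ``for completeness'' computation is really the paper's argument written in the coordinates $p,q$. By Lagrange's identity, $\norm{p}^2\norm{q}^2-\iprod{p,q}^2=4\Paren{\norm{a}^2\norm{b}^2-\iprod{a,b}^2}$. So the slack you discard, $\tfrac12\Paren{\norm{p}^2\norm{q}^2-\iprod{p,q}^2}$, equals the $2\Paren{\norm{a}^2\norm{b}^2-\iprod{a,b}^2}$ the paper discards. Both versions therefore show the bound is tight exactly when $a$ and $b$ are linearly dependent.

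The paper's route is natural there mainly because the squared-norm identity was already available from the preceding claim. Yours is self-contained and more general.
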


\begin{lemma}\label{lem:putting-together}
    Suppose there is $\beta>0$ such that $\beta< c \cdot \kappa_2(\gamma)^{5/2}$, for a sufficiently small absolute constant $c > 0$, and $\widehat{M}_2,\hat{u}, \hat{\gamma}$ such that
    \begin{align}
    &\norm{\Sigma^{-1/2}\Paren{\widehat{M}_2-M_2}\Sigma^{-1/2}}_F\leq\beta \label{eqn:25-1}\\
    &\min\left\{\Norm{\hat{u}-\frac{u^{\ast}}{\norm{u^{\ast}}}},\Norm{\hat{u}+\frac{u^{\ast}}{\norm{u^{\ast}}}}\right\}\leq\beta,\label{eqn:25-2}\\
    &\left\lvert \hat{u}^\top\widehat{M}_2^{-1} \hat{u}-\frac{\kappa_3(\gamma)^{2/3}}{\norm{u^{\ast}}^2\cdot\kappa_2(\gamma)}\right\rvert\leq \beta\label{eqn:25-3}\\
    &\abs{\sk(\hat{\gamma})-\sk(\gamma)}\leq\frac{\beta}{\kappa_2(\gamma)^{3/2}}\label{eqn:25-5}.
    \end{align}
    Then the covariance estimator
    \[
        \widehat{\Sigma}=\widehat{M}_2+\Paren{\frac{1}{\kappa_2(\hat{\gamma})}-1}\paren{\hat{u}^\top\widehat{M}_2^{-1} \hat{u}}^{-1}\cdot \hat{u}\hat{u}^\top
    \]
    satisfies
    \[
    \norm{\Sigma^{-1/2}\widehat{\Sigma}\Sigma^{-1/2}-I_d}_F\le O\Paren{\frac{\beta}{\kappa_2(\gamma)^{7/2}}}.
    \]
\end{lemma}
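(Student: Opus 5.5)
Write $v \coloneqq \frac{\Sigma w}{\norm{\Sigma^{1/2}w}}$, so that $\Sigma = M_2 + (1-\kappa_2(\gamma))\, vv^\top$ by \Cref{lem:first3moments}. Since $\tilde{u}\tilde{u}^\top = \frac{\Sigma w w^\top\Sigma}{\norm{\Sigma w}^2}$, we can rewrite
\[
    vv^\top = \frac{\norm{\Sigma w}^2}{\norm{\Sigma^{1/2}w}^2}\,\tilde{u}\tilde{u}^\top = \frac{1}{\kappa_2(\gamma)\, \tilde{u}^\top M_2^{-1}\tilde{u}}\,\tilde{u}\tilde{u}^\top
\]
using \eqref{eqn:secondmomentwithsigmaw}. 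Hence
\[
    \Sigma = M_2 + \Paren{\tfrac{1}{\kappa_2(\gamma)}-1}\Paren{\tilde{u}^\top M_2^{-1}\tilde{u}}^{-1}\tilde{u}\tilde{u}^\top,
\]
which mirrors the form of $\widehat{\Sigma}$. The estimator is invariant under $\hat{u}\mapsto-\hat{u}$, so by \eqref{eqn:25-2} we may assume $\norm{\hat{u}-\tilde{u}}\le\beta$. Write $q = \tilde{u}^\top M_2^{-1}\tilde{u}$, $\hat q = \hat{u}^\top \widehat{M}_2^{-1}\hat{u}$, $c(t) = \frac{1}{\kappa_2(t)}-1$, and note that $\frac{\kappa_3(\gamma)^{2/3}}{\norm{u^*}^2\kappa_2(\gamma)} = q$. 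Then $\widehat\Sigma-\Sigma = (\widehat{M}_2-M_2) + E$ with
\[
    E = \bigl(c(\hat\gamma)-c(\gamma)\bigr)\hat q^{-1}\hat u\hat u^\top + c(\gamma)\bigl(\hat q^{-1}-q^{-1}\bigr)\hat u\hat u^\top + c(\gamma)q^{-1}\bigl(\hat u\hat u^\top-\tilde u\tilde u^\top\bigr).
\]
After conjugating by $\Sigma^{-1/2}$, the first term is controlled by \eqref{eqn:25-1}. For the remaining terms we use $\norm{\Sigma^{-1/2}}\le\sqrt2$ from \Cref{lem:preconditioner}, so $\norm{\Sigma^{-1/2}A\Sigma^{-1/2}}_F\le 2\norm{A}_F$.

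\textbf{Scalar bounds.} The lower bound $q\ge\frac{1}{2}$ follows from $M_2\preceq 2I_d$. The upper bound $q\le 2/\kappa_2(\gamma)$ is \eqref{eqn:skew-upper-bound}. From \eqref{eqn:25-3}, $\hat q\ge q-\beta\ge\frac14$, so $\abs{\hat q^{-1}-q^{-1}}\le \abs{\hat q-q}/(\hat q q)\le 8\beta$. We also have $c(\gamma)\le 1/\kappa_2(\gamma)$.

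\textbf{Main obstacle: the term $c(\hat\gamma)-c(\gamma)$.} This is where the inversion of $\sk$ enters. By \Cref{CLAIM:ddtskewoverddtkappa2} and \eqref{eqn:25-5},
\[
    \abs{\kappa_2(\hat\gamma)-\kappa_2(\gamma)}\lesssim \beta\kappa_2(\gamma)^{-3/2}.
\]
The assumption $\beta<c\,\kappa_2(\gamma)^{5/2}$ makes this at most $\kappa_2(\gamma)/2$, so $\kappa_2(\hat\gamma)\ge\kappa_2(\gamma)/2$, exactly as in the proof of \Cref{lem:putting-together-2}. Consequently
\[
    \abs{c(\hat\gamma)-c(\gamma)} = \frac{\abs{\kappa_2(\hat\gamma)-\kappa_2(\gamma)}}{\kappa_2(\hat\gamma)\kappa_2(\gamma)}\lesssim \beta\kappa_2(\gamma)^{-7/2}.
\]

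\textbf{Assembling.} The first term of $E$ has Frobenius norm $O(\beta\kappa_2^{-7/2})$, since $\hat q^{-1}\le4$ and $\norm{\hat u\hat u^\top}_F=1$. The second term is $O(\beta/\kappa_2)$. For the third, \Cref{fact:upper-bound-Frobenius} gives $\norm{\hat u\hat u^\top-\tilde u\tilde u^\top}_F\le 2\beta$, so the term is at most $2\cdot\frac{1}{\kappa_2}\cdot 2\beta$. Adding the $\beta$ coming from \eqref{eqn:25-1} and using $\kappa_2\le1$ (\Cref{claim:bound-on-kappa2}), the total is $O(\beta\kappa_2(\gamma)^{-7/2})$, as claimed.
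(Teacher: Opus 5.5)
Your proof is correct and follows essentially the same route as the paper. Your decomposition into $\widehat{M}_2-M_2$ plus the three terms of $E$ is exactly the paper's telescoping $A_1-A_2-A_3-A_4$, and you use the same two tools: \Cref{CLAIM:ddtskewoverddtkappa2} to get $\kappa_2(\hat\gamma)\ge\kappa_2(\gamma)/2$, and \Cref{fact:upper-bound-Frobenius} for the $\hat u\hat u^\top-\tilde u\tilde u^\top$ term. The only differences are cosmetic: you first rewrite $\Sigma$ in the same form as $\widehat\Sigma$ and bound $\abs{\hat q^{-1}-q^{-1}}$ directly via $q\ge\tfrac12$, which makes the bookkeeping slightly cleaner than the paper's.
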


\begin{proof}
By the definition of our estimator, we get that
\[
    \Sigma^{-1/2}\widehat{\Sigma}\Sigma^{-1/2}-I_d = \Sigma^{-1/2}\widehat{M}_2\Sigma^{-1/2}+\Paren{\frac{1}{\kappa_2(\hat{\gamma})}-1}\paren{\hat{u}^\top\widehat{M}_2^{-1} \hat{u}}^{-1}\Sigma^{-1/2}\hat{u}\hat{u}^\top\Sigma^{-1/2}-I_d.
\]
If all $\widehat{M}_2$, $\hat{\gamma}$ and $\hat{u}$ would be equal to $M_2$, $\gamma$ and $\tilde{u}$, then this norm would be zero.
The proof now consists of accounting for the errors in these estimates.
First, substituting $\widehat{M}_2$ by $M_2$, we get that
\[
    \Sigma^{-1/2}\widehat{\Sigma}\Sigma^{-1/2}-I_d = A_1 + \Sigma^{-1/2}M_2\Sigma^{-1/2}+\Paren{\frac{1}{\kappa_2(\hat{\gamma})}-1}\paren{\hat{u}^\top\widehat{M}_2^{-1} \hat{u}}^{-1}\Sigma^{-1/2}\hat{u}\hat{u}^\top\Sigma^{-1/2}-I_d,
\]
where
\[
    A_1 = \Sigma^{-1/2}(\widehat{M}_2-M_2)\Sigma^{-1/2}.
\]
Now, we can use the formula for $M_2$ from \Cref{lem:first3moments} to get that
\begin{align*}
    &\Sigma^{-1/2}\widehat{\Sigma}\Sigma^{-1/2}-I_d =\\
    & \qquad A_1 + (1-\kappa_2(\gamma))\frac{\Sigma^{1/2}ww^\top\Sigma^{1/2}}{\Norm{\Sigma^{1/2}w}^2}-\underbrace{\Paren{\frac{1}{\kappa_2(\hat{\gamma})}-1}\paren{\hat{u}^\top\widehat{M}_2^{-1} \hat{u}}^{-1}\Sigma^{-1/2}\hat{u}\hat{u}^\top\Sigma^{-1/2}}_{(\ast)}.
\end{align*}
We now want to show that
\[
    (\ast) \approx (1-\kappa_2(\gamma))\frac{\Sigma^{1/2}ww^\top\Sigma^{1/2}}{\Norm{\Sigma^{1/2}w}^2}.
\]
Note that if we would replace $\widehat{M}_2$, $\hat{\gamma}$ and $\hat{u}$ by $M_2$, $\gamma$ and $\tilde{u}$, then they would be equal.
We will do so now and account for the error.
Substituting $\hat{\gamma}$ by $\gamma$, we get that
\[
    (\ast) = A_2 +\Paren{\frac{1}{\kappa_2(\gamma)}-1}\paren{\hat{u}^\top\widehat{M}_2^{-1} \hat{u}}^{-1}\Sigma^{-1/2}\hat{u}\hat{u}^\top\Sigma^{-1/2},
\]
where
\[
    A_2 = \Paren{\frac{1}{\kappa_2(\hat{\gamma})}-\frac{1}{\kappa_2(\gamma)}}\paren{\hat{u}^\top\widehat{M}_2^{-1} \hat{u}}^{-1}\Sigma^{-1/2}\hat{u}\hat{u}^\top\Sigma^{-1/2}.
\]
Substituting $\hat{u}^\top \widehat{M}_2^{-1}\hat{u}$ by $\frac{\kappa_3(\gamma)^{2/3}}{\norm{u^{\ast}}^2\cdot\kappa_2(\gamma)}$, we get that
\begin{align*}
    (\ast) &= A_2 + A_3 + \Paren{\frac{1}{\kappa_2(\gamma)}-1}\Paren{\frac{\kappa_3(\gamma)^{2/3}}{\norm{u^{\ast}}^2\cdot\kappa_2(\gamma)}}^{-1}\Sigma^{-1/2}\hat{u}\hat{u}^\top\Sigma^{-1/2}\\
    &= A_2 + A_3 + \paren{1-\kappa_2(\gamma)}\frac{\norm{u^{\ast}}^2}{\kappa_3(\gamma)^{2/3}}\Sigma^{-1/2}\hat{u}\hat{u}^\top\Sigma^{-1/2},
\end{align*}
where 
\[
    A_3 = \Paren{\frac{1}{\kappa_2(\gamma)}-1}\Paren{\Paren{\hat{u}^\top\widehat{M}_2^{-1} \hat{u}}^{-1}-\frac{\norm{u^{\ast}}^2\cdot\kappa_2(\gamma)}{\kappa_3(\gamma)^{2/3}}}\Sigma^{-1/2}\hat{u}\hat{u}^\top\Sigma^{-1/2}.
\]
Finally, substituting $\hat{u}$ with $\tilde{u} = \frac{\Sigma w}{\norm{\Sigma w}}$ and using that $\frac{\norm{\Sigma w}}{\norm{\Sigma^{1/2}w}} = \frac{\norm{u^*}}{\kappa_3(\gamma)^{1/3}}$, we get that
\begin{align*}
    (\ast) &= A_2 + A_3 + A_4 + \paren{1-\kappa_2(\gamma)}\frac{\norm{u^{\ast}}^2}{\kappa_3(\gamma)^{2/3}}\frac{\Sigma^{1/2}w w^\top\Sigma^{1/2}}{\norm{\Sigma w}^2}\\
    &= A_2 + A_3 + A_4 + (1-\kappa_2(\gamma))\frac{\Sigma^{1/2}ww^\top\Sigma^{1/2}}{\Norm{\Sigma^{1/2}w}^2},
\end{align*}
where
\[
    A_4 = (1-\kappa_2(\gamma))\Paren{\frac{\norm{u^{\ast}}^2}{\kappa_3(\gamma)^{2/3}}\Sigma^{-1/2}\hat{u}\hat{u}^\top\Sigma^{-1/2} - \frac{\Sigma^{1/2}ww^\top\Sigma^{1/2}}{\norm{\Sigma^{1/2}w}^2}}.
\]
Combining all of the above we get that
\[
    \Sigma^{-1/2}\widehat{\Sigma}\Sigma^{-1/2}-I_d = A_1 - A_2 - A_3 - A_4.
\]
Thus, in order to bound $\norm{\Sigma^{-1/2}\widehat{\Sigma}\Sigma^{-1/2}-I_d}_F$, it is sufficient to bound the norms of $A_1$, $A_2$, $A_3$ and $A_4$.
We will do so now.

\textbf{Bounding $\|A_1\|_F$.}
For $A_1$, note that by \eqref{eqn:25-1} we have that $\|A_1\|_F \leq \beta$.

\textbf{Bounding $\|A_2\|_F$.}
For $A_2$, we can first bound
\[
    \|A_2\|_F \leq \Abs{\frac{1}{\kappa_2(\hat{\gamma})} - \frac{1}{\kappa_2(\gamma)}} \cdot \Abs{\paren{\hat{u}^\top \widehat{M}_2^{-1}\hat{u}}^{-1}} \Norm{\Sigma^{-1/2} \hat{u} \hat{u}^\top \Sigma^{-1/2}}_F.
\]
We bound these three terms separately.
First, note that, by \Cref{lem:preconditioner},
\begin{equation}\label{eq:covariancestimator:A2:part3}
    \Norm{\Sigma^{-1/2}\hat{u}\hat{u}^\top\Sigma^{-1/2}}_F=\Norm{\Sigma^{-1/2}\hat{u}}^2\leq\norm{\Sigma^{-1/2}}^2\leq 2.
\end{equation}
For the middle term, we first want to bound
\[
    \Abs{\paren{\hat{u}^\top\widehat{M}_2^{-1} \hat{u}}^{-1}-\frac{\norm{u^{\ast}}^2\cdot\kappa_2(\gamma)}{\kappa_3(\gamma)^{2/3}}} = \Abs{\paren{\hat{u}^\top\widehat{M}_2^{-1} \hat{u}}^{-1}-\frac{\norm{\Sigma w}^2\cdot\kappa_2(\gamma)}{\norm{\Sigma^{1/2} w}^2}},
\]
where we used $u^* = \kappa_3(\gamma)^{1/3} \frac{\Sigma w}{\norm{\Sigma^{1/2}w}}$ (which implies $\frac{\norm{u^{\ast}}^2\cdot\kappa_2(\gamma)}{\kappa_3(\gamma)^{2/3}} = \frac{\norm{\Sigma w}^2\cdot\kappa_2(\gamma)}{\norm{\Sigma^{1/2} w}^2}$).
We have that, using assumption \eqref{eqn:25-3},
\begin{align*}
        \Abs{\paren{\hat{u}^\top\widehat{M}_2^{-1} \hat{u}}^{-1}-\frac{\norm{\Sigma w}^2\cdot\kappa_2(\gamma)}{\norm{\Sigma^{1/2} w}^2}} &=\Abs{\hat{u}^\top\widehat{M}_2^{-1} \hat{u}-\frac{\norm{\Sigma^{1/2} w}^2}{\norm{\Sigma w}^2\cdot\kappa_2(\gamma)}}\cdot\Abs{\paren{\hat{u}^\top\widehat{M}_2^{-1} \hat{u}}^{-1}\cdot\frac{\norm{\Sigma w}^2\cdot\kappa_2(\gamma)}{\norm{\Sigma^{1/2} w}^2}}\\
        &\leq \beta\cdot\frac{\norm{\Sigma w}^2\cdot\kappa_2(\gamma)}{\hat{u}^\top\widehat{M}_2^{-1} \hat{u}\cdot\norm{\Sigma^{1/2}w}^2}.
\end{align*}
Using \eqref{eqn:25-3} again to lower bound $\hat{u}^\top\widehat{M}_2^{-1} \hat{u}$, we can furthermore bound
\begin{align*}
    \Abs{\paren{\hat{u}^\top\widehat{M}_2^{-1} \hat{u}}^{-1}-\frac{\norm{\Sigma w}^2\cdot\kappa_2(\gamma)}{\norm{\Sigma^{1/2}w}^2}} &\leq \beta \cdot \frac{\norm{\Sigma w}^2\cdot\kappa_2(\gamma)}{\norm{\Sigma^{1/2}w}^2} \cdot \Paren{\frac{\norm{\Sigma^{1/2}w}^2}{\norm{\Sigma w}^2\cdot\kappa_2(\gamma)} - \beta}^{-1}\\
    &= \beta \cdot \frac{\norm{\Sigma w}^2\cdot\kappa_2(\gamma)}{\norm{\Sigma^{1/2}w}^2} \cdot \frac{\frac{\norm{\Sigma w}^2}{\norm{\Sigma^{1/2}w}^2} \cdot \kappa_2(\gamma)}{1 - \frac{\norm{\Sigma w}^2}{\norm{\Sigma^{1/2}w}^2} \cdot \kappa_2(\gamma) \cdot \beta}.
\end{align*}
By \Cref{lem:preconditioner}, we have that $\frac{\norm{\Sigma w}^2\cdot\kappa_2(\gamma)}{\norm{\Sigma^{1/2}w}^2}\leq 2$.
Thus, assuming that $\beta < \frac{1}{2}$, we get that
\begin{equation}\label{eqn:scaling-est}
    \Abs{\paren{\hat{u}^\top\widehat{M}_2^{-1} \hat{u}}^{-1}-\frac{\norm{u^{\ast}}^2\cdot\kappa_2(\gamma)}{\kappa_3(\gamma)^{2/3}}} = \Abs{\paren{\hat{u}^\top\widehat{M}_2^{-1} \hat{u}}^{-1}-\frac{\norm{\Sigma w}^2\cdot\kappa_2(\gamma)}{\norm{\Sigma^{1/2}w}^2}} \leq O(\beta).
\end{equation}
Using the triangle inequality and $\frac{\norm{\Sigma w}^2\cdot\kappa_2(\gamma)}{\norm{\Sigma^{1/2}w}^2}\leq 2$ again, we furthermore get that
\[
    \Abs{\paren{\hat{u}^\top\widehat{M}_2^{-1} \hat{u}}^{-1}} \leq \frac{\norm{\Sigma w}^2\cdot\kappa_2(\gamma)}{\norm{\Sigma^{1/2}w}^2} + O(\beta) \leq O(1).
\]
Finally, we want to bound
\[
    \Abs{\frac{1}{\kappa_2(\gamma)}-\frac{1}{\kappa_2(\hat{\gamma})}}=\frac{1}{\kappa_2(\gamma)\cdot\kappa_2(\hat{\gamma})} \cdot \Abs{\kappa_2(\gamma)-\kappa_2(\hat{\gamma})}.
\]
As before in the proof of \Cref{lem:putting-together-2}, it follows from \Cref{CLAIM:ddtskewoverddtkappa2} and assumption \eqref{eqn:25-5} that
\[
    \Abs{\kappa_2(\gamma)-\kappa_2(\hat{\gamma})}\lesssim\abs{\sk(\hat{\gamma})-\sk(\gamma)}\leq\frac{\beta}{\kappa_2(\gamma)^{3/2}}
\]
As before, by our assumption $\beta < c \cdot \kappa_2(\gamma)^{5/2}$, we have that ${\kappa_2(\hat{\gamma})\geq\kappa_2(\gamma)-\frac{1}{2}\kappa_2(\gamma)=\frac{1}{2}\kappa_2(\gamma)}$ and thus
\[
    \frac{1}{\kappa_2(\gamma)\cdot\kappa_2(\hat{\gamma})} \cdot \Abs{\kappa_2(\gamma)-\kappa_2(\hat{\gamma})} \leq O\Paren{\frac{\beta}{\kappa_2(\gamma)^{7/2}}}.
\]
Combining the above, we get that
\[
    \|A_2\|_F \leq 2 \cdot O(1) \cdot O\Paren{\frac{\beta}{\kappa_2(\gamma)^{7/2}}} = O\Paren{\frac{\beta}{\kappa_2(\gamma)^{7/2}}}.
\]

\textbf{Bounding $\|A_3\|_F$.}
For $A_3$, note that by \eqref{eq:covariancestimator:A2:part3} and \eqref{eqn:scaling-est} and using that $\kappa_2(\gamma) \leq 1$ (cf. \Cref{claim:bound-on-kappa2}), we immediately get that
\begin{align*}
    \|A_3\|_F &\leq \Abs{\frac{1}{\kappa_2(\gamma)}-1} \cdot \Abs{\Paren{\hat{u}^\top\widehat{M}_2^{-1} \hat{u}}^{-1}-\frac{\norm{u^{\ast}}^2\cdot\kappa_2(\gamma)}{\kappa_3(\gamma)^{2/3}}} \cdot \Norm{\Sigma^{-1/2}\hat{u}\hat{u}^\top\Sigma^{-1/2}}\\
    &\leq \frac{1}{\kappa_2(\gamma)} \cdot O(\beta) \cdot 2 = O\Paren{\frac{\beta}{\kappa_2(\gamma)}}.
\end{align*}

\textbf{Bounding $\|A_4\|_F$.}
For $A_4$, recalling that $\frac{\norm{u^{\ast}}^2}{\kappa_3(\gamma)^{2/3}} = \frac{\norm{\Sigma w}^2}{\norm{\Sigma^{1/2} w}^2}$ and $\kappa_2(\gamma) > 0$, we get that
\begin{align*}
    \|A_4\|_F &\leq \Abs{1-\kappa_2(\gamma)} \cdot \Norm{\frac{\norm{u^{\ast}}^2}{\kappa_3(\gamma)^{2/3}}\Sigma^{-1/2}\hat{u}\hat{u}^\top\Sigma^{-1/2} - \frac{\Sigma^{1/2}ww^\top\Sigma^{1/2}}{\norm{\Sigma^{1/2}w}^2}}_F\\
    &\leq \frac{\norm{\Sigma w}^2}{\norm{\Sigma^{1/2} w}^2} \cdot \Norm{\Sigma^{-1/2}\hat{u}\hat{u}^\top\Sigma^{-1/2} - \frac{\Sigma^{1/2}ww^\top\Sigma^{1/2}}{\norm{\Sigma w}^2}}_F\\
    &= \frac{\norm{\Sigma w}^2}{\norm{\Sigma^{1/2} w}^2} \cdot \Norm{\Sigma^{-1/2}\Paren{\hat{u}\hat{u}^\top - \frac{\Sigma ww^\top\Sigma}{\norm{\Sigma w}^2}}\Sigma^{-1/2}}_F\\
    &\leq \norm{\Sigma^{-1/2}}^2\cdot\frac{\norm{\Sigma w}^2}{\norm{\Sigma^{1/2}w}^2}\cdot\Norm{\hat{u}\hat{u}^\top - \frac{\Sigma w}{\norm{\Sigma w}}\Paren{\frac{\Sigma w}{\norm{\Sigma w}}}^\top}_F
\end{align*}
We now want to use assumption \eqref{eqn:25-2}. Since $u^* \propto -\Sigma w$, this assumption implies that
\[
    \min\left\{\Norm{\hat{u}-\frac{\Sigma w}{\norm{\Sigma w}}},\Norm{\hat{u}+\frac{\Sigma w}{\norm{\Sigma w}}}\right\}\leq\beta
\]
Using \Cref{fact:upper-bound-Frobenius} on the unit vectors $\hat{u}$ and $\frac{\Sigma w}{\norm{\Sigma w}}$ to get that
\[
    \Norm{\hat{u}\hat{u}^\top-\frac{\Sigma w}{\norm{\Sigma w}}\Paren{\frac{\Sigma w}{\norm{\Sigma w}}}^\top}_F\leq 2\beta.
\]
Furthermore, using \Cref{lem:preconditioner}, we get that $\norm{\Sigma^{-1/2}}^2 \leq 2$ and $\frac{\norm{\Sigma w}^2}{\norm{\Sigma^{1/2}w}^2} \leq \frac{1}{\kappa_2(\gamma)}$.
Combining these with the above, we get that
\[
    \|A_4\|_F \leq 2 \cdot \frac{1}{\kappa_2(\gamma)} \cdot 2\beta = O\Paren{\frac{\beta}{\kappa_2(\gamma)}}.
\]

\textbf{Completing the proof.}
Combining the bounds on $A_1$, $A_2$, $A_3$ and $A_4$, we get that
\begin{align*}
    \Norm{\Sigma^{-1/2}\widehat{\Sigma}\Sigma^{-1/2} - I_d}_F &\leq \Norm{A_1}_F + \Norm{A_2}_F + \Norm{A_3}_F + \Norm{A_4}_F\\
    &\leq \beta + O\Paren{\frac{\beta}{\kappa_2(\gamma)^{7/2}}} + O\Paren{\frac{\beta}{\kappa_2(\gamma)}} + O\Paren{\frac{\beta}{\kappa_2(\gamma)}}\\
    &\leq O\Paren{\frac{\beta}{\kappa_2(\gamma)^{7/2}}},
\end{align*}
where the last inequality follows because $\kappa_2(\gamma) \leq 1$.
This completes the proof.
\end{proof}

\subsection{Proof of \texorpdfstring{\Cref{thm:main}}{main theorem}}\label{SEC:fullproof:puttingtogether}
In this section, we prove our main result, \Cref{thm:main}.
We start with combining the results from \Cref{SEC:fullproof:directionofsigmaw,SEC:fullproof:relativetruncpara,SEC:fullproof:meancovariance} to give a sample complexity bound of \Cref{algo:main}.
\begin{lemma}\label{lem:unified}
    Consider a halfspace truncated Gaussian $\truncatedGauss{\mu}{\Sigma}{\iprod{w,x}\leq\tau}$, where $\Sigma\succ 0$.
    Define ${\gamma=\frac{\tau-\iprod{w,\mu}}{\Norm{\Sigma^{1/2}w}}}$.
    Then, given 
    \[
        n = O\Paren{\frac{d^2}{\kappa_2(\gamma)^8\kappa_3(\gamma)^2} + \frac{d^2}{\kappa_2(\gamma)^3\kappa_3(\gamma)^4} + \frac{d^2}{\kappa_2(\gamma)^{10}\kappa_3(\gamma)^2\varepsilon^2}}
    \]
    samples, \Cref{algo:main} computes estimators $\hat{\mu},\widehat{\Sigma}$ such that with probability at least $0.99$
    \[
        \TVdist\Paren{\cN(\hat{\mu},\widehat{\Sigma}),\cN(\mu,\Sigma)}\leq \epsilon.
    \]
\end{lemma}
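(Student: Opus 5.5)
We will assume throughout that the data has been preconditioned as in \Cref{lem:preconditioner}. This costs only $O(d\log^4(1/\alpha))$ extra samples, which is absorbed in the bound, and it does not change the error measure (\Cref{fact:affine-invariance}). The plan is to choose an accuracy parameter $\beta$, show that with $n$ samples the sample moments meet the hypotheses of \Cref{lem:jenrich}, \Cref{lem:threshold-estimation:closeinskew}, \Cref{lem:putting-together-2} and \Cref{lem:putting-together} with parameter $\beta$, and then convert the resulting Mahalanobis and Frobenius errors into total variation distance. For the last step we use the standard bound $\TVdist \lesssim \norm{\Sigma^{-1/2}(\hat\mu-\mu)} + \norm{\Sigma^{-1/2}\widehat\Sigma\Sigma^{-1/2}-I_d}_F$. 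Since $\Sigma\succeq \frac12 I_d$, we have $\norm{\Sigma^{-1/2}(\hat\mu-\mu)}\le\sqrt2\norm{\hat\mu-\mu}$. Both output errors are $O(\beta/\kappa_2(\gamma)^{7/2})$, so it suffices to take $\beta = c\,\min\{\kappa_2(\gamma)^{7/2}\varepsilon,\ \kappa_2(\gamma)^{5/2}\}$.

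\textbf{Step 1: concentration of the moments.} We bound each moment estimate in turn.
\begin{itemize}
\item Mean. By \Cref{claim:subgaussian:M2invX} the samples are $O(\kappa_2(\gamma)^{-1/2})$-sub-Gaussian after preconditioning, so standard vector concentration gives $\norm{\bar x-\E X}\lesssim \kappa_2(\gamma)^{-1/2}\sqrt{d/n}$. This is \eqref{eqn:39-1}, using \Cref{lem:first3moments}.
\item Second moment. \Cref{fact:random-matrix-concentration} applied to $\Sigma^{-1/2}X$, which is $1$-sub-Gaussian, with covariance $\Sigma^{-1/2}M_2\Sigma^{-1/2}\preceq I_d$, gives \eqref{eqn:39-2} and \eqref{eqn:25-1} once $n\gtrsim d^2/\beta^2$. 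A second application, in the $M_2^{-1/2}$-normalization with sub-Gaussian parameter $\kappa_2^{-1/2}$, gives the spectral bound $\norm{M_2^{-1/2}\widehat M_2M_2^{-1/2}-I_d}\le\beta$ once $n\gtrsim d/(\kappa_2^2\beta^2)$.
\item Third moment. \Cref{lem:tensor-spectral-concentration} gives injective-norm error $\beta_3\lesssim \kappa_2^{-3/2}(\sqrt{d/n}+d^{3/2}/n)$.
\end{itemize}

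\textbf{Step 2: the direction $\hat u$.} By \Cref{lem:jenrich}, the vector $\hat u$ is $O(\sqrt d\,\beta_3/|\kappa_3|)$-close to $\pm\tilde u$. We need this to be at most $\beta$ and also need $\beta_3\le\beta$. With $\beta_3$ from Step~1, the requirement $\sqrt d\,\beta_3/|\kappa_3|\le\beta$ splits into two conditions:
\begin{itemize}
\item the $\sqrt{d/n}$ term needs $n\gtrsim d^2/(\kappa_2^3\kappa_3^2\beta^2)$;
\item the $d^{3/2}/n$ term needs $n\gtrsim d^2/(\kappa_2^{3/2}|\kappa_3|\beta)$.
\end{itemize}

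\textbf{Step 3: chaining the lemmas.} With these hypotheses in place, \Cref{lem:threshold-estimation:closeinskew} yields \eqref{eqn:39-4}/\eqref{eqn:25-3} with error $O(\beta)$ and \eqref{eqn:39-5}/\eqref{eqn:25-5}. Rescaling $\beta$ by a constant absorbs the $O(\cdot)$.

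\textbf{Sign of $\hat u$.} The mean lemma requires the specific sign $\norm{\hat u-\tilde u}\le\beta$, not just closeness up to sign. Step~7 of \Cref{algo:main} handles this. By the computation in the proof of \Cref{lem:threshold-estimation:closeinskew}, $\iprod{\hat u^{\otimes3},\widehat M_3}=\pm\norm{u^*}^3+O(\beta)$, where the sign is $+$ exactly when $\hat u\approx\tilde u$. Since $\norm{u^*}^3\ge|\kappa_3|/(2\sqrt2)$ by \Cref{claim:upper-bound-on-u-ast}, the sign test is correct provided $\beta\ll|\kappa_3(\gamma)|$. This imposes one additional constraint on $\beta$.

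\textbf{Sample complexity.} We substitute $\beta=c\min\{\kappa_2^{7/2}\varepsilon,\kappa_2^{5/2},|\kappa_3|\}$ into the conditions from Steps~1 and~2, collect the dominant terms, and take a union bound over the failure events. These events are the $10^{-4}$ from \Cref{lem:jenrich}, the exponentially small ones, and constant-probability versions of the matrix bounds; together they keep the total failure probability below $0.01$. The $\varepsilon$-dependent term is $d^2/(\kappa_2^{3}\kappa_3^2\cdot\kappa_2^{7}\varepsilon^2)=d^2/(\kappa_2^{10}\kappa_3^2\varepsilon^2)$. The $\varepsilon$-free terms come from the constraints $\beta\lesssim\kappa_2^{5/2}$ and $\beta\lesssim|\kappa_3|$, which give $d^2/(\kappa_2^{8}\kappa_3^2)$ and a term of the shape $d^2/(\kappa_2^3\kappa_3^4)$. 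The lower-order term $d^2/(\kappa_2^{3/2}|\kappa_3|\beta)$ is dominated by these.

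\textbf{Main obstacle.} The delicate part is the bookkeeping, not any single estimate. We must make sure that every lemma is invoked with a $\beta$ satisfying all its side conditions simultaneously: $\beta<c\kappa_2^{5/2}$, $\beta<1/2$, $\beta\le\kappa_2/4$, and $\beta\ll|\kappa_3|$ for the sign test. We must also use the correct normalization, spectral or Frobenius, in $M_2$ or $\Sigma$ coordinates, for each hypothesis. Only then do the exponents come out exactly as stated.
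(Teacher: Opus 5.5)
Your proposal is correct and follows essentially the same route as the paper. You fix $\beta$, get moment concentration from \Cref{fact:random-matrix-concentration} and \Cref{lem:tensor-spectral-concentration}, and chain \Cref{lem:jenrich}, \Cref{lem:threshold-estimation:closeinskew} and \Cref{lem:putting-together,lem:putting-together-2}, with the sign test forcing $\beta\lesssim|\kappa_3(\gamma)|$. You then substitute $\beta=c\min\{\kappa_2(\gamma)^{5/2},|\kappa_3(\gamma)|,\kappa_2(\gamma)^{7/2}\varepsilon\}$ into $n=O(d^2/(\kappa_2(\gamma)^3\kappa_3(\gamma)^2\beta^2))$, exactly as the paper does. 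Your extra bookkeeping (the preconditioning cost, the lower-order $d^{3/2}/n$ term, and a non-whitened covariance bound for the Frobenius condition) is valid and only makes explicit what the paper leaves implicit, without changing the final bound.
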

Later, we will then bound this sample complexity in terms of $\alpha$, as in \Cref{thm:main}.
Finally, we will also argue about the runtime of \Cref{algo:main}.
\begin{proof}[Proof of \Cref{lem:unified}]
For any $\beta > 0$, given $n=O\Paren{\frac{d^2}{\kappa_2(\gamma)^3\kappa_3(\gamma)^2\beta^2}}$ samples, by \Cref{lem:tensor-spectral-concentration}, we have that, with high probability,
\begin{equation}\label{eqn:unified:M3}
    \sup_{\norm{v}=1}\Iprod{v^{\otimes 3}, \widehat{M}_3-M_3}\leq O\Paren{\frac{\beta \cdot \Abs{\kappa_3(\gamma)}}{\sqrt{d}}} \leq O(\beta).
\end{equation}
Now, we get, by \Cref{lem:jenrich}, with high probability,
\begin{equation}\label{eqn:unified:uhat}
    \min\left\{\Norm{\hat{u}-\frac{u^{\ast}}{\norm{u^{\ast}}}},\Norm{\hat{u}+\frac{u^{\ast}}{\norm{u^{\ast}}}}\right\} \leq O(\beta).
\end{equation}
Next, consider the sample covariance $\widehat{M}_2$ computed from  $O(d^2/(\kappa_2(\gamma)^2\beta^2))$ \textrm{i.i.d.} samples.
From \Cref{claim:subgaussian:M2invX} we know each sample is $1/\sqrt{\kappa_2(\gamma)}$-sub-Gaussian, hence as in \Cref{fact:random-matrix-concentration}, we have, with high probability
\begin{equation}\label{eqn:unified:M2}
    \norm{M_2^{-1/2}\widehat{M}_2 M_2^{-1/2}-I_d}\leq\beta \quad\text{and}\quad \norm{\Sigma^{-1/2}\Paren{\widehat{M}_2-M_2}\Sigma^{-1/2}}_F\leq\beta,
\end{equation}
where we also used that $\norm{\Sigma^{-1/2}M_2^{1/2}} \leq \frac{1}{2}$ by \Cref{lem:preconditioner}.
Using equations \eqref{eqn:unified:uhat}, \eqref{eqn:unified:M2} and \eqref{eqn:unified:M3}, we can apply \Cref{lem:threshold-estimation:closeinskew} as long as $\beta < c_1$ (for some absolute constant $c_1$).
This shows that the estimator
\[
    \hat{\gamma}=\sk^{-1}\Paren{-\Paren{\hat{u}^\top\widehat{M}_2^{-1} \hat{u}}^{3/2}\cdot \lvert\iprod{u^{\otimes 3},\widehat{M}_3}\rvert}
\]
satisfies
\begin{equation}\label{eqn:unified:skewgammahat}
    \Abs{\sk(\hat{\gamma})-\sk(\gamma)}\leq O\Paren{\frac{\beta}{\kappa_2(\gamma)^{3/2}}}.
\end{equation}
From \Cref{lem:threshold-estimation:closeinskew}, we furthermore get the inequality
\begin{equation}\label{eqn:unified:uhatM2inverseuhat}
    \left\lvert \hat{u}^\top\widehat{M}_2^{-1} \hat{u}-\frac{\norm{\Sigma^{1/2}w}^2}{\norm{\Sigma w}^2\cdot\kappa_2(\gamma)}\right\rvert\leq O(\beta)
\end{equation}
As long as $\beta < c_2 \cdot \kappa_2(\gamma)^{5/2}$, using \eqref{eqn:unified:M2}, \eqref{eqn:unified:uhat}, \eqref{eqn:unified:uhatM2inverseuhat} and \eqref{eqn:unified:skewgammahat}, we can apply \Cref{lem:putting-together} to get that the covariance estimator
\[
    \widehat{\Sigma}=\widehat{M}_2+\Paren{\frac{1}{\kappa_2(\hat{\gamma})}-1}\paren{\hat{u}^\top\widehat{M}_2^{-1} \hat{u}}^{-1}\cdot \hat{u}\hat{u}^\top
\]
satisfies
\begin{equation}\label{eqn:unified:conclusioncovariance}
    \norm{\Sigma^{-1/2}\widehat{\Sigma}\Sigma^{-1/2}-I_d}_F\le O\Paren{\frac{\beta}{\kappa_2(\gamma)^{7/2}}}.
\end{equation}

For mean estimation, we have that $x_i$ is $\sqrt{2/\kappa_2(\gamma)}$-sub-Gaussian since $M_2^{-1/2}x_i$ is $1/\sqrt{\kappa_2(\gamma)}$-sub-Gaussian by \Cref{claim:subgaussian:M2invX} and $\norm{M_2^{1/2}} \leq \frac{1}{2}$ by \Cref{lem:preconditioner}.
Thus, $\bar{x}$ is $(\sqrt{2/\kappa_2(\gamma)})/\sqrt{n}$-sub-Gaussian (cf. \cite[Proposition 2.7.1]{Vershynin_2018}).
Given $n= O\Paren{d/(\kappa_2(\gamma)\beta^2)}$ samples, we therefore have (see for example \cite[Theorem 1.19]{Rigollet:meanestimationsubgaussian}), with high probability,
\begin{equation}\label{eqn:unified:samplemean}
    \Norm{\bar{x}-\mu-\kappa_1(\gamma)\frac{\Sigma w}{\norm{\Sigma^{1/2}w}}}\leq\beta.
\end{equation}
To apply \Cref{lem:putting-together-2}, it remains to argue that choosing the sign of $\hat{u}$ by computing the sign of $\iprod{\hat{u}^{\otimes 3}, \widehat{M}_3}$ indeed ensures \eqref{eqn:39-3}.
From the proof of \Cref{lem:threshold-estimation:closeinskew}, we know that
\begin{align}
    &\norm{\hat{u}-\tilde{u}} \leq O(\beta) \implies \left|\iprod{\hat{u}^{\otimes 3}, \widehat{M}_3}-\norm{u^{\ast}}^3 \right| \leq O(\beta) \label{eqn:unified:signcheck1}\\
    &\norm{\hat{u}+\tilde{u}} \leq O(\beta) \implies \left|\iprod{\hat{u}^{\otimes 3}, \widehat{M}_3}+\norm{u^{\ast}}^3 \right| \leq O(\beta).\label{eqn:unified:signcheck2}
\end{align}
If the term $O(\beta)$ on the right hand side is smaller than $\norm{u^*}^3$, then only one of the two conclusions \eqref{eqn:unified:signcheck1} or \eqref{eqn:unified:signcheck2} can be true.
By \eqref{eqn:unified:uhat}, one of the two must hold.
Thus, exactly one of the two conclusions \eqref{eqn:unified:signcheck1} or \eqref{eqn:unified:signcheck2} is true.
Furthermore, if the sign of $\iprod{\hat{u}^{\otimes 3}, \widehat{M}_3}$ is positive, then it must be \eqref{eqn:unified:signcheck1} (and hence we want to pick $\hat{u}$ for \eqref{eqn:39-3}).
If this sign is negative, it must be \eqref{eqn:unified:signcheck2} (and we want to pick $-\hat{u}$).
Hence, if $\beta < c_3 \cdot \abs{\kappa_3(\gamma)}$ (for a sufficiently small constant $c_3$ ensuring that $O(\beta) \leq \norm{u^*}^3$ by \Cref{claim:upper-bound-on-u-ast}), then flipping the sign of $\hat{u}$ if $\iprod{\hat{u}^{\otimes 3}, \widehat{M}_3} < 0$ indeed ensures that (after the potential sign flip)
\begin{equation}\label{eqn:unified:uhatwithsign}
    \Norm{\hat{u}-\frac{u^{\ast}}{\norm{u^{\ast}}}} \leq O(\beta).
\end{equation}
Thus, as long as $\beta < c_4 \cdot \kappa_2(\gamma)^{5/2}$, using \eqref{eqn:unified:samplemean}, \eqref{eqn:unified:M2}, \eqref{eqn:unified:uhatwithsign}, \eqref{eqn:unified:uhatM2inverseuhat} and \eqref{eqn:unified:skewgammahat}, we can apply \Cref{lem:putting-together-2} to get that the mean estimator
\[
    \hat{\mu}=\bar{x}+\frac{\kappa_1(\hat{\gamma})}{\sqrt{\kappa_2(\hat{\gamma})}}\Paren{\hat{u}^\top\widehat{M}_2^{-1}\hat{u}}^{-1/2}\hat{u}
\]
satisfies
\[
    \Norm{\hat{\mu} - \mu} \leq O\Paren{\frac{\beta}{\kappa_2(\gamma)^{7/2}}}.
\]
Furthermore, by \Cref{lem:preconditioner} we have that $\norm{\Sigma^{-1/2}} \leq O(1)$ and thus the above also implies
\begin{equation}\label{eqn:unified:meanconclusion}
    \Norm{\Sigma^{-1/2}(\hat{\mu} - \mu)} \leq O\Paren{\frac{\beta}{\kappa_2(\gamma)^{7/2}}}.
\end{equation}

Thus, summarized, for $\beta < c \cdot \min\{\kappa_2(\gamma)^{5/2}, \abs{\kappa_3(\gamma)}\}$ for a sufficiently small constant $c$, we have that, with high probability,\footnote{For Gaussian, closeness in parameters implies closeness in TV-distance, see e.g. \cite[Theorem 1.8]{ArbiasAshtianiLiaw:GaussianparameterclosenessimpliesTVcloseness}.}
\[
    \TVdist\Paren{\cN(\hat{\mu},\widehat{\Sigma}),\cN(\mu,\Sigma)}\leq O\Paren{\frac{\beta}{\kappa_2(\gamma)^{7/2}}}.
\]
The sample complexity is
\[
    n = O\Paren{\max\left\{\frac{d^2}{\kappa_2(\gamma)^3\kappa_3(\gamma)^2\beta^2}, \frac{d^2}{\kappa_2(\gamma)^2\beta^2}, \frac{d}{\kappa_2(\gamma)\beta^2}\right\}} = O\Paren{\frac{d^2}{\kappa_2(\gamma)^3\kappa_3(\gamma)^2\beta^2}}.
\]
Thus, setting $\beta = O\Paren{\min\{\kappa_2(\gamma)^{5/2}, \abs{\kappa_3(\gamma)}, \kappa_2(\gamma)^{7/2} \cdot \varepsilon\}}$ for a sufficiently small constant, we get the lemma.
\end{proof}

The above proof is sufficient as long as $\alpha$ is bounded away from $1$ by a constant.
As $\alpha$ approaches $1$, we have that $\kappa_3(\gamma) \to 0$ and thus the sample complexity increases.
To deal with this, we now give an alterative sample complexity bound that relies on the fact that if $\alpha$ is close to $1$, then the sample mean and covariance are good estimators and the correction terms in \Cref{algo:main} are small.
\begin{lemma}\label{lem:unifiedalphacloseto1}
    Consider a halfspace truncated Gaussian $\truncatedGauss{\mu}{\Sigma}{\iprod{w,x}\leq\tau}$, where $\Sigma\succ 0$.
    Define ${\gamma=\frac{\tau-\iprod{w,\mu}}{\Norm{\Sigma^{1/2}w}}}$.
    Assume that $\alpha = \Phi(\gamma) \geq 1 - c$ for a sufficiently small constant $c > 0$.
    Then, given $n = O\Paren{d^2/\varepsilon^2}$ samples, \Cref{algo:main} computes estimators $\hat{\mu},\widehat{\Sigma}$ such that with probability at least $0.99$
    \[
        \TVdist\Paren{\cN(\hat{\mu},\widehat{\Sigma}),\cN(\mu,\Sigma)}\leq O\Paren{\varepsilon + (1-\alpha)\log\Paren{\frac{1}{1-\alpha}}}.
    \]
\end{lemma}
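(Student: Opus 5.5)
The plan is to compare the output of \Cref{algo:main} with the trivial estimator given by the sample mean $\bar{x}$ and sample covariance $\widehat{M}_2$. Both the estimator and the target stay within $O((1-\alpha)\log(1/(1-\alpha)))$ of this trivial estimator in the relevant norms, and the sampling error adds $O(\varepsilon)$.

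When $\alpha = \Phi(\gamma) \geq 1-c$, we have $\gamma \geq 1$. The first step bounds the population moments against $(\mu,\Sigma)$. Set $\delta := (1-\alpha)\log(1/(1-\alpha))$. By \Cref{claim:one-minus-k2}, $1-\kappa_2(\gamma) \leq 16\delta$, and by \Cref{claim:boundkappa1bykappa2}, $|\kappa_1(\gamma)| \leq 1-\kappa_2(\gamma) \leq 16\delta$. Write $v = \Sigma^{1/2}w/\norm{\Sigma^{1/2}w}$, a unit vector. \Cref{lem:first3moments} then gives
\[
\Norm{\Sigma^{-1/2}(\E X-\mu)} = |\kappa_1(\gamma)| \leq O(\delta), \qquad \Norm{\Sigma^{-1/2}M_2\Sigma^{-1/2}-I_d}_F = 1-\kappa_2(\gamma) \leq O(\delta).
\]
Here $\kappa_2(\gamma) \geq 1/2$, so preconditioning (\Cref{lem:preconditioner}) gives $\Sigma \asymp M_2 \asymp I_d$ up to constants.

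The second step handles sampling. Every $\kappa_2$-factor is now a constant. By \Cref{fact:random-matrix-concentration} and sub-Gaussianity (\Cref{claim:subgaussian:M2invX}), $n = O(d^2/\varepsilon^2)$ samples give
\[
\norm{\Sigma^{-1/2}(\widehat{M}_2-M_2)\Sigma^{-1/2}}_F \leq \varepsilon, \qquad \norm{\Sigma^{-1/2}(\bar{x}-\E X)} \leq \varepsilon,
\]
with probability $0.99$.

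The third step bounds the correction terms in Steps 8 and 9 by $O(\delta)$. This holds deterministically for any unit $\hat u$; no accuracy of $\hat u$ is needed. Step 6 sets $\hat\gamma = \sk^{-1}(s)$ with $s \leq 0$.

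\begin{itemize}
\item \emph{Skewness estimate is small.} I need $|s| \lesssim |\sk(\gamma)| + (\text{error})$. The factor $(\hat u^\top \widehat{M}_2^{-1}\hat u)^{3/2}$ is $O(1)$. The other factor is $|\iprod{\hat u^{\otimes 3},\widehat M_3}| \leq \norm{M_3} + \sup_v|\iprod{v^{\otimes 3},\widehat M_3 - M_3}|$. By \Cref{lem:first3moments}, $\norm{M_3} = O(|\kappa_3(\gamma)|)$. \Cref{claim:skew-asymptotic} gives $|\sk(\gamma)| \asymp (1-\alpha)\log^{3/2}(1/(1-\alpha))$. \Cref{lem:tensor-spectral-concentration} bounds the tensor error by $O(\sqrt{d/n}+d^{3/2}/n) = O(\varepsilon/\sqrt d)$. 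So $|s| \leq O(|\sk(\gamma)| + \varepsilon)$.
\item \emph{Covariance correction.} By \Cref{CLAIM:ddtskewoverddtkappa2}, $1-\kappa_2(\hat\gamma) \lesssim |\sk(\hat\gamma)-\sk(+\infty)| = |s|$, since $\sk(+\infty)=0$ by \Cref{claim:skew-atmost2}. Hence the Step 9 correction has Frobenius norm $O(1-\kappa_2(\hat\gamma)) = O(|s|)$ in the $\Sigma$-geometry.
\item \emph{Mean correction.} By \Cref{CLAIM:ddtkappa2overddtkappa1}, with $\kappa_1(+\infty)=0$ and $\kappa_2 \geq 1/2$ near $+\infty$, we get $|\kappa_1(\hat\gamma)| \lesssim 1-\kappa_2(\hat\gamma)$. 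So the Step 8 correction is also $O(|s|)$.
\end{itemize}

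The main obstacle is that $|\sk(\gamma)| \asymp \delta\cdot\log^{1/2}(1/(1-\alpha))$ carries an extra $\sqrt{\log}$ factor compared to the stated $\delta$. I would fix this with a sharper bound, $1-\kappa_2(t) \lesssim |\sk(t)|/\sqrt{\log(1/(1-\Phi(t)))}$ for large $t$. Asymptotically $1-\kappa_2 \approx t\,\phi(t)$ while $|\kappa_3| \approx t^2\phi(t)$. Integrating $\frac{d}{dt}\kappa_2 = -\kappa_3$ from $\hat\gamma$ to $\infty$ therefore gains a factor $1/t$, which removes the extra $\sqrt{\log}$. If $\hat\gamma$ is not large, i.e.\ $|s|$ is dominated by $\varepsilon$, the crude bound $O(\varepsilon)$ suffices.

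Finally, combine the three steps with the triangle inequality and convert parameter closeness to TV distance as in \Cref{lem:unified}.
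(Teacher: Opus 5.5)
Your outline matches the paper's proof. You compare the output with $(\bar x,\widehat M_2)$ and use \Cref{claim:one-minus-k2} and \Cref{claim:boundkappa1bykappa2} to put the population moments within $O(\delta)$ of $(\mu,\Sigma)$, where $\delta=(1-\alpha)\log(1/(1-\alpha))$. You then control the Step~8 and Step~9 corrections only through $|s|=|\sk(\hat\gamma)|\lesssim|\sk(\gamma)|+\varepsilon$, which needs no accuracy of $\hat u$. Your crude bounds via \Cref{CLAIM:ddtskewoverddtkappa2} and \Cref{CLAIM:ddtkappa2overddtkappa1} (letting $b\to\infty$) are fine. Your sharpened inequality also needs no new asymptotics: for $t\ge 1$ it is \Cref{claim:one-minus-k2} divided by the lower bound in \Cref{claim:skew-asymptotic}.

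The gap is in the step that actually removes the extra $\sqrt{\log}$; it is asserted, not proved. Applied at $t=\hat\gamma$, your sharpened bound gives $1-\kappa_2(\hat\gamma)\lesssim|s|/\log^{1/2}(1/(1-\hat\alpha))$ with $\hat\alpha=\Phi(\hat\gamma)$. That is a gain of $1/\hat\gamma$, not $1/\gamma$. Combined with $|s|\lesssim|\sk(\gamma)|$, it yields $O(\delta)$ only if $\log(1/(1-\hat\alpha))\gtrsim\log(1/(1-\alpha))$.

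Your sentence ``if $\hat\gamma$ is not large, i.e.\ $|s|$ is dominated by $\varepsilon$'' hides exactly this comparison inside an unexplained ``i.e.''. ``Not large'' also cannot mean large in absolute terms, since small $|s|$ already forces $\hat\gamma\ge 2$.

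The clean fix is to split on which term dominates $|s|\lesssim|\sk(\gamma)|+\varepsilon$:
\begin{itemize}
\item If $\varepsilon$ dominates, your crude bound gives $O(\varepsilon)$.
\item If $|\sk(\gamma)|$ dominates, then $|\sk(\hat\gamma)|\le K|\sk(\gamma)|$. Set $a=\log(1/(1-\alpha))$, $\hat a=\log(1/(1-\hat\alpha))\ge 1$ and $r=a/\hat a$. Applying \Cref{claim:skew-asymptotic} to both $\gamma$ and $\hat\gamma$ gives $e^{(r-1)\hat a}r^{-3/2}\le O(1)$. Hence $e^{r-1}r^{-3/2}\le O(1)$ when $r>1$, so $r=O(1)$ and the $1/\hat\gamma$ gain becomes a $1/\gamma$ gain.
\end{itemize}

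This is precisely the footnote argument in the paper's proof. The paper then applies \Cref{claim:one-minus-k2} directly at $\hat\gamma$ rather than your divided inequality. For the mean it bounds $\phi(\hat\gamma)$ directly and gets a sharper $\log^{1/2}$ factor, which the statement does not need. With this comparison of $\hat\gamma$ and $\gamma$ made explicit, your argument goes through.
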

\begin{proof}
    The strategy of the proof is as follows: We want to argue that we are essentially outputting the sample mean and covariance and that these are good estimates.
    First, note that given ${n = O(d^2/\beta^2)}$ samples, the sample moments satisfy, with high probability
    \begin{align*}
        \Norm{\mu + \frac{\phi(\gamma)}{\Phi(\gamma)}\cdot \frac{\Sigma w}{\norm{\Sigma^{1/2} w}} - \bar{x}} &\leq \frac{\beta}{\sqrt{d}} &\text{(analogous to \eqref{eqn:unified:samplemean})} \\
        \Norm{M_2 - \widehat{M}_2}_F &\leq \beta &\text{(analogous to \eqref{eqn:unified:M2})}\\
        \sup_{\norm{v}=1} \Iprod{v^{\otimes 3}, \widehat{M}_3 - M_3} &\leq \frac{\beta}{\sqrt{d}} &\text{(\Cref{lem:tensor-spectral-concentration})}
    \end{align*}
    Note that we used that for $\alpha \geq 1-c$, we have that $\kappa_2(\gamma) \geq \kappa_2(\Phi^{-1}(1-c))$ can be lower bounded by a constant by \Cref{claim:monotonicity-of-kappa2}.

    \textbf{Covariance estimation.}
    For covariance estimation, we want to argue that $M_2$ is close to $\Sigma$, i.e. that the second central moment would be good estimate.
    Then, we will also argue that for $\alpha$ close to $1$, we are essentially outputting $\widehat{M}_2$.
    
    First, note that, by the formula for $M_2$ of \Cref{lem:first3moments}, we have
    \[
        \Norm{M_2-\Sigma}_F \leq (1-\kappa_2(\gamma)) \cdot \frac{\norm{\Sigma w w^\top \Sigma}_F}{\norm{\Sigma^{1/2} w}^2} \leq O\Paren{(1-\alpha)\log\Paren{\frac{1}{1-\alpha}}},
    \]
    where we bounded $1-\kappa_2(\gamma)$ by \Cref{claim:one-minus-k2} and used $\norm{\Sigma} \leq 2/\kappa_2(\Phi^{-1}(1-c)) \leq O(1)$ for $\alpha \geq 1-c$ by \Cref{claim:monotonicity-of-kappa2} and \Cref{lem:preconditioner}.
    Combined with $\norm{M_2 - \widehat{M}_2}_F \leq \beta$ from above, we furthermore get that
    \[
        \Norm{\widehat{M}_2-\Sigma}_F \leq \beta + O\Paren{(1-\alpha)\log\Paren{\frac{1}{1-\alpha}}}.
    \]
    Recall that the estimator of \Cref{algo:main} is
    \[
        \widehat{\Sigma}=\widehat{M}_2+\Paren{\frac{1}{\kappa_2(\hat{\gamma})}-1}\cdot\Paren{\hat{u}^\top\widehat{M}_2^{-1}\hat{u}}^{-1}\hat{u}\hat{u}^\top.
    \]
    It remains to argue that the correction term $\Paren{\frac{1}{\kappa_2(\hat{\gamma})}-1}\cdot\Paren{\hat{u}^\top\widehat{M}_2^{-1}\hat{u}}^{-1}\hat{u}\hat{u}^\top$ is small.
    We have
    \begin{equation}\label{EQ:alphacloseto1:boundonuhatm2invu}
        c_1 \leq \lambda_{\min}\Paren{\widehat{M}_2^{-1}} \leq \hat{u}^\top \widehat{M}_2^{-1} \hat{u} \leq \lambda_{\max}\Paren{\widehat{M}_2^{-1}} \leq c_2.
    \end{equation}
    for appropriate constants $c_1,c_2 > 0$, by \Cref{lem:preconditioner}, as long as $\beta$ is at most a sufficiently small constant (such that we have ${2c_1 \cdot M_2^{-1} \preceq\widehat{M}_2^{-1} \preceq \frac{c_2}{2} \cdot M_2^{-1}}$).
    
    Thus, it remains to lower bound $\kappa_2(\hat{\gamma})$ in terms of $1-\alpha$.
    For this, we want to show that $\hat{\gamma}$ has to be large and use \Cref{claim:one-minus-k2}.
    Since $\sk$ is monotonic by \Cref{lemma:monotonicity-of-skewness}, if suffices to show that $\sk(\hat{\gamma})$ is large or equivalently that  $\abs{\sk(\hat{\gamma})}$ is small (since $\sk$ is always negative by \Cref{claim:bound-on-kappa2,claim:monotonicity-of-kappa2}).
    We have that, using \eqref{EQ:alphacloseto1:boundonuhatm2invu},
    \[
        \Abs{\sk(\hat{\gamma})} = \Paren{\hat{u}^\top\widehat{M}_2^{-1}\hat{u}}^{3/2}\cdot\Abs{\Iprod{\hat{u}^{\otimes 3},\widehat{M}_3}} \lesssim \Abs{\Iprod{\hat{u}^{\otimes 3},\widehat{M}_3}}.
    \]
    Furthermore, we can bound the right hand side of the above as follows
    \[
        \Abs{\Iprod{\hat{u}^{\otimes 3},\widehat{M}_3}} \leq \Abs{\Iprod{\hat{u}^{\otimes 3},\widehat{M}_3-M_3}} + \Abs{\Iprod{\hat{u}^{\otimes 3},M_3}} \leq \frac{\beta}{\sqrt{d}}+\norm{u^{\ast}}^3 \leq \frac{\beta}{\sqrt{d}}+O(\abs{\sk(\gamma)}),
    \]
    where we used the guarantee on $\widehat{M}_3-M_3$, $M_3 = {u^{\ast}}^{\otimes 3}$, as well as \Cref{claim:upper-bound-on-u-ast} to bound $\norm{u^\ast}$.
    Combined with the above, this gives
     \[
        \Abs{\sk(\hat{\gamma})} \leq O\Paren{\max\Set{\frac{\beta}{\sqrt{d}},\abs{\sk(\gamma)}}}.
    \]
    By choosing $\beta$ to be smaller than a sufficiently small constant and $c$ sufficiently small, we can make the right hand side arbitrarily small and get that $\Abs{\sk(\hat{\gamma})} \leq \abs{\sk(2)}$, which implies $\hat{\gamma} \geq 2$ (since $\sk$ is always negative and increasing, $\abs{\sk}$ is decreasing).
    Using \Cref{claim:skew-asymptotic} for both $\gamma$ and $\hat{\gamma}$, we get that
    \begin{equation}\label{EQ:alphaclosetoone:relationhatalphatoalpha}
        (1-\hat{\alpha}) \cdot \log^{3/2}\Paren{\frac{1}{1-\hat{\alpha}}} \leq O\Paren{\max\Set{\frac{\beta}{\sqrt{d}},(1-\alpha) \cdot \log^{3/2}\Paren{\frac{1}{1-\alpha}}}}
    \end{equation}
    for $\hat{\alpha} \coloneqq \Phi(\hat{\gamma})$.
    Since both $\hat{\alpha}$ and $\alpha$ are close to 1, if the maximum is the second term, then we also have\footnote{\label{footnote:alphacloseto1}To formally prove this, define $a = \log\Paren{\frac{1}{1-\alpha}}$, $\hat{a} = \log\Paren{\frac{1}{1-\hat{\alpha}}}$ and $r = \frac{a}{\hat{a}}$. We have $\frac{\exp(-\hat{a})\hat{a}^{3/2}}{\exp(-a)a^{3/2}} \leq O(1)$ and we want to show $\frac{\exp(-\hat{a})\hat{a}}{\exp(-a)a} \leq O(1)$, for which it is sufficient to show $\frac{a^{1/2}}{\hat{a}^{1/2}} \leq O(1)$.
    If $r \leq 1$, then this is clear. Otherwise, since $\hat{\gamma} \geq 2$, we have $\hat{a} \geq \log\Paren{\frac{1}{1-\Phi(2)}} \geq 1$. Thus, by assumption, we have
    \[
        \exp(r-1) \cdot r^{-3/2} \leq \exp((r-1)\hat{a}) \cdot r^{-3/2} = \frac{\exp(-\hat{a})\hat{a}^{3/2}}{\exp(-a)a^{3/2}} \leq O(1).
    \]
    Now, we have $\lim_{r \to \infty} \exp((r-1)) \cdot r^{-3/2} = \infty$ and thus $\exp((r-1)) \cdot r^{-3/2} \leq O(1)$ can only be true if we also have $r \leq O(1)$.}
    \[
        (1-\hat{\alpha}) \cdot \log\Paren{\frac{1}{1-\hat{\alpha}}} \lesssim (1-\alpha) \cdot \log\Paren{\frac{1}{1-\alpha}}.
    \]
    If the maximum is the first term, using that $\log^{1/2}\Paren{\frac{1}{1-\hat{\alpha}}} \geq 1$, we get that
    \[
        (1-\hat{\alpha}) \cdot \log\Paren{\frac{1}{1-\hat{\alpha}}} \lesssim \frac{\beta}{\sqrt{d}}.
    \]
    Combining these two, we thus get
    \[
        (1-\hat{\alpha}) \cdot \log\Paren{\frac{1}{1-\hat{\alpha}}} \leq O\Paren{\max\Set{\frac{\beta}{\sqrt{d}},(1-\alpha) \cdot \log\Paren{\frac{1}{1-\alpha}}}}.
    \]
    Thus, we have that, using \Cref{claim:one-minus-k2} in the second inequality,
    \begin{align*}
        \frac{1}{\kappa_2(\hat{\gamma})} -1 &\leq \frac{1}{\kappa_2(2)} \cdot (1 - \kappa_2(\hat{\gamma}))\\
        &\leq O\Paren{(1-\hat{\alpha}) \cdot \log\Paren{\frac{1}{1-\hat{\alpha}}}}\\
        &\leq O\Paren{\max\Set{\frac{\beta}{\sqrt{d}},(1-\alpha) \cdot \log\Paren{\frac{1}{1-\alpha}}}}.
    \end{align*}
    This implies, noting that $\norm{\hat{u}\hat{u}^\top}_F = \norm{\hat{u}}^2 = 1$,
    \[
        \Norm{\widehat{\Sigma} - \widehat{M}_2}_F =\Norm{\Paren{\frac{1}{\kappa_2(\hat{\gamma})}-1}\cdot\Paren{\hat{u}^\top\widehat{M}_2^{-1}\hat{u}}^{-1}\hat{u}\hat{u}^\top}_F \leq O\Paren{\frac{\beta}{\sqrt{d}}+(1-\alpha) \cdot \log\Paren{\frac{1}{1-\alpha}}}.
    \]
    Combined with the bound on $\norm{\widehat{M}_2 - \Sigma}_F$ we get that the estimator from \Cref{algo:main} satisfies
    \[
        \Norm{\widehat{\Sigma} - \Sigma}_F \leq O\Paren{\beta+(1-\alpha) \cdot \log\Paren{\frac{1}{1-\alpha}}}
    \]
    Finally, using \Cref{lem:preconditioner}, we get that
    \[
        \Norm{\Sigma^{-1/2}\widehat{\Sigma}\Sigma^{-1/2} - I_d}_F \leq \Norm{\Sigma^{-1/2}}^2 \cdot \Norm{\widehat{\Sigma} - \Sigma}_F \leq O\Paren{\beta+(1-\alpha) \cdot \log\Paren{\frac{1}{1-\alpha}}}.
    \]

    \textbf{Mean estimation.}
    For mean estimation, we want to similarly argue that the sample mean would be a good estimator and that we are outputting something close to the sample mean.
    First, recall that we have that the sample mean is close to the true mean, i.e., 
    \[
        \Norm{\mu + \frac{\phi(\gamma)}{\Phi(\gamma)}\cdot \frac{\Sigma w}{\norm{\Sigma^{1/2} w}} - \bar{x}} \leq \beta.
    \]
    By picking $c$ sufficiently small, we can achieve $\gamma \geq 2$.
    Then, we furthermore get that
    \[
        \Norm{\frac{\phi(\gamma)}{\Phi(\gamma)}\cdot\frac{\Sigma w}{\norm{\Sigma^{1/2}w}}} \leq \frac{1}{\Phi(2)} \cdot \Norm{\Sigma^{1/2}}^2 \cdot \phi(\gamma) \leq O(\phi(\gamma)),
    \]
    where we again used \Cref{lem:preconditioner} and the fact that $\kappa_2(\gamma) \geq \kappa_2(2)$ to bound $\norm{\Sigma^{1/2}}$.
    Using \Cref{FACT:rational-appx-of-imr} and \Cref{claim:log-alpha-and-gamma}, we can bound
    \[
        \phi(\gamma) \leq \Paren{\gamma + \frac{1}{\gamma}} \cdot (1-\Phi(\gamma)) \leq 2 \gamma \cdot (1-\Phi(\gamma)) \leq 2 \sqrt{2} \cdot (1-\Phi(\gamma)) \cdot \log^{1/2}\Paren{\frac{1}{1-\Phi(\gamma)}}.
    \]
    Thus, we get that
    \[
        \Norm{\frac{\phi(\gamma)}{\Phi(\gamma)}\cdot\frac{\Sigma w}{\norm{\Sigma^{1/2}w}}} \leq (1-\alpha) \cdot \log^{1/2}\Paren{\frac{1}{1-\Phi(\gamma)}}
    \]
    and hence
    \[
        \Norm{\mu - \bar{x}} \leq \beta + (1-\alpha) \cdot \log^{1/2}\Paren{\frac{1}{1-\Phi(\gamma)}}.
    \]
    Recall that the mean estimator of \Cref{algo:main} is
    \[
        \hat{\mu}=\bar{x}+\frac{\kappa_1(\hat{\gamma})}{\sqrt{\kappa_2(\hat{\gamma})}}\paren{\hat{u}^\top\widehat{M}_2^{-1}\hat{u}}^{-1/2}\hat{u}.
    \]
    It remains to argue that the correction term $\frac{\kappa_1(\hat{\gamma})}{\sqrt{\kappa_2(\hat{\gamma})}}\paren{\hat{u}^\top\widehat{M}_2^{-1}\hat{u}}^{-1/2}\hat{u}$ is small.
    We can bound this, using \eqref{EQ:alphacloseto1:boundonuhatm2invu}, $\hat{\gamma} \geq 2$ and \Cref{claim:monotonicity-of-kappa2}, as follows
    \[
        \Norm{\frac{\kappa_1(\hat{\gamma})}{\sqrt{\kappa_2(\hat{\gamma})}}\paren{\hat{u}^\top\widehat{M}_2^{-1}\hat{u}}^{-1/2}\hat{u}} \lesssim \frac{1}{\sqrt{\kappa_2(2)}} \cdot \abs{\kappa_1(\hat{\gamma})} \lesssim  \abs{\kappa_1(\hat{\gamma})} = \frac{\phi(\hat{\gamma})}{\Phi(\hat{\gamma})} \leq \frac{\phi(\hat{\gamma})}{\Phi(2)} \lesssim \phi(\hat{\gamma}).
    \]
    As above for $\gamma$, using \Cref{FACT:rational-appx-of-imr} and \Cref{claim:log-alpha-and-gamma}, we can bound
    \[
        \phi(\hat{\gamma}) \leq 2 \sqrt{2} \cdot (1-\Phi(\hat{\gamma})) \cdot \log^{1/2}\Paren{\frac{1}{1-\Phi(\hat{\gamma})}} = 2 \sqrt{2} \cdot (1-\hat{\alpha}) \cdot \log^{1/2}\Paren{\frac{1}{1-\hat{\alpha}}}.
    \]
    Using the same argument as in \Cref{footnote:alphacloseto1}, \eqref{EQ:alphaclosetoone:relationhatalphatoalpha} also implies
    \[
        (1-\hat{\alpha}) \cdot \log^{1/2}\Paren{\frac{1}{1-\hat{\alpha}}} \leq O\Paren{\max\Set{\frac{\beta}{\sqrt{d}},(1-\alpha) \cdot \log^{1/2}\Paren{\frac{1}{1-\alpha}}}}.
    \]
    Thus, it follows that
    \[
        \Norm{\hat{\mu} - \bar{x}} \leq O\Paren{\frac{\beta}{\sqrt{d}}+(1-\alpha) \cdot \log^{1/2}\Paren{\frac{1}{1-\alpha}}}.
    \]
    Combining this with the bound on $\norm{\mu - \bar{x}}$ and using \Cref{lem:preconditioner} again to bound $\norm{\Sigma^{-1/2}}$, we get that
    \[
        \Norm{\Sigma^{-1/2}(\hat{\mu}-\mu)} \leq O\Paren{\beta + (1-\alpha) \cdot \log^{1/2}\Paren{\frac{1}{1-\alpha}}}.
    \]

    \textbf{Conclusion.}
    So far, we have shown that for $\beta$ smaller than a sufficiently small constant $c_3$, we get that, using $O(d^2/\beta^2)$ samples, the estimators in \Cref{algo:main} satisfy
    \[
        \TVdist\Paren{\cN(\hat{\mu},\widehat{\Sigma}),\cN(\mu,\Sigma)}\leq O\Paren{\beta + (1-\alpha)\log\Paren{\frac{1}{1-\alpha}}}.
    \]
    Note that without loss of generality we can assume that $\varepsilon \leq 1$, otherwise any output is good (the TV-distance is at most 1 regardless of the output).
    Finally, picking $\beta = \min\{c_3, \varepsilon\}$ proves the lemma.
    The claimed sample complexity follows since this $\beta$ satisfies $\beta \geq \Omega(\varepsilon)$ (since $\varepsilon \leq 1$).
\end{proof}

Using \Cref{lem:unified,lem:unifiedalphacloseto1}, we can now prove \Cref{thm:main}.
\begin{proof}[Proof of \Cref{thm:main}]
    Correctness follows from \Cref{lem:unified,lem:unifiedalphacloseto1} if the number of samples used by the algorithm is large enough.
    To determine the sample complexity, we distinguish the three cases (i) $\alpha \leq \alpha_1$, (ii) $\alpha \in (\alpha_1, \alpha_2)$ and (iii) $\alpha \geq \alpha_2$, where $\alpha_1 = 1-c$ and $\alpha_2 = \max\Set{1-c,1-\frac{\varepsilon}{\log(1/\varepsilon)}}$ for $c$ as in \Cref{lem:unifiedalphacloseto1}.
    Again, we note that without loss of generality we can assume $\varepsilon \leq 1$, otherwise the guarantees hold trivially.

    \textbf{Case (i): Sample complexity for $\alpha \leq \alpha_1$.}
    In this case, we want to use \Cref{lem:unified}.
    By \Cref{claim:var-est,claim:skew-est} we get that
    \[
        \kappa_2(\gamma)\geq \Omega(\log^{-1}(1/\alpha)) \quad\text{and}\quad \abs{\kappa_3(\gamma)}\geq\Omega(\log^{-3/2}(1/\alpha)).
    \]
    Thus, the sample complexity in this case is
    \[
        n = O\Paren{\frac{d^2}{\varepsilon^2}\cdot\log^{14}(1/\alpha)}.
    \]
    
    \textbf{Case (ii): Sample complexity for $\alpha \in (\alpha_1, \alpha_2)$.}
    In this case, we again use \Cref{lem:unified}. We can bound $\kappa_2(\gamma)\geq \Omega(1)$ by \Cref{claim:var-est}.
    For $\kappa_3$, we have that $1-\alpha \geq 1-\alpha_2 \geq c'\frac{\varepsilon}{\log(1/\varepsilon)}$.
    Thus, using \Cref{claim:skew-est}, we have that
    \[
        \abs{\kappa_3(\gamma)} \geq \Omega\Paren{(1-\alpha) \cdot \log^{3/2}\Paren{\frac{1}{1-\alpha}}} \geq \Omega (\varepsilon).
    \]
    Thus, the sample complexity in this case (again from \Cref{lem:unified}) is
    \[
        n = O\Paren{\frac{d^2}{\varepsilon^4}}.
    \]

    \textbf{Case (iii): Sample complexity for $\alpha \geq \alpha_2$.}
    In this case, we claim that we can use \Cref{lem:unifiedalphacloseto1}.
    The guarantee from this lemma is that the TV-distance is at most
    \[
        O\Paren{\varepsilon + (1-\alpha)\log\Paren{\frac{1}{1-\alpha}}}
    \]
    and we need to show that this is at most $\varepsilon$.
    Using that $x \mapsto (1-x) \cdot \log(1/(1-x))$ is decreasing for $x$ close to 1, we get that the TV-distance is at most
    \[
        O\Paren{\varepsilon + \frac{\varepsilon}{\log(1/\varepsilon)}\log\Paren{\frac{\log(1/\varepsilon)}{\varepsilon}}} \leq O(\varepsilon).
    \]
    The sample complexity by \Cref{lem:unifiedalphacloseto1} is $n = O(d^2/\varepsilon^2)$.

    These three cases together prove correctness of \Cref{algo:main} given the sample complexity claimed in \Cref{thm:main}.\footnote{The above argument shows that the TV-distance is $O(\varepsilon)$. We can make the TV-distance $\varepsilon$ instead of $O(\varepsilon)$ by doing the exact same argument for $\varepsilon' = \frac{\varepsilon}{O(1)}$.\\Also note that we can simplify the sample complexity to be $O((d^2/\varepsilon^2) \cdot\polylog(1/\alpha))$ for $\alpha \leq 1-c$ (or in fact for $\alpha$ being at most any constant smaller than $1$) and $O(n^2/\varepsilon^4)$ for the remaining two cases. Thus, the proof above does indeed show correctness of the algorithm for the sample complexity as in \Cref{thm:main}.}

    \textbf{Runtime.}
    Finally, it remains to prove the runtime of the algorithm. We do so analogous to the arguments presented in \Cref{sec:techniques}.
    The runtime to compute the sample mean is $O(n \cdot d)$. The runtime to compute the sample covariance is $O(T(n,d))$.
    If we first do the contraction and then average over the samples, then also computing the random contraction $R$ of the third sample moment can be computed in time $T(n,d)$.
    Computing a spectral decomposition of this matrix can then be done in time $O(d^3)$, e.g. by computing a singular value decomposition.
    Computing the sign of $\iprod {u^{\otimes3},\widehat{M}_3}$ can be done in time $O(n \cdot d)$ since
    \[
        \iprod {u^{\otimes3},\widehat{M}_3} = \sum_{i=1}^n \iprod{u,x_i-\bar{x}}^3.
    \]
    Finally, computing the estimates can be done in time $O(d^2)$.
    Since $n \geq \Omega(d^2)$, we have that ${T(n,d) \geq \Omega(n \cdot d) \geq \Omega(d^3)}$ and thus the dominating term in the runtime is $O(T(n,d))$, which completes the proof.
\end{proof}

\begin{remark}\label{rem:approximateinverse}
    In \Cref{lem:unified,lem:unifiedalphacloseto1} above we have been assuming that the $\sk^{-1}(\cdot)$ function can be evaluated exactly and efficiently.
    In practice, we would instead only find an approximate inverse using binary search.
    We need to ensure that we get an estimate $\hat{\gamma}$ such that
    \[
        \Abs{\sk(\hat{\gamma})-\sk(\gamma)} \leq \varepsilon \cdot \poly(\min\{O(1), 1/\log(1/\alpha)\}).
    \]
    We can restrict the search to $\hat{\gamma} \geq \Omega(\sqrt{\log 1/\alpha})$ (the lower bound on $\gamma$).
    Furthermore, if ${\gamma \geq \widetilde{O}(\sqrt{\log(1/\varepsilon)})}$ (this implies $\alpha \geq 1-\frac{\epsilon}{\log 1/\epsilon}$), then we showed in \Cref{lem:unifiedalphacloseto1} and the proof of \Cref{thm:main} that $\hat{\gamma}=\infty$ is a good enough estimate and thus we can restrict the binary search to the interval $[\Omega(\sqrt{\log 1/\alpha}), \widetilde{O}(\sqrt{\log(1/\varepsilon)})]$.
    The derivative of $\sk$ is upper bounded by a constant and thus it is sufficient to get an estimate of the inverse up to 
    \[
        \varepsilon \cdot \poly(\min\{O(1), 1/\log(1/\alpha)\}).
    \]
    Hence, we need to do binary search on the interval $[\Omega(\sqrt{\log 1/\alpha}), \widetilde{O}(\sqrt{\log(1/\varepsilon)})]$ up to the above accuracy, which can be done in time
    \[
        \log(1/\varepsilon) \cdot \poly(\log\log(1/\varepsilon), \log\log(1/\alpha)).
    \]
    We remark that this time is negligible with respect to the other computation.
\end{remark}

%% file: content/robustness.tex
\section{Robustness}\label{sec:robustness}
In this section, we show how to extend our result for halfspaces to the robust setting. We will present the results for the non-trivial truncation regime of $\alpha$ bounded away from $1$. Throughout this section as well we will assume $\Sigma \succ 0$.

\subsection{Preliminaries on robustness}
We consider the following definition of robustness analogous to the well-studied strong contamination model in algorithmic statistics literature. We denote a distribution $\cD$ truncated to the set $S$ by $\cD|_S$.
\begin{definition}[$\eta$-corruption of $\cD|_S$] Samples $X_1, \ldots, X_n$ are called an \emph{$\eta$-corruption of $\cD|_S$} if they are obtained as follows:
    \begin{enumerate}
        \item First, we draw $Y_1, \ldots, Y_n \sim \cD|_S$.
        \item Then, an adversary picks $X_1, \ldots, X_n \in \R^d$ such that $|\{i \in [n] : X_i \neq Y_i\}| \leq \eta  n$.
    \end{enumerate}
\end{definition}
Observe that the adversary can remove all information about the truncation set.

In particular the adversary is allowed to also pick $X_i$ such that $X_i \in \R^{d} \backslash S$.

We require the following definition of certifiably sub-Gaussian distributions, see e.g. \cite[Definition 1.1]{KothariSteinhardtSteurer:robustmomentestiamtion} or \cite[Definition 1.3]{DiakonikolasHopkinsPensiaTiegel:subgaussianiscertifiable}.

\begin{definition}[Certifiably sub-Gaussian distribution]
    A distribution $\cD$ with mean $\mu$ is called $s$-certifiably sub-Gaussian if for all even $m \in \N$, there is a sum-of-squares proof of the following inequality in indeterminate $v$
    \[
        \bbE_{X \sim \cD}\left[\Iprod{X-\mu, v}^m\right] \leq (Cs\sqrt{m})^m \Norm{v}^m,
    \]
    where $C$ is a universal constant.
\end{definition}
We also require the following two theorems that we will utilize to obtain robust estimates of the moments.
\begin{theorem}[{\cite[Theorems 2.2 and 2.3]{KothariSteinhardtSteurer:robustmomentestiamtion}}]\label{thm:KSSMain}
    Let $k \geq 6$ be an even constant.
    Let $C > 0$.
    There is a polynomial-time algorithm 
    with the following guarantees:
    Let $\cD$ be a $C$-certifiably sub-Gaussian distribution on $\R^d$ with mean $\mu$ and covariance $\Sigma$. Let $M_{3,\mathrm{nc}}$ be the raw third moment of $\cD$.
    Let $\eta > 0$ such that $C k \cdot \eta^{1-2/k}$ is at most a sufficiently small constant. 
    Then given as input an $\eta$-corrupted sample of $\cD$ of size $n \geq (C+d)^{O(k)}$, the algorithm outputs estimates of the moments of $\cD$ with the following guarantees:
    \begin{align*}
        &\Norm{\mu - \hat{\mu}} \leq O(Ck)^{1/2} \cdot \eta^{1-1/k} \cdot \Norm{\Sigma}^{1/2}\\
        &(1-O(Ck)^{1/2} \cdot \eta^{1-2/k}) \Sigma \preceq \widehat{\Sigma} \preceq (1+O(Ck)^{1/2} \cdot \eta^{1-2/k}) \Sigma\\
        &\forall v \in \R^d: \Iprod{M_{3,\mathrm{nc}} - \widehat{M}_{3,\mathrm{nc}}, v^{\otimes 3}}^2 \leq O(Ck)^{3/2} \cdot \eta^{1-3/k} \Iprod{M_{2,\mathrm{nc}}, v^{\otimes 2}}^{3}
    \end{align*}
    where $M_{2, \mathrm{nc}}$ corresponds to the raw second moment of $\cD$.
\end{theorem}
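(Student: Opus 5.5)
The statement is quoted from \cite[Theorems 2.2 and 2.3]{KothariSteinhardtSteurer:robustmomentestiamtion}, and in this paper we would simply invoke it. If we had to re-derive it, we would follow the standard ``identifiability proof plus sum-of-squares relaxation'' template.

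\textbf{Step 1: identifiability for sets of points.} Let $Y_1,\dots,Y_n$ be the uncorrupted samples. For $n \ge (C+d)^{O(k)}$, concentration gives that the empirical distribution of the $Y_i$ is itself certifiably sub-Gaussian up to order $k$. It also gives that its first three moments are within negligible error of those of $\cD$.

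Now take any other point set $X'_1,\dots,X'_n$ that agrees with the $Y_i$ on a $(1-2\eta)$-fraction of indices and whose empirical distribution is also certifiably bounded in $k$-th moments. For a fixed direction $v$, compare the two empirical averages of $\iprod{X-\mu,v}^j$ for $j=1,2,3$. The difference is supported on the disagreement set $D$, which has density at most $2\eta$. Apply H\"older with exponents $(k/j,\; k/(k-j))$ on $D$:
\[
\frac1n\sum_{i\in D}\abs{\iprod{Y_i-\mu,v}}^j \le (2\eta)^{1-j/k}\Paren{\tfrac1n\textstyle\sum_i \iprod{Y_i-\mu,v}^k}^{j/k} \le (2\eta)^{1-j/k}\, O(Ck)^{j/2}\,\iprod{\Sigma,vv^\top}^{j/2},
\]
and likewise for the $X'_i$. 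This produces exactly the three rates in the statement: $\eta^{1-1/k}$ for the mean, $\eta^{1-2/k}$ for the covariance, and $\eta^{1-3/k}$ (after squaring) for the raw third moment.

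For the covariance, the bound we get is a two-sided multiplicative one. We first bound the deviation by $\eta^{1-2/k}$ times the second moment. We then absorb the mean shift, which is of lower order, into that bound.

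\textbf{Step 2: make the argument a degree-$O(k)$ SoS proof.} Introduce indeterminates $w_i\in\{0,1\}$ with $\sum_i w_i \ge (1-\eta)n$, together with points $X'_i$ subject to $w_i(X'_i - X_i)=0$. Add the constraint that the empirical distribution of the $X'_i$ satisfies the certifiable sub-Gaussianity inequality up to order $k$; this is encoded via an auxiliary SoS certificate in the variable $v$.

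The key requirement is that every inequality used in Step 1 has a low-degree SoS proof in the variables $(w, X', v)$. This applies in particular to H\"older's inequality with the indicator $1-w_i w_i^{Y}$ and to the moment bound for the $X'_i$. We would use the standard SoS versions of H\"older and Cauchy--Schwarz, and we would need the uncorrupted samples to have a certificate in $v$ of degree $k$. Certifiable sub-Gaussianity of $\cD$ together with sample concentration provides this certificate.

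\textbf{Step 3: rounding.} Solve the degree-$O(k)$ SoS program and output the pseudo-expectations $\widetilde{\E}$ of the empirical mean, covariance and raw third moment of the $X'_i$.

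Because the identifiability inequalities are SoS-provable, they continue to hold under $\widetilde{\E}$. To pass from the pseudo-expectation of a polynomial to a bound on the polynomial evaluated at $\widetilde{\E}$ of the moments, we use that the relevant error quantities are squares. Pseudo-expectation Cauchy--Schwarz then shows that the output moments satisfy the stated bounds for every $v$.

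\textbf{Main obstacle.} We expect the hardest step to be the third-moment guarantee, specifically relating $\iprod{\widetilde{\E} M'_3 - M_3, v^{\otimes 3}}^2$ to $\iprod{M_{2,\mathrm{nc}},v^{\otimes 2}}^3$ uniformly in $v$ inside SoS. This needs a degree-$k$ SoS H\"older inequality with the non-integer exponent $3/k$. We would try first to apply SoS H\"older to the squared quantity and then use $\widetilde{\E}[p]^2 \le \widetilde{\E}[p^2]$.
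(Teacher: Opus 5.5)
Your proposal matches the paper's treatment: the paper gives no proof of this statement. It imports the statement verbatim from \cite[Theorems 2.2 and 2.3]{KothariSteinhardtSteurer:robustmomentestiamtion} and only combines it with \Cref{thm:DHPT25SoS} (via \Cref{fact:certifabletruncD}) to apply it to the truncated Gaussian, so invoking the citation is exactly right and the sum-of-squares re-derivation is not needed. If you keep that sketch, note that it does not reproduce the quoted guarantees ``exactly'': your H\"older step bounds the \emph{unsquared} third-moment error by $O(Ck)^{3/2}\eta^{1-3/k}\langle \Sigma, vv^\top\rangle^{3/2}$ (so $O(Ck)^{3}\eta^{2-6/k}$ after squaring) and gives the factor $O(Ck)$ rather than $O(Ck)^{1/2}$ for the covariance, so it implies the stated bounds only up to such factors, e.g.\ when $C$ and $k$ are constants.
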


\begin{theorem}[{\cite[Theorem 1.6]{DiakonikolasHopkinsPensiaTiegel:subgaussianiscertifiable}}]\label{thm:DHPT25SoS}
    There exists a universal constant $C' > 0$ such that any $s$-sub-Gaussian distribution is $C's$-certifiably sub-Gaussian.    
\end{theorem}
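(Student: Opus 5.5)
Since this is a cited structural result about sub-Gaussian distributions and not specific to truncation, I would prove it directly at the level of the moment tensor. Fix an even $m$. Let $T_m = \bbE[(X-\mu)^{\otimes m}]$, so that $p(v) \coloneqq \bbE\iprod{X-\mu,v}^m = \iprod{T_m, v^{\otimes m}}$. The goal is to exhibit an explicit sum-of-squares certificate for the polynomial $(C's\sqrt{m})^m\norm{v}^{2\cdot m/2} - p(v)$ in the indeterminate $v$. After rescaling $X$ by $1/s$, I may assume $s=1$. I would also use the Gaussian benchmark $\bbE_{g\sim\cN(0,I_d)}\iprod{g,v}^m = (m-1)!!\,\norm{v}^m$, which is trivially an SoS identity, and compare $p$ against it.

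\textbf{Step 1 (reduce to a matrix bound).} Flatten $T_m$ into a $d^{m/2}\times d^{m/2}$ matrix $A_m$. Then $p(v) = (v^{\otimes m/2})^\top A_m\, v^{\otimes m/2}$. If I can show $A_m \preceq \Lambda\, \Pi_{\mathrm{sym}}$ on the symmetric subspace, with $\Lambda \le (Cm)^{m/2}$, then $\Lambda\norm{v}^m - p(v) = (v^{\otimes m/2})^\top(\Lambda I - A_m)v^{\otimes m/2}$ is a quadratic form in $v^{\otimes m/2}$ with a PSD matrix, hence a sum of squares. Here I use that $\norm{v}^m = \norm{v^{\otimes m/2}}^2$ is an SoS identity.

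\textbf{Step 2 (control the flattened matrix).} For the naive flattening, $\norm{A_m}$ can be as large as $d^{\Theta(m)}$ times the injective norm, so I would not work with $A_m$ directly. Instead I would use the freedom to add any matrix $Q$ that vanishes on $v^{\otimes m/2}$, i.e.\ a symmetrization-gap matrix, and look for $Q$ with $A_m + Q \preceq (Cm)^{m/2} I$. The construction I would try first replaces $X$ by a decoupled, Gaussian-smoothed surrogate. Using sub-Gaussianity in its exponential form $\bbE\exp(\iprod{X-\mu,v}^2/\norm{v}^2)\le 2$, I would bound the $m$-th moments through an SoS-friendly chaining expansion: write $\iprod{X,v}^m$ via a multiscale decomposition of $v$ over nets, and bound each level by a degree-$m$ polynomial inequality that is SoS-provable, namely a Gaussian comparison on finitely many directions together with AM-GM, which is SoS.

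\textbf{Main obstacle.} Step 2 is where the difficulty lies. Sub-Gaussianity is a statement about every one-dimensional marginal, and the standard chaining and union-bound proofs that pass from marginals to uniform control over $v$ are not low-degree polynomial arguments. Turning that passage into a degree-$m$ SoS certificate with only a $\sqrt{m}$ loss per degree is exactly the content of the cited theorem, and I expect it to require the full generic-chaining/majorizing-measure machinery in an SoS-compatible form. As a fallback for this paper's purposes, I would note that only halfspace-truncated Gaussians are needed. In that case \Cref{claim:affine-transform} reduces everything to a product of $Z_\gamma$ and a standard Gaussian. For $v=(v_1,v')$ with $v'$ the coordinates other than the first, the binomial expansion makes $p(v)$ a combination of $\bbE (Z_\gamma-\kappa_1)^{j}\, v_1^{j}$ and Gaussian moments of $\iprod{g,v'}$. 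The one-dimensional moment bounds from \Cref{fact:sub-gaussian-of-one-sized-gaussian} and univariate SoS, where nonnegative univariate polynomials are sums of squares, then give the certificate directly.
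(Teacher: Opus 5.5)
The paper does not prove this statement. It is imported from the cited work and used only once, together with \Cref{cor:subgaussian:X}, to obtain \Cref{fact:certifabletruncD}.

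Judged against the statement as written (every $s$-sub-Gaussian distribution), your proposal has a genuine gap, and that gap is the entire theorem. Step~1 is correct, but it is only the standard equivalence between homogeneous SoS certificates and PSD Gram matrices, modulo matrices $Q$ vanishing on $v^{\otimes m/2}$. All of the content sits in Step~2, which you leave open.

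Moreover, the tools you list for Step~2 would not produce a certificate. Nearest-point maps onto nets, multiscale decompositions of $v$, and union bounds are not polynomial operations in the indeterminate $v$. Gaussian comparison along finitely many fixed directions only gives information about the values $p(t)$ at individual points $t$. Such pointwise information cannot suffice: for degree $m\ge 4$, a bound $\sup_{\norm{v}=1}p(v)\le B$ does not in general imply that $B'\norm{v}^m-p(v)$ is SoS for any dimension-free $B'$. A proof must exploit that $p$ is the moment polynomial of a distribution all of whose marginals are sub-Gaussian. That is the nontrivial content of the cited work, which, as you suspected, makes essential use of Talagrand's generic chaining/majorizing-measures theorem.

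Your fallback, on the other hand, is sound. It is a genuinely different and elementary route, and it covers everything the paper actually needs: \Cref{fact:certifabletruncD}, and hence the applications of \Cref{thm:KSSMain} to halfspace-truncated Gaussians. Two details should be made explicit.

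First, certifiability transfers under linear maps. Substitute $v\mapsto A^\top v$ into a certificate for $Y$, and use that $\norm{A}^m\norm{v}^m-\norm{A^\top v}^m$ is SoS. This holds because $\norm{A}^2\norm{v}^2-\norm{A^\top v}^2=v^\top(\norm{A}^2 I_d-AA^\top)v$ is SoS, and $a^k-b^k=(a-b)\sum_{i=0}^{k-1}a^ib^{k-1-i}$ with $a,b$ SoS. Taking $A=\Sigma^{1/2}U$ from the proof of \Cref{lem:first3moments} yields the constant $O(\norm{\Sigma}^{1/2})$.

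Second, for $Y-\bbE Y=(Z_\gamma-\kappa_1(\gamma),g')$, independence and the vanishing of odd Gaussian moments give
\[
p(v)=\sum_{k=0}^{m/2}a_{2k}\,v_1^{2k}\norm{v'}^{m-2k},\qquad a_{2k}=\binom{m}{2k}\,\bbE\bigl(Z_\gamma-\kappa_1(\gamma)\bigr)^{2k}\,(m-2k-1)!!\ \ge 0 .
\]
Rather than invoking univariate SoS, which needs care because $\norm{v'}$ is not a polynomial, compare coefficients with $\norm{v}^m=\sum_{k}\binom{m/2}{k}v_1^{2k}\norm{v'}^{m-2k}$:
\begin{itemize}
\item \Cref{fact:sub-gaussian-of-one-sized-gaussian} gives $\bbE(Z_\gamma-\kappa_1(\gamma))^{2k}\le(C_0\sqrt{m})^{2k}$, uniformly in $\gamma$.
\item Combined with $(m-2k-1)!!\le m^{(m-2k)/2}$ and $\binom{m}{2k}\le 2^m$, this yields $a_{2k}\le(C\sqrt{m})^m\binom{m/2}{k}$ for a universal $C$.
\end{itemize}
Hence $(C\sqrt m)^m\norm{v}^m-p(v)$ is a nonnegative combination of the products of squares $v_1^{2k}\norm{v'}^{m-2k}$, which is an explicit degree-$m$ certificate.

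This bypasses the deep theorem by exploiting the product structure of $(Z_\gamma,g')$. The price is that it says nothing about general sub-Gaussian distributions, so it can replace the paper's use of the statement but not the statement itself.
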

\begin{fact}\label{fact:certifabletruncD}
The distribution $\cD \coloneqq\truncatedGauss{\mu}{\Sigma}{\{w^\top x \leq \tau\}}$ is $O(\vnorm{\Sigma}^{1/2})$--certifiably sub-Gaussian. 
\end{fact}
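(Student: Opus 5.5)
The plan is to deduce the statement from two results already available: \Cref{cor:subgaussian:X}, which gives sub-Gaussianity of the truncated distribution, and \Cref{thm:DHPT25SoS}, which upgrades any $s$-sub-Gaussian distribution to a $C's$-certifiably sub-Gaussian one. No new sum-of-squares argument should be needed; the work is in matching the normalizations.

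First, by \Cref{cor:subgaussian:X}, if $X \sim \cD = \truncatedGauss{\mu}{\Sigma}{\{w^\top x \leq \tau\}}$, then $X$ is $\norm{\Sigma}^{1/2}$-sub-Gaussian. Concretely, for every unit vector $v$ we have $\norm{\iprod{v,X} - \E\iprod{v,X}}_{\psi_2} \leq \norm{\Sigma}^{1/2}$. Recall how this was obtained:
\begin{itemize}
\item write $X = \Sigma^{1/2}UY + \mu$, where $Y$ has independent coordinates;
\item the first coordinate of $Y$ is a one-dimensional truncated Gaussian, which is $1$-sub-Gaussian by \Cref{fact:sub-gaussian-of-one-sized-gaussian};
\item the remaining coordinates are standard Gaussians;
\item a linear map scales the sub-Gaussian parameter by its operator norm.
\end{itemize}

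Second, one must check that this sub-Gaussian notion is the one assumed in \Cref{thm:DHPT25SoS}. That theorem is phrased for distributions whose centered one-dimensional projections satisfy the standard moment bound $\E\iprod{X-\E X, v}^m \leq (Cs\sqrt{m})^m\norm{v}^m$. This bound is equivalent, up to universal constants, to a $\psi_2$-norm bound of $s$. So $\cD$ is $O(\norm{\Sigma}^{1/2})$-sub-Gaussian in their sense, with mean $\E X$. Note that the relevant mean is the truncated mean, not $\mu$, which matches the definition of certifiable sub-Gaussianity with respect to the distribution's own mean.

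Third, apply \Cref{thm:DHPT25SoS} with $s = O(\norm{\Sigma}^{1/2})$. This yields that $\cD$ is $C' \cdot O(\norm{\Sigma}^{1/2}) = O(\norm{\Sigma}^{1/2})$-certifiably sub-Gaussian, as claimed.

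The only step I expect to require care is the second one: reconciling the $\psi_2$ convention used in this paper with the moment-growth convention in \cite{DiakonikolasHopkinsPensiaTiegel:subgaussianiscertifiable}. This is a standard equivalence of sub-Gaussian characterizations and costs only absolute constants, which are absorbed into the $O(\cdot)$.
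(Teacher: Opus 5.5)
Your proposal is correct and matches the paper's argument, which likewise obtains the fact directly by combining \Cref{cor:subgaussian:X} with \Cref{thm:DHPT25SoS}. Your remarks on reconciling the $\psi_2$ convention with the moment-growth convention, and on centering at the truncated mean rather than $\mu$, are sound details that the paper leaves implicit.
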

We note that this follows directly by combining \Cref{cor:subgaussian:X} and \Cref{thm:DHPT25SoS}.
As a direct consequence, we can apply \Cref{thm:KSSMain} for estimating the truncated moments, even under \emph{adversarial} corruption.

\subsection{Learning parameters under halfspace truncation and robustness}
As a first step we begin by obtaining a robust preconditioner for our setup. 
\begin{lemma}\label{lemma:robustprecond}
    Let $\eta > 0$ be at most a sufficiently small constant.
    Given an $\eta$-corruption $X_1, \ldots, X_n$ of $\truncatedGauss{\mu}{\Sigma}{\{w^\top x \leq \tau\}}$, we can efficiently compute a matrix $\widetilde{M}_2$ such that 
    \[
        (1 - \zeta) M_2 \preceq \widetilde{M}_2 \preceq (1 + \zeta) M_2
    \]
    whenever  $\vnorm{\Sigma}^{1/2}k \cdot \eta^{1 - 2/k} \leq O(1)$ and $\zeta = \vnorm{\Sigma}^{1/4}k^{1/2} \eta^{1 - 2/k}$ where $k$ is as in the above theorem of \cite{KothariSteinhardtSteurer:robustmomentestiamtion}.
\end{lemma}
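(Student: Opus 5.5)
The plan is to take $\widetilde{M}_2$ to be the robust covariance estimate $\widehat{\Sigma}$ produced by the algorithm of \Cref{thm:KSSMain}, run on the $\eta$-corrupted samples. The only work is to verify the hypotheses of that theorem for the uncorrupted distribution $\cD = \truncatedGauss{\mu}{\Sigma}{\{w^\top x \leq \tau\}}$ and then translate its guarantee. Note that the covariance of $\cD$ is exactly $M_2$ by \Cref{lem:first3moments}. So the second guarantee of \Cref{thm:KSSMain}, applied with $\cD$ in place of a generic distribution, reads
\[
    (1-O(C_{\cD}k)^{1/2}\eta^{1-2/k})\, M_2 \preceq \widetilde{M}_2 \preceq (1+O(C_{\cD}k)^{1/2}\eta^{1-2/k})\, M_2,
\]
where $C_{\cD}$ is the certifiable sub-Gaussianity parameter of $\cD$.

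The first step is to pin down $C_{\cD}$. By \Cref{fact:certifabletruncD}, which combines \Cref{cor:subgaussian:X} with \Cref{thm:DHPT25SoS}, $\cD$ is $O(\Norm{\Sigma}^{1/2})$-certifiably sub-Gaussian, so we take $C_{\cD} = O(\Norm{\Sigma}^{1/2})$. \Cref{thm:KSSMain} requires $C_{\cD} k\,\eta^{1-2/k}$ to be at most a sufficiently small constant. This is exactly the stated hypothesis $\Norm{\Sigma}^{1/2} k\,\eta^{1-2/k} \leq O(1)$, with the implicit constant chosen small enough. The sample-size requirement $n \geq (C_{\cD}+d)^{O(k)}$ is polynomial in $d$ and $\Norm{\Sigma}$ for constant even $k\geq 6$, and I would simply assume we draw that many samples.

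Substituting $C_{\cD}$ into the guarantee gives a multiplicative error $O(\Norm{\Sigma}^{1/2}k)^{1/2}\eta^{1-2/k} = O(\Norm{\Sigma}^{1/4}k^{1/2}\eta^{1-2/k})$. This is $\zeta$ up to an absolute constant, which can be absorbed into $\zeta$ or into the definition of the approximation. Efficiency follows because the \cite{KothariSteinhardtSteurer:robustmomentestiamtion} algorithm runs in polynomial time.

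The main subtlety I expect is a scale issue. The certifiability parameter depends on $\Norm{\Sigma}$, which is not scale-invariant, whereas the conclusion is a multiplicative Loewner bound. Here I would simply carry $\Norm{\Sigma}$ through as the lemma does, rather than first normalizing. A second point worth checking is that \Cref{thm:KSSMain} is stated with $\Sigma$ denoting the covariance of $\cD$, and it must be instantiated with $M_2$, not with the untruncated $\Sigma$. Once this identification is made, the lemma is a direct corollary.
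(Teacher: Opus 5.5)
Your proposal is correct and follows the paper's proof, which consists of a single line: apply \Cref{thm:KSSMain} with certifiable sub-Gaussianity parameter $O(\Norm{\Sigma}^{1/2})$ from \Cref{fact:certifabletruncD}, and read off the covariance guarantee for $M_2$, the covariance of the truncated distribution. The steps you spell out are the ones the paper leaves implicit: instantiating the theorem's covariance as $M_2$ rather than $\Sigma$, reading the ``$\leq O(1)$'' hypothesis as ``at most a sufficiently small constant'', and absorbing the absolute constant into $\zeta$.
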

The proof of \Cref{lemma:robustprecond} follows directly by an application of \Cref{thm:KSSMain}. Once we have access to such an $\widetilde{M}_2$, if $\zeta \leq \frac{\min\Set{\kappa_2(\gamma),1}}{2}$, then as discussed in \Cref{claim:preconditioner} in \Cref{sec:precondition}, we can assume that 
\begin{align*}
     &0.5 I_d \preceq M_2 \preceq 2 I_d \\
     &0.5 I_d \preceq \Sigma \preceq \frac{2}{\kappa_2(\gamma)} I_d
\end{align*}
After transforming the samples we apply \Cref{thm:KSSMain} (again) to obtain robust estimates for the  second and third moment. 
\begin{lemma}\label{LEM:robustmomentestimationforourcase}
    Let $\delta > 0$ be an arbitrary constant.
    Let $0 < \eta \leq O(1)$ for a sufficiently small constant.
    Given an $\eta$-corruption $X_1, \ldots, X_n$ of $\truncatedGauss{\mu}{\Sigma}{\{w^\top x \leq \tau\}}$ (preconditioned as discussed above), where $n = \poly(d)$, we can efficiently compute estimates for the first non-central moment $\hat{\mu}_{\mathrm{trunc}}$ as well as the second and third (central) moments $\widehat{M}_2$ and $\widehat{M}_3$ of $\truncatedGauss{\mu}{\Sigma}{\{w^\top x \leq \tau\}}$ with the following guarantees
    \begin{align*}
        &\Norm{\mu_{\mathrm{trunc}} - \hat{\mu}_{\mathrm{trunc}}} \leq O_\delta\left(\log^{1/4}(1/\alpha) \cdot \eta^{1-\delta}\right)\\
        &\left(1-O_\delta\left(\log^{1/4}(1/\alpha) \cdot \eta^{1-\delta}\right)\right) M_2 \preceq \widehat{M}_2 \preceq \left(1 + O_\delta\left(\log^{1/4}(1/\alpha) \cdot \eta^{1-\delta}\right)\right) M_2\\
        &\forall v \in \cS^{d-1}: \Iprod{M_3 - \widehat{M}_3, v^{\otimes 3}} \leq O_\delta\Paren{\max\Paren{\log(1/\alpha)^{1/4},\log(1/\alpha)^{7/8}} \cdot \eta^{1/2-\delta}}
    \end{align*} 
    whenever $\eta^{1 - \delta} \leq O_\delta(\log^{-1/2} (1/\alpha))$.
\end{lemma}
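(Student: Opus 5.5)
The plan is to apply \Cref{thm:KSSMain} to the preconditioned samples and convert its guarantees on raw moments into the stated guarantees on the central moments. After preconditioning we have $\frac{1}{2}I_d\preceq M_2\preceq 2I_d$ and $\Sigma\preceq \frac{2}{\kappa_2(\gamma)}I_d$. By \Cref{claim:var-est}, $\kappa_2(\gamma)^{-1}\lesssim \max\{1,\log(1/\alpha)\}$, so $\norm{\Sigma}^{1/2}\lesssim \log^{1/2}(1/\alpha)$ (for $\alpha$ bounded away from $1$). By \Cref{fact:certifabletruncD} the preconditioned truncated distribution is $C$-certifiably sub-Gaussian with $C=O(\log^{1/2}(1/\alpha))$. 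I would choose the even constant $k$ with $2/k\le\delta$ and $3/k\le 2\delta$, e.g.\ $k=\lceil 3/\delta\rceil$ rounded up to an even integer $\ge 6$. Then $\eta^{1-1/k},\eta^{1-2/k}\le\eta^{1-\delta}$ and $\eta^{(1-3/k)/2}\le\eta^{1/2-\delta}$. The hypothesis $\eta^{1-\delta}\le O_\delta(\log^{-1/2}(1/\alpha))$ is exactly what makes $Ck\eta^{1-2/k}$ small enough to invoke the theorem, with $n=(C+d)^{O(k)}=\poly(d)$.

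\textbf{First and second moments.} The mean bound is read off directly: $\norm{\mu_{\mathrm{trunc}}-\hat\mu_{\mathrm{trunc}}}\le O(Ck)^{1/2}\eta^{1-1/k}\norm{M_2}^{1/2}$, with $\norm{M_2}\le 2$ and $C^{1/2}\lesssim\log^{1/4}(1/\alpha)$. The covariance bound is likewise direct, since the covariance of the truncated distribution is $M_2$. This gives the multiplicative sandwich with error $O(Ck)^{1/2}\eta^{1-2/k}$, which is $O_\delta(\log^{1/4}(1/\alpha)\eta^{1-\delta})$.

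\textbf{Third central moment.} Take $\widehat M_3$ to be the central tensor built from the raw estimates:
\[
\widehat M_3=\widehat M_{3,\mathrm{nc}}-\mathrm{sym}(\widehat M_{2,\mathrm{nc}}\otimes\hat\mu)+2\hat\mu^{\otimes3},
\]
where $\widehat M_{2,\mathrm{nc}}=\widehat M_2+\hat\mu\hat\mu^\top$. For a unit vector $v$, write $a=\iprod{v,\mu_{\mathrm{trunc}}}$. Then
\[
\iprod{M_3,v^{\otimes3}}=\iprod{M_{3,\mathrm{nc}},v^{\otimes3}}-3a\iprod{M_{2,\mathrm{nc}},v^{\otimes2}}+2a^3,
\]
and the analogous identity holds for the estimates. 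The raw third-moment error is at most $O(Ck)^{3/4}\eta^{(1-3/k)/2}\iprod{M_{2,\mathrm{nc}},v^{\otimes2}}^{3/2}$. Bounding it requires $|a|$, i.e.\ $\norm{\mu_{\mathrm{trunc}}}$, and this is the main obstacle: preconditioning does not center the data. I would first robustly center, subtracting a robust mean estimate $\hat\mu_0$ computed with \Cref{thm:KSSMain} on half the samples. This makes $\norm{\mu_{\mathrm{trunc}}}\le O(1)$, hence $\iprod{M_{2,\mathrm{nc}},v^{\otimes2}}=O(1)$. Central moments are translation invariant, so this changes nothing we estimate except adding back $\hat\mu_0$ to the mean.

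With $|a|=O(1)$, the raw term contributes $O(C^{3/4})\eta^{1/2-\delta}\lesssim\log^{3/8}(1/\alpha)\eta^{1/2-\delta}$. Each cross term is bounded by a product of $O(1)$ quantities and the first- or second-moment errors, contributing $O(C^{1/2})\eta^{1-\delta}$. Finally I compare the exponents of $\log(1/\alpha)$ that arise against the target $\max(\log^{1/4},\log^{7/8})$. Should the expansion need a factor $\norm{\Sigma}^{1/2}$, the worst product (an extra $\log^{1/2}$ on the $\log^{3/8}$ term) gives $\log^{7/8}$, matching the target. Using $\eta^{1-\delta}\le\eta^{1/2-\delta}$, all terms combine into the stated bound.
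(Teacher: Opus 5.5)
Your proposal is correct and follows the paper's proof. The steps match: apply \Cref{thm:KSSMain} with $C=O(\kappa_2(\gamma)^{-1/2})$ and $k$ with $3/k\le\delta$ for the mean and $M_2$; recenter fresh samples by the robust mean so that $\iprod{M_{2,\mathrm{nc}},v^{\otimes 2}}=O(1)$; then control the centering error through the binomial expansion. The one difference is that the paper takes the raw third moment of the recentered data directly as $\widehat M_3$ and bounds the resulting bias by $O(\norm{\hat\mu_{\mathrm{trunc}}-\mu_{\mathrm{trunc}}})$, while you apply the full central-moment correction; both give a bound within $\max(\log^{1/4},\log^{7/8})\cdot\eta^{1/2-\delta}$, so your closing hedge about an extra $\norm{\Sigma}^{1/2}$ factor is unnecessary.
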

\begin{proof}
    We observe that for the estimate for $\mu_{\mathrm{trunc}}$ and $M_2$ it suffices to apply \Cref{thm:KSSMain} (where $C$ is now at most $O(\kappa_2(\gamma)^{-1/2})$) and pick $k$ such that $\delta \geq 3/k$ for some constant $k$ to obtain the result.
    Note that, by \Cref{claim:var-est}, we have that $\kappa_2(\gamma)^{-1} \leq O\left(\log(1/\alpha)\right)$.

    For the estimate for the third central moments $M_3$, we cannot directly apply \Cref{thm:KSSMain} since they only give an estimate for the non-central moments.
    Instead, we again use the estimate $\hat{\mu}_\mathrm{trunc}$ for the truncated mean $\mu_{\mathrm{trunc}}$.
    We note that this satisfies, as discussed before,
    \[
        \vnorm{\hat{\mu}_\mathrm{trunc}-\mu_\mathrm{trunc}} \leq O_\delta\Paren{\log^{1/4}(1/\alpha) \cdot \eta^{1-\delta}}.
    \]
    Then, using new samples $X_i$ (such that they are independent of $\hat{\mu}_\mathrm{trunc}$), we apply the algorithm from \Cref{thm:KSSMain} to the samples ${X_i' = X_i - \hat{\mu}_\mathrm{trunc}}$ to get an estimate $M'_{3,\mathrm{nc}}$ of the third non-central moment of the uncorrupted $Y_i' = Y_i - \hat{\mu}_\mathrm{trunc}$\footnote{Recall that $Y$ corresponds to uncorrupted samples and $X$ corresponds to corrupted samples.}.
    We use this estimator $\widehat{M}'_{3,\mathrm{nc}}$ to estimate the third central moment $M_3$ of the uncorrupted $Y_i$.
    We have
    \begin{align*}
        M_3 &= \mathbb{E}[(Y_i-\mu_\mathrm{trunc})^{\otimes 3}]\\
        &= \mathbb{E}[(Y_i-\hat{\mu}_\mathrm{trunc}+\hat{\mu}_\mathrm{trunc}-\mu_\mathrm{trunc})^{\otimes 3}]\\
        &= \mathbb{E}[(Y_i-\hat{\mu}_\mathrm{trunc})^{\otimes 3}+ \mathrm{Sym}((Y_i-\hat{\mu}_\mathrm{trunc})^{\otimes 2} \otimes (\hat{\mu}_\mathrm{trunc}-\mu_\mathrm{trunc}))\\
        &\mathrel{\phantom{=}} \mathrel{+} \mathrm{Sym}((Y_i-\hat{\mu}_\mathrm{trunc}) \otimes (\hat{\mu}_\mathrm{trunc}-\mu_\mathrm{trunc})^{\otimes 2}) + (\hat{\mu}_\mathrm{trunc}-\mu_\mathrm{trunc})^{\otimes 3}],
    \end{align*}
    where we use $\mathrm{Sym}$ to denote all permutations (i.e., $\mathrm{Sym}(a^{\otimes 2} \otimes b) = a \otimes a \otimes b + a \otimes b \otimes a + b \otimes a \otimes a$).
    Now, \Cref{thm:KSSMain} gives an estimate to $M'_{3,\mathrm{nc}} \coloneqq \mathbb{E}[(Y_i-\hat{\mu}_\mathrm{trunc})^{\otimes 3}]$ with the following guarantee
    \begin{equation}\label{EQ:robustness:approxforcorrectednoncentral}
        \Iprod{M'_{3,\mathrm{nc}} - \widehat{M}'_{3,\mathrm{nc}}, v^{\otimes 3}} \leq O_\delta\Paren{\log(1/\alpha)^{3/8} \cdot \eta^{1/2-\delta} \cdot \Norm{\mathbb{E}[(Y_i-\hat{\mu}_\mathrm{trunc})^{\otimes 2}]}^{3/2}}
    \end{equation}
    for any unit vector $v$.
    We can bound
    \begin{align}
        \Norm{\mathbb{E}[(Y_i-\hat{\mu}_\mathrm{trunc})^{\otimes 2}]} &= \Norm{\mathbb{E}[(Y_i-\mu_\mathrm{trunc})^{\otimes 2}] + (\mu_\mathrm{trunc}-\hat{\mu}_\mathrm{trunc})^{\otimes 2}}\nonumber\\
        &\leq \Norm{\mathbb{E}[(Y_i-\mu_\mathrm{trunc})^{\otimes 2}]} + \Norm{(\mu_\mathrm{trunc}-\hat{\mu}_\mathrm{trunc})^{\otimes 2}}\nonumber\\
        &\leq 2 + \Norm{\mu_\mathrm{trunc}-\hat{\mu}_\mathrm{trunc}}^2,\label{EQ:robustness:boundonsecondmoment}
    \end{align}
    where we used in the last step that the first term is just the second central moment of the (truncated) $Y_i$, which satisfies $M_2 \preceq 2 I_d$.
    We now get the following for the error of our estimate, where $v$ is again a unit vector,
    \[
        \Iprod{M_3 - \widehat{M}'_{3,\mathrm{nc}}, v^{\otimes 3}} = \Iprod{M_3 - M'_{3,\mathrm{nc}}, v^{\otimes 3}} + \Iprod{M'_{3,\mathrm{nc}} - \widehat{M}'_{3,\mathrm{nc}}, v^{\otimes 3}}.
    \]
    We can bound the second term by \eqref{EQ:robustness:approxforcorrectednoncentral}.
    For the first term, note that
    \begin{align*}
        \Iprod{M_3 - M'_{3,\mathrm{nc}}, v^{\otimes 3}} &= \Iprod{\mathrm{Sym}(\mathbb{E}[(Y_i-\hat{\mu}_\mathrm{trunc})^{\otimes 2}] \otimes (\hat{\mu}_\mathrm{trunc}-\mu_\mathrm{trunc})), v^{\otimes 3}}\\
        &\mathrel{\phantom{=}} \mathrel{+} \Iprod{\mathrm{Sym}(\mathbb{E}[Y_i-\hat{\mu}_\mathrm{trunc}] \otimes (\hat{\mu}_\mathrm{trunc}-\mu_\mathrm{trunc})^{\otimes 2}), v^{\otimes 3}}\\
        &\mathrel{\phantom{=}} \mathrel{+} \Iprod{(\hat{\mu}_\mathrm{trunc}-\mu_\mathrm{trunc})^{\otimes 3}, v^{\otimes 3}}\\
        &\leq 4 \vnorm{\hat{\mu}_\mathrm{trunc}-\mu_\mathrm{trunc}}^3 + 3\vnorm{\hat{\mu}_\mathrm{trunc}-\mu_\mathrm{trunc}} \cdot \Norm{\mathbb{E}[(Y_i-\hat{\mu}_\mathrm{trunc})^{\otimes 2}]}\\
        &\leq 7 \vnorm{\hat{\mu}_\mathrm{trunc}-\mu_\mathrm{trunc}}^3 + 6\vnorm{\hat{\mu}_\mathrm{trunc}-\mu_\mathrm{trunc}},
    \end{align*}
    where in the last step we used \eqref{EQ:robustness:boundonsecondmoment}.
    Combining this with \eqref{EQ:robustness:approxforcorrectednoncentral}, we get
    \begin{align*}
        \Iprod{M_3 - \widehat{M}'_{3,\mathrm{nc}}, v^{\otimes 3}} &\leq O_\delta\Paren{\log(1/\alpha)^{3/8} \cdot \eta^{1/2-\delta} \cdot (2 + \Norm{\mu_\mathrm{trunc}-\hat{\mu}_\mathrm{trunc}}^2)}\\
        &\mathrel{\phantom{=}} \mathrel{+} 7 \vnorm{\hat{\mu}_\mathrm{trunc}-\mu_\mathrm{trunc}}^3 + 6\vnorm{\hat{\mu}_\mathrm{trunc}-\mu_\mathrm{trunc}}\\
        &\leq O_\delta\Paren{\max\Paren{\log(1/\alpha)^{1/4},\log(1/\alpha)^{7/8}} \cdot \eta^{1/2-\delta}},
    \end{align*}
    where we used $\eta \leq 1$ for the last step.
\end{proof}
We now prove the following theorem.
\subsection*{Main robustness result}

\begin{theorem}[Full statement of \Cref{THM:INTRO:robust}]\label{THM:robust}
    Let $0 < \delta < \frac{1}{2}$ be an arbitrary constant and let $\alpha \leq 0.99$.
    Define
    \[
        L \coloneqq
        \max\Paren{\log^{1/4}(1/\alpha),\log^{7/8}(1/\alpha)}
    \]
    and
    \[
        \beta_{\mathrm{robust}}
        \coloneqq
        \max\Set{1,\frac{1}{|\kappa_3(\gamma)|}}
        \cdot \sqrt d \cdot L\cdot \eta^{1/2-\delta}.
    \]
        Let $c_\delta$ be a sufficiently small constant depending only on $\delta$.     Assume that \begin{equation}\label{EQ:robustnessconditiononeta}
        0 < L^{\frac{2}{1 -2\delta}} \cdot \eta
        \leq
        O_\delta\Paren{
            \Paren{\frac{\poly\bigl(\kappa_2(\gamma), \kappa_3(\gamma)\bigr)}{\sqrt{d}}}^{\frac{2}{1-2\delta}}
        }.
    \end{equation}
    The polynomial in \eqref{EQ:robustnessconditiononeta} is chosen sufficiently small so that
    \[
        L\eta^{1-\delta}\leq O_\delta(\beta_{\mathrm{robust}}),
        \qquad
        \beta_{\mathrm{robust}}
        \leq
        c_\delta \cdot \Paren{\min\Set{\kappa_2(\gamma)^{5/2},|\kappa_3(\gamma)|}},
    \]
     and so that \Cref{LEM:robustmomentestimationforourcase} applies.
    Furthermore, assume that
    \begin{equation}\label{EQ:robustnessconditiononeta2}
        \eta^{1-\delta}
        \leq
        \min\Set{
            O_\delta(\log^{-1/2} (1/\alpha)),   c_\delta \cdot 
            \Norm{\Sigma}^{-1/2},   c_\delta \cdot 
            \Norm{\Sigma}^{-1/4} \cdot \kappa_2(\gamma)
        }.
    \end{equation}
    Given an $\eta$-corruption $X_1, \ldots, X_n$ of
    $\truncatedGauss{\mu}{\Sigma}{\{w^\top x \leq \tau\}}$, where $n = \poly(d)$,
    with high probability, we can, in time $\poly(d)$, compute estimators
    $\hat{\mu}$ and $\widehat{\Sigma}$ such that
    \[
        \Norm{\Sigma^{-1/2}(\hat{\mu} - \mu)}
        \leq
        O_\delta\Paren{\frac{1}{\kappa_2(\gamma)^{7/2}}\cdot \beta_{\mathrm{robust}}}
    \]
    and
    \[
        \norm{\Sigma^{-1/2}\widehat{\Sigma}\Sigma^{-1/2}-I_d}_F
        \leq
        O_\delta\Paren{\frac{1}{\kappa_2(\gamma)^{7/2}}\cdot \beta_{\mathrm{robust}}}.
    \]
\end{theorem}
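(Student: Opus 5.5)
The plan is to reuse the non-robust pipeline of \Cref{sec:fullproof} essentially verbatim. The only change is that the sample moments are replaced by the robust moment estimates of \Cref{LEM:robustmomentestimationforourcase}. All lemmas there (\Cref{lem:jenrich}, \Cref{lem:threshold-estimation:closeinskew}, \Cref{lem:putting-together-2}, \Cref{lem:putting-together}) are deterministic given accuracy guarantees on the inputs, so it suffices to verify that the robust estimates satisfy those guarantees with $\beta = \beta_{\mathrm{robust}}$.

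\textbf{Step 1: preconditioning.} I first apply \Cref{lemma:robustprecond} to a fresh batch of the corrupted samples. This yields $\widetilde{M}_2$ with $(1-\zeta)M_2\preceq\widetilde{M}_2\preceq(1+\zeta)M_2$, where $\zeta=\norm{\Sigma}^{1/4}k^{1/2}\eta^{1-2/k}$. Choosing $k$ with $2/k\le\delta$, the condition \eqref{EQ:robustnessconditiononeta2} gives $\norm{\Sigma}^{1/2}k\eta^{1-2/k}\le O(1)$ and $\zeta\le\kappa_2(\gamma)/2$. By \Cref{claim:preconditioner} we may therefore assume $\frac12 I_d\preceq M_2\preceq 2I_d$ and $\frac12 I_d\preceq\Sigma\preceq\frac{2}{\kappa_2(\gamma)}I_d$. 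The error is measured in affine-invariant norms (\Cref{fact:affine-invariance}), so undoing the transformation at the end costs nothing.

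\textbf{Step 2: robust moments feed the lemmas.} On a further fresh batch, I invoke \Cref{LEM:robustmomentestimationforourcase}. This gives $\norm{\hat\mu_{\mathrm{trunc}}-\mu_{\mathrm{trunc}}}$ and $\norm{M_2^{-1/2}\widehat M_2M_2^{-1/2}-I_d}$ both bounded by $O_\delta(L\eta^{1-\delta})$, together with the injective-norm bound $\sup_v\iprod{v^{\otimes3},M_3-\widehat M_3}\le O_\delta(L\eta^{1/2-\delta})$.
\begin{itemize}
\item For the direction, \Cref{lem:jenrich} applied to the random contraction of $\widehat M_3$ gives $\hat u$ within $O(\sqrt d\,L\eta^{1/2-\delta}/|\kappa_3(\gamma)|)\le O(\beta_{\mathrm{robust}})$ of $\pm u^*/\norm{u^*}$, with constant probability.
\item For $M_2$, the operator-norm guarantee upgrades to Frobenius via $\norm{\cdot}_F\le\sqrt d\norm{\cdot}$ and $\norm{\Sigma^{-1/2}M_2^{1/2}}\le1$. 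This gives $\norm{\Sigma^{-1/2}(\widehat M_2-M_2)\Sigma^{-1/2}}_F\le\sqrt d\,L\eta^{1-\delta}\le O(\beta_{\mathrm{robust}})$; this is exactly where the $\sqrt d$ factor in $\beta_{\mathrm{robust}}$ is needed even for the covariance.
\item The truncated-mean error and the tensor error are likewise at most $O(\beta_{\mathrm{robust}})$, using $\eta^{1-\delta}\le\eta^{1/2-\delta}$.
\end{itemize}

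\textbf{Step 3: finish as in \Cref{lem:unified}.} I then follow the proof of \Cref{lem:unified} line by line. \Cref{lem:threshold-estimation:closeinskew} gives $|\sk(\hat\gamma)-\sk(\gamma)|\le O(\beta_{\mathrm{robust}}\kappa_2(\gamma)^{-3/2})$ and the quadratic-form bound \eqref{eqn:39-4}. The sign-flip argument with \eqref{eqn:unified:signcheck1}--\eqref{eqn:unified:signcheck2} needs $\beta_{\mathrm{robust}}\le c|\kappa_3(\gamma)|$, which is assumed via \eqref{EQ:robustnessconditiononeta}. \Cref{lem:putting-together,lem:putting-together-2} need $\beta_{\mathrm{robust}}\le c\kappa_2(\gamma)^{5/2}$, also assumed, and they give both errors at most $O(\beta_{\mathrm{robust}}/\kappa_2(\gamma)^{7/2})$. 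The mean error converts to Mahalanobis norm using $\norm{\Sigma^{-1/2}}\le\sqrt2$. Finally, a union bound over the constant failure probabilities (with independent batches for the preconditioner, for $\hat\mu_{\mathrm{trunc}}$, and for $M_3$) gives the result.

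\textbf{Main obstacle.} The delicate part is checking that \eqref{EQ:robustnessconditiononeta} truly makes every intermediate accuracy requirement hold simultaneously: the smallness of $\beta$ in each lemma, $\beta<1/2$ in the inversion of $\widehat M_2$, and the applicability of \Cref{LEM:robustmomentestimationforourcase}. I will handle this by tracking exponents of $L$, $\kappa_2(\gamma)$ and $\kappa_3(\gamma)$ explicitly.

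A secondary subtlety is that \Cref{lem:jenrich} only needs an upper bound on $\iprod{v^{\otimes3},\widehat M_3-M_3}$. The robust lemma bounds $\iprod{v^{\otimes 3},M_3-\widehat M_3}$, but by replacing $v$ with $-v$ the two are equivalent, so the required two-sided control holds.
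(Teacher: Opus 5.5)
Your proposal is correct and follows the paper's proof essentially step for step. The steps are: robust preconditioning via \Cref{lemma:robustprecond} and \Cref{claim:preconditioner}; robust moments from \Cref{LEM:robustmomentestimationforourcase}, with the $\sqrt d$ Frobenius upgrade for $\widehat M_2$ and \Cref{lem:jenrich} for $\hat u$; then \Cref{lem:threshold-estimation:closeinskew}, the sign test, and \Cref{lem:putting-together,lem:putting-together-2} with $\beta=\beta_{\mathrm{robust}}$. The only cosmetic difference is that you reuse the sign-check argument from \Cref{lem:unified} instead of re-deriving it as the paper does in its ``Bounding 6'' step, and the two are equivalent.
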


\begin{proof}
    We follow the proof of \Cref{thm:main} but replace the sample moments with our robust estimates from \Cref{LEM:robustmomentestimationforourcase}.
    Note that we can do the preconditioning as discussed above by \eqref{EQ:robustnessconditiononeta2}.
    Also note that we can apply \Cref{LEM:robustmomentestimationforourcase} by \eqref{EQ:robustnessconditiononeta2}. By the robust preconditioning step and affine invariance, it suffices to prove the theorem in the preconditioned coordinates, where
\[
    \frac12 I_d\preceq M_2\preceq 2I_d,
    \qquad
    \frac12 I_d\preceq \Sigma\preceq \frac{2}{\kappa_2(\gamma)}I_d.
\]

    \subsection*{Covariance estimation.}
    In order to apply \Cref{lem:putting-together} for the covariance estimation, we need to compute $\hat{u}$ and $\hat{\gamma}$ and bound the following four quantities (\eqref{eqn:25-1}, \eqref{eqn:25-2}, \eqref{eqn:25-5} and \eqref{eqn:25-3} respectively):
    \begin{enumerate}[noitemsep]
        \item $\displaystyle \norm{\Sigma^{-1/2}\Paren{\widehat{M}_2-M_2}\Sigma^{-1/2}}_F$,
        \item $\displaystyle \min\left\{\Norm{\hat{u}-\frac{u^{\ast}}{\norm{u^{\ast}}}},\Norm{\hat{u}+\frac{u^{\ast}}{\norm{u^{\ast}}}}\right\}$,
        \item $\abs{\sk(\hat{\gamma})-\sk(\gamma)}$ ,
        \item $\displaystyle \left\lvert \hat{u}^T\widehat{M}_2^{-1} \hat{u}-\frac{\norm{\Sigma^{1/2}w}^2}{\norm{\Sigma w}^2\cdot\kappa_2(\gamma)}\right\rvert$.
    \end{enumerate}
    We remark that showing 3. is enough to show \eqref{eqn:25-5} by the proof of \Cref{lem:threshold-estimation:closeinskew}.
    
    \noindent \textbf{Bounding 1.}
    From \Cref{LEM:robustmomentestimationforourcase}, we immediately get
    \begin{align}\label{EQ:robustnessestimatorforM2}
    \norm{\Sigma^{-1/2}\Paren{\widehat{M}_2-M_2}\Sigma^{-1/2}}_F &\leq \sqrt{d} \cdot \norm{\Sigma^{-1/2}\Paren{\widehat{M}_2-M_2}\Sigma^{-1/2}} \\
        &\leq \sqrt{d}\cdot O_\delta({\eta}^{1-\delta} \cdot L).
    \end{align}
    The first inequality holds because for any matrix $A$ we have that the Frobenius norm satisfies $\Vert A \Vert_F \leq \sqrt{\textup{rank}(A)} \cdot \Vert A \Vert$ and the last inequality follows from \Cref{LEM:robustmomentestimationforourcase}. 
    As shown in \cite{KothariSteinhardtSteurer:robustmomentestiamtion} the polynomial dependency on the dimension is \emph{necessary} in general for certifiably sub-Gaussian distributions.\\
    \noindent \textbf{Bounding 2.}
    It suffices that we get a good estimate of $M_3$. Indeed, we are promised such a good estimate from \Cref{LEM:robustmomentestimationforourcase}. This directly implies that we can find a unit vector $\hat{u} \in \R^d$ such that
    \begin{equation}\label{EQ:robustnessestimatorforustar}
        \min\Set{\Norm{\hat{u}+\frac{u^{\ast}}{\norm{u^{\ast}}}}_2,\Norm{\hat{u}-\frac{u^{\ast}}{\norm{u^{\ast}}}}_2}\leq O_\delta\left(\frac{\sqrt{d}}{|\kappa_3(\gamma)|} \cdot L \cdot \eta^{1/2-\delta} \right) 
    \end{equation}

    \noindent \textbf{Bounding 3.}
    From  \eqref{EQ:robustnessconditiononeta} we have that
    \begin{align*}
        \max\Set{O_\delta\Paren{\frac{\sqrt{d}}{|\kappa_3(\gamma)|} \cdot L \cdot \eta^{1/2 - \delta}}, O_\delta(\eta^{1-\delta} \cdot L),  O_\delta(\eta^{1/2-\delta} \cdot L)} &= O_\delta\Paren{\frac{\sqrt{d}}{|\kappa_3(\gamma)|}\cdot L \cdot \eta^{1/2- \delta}}\\
        &\leq c,
    \end{align*}
    where $c$ is a sufficiently small constant\footnote{Note that this is \emph{weaker} than the condition that our theorem statement imposes on $\eta$.}.
    This enables us to use \Cref{lem:threshold-estimation:closeinskew} and compute an estimate $\hat{\gamma}$ with $\beta$ in that lemma set to $O_\delta\Paren{\frac{\sqrt{d}}{|\kappa_3(\gamma)|}\cdot L \cdot \eta^{1/2- \delta}}$. Therefore we have
    \begin{equation}\label{EQ:robustnessestimatorfortauhat}
        \Abs{\sk(\hat{\gamma}) - \sk(\gamma)} \leq O_\delta\Paren{\frac{\sqrt{d}}{|\kappa_3(\gamma)|}\cdot L \cdot \eta^{1/2- \delta}} \cdot \kappa_2(\gamma)^{-3/2}. 
    \end{equation}
    
    \noindent \textbf{Bounding 4.}
The second conclusion of \Cref{lem:threshold-estimation:closeinskew}, applied with
\[
    \beta
    =
 O_\delta\Paren{\frac{\sqrt{d}}{|\kappa_3(\gamma)|}\cdot L \cdot \eta^{1/2- \delta}},
\]
gives
\[
    \left\lvert
    \hat u^\top \widehat M_2^{-1}\hat u
    -
    \tilde u^\top M_2^{-1}\tilde u
    \right\rvert
    \le O_\delta\Paren{\frac{\sqrt{d}}{|\kappa_3(\gamma)|}\cdot L \cdot \eta^{1/2- \delta}}.
\]
Using \eqref{eqn:secondmomentwithsigmaw}, where $\tilde{u} = u^\ast / \Vert u^\ast\Vert - \Sigma w/ \Vert \Sigma w \Vert$
\[
    \tilde u^\top M_2^{-1}\tilde u
    =
    \frac{\|\Sigma^{1/2}w\|^2}
    {\kappa_2(\gamma)\|\Sigma w\|^2}
    =
    \frac{\kappa_3(\gamma)^{2/3}}
    {\|u^\ast\|^2\kappa_2(\gamma)}.
\]
Therefore
    \begin{equation}\label{EQ:robustnessestimatorforM2inverse}
        \left\lvert \hat{u}^T\widehat{M}_2^{-1} \hat{u}-\frac{\norm{\Sigma^{1/2}w}^2}{\norm{\Sigma w}^2\cdot\kappa_2(\gamma)}\right\rvert 
        \leq O_\delta\Paren{\frac{\sqrt{d}}{|\kappa_3(\gamma)|} \cdot L \cdot \eta^{1/2 - \delta} }
    \end{equation}

    \noindent \textbf{Conclusion for covariance estimation.}
    Combining \eqref{EQ:robustnessestimatorforM2}, \eqref{EQ:robustnessestimatorforustar},  \eqref{EQ:robustnessestimatorfortauhat} and \eqref{EQ:robustnessestimatorforM2inverse} we can apply \Cref{lem:putting-together} with
    \[
        \beta = O_\delta\Paren{\max\Set{1,\frac{1}{|\kappa_3(\gamma)|} } \cdot \sqrt{d} \cdot L \cdot \eta^{1/2-\delta}}
    \]
    Note that by \eqref{EQ:robustnessconditiononeta} we have $\beta \lesssim \kappa_2(\gamma)^{5/2}$.
    This gives us an estimator $\widehat{\Sigma}$ with error
    \[
        \norm{\Sigma^{-1/2}\widehat{\Sigma}\Sigma^{-1/2}-I_d}_F
        \leq O_\delta\Paren{\frac{1}{\kappa_2(\gamma)^{7/2}} \cdot \max\Set{1,\frac{1}{|\kappa_3(\gamma)|} } \cdot \sqrt{d} \cdot L \cdot \eta^{1/2-\delta}}.
    \]
    
    \subsection*{Mean estimation.}
    For mean estimation, given \Cref{lem:putting-together-2}, we only need to argue that we can additionally bound (\eqref{eqn:39-1} and \eqref{eqn:39-3})
    \begin{enumerate}[noitemsep]
        \item[5.] $\displaystyle \Norm{\bar{x} -\mu-\kappa_1(\gamma)\frac{\Sigma w}{\norm{\Sigma^{1/2}w}}}$
        \item[6.] $\displaystyle \Norm{\hat{u}-\frac{u^{\ast}}{\norm{u^{\ast}}}}$
    \end{enumerate}

    \noindent \textbf{Bounding 5.}
    By \Cref{LEM:robustmomentestimationforourcase}, we are able to get an estimate $\hat{\mu}_\mathrm{trunc}$ for the mean of the truncated distribution ${\mu_\mathrm{trunc} = \mu + \kappa_1(\gamma)\frac{\Sigma w}{\norm{\Sigma^{1/2}w}}}$. This estimate satisfies
    \begin{equation}\label{EQ:robustnessmeanestimation}
        \Norm{\mu_{\mathrm{trunc}} - \hat{\mu}_{\mathrm{trunc}}} \leq O_\delta\left(L \cdot \eta^{1-\delta}\right).
    \end{equation}
    Thus we can take $\bar{x} = \hat{\mu}_\mathrm{trunc} $ and we are done.

\noindent \textbf{Bounding 6.}
Given that we already bounded 2., we only need to determine the sign of $\frac{u^{\ast}}{\norm{u^{\ast}}}$.
We can do so using the third moment tensor.
Let
\[
    \tilde{u} \coloneqq \frac{u^\ast}{\norm{u^\ast}}.
\]
Since $M_3=(u^\ast)^{\otimes 3}$, we have
\[
    \Iprod{M_3,\tilde{u}^{\otimes 3}}
    =
    \Iprod{(u^\ast)^{\otimes 3},\tilde{u}^{\otimes 3}}
    =
    \Iprod{u^\ast,\tilde{u}}^3
    =
    \norm{u^\ast}^3
    >0.
\]
Let $a\in\{\pm1\}$ be the sign such that
\[
    \Norm{a\hat{u}-\tilde{u}}\leq O_\delta\Paren{\frac{\sqrt d}{|\kappa_3(\gamma)|}\cdot L\cdot \eta^{1/2-\delta}}.
\]
from \eqref{EQ:robustnessestimatorforustar}.
We can compute
\begin{align*}
    \Iprod{\widehat{M}_3, (a\hat{u})^{\otimes 3}}
    &=
    \Iprod{\widehat{M}_3-M_3,(a\hat{u})^{\otimes 3}}
    +
    \Iprod{M_3,(a\hat{u})^{\otimes 3}}\\
    &=
    \Iprod{\widehat{M}_3-M_3,(a\hat{u})^{\otimes 3}}
    +
    \Iprod{M_3,(a\hat{u})^{\otimes 3}-\tilde{u}^{\otimes 3}}
    +
    \Iprod{M_3,\tilde{u}^{\otimes 3}}.
\end{align*}
We bound the first term by \Cref{LEM:robustmomentestimationforourcase}:
\[
    \Abs{\Iprod{\widehat{M}_3-M_3,(a\hat{u})^{\otimes 3}}}
    \leq
    O_\delta\Paren{L \cdot \eta^{1/2-\delta}}.
\]
For the second term, using $M_3=(u^\ast)^{\otimes 3}$ and the bound on $\Norm{a\hat u-\tilde u}$, we get
\begin{align*}
    \Abs{\Iprod{M_3,(a\hat{u})^{\otimes 3}-\tilde{u}^{\otimes 3}}}
    &=
    \Abs{\Iprod{u^\ast,a\hat u}^3-\Iprod{u^\ast,\tilde u}^3}\\
    &\leq
    O_\delta\Paren{\norm{u^\ast}^3 \Paren{\frac{\sqrt d}{|\kappa_3(\gamma)|}\cdot L\cdot \eta^{1/2-\delta}}},
\end{align*}
where we used that $a\hat u$ and $\tilde u$ are unit vectors and $O_\delta\Paren{\frac{\sqrt d}{|\kappa_3(\gamma)|}\cdot L\cdot \eta^{1/2-\delta}} \leq O(1)$.
Therefore,
\begin{align*}
    \Iprod{\widehat{M}_3,(a\hat{u})^{\otimes 3}}
    &\geq
    \norm{u^\ast}^3
    -
    O_\delta\Paren{L\eta^{1/2-\delta}}
    -
    O\Paren{\norm{u^\ast}^3(\sqrt d/|\kappa_3(\gamma)|) L \eta^{1/2-\delta}}.
\end{align*}
Hence, since by \eqref{EQ:robustnessconditiononeta} and
\[
\Vert u^\ast \Vert^3 \gtrsim | \kappa_3(\gamma)|
\] in the precondition coordinates, we have 
\[
    O_\delta\Paren{L\eta^{1/2-\delta}}
    +
   O_\delta\Paren{\norm{u^\ast}^3 \Paren{\frac{\sqrt d}{|\kappa_3(\gamma)|}\cdot L\cdot \eta^{1/2-\delta}}}
    <
    \frac{1}{2}\norm{u^\ast}^3,
\]
and we get
\[
    \Iprod{\widehat{M}_3,(a\hat{u})^{\otimes 3}}>0.
\]
Since the tensor is order three,
\[
    \Iprod{\widehat{M}_3,(-a\hat{u})^{\otimes 3}}
    =
    -\Iprod{\widehat{M}_3,(a\hat{u})^{\otimes 3}}
    <0.
\]
Thus, we can determine the correct sign for $\hat u$ by choosing the sign for which
\[
    \Iprod{\widehat{M}_3,\hat u^{\otimes 3}} > 0.
\]
Recall that we need to find a vector parallel to $-\Sigma w$ and $u^*$ is a vector that points in the direction $-\Sigma w$.
After this sign choice, we have
\[
    \Norm{\hat u + \frac{\Sigma w}{\norm{\Sigma w}}}
    \leq
    O_\delta\Paren{\frac{\sqrt d}{|\kappa_3(\gamma)|}\cdot L\cdot \eta^{1/2-\delta}}.
\]

    \noindent \textbf{Conclusion for mean estimation.}
    Combining \eqref{EQ:robustnessmeanestimation}, \eqref{EQ:robustnessestimatorforM2}, \eqref{EQ:robustnessestimatorforustar}, \eqref{EQ:robustnessestimatorforM2inverse} and \eqref{EQ:robustnessestimatorfortauhat}, we can apply \Cref{lem:putting-together-2}
    \[
        \beta = O_\delta\Paren{\max\Set{1,\frac{1}{|\kappa_3(\gamma)|}} \cdot \sqrt{d} \cdot L \cdot \eta^{1/2-\delta}}.
    \]
    Note that by \eqref{EQ:robustnessconditiononeta} we have $\beta \lesssim \kappa_2(\gamma)^{5/2}$.
    This gives us an estimator $\hat{\mu}$ with error
    \begin{align*}
        \Norm{\Sigma^{-1/2}(\hat{\mu} - \mu)} &\leq O(\Norm{\hat{\mu} - \mu})\\
        &\leq O_\delta\Paren{\frac{1}{\kappa_2(\gamma)^{7/2}}\cdot \max\Set{1,\frac{1}{|\kappa_3(\gamma)|}} \cdot \sqrt{d} \cdot L \cdot \eta^{1/2-\delta}}.
    \end{align*}
\end{proof}

%% file: content/extensions.tex
\section{Extension to intersections of two orthogonal halfspaces}\label{sec:extensions}
In this section, we show how to extend our algorithm to truncations sets that are the intersection of two orthogonal halfspaces.
Suppose $d\geq 2$ and consider a random vector $\x$ that follows a $d$-dimensional Gaussian $\cN(\mu,I_d) $ truncated to the intersection of $H_1$ and $H_2$, where $H_1=\Set{x:\iprod{w_1,x}\leq\tau_1}$ and $H_2=\Set{x:\iprod{w_2,x}\leq\tau_2}$ are halfspaces with $\norm{w_1}=\norm{w_2}=1$ and $w_1\perp w_2$.
As before, we define $\alpha>0$ to be the probability measure of $H_1\cap H_2$ under the underlying Gaussian.
We will prove the following result.
For simplicity, we assume that the underlying Gaussian has identity covariance and we only aim to estimate the mean.
\begin{theorem}\label{thm:intersec-of2orthogonal-halfspaces}
    There is an algorithm (\Cref{algo:extensions}) that takes $n=O\Paren{\frac{d^2}{\epsilon^2}\poly\log\Paren{\frac{1}{\alpha}}+\frac{d^2}{\epsilon^4}}$ i.i.d. samples from $\cN(\mu,I_d)$ truncated to $H_1\cap H_2$ as input and efficiently computes an estimate $\hat{\mu}$, such that with probability at least 0.99 it satisfies
    \[
        \TVdist\Paren{\cN(\hat{\mu},I_d),\cN(\mu,I_d)}\leq \epsilon.
    \]
    The runtime of the algorithm is $O(T(n,d))$, where $T(n,d)$ is the time needed to multiply a $d \times n$ matrix with its transpose.
\end{theorem}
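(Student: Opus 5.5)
The plan mirrors the single-halfspace argument, now with two \truncparaword s. Since $w_1\perp w_2$ and the covariance is $I_d$, rotate so that $w_1=e_1$, $w_2=e_2$. Then $X-\mu$ has independent coordinates. The first is $Z_{\gamma_1}$ and the second is $Z_{\gamma_2}$, where $\gamma_j=\tau_j-\iprod{w_j,\mu}$. The remaining $d-2$ coordinates are standard Gaussians. This gives moment formulas analogous to \Cref{lem:first3moments}:
\[
\E X=\mu+\kappa_1(\gamma_1)w_1+\kappa_1(\gamma_2)w_2,\qquad M_2=I_d-\sum_{j=1}^2(1-\kappa_2(\gamma_j))w_jw_j^\top,\qquad M_3=\sum_{j=1}^2\kappa_3(\gamma_j)w_j^{\otimes 3}.
\]
The mass is $\alpha=\Phi(\gamma_1)\Phi(\gamma_2)$, so $\Phi(\gamma_j)\ge\alpha$ for each $j$. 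The sub-Gaussianity argument of \Cref{claim:subgaussian:M2invX} carries over verbatim, because the coordinates are independent. Hence $M_2^{-1/2}X$ is $O(\max_j\kappa_2(\gamma_j)^{-1/2})$-sub-Gaussian. In addition $M_2$ is already well-conditioned: its eigenvalues are $1,\kappa_2(\gamma_1),\kappa_2(\gamma_2)$, each at least $\Omega(1/\log(1/\alpha))$ by \Cref{claim:var-est}. So no preconditioning is needed.

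The estimator proceeds in three steps.

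\textbf{Step 1: the plane.} The algorithm computes $\bar x$, $\widehat M_2$ and $\widehat M_3$. The population $M_2$ has its non-unit eigenspace equal to $\mathrm{span}(w_1,w_2)$, but the eigengap $1-\kappa_2(\gamma_j)$ may be tiny. So I would instead recover the directions from $M_3$, which is an orthogonal rank-$2$ tensor. Contract with a Gaussian $g$ to get $R=\sum_j\kappa_3(\gamma_j)\iprod{g,w_j}w_jw_j^\top+E$. The concentration step is exactly \Cref{lem:tensor-spectral-concentration}, which gives $\norm{E}\lesssim \norm{g}\beta$.

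\textbf{Step 2: the directions.} Take the top two eigenvectors $\hat w_1,\hat w_2$ of $R$. To separate them, I need the two eigenvalues to be separated from each other and from $0$. Gaussian anti-concentration (\Cref{fact:gaussianspread}) applied to $\kappa_3(\gamma_1)\iprod{g,w_1}\pm\kappa_3(\gamma_2)\iprod{g,w_2}$ gives that, with constant probability, the gap is at least $\Omega(\min_j|\kappa_3(\gamma_j)|)$. Davis--Kahan then yields $\norm{\hat w_j\mp w_j}\lesssim \sqrt d\,\beta/\min_j|\kappa_3(\gamma_j)|$. This replaces \Cref{claim:best_rank-1_approximation,claim:close-upto-sign} in \Cref{lem:jenrich}. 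As in \Cref{algo:main}, the sign of each $\hat w_j$ is fixed by the sign of $\iprod{\hat w_j^{\otimes3},\widehat M_3}$, which is negative for the correct orientation because $\kappa_3<0$.

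\textbf{Step 3: the parameters.} For each $j$, compute $\hat w_j^\top\widehat M_2^{-1}\hat w_j\approx 1/\kappa_2(\gamma_j)$ and $\iprod{\hat w_j^{\otimes 3},\widehat M_3}\approx\kappa_3(\gamma_j)$. Their product $\iprod{\hat w_j^{\otimes3},\widehat M_3}(\hat w_j^\top\widehat M_2^{-1}\hat w_j)^{3/2}\approx\sk(\gamma_j)$. The cross term $\kappa_3(\gamma_{3-j})\iprod{w_{3-j},\hat w_j}^3$ is of cubic order in the direction error, so it is negligible. Set $\hat\gamma_j=\sk^{-1}(\cdot)$ using \Cref{lemma:monotonicity-of-skewness}, and output
\[
\hat\mu=\bar x-\kappa_1(\hat\gamma_1)\hat w_1-\kappa_1(\hat\gamma_2)\hat w_2 .
\]
The error analysis copies \Cref{lem:threshold-estimation:closeinskew,lem:putting-together-2}. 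The difference $\kappa_1(\hat\gamma_j)-\kappa_1(\gamma_j)$ is controlled by combining \Cref{CLAIM:ddtskewoverddtkappa2} with \Cref{CLAIM:ddtkappa2overddtkappa1}, which costs $\poly(1/\kappa_2)$. The term $|\kappa_1(\gamma_j)|\,\norm{\hat w_j-w_j}$ is handled by \Cref{claim:kappa1kappa2bounded}. For identity covariance, closeness in TV follows from $\norm{\hat\mu-\mu}\lesssim\varepsilon$.

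\textbf{Sample complexity and the main obstacle.} The bound is $d^2\poly(1/\kappa_2,1/\min_j|\kappa_3(\gamma_j)|)/\varepsilon^2$. The main obstacle is the regime where one halfspace is barely truncating, say $\Phi(\gamma_2)\to1$, so that $\kappa_3(\gamma_2)\to0$ and the eigengap vanishes. I would handle it with the three-case split from the proof of \Cref{thm:main}, applied per halfspace.

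\emph{Case $\Phi(\gamma_2)\ge 1-\varepsilon/\log(1/\varepsilon)$.} The shift $\kappa_1(\gamma_2)w_2$ has norm $O(\varepsilon)$, and the algorithm's correction along any nearly degenerate direction is also $O(\varepsilon)$, by the argument of \Cref{lem:unifiedalphacloseto1} via \Cref{claim:skew-asymptotic}. In that case I would show that an eigenvector with tiny eigenvalue yields $\hat\gamma$ large and hence a negligible correction. This holds regardless of how well that direction was estimated, because it only enters through $\kappa_1(\hat\gamma)\approx0$.

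\emph{Intermediate case.} Here $|\kappa_3(\gamma_2)|\ge\Omega(\varepsilon)$, which gives the $d^2/\varepsilon^4$ term.

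Within this split, the delicate part is justifying the eigenvector separation when $|\kappa_3(\gamma_1)|$ and $|\kappa_3(\gamma_2)|$ are of very different sizes. I would treat the large one by the rank-one argument of \Cref{lem:jenrich}, treating the small component as part of the noise. The runtime is dominated by forming $\widehat M_2$ and $R$, which costs $O(T(n,d))$.
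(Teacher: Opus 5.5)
Your skeleton (moment formulas, random contraction of $\widehat M_3$, top two eigenvectors, sign fixing via $\iprod{\hat w_\ell^{\otimes 3},\widehat M_3}$, per-halfspace case split) matches the paper's. However, Step 2 as you state it is too weak, and the patch you sketch for the unbalanced case does not work.

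With the eigengap bounded only by $\Omega(\min_j|\kappa_3(\gamma_j)|)$, the accuracy of a strongly truncated direction $w_1$ is dictated by the other halfspace. If $|\kappa_3(\gamma_2)|\approx\varepsilon$ (your intermediate regime), you still need $\Norm{\hat w_1-w_1}\le\varepsilon\cdot\poly(\kappa_2(\gamma_1))$. With $\norm{E}\lesssim d/\sqrt n$ this costs $n\gtrsim d^2\poly\log(1/\alpha)/\varepsilon^4$, not the claimed $\frac{d^2}{\varepsilon^2}\poly\log(1/\alpha)+\frac{d^2}{\varepsilon^4}$.

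Treating $\kappa_3(\gamma_2)\iprod{g,w_2}w_2w_2^\top$ as noise in the argument of \Cref{lem:jenrich} does not rescue the barely-truncated regime either. \Cref{claim:best_rank-1_approximation,claim:close-upto-sign} do not see orthogonality, so they only give
\[
\min\Set{\Norm{\hat w_1-w_1},\Norm{\hat w_1+w_1}}\lesssim\frac{\norm{E}+\abs{\kappa_3(\gamma_2)\iprod{g,w_2}}}{\abs{\kappa_3(\gamma_1)\iprod{g,w_1}}},
\]
and the second term does not shrink with $n$. At your threshold $1-\Phi(\gamma_2)\approx\varepsilon/\log(1/\varepsilon)$ one has $|\kappa_3(\gamma_2)|\approx\varepsilon\sqrt{\log(1/\varepsilon)}$. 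So even for constant $\gamma_1$, the certified mean error is of order $\varepsilon\sqrt{\log(1/\varepsilon)}$ rather than $O(\varepsilon)$. Lowering the threshold just enlarges the intermediate regime above. Your claim that the correction along the nearly degenerate eigenvector is negligible ``regardless of how well that direction was estimated'' also quietly needs $\iprod{w_1,\hat w_2}$ small. That again follows only from $\hat w_2\perp\hat w_1$ plus accuracy of $\hat w_1$.

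The missing observation is that the Davis--Kahan gap for $w_\ell$ depends only on $\kappa_3(\gamma_\ell)$. Write $\lambda_\ell=\kappa_3(\gamma_\ell)\iprod{g,w_\ell}$. The gap in \Cref{thm:davis-kahan} is $\min\Set{|\lambda_\ell|,|\lambda_1-\lambda_2|}$; only the difference enters, not $\lambda_1+\lambda_2$. Now $\lambda_1-\lambda_2=\iprod{g,\kappa_3(\gamma_1)w_1-\kappa_3(\gamma_2)w_2}$ is a centered Gaussian of variance $\kappa_3(\gamma_1)^2+\kappa_3(\gamma_2)^2\ge\kappa_3(\gamma_\ell)^2$. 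By \Cref{fact:gaussianspread}, the gap is therefore $\Omega(|\kappa_3(\gamma_\ell)|)$ with constant probability. If moreover $\norm{E}\le\frac12|\lambda_\ell|$, the perturbed eigenvalue for $w_\ell$ dominates all perturbed zero eigenvalues in absolute value, so its eigenvector is one of the two selected. This gives $\Norm{\hat w_\ell-w_\ell}\lesssim d/(\sqrt n\,|\kappa_3(\gamma_\ell)|)$ independently of the other halfspace. The case split then runs separately for each halfspace with exactly the stated sample complexity.

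Note also that the theorem is about \Cref{algo:extensions}, which sets $\hat\gamma_\ell=\kappa_2^{-1}(\hat w_\ell^\top\widehat M_2\hat w_\ell)$ and uses \Cref{CLAIM:ddtkappa2overddtkappa1}, not $\sk^{-1}$. This is legitimate because $w_\ell^\top M_2w_\ell=\kappa_2(\gamma_\ell)$ when $\Sigma=I_d$. It also makes the barely-truncated case immediate. The quantity $\hat w_\ell^\top M_2\hat w_\ell$ is a convex combination of $1$, $\kappa_2(\gamma_\ell)$ and $\kappa_2(\gamma_{3-\ell})$. It puts weight at most $\beta^2$ on the last one whenever $w_{3-\ell}$ is strongly truncated and hence $\beta$-accurately estimated. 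So $\kappa_2(\hat\gamma_\ell)\ge1-O(\beta)$, and \Cref{claim:boundkappa1bykappa2} gives $|\kappa_1(\hat\gamma_\ell)|=O(\beta)$.
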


\begin{algorithm}[!ht]
\caption{Algorithm for estimating the mean of an identity-covariance Gaussian truncated by two orthogonal halfspaces}
\label{algo:extensions}
\KwIn{Dimension $d$, sample size $n$, samples $\{x_i\}_{i=1}^{n}$, target accuracy $\varepsilon$.}
\KwOut{Estimated mean $\hat{\mu} \in \mathbb{R}^d$.}

\BlankLine

1. Compute sample mean: $\bar{x} = \frac{1}{n}\sum_{i=1}^{n} x_i$.\;

2. Compute sample second central moment: $\widehat{M}_2 = \frac{1}{n}\sum_{i=1}^{n}(x_i - \bar{x})^{\otimes 2}$.\;

\BlankLine

3. Draw a standard Gaussian vector: $g \sim \mathcal{N}(0, I_d)$.\;

\BlankLine

4. Compute the random contraction: $R = (I_d\otimes I_d\otimes g^\top) \cdot \widehat{M}_3 = \frac{1}{n}\sum_{i=1}^{n}(x_i-\bar{x})^{\otimes 2}\cdot \iprod{g,x_i-\bar{x}}$.

\BlankLine

5. Compute $\hat{w}_1$ and $\hat{w}_2$, the eigenvectors of $R$ corresponding to the two largest eigenvalue in absolute value (with unit norm).

\BlankLine

6. For $\ell \in \{1,2\}$, flip the sign of $\hat{w}_\ell$ if $\Iprod{\hat{w}_\ell^{\otimes 3}, \widehat{M}_3}>0$.

7. For $\ell \in \{1,2\}$, estimate the relative truncation parameters $\gamma_\ell$: $\hat{\gamma_\ell}=\kappa_2^{-1}\Paren{\hat{w}_\ell^\top \widehat{M}_2 \hat{w}_\ell}$.

\BlankLine

8. Estimate the mean of the Gaussian: $\hat{\mu}=\bar{x} - \kappa_1(\hat{\gamma}_1) \cdot \hat{w}_1 - \kappa_1(\hat{\gamma}_2) \cdot \hat{w}_2$.

\BlankLine

9. Return estimate $\hat{\mu}$.

\end{algorithm}

\begin{remark}\label{rem:furtherextensions}
Before proving \Cref{thm:intersec-of2orthogonal-halfspaces}, we want to briefly discuss potential further extensions:
\begin{enumerate}[noitemsep]
    \item While we only prove our extension for mean estimation, it seems plausible that one can also do covariance estimation. In this case, we would want that $w_1$ and $w_2$ are `$\Sigma$-orthogonal', meaning that $\Sigma w_1$ and $\Sigma w_2$ are orthogonal. One thing that is unclear however is whether and how we could do preconditioning in this case to avoid a dependence on the condition number (as preconditioning need not preserve orthogonality).
    \item Finally, the correction terms in each direction are (almost) independent of each other due to orthogonality. This should enable us to generalize our estimator to an intersection of $k$ pairwise orthogonal halfspaces (for $k \leq d$). The required sample and time complexity in this case would depend on the minimum eigenvalue gap after the random contraction, which asymptotically could be as small as $\widetilde{O}\Paren{\frac{1}{k^2}}$ and the sample complexity should increase by a factor that is polynomial in $k$. If there is non-trivial truncation in each direction, we would thus expect a sample complexity of $O(d^2/\varepsilon^2) \cdot \polylog(1/\alpha) \cdot \poly(k)$.
\end{enumerate}
\end{remark}

\subsection{Moments}
As for our main result, we first compute the first three moments of $\cN(\mu,I_d)|_{H_1 \cap H_2}$.
The values $\gamma_1,\gamma_2$ will be defined analogously to the \truncparaword{} in the main result.
\begin{lemma}\label{LEM:extensions:moments}
    For a random vector $\x$ defined as above, there are real numbers $\gamma_1,\gamma_2$ such that
    \begin{align*}
        \E\x&=\mu+\kappa_1(\gamma_1) \cdot w_1+\kappa_1(\gamma_2) \cdot w_2\\
        M_2\coloneqq\E(\x-\E\x)^{\otimes 2}&=I_d-(1-\kappa_2(\gamma_1)) \cdot w_1w_1^\top-(1-\kappa_2(\gamma_2)) \cdot w_2w_2^\top\\
        M_3\coloneqq\E(\x-\E\x)^{\otimes 3}&=\kappa_3(\gamma_1) \cdot \Paren{w_1}^{\otimes 3}+\kappa_3(\gamma_2) \cdot \Paren{w_2}^{\otimes 3}
    \end{align*}
    Moreover, we have
    \[
        \alpha=\Phi(\gamma_1)\cdot\Phi(\gamma_2).
    \]
\end{lemma}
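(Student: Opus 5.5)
The plan is to reduce to the standard orthonormal frame, exactly as in the proof of \Cref{lem:first3moments}. Since $w_1\perp w_2$ are unit vectors, complete them to an orthonormal basis $w_1,w_2,v_3,\dots,v_d$ and let $U$ be the orthogonal matrix with columns $w_1,w_2,v_3,\dots,v_d$. Define $Y \coloneqq U^\top(X-\mu)$. Because $\cN(\mu,I_d)$ is rotation invariant, $Y$ is $\cN(0,I_d)$ conditioned on the event $\{\iprod{w_1,\mu}+y_1\leq\tau_1\}\cap\{\iprod{w_2,\mu}+y_2\leq\tau_2\}$, i.e.\ on $\{y_1\leq\gamma_1\}\cap\{y_2\leq\gamma_2\}$ with
\[
\gamma_\ell \coloneqq \tau_\ell-\iprod{w_\ell,\mu},\qquad \ell\in\{1,2\}.
\]
This is the analogue of the relative truncation parameter; here $\norm{\Sigma^{1/2}w_\ell}=1$. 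Alternatively, one can apply \Cref{claim:affine-transform} twice, once per halfspace, since that claim is stated for a single halfspace.

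The key observation is that the truncation set is a product set in the coordinates of $Y$, and the untruncated density $\prod_j\phi(y_j)$ factorizes. Hence the conditioned density is $\frac{\phi(y_1)\ind{y_1\leq\gamma_1}}{\Phi(\gamma_1)}\cdot\frac{\phi(y_2)\ind{y_2\leq\gamma_2}}{\Phi(\gamma_2)}\cdot\prod_{j\geq3}\phi(y_j)$. So the coordinates $Y_1,\dots,Y_d$ are independent, with $Y_1\sim Z_{\gamma_1}$, $Y_2\sim Z_{\gamma_2}$, and $Y_j\sim\cN(0,1)$ for $j\geq3$. The normalizing constant gives $\alpha=\Pr_{y\sim\cN(0,I_d)}[y_1\leq\gamma_1,y_2\leq\gamma_2]=\Phi(\gamma_1)\Phi(\gamma_2)$.

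Next we compute the moments of $Y$ using independence. The mean is $\E Y=\kappa_1(\gamma_1)e_1+\kappa_1(\gamma_2)e_2$. Centered coordinates are independent and mean zero, so the covariance is diagonal with entries $\kappa_2(\gamma_1),\kappa_2(\gamma_2),1,\dots,1$; that is, it equals $I_d-(1-\kappa_2(\gamma_1))e_1e_1^\top-(1-\kappa_2(\gamma_2))e_2e_2^\top$. For the third central moment, any entry $\E[\bar Y_i\bar Y_j\bar Y_k]$ in which some index appears exactly once vanishes, by independence and zero mean. So only the diagonal entries $i=j=k$ survive. These equal $\kappa_3(\gamma_1)$ for $i=1$, $\kappa_3(\gamma_2)$ for $i=2$, and $0$ for standard Gaussian coordinates. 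This gives $\kappa_3(\gamma_1)e_1^{\otimes3}+\kappa_3(\gamma_2)e_2^{\otimes3}$.

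Finally, we transfer back via $X=\mu+UY$. We get $\E X=\mu+U\E Y$, $M_2=U(\cdot)U^\top$, and $M_3=U^{\otimes3}(\cdot)$, and using $Ue_\ell=w_\ell$ and $UU^\top=I_d$ yields the three claimed formulas. No step is a real obstacle. The only point needing care is the vanishing of the off-diagonal third-moment entries, including mixed terms such as $\E[\bar Y_1^2\bar Y_2]=\kappa_2(\gamma_1)\cdot 0$. This holds precisely because orthogonality makes the two truncations act on independent coordinates.
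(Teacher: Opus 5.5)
Your proposal is correct and follows the paper's proof essentially step for step: rotate by an orthogonal $U$ with $Ue_\ell = w_\ell$, and observe that $Y = U^\top(X-\mu)$ is a standard Gaussian truncated to $\{y_1\le\gamma_1\}\cap\{y_2\le\gamma_2\}$ with $\gamma_\ell=\tau_\ell-\iprod{w_\ell,\mu}$; then read off the moments and $\alpha=\Phi(\gamma_1)\Phi(\gamma_2)$ from coordinate independence, and rotate back. Your explicit check that the mixed third-moment entries vanish spells out what the paper delegates to ``as in \Cref{lem:first3moments}''. Your aside about applying \Cref{claim:affine-transform} twice does not literally work, since that claim is stated only for a single halfspace, but your direct density argument does not depend on it.
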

\begin{proof}
    For $d\geq 2$, since $w_1,w_2$ are perpendicular unit vectors, there exists an orthonormal matrix $U$ such that $U e_1=w_1$ and $U e_2=w_2$.
    As before, we consider $\y=U^\top\paren{\x-\mu}$.
    This random variable follows a standard Gaussian $\cN(0,I_n)$ truncated as follows:
    \begin{align*}
        w_1^\top X \leq \tau_1 &\quad\Longleftrightarrow\quad (U^\top w_1)^\top Y \leq \tau_1 - w_1^\top \mu\\
        w_2^\top X \leq \tau_2 &\quad\Longleftrightarrow\quad (U^\top w_2)^\top Y \leq \tau_2 - w_2^\top \mu.
    \end{align*}
    Since $U e_1=w_1$ and $U e_2=w_2$, we get a simple characterization of $\y$: Defining $\gamma_1=\tau_1-\iprod{w_1,\mu}$ and $\gamma_2=\tau_2-\iprod{w_2,\mu}$, $\y_1$ and $\y_2$ are truncated standard Gaussian with thresholds $\gamma_1$ and $\gamma_2$, all other coordinates are standard Gaussian and all coordinates are mutually independent.
    Thus, as in \Cref{lem:first3moments}, we get that
    \begin{align*}
        \E\y&=\kappa_1(\gamma_1)e_1+\kappa_1(\gamma_2)e_2,\\
        \E(\y-\E\y)^{\otimes 2}&=I_d-(1-\kappa_2(\gamma_1))e_1e_1^\top-(1-\kappa_2(\gamma_2))e_2e_2^\top,\\
        \E(\y-\E\y)^{\otimes 3}&=\kappa_3(\gamma_1)e_1^{\otimes 3}+\kappa_3(\gamma_2)e_2^{\otimes 3}.
    \end{align*}
    Using this, we can compute the moments of $\x$ as follows.
    First, the mean of $\x$ is
    \[
        \E\x=U\E \y+\mu =\mu+\kappa_1(\gamma_1)\cdot Ue_1+\kappa_1(\gamma_2)\cdot Ue_2 =\mu+\kappa_1(\gamma_1) \cdot w_1+\kappa_1(\gamma_2)\cdot w_2.
    \]
    The covariance of $\x$ is
    \begin{align*}
        M_2=\E(\x-\E\x)^{\otimes 2}&=U^{\otimes 2}\E(\y-\E\y)^{\otimes 2}\\
        &=U\Paren{I_d-(1-\kappa_2(\gamma_1))\cdot e_1e_1^\top-(1-\kappa_2(\gamma_2)) \cdot e_2e_2^\top}U^\top\\
        &=I_d-(1-\kappa_2(\gamma_1))\cdot w_1w_1^\top-(1-\kappa_2(\gamma_2))\cdot w_2w_2^\top.
    \end{align*}
    And the third central moment of $\x$ equals
    \begin{align*}
        M_3=\E(\x-\E\x)^{\otimes 3}&=U^{\otimes 3}\E(\y-\E\y)^{\otimes 3}\\
        &=U^{\otimes 3}\Paren{\kappa_3(\gamma_1)\cdot e_1^{\otimes 3}+\kappa_3(\gamma_2)\cdot e_2^{\otimes 3} }\\
        &=\kappa_3(\gamma_1)\cdot w_1^{\otimes 3}+\kappa_3(\gamma_2)\cdot w_2^{\otimes 3}.
    \end{align*}
    Finally, define $H_1'=\Set{y\in\R^d\mid y_1\leq\gamma_1}$ and $H_2'=\Set{y\in\R^d\mid y_2\leq\gamma_2}$.
    Then, we have that
    \begin{align*}
        \alpha&=\E_{X\sim\normal{\mu,I_d}}\Brac{\ind{X\in H_1\cap H_2}}\\
        &=\E_{Y\sim\normal{0,I_d}}\Brac{\ind{Y\in H_1'\cap H_2'}}\\
        &=\Pr_{Y_1 \sim \normal{0,1}}\Brac{Y_1 \leq\gamma_1}\cdot\Pr_{Y_2 \sim \normal{0,1}}\Brac{Y_2 \leq\gamma_2}\\
        &=\Phi(\gamma_1)\cdot\Phi(\gamma_2),
    \end{align*}
    which completes the proof.    
\end{proof}

Since we assume that the covariance of the underlying Gaussian is the identity, we don't need to precondition the samples.
Instead, we directly get that
\begin{equation}\label{eqn:extension:preconditioning}
    \min\{\kappa_2(\gamma_1), \kappa_2(\gamma_2)\} \preceq M_2\preceq I_d.
\end{equation}

\subsection{Relative truncation parameter estimation}
By orthogonality between $w_1,w_2$, we could isolate them from the second central moment as long as $\gamma_1,\gamma_2$ are \textit{constantly-separated}. However, this is not true in general. It might be that $\gamma_1=\gamma_2$, which makes it information-theoretically impossible to identify $w_1,w_2$ from $M_2$ already when~$d=2$.
Therefore, as in the case of a single halfspace, we use the third central moment $M_3$.
Note that with probability $1$, we have that $\kappa_3(\gamma_1)\cdot \iprod{g,w_1}$ and $\kappa_3(\gamma_2)\cdot \iprod{g,w_2}$ are distinct.
Furthermore, we will be able to bound the distance between these eigenvalues using \Cref{fact:gaussianspread}.
We will use the following theorem from matrix perturbation theory (see also \cite[Corollary III.2.6 and Theorem VII.3.2]{Bhatia:matrixanalysis}).
\begin{theorem}[\cite{davis-kahan1970,Weyl:eigenvalueperturbation}]\label{thm:davis-kahan}
    Let $A, E \in \R^{d \times d}$ be symmetric matrices. Consider the eigenvalues ${\lambda_1 \geq \lambda_2 \geq \ldots \geq \lambda_d}$ of $A$ with associated orthonormal eigenvectors $v_i$ and the eigenvalues $\tilde{\lambda}_1 \geq \tilde{\lambda}_2 \geq \ldots \tilde{\lambda}_d$ and orthonormal eigenvectors $\tilde{v}_i$ of $\tilde{A} = A + E$. Define $\delta_i = \min_{j \neq i} \abs{\lambda_j - \lambda_i}$. Then, if $\delta_i > 0$, we have
    \[
        \min\Set{\norm{\tilde{v}_i+v_i}_2,\norm{\tilde{v}_i-v_i}_2}\leq C\frac{\Norm{E}}{\delta_i}
    \]
    for an absolute constant $C > 0$.
    Furthermore, we have that
    \[
        \abs{\lambda_i - \tilde{\lambda}_i} \leq \Norm{E}.
    \]
\end{theorem}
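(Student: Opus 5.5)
The statement has two parts. I would prove the eigenvalue bound first, since the eigenvector bound uses it.

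\textbf{Eigenvalue bound (Weyl).} I would use the Courant--Fischer characterization
\[
\lambda_i(A)=\max_{\dim V=i}\ \min_{x\in V,\ \norm{x}=1} x^\top A x .
\]
For every unit vector $x$ we have $x^\top(A+E)x \leq x^\top A x+\norm{E}$. Taking the max-min over subspaces gives $\tilde{\lambda}_i\leq\lambda_i+\norm{E}$. Exchanging the roles of $A$ and $\tilde{A}=A+E$ (with perturbation $-E$) gives the reverse inequality, so $\abs{\lambda_i-\tilde{\lambda}_i}\leq\norm{E}$.

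\textbf{Eigenvector bound.} Decompose $\tilde{v}_i=c\,v_i+r$, where $c=\iprod{\tilde v_i,v_i}$ and $r\perp v_i$, so that $c^2+\norm{r}^2=1$.
\begin{enumerate}[noitemsep]
\item From $(A+E)\tilde v_i=\tilde\lambda_i\tilde v_i$ we get $(A-\tilde\lambda_i I_d)\tilde v_i=-E\tilde v_i$.
\item Let $P$ be the orthogonal projection onto $v_i^\perp$. Since $A$ is symmetric and $v_i$ is an eigenvector, $A$ leaves $v_i^\perp$ invariant and commutes with $P$. Hence $P(A-\tilde\lambda_i I_d)\tilde v_i=(A-\tilde\lambda_i I_d)r$, and therefore $\norm{(A-\tilde\lambda_i I_d)r}\leq\norm{E}$.
\item On $v_i^\perp$ the matrix $A$ has eigenvalues $\{\lambda_j\}_{j\neq i}$. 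By the Weyl bound, $\abs{\lambda_j-\tilde\lambda_i}\geq\abs{\lambda_j-\lambda_i}-\norm{E}\geq\delta_i-\norm{E}$ for $j\neq i$.
\item Consequently, whenever $\norm{E}\leq\delta_i/2$, we obtain $\norm{r}\leq\norm{E}/(\delta_i-\norm{E})\leq 2\norm{E}/\delta_i$.
\end{enumerate}
To convert this into the stated bound, pick the sign $s=\mathrm{sign}(c)$. Then
\[
\norm{\tilde v_i-s v_i}^2=2-2\abs{c}\leq 2(1-c^2)=2\norm{r}^2,
\]
using $\abs{c}\leq 1$. This gives $\min\Set{\norm{\tilde v_i\pm v_i}}\leq 2\sqrt2\,\norm{E}/\delta_i$.

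\textbf{Large-perturbation case.} If $\norm{E}>\delta_i/2$, the bound is trivial. Choosing the sign with $\iprod{\tilde v_i,s v_i}\geq0$ gives $\min\Set{\norm{\tilde v_i\pm v_i}}\leq\sqrt2<2\sqrt2\,\norm{E}/\delta_i$. So $C=2\sqrt2$ works in all cases.

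\textbf{Main obstacle.} The only delicate point is step 3. The gap must be measured against the \emph{perturbed} eigenvalue $\tilde\lambda_i$, not $\lambda_i$. This is why the Weyl bound is proved first, and why the argument needs the case split at $\norm{E}=\delta_i/2$ to absorb the loss of $\norm{E}$ in the gap. Everything else is routine linear algebra.
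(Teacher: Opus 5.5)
Your proof is correct, and it differs from the paper, which gives no proof of this statement and only cites Weyl's inequality and the Davis--Kahan $\sin\Theta$ theorem (via Bhatia, Corollary III.2.6 and Theorem VII.3.2).

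\textbf{Eigenvalue part.} Your Courant--Fischer argument is the standard proof of Weyl's inequality.

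\textbf{Eigenvector part.} You give a one-vector version of the Davis--Kahan argument. You project the residual equation onto $v_i^\perp$ to get $(A-\tilde\lambda_i I_d)r=-PE\tilde v_i$, then invert $A-\tilde\lambda_i I_d$ on the invariant subspace $v_i^\perp$. This replaces the Sylvester-equation estimate that the general theorem uses for spectral subspaces.

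\textbf{Comparison with the citation.} The classical $\sin\Theta$ theorem measures the separation between part of the spectrum of $A$ and part of the spectrum of $\tilde A$. To reach the form used in the paper, where the gap $\delta_i$ involves $A$ alone, you must combine it with Weyl and split on whether $\norm{E}\le\delta_i/2$. That is exactly your Steps 3--4 plus your large-perturbation case.

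\textbf{What each route buys.} Your route needs no operator-theoretic machinery and gives the explicit constant $C=2\sqrt2$. It also applies to any unit eigenvector $\tilde v_i$ of $\tilde\lambda_i$, so you need no simplicity assumption on $\tilde\lambda_i$; that eigenvalue can be repeated once $\norm{E}\ge\delta_i/2$. The cited theorems buy generality (spectral subspaces, unitarily invariant norms) that the paper never uses.

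\textbf{One nit.} $\mathrm{sign}(c)$ is undefined when $c=0$. In that case either sign works, since then $\norm{\tilde v_i\mp v_i}^2=2=2\norm{r}^2$.
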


We now show the following lemma, which is analogous to \Cref{SEC:fullproof:directionofsigmaw}.
We note that after step~6 of \Cref{algo:extensions}, the estimates $\hat{w}_1$ and $\hat{w}_2$ are exactly the ones from \Cref{lem:learning-orthogonal-halfspaces}.
\begin{lemma}\label{lem:learning-orthogonal-halfspaces}
    Let $\beta>0$.
    Given $n$ i.i.d. samples from $\cN\Paren{\mu,\Sigma}$ truncated to $H_1\cap H_2$, we can compute orthogonal unit vectors $\hat{w}_1,\hat{w}_2$ such that the following hold for $\ell \in \{1,2\}$: If 
    \[
        n = O\Paren{\frac{d^2}{\beta^2 \kappa_3(\gamma_\ell)^2} + \frac{d^2}{\kappa_3(\gamma_\ell)^4}},
    \]
    then we have with high probability that $\Norm{\hat{w}_\ell - w_\ell} \leq \beta$.
\end{lemma}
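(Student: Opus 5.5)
We follow the random-contraction approach of \Cref{lem:jenrich}, now with the rank-two tensor from \Cref{LEM:extensions:moments}: $M_3=\kappa_3(\gamma_1)w_1^{\otimes 3}+\kappa_3(\gamma_2)w_2^{\otimes 3}$. We draw $g\sim\cN(0,I_d)$ and form $R=(I_d\otimes I_d\otimes g^\top)\widehat{M}_3$. Its population version is
\[
A=(I_d\otimes I_d\otimes g^\top)M_3=\lambda_1 w_1w_1^\top+\lambda_2 w_2w_2^\top,\qquad \lambda_\ell=\kappa_3(\gamma_\ell)\iprod{g,w_\ell}.
\]
Because $w_1\perp w_2$, $A$ has eigenvectors $w_1,w_2$ with eigenvalues $\lambda_1,\lambda_2$, and eigenvalue $0$ on the orthogonal complement. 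We take $\hat w_1,\hat w_2$ to be the top two eigenvectors (in absolute value) of $R$. They are orthogonal automatically, since $R$ is symmetric. The proof then applies \Cref{thm:davis-kahan} with $E=R-A$, and fixes the signs afterwards.

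\textbf{Step 1: bound the perturbation.} Exactly as in \Cref{lem:jenrich}, we have $\norm{E}\le\norm{g}\cdot\sup_{\norm v=1}\iprod{v^{\otimes3},\widehat M_3-M_3}$. With high probability $\norm g\le3\sqrt d$. Since the covariance is the identity, we can use \eqref{eqn:extension:preconditioning} in place of the preconditioner and get sub-Gaussianity directly (the projections are sums of independent $1$-sub-Gaussian coordinates). So \Cref{lem:tensor-spectral-concentration} carries over verbatim: it gives $\sup_v\iprod{v^{\otimes3},\widehat M_3-M_3}\lesssim\sqrt{d/n}+d^{3/2}/n$ with no $\kappa_2$ factors. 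The net argument and \Cref{claim:sample-mean-noise} do not use the rank-one structure, so they also go through. Hence $\norm E\lesssim d/\sqrt n+d^2/n$.

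\textbf{Step 2: the eigengap for $w_\ell$.} This is the main obstacle. The eigenvalue $\lambda_\ell$ must be separated from both $0$ and $\lambda_{3-\ell}$, and it must remain among the top two in absolute value after perturbation. For the first separation, \Cref{fact:gaussianspread} gives $|\iprod{g,w_\ell}|\ge c$ with probability $1-O(c)$, so $|\lambda_\ell|\gtrsim|\kappa_3(\gamma_\ell)|$. For the second, $\lambda_1-\lambda_2=\kappa_3(\gamma_1)g_1-\kappa_3(\gamma_2)g_2$, where $g_\ell=\iprod{g,w_\ell}$ are independent standard Gaussians. This is Gaussian with standard deviation at least $|\kappa_3(\gamma_\ell)|$, so anticoncentration again gives $|\lambda_1-\lambda_2|\gtrsim|\kappa_3(\gamma_\ell)|$ with constant probability. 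A union bound then yields $\delta_\ell\gtrsim|\kappa_3(\gamma_\ell)|$, with a small constant.

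Two subtleties remain. First, the ordering of eigenvalues in \Cref{thm:davis-kahan} is by signed value, while the algorithm orders by absolute value. Weyl's bound $|\lambda_i-\tilde\lambda_i|\le\norm E$ handles this: once $\norm E\le|\kappa_3(\gamma_\ell)|/10$, the eigenvalue near $\lambda_\ell$ is isolated and lies strictly outside the cluster of near-zero eigenvalues. Second, if $\kappa_3(\gamma_{3-\ell})$ is tiny, the other eigenvalue may merge with the zero cluster. This does not affect $w_\ell$, because we only need to track the eigenvector belonging to the isolated eigenvalue near $\lambda_\ell$. We then identify which of $\hat w_1,\hat w_2$ corresponds to it, namely the one whose eigenvalue is within $\norm E$ of $\lambda_\ell$. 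This identification is where the condition $n\gtrsim d^2/\kappa_3(\gamma_\ell)^4$ enters: it makes $\norm E\lesssim d/\sqrt n$ at most a small multiple of $|\kappa_3(\gamma_\ell)|$.

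\textbf{Step 3: conclude and fix signs.} Applying \Cref{thm:davis-kahan} gives $\min\{\norm{\hat w_\ell\pm w_\ell}\}\lesssim\norm E/|\kappa_3(\gamma_\ell)|\lesssim d/(\sqrt n\,|\kappa_3(\gamma_\ell)|)$, which is at most $\beta$ when $n\gtrsim d^2/(\beta^2\kappa_3(\gamma_\ell)^2)$. For the sign we follow step 6 of \Cref{algo:extensions}. We have
\[
\iprod{\hat w_\ell^{\otimes3},\widehat M_3}=\kappa_3(\gamma_\ell)\iprod{\hat w_\ell,w_\ell}^3+\kappa_3(\gamma_{3-\ell})\iprod{\hat w_\ell,w_{3-\ell}}^3+O(\norm E/\sqrt d).
\]
In the cross term, $|\iprod{\hat w_\ell,w_{3-\ell}}|\le\norm{\hat w_\ell\mp w_\ell}$ is small, and $|\kappa_3|=O(1)$ by \Cref{claim:kappa1kappa3bounded}. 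Since $\kappa_3<0$ by \Cref{claim:monotonicity-of-kappa2}, and the same sample-size condition makes all error terms smaller than $|\kappa_3(\gamma_\ell)|/2$, the sign of this inner product detects the orientation of $\hat w_\ell$ exactly as in the proof of \Cref{lem:unified}. This gives $\norm{\hat w_\ell-w_\ell}\le\beta$.
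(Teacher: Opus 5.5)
Your proposal is correct and follows the paper's proof essentially step by step. It uses the same random contraction $R$, the perturbation bound $\norm{E}\lesssim d/\sqrt n+d^2/n$ from the injective-norm concentration with an $O(1)$ sub-Gaussian constant, anticoncentration to get $\delta_\ell\gtrsim|\kappa_3(\gamma_\ell)|$, \Cref{thm:davis-kahan}, and the sign test via $\iprod{\hat w_\ell^{\otimes 3},\widehat M_3}$; your exact expansion $\kappa_3(\gamma_\ell)\iprod{\hat w_\ell,w_\ell}^3+\kappa_3(\gamma_{3-\ell})\iprod{\hat w_\ell,w_{3-\ell}}^3$ in the sign test is cleaner than the paper's bound there.

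One small slip: bound $\iprod{\hat w_\ell^{\otimes 3},\widehat M_3-M_3}$ directly by $\sqrt{d/n}+d^{3/2}/n$ rather than writing $O(\norm{E}/\sqrt d)$, because $\norm{E}\le 3\sqrt d\,\sup_{\norm v=1}\iprod{v^{\otimes 3},\widehat M_3-M_3}$ only gives the inequality in the other direction.
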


\begin{proof}
    We want to use \Cref{thm:davis-kahan} for the matrix $A = (I_d\otimes I_d\otimes g^\top)M_3$ and $\tilde{A} = (I_d\otimes I_d\otimes g^\top)\widehat{M}_3$.
    Let the spectral decomposition of $(I_d\otimes I_d\otimes g^\top)\widehat{M}_3$ be
    \[
        (I_d\otimes I_d\otimes g^\top)\widehat{M}_3=\sum_{j=1}^{d}\tilde{\lambda}_j \tilde{v}_j\tilde{v}_j^\top,
    \]
    where $\tilde{\lambda}_1\geq\tilde{\lambda}_2\geq\cdots\geq\tilde{\lambda}_d$.
    Since this matrix is symmetric, we can assume that the $\tilde{v}_j$ are an orthonormal basis of $\R^d$.
    To apply \Cref{thm:davis-kahan}, we need to determine the eigenvalues and eigenvectors of $(I_d\otimes I_d\otimes g^\top)M_3$.
    We have, since $\Sigma = I_d$,
    \begin{align*}
        M_3&=\kappa_3(\gamma_1)w_1^{\otimes 3}+\kappa_3(\gamma_2)w_2^{\otimes 3},\\
        (I_d\otimes I_d\otimes g^\top)M_3&=\kappa_3(\gamma_1)\cdot \iprod{g, w_1}\cdot w_1 w_1^\top +\kappa_3(\gamma_2)\cdot \iprod{g, w_2}\cdot w_2 w_2^\top.
    \end{align*}
    Thus, the eigenvalues of $(I_d\otimes I_d\otimes g^\top)M_3$ are $\kappa_3(\gamma_1)\cdot \iprod{g, w_1}$, $\kappa_3(\gamma_2)\cdot \iprod{g, w_2}$ and $0$ with multiplicity $n-2$. Furthermore, the corresponding eigenvectors are $w_1$ and $w_2$ (and an orthogonal basis of the complement).
    Let $i_1, i_2 \in [d]$ be such that $v_{i_1} = w_1$ and $v_{i_2} = w_2$.
    We will define our estimates as the two eigenvectors of $\tilde{A}$ corresponding to the largest eigenvalues in absolute value (up to a sign choice that we discuss later).
    This already implies that they are orthogonal.
    
    If we want one of our two estimates to be a good estimate for $w_\ell$, we need to ensure that even after the perturbation, the eigenvalue corresponding to $w_\ell$ is one of the two largest in absolute value (i.e. it should be later than the perturbation of the $0$ eigenvalues).
    We can ensure this as follows: The eigenvalues of $A$ that are $0$ correspond to eigenvalues of $\tilde{A}$ of absolute value at most $\norm{A-\tilde{A}})$.
    Thus, if the eigenvalue for $w_\ell$ is larger than $2 \cdot \norm{A-\tilde{A}})$ in absolute value, even after perturbation it will belong to the largest two (hence, one of our estimates will be close to $w_\ell$ up to sign).
    Hence, if we ensure that $\norm{A-\tilde{A}} \leq \frac{1}{2} \cdot \Abs{\kappa_3(\gamma_\ell)\cdot \iprod{g, w_\ell}}$, it suffices to argue that one of $\|w_\ell \pm \tilde{v}_{i_\ell}\|$ is small and how to recover the sign.
    
    In order to apply \Cref{thm:davis-kahan}, we also need to compute the values of $\delta_{i_1}$ and $\delta_{i_2}$.
    For this, note that $\iprod{g,\kappa_3(\gamma_1) \cdot w_1-\kappa_3(\gamma_2) \cdot w_2}$ is distributed identical to the Gaussian $\cN(0,\kappa_3(\gamma_1)^2+\kappa_3(\gamma_2)^2)$.
    Thus, by Gaussian anti-concentration (cf. \Cref{fact:gaussianspread}), we get that, with high probability,
    \[
        \abs{\kappa_3(\gamma_1) \cdot \iprod{g,w_1}-\kappa_3(\gamma_2)\cdot \iprod{g,w_2}}\geq \Omega\Paren{\sqrt{\kappa_3(\gamma_1)^2+\kappa_3(\gamma_2)^2}}.
    \]
    Similarly, we get
    \[
        \abs{\kappa_3(\gamma_1) \cdot \iprod{g,w_1}} \geq \Omega\Paren{\abs{\kappa_3(\gamma_1)}} \quad\text{and}\quad \abs{\kappa_3(\gamma_2) \cdot \iprod{g,w_2}} \geq \Omega\Paren{\abs{\kappa_3(\gamma_2)}}.
    \]
    Thus, we can lower bound
    \[
        \delta_{i_1} \geq \Omega\Paren{\abs{\kappa_3(\gamma_1)}} \quad\text{and}\quad \delta_{i_2} \geq \Omega\Paren{\abs{\kappa_3(\gamma_2)}}.
    \]
    It remains to bound $\norm{(I_d\otimes I_d\otimes g^\top)\widehat{M}_3-(I_d\otimes I_d\otimes g^\top)M_3}$.
    First, analogous to \Cref{lem:tensor-spectral-concentration}, we get that, with high probability,
    \[
        \sup_{\norm{v}=1}\Iprod{v^{\otimes 3},\widehat{M}_3-M_3}\lesssim \sqrt{\frac{d}{n}}+\frac{d^{3/2}}{n}.
    \]
    Note that we get a slightly better result here since the samples $x_i$ are in fact $1$-sub-Gaussian in this case (because $\Sigma = I_d$) and not only $O(1/\sqrt{\kappa_2(\gamma)})$-sub-Gaussian as before.
    Now, observe that with high probability,
    \begin{align*}
        \Norm{(I_d\otimes I_d\otimes g^\top)(\widehat{M}_3-M_3)} &\leq \norm{g}\sup_{\norm{v}=1}\Iprod{v^{\otimes 3},\widehat{M}_3-M_3}\\
        &\leq 3\sqrt{d}\cdot\sup_{\norm{v}=1}\Iprod{v^{\otimes 3},\widehat{M}_3-M_3}\\
        &\lesssim \frac{d}{\sqrt{n}}+\frac{d^2}{n}.
    \end{align*}
    Applying \Cref{thm:davis-kahan}, we conclude that, for $\ell \in \{1,2\}$,
    \[
        \min_{b\in\{\pm 1\}}\Set{\norm{b\cdot \tilde{v}_{i_\ell}- w_\ell}} \lesssim \kappa_3(\gamma_\ell)^{-1} \cdot \Paren{\frac{d}{\sqrt{n}}+\frac{d^2}{n}}.
    \]
    
    Let $\ell \in \{1,2\}$. It remains to argue that with the claimed sample complexity, we can make this error at most $\beta$ and that we can recover the sign.
    Note that for the claimed sample complexity we indeed have, using the above bound (and assuming $\beta \leq O(1)$), that
    \[
        \norm{(I_d\otimes I_d\otimes g^\top)(\widehat{M}_3-M_3)} \leq \frac{1}{2} \cdot \Abs{\kappa_3(\gamma_\ell)\cdot \iprod{g, w_\ell}},
    \]
    so $\tilde{v}_{i_\ell}$ will correspond to one of the two largest eigenvalues of $\tilde{A}$ up to sign.
    If
    \[
        n = O\Paren{\frac{d^2}{\beta'^2 \cdot \kappa_3(\gamma_\ell)^2}},
    \]
    then we get that
    \[
        \min_{b\in\{\pm 1\}}\Set{\norm{b\cdot \tilde{v}_{i_\ell}- w_\ell}} \leq \beta'.
    \]
    To determine the sign, we again check the sign of $\iprod{\tilde{v}_{i_\ell}^{\otimes 3}, \widehat{M}_3}$.
    We have ${\iprod{w_\ell^{\otimes 3}, M_3} = \kappa_3(\gamma_\ell) < 0}$.
    Denote by $b_\ell \in \{\pm 1\}$ the correct sign for $\tilde{v}_{i_\ell}$ (i.e. the one for which $\|b_\ell \tilde{v}_{i_\ell} - w_\ell\| \leq \beta$).
    Then, we have that
    \[
        \Abs{\kappa_3(\gamma_\ell) - \Iprod{b_\ell \tilde{v}_{i_\ell}^{\otimes 3}, \widehat{M}_3}} \leq \Abs{\Iprod{(w_\ell- b_\ell \tilde{v}_{i_\ell})^{\otimes 3}, M_3}} + \Abs{\Iprod{b_\ell \tilde{v}_{i_\ell}^{\otimes 3}, M_3 - \widehat{M}_3}}.
    \]
    We can bound the first term using $M_3 = \kappa_3(\gamma_1) w_1^{\otimes 3} + \kappa_3(\gamma_2) w_2^{\otimes 3}$ as follows, assuming $\beta' \leq O(1)$ (such that we can bound $\beta'^3 \leq O(\beta')$),
    \[
        \Abs{\Iprod{(w_\ell- b_\ell \tilde{v}_{i_\ell})^{\otimes 3}, M_3}} \lesssim \Paren{\Abs{\kappa_3(\gamma_1)} + \Abs{\kappa_3(\gamma_2)}} \cdot \norm{w_\ell- b_\ell \tilde{v}_{i_\ell}}^3 \lesssim \beta'.
    \]
    For the second term, note that the sample complexity above implies\footnote{We could even get $\Abs{\kappa_3(\gamma_\ell)} \cdot \frac{\beta'}{\sqrt{d}}$, but the bounding this by $\beta'$ is sufficient.}
    \[
        \Abs{\Iprod{b_\ell \tilde{v}_{i_\ell}^{\otimes 3}, M_3 - \widehat{M}_3}} \lesssim \beta'.
    \]
    Thus, we get that
    \[
        \Abs{\kappa_3(\gamma_\ell) - \Iprod{b_\ell \tilde{v}_{i_\ell}^{\otimes 3}, \widehat{M}_3}} \lesssim \beta'.
    \]
    If $\beta' \leq c \cdot \kappa_3(\gamma_\ell)$ for a sufficiently small constant $c$, then $\iprod{b_\ell \tilde{v}_{i_\ell}^{\otimes 3}, \widehat{M}_3} < 0$ and we can determine the sign $b_\ell$ from $\iprod{\tilde{v}_{i_\ell}^{\otimes 3}, \widehat{M}_3}$.
    Summarized, for $\beta' \leq c \cdot \kappa_3(\gamma_\ell)$ (which also implies $\beta' \leq O(1)$ by \Cref{claim:kappa1kappa3bounded}), we can identify the sign $b_\ell$ such that $\norm{b_\ell \tilde{v}_{i_\ell} - w_\ell} \leq \beta'$ if we get
    \[
        n = O\Paren{\frac{d^2}{\beta'^2 \cdot \kappa_3(\gamma_\ell)^2}}
    \]
    samples.
    We can now pick $\beta' = \min\{\beta, c \cdot \kappa_3(\gamma_\ell)\}$ to get the result.
\end{proof}

\subsection{Proof of \texorpdfstring{\Cref{thm:intersec-of2orthogonal-halfspaces}}{extension result}}
With the learned normal vectors $\hat{w}_1$ and $\hat{w}_2$, we can compute $\hat{w}_\ell^\top\widehat{M}_2\hat{w}_\ell$, which yields an approximation to $\kappa_2(\gamma_\ell)$. By monotonicity of $\kappa_2$ (cf. \Cref{claim:monotonicity-of-kappa2}), we obtain estimation for the relative truncation parameters $\hat{\gamma}_1$ and $\hat{\gamma}_2$, which allows us to compute an estimator for $\mu$.
This allows us to prove \Cref{thm:intersec-of2orthogonal-halfspaces}.
As before, we need to be careful if $\alpha_\ell$ is close to 1 as then $\kappa_3(\gamma_\ell) \to 0$.

\begin{proof}[Proof of \Cref{thm:intersec-of2orthogonal-halfspaces}]
    Recall from \Cref{LEM:extensions:moments} that, for $X \sim \cN(\mu, I_d)|_{H_1 \cap H_2}$ we have
    \[
        \mathbb{E}[X] = \mu + \kappa_1(\gamma_1) \cdot w_1 + \kappa_1(\gamma_2) \cdot w_2.
    \]
    We thus get
    \begin{align*}
        \Norm{\hat{\mu}-\mu}&=\Norm{\bar{x}-\kappa_1(\hat{\gamma}_1)\hat{w}_1-\kappa_1(\hat{\gamma}_2)\hat{w}_2-\mu}\\
        &=\Norm{\bar{x}-\E[\x]+\kappa_1(\hat{\gamma}_1)\hat{w}_1+\kappa_1(\hat{\gamma}_2)\hat{w}_2-\kappa_1(\gamma_1)w_1-\kappa_1(\gamma_2)w_2}\\
        &\leq \Norm{\bar{x}-\E[\x]}+\Norm{\kappa_1(\hat{\gamma}_1)\hat{w}_1-\kappa_1(\gamma_1)w_1}+\Norm{\kappa_1(\hat{\gamma}_2)\hat{w}_2-\kappa_1(\gamma_2)w_2}.
    \end{align*}
    We have that $x_i$ is 1-sub-Gaussian and thus given $n = O(d/\beta^2)$ samples, as in \Cref{lem:unified}, we have $\Norm{\bar{x} - \E[X]} \leq \beta$.
    For the other two terms, we need to distinguish whether $\alpha_\ell$ is close to $1$ or not.
    Both terms are the same, so we only show how to bound $\norm{\kappa_1(\hat{\gamma}_1)\hat{w}_1-\kappa_1(\gamma_1)w_1}$.
    As before in \Cref{thm:main}, we distinguish two cases: If $\alpha_1 \leq \alpha_* \coloneqq \max\{1-c,\frac{1-\beta}{\log(1/\beta)}\}$ for a sufficiently small constant $c > 0$, then we use \Cref{lem:learning-orthogonal-halfspaces} to get an estimate $\|\hat{w}_\ell - w_\ell\| \leq \beta$ and use this to bound $\Norm{\kappa_1(\hat{\gamma}_1)\hat{w}_1-\kappa_1(\gamma_1)w_1}$. If $\alpha_1 \geq \alpha_*$, then we show that this norm is small by arguing that both $\kappa_1(\hat{\gamma}_1)$ and $\kappa_1(\gamma)$ are small.
    
    \textbf{Case (i): $\alpha_1 \leq \alpha_*$.}
    Using
    \[
        n = O\Paren{\frac{d^2}{\beta^2 \kappa_3(\gamma_1)^2} + \frac{d^2}{\kappa_3(\gamma_1)^4}}
    \]
    samples, the estimate $\hat{w}_1$ from \Cref{lem:learning-orthogonal-halfspaces} satisfies $\norm{\hat{w}_1-w_1} \leq \beta$.
    Let $\widehat{M}_2=\frac{1}{n}\sum_{i=1}^{n}\paren{x_i-\bar{x}}^2$ be the sample covariance.
    We have
    \begin{align*}
        \hat{w}_1^\top \widehat{M}_2 \hat{w}_1 &= w_1^\top \widehat{M}_2 w_1 + (\hat{w}_1 - w_1)^\top \widehat{M}_2 (\hat{w}_1 + w_1)\\
        &= w_1^\top M_2 w_1 + w_1^\top (\widehat{M}_2-M_2) w_1 + (\hat{w}_1 - w_1)^\top \widehat{M}_2 (\hat{w}_1 + w_1)\\
        &= \kappa_2(\gamma_1) + w_1^\top (\widehat{M}_2-M_2) w_1 + (\hat{w}_1 - w_1)^\top \widehat{M}_2 (\hat{w}_1 + w_1).
    \end{align*}
    By \Cref{lem:learning-orthogonal-halfspaces}, we have
    \[
      \Norm{(\hat{w}_1 - w_1)^\top \widehat{M}_2 (\hat{w}_1 + w_1)} \leq 2 \Norm{\widehat{M}_2} \beta.  
    \]
    Furthermore, analogous to \Cref{claim:subgaussian:M2invX}, we get that the $M_2^{-1}x_i$ are $O(1/\min\{\sqrt{\kappa_2(\gamma_1)},\sqrt{\kappa_2(\gamma_2)}\})$-sub-Gaussian.
    Thus, using
    \[
        n = O\Paren{\frac{d}{\beta^2 \cdot \min\{\kappa_2(\gamma_1), \kappa_2(\gamma_2)\}}}
    \]
    samples, we get by \Cref{fact:random-matrix-concentration} that $\Norm{M_2-\widehat{M}_2} \leq \beta$.
    Since $M_2 \preceq I_d$, this also implies that $\norm{\widehat{M}_2} \leq 2$, as long as $\beta \leq 1$.
    Putting these together, we get that
    \[
        \Abs{\hat{w}_1^\top \widehat{M}_2 \hat{w}_1 - \kappa_2(\gamma_1)} \leq O(\beta).
    \]
    Define $\hat{\gamma}_1 = \kappa_2^{-1}(\hat{w}_1^\top \widehat{M}_2 \hat{w}_1)$ (this inverse function exists since $\kappa_2$ is monotonic by \Cref{claim:monotonicity-of-kappa2}).
    By \Cref{CLAIM:ddtkappa2overddtkappa1}, we have
    \[
        \Abs{\kappa_1(\hat{\gamma}_1)-\kappa_1(\gamma_1)} \lesssim \Abs{\kappa_2(\hat{\gamma}_1)-\kappa_2(\gamma_1)} \cdot \max\{\kappa_2(\gamma_1)^{-2}, \kappa_2(\hat{\gamma}_1)^{-2}\}.
    \]
    Now assume $\beta<c' \cdot \kappa_2(\gamma_1)$ (for a sufficiently small constant $c'$).
    Then, we get $\kappa_2(\hat{\gamma}_1)\geq\frac{1}{2}\kappa_2(\gamma_1)$ and thus
    \[
        \Abs{\kappa_1(\hat{\gamma}_1)-\kappa_1(\gamma_1)} \lesssim \beta \cdot \kappa_2(\gamma_1)^{-2}.
    \]
    We thus have that
    \begin{align*}
        \Norm{\kappa_1(\hat{\gamma}_1)\hat{w}_1-\kappa_1(\gamma_1)w_1}&\leq\Norm{\kappa_1(\hat{\gamma}_1)\hat{w}_1-\kappa_1(\gamma_1)\hat{w}_1}+\Norm{\kappa_1(\gamma_1)\hat{w}_1-\kappa_1(\gamma_1)w_1}\\
        &\leq O\Paren{\frac{\beta}{\kappa_2(\gamma_1)^2}}+|\kappa_1(\gamma_1)|\cdot\norm{\hat{w}_1-w_1}\\
        &\leq \frac{1}{\kappa_2(\gamma_1)^2} \cdot O(\beta),
    \end{align*}
    where we used \Cref{claim:kappa1kappa2bounded} for $|\kappa_1(\gamma_1)| \lesssim \kappa_2(\gamma_1)^{-1/2}$.
    This holds for all $\beta < c' \cdot \kappa_2(\gamma_1)$ and the sample complexity needed is
    \[
        n =  O\Paren{\frac{d^2}{\beta^2 \kappa_3(\gamma_1)^2} + \frac{d^2}{\kappa_3(\gamma_1)^4} + \frac{d}{\beta^2 \cdot \min\{\kappa_2(\gamma_1), \kappa_2(\gamma_2)\}}}.
    \]

    \textbf{Case (ii): $\alpha_1 \geq \alpha_*$.}
    In this case, we want to argue that both $\abs{\kappa_1(\gamma_1)}$ and $\abs{\kappa_1(\hat{\gamma}_1)}$ are small.
    Note that by picking $c$ sufficiently small, we can achieve $\gamma_1 \geq 2$ and get
    \begin{align*}
        \Abs{\kappa_1(\gamma_1)} &\lesssim \phi(\gamma_1) &\text{since $\Phi(\gamma_1) \geq \Phi(2)$}\\
        &\leq \Paren{\gamma_1 + \frac{1}{\gamma_1}} \cdot (1-\Phi(\gamma_1)) &\text{by \Cref{FACT:rational-appx-of-imr}}\\
        &\leq 2 \gamma_1 \cdot (1-\Phi(\gamma_1)) &\text{since $\gamma_1 \geq 1$}\\
        &\leq 2 \sqrt{2} \cdot (1-\Phi(\gamma_1)) \cdot \log^{1/2}\Paren{\frac{1}{1-\Phi(\gamma_1)}} &\text{by \Cref{claim:log-alpha-and-gamma}}\\
        &\lesssim \beta &\text{since $\alpha_1 = \Phi(\gamma_1) \geq \frac{1-\beta}{\log(1/\beta)}$}.
    \end{align*}
    Similarly, using \Cref{claim:one-minus-k2}, we get that $1-\kappa_2(\gamma_1) \lesssim \beta$.
    It remains to argue that $\abs{\kappa_1(\hat{\gamma}_1)}$ is also at most $O(\beta)$. By \Cref{claim:boundkappa1bykappa2}, it suffices to show that $\kappa_2(\hat{\gamma}_1) \geq 1 - \Omega(\beta)$, as long as $\beta$ is at most a sufficiently small constant (such that $\kappa_2(\hat{\gamma}) \geq \kappa_2(1)$, which implies $\hat{\gamma}_1 \geq 1$).
    Recall that
    \[
        M_2 = I_d - (1-\kappa_2(\gamma_1)) \cdot w_1w_1^\top - (1-\kappa_2(\gamma_1)\cdot w_2w_2^\top.
    \]
    Thus, we have 
    \[
        \hat{w}_1^\top M_2 \hat{w}_1 =  \kappa_2(\gamma_1) \cdot \Iprod{w_1, \hat{w}_1}^2 + \kappa_2(\gamma_2) \cdot \Iprod{w_2, \hat{w}_1}^2 + \Paren{1-\Iprod{w_1, \hat{w}_1}^2 - \Iprod{w_2, \hat{w}_1}^2}.
    \]
    Now, we have $\kappa_2(\gamma_1) \geq 1- \Omega(\beta)$.
    If $\alpha_2 \geq \alpha_*$, then we also have $\kappa_2(\gamma_2) \geq 1 - \Omega(\beta)$ and thus $\hat{w}_1^\top M_2 \hat{w}_1 \geq 1-\Omega(\beta)$.
    If $\alpha_2 \leq \alpha_*$, then we have $\norm{\hat{w}_2-w_2} \leq \beta$ and since $\hat{w}_1$ and $\hat{w}_2$ are orthogonal, we get
    \[
        \Abs{\hat{w}_1^\top w_2} = \Abs{\hat{w}_1^\top(w_2-\hat{w}_2)} \leq \beta.
    \]
    Thus, we have
    \[
        \hat{w}_1^\top M_2 \hat{w}_1 \geq (1-\Omega(\beta)) \Iprod{w_1, \hat{w}_1}^2 + (1-\Iprod{w_1, \hat{w}_1}^2 - \beta^2) \geq 1-\Omega(\beta)
    \]
    in this case as well (assuming $\beta \leq 1$).
    As before, using
    \[
        n = O\Paren{\frac{d}{\beta^2 \cdot \min\{\kappa_2(\gamma_1), \kappa_2(\gamma_2)\}}}
    \]
    samples, we get $\norm{M_2 - \widehat{M}_2} \leq \beta$ and thus also $\kappa_2(\hat{\gamma}_1) = \hat{w}_1^\top \widehat{M}_2 \hat{w}_1 \geq 1-\Omega(\beta)$.
    Hence, we have $\abs{\kappa_1(\hat{\gamma}_1)} \leq O(\beta)$ by \Cref{claim:boundkappa1bykappa2}.
    Together, this shows that for $\beta$ at most a sufficiently small constant, we get that
    \[
        \Norm{\kappa_1(\hat{\gamma}_1)\hat{w}_1-\kappa_1(\gamma_1)w_1} \leq O(\beta).
    \]

    \textbf{Sample complexity.}
    Summarized, we showed that for $\beta < \min\{c' \cdot \kappa_2(\gamma_1), c' \cdot \kappa_2(\gamma_2),c''\}$ (for a sufficiently small constant $c''$) we have
    \[
        \Norm{\hat{\mu} - \mu} \leq \Paren{\frac{1}{\kappa_2(\gamma_1)^2} + \frac{1}{\kappa_2(\gamma_2)^2}} \cdot O(\beta).
    \]
    The sample complexity needed for this consists of different terms. We always have
    \[
        O\Paren{\frac{d}{\beta^2 \cdot \min\{\kappa_2(\gamma_1), \kappa_2(\gamma_2)\}}}
    \]
    and for both $\ell \in \{1,2\}$ we need to add
    \[
        O\Paren{\frac{d^2}{\beta^2 \kappa_3(\gamma_\ell)^2} + \frac{d^2}{\kappa_3(\gamma_\ell)^4}}
    \]
    if $\alpha_\ell \leq \alpha_*$.
    It remains to bound $\kappa_i(\gamma_\ell)$.
    By \Cref{claim:var-est}, we have
    \[
         \kappa_2(\gamma_\ell)\geq \Omega(\min\{1,\log^{-1}(1/\alpha_\ell)\}) \geq \Omega(\min\{1,\log^{-1}(1/\alpha)\}).
    \]
    Similarly, if $\alpha_\ell \leq \alpha_*$, then we can bound $\kappa_3(\gamma_\ell)$ using \Cref{claim:skew-est} as follows
    \[
        \kappa_3(\gamma_\ell) \geq \Omega(\min\{\log^{-3/2}(1/\alpha), \beta\}).
    \]
    Putting these together, the sample complexity becomes
    \[
        n = O\Paren{\frac{d^2}{\beta^2}\poly\log(1/\alpha) + \frac{d^2}{\beta^4}}.
    \]
    Note that we can again assume with loss of generality that $\varepsilon \leq 1$ (otherwise the conclusion is trivial). Thus, picking (for a sufficiently small constant $c'''$)
    \[
        \beta = \min\Set{c' \cdot \kappa_2(\gamma_1), c' \cdot \kappa_2(\gamma_2),c'', c''' \cdot \min\{\kappa_2(\gamma_1), \kappa_2(\gamma_2)\} \cdot\varepsilon}
    \]
    proves the theorem.\footnote{Note that we do not need a $\poly\log(1/\alpha)$ for the $d^2/\varepsilon^4$ term. The reason is the following: We only need this term if $\alpha_\ell$ is close to $1$ (say larger than some constant). But in this case $\kappa_2(\gamma_\ell)$ is a constant and thus $O(d^2/\varepsilon^4)$ is sufficient.}
    Note that the runtime analysis of \Cref{algo:extensions} is exactly the same as the one for \Cref{algo:main} in the proof of \Cref{thm:main}.
\end{proof}

%% file: content/appendix.tex
\section{Deferred proofs from the main text}\label{APP:facts}
In this appendix we provide proofs of some claims that were deferred from the main text.

\subsection{Cumulants of the one-dimensional truncated distribution}\label{APP:facts:cumulantsof1dtruncated}
In this section we prove several facts from \Cref{sec:millsratio} on the first three cumulants of the one-dimensional random variable $Z_\gamma \sim \truncatedGauss{0}{1}{(-\infty, \gamma]}$.
We first prove \Cref{fact:first-3-cumulants} on the values of the cumulants that we restate below.
\begin{fact*}[Restatement of \Cref{fact:first-3-cumulants}]\factfirstthreecumulantstext
\end{fact*}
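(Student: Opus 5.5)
The plan is to compute the three cumulants directly from the density of $Z_\gamma$, which is $\phi(t)/\Phi(\gamma)$ on $(-\infty,\gamma]$. The only tool needed is the identity $\phi'(t) = -t\phi(t)$, used together with integration by parts. Write $\lambda \coloneqq \phi(\gamma)/\Phi(\gamma)$ for brevity. The plan is to compute the raw moments $m_k \coloneqq \E Z_\gamma^k$ for $k=1,2,3$ and then convert them into central moments.

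For the raw moments I would set up a recursion. Since $t^{k}\phi(t) = -t^{k-1}\phi'(t)$, integrating by parts over $(-\infty,\gamma]$ gives
\[
\int_{-\infty}^{\gamma} t^k\phi(t)\,dt = -\gamma^{k-1}\phi(\gamma) + (k-1)\int_{-\infty}^{\gamma} t^{k-2}\phi(t)\,dt .
\]
The boundary term at $-\infty$ vanishes because of Gaussian decay. Dividing by $\Phi(\gamma)$ yields $m_k = -\gamma^{k-1}\lambda + (k-1)m_{k-2}$, with $m_0 = 1$. This immediately gives
\[
m_1 = -\lambda, \qquad m_2 = 1-\gamma\lambda, \qquad m_3 = -\gamma^2\lambda + 2m_1 = -\lambda(\gamma^2+2).
\]
The formula for $m_1$ is exactly the claimed $\kappa_1(\gamma)$, including the integral expression stated in the fact.

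Next I would convert to central moments. The variance is $\kappa_2 = m_2 - m_1^2 = 1-\gamma\lambda-\lambda^2$, as claimed. For the third central moment I would use $\kappa_3 = m_3 - 3m_1m_2 + 2m_1^3$. Substituting the values above gives
\[
\kappa_3 = -\lambda(\gamma^2+2) + 3\lambda(1-\gamma\lambda) - 2\lambda^3 = \lambda\bigl(1-\gamma^2-3\gamma\lambda-2\lambda^2\bigr),
\]
which matches the statement. I would also recall that the cumulants defined via the cumulant generating function in \Cref{DEF:cumulant} coincide with the mean, the variance and the third central moment. Since $Z_\gamma$ is bounded above and has Gaussian tails below, its moment generating function exists and is analytic, so this identification is legitimate.

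There is no genuine obstacle. The only points needing care are the vanishing of the boundary terms at $-\infty$, which follows because $t^{k}\phi(t)\to 0$, and the algebra in the final expansion. I would double-check that algebra by testing $\gamma\to\infty$, where $\lambda\to 0$ and the formulas should give $\kappa_1=0$, $\kappa_2=1$ and $\kappa_3=0$.
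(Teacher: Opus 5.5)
Your proposal is correct and follows essentially the same route as the paper: integration by parts with $\phi'(t)=-t\phi(t)$ gives the raw moments $m_1,m_2,m_3$, and these are then converted to central moments. Packaging the integrations as the recursion $m_k=-\gamma^{k-1}\lambda+(k-1)m_{k-2}$, and using $m_3-3m_1m_2+2m_1^3$ in place of the paper's $\E[Z_\gamma^3]-3\E[Z_\gamma]\kappa_2-(\E[Z_\gamma])^3$, are only cosmetic differences; your algebra checks out.
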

\begin{proof}
Recall that the first three cumulants can be written in terms of the moments of $Z_\gamma$.
In particular, recall from \eqref{EQ:ThreeCumulantsfor1DGauss} that
\begin{align*}
    \kappa_1(\gamma) &=\E\Brac{Z_\gamma},\\
    \kappa_2(\gamma) &=\E\Brac{\Paren{Z_\gamma-\E Z_\gamma}^2},\\
    \kappa_3(\gamma) &=\E\Brac{\Paren{Z_\gamma-\E Z_\gamma}^3}.
\end{align*}
Based on this, we can now write $\kappa_i$ in terms of $\phi(\gamma),\Phi(\gamma)$.
For $\kappa_1$, observe that
\begin{equation}\label{eqn:kappa1}
    \kappa_1(\gamma) = \E[Z_\gamma]  = \frac{1}{\Phi(\gamma)} \cdot \int_{-\infty}^{\gamma} t \phi(t) \mathrm{d} t = - \frac{\phi(\gamma)}{\Phi(\gamma)},
\end{equation}
where for the last equality we used $\frac{\mathrm{d}}{\mathrm{d}t} \varphi(t) = - t \varphi(t)$.
For $\kappa_2$, we first compute the second moment of $Z_\gamma$ as follows:
\begin{align*}
    \E\Brac{Z_\gamma^2} &= \frac{1}{\Phi(\gamma)} \cdot \int_{-\infty}^{\gamma} t^2 \phi(t) \mathrm{d} t\\
    &= \frac{1}{\Phi(\gamma)} \cdot \int_{-\infty}^{\gamma} t \cdot \frac{\mathrm{d}}{\mathrm{d}t} (-\varphi(t)) \mathrm{d} t\\
    &\overset{\text{(i)}}{=}\frac{1}{\Phi(\gamma)}\Paren{[-t\phi(t)]^{\gamma}_{-\infty}+\int_{-\infty}^{\gamma}\phi(t)\mathrm{d} t}\\
    &=-\gamma\frac{\phi(\gamma)}{\Phi(\gamma)}+1
\end{align*}
where in equality (i) we used integration by parts. Combined with \eqref{eqn:kappa1} this implies that 
\begin{equation}\label{eqn:kappa2}
    \kappa_2(\gamma) = \E\Brac{Z_\gamma^2}-\Paren{\E\Brac{Z_\gamma}}^2 =1-\gamma\frac{\phi(\gamma)}{\Phi(\gamma)}-\Paren{\frac{\phi(\gamma)}{\Phi(\gamma)}}^2.
\end{equation}
Similarly, we can compute the third moment
\begin{align}
     \E\Brac{Z_\gamma^3} &= \frac{1}{\Phi(\gamma)} \cdot \int_{-\infty}^{\gamma} t^3 \phi(t) \mathrm{d} t \nonumber\\
     &= \frac{1}{\Phi(\gamma)} \cdot \int_{-\infty}^{\gamma} t^2 \cdot \frac{\mathrm{d}}{\mathrm{d}t} (-\varphi(t)) \mathrm{d} t \nonumber\\
     &\overset{\text{(i)}}{=} \frac{1}{\Phi(\gamma)}\Paren{[-t^2\phi(t)]^{\gamma}_{-\infty}+2\int_{-\infty}^{\gamma}t\phi(t)\mathrm{d} t} \nonumber\\
     &\overset{\text{(ii)}}{=} -\gamma^2 \cdot \frac{\phi(\gamma)}{\Phi(\gamma)} - 2 \cdot \frac{\phi(\gamma)}{\Phi(\gamma)} \label{eqn:thirdmoment},
\end{align}
where in (i) we again used integration by parts and in (ii) we use \eqref{eqn:kappa1}.
Thus, we can compute $\kappa_3$ as follows
\begin{align*}
    \kappa_3(\gamma)&=\E\Brac{\Paren{Z_\gamma-\E\Brac{Z_\gamma}}^3} \\
    &= \E\Brac{Z_\gamma^3}-3\E\Brac{Z_\gamma^2}\cdot\E\Brac{Z_\gamma}+3\E\Brac{\Z_\gamma}\cdot \Paren{\E\Brac{Z_\gamma}}^2 -\Paren{\E\Brac{Z_\gamma}}^3\\
    &= \E\Brac{Z_\gamma^3} - 3\E\Brac{Z_\gamma} \cdot \Paren{\E\Brac{Z_\gamma^2}-\Paren{\E\Brac{Z_\gamma}}^2}-\Paren{\E\Brac{Z_\gamma}}^3\\
    &= -\gamma^2 \cdot \frac{\phi(\gamma)}{\Phi(\gamma)} - 2 \cdot \frac{\phi(\gamma)}{\Phi(\gamma)} -3 \cdot \Paren{- \frac{\phi(\gamma)}{\Phi(\gamma)}} \cdot \Paren{1-\gamma\frac{\phi(\gamma)}{\Phi(\gamma)}-\Paren{\frac{\phi(\gamma)}{\Phi(\gamma)}}^2}-\Paren{- \frac{\phi(\gamma)}{\Phi(\gamma)}}^3,
\end{align*}
where in the last step we used \eqref{eqn:kappa1}, \eqref{eqn:kappa2} and \eqref{eqn:thirdmoment}.
After simplification we have that 
\begin{equation*}
   \kappa_3(\gamma)=\frac{\phi(\gamma)}{\Phi(\gamma)} \cdot \left( 1 - \gamma^2 - 3 \gamma \cdot \frac{\phi(\gamma)}{\Phi(\gamma)} - 2 \cdot \left(\frac{\phi(\gamma)}{\Phi(\gamma)}\right)^2 \right),
\end{equation*}
which completes the proof.
\end{proof}

For several of the following proofs, we use the approximation of the inverse Mills ratio ${h(t) = \frac{\phi(t)}{1-\Phi(t)}}$ from \Cref{FACT:rational-appx-of-imr}. We will in particular need the following bounds from \Cref{FACT:rational-appx-of-imr} that hold for all $x > 0$:
\begin{align*}
    h(x) &> r_1(x) = x &(k=0)\phantom{.}\\
    h(x) &< r_2(x) = \frac{x^2+1}{x} &(k=1)\phantom{.}\\
    h(x) &> r_3(x) = \frac{x(x^2+3)}{x^2+2} &(k=2)\phantom{.}\\
    h(x) &< r_4(x) = \frac{x^4 + 6x^2+3}{x^3+5x} &(k=3)\phantom{.}\\
    h(x) &> r_5(x) = \frac{x(x^4+10x^2+15)}{x^4+9x^2+8} &(k=4).
\end{align*}

Using these, we now prove the following two claims about $\kappa_1$ and $\kappa_2$.
\begin{claim*}[Restatement of \Cref{claim:1-lipschitz-k1}]\claimonelipschitzkonetext     
\end{claim*}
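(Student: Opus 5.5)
The plan is to compute the derivative of $\kappa_1$ explicitly and show $0<\kappa_1'(t)\le 1$ for all $t\in\R$. By \Cref{fact:first-3-cumulants}, $\kappa_1(t)=-\phi(t)/\Phi(t)$. Using $\phi'(t)=-t\phi(t)$ and $\Phi'(t)=\phi(t)$, we get
\[
\kappa_1'(t)=\frac{t\phi(t)}{\Phi(t)}+\Paren{\frac{\phi(t)}{\Phi(t)}}^2 .
\]
Comparing with the formula for $\kappa_2$ in \Cref{fact:first-3-cumulants}, this is exactly $1-\kappa_2(t)$. By \Cref{claim:bound-on-kappa2} we have $0<\kappa_2(t)\le 1$, so $0\le\kappa_1'(t)<1$. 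The mean value theorem then gives $|\kappa_1(a)-\kappa_1(b)|\le|a-b|$.

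If \Cref{claim:bound-on-kappa2} had not been available, the key point would be to verify $\kappa_2>0$ and $\kappa_2\le 1$ directly.

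For $\kappa_2>0$: $Z_t$ is a nondegenerate random variable, so its variance is positive.

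For $\kappa_2\le 1$, which is equivalent to $\kappa_1'\ge0$, we need $x^2+tx\ge 0$ with $x=\phi(t)/\Phi(t)>0$, i.e. $x\ge -t$. For $t\ge 0$ this is immediate. For $t<0$, write $s=-t>0$; then $x=h(s)$ and the required inequality is $h(s)>s$. This is the $k=0$ bound in \Cref{FACT:rational-appx-of-imr}.

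The whole argument therefore reduces to the identity $\kappa_1'=1-\kappa_2$, which is a one-line differentiation. The only care needed is in the sign bookkeeping relating $\phi(t)/\Phi(t)$ to $h(-t)$. I expect no real obstacle.
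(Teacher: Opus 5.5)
Your proposal is correct and is essentially the paper's proof: differentiate $\kappa_1=-\phi/\Phi$ to obtain $\kappa_1'=1-\kappa_2$, then invoke \Cref{claim:bound-on-kappa2}. You also state explicitly the lower bound $\kappa_1'\ge 0$ (which follows from $\kappa_2\le 1$); the paper leaves this implicit, but it is needed to get $|\kappa_1'|\le 1$.
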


\begin{claim*}[Restatement of \Cref{claim:bound-on-kappa2}]\claimboundonkappatwotext
\end{claim*}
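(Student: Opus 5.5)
The plan is to argue directly from the closed form in \Cref{fact:first-3-cumulants}, together with the elementary lower bound on the inverse Mills ratio from \Cref{FACT:rational-appx-of-imr}. Write $q(t) \coloneqq \frac{\phi(t)}{\Phi(t)} > 0$, so that
\[
    \kappa_2(t) = 1 - t\,q(t) - q(t)^2 .
\]

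For the strict lower bound $\kappa_2(t) > 0$, I would not use the formula at all. By \eqref{EQ:ThreeCumulantsfor1DGauss}, $\kappa_2(t)$ is the variance of $Z_t$. The distribution of $Z_t$ has a density that is strictly positive on the whole interval $(-\infty, t]$, so it is not a point mass. Hence its variance is strictly positive.

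For the upper bound $\kappa_2(t) \leq 1$, it suffices to show $t\,q(t) + q(t)^2 \geq 0$. Since $q(t) > 0$, this is equivalent to $q(t) \geq -t$. I would split into two cases.
\begin{itemize}
    \item If $t \geq 0$, then $q(t) > 0 \geq -t$, so the inequality is trivial.
    \item If $t < 0$, set $x = -t > 0$. By symmetry of the Gaussian, $q(t) = \frac{\phi(x)}{1-\Phi(x)} = h(x)$. \Cref{FACT:rational-appx-of-imr} gives $h(x) > r_1(x) = x = -t$, which is exactly what is needed. In fact it gives strict inequality, so $\kappa_2(t) < 1$ for $t < 0$.
\end{itemize}
I expect no real obstacle here; the only point needing care is the symmetry identity $\Phi(-x) = 1 - \Phi(x)$, which converts $q$ into $h$.

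The same computation also yields \Cref{claim:1-lipschitz-k1} immediately. Using $\phi'(t) = -t\phi(t)$, one gets
\[
    \kappa_1'(t) = \ddt\Paren{-\frac{\phi(t)}{\Phi(t)}} = t\,q(t) + q(t)^2 = 1 - \kappa_2(t).
\]
By the bounds just proved, $\kappa_1'(t) \in [0,1)$ for every $t$. The mean value theorem then shows that $\kappa_1$ is $1$-Lipschitz on $\R$.
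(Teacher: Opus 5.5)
Your proof is correct and follows essentially the same route as the paper: $\kappa_2(t) > 0$ because it is the variance of a non-degenerate distribution, and $1-\kappa_2(t) = q(t)\bigl(t+q(t)\bigr) \geq 0$ via the case split at $t=0$ and the Mills-ratio bound $h(-t) > -t$ for $t<0$. Your Lipschitz remark matches the paper's derivation of $\kappa_1'(t) = 1-\kappa_2(t)$, and you make explicit the lower bound $\kappa_1'(t) \geq 0$ that the paper leaves implicit.
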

We first prove \Cref{claim:bound-on-kappa2} and then use it to prove \Cref{claim:1-lipschitz-k1}.

\begin{proof}[Proof of \Cref{claim:bound-on-kappa2}]
    First, note that $\kappa_2(\gamma)=\E\brac{\paren{Z_\gamma-\E \brac{Z_\gamma}}^2} > 0$.
    To see $\kappa_2(t)\leq 1$, recall from \eqref{eqn:kappa2} that
    \[
        1-\kappa_2(t)=t\frac{\phi(t)}{\Phi(t)}+\Paren{\frac{\phi(t)}{\Phi(t)}}^2.
    \]
    For $t \geq 0$, both terms are non-negative and thus $1-\kappa_2(t) \geq 0$. 
    For $t <0$, we get from \Cref{FACT:rational-appx-of-imr} that $\frac{\phi(t)}{\Phi(t)}=\frac{\phi(-t)}{1-\Phi(-t)}\geq -t$ and thus
    \[
        1-\kappa_2(t)=\frac{\phi(t)}{\Phi(t)}\Paren{t+\frac{\phi(t)}{\Phi(t)}}\geq \frac{\phi(t)}{\Phi(t)}(t-t)=0
    \]
    in this case as well.
\end{proof}

\begin{proof}[Proof of \Cref{claim:1-lipschitz-k1}]
    Observe that, since $\kappa_1(t) = -\frac{\phi(t)}{\Phi(t)}$ by \Cref{fact:first-3-cumulants}, we get that
    \[
        \ddt\kappa_1(t)=t\frac{\phi(t)}{\Phi(t)}+\Paren{\frac{\phi(t)}{\Phi(t)}}^2 = 1-\kappa_2(t) \leq 1,
    \]
    where we used again \Cref{fact:first-3-cumulants} and that $\kappa_2(t) \geq 0$ by \Cref{claim:bound-on-kappa2}.
\end{proof}

Next, we prove the following claim that shows that the function $\kappa_2$ is monotonically increasing.
\begin{claim*}[Restatement of \Cref{claim:monotonicity-of-kappa2}]\claimmonotonicityofkappatwotext
\end{claim*}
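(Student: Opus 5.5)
The statement to prove is Claim~\ref{claim:monotonicity-of-kappa2}: $\ddt\kappa_2(t)=-\kappa_3(t)$, and $\kappa_3(t)<0$ for all $t$.

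\textbf{Derivative identity.} Write $m(t)=\phi(t)/\Phi(t)$. Since $\phi'(t)=-t\phi(t)$ and $\Phi'=\phi$, the quotient rule gives
\[
m'(t)=-t\,m(t)-m(t)^2=-m(t)\bigl(t+m(t)\bigr).
\]
By Fact~\ref{fact:first-3-cumulants} we have $\kappa_2(t)=1-t\,m-m^2$. Differentiating,
\[
\kappa_2'=-m-(t+2m)m' = -m+m(t+m)(t+2m).
\]
Expanding the last product gives $m\bigl(-1+t^2+3tm+2m^2\bigr)$. This is exactly $-\kappa_3(t)$ by the formula in Fact~\ref{fact:first-3-cumulants}, which proves the first assertion.

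An alternative route is to view $Z_\gamma$ as an exponential-tilt family and differentiate cumulants in $\gamma$. I would include only the direct computation, since it is short.

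\textbf{Sign of $\kappa_3$.} Since $m>0$, the claim $\kappa_3<0$ is equivalent to
\[
q(t):=t^2-1+3t\,m+2m^2>0 \quad\text{for all } t.
\]
I would split into two cases.

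\emph{Case $t\ge 0$.} Here $3tm\ge 0$ and $t^2\ge 0$, so $q(t)\ge 2m^2-1$. This bound suffices only when $m>1/\sqrt2$. For large $t$, $m$ is tiny, so I would instead use $t^2$ together with the positive terms. Concretely, I would view $q$ as a quadratic in $m$ with positive leading coefficient. Its positive root is
\[
r_+(t)=\frac{-3t+\sqrt{t^2+8}}{4},
\]
so $q>0$ holds exactly when $m>r_+(t)$. This is precisely the lower bound $h(x)>\tfrac14\bigl(\sqrt{x^2+8}+3x\bigr)$ of Fact~\ref{FACT:ir-appx-of-imr}, applied after writing $m(t)=h(-t)$.

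\emph{Case $t<0$.} Set $x=-t>0$, so that $m=h(x)$. The required condition $h(x)>\tfrac14\bigl(3x+\sqrt{x^2+8}\bigr)$ is again Fact~\ref{FACT:ir-appx-of-imr}.

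So both signs reduce to the same inequality, $h(x)>\tfrac14(3x+\sqrt{x^2+8})$, but now for all real $x$, positive and negative. Fact~\ref{FACT:ir-appx-of-imr} is stated only for $x>0$, so it covers the case $t<0$ directly. At $t=0$, continuity handles it; alternatively, $q(0)=2\cdot\tfrac{4}{2\pi}\cdot 2-1=\tfrac{8}{\pi}-1>0$ directly.

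\textbf{Main obstacle: $t>0$, i.e.\ $x<0$.} Here I need $h(x)>r_+(-x)$ for negative $x$. For $t>0$ the root satisfies $r_+(t)=\frac{t^2+8-9t^2}{4(\sqrt{t^2+8}+3t)}$, which is negative once $t>1$. Since $m>0$, the case $t>1$ is then trivially fine.

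For $0<t\le 1$, I would first try the crude bound $m(t)\ge \phi(1)/\Phi(1)\approx 0.288$. If that fails, I would use $x_\ell(t)=e^{-t^2/2}/(\sqrt{\pi/2}+t)$ from Case~(i) of Lemma~\ref{lemma:monotonicity-of-skewness} and check $x_\ell(t)>r_+(t)$ on $(0,1]$. Both sides are monotone on this interval, so a few grid points plus monotonicity would settle it; the paper's appendix uses Mathematica for checks of this kind.

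Finally, the monotonicity of $\kappa_2$ follows immediately from $\kappa_2'=-\kappa_3>0$.
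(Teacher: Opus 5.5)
Your proposal is correct and follows essentially the paper's proof: the same direct derivative computation, \Cref{FACT:ir-appx-of-imr} for $t<0$ (the paper evaluates the quadratic at exactly your root $r_+(t)$, where it vanishes), the trivial case $t\ge 1$, and the tangent-line bound $x_\ell(t)$ plus a numerical monotonicity check on $(0,1)$. The slips are minor: $q(0)=\tfrac{4}{\pi}-1$, not $\tfrac{8}{\pi}-1$, and the crude bound $m\ge\phi(1)/\Phi(1)\approx 0.288$ does fail (e.g.\ $r_+(1/2)\approx 0.343$), so the $x_\ell$ fallback, checked as $x_\ell(b)>r_+(a)$ on cells $[a,b]$ of width $0.1$, is the step you actually need.
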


\begin{proof}[Proof of \Cref{claim:monotonicity-of-kappa2}]
    Recall from \Cref{fact:first-3-cumulants} that $\kappa_2(t)=1-t\frac{\phi(t)}{\Phi(t)}-\Paren{\frac{\phi(t)}{\Phi(t)}}^2$.
    Hence, we get that 
    \begin{align*}
        \ddt\kappa_2(t)&=-\frac{\phi(t)}{\Phi(t)}-t\ddt\frac{\phi(t)}{\Phi(t)}-2\Paren{\frac{\phi(t)}{\Phi(t)}} \cdot \ddt\frac{\phi(t)}{\Phi(t)}\\
        &=-\frac{\phi(t)}{\Phi(t)}+\Paren{t+2\frac{\phi(t)}{\Phi(t)}} \cdot \ddt\frac{-\phi(t)}{\Phi(t)}\\
        &=-\frac{\phi(t)}{\Phi(t)}+\Paren{t+2\frac{\phi(t)}{\Phi(t)}}\Paren{t\frac{\phi(t)}{\Phi(t)}+\Paren{\frac{\phi(t)}{\Phi(t)}}^2}\\
        &=(t^2-1)\frac{\phi(t)}{\Phi(t)}+3t\Paren{\frac{\phi(t)}{\Phi(t)}}^2+2\Paren{\frac{\phi(t)}{\Phi(t)}}^3=-\kappa_3(t)
    \end{align*}
    by \Cref{fact:first-3-cumulants}.
    Thus, $\ddt\kappa_2(t)>0$ if and only if $f(t)=t^2-1+3t\cdot \frac{\phi(t)}{\Phi(t)}+2\Paren{\frac{\phi(t)}{\Phi(t)}}^2>0$.
    We show the latter by distinguishing the three cases (i) $1 \leq t$, (ii) $0 \leq t < 1$ and (iii) $t < 0$.

    \textbf{Case (i): $1 \leq t$}.
    In this case, $t^2 -1 \geq 0$ and the other two terms are positive. Thus, if~$t \geq 1$, then $f(t) > 0$ holds.

    \textbf{Case (ii): $0 \leq t < 1$}.
    A direct computation shows $f(0) = \frac{4}{\pi}-1 > 0$. For $t \in (0,1)$, since~${\phi(t) = \ddt \Phi(t)}$ is decreasing, we can upper bound $\Phi(t)$ by its tangent line at $t=0$, i.e.,
    \[
        \Phi(t)\leq\frac{1}{2}+\frac{t}{\sqrt{2\pi}}.
    \]
    Therefore, we obtain
    \[
        \frac{\phi(t)}{\Phi(t)}\geq\frac{\exp\Paren{-t^2/2}}{\sqrt{\pi/2}+t} > 0
    \]
    and thus
    \[
        f(t)\geq t^2-1+\frac{\exp\Paren{-t^2/2}}{\sqrt{\pi/2}+t}\cdot\Paren{3t+2\frac{\exp\Paren{-t^2/2}}{\sqrt{\pi/2}+t}}.
    \]
    We show in \Cref{FACT:computationkappa2monotonic} in \Cref{APP:computations} that this lower bound is monotonic in $(0,1)$. Thus, the lower bound attains its minimum at $t=0$ and we have for any $t \in (0,1)$ that $f(t) \geq \frac{4}{\pi}-1 > 0$.

    \textbf{Case (iii): $t < 0$}.
    Finally, for $t < 0$ consider the quadratic function
    \[
        q_t(x) = t^2-1+3tx+2x^2.
    \]
    We have $\ddx q_t(x) = 3t + 4x$ and thus $q_t(x)$ is increasing in $x$ for $x > -\frac{3}{4}t$.
    Note that we have
    \[
        {f(t) = q_t\Paren{\frac{\phi(t)}{\Phi(t)}} = q_t(h(-t))}
    \]
    for $h(-t)$ the inverse Mills ratio as in as in \Cref{sec:millsratio} (note that $-t > 0$).
    By \Cref{FACT:ir-appx-of-imr}, we have
    \[
        h(-t) > \frac{1}{4}\Paren{\sqrt{t^2+8}-3t} > -\frac{3}{4}t
    \]
    and hence, we get that
    \begin{align*}
        q_t(h(-t)) &> q_t\Paren{\frac{1}{4}\Paren{\sqrt{t^2+8}-3t}}\\
        &= t^2-1 + 3t \cdot \frac{1}{4}\Paren{\sqrt{t^2+8}-3t} + 2 \Paren{\frac{1}{4}\Paren{\sqrt{t^2+8}-3t}}^2\\
        &= t^2-1 + \frac{3}{4}t\sqrt{t^2+8} -\frac{9}{4}t^2+\frac{1}{8}\Paren{t^2+8-6t\sqrt{t^2+8}+9t^2} = 0,
    \end{align*}
    which shows that $f(t) > 0$ also holds for $t < 0$.
    This completes the proof.
\end{proof}

Next, we prove the following claim that gives a uniform upper bound on $\kappa_1(t)\sqrt{\kappa_2(t)}$.
\begin{claim*}[Restatement of \Cref{claim:kappa1kappa2bounded}]\claimkappaonekappatwoboundedtext
\end{claim*}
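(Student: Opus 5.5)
The claim asks for a uniform bound on $\abs{\kappa_1(t)}\cdot\sqrt{\kappa_2(t)}$. We write $m(t) \coloneqq \frac{\phi(t)}{\Phi(t)}$, so that $\abs{\kappa_1(t)} = m(t)$ and $\kappa_2(t) = 1 - t\,m(t) - m(t)^2$ by \Cref{fact:first-3-cumulants}. Using $0<\kappa_2(t)\le 1$ from \Cref{claim:bound-on-kappa2}, it suffices to bound $m(t)^2\kappa_2(t)$ by a constant. We split into the cases $t\ge 0$ and $t<0$.

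\textbf{Case $t \geq 0$.} Here $\Phi(t)\ge 1/2$ and $\phi(t)\le 1/\sqrt{2\pi}$, so $m(t)\le \sqrt{2/\pi}$. Together with $\kappa_2(t)\le 1$ this gives the bound immediately.

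\textbf{Case $t<0$.} Write $s=-t>0$, so that $m(t)=h(s)$ with $h$ the inverse Mills ratio. Then
\[
\kappa_2(t)=1+s\,h(s)-h(s)^2 = 1 - h(s)\bigl(h(s)-s\bigr).
\]
By \Cref{FACT:rational-appx-of-imr} we have $s< h(s)< r_4(s)$, and the lower bound $h(s) > r_3(s) = \frac{s(s^2+3)}{s^2+2}$ is also available. The idea is to plug these in to get an upper bound on $\kappa_2(t)$ of order $1/s^2$ for large $s$, namely $\kappa_2(t)\lesssim \frac{1}{s^2}$.

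Concretely, one first computes $r_3(s)(r_3(s)-s) = \frac{s^2(s^2+3)}{(s^2+2)^2}$, which gives
\[
\kappa_2(t)\le 1-\frac{s^2(s^2+3)}{(s^2+2)^2}=\frac{s^2+4}{(s^2+2)^2}.
\]
Next, since $h(s)<s+1/s$, we have $m(t)^2 < (s+1/s)^2$. Multiplying the two estimates yields
\[
m(t)^2\kappa_2(t)\le \frac{(s^2+1)^2(s^2+4)}{s^2(s^2+2)^2}.
\]
For $s\ge 1$ this quantity is $O(1)$.

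For $s\in(0,1)$ the bound $s+1/s$ blows up, so we handle this range separately. There, $m(t)=h(s)\le h(1)$ is bounded because $h$ is increasing (alternatively, use the upper bound $\frac12(\sqrt{s^2+4}+s)$ from \Cref{FACT:ir-appx-of-imr}). Combined with $\kappa_2(t)\le 1$, this gives a constant bound on this range as well.

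The main point requiring care is the step $h(s)(h(s)-s)\ge r_3(s)(r_3(s)-s)$. This holds because the map $x\mapsto x(x-s)$ is increasing for $x>s/2$, and $h(s)>r_3(s)>s$. After that, the remaining work is the routine algebra above.
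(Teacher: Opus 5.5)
Your proof is correct and follows essentially the same route as the paper: in the left tail you bound $\abs{\kappa_1}$ by the upper convergent $r_2$ and get $\kappa_2(t)\lesssim 1/t^2$ by inserting a lower convergent into the quadratic $1-x(x-s)$, which is decreasing for $x>s/2$; elsewhere you use $\kappa_2\le 1$ together with a bounded $\phi/\Phi$. The differences are only cosmetic: the paper splits at $t=-4$ and uses $r_5$, whereas you split at $t=0$ (handling $s\in(0,1)$ separately) and use $r_3$, which yields the cleaner explicit bound $\kappa_2(t)\le \frac{s^2+4}{(s^2+2)^2}$.
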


\begin{proof}
    Recall from \Cref{fact:first-3-cumulants} that $\kappa_1(t)=-\frac{\phi(t)}{\Phi(t)}$ and $\kappa_2(t) = 1- t \frac{\phi(t)}{\Phi(t)}-\Paren{\frac{\phi(t)}{\Phi(t)}}^2$. By \Cref{claim:bound-on-kappa2}, we also have that $0 < \kappa_2(t) \leq 1$.
    To show that $\abs{\kappa_1(t)}\cdot\sqrt{\kappa_2(t)}$ is uniformly bounded, we distinguish the two cases (i) $t \geq -4$ and (ii) $t < -4$.

    \textbf{Case (i): $t \geq -4$}. In this case we have $\Phi(t) \geq \Phi(-4) > 0$. Thus, we have, using that $\kappa_2(t) \leq 1$ for all $t$,
    \[
        \abs{\kappa_1(t)}\cdot\sqrt{\kappa_2(t)}={\frac{\phi(t)}{\Phi(t)}}\cdot\sqrt{\kappa_2(t)}\leq \frac{\phi(t)}{\Phi(-4)}\leq \frac{1}{\sqrt{2\pi} \Phi(-4)}.
    \]

    \textbf{Case (ii): $t < -4$}. For this case, we want to use \Cref{FACT:rational-appx-of-imr} to bound $\kappa_1(t)$ and $\kappa_2(t)$.
    First, for $\kappa_1$ we have $\abs{\kappa_1(t)} = h(-t) \leq r_2(-t) = -t -1/t \leq 2 \abs{t}$ for $t < -4 < -1$, where, as before, $h(-t)$ is the inverse Mills ratio as in \Cref{sec:millsratio}.
    
    For $\kappa_2$, consider the quadratic function $q_t(x) = -x^2 - tx +1$.
    For $t < -4 < 0$, we can write
    \[
        \kappa_2(t) = 1 - t h(-t) - (h(-t))^2 = q_t(h(-t)).
    \]
    We have $\ddx q_t(x) = -2x -t$ and hence $q_t(x)$ is increasing in $x$ on the interval $(-\frac{t}{2},\infty)$.
    Now, using \Cref{FACT:rational-appx-of-imr}, we get
    \[
        h(-t) \geq r_5(-t) = \frac{-t\Paren{15 + 10t^2 + t^4}}{8 + 9t^2 + t^4} \geq -\frac{t}{2},
    \]
    where we used $-t > 0$ and $t^2-1 \geq 0$ for the last step. Hence, we have $q_t(h(-t)) \leq q_t(r_5(-t))$ and thus
    \[
        \kappa_2(t) = q_t(h(-t)) \leq q_t(r_5(-t)) = \frac{t^6 + 12t^4 + 39t^2 + 64}{\Paren{t^4+9t^2+8}^2} \leq \frac{4t^6}{t^8} = \frac{4}{t^2},
    \]
    where we used $t < -4$ to get $12t^4 + 39t^2 + 64 \leq 3t^6$.

    Finally, combining the bounds on $\kappa_1(t)$ and $\kappa_2(t)$, we get that, for all $t < -4$,
    \[
        \abs{\kappa_1(t)}\cdot \sqrt{\kappa_2(t)}\leq 2 \abs{t} \cdot \frac{2}{\abs{t}} = 4,
    \]
    which completes the proof.
\end{proof}

The following claim shows both $\kappa_3(t)$ and $\kappa_1(t) \cdot \kappa_3(t)$ are also uniformly bounded.
\begin{claim*}[Restatement of \Cref{claim:kappa1kappa3bounded}]\claimkappaonekappathreeboundedtext
\end{claim*}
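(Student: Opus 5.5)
The plan is to work throughout with the representation from \Cref{fact:first-3-cumulants}, writing $x = \frac{\phi(t)}{\Phi(t)} > 0$. In this notation
\[
|\kappa_3(t)| = x\,\bigl(t^2-1+3tx+2x^2\bigr), \qquad \kappa_1(t)\kappa_3(t) = x^2\,\bigl(t^2-1+3tx+2x^2\bigr),
\]
where the sign information comes from \Cref{claim:monotonicity-of-kappa2}: $\kappa_3<0$ and $\kappa_1 = -x<0$. Positivity of $\kappa_1\kappa_3$ is therefore immediate. It remains to show that both quantities are uniformly bounded on $\R$. Both are continuous, so it suffices to bound them on a compact interval $[-T,T]$ (trivial by continuity) and to control the two tails.

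\textbf{Right tail $t\geq 1$.} Here $\Phi(t)\geq \Phi(1)$, so $x\leq \phi(t)/\Phi(1)\lesssim e^{-t^2/2}$. The factor $t^2-1+3tx+2x^2$ is at most a polynomial in $t$. Hence both expressions are bounded by $\poly(t)\,e^{-t^2/2}$, which is uniformly bounded and in fact tends to $0$.

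\textbf{Left tail $t\leq -T$.} Set $s=-t>0$, so that $x = h(s)$ by the identity $\frac{\phi(-s)}{\Phi(-s)} = h(s)$. The common factor becomes
\[
f := s^2-1-3sh+2h^2 = (h-s)(2h-s) - 1.
\]
The expected obstacle is exactly here: both $x\approx s$ and $s^2$ blow up, so there is massive cancellation in $f$, and after multiplying by $x$ or $x^2\approx s^2$ one needs $f = O(1/s^2)$ with an exact constant controlled.

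To get this, I will sandwich $h$ using \Cref{FACT:rational-appx-of-imr}. From $r_3(s)<h<r_4(s)$ one has
\[
h-s\in\Bigl(\tfrac{s}{s^2+2},\ \tfrac{s^2+3}{s^3+5s}\Bigr),
\]
so $h-s = \frac1s - \frac{2}{s^3} + O(s^{-5})$, and also $2h-s = s+\frac2s+O(s^{-3})$. Multiplying, $(h-s)(2h-s) = 1 + O(s^{-2})$, giving $|f| = O(s^{-2})$.

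Care is needed in turning this into a rigorous inequality rather than an asymptotic expansion. Since $f$ is a quadratic in $h$, increasing for $h>\frac34 s$, I can bound
\[
q(r_3(s))\leq f\leq q(r_4(s)),
\]
which turns the estimate into two explicit rational functions of $s$, each $O(s^{-2})$ for $s\geq T$. Should the $r_3$/$r_4$ pair turn out too loose, the higher approximants $r_5,r_6$ would be used instead.

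\textbf{Conclusion.} With $|f|=O(s^{-2})$ on the left tail, $|\kappa_3| = hf \leq (s+1/s)\,O(s^{-2}) = O(1/s)$ and $\kappa_1\kappa_3 = h^2 f = O(1)$ there. Combined with the right-tail bound and continuity on the middle interval, this proves the claim.
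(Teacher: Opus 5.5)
Your proof is correct. Its skeleton matches the paper's: trivial bounds where $\Phi(t)$ is bounded away from $0$, and rational Mills-ratio sandwiches on the left tail. The left-tail estimate, however, is organized differently.

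For $t\le -1$, the paper bounds the terms of $h\,(2h^2+3th+t^2-1)$ separately. It uses $h<r_2(-t)$ on the positive terms and $h>r_3(-t)$ on the negative cross term $3th$, which gives the explicit bound $|\kappa_3(t)|\le 16/|t|$. The region $t>-1$ is handled directly, from boundedness of $\phi$, $t\phi$, $t^2\phi$ and $\Phi(t)\ge\Phi(-1)$, with no compactness argument.

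Your factorization $f=(h-s)(2h-s)-1$, combined with monotonicity of the quadratic on $[r_3(s),r_4(s)]$, is sharper. Evaluating the endpoints gives exactly
\[
-\frac{4}{(s^2+2)^2}\;\le\; f \;\le\; \frac{2s^2+18}{s^2(s^2+5)^2}.
\]
So $r_3,r_4$ already suffice and the fallback to $r_5,r_6$ is unnecessary. You also obtain an upper bound $|\kappa_3|\lesssim 2/s^3$ of the correct order. The price is a non-explicit constant from continuity on the middle interval, which is harmless for an $O(1)$ statement.

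Your route also makes the key point for $\kappa_1\kappa_3=h^2 f$ explicit: the growth $h^2\approx s^2$ must be offset by the decay of $f$. The paper's write-up of that step instead asserts $h(-t)\le -2t\le 2$ for $t\le -1$, and the last inequality is false there. It is repaired by combining $h(-t)\le 2|t|$ with the intermediate bound $|\kappa_3(t)|\le 16/|t|$, which recovers the stated constant $32$.
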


\begin{proof}
    We first show that $\kappa_3(t)$ is bounded on $\R$.
    By \Cref{claim:bound-on-kappa2} we have $\kappa_3(t) < 0$ and thus, using \Cref{fact:first-3-cumulants}, we get
    \[
        \abs{\kappa_3(t)}=\frac{\phi(t)}{\Phi(t)}\cdot\Paren{2\Paren{\frac{\phi(t)}{\Phi(t)}}^2+3t\frac{\phi(t)}{\Phi(t)}+t^2-1}.
    \]
    We distinguish the two cases (i) $t > -1$ and (ii) $t \leq -1$.

    \textbf{Case (i): $t > -1$}.
    Note that $\phi(t)$, $t\phi(t)$ and $t^2\phi(t)$ are all uniformly bounded over $t \in \R$.
    For $t > -1$ we furthermore have $\frac{1}{\Phi(t)} \leq \frac{1}{\Phi(-1)} \leq O(1)$.
    Putting these together, we get that for~$t > -1$ we have $\abs{\kappa_3(t)} \leq O(1)$.

    \textbf{Case (ii): $t \leq -1$}. For this case, we want to again use \Cref{FACT:rational-appx-of-imr}. For $t < -1 < 0$, we again use that $\frac{\phi(t)}{\Phi(t)} = h(-t)$. By \Cref{FACT:rational-appx-of-imr}, we get that
    \[
        \frac{-t \cdot (t^2 + 3)}{t^2 + 2}= r_3(-t) < h(-t) =\frac{\phi(t)}{\Phi(t)}  < r_2(-t) = -t - \frac{1}{t}.
    \]
    We thus get that
    \begin{align*}
        \abs{\kappa_3(t)} &= h(-t) \cdot \Paren{2h(-t)^2 + 3t \cdot h(-t) + t^2 -1}\\
        &\leq \Paren{-t - \frac{1}{t}} \cdot \Paren{2 \Paren{-t - \frac{1}{t}}^2 + 3t \cdot \frac{-t \cdot (t^2 + 3)}{t^2 + 2} + t^2 -1}\\
        &= \Paren{-t - \frac{1}{t}} \cdot \frac{2(-t^2-1)^2(t^2+2)-3t^4(t^2+3)+t^2(t^2-1)(t^2+2)}{t^2(t^2+2)}\\
        &= \Paren{-t - \frac{1}{t}} \cdot 4 \cdot \frac{2t^2 +1}{t^2(t^2+2)}\\
        &\leq -2t \cdot \frac{8}{t^2} \leq 16,
    \end{align*}
    where we used $-t-1/t \leq -2t$ and $2t^2+1 \leq 2(t^2+2)$ for $t \leq -1$.

    Hence, $\abs{\kappa_3(t)}$ is uniformly bounded over $t \in \R$.
    For $\kappa_1(t) \cdot \kappa_3(t)$, first note that by \Cref{fact:first-3-cumulants} we have $\kappa_1(t) = - \frac{\phi(t)}{\Phi(t)} < 0$ and thus $\kappa_1(t) \cdot \kappa_3(t) > 0$. To show a uniform upper bound, we distinguish the same two cases as above and use a similar strategy. Note that we can use the same bounds and multiply them by $\frac{\phi(t)}{\Phi(t)}$. If $t > -1$, we have
    \[
        \frac{\phi(t)}{\Phi(t)} \leq \frac{\phi(0)}{\Phi(-1)} \leq O(1),
    \]
    and thus $\kappa_1(t) \cdot \kappa_3(t) \leq O(1)$ for these $t$. If $t \leq -1$, again by \Cref{FACT:rational-appx-of-imr}, we get
    \[
        \frac{\phi(t)}{\Phi(t)} = h(-t) \leq -t - \frac{1}{t} \leq -2t \leq 2
    \]
    and thus $\kappa_1(t) \cdot \kappa_3(t) \leq 32$ for these $t$.
    Hence, also $\kappa_1(t) \cdot \kappa_3(t)$ is uniformly bounded, which completes the proof.
\end{proof}

We now prove how to lower bound $\kappa_1$ in terms of $1-\kappa_2$.
\begin{claim*}[Restatement of \Cref{claim:boundkappa1bykappa2}]\claimboundkappaonebykappatwotext
\end{claim*}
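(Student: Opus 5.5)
The plan is to rewrite both sides of the claimed inequality as explicit expressions in the single quantity $r(t) \coloneqq \frac{\phi(t)}{\Phi(t)}$, using \Cref{fact:first-3-cumulants}, and then compare them directly. No Mills-ratio approximations should be needed, because the hypothesis $t \geq 1$ keeps us away from the regime $t \to -\infty$. For $t \geq 1$ we are in the opposite regime, where $r(t) \approx \phi(t)$ is tiny.

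First, by \Cref{fact:first-3-cumulants} we have $\kappa_1(t) = -r(t)$. Since $\phi(t) > 0$ and $\Phi(t) > 0$, this gives $r(t) > 0$ and hence $\abs{\kappa_1(t)} = r(t)$. Second, the same fact gives $\kappa_2(t) = 1 - t\,r(t) - r(t)^2$. Rearranging yields the identity
\[
    1 - \kappa_2(t) = t\,r(t) + r(t)^2 = r(t)\,\bigl(t + r(t)\bigr).
\]
This identity was already used in the proofs of \Cref{claim:bound-on-kappa2} and \Cref{claim:1-lipschitz-k1}. The claim $\abs{\kappa_1(t)} \leq 1 - \kappa_2(t)$ therefore becomes $r(t) \leq r(t)\,(t + r(t))$. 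Dividing by $r(t) > 0$, this is equivalent to
\[
    t + r(t) \geq 1.
\]

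Finally, this last inequality holds for every $t \geq 1$, since $t \geq 1$ and $r(t) > 0$. In fact it holds strictly, giving $\abs{\kappa_1(t)} < 1 - \kappa_2(t)$. There is no real obstacle; the only point requiring care is the sign bookkeeping, namely that $\kappa_1 < 0$ so that $\abs{\kappa_1} = r$, and that $r > 0$ so that the division is legitimate.

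For context, the hypothesis $t \geq 1$ is essentially what the argument needs. For very negative $t$, \Cref{FACT:rational-appx-of-imr} shows $r(t) = h(-t) \in (-t, -t - 1/t)$, so $t + r(t) < -1/t$, which tends to $0$. The inequality $t + r(t) \geq 1$ therefore fails there, and the claim is genuinely a statement about the large-$t$ regime. This is exactly how it is used in the case $\alpha$ close to $1$ in the proof of \Cref{thm:intersec-of2orthogonal-halfspaces}.
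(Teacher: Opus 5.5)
Your proof is correct and is essentially the paper's argument. The paper writes $\abs{\kappa_1(t)} = r(t) \leq t\,r(t) \leq t\,r(t) + r(t)^2 = 1-\kappa_2(t)$, using $t \geq 1$ and $r(t) > 0$; this is the same comparison you make after factoring out $r(t)$.
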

\begin{proof}
    Using \Cref{fact:first-3-cumulants}, we have
    \[
        \Abs{\kappa_1(t)} = \frac{\phi(t)}{\Phi(t)} \leq t \cdot \frac{\phi(t)}{\Phi(t)} \leq  t \cdot \frac{\phi(t)}{\Phi(t)} + \Paren{\frac{\phi(t)}{\Phi(t)}}^2 = 1-\kappa_2(t),
    \]
    which completes the proof.
\end{proof}

Next, we prove the following lower bound on $\kappa_2$.

\begin{claim*}[Restatement of \Cref{claim:kappa2-lower}]\claimkappatwolowertext
\end{claim*}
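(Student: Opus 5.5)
We prove the bound $\kappa_2(t)\geq \frac{1}{4t^2}$ for $t\leq -3$ with the same device used in the proof of \Cref{claim:kappa1kappa2bounded}. Set $s=-t\geq 3$ and write $\frac{\phi(t)}{\Phi(t)}=h(s)$. By \Cref{fact:first-3-cumulants},
\[
\kappa_2(t)=1+s\,h(s)-h(s)^2 .
\]
Let $Q_s(x)=1+sx-x^2$, so that $\kappa_2(t)=Q_s(h(s))$. Since $\frac{\mathrm{d}}{\mathrm{d}x}Q_s(x)=s-2x$, the function $Q_s$ is decreasing on $(s/2,\infty)$. Hence an upper bound on $h(s)$ lying in this range gives a lower bound on $\kappa_2(t)$. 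This reverses the previous claim, where a lower bound on $h$ gave an upper bound on $\kappa_2$.

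For the upper bound on $h(s)$ we take $r_4(s)=\frac{s^4+6s^2+3}{s^3+5s}$ from \Cref{FACT:rational-appx-of-imr}. Using $r_2$ instead is too weak: $Q_s(s+1/s)=-1/s^2<0$. We have $h(s)>r_1(s)=s>s/2$, so the whole interval $[h(s),r_4(s)]$ lies where $Q_s$ is decreasing. Therefore $\kappa_2(t)\geq Q_s(r_4(s))$.

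It remains to compute $Q_s(r_4(s))$ as a rational function and check that it is at least $\frac{1}{4s^2}$ for $s\geq 3$. Writing $D=s^3+5s$ and $N=s^4+6s^2+3$, we have
\[
Q_s(r_4)=\frac{D^2+sND-N^2}{D^2}.
\]
Expanding the numerator, the leading terms cancel:
\begin{itemize}
\item $D^2=s^6+10s^4+25s^2$;
\item $sND=s^8+11s^6+38s^4+15s^2$;
\item $N^2=s^8+12s^6+42s^4+36s^2+9$.
\end{itemize}
Summing with the appropriate signs gives the numerator $6s^4+4s^2-9$, so
\[
Q_s(r_4(s))=\frac{6s^4+4s^2-9}{s^2(s^2+5)^2}.
\]

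The main obstacle is this last inequality, which is elementary. It is equivalent to
\[
4(6s^4+4s^2-9)\geq (s^2+5)^2 ,
\]
that is, $23s^4+6s^2-61\geq 0$. This holds for $s\geq 3$, and indeed for all $s\geq 2$. I would double-check the expansion, possibly with the Mathematica code the paper uses elsewhere. The argument in fact gives a considerably better constant than $\frac14$.
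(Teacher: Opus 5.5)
Your argument is the paper's: the same quadratic in $h(s)$, the same observation that it decreases past $s/2$ while $h(s)>s$, and the same evaluation at the upper bound $r_4(s)$. However, your expansion contains an arithmetic error. We have $(s^4+6s^2+3)(s^3+5s)=s^7+11s^5+33s^3+15s$, so $sND=s^8+11s^6+33s^4+15s^2$, not $38s^4$. The numerator is therefore $s^4+4s^2-9$, which gives
\[
Q_s(r_4(s))=\frac{s^4+4s^2-9}{s^2(s^2+5)^2},
\]
exactly the paper's expression.

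As displayed, your formula would be a false lower bound on $\kappa_2$. It behaves like $6/s^2$, whereas $\kappa_2(t)\sim 1/t^2$ as $t\to-\infty$. At $s=3$ it gives $513/1764\approx 0.29$, while $\kappa_2(-3)\approx 0.07$. The conclusion still survives the correction: $4(s^4+4s^2-9)\geq (s^2+5)^2$ is equivalent to $3s^4+6s^2-61\geq 0$, which holds for $s\geq 2$, so the proof is valid once repaired.

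Your closing remark about a ``considerably better constant'' is wrong. The corrected bound is asymptotically $1/s^2$, which matches the true asymptotics of $\kappa_2$. It is therefore only about a factor of $4$ better than $\frac{1}{4s^2}$, and no bound of order $6/s^2$ is possible. The double-check of the expansion that you proposed would have caught this.
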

\begin{proof}
    We use an analogous strategy as in the second part of the proof of \Cref{claim:kappa1kappa2bounded}, where we showed an upper bound on $\kappa_2(t)$ for $t < -4$.
    Using \Cref{FACT:rational-appx-of-imr} for $-t$, we get that, for $t \leq -3 < 0$,
    \[
        -t = r_1(-t) < h(-t) < r_4(-t) = - \frac{t^4 + 6t^2 + 3}{t(t^2+5)}.
    \]
    Now, consider again the quadratic function $q_t(x)=-x^2-tx+1$, which decreases on the interval $(-t/2,\infty)$.
    As before, we have $\kappa_2(t) = q_t(h(-t))$ and thus we get that
    \[
        \kappa_2(t) = q_t(h(-t)) \geq q_t(r_4(-t)) = -\Paren{- \frac{t^4 + 6t^2 + 3}{t(t^2+5)}}^2 - t\Paren{- \frac{t^4 + 6t^2 + 3}{t(t^2+5)}} + 1.
    \]
    Simplifying this expression, we can conclude that
    \[
        \kappa_2(t) \geq \frac{t^4+4t^2-9}{t^6+10t^4+25t^2} \geq \frac{t^4}{4t^6} = \frac{1}{4t^2},
    \]
    where we used $4t^2-9 \geq 0$ and $10t^4+25t^2 \leq 3t^6$ for $t \leq -3$.
\end{proof}

Similarly, we prove the following lower bound on $\kappa_3$.

\begin{claim*}[Restatement of \Cref{claim:kappa3-lower}]\claimkappathreelower
\end{claim*}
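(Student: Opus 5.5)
We follow the same template as the proofs of \Cref{claim:kappa1kappa2bounded} and \Cref{claim:kappa2-lower}. For $t\le -2$ we write $x=-t\ge 2$ and $h=h(x)=\frac{\phi(t)}{\Phi(t)}$. By \Cref{claim:monotonicity-of-kappa2} we know $\kappa_3(t)<0$, and \Cref{fact:first-3-cumulants} then gives
\[
    \abs{\kappa_3(t)} = h\cdot\Paren{2h^2-3xh+x^2-1} = h\cdot c_x(h), \qquad c_x(y)\coloneqq 2y^2-3xy+x^2-1.
\]
The task is therefore to lower bound $h\cdot c_x(h)$ using two-sided rational bounds $r_{2m-1}(x)<h<r_{2m}(x)$ from \Cref{FACT:rational-appx-of-imr}.

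The quadratic $c_x$ has roots $\frac{3x\pm\sqrt{x^2+8}}{4}$, so it is increasing on $y>\frac{3x}{4}$. Since $h>x>\frac{3x}{4}$, we may replace $h$ inside $c_x$ by any rational lower bound $r_{2m-1}(x)$, and the outer factor $h$ by $x$. This gives
\[
    \abs{\kappa_3(t)} \ge x\cdot c_x\Paren{r_{2m-1}(x)}.
\]

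The main obstacle is cancellation. The true value of $c_x(h)$ is of order $x^{-4}$, because $\abs{\kappa_3}\sim 2/x^3$ while $h\approx x$. Crude bounds are useless here: with $r_1=x$ we get $c_x(x)=-1<0$, and with $r_3=\frac{x(x^2+3)}{x^2+2}$ a direct computation gives
\[
    c_x(r_3) = \frac{2x^2(x^2+3)^2-3x^2(x^2+3)(x^2+2)+(x^2-1)(x^2+2)^2}{(x^2+2)^2}.
\]
Expanding the numerator, the $x^6$ terms cancel ($2-3+1=0$), the $x^4$ terms cancel ($12-15+3=0$), and the $x^2$ terms cancel as well ($18-18+0=0$). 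What remains is the constant $-4$, so $c_x(r_3)=-4/(x^2+2)^2<0$, which is still insufficient.

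We therefore go to $r_5(x)=\frac{x(x^4+10x^2+15)}{x^4+9x^2+8}$ and compute $c_x(r_5)$ explicitly; it could be checked in Mathematica as in \Cref{APP:computations}. We expect the numerator to collapse to a positive polynomial of degree about $4$, such as $2(x^4+\dots)$ or similar, over $(x^4+9x^2+8)^2$. That would give $c_x(r_5)\gtrsim x^{-4}$, and hence
\[
    \abs{\kappa_3(t)} \ge x\, c_x(r_5) \ge \frac{1}{50x^3} \qquad \text{for } x\ge 2.
\]
Here the constant $50$ comes from comparing leading coefficients and bounding the lower-order terms using $x\ge2$.

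If $r_5$ still yields a nonpositive or too-small expression, we would move to $r_7$ with the same monotonicity argument. Only the algebra gets heavier; the structure of the argument stays unchanged. The last step is to verify the final polynomial inequality for $x\ge 2$ by absorbing lower-order terms, as done in \Cref{claim:kappa2-lower}.
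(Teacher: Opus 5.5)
Your plan is the paper's proof: write $\abs{\kappa_3(t)}=h\,c_x(h)$, use that $c_x$ is increasing on $y>3x/4$ together with $h>r_5(x)\ge x$, and then bound the outer factor by $h\ge x$. The monotonicity step needs $r_5(x)\ge x$, which holds because it reduces to $x^2+7\ge 0$.

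The computation you left as an expectation is the actual content of the claim, and it closes at $r_5$ with no $r_7$ fallback. One gets $c_x(r_5(x))=\frac{2x^4+10x^2-64}{(x^4+9x^2+8)^2}$. This is positive on $x\ge 2$ only barely, since the numerator vanishes near $x\approx 1.92$. For $x\ge 2$, the inequalities $10x^2-64\ge-\tfrac32 x^4$ and $9x^2+8\le 4x^4$ give $c_x(r_5(x))\ge\frac{1}{50x^4}$, and hence $\abs{\kappa_3(t)}\ge\frac{1}{50\abs{t}^3}$.
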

\begin{proof}
    We refine the argument from \Cref{claim:monotonicity-of-kappa2}, where we showed that $\kappa_3(t) > 0$.
    \Cref{claim:monotonicity-of-kappa2}, together with \Cref{fact:first-3-cumulants}, shows that
    \[
        |\kappa_3(t)|=-\kappa_3(t)= h(-t)\left(t^2-1+3th(-t)+2\Paren{h(-t)}^2\right),
    \]
    where $h(-t)$ is again the inverse Mills ratio.
    From \Cref{FACT:rational-appx-of-imr} we have
    \[
        h(-t) > r_5(-t) = -\frac{t(t^4+10t^2+15)}{t^4+9t^2+8} \geq -t.
    \]
    Consider the quadratic function $q_t(x) = 2x^2 + 3tx + t^2-1$.
    We have $|\kappa_3(t)| = h(-t) q_t(h(-t))$.
    As already argued in the proof of \Cref{claim:monotonicity-of-kappa2}, this function is increasing in $x$ for $x > -\frac{3}{4}t$.
    Thus, we get that
    \[
        q_t(h(-t)) \geq q_t(r_5(-t)) = \frac{2t^4 + 10t^2-64}{(t^4+9t^2+8)^2} \geq \frac{1}{50t^4},
    \]
    where we used $10t^2-64 \geq -\frac{3}{2}t^4$ and $9t^2+8 \leq 4t^4$ for $t \leq -2$.
    Finally, again by \Cref{FACT:rational-appx-of-imr}, we have that $h(-t) \geq -t$ and thus
    \[
        |\kappa_3(t)| = h(-t) q_t(h(-t)) \geq -t \cdot \frac{1}{50t^4} = \frac{1}{50|t|^3}
    \]
    since $t < 0$ and thus $-t^3 = |t|^3$.
\end{proof}

Next, we prove the following lower bound on $\kappa_2(t)$.
\begin{claim*}[Restatement of \Cref{claim:one-minus-k2}]\claimoneminusktwo
\end{claim*}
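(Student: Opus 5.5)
The plan is to bound $1-\kappa_2(t)$ directly from its closed form and then convert the resulting bound into the stated form using $1-\Phi(t)$. By \Cref{fact:first-3-cumulants}, for $t\geq 1$ we have
\[
    1-\kappa_2(t)=t\frac{\phi(t)}{\Phi(t)}+\Paren{\frac{\phi(t)}{\Phi(t)}}^2 .
\]
Since $t\geq 1$, we have $\Phi(t)\geq \Phi(1)\geq 1/2$, so $\frac{\phi(t)}{\Phi(t)}\leq 2\phi(t)$. Hence
\[
    1-\kappa_2(t)\leq 2t\phi(t)+4\phi(t)^2 .
\]
The second term is harmless: $\phi(t)\leq \phi(0)<1$, so $4\phi(t)^2\leq 4\phi(t)\leq 4t\phi(t)$. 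This gives $1-\kappa_2(t)\leq 6t\phi(t)$.

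Next, I would relate $\phi(t)$ to $1-\Phi(t)$ using \Cref{FACT:rational-appx-of-imr}, which states $h(t)<t+1/t\leq 2t$ for $t\geq 1$. This yields $\phi(t)\leq 2t(1-\Phi(t))$ and therefore
\[
    1-\kappa_2(t)\leq 12\,t^2(1-\Phi(t)) .
\]
It remains to bound $t^2$ by $\log\frac{1}{1-\Phi(t)}$. For $t\geq 2$, \Cref{claim:log-alpha-and-gamma} gives $t^2\leq 2\log\frac{1}{1-\Phi(t)}$, which produces a constant of $24$.

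The constants are the main obstacle. To reach the claimed constant $16$ I would tighten the bounds above: use $\Phi(t)\geq\Phi(1)\approx 0.84$ instead of $1/2$, and use $t+1/t$ directly instead of $2t$. I would also try to prove the lower bound $\log\frac{1}{1-\Phi(t)}\geq t^2/2$ on all of $t\geq 1$, not only $t\geq 2$. This should follow from the Chernoff bound $1-\Phi(t)\leq e^{-t^2/2}$ for $t\geq 0$, which gives $t^2\leq 2\log\frac{1}{1-\Phi(t)}$ everywhere. With these refinements the chain becomes
\[
    1-\kappa_2(t)\leq \frac{t(t+1/t)+(t+1/t)^2(1-\Phi(t))}{\Phi(t)}\cdot(1-\Phi(t)).
\]
Bounding the numerator by roughly $c\,t^2$, with $c\cdot 2/\Phi(1)\leq 16$ for $t\geq 1$, should finish the proof. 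Exactly this is what I would try first. If the constant still does not fit near $t=1$, I would check the interval $[1,2]$ numerically, in the style of \Cref{APP:computations}, since both sides are explicit continuous functions there.
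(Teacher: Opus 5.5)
Your proposal is correct and is essentially the paper's proof: write $1-\kappa_2(t)=\frac{\phi(t)}{\Phi(t)}\bigl(t+\frac{\phi(t)}{\Phi(t)}\bigr)$, use $\Phi(t)\ge 1/2$, convert via $\phi(t)\le 2t(1-\Phi(t))$, and finish with $t^2\le 2\log\frac{1}{1-\Phi(t)}$.

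The paper reaches $16$ directly by absorbing the square term more tightly. Since $2\phi(t)\le\sqrt{2/\pi}\le t$, it gets $4\phi(t)^2\le 2t\phi(t)$, hence $1-\kappa_2(t)\le 4t\phi(t)\le 8t^2(1-\Phi(t))$. It obtains $\log\frac{1}{1-\Phi(t)}\ge t^2/2$ on all of $t\ge 1$ from $h(t)\ge t$, i.e., $\frac{1}{1-\Phi(t)}\ge \frac{t}{\phi(t)}\ge e^{t^2/2}$; this plays the role of your Chernoff bound.

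Two small cautions. First, do not invoke \Cref{claim:log-alpha-and-gamma} here: the paper proves exactly the direction you need by referring back to this claim, so that would be circular. Your Chernoff route avoids the circularity and also covers $t\in[1,2)$. Second, in your refined display the quadratic term should be $(t+1/t)^2\,\frac{1-\Phi(t)}{\Phi(t)}$ rather than $(t+1/t)^2(1-\Phi(t))$, because $(\phi/\Phi)^2$ puts $\Phi(t)^2$ in the denominator. With this correction your chain still gives a constant below $7$, so the conclusion is unaffected.
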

\begin{proof}
    Note that for $t\geq 1$, we have $\Phi(t)\geq\frac{1}{2}$ as well as $2\phi(t)\leq \sqrt{\frac{2}{\pi}}\leq t$.
    It follows that, using \Cref{fact:first-3-cumulants},
    \begin{align*}
        1-\kappa_2(t) &=t\frac{\phi(t)}{\Phi(t)}+\Paren{\frac{\phi(t)}{\Phi(t)}}^2 = \frac{\phi(t)}{\Phi(t)}\Paren{t+\frac{\phi(t)}{\Phi(t)}} \leq 2\phi(t)\cdot\Paren{t+2\phi(t)} \leq 4t\cdot\phi(t).
    \end{align*}
    Applying \Cref{FACT:rational-appx-of-imr}, we have $\frac{\phi(t)}{1-\Phi(t)} \leq t + \frac{1}{t} \leq 2t$ and thus $\phi(t) \leq 2t (1-\Phi(t))$. We get that
    \[
        1-\kappa_2(t) \leq 8 t^2 \cdot (1-\Phi(t)).
    \]
    Applying \Cref{FACT:rational-appx-of-imr} again, we have $\frac{\phi(t)}{1-\Phi(t)} \geq t$ and hence
    \[
        \frac{1}{1-\Phi(t)} \geq \frac{t}{\phi(t)} \geq t\sqrt{2\pi}\cdot e^{t^2/2}\geq  e^{t^2/2} \quad\implies\quad \log\Paren{\frac{1}{1-\Phi(t)}} \geq t^2/2.
    \]
    Combining this with the above, we get
    \[
        1-\kappa_2(t) \leq 16 \cdot (1-\Phi(t)) \cdot \log\Paren{\frac{1}{1-\Phi(t)}},
    \]
    as claimed.
\end{proof}

We now prove the following claim bounding $\log\Paren{1/(1-\Phi(t))}$.

\begin{claim*}[Restatement of \Cref{claim:log-alpha-and-gamma}]\claimlogalphaandgammatext    
\end{claim*}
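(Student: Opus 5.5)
The two inequalities follow from the Mills ratio sandwich of \Cref{FACT:rational-appx-of-imr} at order $k=0,1$, namely $t < h(t) < t + \frac{1}{t}$ for $t>0$, where $h(t) = \frac{\phi(t)}{1-\Phi(t)}$. Rearranging it gives
\[
    \frac{\phi(t)}{t+1/t} < 1-\Phi(t) < \frac{\phi(t)}{t}.
\]
Taking $-\log$ of this turns each side into $\frac{t^2}{2}$ plus a logarithmic correction, and the plan is to show that for $t\ge 2$ this correction is small enough for both claimed inequalities.

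For the lower bound, I will use $1-\Phi(t) < \phi(t)/t$, exactly as in the proof of \Cref{claim:one-minus-k2}. It gives
\[
    \frac{1}{1-\Phi(t)} > \frac{t}{\phi(t)} = t\sqrt{2\pi}\, e^{t^2/2} \ge e^{t^2/2},
\]
where the last step uses $t\sqrt{2\pi}\ge 1$, valid for $t\ge 2$. Hence $\log(1/(1-\Phi(t)))\ge t^2/2$.

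For the upper bound, I will use $1-\Phi(t) > \phi(t)/(t+1/t)$. Since $t\ge 1$ implies $t + 1/t \le 2t$, this yields $1-\Phi(t) \ge \phi(t)/(2t)$. Therefore
\[
    \log\frac{1}{1-\Phi(t)} \le \frac{t^2}{2} + \log\Paren{2\sqrt{2\pi}\,t}.
\]
It remains to show $\log(2\sqrt{2\pi}\,t) \le \frac{3}{2}t^2$ for $t\ge 2$. At $t=2$ the left side is $\log(4\sqrt{2\pi}) \approx 2.31$, while the right side is $6$. On $t \ge 2$ the left side has derivative $1/t \le 1/2$ and the right side has derivative $3t \ge 6$, so the inequality persists for all larger $t$. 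Combining the two bounds gives $\frac{t^2}{2}\le \log(1/(1-\Phi(t)))\le 2t^2$.

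I expect no real obstacle here. The only points needing care are checking the elementary numeric inequalities at the endpoint $t=2$ and getting the direction of each Mills bound right.
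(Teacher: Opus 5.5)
Your proof is correct and takes essentially the same route as the paper: both directions come from the Mills-ratio sandwich $t < h(t) < t + \frac{1}{t}$, and your lower bound is exactly the argument from the proof of \Cref{claim:one-minus-k2}. The only cosmetic difference is in the upper bound, where you use $t + \frac{1}{t} \le 2t$ and check $\log(2\sqrt{2\pi}\,t) \le \frac{3}{2}t^2$ via the endpoint value and a derivative comparison, whereas the paper uses $t + \frac{1}{t} \le t+1$, then $\log(t+1) \le t$, and then $t + \log\sqrt{2\pi} \le \frac{3}{2}t^2$.
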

\begin{proof}
    Using \Cref{FACT:rational-appx-of-imr} we have
    \[
        \frac{1}{1-\Phi(t)} \leq \frac{1}{\phi(t)} \cdot \Paren{t + \frac{1}{t}}\leq \frac{1}{\phi(t)} \cdot (t+1)= \sqrt{2\pi}e^{t^2/2} \cdot \Paren{t + 1},
    \]
    where we used $\frac{1}{t} \leq 1$ for $t \geq 2 \geq 1$.
    Thus, we get that
    \begin{align*}
        \log\Paren{\frac{1}{1-\Phi(t)}} &\leq \frac{t^2}{2} + \log\Paren{t + 1} + \log\Paren{\sqrt{2\pi}}\\
        &\leq \frac{t^2}{2} + t + \log\Paren{\sqrt{2\pi}}\\
        &\leq 2 t^2, 
    \end{align*}
    where we used $t > 0$ for $\log(t+1) \leq t$ and $t \geq 2$ for $t + \log\Paren{\sqrt{2\pi}} \leq \frac{3}{2} t^2$.
    The other direction was already proved in \Cref{claim:one-minus-k2} for $t \geq 1$.
\end{proof}

Finally, we want to prove the following claim about the asymptotics of $\sk(t)$.

\begin{claim*}[Restatement of \Cref{claim:skew-asymptotic}]\claimskewasymptotictext
\end{claim*}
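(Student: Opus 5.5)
The plan is to study the two factors of $\sk(t)=\kappa_3(t)/\kappa_2(t)^{3/2}$ separately for $t\ge 0$. The estimate splits into a compact range $t\in[0,T_0]$ and a tail range $t\ge T_0$, with a constant $T_0\ge 2$ fixed later.

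\textbf{Compact range.} On $[0,T_0]$ the function
\[
t\mapsto \frac{\abs{\sk(t)}}{(1-\Phi(t))\log^{3/2}(1/(1-\Phi(t)))}
\]
is continuous. It is also strictly positive. The numerator is nonzero because $\kappa_3(t)<0$ by \Cref{claim:monotonicity-of-kappa2} and $0<\kappa_2(t)\le 1$ by \Cref{claim:bound-on-kappa2}. The denominator is positive and finite, since $1-\Phi(t)\in(0,1/2]$ and so $\log(1/(1-\Phi(t)))\ge\log 2>0$. Hence on this compact interval the ratio is bounded above and below by positive constants. This step only needs continuity and positivity.

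\textbf{Tail range.} For $t\ge T_0$, $\kappa_2(t)$ is bounded below by $\kappa_2(2)>0$, using monotonicity from \Cref{claim:monotonicity-of-kappa2}. It is bounded above by $1$. So $\abs{\sk(t)}\asymp\abs{\kappa_3(t)}$ up to absolute constants. Next I write
\[
\abs{\kappa_3(t)}=\frac{\phi(t)}{\Phi(t)}\bigl(t^2-1+3t\tfrac{\phi(t)}{\Phi(t)}+2(\tfrac{\phi(t)}{\Phi(t)})^2\bigr).
\]
For $t\ge 2$ we have $\Phi(t)\in[\Phi(2),1]$, so $\phi(t)/\Phi(t)\asymp\phi(t)$. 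Moreover $t\,\phi(t)$ and $\phi(t)^2$ are bounded by constants, and $t^2-1\ge \tfrac34 t^2$. Together these give $\abs{\kappa_3(t)}\asymp t^2\phi(t)$, with the lower bound already appearing in the proof of \Cref{claim:skew-est}. Then \Cref{FACT:rational-appx-of-imr} gives $t\le h(t)\le t+1/t\le 2t$, so $\phi(t)\asymp t(1-\Phi(t))$ and therefore $\abs{\kappa_3(t)}\asymp t^3(1-\Phi(t))$. Finally, \Cref{claim:log-alpha-and-gamma} gives $t^2/2\le\log(1/(1-\Phi(t)))\le 2t^2$, so $t^3\asymp\log^{3/2}(1/(1-\Phi(t)))$. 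Combining these yields the claimed two-sided bound for $t\ge T_0$.

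\textbf{Main obstacle.} The only delicate point is the lower bound for the bracket in $\abs{\kappa_3(t)}$. The terms $3t\phi/\Phi$ and $2(\phi/\Phi)^2$ are positive, so discarding them only lowers the bracket, and for $t\ge 2$ the remaining $t^2-1$ is already of order $t^2$. For the upper bound I need these same terms to be $O(1)\le O(t^2)$; this follows from the boundedness of $t\phi(t)$ and $\phi(t)$ for $t\ge 2$. Taking $T_0=2$ and choosing $m,M$ as the minimum and maximum of the constants from the two ranges completes the claim.
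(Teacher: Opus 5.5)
Your proof is correct and follows essentially the same route as the paper. Both arguments use continuity and strict positivity of the ratio on a compact interval. In the tail, both use Mills-ratio estimates giving $\abs{\kappa_3(t)}\asymp t^3(1-\Phi(t))\asymp (1-\Phi(t))\log^{3/2}(1/(1-\Phi(t)))$, with $\kappa_2$ bounded away from $0$.

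The only difference is packaging. The paper cites the tail lower bound from \Cref{claim:skew-est} and the upper bound from the auxiliary \Cref{claim:kappa3-upper}. Their proofs are exactly the computations you carry out inline.
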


In order to prove this claim, we first prove the following upper bound on $\kappa_3(t)$.

\begin{claim}[Upper bound on $\kappa_3(t)$]\label{claim:kappa3-upper}
    For $t\geq 1$ and $\delta = 1-\Phi(t)$, we have that
    \[
        \abs{\kappa_3(t)}\leq 40\sqrt{2}\cdot\delta \cdot \log^{3/2}(1/\delta).
    \]
\end{claim}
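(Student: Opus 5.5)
We need $|\kappa_3(t)|\le 40\sqrt2\,\delta\log^{3/2}(1/\delta)$ for $t\ge1$, where $\delta=1-\Phi(t)$. The plan is to bound $|\kappa_3(t)|$ by a polynomial in $t$ times $\phi(t)$, convert $\phi(t)$ into $t\delta$, and then convert powers of $t$ into powers of $\log(1/\delta)$.

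\textbf{Step 1: reduce to a polynomial times $\phi(t)$.} By \Cref{claim:monotonicity-of-kappa2} and \Cref{fact:first-3-cumulants}, writing $x=\frac{\phi(t)}{\Phi(t)}$, we have
\[
|\kappa_3(t)| = x\Paren{t^2-1+3tx+2x^2}.
\]
For $t\ge1$ we have $\Phi(t)\ge \frac12$, so $x\le 2\phi(t)\le \sqrt{2/\pi}\le t$. This is the same estimate used in the proof of \Cref{claim:one-minus-k2}. Substituting gives
\[
t^2-1+3tx+2x^2\le t^2+3t^2+2t^2=6t^2,
\]
and hence $|\kappa_3(t)|\le 2\phi(t)\cdot 6t^2=12t^2\phi(t)$.

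\textbf{Step 2: replace $\phi(t)$ by $\delta$.} By \Cref{FACT:rational-appx-of-imr}, $\frac{\phi(t)}{1-\Phi(t)}\le t+\frac1t\le 2t$ for $t\ge1$. Therefore $\phi(t)\le 2t\delta$, and combining with Step 1,
\[
|\kappa_3(t)|\le 24\,t^3\delta.
\]

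\textbf{Step 3: replace $t$ by $\log(1/\delta)$.} The proof of \Cref{claim:one-minus-k2} shows that $\log(1/\delta)\ge t^2/2$ for all $t\ge1$. This gives $t^3\le 2^{3/2}\log^{3/2}(1/\delta)$, and so
\[
|\kappa_3(t)|\le 24\cdot 2\sqrt2\,\delta\log^{3/2}(1/\delta)=48\sqrt2\,\delta\log^{3/2}(1/\delta).
\]

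\textbf{Main obstacle: the constant.} Step 3 produces $48\sqrt2$, while the claim asks for $40\sqrt2$, so the crude bounds in Step 1 must be tightened. In that step, $x\le t$ actually holds with margin, since $x\le\sqrt{2/\pi}\approx0.8$. Using $x\le 0.8\,t$ gives
\[
t^2-1+3tx+2x^2\le t^2\Paren{1+2.4+1.28}=4.68\,t^2.
\]
The resulting constant is $2\cdot4.68\cdot2\cdot2\sqrt2\approx 37.4\sqrt2$, which is below $40\sqrt2$. If more room were needed, one could additionally keep the $-1$ term or use $t+1/t$ directly, but the estimate above already suffices.
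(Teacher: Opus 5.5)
Your proof is correct and follows the paper's argument step for step. You bound $|\kappa_3(t)|$ by a constant times $t^2\phi(t)$ using $\Phi(t)\ge\frac12$, pass to order $t^3\delta$ via $\frac{\phi(t)}{1-\Phi(t)}\le t+\frac1t\le 2t$, and finish with $t\le\sqrt{2\log(1/\delta)}$. The only difference is how the lower-order terms are absorbed: your $\frac{\phi(t)}{\Phi(t)}\le 0.8\,t$ gives $9.36\,t^2\phi(t)$ and hence about $37.4\sqrt2$, while the paper uses $\frac{6}{\sqrt{2\pi}}t+\frac{8}{2\pi}\le 4t^2$ to get $10\,t^2\phi(t)$ and exactly $40\sqrt2$. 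Your write-up should therefore present the tightened Step~1 directly, not the first pass that only reaches $48\sqrt2$.
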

\begin{proof}[Proof of \Cref{claim:kappa3-upper}]
    For $t \geq 1$, we have as before (cf. \Cref{fact:first-3-cumulants} and \Cref{claim:monotonicity-of-kappa2})
    \[
        \Abs{\kappa_3(t)}=\frac{\phi(t)}{\Phi(t)}\Paren{t^2-1+3t\cdot \frac{\phi(t)}{\Phi(t)}+2\Paren{\frac{\phi(t)}{\Phi(t)}}^2}.
    \]
    For $t \geq 1$, we have $\Phi(t) \geq \Phi(0) \geq \frac{1}{2}$ and thus
    \begin{align*}
        \Abs{\kappa_3(t)} &\leq 2\phi(t)\Paren{t^2-1+6t \cdot \phi(t) +8\phi(t)^2}\\
        &\leq 2\phi(t)\Paren{t^2-1+6t \cdot \frac{1}{\sqrt{2\pi}} +8\cdot \frac{1}{2\pi}}\\
        &\leq 10 \phi(t) t^2,
    \end{align*}
    where we used in the last step that $\frac{6}{\sqrt{2\pi}}t + \frac{8}{2\pi} \leq 4t^2$ for $t \geq 1$.
    Now, by \Cref{FACT:rational-appx-of-imr}, we have that ${h(t) \leq t+1/t \leq 2}t$ for $t \geq 1$.
    Since $h(t) = \frac{\phi(t)}{1-\Phi(t)} = \frac{\phi(t)}{\delta}$, it follows that
    \[
        \Abs{\kappa_3(t)} \leq 20 \cdot \delta \cdot t^3.
    \]
    It remains to bound $t$ in terms of $\delta$.
    Again by \Cref{FACT:rational-appx-of-imr}, we have $h(t) \geq t \geq 1$ for $t \geq 1$ and thus
    \[
        \delta \leq \phi(t) = \frac{1}{\sqrt{2\pi}}e^{t^2/2} \leq e^{-\frac{t^2}{2}} 
    \]
    or equivalently
    \[
        t \leq \sqrt{2\log(1/\delta)}. 
    \]
    Combining this with the above we get that
    \[
        \Abs{\kappa_3(t)} \leq 20 \cdot \delta \cdot \Paren{2\log(1/\delta)}^{3/2} = 40 \sqrt{2} \cdot \delta \cdot \log^{3/2}(1/\delta),
    \]
    which completes the proof.
\end{proof}
\begin{proof}[Proof of \Cref{claim:skew-asymptotic}]
    First, from \Cref{claim:bound-on-kappa2,claim:monotonicity-of-kappa2} we know that $0<\kappa_2(t)\leq 1$ and $\kappa_3(t)<0$ for all $t \in \R$.
    Hence, we have in particular that
    \[
        \sk(t)=\frac{\kappa_3(t)}{\kappa_2(t)^{3/2}}<0.
    \]
    Define $\delta = 1-\Phi(t)$.
    By \Cref{claim:skew-est}, we have that, for $t \geq C$ for some constant $C$ (such that the minimum in \Cref{claim:skew-est} is indeed the third term)
    \[
        |\kappa_3(t)| \geq \frac{1}{4\sqrt{2}}\cdot \delta \cdot \log^{3/2}(1/\delta).
    \]
    Using $\kappa_2(t) \leq 1$, we get that
    \[
        \frac{|\sk(t)|}{\delta \cdot \log^{3/2}(1/\delta)} \geq \frac{1}{4\sqrt{2}}
    \]
    On the interval $t \in [0,C]$, the function $\frac{|\sk(t)|}{\delta \cdot \log^{3/2}(1/\delta)}$ is continuous and strictly positive and thus there is $m' > 0$ such that
    \[
        \frac{|\sk(t)|}{\delta \cdot \log^{3/2}(1/\delta)} \geq m'
    \]
    for $t \in [0,C].$
    Defining $m = \min\left\{m',\frac{1}{4\sqrt{2}}\right\}$ shows the lower bound.
    For the upper bound, we get by \Cref{claim:kappa3-upper} that
    \[
        \abs{\kappa_3(t)}\leq 40\sqrt{2}\cdot\delta \cdot \log^{3/2}(1/\delta)
    \]
    for $t \geq 1$.
    Using \Cref{claim:monotonicity-of-kappa2}, we get that $0 < \kappa_2(0) \leq \kappa_2(t)$ for all $t \geq 0$ and thus
    \[
        \frac{|\sk(t)|}{\delta \cdot \log^{3/2}(1/\delta)} \leq \frac{40\sqrt{2}}{\kappa_2(0)^{3/2}}
    \]
    for $t \geq 1$. As before, the function $\frac{|\sk(t)|}{\delta \cdot \log^{3/2}(1/\delta)}$ is continuous for $t \in [0,1]$ and thus has a uniform upper bound $M'$. We get the claimed upper bound by defining $M = \max\left\{M', \frac{40\sqrt{2}}{\kappa_2(0)^{3/2}}\right\}$.
\end{proof}

\subsection{Matrix and vector calculations}\label{APP:facts:matrixvectorcalc}
In this section, we prove several linear algebra facts that we deferred from \Cref{sec:setup,sec:fullproof}.
We first show that a linear transformation of a Gaussian truncated by a halfspace is again a (different) Gaussian truncated by a (different) halfspace. 

\begin{claim*}[Restatement of \Cref{claim:affine-transform}]\claimaffinetransformtext
\end{claim*}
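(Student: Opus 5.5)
The plan is a direct change of variables. We compute the law of $y = Ax+b$ and check that it is a Gaussian conditioned on a halfspace. Write $x \sim \cN(\mu,\Sigma)$ conditioned on the event $\iprod{w,x}\leq\tau$. Since $A$ is invertible, the map $x\mapsto Ax+b$ is a bijection. For any measurable set $E$ we therefore have
\[
    \Pr[y\in E] = \Pr_{x\sim\cN(\mu,\Sigma)}\Brac{Ax+b\in E \,\big|\, \iprod{w,x}\leq\tau}.
\]
The untruncated variable $Ax+b$ is distributed as $\cN(A\mu+b, A\Sigma A^\top)$, by the standard affine-transformation property of Gaussians. The conditioning event is $\{\iprod{w,x}\leq\tau\}$, and it is identical to $\{\iprod{w, A^{-1}(y-b)}\leq \tau\}$ when expressed in terms of $y=Ax+b$. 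So $y$ is $\cN(A\mu+b, A\Sigma A^\top)$ conditioned on the event $\{y : \iprod{w,A^{-1}(y-b)}\leq\tau\}$.

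It remains to rewrite this set as a halfspace with a unit normal. First, $\iprod{w,A^{-1}(y-b)} = \iprod{A^{-\top}w, y-b} = \iprod{A^{-\top}w,y} - \iprod{A^{-\top}w,b}$. The condition therefore reads $\iprod{A^{-\top}w, y}\leq \tau + \iprod{A^{-\top}w,b}$. Next, $A^{-\top}w\neq 0$ because $A$ is invertible and $w$ is a unit vector, so we may divide both sides by $\norm{A^{-\top}w}>0$ without flipping the inequality. This gives exactly the normal $w' = A^{-\top}w/\norm{A^{-\top}w}$ and threshold $\tau' = (\tau+\iprod{A^{-\top}w,b})/\norm{A^{-\top}w}$ in the statement.

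A more careful write-up could instead argue via densities. The density of $y$ is proportional to the $\cN(A\mu+b,A\Sigma A^\top)$ density times the indicator of the transformed set; the Jacobian $|\det A|^{-1}$ is a constant that is absorbed in the normalization. Either form works.

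No step is a real obstacle; the computation is routine. The only points needing care are the transpose identity $\iprod{w,A^{-1}v}=\iprod{A^{-\top}w,v}$ and the sign of the normalization, which is fine because norms are positive. We should also note that the mass of the survival set is unchanged: $\Pr[Ax+b\in H'] = \Pr[x\in H]$ by the same bijection.
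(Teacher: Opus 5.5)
Your proposal is correct and is essentially the paper's argument: the paper writes the density of $y$ as $|\det A|^{-1} f(A^{-1}(y-b))$, rewrites the exponent as that of $\cN(A\mu+b, A\Sigma A^\top)$ and the indicator as $\mathbbb{1}\Brac{\iprod{A^{-\top}w, y} \leq \tau + \iprod{A^{-\top}w, b}}$, and then normalizes by $\norm{A^{-\top}w}$, which is exactly the density route you sketch as an alternative. Your main event-level version (the pushforward of a conditioned law is the pushforward law conditioned on the transformed event) is the same computation, and it has the minor advantage of not using $\Sigma^{-1}$, so it also covers singular $\Sigma$.
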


\begin{proof}
We prove this claim by computing the density of $y$ explicitly.
Note that the density of $x$ is
\[
    f(x)\propto\exp\left(-\frac{1}{2}(x-\mu)^\top\Sigma^{-1}(x-\mu)\right)\mathbbb{1}\Brac{w^\top x\leq\tau}.
\]
Hence, the density of $y$ is
\begin{align*}
g(y)&=\frac{1}{|\text{det}(A)|}f(A^{-1}(y-b))\\
&\propto\exp\left(-\frac{1}{2}(A^{-1}(y-b)-\mu)^\top\Sigma^{-1}(A^{-1}(y-b)-\mu)\right)\mathbbb{1}\Brac{w^\top \Paren{A^{-1}(y-b)}\leq\tau}\\
&=\exp\left(-\frac{1}{2}(y-b-A\mu)^\top(A^{-\top}\Sigma^{-1}A^{-1})(y-b-A\mu)\right)\mathbbb{1}\Brac{( A^{-\top}w)^\top(y-b)\leq\tau} \\ 
&=\exp\left(-\frac{1}{2}\tp{(y-(A\mu + b))}(A\Sigma\tp{A})^{-1}(y-(A\mu + b))\right)\mathbbb{1}\Brac{\tp{( A^{-\top}w)}y\leq\tau+\Iprod{A^{-\top} w,b}}
\end{align*}
Thus,  $g(y)$ is the density of the Gaussian $\normal{A\mu+b,A\Sigma A^\top}$ truncated to a halfspace with normal $A^{-\top} w$. To conclude the lemma, we normalize the normal vector.
Thus, the halfspace is
\[
    H^{\prime}=\Set{y:\Iprod{\frac{A^{-\top} w}{\Norm{A^{-\top} w}},y}\leq\frac{\tau+\Iprod{A^{-\top} w,b}}{\Norm{A^{-\top} w}}},
\]
which completes the proof.
\end{proof}

Next, we show that the Frobenius norm of $I_d - A^{-\frac{1}{2}}BA^{-\frac{1}{2}}$ is invariant under the transformation $A \mapsto LAL^\top$, $B \mapsto LBL^\top$ for an invertible matrix $L$.

\begin{fact*}[Restatement of \Cref{fact:affine-invariance}]\factaffineinvariancetext   
\end{fact*}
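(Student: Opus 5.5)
The plan is to reduce everything to the symmetric square root $C \coloneqq (LAL^\top)^{-1/2}$ and show that $C L A^{1/2}$ is orthogonal. Set $Q \coloneqq (LAL^\top)^{-1/2} L A^{1/2}$. A direct computation gives
\[
QQ^\top = (LAL^\top)^{-1/2} L A L^\top (LAL^\top)^{-1/2} = I_d,
\]
so $Q$ is an orthogonal matrix. Inverting this relation, I will write $(LAL^\top)^{-1/2} L = Q A^{-1/2}$. Taking transposes, and using that the inverse square root of the positive definite matrix $LAL^\top$ is symmetric, this also gives $L^\top (LAL^\top)^{-1/2} = A^{-1/2} Q^\top$.

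Next I substitute these two identities into the right-hand side of the claimed equality:
\[
(LAL^\top)^{-1/2} L B L^\top (LAL^\top)^{-1/2} = Q A^{-1/2} B A^{-1/2} Q^\top .
\]
Writing $I_d = QQ^\top$, the matrix inside the norm becomes $Q\bigl(I_d - A^{-1/2} B A^{-1/2}\bigr)Q^\top$.

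Finally, the Frobenius norm is invariant under left and right multiplication by orthogonal matrices, since $\norm{QMQ^\top}_F^2 = \tr(Q M^\top M Q^\top) = \tr(M^\top M)$. Applying this with $M = I_d - A^{-1/2}BA^{-1/2}$ yields the claimed equality.

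The only step needing care is recognizing that $(LAL^\top)^{-1/2} L A^{1/2}$ is orthogonal even though $L$ itself is arbitrary (in general $(LAL^\top)^{-1/2} \neq L^{-\top} A^{-1/2}$). This is exactly the polar-decomposition fact established in the first step. Beyond that, the argument uses only the symmetry and invertibility of the square roots of positive definite matrices and never needs positive definiteness of $B$.
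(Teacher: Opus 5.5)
Your proposal is correct and follows essentially the same route as the paper: the paper sets $X = LA^{1/2}$ and $R = (XX^\top)^{-1/2}X$, which is exactly your orthogonal matrix $Q$. It then rewrites the matrix inside the norm as $R\bigl(I_d - A^{-1/2}BA^{-1/2}\bigr)R^\top$ and concludes by orthogonal invariance of the Frobenius norm, just as you do.
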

\begin{proof}
    Let $X=LA^{\frac{1}{2}}$ and $R=(XX^\top)^{-\frac{1}{2}}X$. Note that $LAL^\top = XX^\top$ and observe that ${RR^\top=R^\top R=I_d}$, i.e., $R$ is orthonormal.
    The right-hand side of the identity can now be rewritten as
    \begin{align*}
        \Norm{I_d-(LAL^\top)^{-\frac{1}{2}}LBL^\top(LAL^\top)^{-\frac{1}{2}}}_F &= \Norm{I_d-(XX^\top)^{-\frac{1}{2}}XA^{-\frac{1}{2}}BA^{-\frac{1}{2}}X^\top(XX^\top)^{-\frac{1}{2}}}_F\\
        &=\Norm{I_d-RA^{-\frac{1}{2}}BA^{-\frac{1}{2}}R^\top}_F\\
        &= \Norm{R\Paren{I_d-A^{-\frac{1}{2}}BA^{-\frac{1}{2}}}R^\top}_F.
    \end{align*}
    Let $M = I_d-A^{-\frac{1}{2}}BA^{-\frac{1}{2}}$.
    Using that the trace is cyclic, we get that
    \[
        \Norm{RMR^\top}_F = \tr\Paren{RMR^\top\Paren{RMR^\top}^\top} = \tr\Paren{MM^\top} = \Norm{M}_F^2,
    \]
    which completes the proof.
\end{proof}

We now prove the following claim that shows that the best rank-1 approximation of a matrix $A$ that is close to some matrix $B$ is also a good approximation to $B$.

\begin{claim*}[Restatement of \Cref{claim:best_rank-1_approximation}]\claimbestrankoneapproximationtext
\end{claim*}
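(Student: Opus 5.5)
The plan is to expand the Frobenius inner product and reduce the claim to an inequality about how much of $A$ the top eigenpair captures against any rank-1 matrix. Write $P = \lambda_1 v_1 v_1^\top$. The target inequality $\norm{P-B}_F^2 \leq 2\iprod{P-B, A-B}$ is equivalent, after expanding, to
\[
    \norm{P}_F^2 - 2\iprod{P,B} + \norm{B}_F^2 \leq 2\iprod{P,A} - 2\iprod{P,B} - 2\iprod{B,A} + 2\norm{B}_F^2 .
\]
This in turn simplifies to
\[
    \norm{P}_F^2 + 2\iprod{B,A} \leq 2\iprod{P,A} + \norm{B}_F^2 .
\]
Next we use $\iprod{P,A} = \lambda_1 v_1^\top A v_1 = \lambda_1^2 = \norm{P}_F^2$. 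With this, the claim becomes
\[
    \lambda_1^2 - 2\iprod{B,A} + \norm{B}_F^2 \geq 0 ,
\]
or equivalently $\norm{A-B}_F^2 \geq \norm{A}_F^2 - \lambda_1^2 = \norm{A-P}_F^2$. In words, we need $P$ to be a best rank-1 approximation of $A$ in Frobenius norm, at least against symmetric rank-1 competitors $B$.

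For this step we would write $B = s\, b b^\top$ with $\norm{b}=1$ and $s\in\R$. This is the form of any symmetric rank-1 matrix, which is the case needed since $B$ is symmetric of rank 1 (rank $\le 1$ also covers $B=0$). Then
\[
    \norm{A-B}_F^2 = \norm{A}_F^2 - 2s\, b^\top A b + s^2 .
\]
Minimizing over $s$ gives $\norm{A}_F^2 - (b^\top A b)^2$. Since $|b^\top A b| \leq \max_i |\lambda_i| = |\lambda_1|$ for a unit vector $b$, this is at least $\norm{A}_F^2 - \lambda_1^2$, which is exactly what we need.

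The only point requiring care is the equivalence chain in the first step. Each rearrangement is an exact identity, so we would verify the algebra once and state it in that direction: we start from $\norm{A-B}_F^2 \geq \norm{A-P}_F^2$ and derive the claim from it. No real obstacle is expected; the argument is essentially the Eckart–Young property restricted to symmetric matrices, proved directly via the Rayleigh quotient bound.
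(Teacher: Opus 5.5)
Your proof is correct and follows the same route as the paper. Both arguments reduce the claim to the best rank-1 approximation inequality $\norm{\lambda_1 v_1 v_1^\top - A}_F^2 \le \norm{A-B}_F^2$ and finish by expanding Frobenius norms. The paper writes $\lambda_1 v_1 v_1^\top - B = (\lambda_1 v_1 v_1^\top - A) + (A-B)$, which avoids needing $\iprod{\lambda_1 v_1 v_1^\top, A} = \lambda_1^2$. The one difference is that the paper simply cites the Eckart--Young property, while you prove it directly for symmetric rank-1 $B = s\,bb^\top$ via the Rayleigh-quotient bound $\abs{b^\top A b} \le \abs{\lambda_1}$; this is a valid self-contained justification.
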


\begin{proof}
    As $\lambda_1 v_1v_1^\top$ is the best rank-1 approximation of $A$, we have 
    \(
        \Norm{\lambda_1 v_1v_1^\top-A}_F^2\leq\Norm{A-B}_F^2.
    \)
    We thus get that
    \begin{align*}
        \Norm{\lambda_1 v_1 v_1^\top-B}_F^2 &= \Norm{\Paren{\lambda_1 v_1 v_1^\top-A} + \Paren{A-B}}_F^2\\
        &= \Norm{\lambda_1 v_1 v_1^\top-A}_F^2 + 2\iprod{\lambda_1 v_1 v_1^\top-A,A-B} + \Norm{A-B}_F^2\\
        &\leq 2\Paren{\Norm{A-B}_F^2 + \iprod{\lambda_1 v_1 v_1^\top-A,A-B}}\\
        &= 2\iprod{\lambda_1 v_1 v_1^\top-B,A-B},
    \end{align*}
    which completes the proof.
\end{proof}

Next, we prove that if $aa^\top - bb^\top$ is small, then the vector $\frac{a}{\norm{a}}$ is close to $\pm \frac{b}{\norm{b}}$.

\begin{claim*}[Restatement of \Cref{claim:close-upto-sign}]\claimcloseuptosigntext
\end{claim*}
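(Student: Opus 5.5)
We need to prove Claim \ref{claim:close-upto-sign}: if $\norm{aa^\top-bb^\top}_F\le\beta$, then $\min\{\norm{u\pm v}^2\}\le 8\beta^2/\norm{a}^4$, where $u=a/\norm{a}$ and $v=b/\norm{b}$.

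\emph{Reduction to an angle.} Both sides are invariant under $a\mapsto -a$, so we may assume $\iprod{u,v}=c\ge 0$. Then the minimum equals $\norm{u-v}^2=2(1-c)$, and it suffices to show $2(1-c)\le 8\beta^2/\norm{a}^4$. Write $p=\norm{a}^2$ and $q=\norm{b}^2$. (If $b=0$, we have $\beta\ge\norm{a}^2$, so the right-hand side is at least $8$; since $\norm{u\pm v}^2\le 4$ for any unit $v$, the claim holds trivially.)

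\emph{Expanding the Frobenius norm.} We compute
\[
\norm{aa^\top-bb^\top}_F^2=p^2+q^2-2pq c^2 .
\]
Minimizing over $q\ge 0$, the minimum is attained at $q=pc^2$ and gives
\[
p^2+q^2-2pqc^2 \ge p^2(1-c^4).
\]
Hence $\beta^2\ge p^2(1-c^4)=p^2(1-c^2)(1+c^2)\ge p^2(1-c^2)$.

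\emph{Conclusion.} Since $c\in[0,1]$, we have $1-c^2=(1-c)(1+c)\ge 1-c$. Therefore $2(1-c)\le 2\beta^2/p^2=2\beta^2/\norm{a}^4$, which is even stronger than the claimed constant $8$.

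The only step requiring care is the Frobenius expansion and the minimization over $q$, which is a one-variable quadratic; there is no real obstacle.
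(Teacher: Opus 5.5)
Your proof is correct, and it eliminates $\norm{b}$ by a different route than the paper.

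\textbf{Shared starting point.} Both arguments normalize the sign so that $\iprod{u,v}\ge 0$ and start from the expansion $\norm{aa^\top-bb^\top}_F^2=\norm{a}^4+\norm{b}^4-2\iprod{a,b}^2$.

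\textbf{The paper's route.} It rewrites this as $(\norm{a}^2-\norm{b}^2)^2+2\norm{a}^2\norm{b}^2\sin^2\theta$ and splits into cases. The case $\beta\ge\norm{a}^2/2$ is dismissed trivially, and that is exactly where the constant $8$ comes from. In the remaining case it obtains $\norm{b}\ge\norm{a}/\sqrt{2}$, then $\sin\theta\le\beta/\norm{a}^2$, and finally uses $\theta<\pi/6$ together with the half-angle inequality $4\sin^2(\theta/2)\le 2\sin^2\theta$.

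\textbf{Your route.} You eliminate $\norm{b}$ by minimizing the quadratic in $q=\norm{b}^2$. This gives $\beta^2\ge\norm{a}^4(1-c^4)$ with no condition on $\beta$. The factorization $1-c^4\ge 1-c$ then finishes the proof without any case distinction or trigonometry. It also yields the constant $2$ for every $\beta$, whereas the paper gets $2$ only in the nontrivial regime.

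\textbf{Trade-off.} The paper's sum-of-squares decomposition additionally gives $\abs{\norm{a}^2-\norm{b}^2}\le\beta$ as a by-product. Your argument discards this information, but the claim does not need it.
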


\begin{proof}
Note that $\min\{\norm{u+v}^2,\norm{u-v}^2\} = 2 - 2\abs{\iprod{u,v}} \leq 2$ and thus if $\beta \geq \frac{\norm{a}^2}{2}$, then the conclusion trivially holds.
Thus, from now on, we assume that $\beta < \frac{\norm{a}^2}{2}$.

Since both the assumption and the conclusion are invariant under replacing $b$ by $-b$, so we can without loss of generality assume that $\iprod{a,b} \geq 0$.
Let $\theta \in [0, \frac{\pi}{2}]$ be the angle between $a,b$, or equivalently, $\iprod{a,b}=\norm{a}\norm{b}\cos\theta$.
Then observe that 
\begin{align}
    \Norm{aa^\top-bb^\top}_F^2&=\Norm{aa^\top}_F^2+\Norm{bb^\top}_F^2-2\iprod{aa^\top,bb^\top}\notag\\
    &=\norm{a}^4 + \norm{b}^4-2\iprod{a,b}^2\label{eqn:app:aatop-bbtop}\\
    &= \norm{a}^4 + \norm{b}^4 - 2 \norm{a}^2\norm{b}^2 \cos^2\theta\notag\\
    &=\Paren{\norm{a}^2-\norm{b}^2}^2+2\norm{a}^2\norm{b}^2(1-\cos^2\theta)\notag\\
    &=\Paren{\norm{a}^2-\norm{b}^2}^2+2\norm{a}^2\norm{b}^2\sin^2\theta\notag
\end{align}
Both terms are non-negative.
Since the left-hand side is at most $\beta^2$, we get that $\lvert\norm{a}^2-\norm{b}^2\rvert\leq\beta$ and $\sqrt{2} \cdot \norm{a}\norm{b}\sin\theta\leq\beta$.
From the first one we obtain
\[
    \norm{a}^2-\norm{b}^2\leq\beta<\frac{\norm{a}^2}{2} \quad\Longrightarrow\quad \norm{b}\geq\frac{\norm{a}}{\sqrt{2}}.
\]
Using this lower bound on the second term, we get
\[
    \norm{a}^2\sin\theta\leq \sqrt{2} \cdot \norm{a}\norm{b}\sin\theta\leq\beta \quad\Longrightarrow\quad \sin\theta\leq\frac{\beta}{\norm{a}^2}.
\]
Combined with $\beta<\norm{a}^2/2$, we have $\sin\theta<1/2$, which implies $\theta<\frac{\pi}{6}$.
Finally, note that ${\min\{\norm{u+v}^2,\norm{u-v}^2\}=\norm{u-v}^2}$ since we assumed $\iprod{u,v} \geq 0$.
Thus, we get that 
\begin{align*}
    \min\{\norm{u+v}^2,\norm{u-v}^2\}=\norm{u-v}^2=2-2\cos\theta=4\sin^2\Paren{\frac{\theta}{2}}\overset{(\text{i})}{\leq}2\sin^2\theta \leq 2\frac{\beta^2}{\norm{a}^4}
\end{align*}
where in inequality (i) we used that for $0\leq\theta<\frac{\pi}{6}$, $\sin\frac{\theta}{2}\leq\frac{1}{\sqrt{2}}\sin\theta$.
\end{proof}

Finally, we show how to relate $aa^\top-bb^\top$ to $a-b$ and $a+b$.

\begin{fact*}[Restatement of \Cref{fact:upper-bound-Frobenius}]\factupperboundFrobeniustext
\end{fact*}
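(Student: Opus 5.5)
The plan is to reduce the claim to a short algebraic identity and one application of Cauchy--Schwarz. Write
\[
    aa^\top - bb^\top = \tfrac{1}{2}\Paren{(a+b)(a-b)^\top + (a-b)(a+b)^\top}.
\]
This holds because expanding the right-hand side gives $\tfrac{1}{2}\Paren{aa^\top - ab^\top + ba^\top - bb^\top + aa^\top + ab^\top - ba^\top - bb^\top} = aa^\top - bb^\top$.

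Set $p = a+b$ and $q = a-b$. The triangle inequality alone would give $\Norm{aa^\top-bb^\top}_F \leq \tfrac12\Paren{\norm{pq^\top}_F + \norm{qp^\top}_F} = \norm{p}\norm{q}$, since $\norm{xy^\top}_F = \norm{x}\norm{y}$ for any vectors $x,y$. This is already exactly the claimed bound $\norm{a+b}\cdot\norm{a-b}$.

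If one wants a self-contained check instead, compute the squared Frobenius norm directly:
\[
    \tfrac14\Paren{2\norm{p}^2\norm{q}^2 + 2\iprod{p,q}^2},
\]
and then bound $\iprod{p,q}^2 \leq \norm{p}^2\norm{q}^2$ by Cauchy--Schwarz.

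There is no real obstacle here. The only point requiring care is to symmetrize the decomposition rather than use the asymmetric form $aa^\top - bb^\top = a(a-b)^\top + (a-b)b^\top$. The asymmetric form would only give the weaker bound $(\norm{a}+\norm{b})\norm{a-b}$.
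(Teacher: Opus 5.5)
Your proof is correct, and your main argument takes a different route from the paper's. The paper reuses the identity $\Norm{aa^\top-bb^\top}_F^2=\norm{a}^4+\norm{b}^4-2\iprod{a,b}^2$ from the proof of \Cref{claim:close-upto-sign}. It then bounds $\iprod{a,b}^2\le\norm{a}^2\norm{b}^2$ by Cauchy--Schwarz and factors $(\norm{a}^2+\norm{b}^2)^2-4\iprod{a,b}^2$ as $\norm{a+b}^2\norm{a-b}^2$.

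Your main route writes $aa^\top-bb^\top=\tfrac12(pq^\top+qp^\top)$ and applies the triangle inequality. It needs no squared-norm expansion and no Cauchy--Schwarz. It also only uses $\norm{xy^\top}=\norm{x}\norm{y}$, so the same bound holds for the spectral and nuclear norms. For the nuclear norm it is even an equality, since $\tfrac12(pq^\top+qp^\top)$ has eigenvalues $\tfrac12(\iprod{p,q}\pm\norm{p}\norm{q})$.

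Your secondary check is essentially the paper's computation in the coordinates $p,q$. You apply Cauchy--Schwarz to $\iprod{a+b,a-b}$ instead of $\iprod{a,b}$. Because the Gram determinant of $(a+b,a-b)$ is four times that of $(a,b)$, both versions discard exactly the same slack. The paper's route has one advantage: it recycles an identity it has already derived.
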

\begin{proof}
    Recall from \eqref{eqn:app:aatop-bbtop} that 
    \[
        \Norm{aa^\top-bb^\top}_F^2 = \norm{a}^4+\norm{b}^4-2\iprod{a,b}^2.
    \]
    The proof now follows by the following calculation:
    \begin{align*}
        \Norm{aa^\top-bb^\top}_F^2 &=\norm{a}^4+\norm{b}^4-4\iprod{a,b}^2+2\iprod{a,b}^2\\
        &\leq\norm{a}^4+\norm{b}^4-4\iprod{a,b}^2+2\norm{a}^2\norm{b}^2\\
        &=\Paren{\norm{a}^2 + \norm{b}^2 + 2\iprod{a,b}} \cdot \Paren{\norm{a}^2 + \norm{b}^2 - 2\iprod{a,b}} \\
        &=\norm{a+b}^2\cdot\norm{a-b}^2.
    \end{align*}
\end{proof}

\subsection{Concentration results and error analysis for sample moments}\label{sec:erroranalysissamplesmoments}
In this section, we prove two claims that we deferred from \Cref{sec:fullproof} about the error due to the fact that we only have access to the \emph{sample} moments.

\begin{claim*}[Restatement of \Cref{claim:sample-mean-noise}]\claimsamplemeannoise    
\end{claim*}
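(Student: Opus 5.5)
Set $Z_i=\iprod{v,x_i-\nu}$ and $\Delta=\iprod{v,\bar{x}-\nu}$, so that $\iprod{v,x_i-\bar{x}}=Z_i-\Delta$. Expanding the cube gives the exact identity
\[
\frac{1}{n}\sum_{i=1}^n\Paren{(Z_i-\Delta)^3-Z_i^3}=-3\Delta\cdot\frac{1}{n}\sum_{i=1}^n Z_i^2+3\Delta^2\cdot\frac{1}{n}\sum_{i=1}^n Z_i-\Delta^3 .
\]
Since $\frac1n\sum_i Z_i=\Delta$, this simplifies to $-3\Delta\,\frac1n\sum_i Z_i^2+2\Delta^3$. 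The task therefore reduces to two bounds, one on $\abs{\Delta}$ and one on the empirical second moment $\frac1n\sum_i Z_i^2$. Each should hold with probability $1-2e^{-10d}$, and a union bound then gives the stated $1-4e^{-10d}$.

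\textbf{Bounding $\Delta$.} After preconditioning, each $Z_i$ is mean-zero and $O(\kappa_2(\gamma)^{-1/2})$-sub-Gaussian. This follows from \Cref{claim:subgaussian:M2invX} together with $M_2\preceq 2I_d$ from \Cref{lem:preconditioner}. The average $\Delta$ is then $O(\kappa_2(\gamma)^{-1/2}/\sqrt n)$-sub-Gaussian, and a tail bound with deviation parameter $t=10d$ gives
\[
\abs{\Delta}\lesssim\kappa_2(\gamma)^{-1/2}\sqrt{d/n}
\]
with probability at least $1-2e^{-10d}$.

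\textbf{Bounding $\frac1n\sum_i Z_i^2$.} The population value is $\E Z^2=v^\top M_2v\le 2$. The centered variables $Z_i^2-\E Z_i^2$ are sub-exponential with $\psi_1$-norm $O(\kappa_2(\gamma)^{-1})$. Bernstein's inequality with $t=10d$ then gives
\[
\frac1n\sum_i Z_i^2\le 2+O\Paren{\kappa_2(\gamma)^{-1}\Paren{\sqrt{d/n}+d/n}}
\]
with probability at least $1-2e^{-10d}$.

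\textbf{Combining.} Substituting both bounds into the identity gives
\[
\abs{\mathrm{error}}\lesssim\kappa_2(\gamma)^{-1/2}\sqrt{\tfrac dn}\Paren{1+\kappa_2(\gamma)^{-1}\Paren{\sqrt{\tfrac dn}+\tfrac dn}}+\kappa_2(\gamma)^{-3/2}\Paren{\tfrac dn}^{3/2}.
\]
Using $\kappa_2(\gamma)\le 1$ from \Cref{claim:bound-on-kappa2}, every term is at most $\kappa_2(\gamma)^{-3/2}$ times one of $\sqrt{d/n}$, $d/n$, or $(d/n)^{3/2}$. The middle term is handled by the pointwise inequality $d/n\le\sqrt{d/n}+(d/n)^{3/2}$: if $d/n\le1$ it is at most $\sqrt{d/n}$, and otherwise it is at most $(d/n)^{3/2}$. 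This yields the claimed bound.

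The step requiring the most care is the second moment. The target is the loose bound $O(1)$ plus a lower-order term, not a sharp concentration statement, and the sub-exponential constants must be tracked so that the final power of $\kappa_2(\gamma)$ is exactly $-3/2$. A lazier route would bound $\frac1n\sum_i Z_i^2$ directly by $\norm{v}^2$ times the spectral norm of the empirical covariance via \Cref{fact:random-matrix-concentration}, but the scalar Bernstein argument above is cleaner.
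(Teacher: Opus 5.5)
Your proof is correct and follows the paper's argument. You use the same cubic expansion to reduce everything to $\abs{\iprod{v,\bar{x}-\nu}}$ and the empirical second moment $\frac{1}{n}\sum_i \iprod{v,x_i-\nu}^2$, bound these by a sub-Gaussian (Hoeffding) tail and by Bernstein's inequality respectively at deviation level $\Theta(d)$, and finish with a union bound; the only differences are cosmetic (your exact coefficient $2\Delta^3$ versus the paper's cruder $4\abs{\Delta}^3$, and your use of $v^\top M_2 v\le 2$ in place of the paper's looser $O(1/\kappa_2(\gamma))$ for the population second moment).
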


\begin{proof}
For every $i \in [n]$, we have that
\begin{align*}
    \Iprod{v,x_i-\bar{x}}^3 &= \Paren{\Iprod{v,x_i-\nu} + \Iprod{v,\nu -\bar{x}}}^3\\
    &= \Iprod{v,x_i-\nu}^3 + 3 \Iprod{v,x_i-\nu}^2\Iprod{v,\nu -\bar{x}} + 3\Iprod{v,x_i-\nu}\Iprod{v,\nu -\bar{x}}^2 + \Iprod{v,\nu -\bar{x}}^3.
\end{align*}
Thus, summing over all $i \in [n]$, we get that
\[
    \Abs{\frac{1}{n}\sum_{i=1}^{n} \iprod{v, x_i-\bar{x}}^3-\iprod{v,x_i-\nu}^3} \leq 3 \Abs{\Iprod{v,\bar{x} - \nu}} \cdot \Abs{\frac{1}{n}\sum_{i=1}^{n}\Iprod{v,x_i-\nu}^2} + 4\Abs{\Iprod{v,\bar{x} - \nu}}^3,
\]
where we used that $\sum_{i=1}^n \Iprod{v,x_i-\nu} = \Iprod{v,\bar{x}-\nu}$. Thus, it remains to bound, with high probability, the two terms $\abs{\Iprod{v,\bar{x} - \nu}}$ and $\abs{\frac{1}{n}\sum_{i=1}^{n}\iprod{v,x_i-\nu}^2}$.
We do so separately.

\textbf{Bounding $\abs{\Iprod{v,\bar{x} - \nu}}$.}
By \Cref{cor:subgaussian:X}, for every $i \in [n]$, $\iprod{v,x_i-\nu}$ is sub-Gaussian with parameter $\sigma = \norm{\Sigma^{1/2}} = O\Paren{1/\sqrt{\kappa_2(\gamma)}}$ , where we used \Cref{lem:preconditioner} to bound $\norm{\Sigma}$.
Thus, for any fixed vector $v$, Hoeffding's inequality (cf. \cite[Theorem 2.7.3]{Vershynin_2018}) implies that for every $t > 0$ we have
\[
    \Pr\Brac{|\iprod{v,\bar{x}-\nu}|\geq t}\leq 2\exp\Paren{-\Omega\Paren{nt^2\cdot\kappa_2(\gamma)}}.
\]
In particular, taking $t^2=O\Paren{\frac{d}{\kappa_2(\gamma)n}}$, we get that with probability at least $1-2e^{-10d}$ we have 
\[
    \abs{\iprod{v,\bar{x}-\nu}}\lesssim\sqrt\frac{ d}{\kappa_2(\gamma)n}.
\]

\textbf{Bounding $\abs{\frac{1}{n}\sum_{i=1}^{n}\iprod{v,x_i-\nu}^2}$.}
For every $i \in [n]$, since $\iprod{v,x_i-\nu}$ is $O(1/\kappa_2(\gamma)^{1/2})$-sub-Gaussian, $\iprod{v,x_i-\nu}^2$ is $O(1/\kappa_2(\gamma))$-sub-exponential.
Thus, for fixed $v$, the Bernstein inequality (cf. \cite[Theorem 2.9.1]{Vershynin_2018}) implies that for every $t > 0$, we have
\[
    \Pr\Brac{\Abs{\frac{1}{n}\sum_{i=1}^{n} \iprod{v,x_i-\nu}^2-\E\iprod{v,x_1-\nu}^2}\geq t}\leq 2\exp\Paren{-\Omega\Paren{\min\Set{nt^2\cdot\kappa_2(\gamma)^2, nt\cdot\kappa_2(\gamma)}}}.
\]
Taking $t=O\Paren{\frac{1}{\kappa_2(\gamma)}\cdot\Paren{\sqrt\frac{d}{n}+\frac{d}{n}}}$, it then follows that, with probability at least $1-2e^{-10d}$, we have that
\[
    \Abs{\frac{1}{n}\sum_{i=1}^{n} \iprod{v,x_i-\nu}^2}\lesssim \frac{1}{\kappa_2(\gamma)}\Paren{\sqrt\frac{d}{n}+\frac{d}{n}}+\E\iprod{v,x_1-\nu}^2. 
\]
Since $\E \langle v, x_1 - \mu \rangle^2 \leq O\left(\frac{1}{\kappa_2(\gamma)}\right)$ because of sub-Gaussianity, we have with probability at least $1 - 2e^{-10d}$ that
\[
    \Abs{\frac{1}{n}\sum_{i=1}^{n} \iprod{v,x_i-\nu}^2} \lesssim \frac{1}{\kappa_2(\gamma)}\Paren{\sqrt\frac{d}{n}+\frac{d}{n}} + \frac{1}{\kappa_2(\gamma)} \lesssim \frac{1}{\kappa_2(\gamma)}\Paren{1+\frac{d}{n}}.
\]

Taking a union bound over these two events, we get that, with probability $1-4e^{-10d}$,
\[
    \Abs{\frac{1}{n}\sum_{i=1}^{n} \iprod{v, x_i-\bar{x}}^3-\iprod{v,x_i-\nu}^3} \lesssim \kappa_2(\gamma)^{-3/2} \cdot \Paren{\sqrt{\frac{d}{n}} + \Paren{\frac{d}{n}}^{3/2}},
\]
which completes the proof.
\end{proof}

We now prove \Cref{lem:tensor-spectral-concentration}.
Recall that in \Cref{sec:fullproof} we already proved the statement for fixed $v$.
So, it remains to bound the supremum over \emph{all} unit vectors $v$.

\begin{lemma*}[Restatement of \Cref{lem:tensor-spectral-concentration}]\lemtensorspectralconcentrationtext
\end{lemma*}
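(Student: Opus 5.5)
The plan is a net argument over the unit sphere, bootstrapped from the fixed-$v$ bound. The proof sketch in \Cref{sec:fullproof} already gives, for each fixed unit $v$, that $|\iprod{v^{\otimes 3},\widehat{M}_3-M_3}| \leq K\kappa_2(\gamma)^{-3/2}(\sqrt{d/n}+d^{3/2}/n)$ with probability $1-6e^{-10d}$. It combines \Cref{claim:tensor-projection-concentration} at $t=\Theta(d)$ with \Cref{claim:sample-mean-noise}.

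Let $T \coloneqq \widehat{M}_3-M_3$. This is a symmetric tensor, so by the Banach-type identity recalled in the preliminaries,
\[
\norm{T}=\sup_{\norm{v}=1}|\iprod{v^{\otimes 3},T}|.
\]
Fix a $1/4$-net $\mathcal{N}$ of $\cS^{d-1}$ of size at most $9^d$. A union bound over $\mathcal{N}$ then gives, with probability at least $1-6\cdot 9^d e^{-10d}\geq 1-6e^{-7d}$ (since $\log 9<3$), that $|\iprod{v^{\otimes3},T}|\le \Delta$ for all $v\in\mathcal{N}$, where $\Delta$ is the right-hand side above.

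Next, pass from the net to the whole sphere. Take unit $v$ with $|\iprod{v^{\otimes3},T}|=\norm{T}$, and pick $v'\in\mathcal{N}$ with $\norm{v-v'}\le 1/4$. Telescope:
\[
v^{\otimes3}-v'^{\otimes3}=(v-v')\otimes v\otimes v+v'\otimes(v-v')\otimes v+v'\otimes v'\otimes(v-v').
\]
Each of the three terms has inner product with $T$ at most $\frac14\norm{T}$, using the injective-norm definition over three arbitrary unit vectors. Hence $\norm{T}\le \Delta+\frac34\norm{T}$, so $\norm{T}\le 4\Delta$. This proves the lemma with the constant absorbed into $\lesssim$.

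The main obstacle is making sure the fixed-$v$ failure probability really is $e^{-10d}$, with a constant large enough to beat $9^d$. This requires choosing $t=Cd$ with a large enough constant $C$ in \Cref{claim:tensor-projection-concentration}. That choice only changes the absolute constant in front of $\sqrt{d/n}+d^{3/2}/n$, since $\sqrt{t/n}+t^{3/2}/n$ scales polynomially in $C$.

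One should also check that the bound in \Cref{claim:sample-mean-noise} is dominated by $\sqrt{d/n}+d^{3/2}/n$. This holds because $(d/n)^{3/2}\le d^{3/2}/n$ for $n\ge1$.

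All randomness enters only through the samples, and the net is deterministic, so the union bound is legitimate. The supremum without absolute value in the statement is bounded by the one with absolute value.
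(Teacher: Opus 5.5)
Your proof is correct and follows the paper's argument: the fixed-direction bound with failure probability $6e^{-10d}$, a union bound over a $1/4$-net of size at most $9^d$, and a step that transfers the bound from the net to the whole sphere using the injective norm of the symmetric tensor. The only difference is cosmetic. You telescope $v^{\otimes 3}-(v')^{\otimes 3}$ and get the constant $4$, whereas the paper expands $(\Delta+v')^{\otimes 3}$ binomially and gets $64/3$.
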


\begin{proof}
In \Cref{sec:fullproof}, we showed that for every fixed $v \in \cS^{d-1}$ we have
\[
    \Abs{\Iprod{v^{\otimes 3},\widehat{M}_3-M_3}} \leq K\cdot\kappa_2(\gamma)^{-3/2}\left(\sqrt{\frac{d}{n}}+\frac{d^{3/2}}{n}\right)
\]
with probability $1-6e^{-10d}$, where $K > 0$ is a universal constant (independent of $v$). 
Let 
\[
T:=\widehat M_3-M_3.
\]
Since both \(\widehat M_3\) and \(M_3\) are symmetric order-\(3\) tensors, \(T\) is symmetric as well. Fix a \(\delta\)-net \({\cN_\delta\subseteq \cS^{d-1}}\) of the unit sphere (for some $\delta$ to be chosen later).
We will show that for any $v \in \R^d$, $\abs{\iprod{{v}^{\otimes 3},T}}$ is bounded by $\sup_{u \in \cN_\delta}\abs{\iprod{{u}^{\otimes 3},T}}$ and finish the proof by a union-bound argument over~$\cN_\delta$.

For any \(v\in \cS^{d-1}\), choose \(v'\in \cN_\delta\) such that \(\|v-v'\|\le \delta\) and write \(\Delta:=v-v'\).
By definition, we have \(v=\Delta+v'\), and expanding \((\Delta+v')^{\otimes 3}\) gives
\[
(\Delta+v')^{\otimes 3}
=
\Delta^{\otimes 3}
+\mathrm{Sym}(\Delta^{\otimes 2}\otimes v')
+\mathrm{Sym}(\Delta\otimes (v')^{\otimes 2})
+(v')^{\otimes 3},
\]
where we use $\mathrm{Sym}$ to denote all permutations of a tensor (so, for example, ${\mathrm{Sym}(\Delta^{\otimes 2}\otimes v')}$ is equal to ${\Delta \otimes \Delta \otimes v' + \Delta \otimes v' \otimes \Delta + v' \otimes \Delta \otimes \Delta}$).
Taking inner product with \(T\), and using the symmetry of \(T\), we get
\begin{align*}
\iprod{v^{\otimes 3},T}
&=\iprod{(\Delta+v')^{\otimes 3},T}\\
&=\iprod{\Delta^{\otimes 3},T}
+3\,\iprod{\Delta^{\otimes 2}\otimes v',T}
+3\,\iprod{\Delta\otimes (v')^{\otimes 2},T}
+\iprod{(v')^{\otimes 3},T}.
\end{align*}
Now, we bound the first three terms in terms of the injective tensor norm. Since \(T\) is symmetric,
\[
\sup_{\|a\|=\|b\|=\|c\|=1}\iprod{a\otimes b\otimes c,T}
=
\sup_{\|u\|=1}\iprod{u^{\otimes 3},T}.
\]
Therefore, for any \(a,b,c\in \mathbb R^d\), we have
\[
\iprod{a\otimes b\otimes c,T}
\le
\|a\|\,\|b\|\,\|c\|
\sup_{\|u\|=1}\iprod{u^{\otimes 3},T}.
\]
Applying this with \((a,b,c)=(\Delta,\Delta,\Delta)\), \((\Delta,\Delta,v')\), and \((\Delta,v',v')\), we obtain
\begin{align*}
    \iprod{v^{\otimes 3},T} &\le  \Paren{\|\Delta\|^3 + 3\|\Delta\|^2\|v'\| + 3\|\Delta\|\|v'\|^2} \cdot \sup_{\|u\|=1}\iprod{u^{\otimes 3},T}  + \iprod{(v')^{\otimes 3},T}\\
    &\le \Paren{\delta^3 + 3\delta^2 + 3\delta} \cdot \sup_{\|u\|=1}\iprod{u^{\otimes 3},T}  + \iprod{(v')^{\otimes 3},T}\\
    &\le \Paren{\delta^3 + 3\delta^2 + 3\delta} \cdot \sup_{\|u\|=1}\iprod{u^{\otimes 3},T}  + \sup_{u\in \cN_\delta}\iprod{u^{\otimes 3},T},
\end{align*}
where in the second step we used \(\|v'\|=1\) (since \(v'\in \cN_\delta\subseteq \cS^{d-1}\)) and $\|\Delta\| = \|v - v'\| \leq \delta$.
This holds for every \(v\in \cS^{d-1}\).
Taking the supremum over \(v\in \cS^{d-1}\) yields
\[
    \sup_{\|v\|=1}\iprod{v^{\otimes 3},T} \le \Paren{\delta^3+3\delta^2+3\delta} \cdot \sup_{\|v\|=1}\iprod{v^{\otimes 3},T} + \sup_{u\in \cN_\delta}\iprod{u^{\otimes 3},T}.
\]
Rearranging, we get that
\[
    \Paren{1-3\delta-3\delta^2-\delta^3} \cdot \sup_{\|v\|=1}\iprod{v^{\otimes 3},T} \le \sup_{u\in \cN_\delta}\iprod{u^{\otimes 3},T}.
\]
Now, we set \(\delta=\frac14\). Then, we have that
\[
    1-3\delta-3\delta^2-\delta^3 = 1-\frac34-\frac{3}{16}-\frac{1}{64} = \frac{3}{64},
\]
and thus
\[
    \sup_{\|v\|=1}\iprod{v^{\otimes 3},T} \le \frac{64}{3}\sup_{u\in \cN_{1/4}}\iprod{u^{\otimes 3},T}.
\]
Moreover, there exists a \(1/4\)-net \(\cN_{1/4}\subseteq \cS^{d-1}\) with cardinality \(|\cN_{1/4}|\le 9^d\) \cite[Corollary 4.2.11]{Vershynin_2018}.
Applying our fixed direction bound to each fixed \(u\in \cN_{1/4}\), and then taking a union bound, we get
\begin{align*}
&\Pr\left[\sup_{\|v\|=1}\iprod{v^{\otimes 3},\widehat M_3-M_3}\ge\frac{64K}{3}\kappa_2(\gamma)^{-3/2}\left(\sqrt{\frac dn}+\frac{d^{3/2}}{n}\right)
\right]\\
&\qquad\le \Pr\left[\sup_{u\in \cN_{1/4}}\iprod{u^{\otimes 3},\widehat M_3-M_3} \ge K\kappa_2(\gamma)^{-3/2}\left(\sqrt{\frac dn}+\frac{d^{3/2}}{n}\right)\right]\\
&\qquad\le |\cN_{1/4}|\cdot 6e^{-10d} \le 9^d\cdot 6e^{-10d} \le 6e^{-7d}.
\end{align*}
This proves the lemma.
\end{proof}

\section{Computations}
\label{APP:computations}
In this appendix, we prove some of the results used for the proof of \Cref{lemma:monotonicity-of-skewness}.
Furthermore, we will give a proof of \Cref{CLAIM:ddtskewoverddtkappa2,CLAIM:ddtskewoverddtpsi,CLAIM:ddtkappa2overddtkappa1}.
Finally, we provide a proof for a fact used in \Cref{APP:facts}.
All of the proofs here are (partially) computer-assisted in that we provide Mathematica code to verify some properties of certain functions.

Recall that for \Cref{lemma:monotonicity-of-skewness} we there used the function
\[
   p_t(x)=2t^3-6t+(t^4+14t^2-5)x+(2t^3+20t)x^2+(t^2+8)x^3.
\]
with derivative
\[
    \Paren{\frac{\partial}{\partial x}p_t}(x)=t^4+14t^2-5+4(t^3+10t)\cdot x+3(t^2+8)\cdot x^2.
\]

\begin{fact}\label{FACT:computationmillsratio1}
    Define $x_\ell(t) = \frac{e^{-t^2/2}}{\sqrt{\pi/2}+t}$.
    For any $t \geq 0$, we have that
    \[
        \Paren{\frac{\partial}{\partial x}p_t}(x_\ell(t)) > 0
    \]
    and
    \[
        p_t(x_\ell(t)) > 0.
    \]
\end{fact}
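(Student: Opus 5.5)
We need to show that for $t\ge 0$, both $q(t):=\Paren{\frac{\partial}{\partial x}p_t}(x_\ell(t))>0$ and $p_t(x_\ell(t))>0$, where $x_\ell(t)=e^{-t^2/2}/(\sqrt{\pi/2}+t)\in(0,\sqrt{2/\pi}]$. The plan is to split $[0,\infty)$ into a large-$t$ regime, handled by crude sign arguments, and a compact regime, handled by a certified numerical check.

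\textbf{Large $t$.} For $t\ge 1$ every coefficient of $x$ and $x^2$ in both polynomials is nonnegative: $t^4+14t^2-5>0$, $2t^3+20t>0$, $4(t^3+10t)>0$, and $t^2+8,\,3(t^2+8)>0$. Since $x_\ell(t)>0$, this gives
\[
\Paren{\frac{\partial}{\partial x}p_t}(x_\ell(t))\ge t^4+14t^2-5>0
\]
directly. For $p_t$, the constant term $2t^3-6t=2t(t^2-3)$ is nonnegative once $t\ge\sqrt3$, so $p_t(x_\ell(t))>0$ for $t\ge\sqrt3$. On $[1,\sqrt3]$ the constant term is negative but small, and I will keep only the linear term, i.e. show
\[
2t^3-6t+(t^4+14t^2-5)\,x_\ell(t)>0 .
\]
At $t=1$ this reads $-4+10x_\ell(1)$ with $x_\ell(1)=e^{-1/2}/(1.2533+1)\approx0.269$, which gives about $-1.31$. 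That is negative, so the linear term alone is insufficient and the quadratic and cubic terms must be kept; even so the total is only about $-0.28$ at $t=1$, so the sign at $t=1$ is not settled by this crude bound and I defer the whole interval $[0,\sqrt3]$ to the interval check below.

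\textbf{Compact regime $[0,\sqrt3]$.} Both $q(t)$ and $p_t(x_\ell(t))$ are smooth functions of $t$, built from polynomials and $e^{-t^2/2}/(\sqrt{\pi/2}+t)$. I will verify positivity by interval arithmetic. First cover $[0,\sqrt3]$ with a fine grid. On each cell $[a,b]$, enclose $x_\ell$ using that it is decreasing, since both factors $e^{-t^2/2}$ and $1/(\sqrt{\pi/2}+t)$ are decreasing for $t\ge0$, so $x_\ell(b)\le x_\ell\le x_\ell(a)$. Then bound each polynomial term using the signs of its coefficients on the cell, which are explicit. Mathematica's \texttt{Reduce} or \texttt{FindMinimum}, combined with a derivative bound, certifies a strictly positive minimum on each cell.

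At $t=0$ the values are $q(0)=-5+24\cdot\frac{2}{\pi}\approx 10.3>0$ and $p_0(x_\ell(0))=-5x+8x^3$ with $x=\sqrt{2/\pi}\approx0.798$, giving $-3.99+4.07\approx0.08$. This margin is very thin, so near $t=0$ the cells must be fine and the enclosure tight.

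\textbf{Main obstacle.} The thin margin of $p_t(x_\ell(t))$ is the crux, both near $t=0$ and possibly near $t\approx1$, where my rough estimate even suggested a negative value. I need to recompute it carefully: with $x=0.269$ the terms are $-4+2.69+22\cdot0.0724+9\cdot0.0195=-4+2.69+1.593+0.175=0.458>0$, so it is in fact positive. The negative estimate came from a miscalculation, but positivity still rests on fine numerics.

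If the bound $x_\ell$ turns out to be too lossy somewhere, the fallback is to replace $x_\ell$ by a sharper lower bound on $\phi/\Phi$. Since $p_t$ is increasing in $x$ on the relevant range, this only helps the use of this Fact in \Cref{lemma:monotonicity-of-skewness}. The Fact itself, however, is stated exactly for $x_\ell$, so the certified grid check at fine resolution is the route I will actually follow.
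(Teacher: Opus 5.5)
Your proposal is correct. It is computer-assisted like the paper's proof, but organized differently.

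The paper simply calls Mathematica's \texttt{Reduce} on both inequalities over all real $t$, which reports validity for $t\neq-\sqrt{\pi/2}$ and for $t>-\sqrt{\pi/2}$ respectively. The unbounded range, the transcendental factor $e^{-t^2/2}$ and the thin margin near $t=0$ are all handled inside that black box.

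You instead treat the unbounded part by hand:
\begin{itemize}
\item For $t\ge1$ every non-constant coefficient is positive, so $(\partial_x p_t)(x_\ell(t))\ge t^4+14t^2-5>0$.
\item For $t\ge\sqrt3$ the constant term $2t(t^2-3)$ is nonnegative and the linear coefficient is positive, so $p_t(x_\ell(t))>0$.
\end{itemize}
This leaves a finite certified check on $[0,\sqrt3]$, using the monotone enclosure $x_\ell(b)\le x_\ell\le x_\ell(a)$ on each cell $[a,b]$.

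What your route buys is transparency and independent auditability, with no trust placed in how \texttt{Reduce} handles exp-polynomial inequalities. The price is precision near $t=0$. There $p_0(x_\ell(0))=\sqrt{2/\pi}\,(16/\pi-5)\approx0.074$, while the individual terms have size about $4$. Naive interval enclosures lose roughly $30h$ on a cell of width $h$, so you need cells of width about $10^{-3}$ near $t=0$. That is cheap, and numerically the minimum of $t\mapsto p_t(x_\ell(t))$ on $[0,\sqrt3]$ is the value at $t=0$, so the check goes through. The quantity $(\partial_x p_t)(x_\ell(t))$ has a large margin on $[0,1]$ (about $10.3$ at $t=0$), so it causes no difficulty.

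Two clean-up points:
\begin{itemize}
\item Remove the erroneous intermediate estimate ``about $-0.28$ at $t=1$''. The correct value is $-4+10x+22x^2+9x^3\approx0.46$ with $x=x_\ell(1)\approx0.269$.
\item \texttt{FindMinimum} by itself certifies nothing. Only the rigorous per-cell enclosures, or an explicit Lipschitz bound, constitute the proof on the compact interval.
\end{itemize}
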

\begin{proof}
    This can be verified with the following Mathematica code:
\begin{lstlisting}
lb[t_] = Exp[-t^2/2]/(Sqrt[Pi/2] + t);
Reduce[t^4 + 14 t^2 - 5 + 4 (t^3 + 10 t) lb[t] + 3 (t^2 + 8) lb[t]^2 > 0]
Reduce[2 t^3 - 6 t + (t^4 + 14 t^2 - 5) lb[t] + (2 t^3 + 20 t) lb[t]^2 + (t^2 + 8) lb[t]^3 > 0]
\end{lstlisting}
     This shows that the two statements are true for $t \neq -\sqrt{\frac{\pi}{2}}$ and $t > -\sqrt{\frac{\pi}{2}}$ respectively and thus in particular for $t \geq 0$.
\end{proof}

\begin{fact}\label{FACT:computationmillsratio2}
    Define
    \[
        x_\ell(t) = \frac{\sqrt{(\pi-2)^2t^2+2\pi}-2t}{\pi} \quad \text{and} \quad x_u(t) = \frac{\sqrt{t^2+2\pi}-(\pi-1)t}{\pi}.
    \]
    For $-\frac{4}{10} < t \leq 0$, we have
    \[
        t^4+14t^2-5+4(t^3+10t)\cdot x_u(t)+3(t^2+8)\cdot x_\ell(t)^2 > 0
    \]
    and
    \[
        p_t(x_\ell(t)) > 0.
    \]
\end{fact}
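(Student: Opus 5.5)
The plan is to treat both inequalities as statements about explicit algebraic functions of a single variable $t$ on the compact interval $[-\frac{4}{10},0]$, and to verify them in the same computer-assisted way as \Cref{FACT:computationmillsratio1}. Concretely, define
\[
    F_1(t) = t^4+14t^2-5+4(t^3+10t)\, x_u(t)+3(t^2+8)\, x_\ell(t)^2,
    \qquad
    F_2(t) = p_t(x_\ell(t)).
\]
Both are built from $t$ and the two radicals $\sqrt{(\pi-2)^2t^2+2\pi}$ and $\sqrt{t^2+2\pi}$. The radicands are bounded below by $2\pi$, so $F_1$ and $F_2$ are smooth on the interval.

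I would first run the analogue of the earlier Mathematica code:
\begin{lstlisting}
xl[t_] = (Sqrt[(Pi-2)^2 t^2 + 2 Pi] - 2 t)/Pi;
xu[t_] = (Sqrt[t^2 + 2 Pi] - (Pi-1) t)/Pi;
Reduce[-4/10 < t <= 0 && t^4 + 14 t^2 - 5 + 4 (t^3 + 10 t) xu[t] + 3 (t^2 + 8) xl[t]^2 <= 0, t, Reals]
Reduce[-4/10 < t <= 0 && 2 t^3 - 6 t + (t^4 + 14 t^2 - 5) xl[t] + (2 t^3 + 20 t) xl[t]^2 + (t^2 + 8) xl[t]^3 <= 0, t, Reals]
\end{lstlisting}
Each call should return \texttt{False}. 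This is feasible because every expression is algebraic in $t$, with $\pi$ entering only as a constant, so cylindrical algebraic decomposition applies after the radicals are introduced as auxiliary variables.

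As a sanity check, and as a fallback if \texttt{Reduce} struggles with $\pi$, I would evaluate the endpoint $t=0$. There $x_\ell(0)=x_u(0)=\sqrt{2/\pi}$. This gives $F_1(0)=-5+48/\pi\approx 10.3$ and $F_2(0)=\sqrt{2/\pi}\,(16/\pi-5)\approx 0.074$. Hence $F_1$ has a comfortable margin, while $F_2$ is only barely positive.

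For the fallback, I would make the argument rigorous by elementary interval arithmetic.
\begin{itemize}
    \item On $[-\frac{4}{10},0]$, bound the radicals from above and below by their values at the endpoints, which is valid since both are monotone in $t^2$.
    \item Obtain this way enclosures of $x_\ell$ and $x_u$ on subintervals.
    \item Substitute the enclosures into $F_1$ and $F_2$, taking for each monomial the extreme consistent with its sign. For $t\le 0$ the factors $t^3+10t$ and $2t^3+20t$ are nonpositive.
    \item Subdivide the interval finely enough that the resulting lower bound stays positive on every piece.
\end{itemize}

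The main obstacle is $F_2$ near $t=0$, where the margin is only about $0.07$. A coarse enclosure will fail there, so I would subdivide into small pieces, say of width $0.01$, on $[-0.1,0]$. Alternatively, I would show that $F_2$ is decreasing on $[-\frac{4}{10},0]$ by bounding $F_2'$, using that $-6t$ and the $x_\ell$ terms grow as $t$ decreases. This would reduce the check to $F_2(0)>0$.
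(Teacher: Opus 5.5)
Your primary route, running \texttt{Reduce} on the two inequalities restricted to $(-\frac{4}{10},0]$ and expecting \texttt{False}, is the same computer-assisted check the paper performs. The paper reduces each inequality over all real $t$ and reads off the solution sets $-1.83192<t<4.5716$ and $-0.419026<t<0.819969$, both of which contain $(-\frac{4}{10},0]$. So the proposal stands on that step.

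Your fallback, however, puts the difficulty at the wrong end of the interval. The function $F_2(t)=p_t(x_\ell(t))$ is increasing, not decreasing, on $[-\frac{4}{10},0]$. Numerically $F_2(-0.4)\approx 0.003$, $F_2(-0.2)\approx 0.037$ and $F_2(0)\approx 0.074$. Moreover, $F_2$ has a root at $t\approx -0.419$, just outside the interval, which is why the paper places its case split at $-\frac{4}{10}$.

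Consequently, the proposed monotonicity reduction to $F_2(0)>0$ cannot work, because the claimed sign of $F_2'$ is false. A correct version would show that $F_2$ is increasing and then check $F_2(-\frac{4}{10})>0$, which holds with a margin of only about $0.003$.

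For the same reason, the fine subdivision must be concentrated near $t=-\frac{4}{10}$ rather than on $[-0.1,0]$. Near the left endpoint, the four summands of $p_t(x_\ell(t))$ have $t$-derivatives ranging in size from about $5$ to $36$. These nearly cancel, summing to only about $0.15$. Enclosing each monomial separately therefore requires pieces of width well below $10^{-3}$ near $t=-\frac{4}{10}$.
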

\begin{proof}
    This can be verified with the following Mathematica code:
\begin{lstlisting}
xl[t_] = (Sqrt[(Pi - 2)^2 t^2 + 2 Pi] - 2 t)/Pi;
xu[t_] = (Sqrt[t^2 + 2 Pi] - (Pi - 1) t)/Pi;
Reduce[t^4 + 14 t^2 - 5 + 4 (t^3 + 10 t) xu[t] + 3 (t^2 + 8) xl[t]^2 > 0]
N[%]
Reduce[2 t^3 - 6 t + (t^4 + 14 t^2 - 5) xl[t] + (2 t^3 + 20 t) xl[t]^2 + (t^2 + 8) xl[t]^3 > 0]
N[%]
\end{lstlisting}
This shows that the statements are true for $-1.83192 < t < 4.5716$ and $-0.419026 < t < 0.819969$ respectively and thus in particular for $ -\frac{4}{10}<t\leq 0$.
\end{proof}

\begin{fact}\label{FACT:computationmillsratio3}
    Define
    \[
        r_{k+1}(x) \coloneqq {x+\frac{1}{x+\frac{2}{x+\frac{3}{\underset{x+\frac{k}{x}}{\ddots}}}}}.
    \]
    Let $x_\ell(t) = r_{81}(-t)$ and $x_u(t) = r_{20}(-t)$.
    For $t \leq -\frac{4}{10}$, we have $x_\ell(t) > 0$,
    \begin{equation}\label{EQ:APP:computationmillsratio3:eq1}
        t^4+14t^2-5+4(t^3+10t)\cdot x_u(t)+3(t^2+8)\cdot x_\ell(t)^2 > 0        
    \end{equation}
    and
    \begin{equation}\label{EQ:APP:computationmillsratio3:eq2}
        p_t(x_\ell(t)) > 0.
    \end{equation}
\end{fact}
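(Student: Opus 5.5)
We would verify this fact the same way as \Cref{FACT:computationmillsratio1,FACT:computationmillsratio2}: reduce each claim to a polynomial inequality in one variable and let exact computer algebra certify it. First substitute $s = -t$, so the range becomes $s \geq \frac{4}{10}$, and work with $r_{81}(s)$ and $r_{20}(s)$.

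The positivity $x_\ell(t) = r_{81}(s) > 0$ needs no computation. Evaluate the finite continued fraction from the innermost level outward: the innermost term $s + \frac{k}{s}$ is positive for $s>0$. Each further level $s + \frac{j}{(\text{positive})}$ with $j \geq 1$ is again positive. Hence $r_{k}(s) > 0$ for every $k$ and every $s > 0$.

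For \eqref{EQ:APP:computationmillsratio3:eq1} and \eqref{EQ:APP:computationmillsratio3:eq2} we would first write $r_{81}$ and $r_{20}$ explicitly as quotients of polynomials. Nesting the fraction symbolically creates unmanageable expressions, so we would use the standard three-term recurrence for convergents instead. Set
\[
    P_{-1}=1,\quad P_0=s,\quad Q_{-1}=0,\quad Q_0=1,\qquad P_j = sP_{j-1} + jP_{j-2},\quad Q_j = sQ_{j-1} + jQ_{j-2}.
\]
Then $r_{k+1}(s) = P_k(s)/Q_k(s)$, and both $P_k$ and $Q_k$ have nonnegative integer coefficients, so $Q_k(s) > 0$ for $s > 0$. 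Substituting $x_\ell = P_{80}/Q_{80}$ and $x_u = P_{19}/Q_{19}$ into the two left-hand sides and multiplying by the positive denominators $Q_{19}Q_{80}^2$ and $Q_{80}^3$ respectively leaves two polynomial inequalities in $s$ with exact integer coefficients. These must hold on $[\frac{4}{10},\infty)$.

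We would hand these polynomials to Mathematica's \texttt{Reduce}, or equivalently count real roots on $[\frac{4}{10},\infty)$ with Sturm sequences. Positivity then follows from three checks: the polynomial is positive at $s = \frac{4}{10}$, its leading coefficient is positive, and it has no real root in the interval. The main obstacle is size: the polynomials have degree in the hundreds with very large integer coefficients. Floating-point evaluation is therefore unreliable, and everything must be done in exact rational arithmetic.

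The high orders $81$ and $20$ are needed because the bounds $r_{2m-1}(s) < h(s) < r_{2m}(s)$ of \Cref{FACT:rational-appx-of-imr} are poor near the left endpoint $s = \frac{4}{10}$. That is the regime where the margin in $p_t(x_\ell(t))$ is smallest. If the direct root count proves too expensive, we would split the range. On $[\frac{4}{10}, 3]$ we would apply exact interval arithmetic, or a Sturm count, to the rational expression. On $[3,\infty)$ a much lower order convergent already suffices, and the check there reduces to a low-degree polynomial. The two halves together give the fact for all $t \leq -\frac{4}{10}$.
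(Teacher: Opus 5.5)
Your proposal is correct and is essentially the paper's proof. Both reduce the fact to exact computer-algebra checks (\texttt{Reduce}) of rational-function inequalities in $t$. Your hand argument for $r_{81}(-t)>0$ and the explicit convergent recurrence only make explicit what the paper leaves to Mathematica.

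The one difference is in how the cost of \eqref{EQ:APP:computationmillsratio3:eq1} is reduced. The paper replaces $x_\ell=r_{81}(-t)$ by the smaller positive convergent $r_{27}(-t)$ in the term $3(t^2+8)x_\ell^2$, after checking $r_{81}(-t)\ge r_{27}(-t)>0$ with \texttt{Reduce}. This is a cleaner size reduction than your range-splitting fallback. Your fallback's substitution of lower-order convergents on $[3,\infty)$ would itself need such a monotonicity justification, and you do not supply one.
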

\begin{proof}
    To prove this fact, we additionally define the following function $x_\ell'(t) = r_{27}(-t)$.
    We claim that $x_\ell(t) \geq x_\ell'(t) > 0$ (which will also show $x_\ell(t) > 0$) and thus, since $3(t^2+8) > 0$, it is sufficient to show 
    \begin{equation}\label{EQ:APP:computationmillsratio3:eq3}
        t^4+14t^2-5+4(t^3+10t)\cdot x_u(t)+3(t^2+8)\cdot x_\ell'(t)^2.
    \end{equation}
    in order to show \eqref{EQ:APP:computationmillsratio3:eq1}.
    We can verify $x_\ell(t) \geq x_\ell'(t) > 0$, \eqref{EQ:APP:computationmillsratio3:eq3} and \eqref{EQ:APP:computationmillsratio3:eq2} with the following Mathematica code:
\begin{lstlisting}
r[k_Integer, x_] := Module[{result = x, i}, For[i = k - 1, i >= 1, i--, result = x + i/result]; result]
Reduce[r[81, -t] >= r[27, -t] > 0]
Reduce[t^4 + 14 t^2 - 5 + 4 (t^3 + 10 t) r[20, -t] + 3 (t^2 + 8) r[27, -t]^2 > 0]
N[%]
Reduce[2 t^3 - 6 t + (t^4 + 14 t^2 - 5) r[81, -t] + (2 t^3 + 20 t) r[81, -t]^2 + (t^2 + 8) r[81, -t]^3 > 0]
N[%]
\end{lstlisting}
This shows that the three statements we need to verify are true for $t < 0$, $t < -0.271834$ and $t < -0.370331$ respectively and thus in particular for $t < -\frac{4}{10}$.
\end{proof}
\begin{fact}\label{FACT:computationmillsratio4}
    Define $\kappa_3(t)=\frac{\phi(t)}{\Phi(t)} \cdot \left( 1 - t^2 - 3 t \cdot \frac{\phi(t)}{\Phi(t)} - 2 \cdot \left(\frac{\phi(t)}{\Phi(t)}\right)^2 \right)$.\\
    Then for $-2<t<2$ we have $\abs{\kappa_3(t)}\geq\frac{1}{20}$.
\end{fact}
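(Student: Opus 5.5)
We need to show $|\kappa_3(t)|\ge 1/20$ on $(-2,2)$. By \Cref{claim:monotonicity-of-kappa2}, $\kappa_3<0$, so with $m(t)=\phi(t)/\Phi(t)$ we have
\[
|\kappa_3(t)| = m(t)\,q_t(m(t)), \qquad q_t(x)=t^2-1+3tx+2x^2 .
\]
The plan follows the template of the other facts in this appendix. We sandwich $m(t)$ between explicit elementary bounds, use monotonicity in $x$ to reduce to a closed-form inequality, and verify that inequality with Mathematica's \texttt{Reduce}.

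We split the interval into a few pieces. On each piece we need lower and upper bounds $x_\ell(t)\le m(t)\le x_u(t)$.

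\textbf{For $t\in[0,2)$.} We use $x_\ell(t)=e^{-t^2/2}/(\sqrt{\pi/2}+t)$ from the tangent-line bound $\Phi(t)\le \tfrac12+t/\sqrt{2\pi}$, as in Case (i) of \Cref{lemma:monotonicity-of-skewness}. For the upper bound we use $\Phi(t)\ge \tfrac12$, so $x_u=2\phi(t)$.

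\textbf{For $t\in(-2,0]$.} We write $m(t)=h(-t)$ and take both bounds from \Cref{FACT:ir-appx-of-imr2}, or alternatively from the Sampford/Birnbaum bounds of \Cref{FACT:ir-appx-of-imr}.

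We then bound the product $x\,q_t(x)$ below over $x\in[x_\ell,x_u]$. The function $F_t(x)=x\,q_t(x)=2x^3+3tx^2+(t^2-1)x$ has derivative $6x^2+6tx+t^2-1$. We handle two situations:
\begin{itemize}
\item When $F_t$ is increasing on the bracket, which we check by evaluating the derivative at the endpoints as in \Cref{lemma:monotonicity-of-skewness}, we simply need $F_t(x_\ell(t))\ge 1/20$.
\item Otherwise, we use the crude bound $\min F_t \ge x_\ell\,q_t(x_\ell)$ together with an upper bound on the negative part of the linear term.
\end{itemize}
Concretely, the first thing we try is to show by \texttt{Reduce} that $6x_\ell^2+6tx_\ell+t^2-1>0$ on each piece, and then that $F_t(x_\ell(t))-1/20>0$.

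The main obstacle is tightness near the endpoints $t=\pm2$. The true values there are moderate: roughly $|\kappa_3(2)|\approx 0.2$ and, at $t=-2$, a value near $0.1$. However, crude bounds on $m$ lose a constant factor. For $t$ near $-2$, the Boyd bounds of \Cref{FACT:ir-appx-of-imr2} may be too loose, so we would instead use the continued-fraction bounds $r_{k}(-t)$ of \Cref{FACT:rational-appx-of-imr} with large odd and even $k$, as in \Cref{FACT:computationmillsratio3}. For $t$ near $2$, we use $\Phi(t)\ge\Phi(1)$ on $[1,2)$ to tighten $x_u$.

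If symbolic \texttt{Reduce} fails on a subinterval, the fallback is a rigorous interval argument:
\begin{itemize}
\item $|\kappa_3|$ is Lipschitz with constant $\sup|\kappa_3'|=O(1)$, by differentiating and using the bounds in \Cref{claim:kappa1kappa3bounded}.
\item We evaluate $|\kappa_3|$ on a fine grid of points in $(-2,2)$, checking that it exceeds $1/20$ by more than the Lipschitz error between grid points.
\end{itemize}
This completes the verification.
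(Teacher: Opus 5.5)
Your route is viable, but it differs from the paper's. The paper does not bound the Mills ratio here at all. It passes the exact function $-\kappa_3$ (with Mathematica's built-in $\Phi$) to \texttt{Minimize} on $[-2,2]$, and then uses \texttt{Reduce} to confirm that the returned minimum exceeds $1/20$. You instead replace $\phi/\Phi$ by certified elementary bounds, so that only rational or algebraic inequalities in $t$ remain for \texttt{Reduce}, in the style of \Cref{FACT:computationmillsratio1,FACT:computationmillsratio2,FACT:computationmillsratio3}. This costs casework, but it avoids trusting a numerical global minimization of a transcendental function, which is the part of the paper's check one has to take on faith. Your monotonicity step is fine and needs no upper bounds at all. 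The derivative $F_t'(x)=6x^2+6tx+t^2-1$ is an upward parabola with vertex at $-t/2$. Each lower bound you use satisfies $x_\ell(t)>-t/2$ (for $t<0$ they are all $\geq -t$), so $F_t'(x_\ell)>0$ already gives $F_t(m)\geq F_t(x_\ell)$.

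Where you must be more careful is the left endpoint. Your estimate there is wrong: $|\kappa_3(-2)|\approx 0.059$, not $0.1$, and this is the minimum of $|\kappa_3|$ on $[-2,2]$. Since $\partial_x F_{-2}\approx 8.3$ at $m(-2)=h(2)\approx 2.3732$, your lower bound on $h(-t)$ must be accurate to about $10^{-3}$ near $t=-2$.
\begin{itemize}
\item Boyd's bound from \Cref{FACT:ir-appx-of-imr2} gives $2.3525$ at $t=-2$, so $F_{-2}(x_\ell)\approx -0.11$. It already fails for $t\lesssim -1.1$.
\item Your Sampford alternative is useless on all of $t<0$. There $q_t$ vanishes identically at $\tfrac14(\sqrt{t^2+8}-3t)$, which is exactly the computation in Case (iii) of the proof of \Cref{claim:monotonicity-of-kappa2}.
\end{itemize}
So the continued-fraction branch is mandatory, not an optional refinement. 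For example, take $x_\ell(t)=r_{81}(-t)$ on $(-2,-0.4]$ (at the endpoint alone, $r_9(2)\approx 2.3728$ would already suffice) and Boyd's bound on $(-0.4,0]$. On $[0,2)$ your tangent-line bound alone gives $F_t(x_\ell(t))\gtrsim 0.13$, so that side is comfortable.

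For the grid fallback, note that \Cref{claim:kappa1kappa3bounded} bounds $\kappa_3$, not $\kappa_3'$. You would need a separate explicit bound on $\kappa_3'$ over $[-2,2]$, which is easy to get from $m'=-m(t+m)$.
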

\begin{proof}
    This can be verified with the following Mathematica code:
\begin{lstlisting}
phi[t_] = PDF[NormalDistribution[0, 1], t];
pphi[t_] = CDF[NormalDistribution[0, 1], t];
kappa3[t_] = 
  phi[t]/pphi[t]*(1 - t^2 - 3 t*phi[t]/pphi[t] - 2 (phi[t]/pphi[t])^2);
Minimize[{-kappa3[t], -2 <= t <= 2}, t];
Reduce[First[%] > 1/20]
\end{lstlisting}
This shows that the statement is true for $-2<t<2$.
\end{proof}
Now, we give a proof of \Cref{claim:skew-atmost2,CLAIM:ddtskewoverddtkappa2,CLAIM:ddtskewoverddtpsi,CLAIM:ddtkappa2overddtkappa1} that we restate below.

\begin{claim*}[Restatement of \Cref{claim:skew-atmost2}]\claimskewatmosttwotext
\end{claim*}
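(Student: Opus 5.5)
The claim splits into two limit computations plus a monotonicity step. Once $\lim_{t\to-\infty}\sk(t)=-2$ and $\lim_{t\to+\infty}\sk(t)=0$ are established, \Cref{lemma:monotonicity-of-skewness} says $\sk$ is increasing on $\R$. Hence $-2 = \inf_t \sk(t) \le \sk(t) \le \sup_t\sk(t) = 0$ for every $t$, which gives $\sk(t)\in[-2,0]$ and $|\sk(t)|\le 2$. The negativity of $\sk$ also follows independently from \Cref{claim:bound-on-kappa2,claim:monotonicity-of-kappa2}.

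For $t\to+\infty$ I would apply \Cref{claim:skew-asymptotic}. It gives $|\sk(t)|\le M(1-\Phi(t))\log^{3/2}(1/(1-\Phi(t)))$ for $t\ge 0$. Since $1-\Phi(t)\to 0$ and $\delta\log^{3/2}(1/\delta)\to 0$ as $\delta\to 0$, we get $\sk(t)\to 0$. A more direct route is also available: $\kappa_2(t)\to 1$ by \Cref{claim:monotonicity-of-kappa2}, and \Cref{claim:kappa3-upper} bounds $|\kappa_3(t)|$.

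The main work is the limit $t\to-\infty$. Set $s=-t\to+\infty$ and write $h=h(s)=\phi(t)/\Phi(t)$. By \Cref{fact:first-3-cumulants},
\[
\kappa_2 = 1+sh-h^2, \qquad \kappa_3 = h\,(1-s^2+3sh-2h^2).
\]
The plan is to substitute the asymptotic expansion of the Mills ratio, $h(s)=s+1/s-2/s^3+O(s^{-5})$, which \Cref{FACT:rational-appx-of-imr} supplies rigorously by sandwiching $h$ between $r_5(s)$ and $r_4(s)$ (or higher-order $r_k$). Expanding gives $\kappa_2 = s^{-2}+O(s^{-4})$. There is heavy cancellation here, since the $O(s^2)$ and $O(1)$ terms vanish. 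Similarly $1-s^2+3sh-2h^2 = -2s^{-2}+O(s^{-4})$, so $\kappa_3 = -2s^{-3}+O(s^{-5})$, and therefore $\sk=\kappa_3/\kappa_2^{3/2}\to -2$. This matches the fact that the truncated tail is asymptotically an exponential distribution, whose skewness is $2$; here the sign is reversed.

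The obstacle is exactly this cancellation. One needs enough terms of the continued fraction so that the bounds on $\kappa_2$ and $\kappa_3$ have matching leading coefficients, and one must track the direction of each inequality (the quadratics in $h$ are monotone on the relevant range, as in the proofs of \Cref{claim:kappa2-lower,claim:kappa3-lower}). I would take lower and upper rational bounds $r_{2m-1}(s)\le h\le r_{2m}(s)$ with $m=3$ or $4$ and plug them into the monotone quadratics $q_t$ already used in those claims. This produces rational functions of $s$ bounding $s^2\kappa_2$ and $s^3\kappa_3$, whose limits I would verify as $1$ and $-2$, possibly with a short Mathematica \texttt{Limit} check in the style of this appendix.
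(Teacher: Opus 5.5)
Your overall structure matches the paper's. You establish $\lim_{t\to-\infty}\sk(t)=-2$ and $\lim_{t\to+\infty}\sk(t)=0$, then squeeze $\sk(t)$ between them using \Cref{lemma:monotonicity-of-skewness}. The difference is in how you get the limits. The paper simply applies Mathematica's \texttt{Limit} to the closed form of $\sk$ in terms of $\phi$ and $\Phi$. You derive them by hand:
\begin{itemize}
\item $t\to+\infty$ from the upper bound in \Cref{claim:skew-asymptotic}. Its proof does not use the present claim, so there is no circularity.
\item $t\to-\infty$ by sandwiching $h$ between continued-fraction convergents from \Cref{FACT:rational-appx-of-imr} and pushing the bounds through the monotone quadratics of \Cref{claim:kappa2-lower,claim:kappa3-lower}.
\end{itemize}
Your route is longer, but it replaces the black-box symbolic limit involving $\Phi$ with checkable steps. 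Any computer verification that remains concerns only explicit rational functions.

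One correction to your heuristic: the cancellation in $\kappa_3$ goes one order deeper than you state. Writing $h=s+s^{-1}+as^{-3}+bs^{-5}+O(s^{-7})$, a direct expansion gives
\[
1-s^2+3sh-2h^2=-(a+2)\,s^{-2}-(b+4a)\,s^{-4}+O(s^{-6}).
\]
With $a=-2$ and $b=10$ this is $-2s^{-4}+O(s^{-6})$, not $-2s^{-2}+O(s^{-4})$. Your version would give $\kappa_3\approx-2s^{-1}$, which contradicts your next line.

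So knowing $h$ only up to $O(s^{-5})$ does not determine the leading term of $\kappa_3$; you need the $10s^{-5}$ term. For the same reason, the pair $r_3\le h\le r_4$ would fail. Indeed $r_3=s+s^{-1}-2s^{-3}+4s^{-5}+\dots$ has the wrong $s^{-5}$ coefficient, and $2r_3^2-3sr_3+s^2-1=-4/(s^2+2)^2<0$, so the resulting lower bound on $|\kappa_3|$ is vacuous.

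Your choice $m\ge3$ is fine, and in fact $r_5\le h\le r_4$ already suffices. Most of the needed rational functions are computed in the proofs of \Cref{claim:kappa2-lower,claim:kappa1kappa2bounded,claim:kappa3-lower}:
\begin{itemize}
\item the bounds on $\kappa_2$ from $r_4$ and from $r_5$;
\item the lower bound on $|\kappa_3|$ from $r_5$.
\end{itemize}
For the upper bound on $|\kappa_3|$, note $2r_4^2-3sr_4+s^2-1=(2s^2+18)/(s^2(s^2+5)^2)$. The map $x\mapsto x(2x^2-3sx+s^2-1)$ is increasing on $[r_5,r_4]$ for $s\ge2$. Together these give $s^2\kappa_2\to1$ and $s^3\kappa_3\to-2$, hence $\sk\to-2$.
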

\begin{proof}
    We can compute these limits with the following Mathematica code:
\begin{lstlisting}
phi[t_] = PDF[NormalDistribution[0, 1], t];
Phi[t_] = CDF[NormalDistribution[0, 1], t];
kappa2[t_] = 1 - t phi[t]/Phi[t] - (phi[t]/Phi[t])^2;
kappa3[t_] = 
  phi[t]/Phi[t] (1 - t^2 - 3 t phi[t]/Phi[t] - 2 (phi[t]/Phi[t])^2);
skew[t_] = kappa3[t]/kappa2[t]^(3/2);
Limit[skew[t], t -> -\[Infinity]]
Limit[skew[t], t -> \[Infinity]]
\end{lstlisting}
    This shows that $\lim_{t\to-\infty} \sk(t) = -2$ and $\lim_{t\to+\infty} \sk(t) = 0$.
    Together with the monotonicity of $\sk$ (cf. \Cref{lemma:monotonicity-of-skewness}), this implies that $\sk(t) \in [-2,0]$ and in particular $|\sk(t)| \leq 2$ for all $t \in \R$.
\end{proof}

\begin{claim*}[Restatement of \Cref{CLAIM:ddtskewoverddtkappa2}]\claimddtskewoverddtkappatwotext
\end{claim*}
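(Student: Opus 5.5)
The proof follows the recipe the paper sketches just before the statement. Consider the ratio
\[
    R(t) \coloneqq \frac{\ddt\sk(t)}{|\ddt\kappa_2(t)|} = \frac{2\Paren{\ddt\kappa_3(t)}\kappa_2(t)+3\kappa_3(t)^2}{2\kappa_2(t)^{5/2}\,|\kappa_3(t)|},
\]
where the numerator formula comes from the derivative computed in the proof of \Cref{lemma:monotonicity-of-skewness}, and the denominator uses $\ddt\kappa_2 = -\kappa_3$ from \Cref{claim:monotonicity-of-kappa2}. By \Cref{lemma:monotonicity-of-skewness} the numerator is strictly positive. By \Cref{claim:monotonicity-of-kappa2} and \Cref{claim:bound-on-kappa2} we have $|\kappa_3(t)|>0$ and $0<\kappa_2(t)\le 1$, so the denominator is positive and finite. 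Hence $R$ is continuous and pointwise strictly positive on $\R$.

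To get a uniform lower bound, it suffices to show that $\liminf_{t\to\pm\infty}R(t)>0$. One then picks $T$ with $R\ge c_0>0$ on $|t|\ge T$ and uses compactness of $[-T,T]$ to get a positive minimum there. The constant $C$ is the smaller of the two bounds.

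The main obstacle is computing the two limits. The plan is to write everything in terms of $t$ and $x=\phi(t)/\Phi(t)$, using $\ddt x = -x(t+x)$ to express $\ddt\kappa_3$ polynomially in $t$ and $x$.

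As $t\to+\infty$, we have $x\sim\phi(t)\to 0$ and $\kappa_2\to 1$. In this regime $\kappa_3\approx -t^2\phi(t)$ and $\ddt\kappa_3\approx t^3\phi(t)$. So $\ddt\sk\approx \ddt\kappa_3$ dominates the $\kappa_3^2$ term, and $R\to\infty$ (roughly like $t$), which is certainly bounded below.

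As $t\to-\infty$, we use the expansion of $h(-t)$ from \Cref{FACT:rational-appx-of-imr}: $x = |t| + 1/|t| - 2/|t|^3+\dots$. With it one checks $\kappa_2\sim t^{-2}$, $\kappa_3\sim -2|t|^{-3}$, and $\ddt\kappa_3 \sim -6|t|^{-4}$ (note the $t$-derivative of $|t|^{-3}$ is positive, so the sign needs care). The leading orders of the numerator $2\ddt\kappa_3\,\kappa_2+3\kappa_3^2$ then cancel at order $|t|^{-6}$, since $\sk\to -2$ by \Cref{claim:skew-atmost2}. The next order must therefore be extracted, and I expect the numerator to be of order $|t|^{-8}$. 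The denominator $\kappa_2^{5/2}|\kappa_3|\sim |t|^{-8}$, so $R$ should tend to a positive constant.

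In practice I would verify both limits with Mathematica, as the paper does for \Cref{claim:skew-atmost2}, by defining $R$ symbolically and calling \texttt{Limit} at $\pm\infty$. As a sanity check, I would also carry out the asymptotic expansion by hand using enough terms of the continued fraction $r_k$.

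The "moreover" part follows from the mean value theorem, because $\kappa_2$ and $\sk$ are both strictly increasing and differentiable. For $a<b$ we can write
\[
    \kappa_2(b)-\kappa_2(a)=\int_a^b \ddt\kappa_2 \le \frac1C\int_a^b \ddt\sk = \frac1C\Paren{\sk(b)-\sk(a)}.
\]
Taking absolute values gives the claim for every $a,b\in\R$.
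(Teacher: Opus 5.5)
Your proposal is correct and follows the paper's proof: pointwise positivity of the ratio from \Cref{lemma:monotonicity-of-skewness}, positive limits at $\pm\infty$ checked in Mathematica, continuity plus compactness to get a uniform constant, and integration of $\ddt\kappa_2 \le \frac{1}{C}\ddt\sk$ for the second part. Your asymptotics are in fact the right ones: the ratio tends to $+\infty$ as $t\to+\infty$ and to $6$ as $t\to-\infty$, because the numerator is $24|t|^{-8}$ at next order against a denominator of about $4|t|^{-8}$; the paper's text states these two limits the other way round, which is harmless since only positivity is used.
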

\begin{proof}
    We define the function $g(t) \coloneqq \frac{\ddt\sk(t)}{|\ddt\kappa_2(t)|}$.
    Note that by \Cref{lemma:monotonicity-of-skewness}, we have $g(t) > 0$ point-wise for every $t \in \R$. 
    Our goal is to show that also $\inf_{t \in \R} g(t) > 0$.
    The Mathematica code below can be used to compute the two limits $\lim_{t \to \infty} g(t)$ and $\lim_{t \to -\infty} g(t)$.
\begin{lstlisting}
phi[t_] = PDF[NormalDistribution[0, 1], t];
Phi[t_] = CDF[NormalDistribution[0, 1], t];
kappa2[t_] = 1 - t phi[t]/Phi[t] - (phi[t]/Phi[t])^2;
kappa3[t_] = phi[t]/Phi[t] (1 - t^2 - 3 t phi[t]/Phi[t] - 2 (phi[t]/Phi[t])^2);
skew[t_] = kappa3[t]/kappa2[t]^(3/2);
dskew[t_] = D[skew[t], t];
dkappa2[t_] = D[kappa2[t], t];
Limit[dskew[t]/Abs[dkappa2[t]], t -> -\[Infinity]]
Limit[dskew[t]/Abs[dkappa2[t]], t -> \[Infinity]]
\end{lstlisting}
    This shows that $\lim_{t \to \infty} g(t) = 6$ and $\lim_{t \to -\infty} g(t) = \infty$.
    Thus, there exist $t_-$ and $t_+$ such that for all $t \not\in [t_-,t_+]$ we have $g(t) \geq 5$.
    Finally, since the function is continuous, the minimum of $g$ over the interval $[t_-,t_+]$ is achieved and thus there exists a constant $C' > 0$ such that also $g(t) \geq C'$ for all $t \in [t_-,t_+]$.
    Hence, we have that $\inf_{t \in \R} g(t) \geq \min(C',5)$. Choosing $C = \min(C',5)$ gives the first part of the claim.
    For the second part, note that for any $a, b \in \R$ we have
    \begin{align*}
        |\kappa_2(a)  - \kappa_2(b)| &= \Abs{\int_{\min(a,b)}^{\max(a,b)}\ddt\kappa_2(s) \mathrm{d}s}\\
        &\leq \int_{\min(a,b)}^{\max(a,b)}\Abs{\ddt\kappa_2(s)} \mathrm{d}s\\
        &\leq \frac{1}{C} \cdot \int_{\min(a,b)}^{\max(a,b)}\ddt\sk(s) \mathrm{d}s\\
        &= \frac{1}{C} \cdot \sk(\max(a,b)) - \sk(\min(a,b))\\
        &= \frac{1}{C} \cdot \Abs{\sk(a) - \sk(b)}.
    \end{align*}
    In the third step, we used the first part of the claim.
    And in the final step, we used that $\sk$ is monotonic by \Cref{lemma:monotonicity-of-skewness}. 
    This shows the second part of the claim.
\end{proof}

\begin{claim*}[Restatement of \Cref{CLAIM:ddtskewoverddtpsi}]\claimddtskewoverddtpsitext
\end{claim*}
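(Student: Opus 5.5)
We prove \Cref{CLAIM:ddtskewoverddtpsi} with the same template as \Cref{CLAIM:ddtskewoverddtkappa2}: first a pointwise positivity argument and a limit computation to get a uniform lower bound, then an integration argument for the second part.

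\textbf{Derivative of $\psi$.} By \Cref{claim:1-lipschitz-k1} we have $\kappa_1' = 1-\kappa_2$, and by \Cref{claim:monotonicity-of-kappa2} we have $\kappa_2' = -\kappa_3$. Hence
\[
    \ddt\psi(t) = \frac{1-\kappa_2(t)}{\kappa_2(t)^{1/2}} + \frac{\kappa_1(t)\kappa_3(t)}{2\kappa_2(t)^{3/2}}.
\]
Both terms are nonnegative, since $\kappa_1\kappa_3>0$ by \Cref{claim:kappa1kappa3bounded}. It follows that $\kappa_2(t)^{3/2}\,|\ddt\psi(t)|$ is bounded by $O(1)$: use $\kappa_2\le 1$ for the first term and $\kappa_1\kappa_3 = O(1)$ for the second.

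\textbf{Uniform lower bound.} Define
\[
    G(t) \coloneqq \frac{\ddt\sk(t)}{\kappa_2(t)^2\,|\ddt\psi(t)|}.
\]
By \Cref{lemma:monotonicity-of-skewness}, $\ddt\sk>0$. At every $t$ the denominator is finite, and it is positive because $1-\kappa_2(t)>0$. So $G$ is continuous and strictly positive on $\R$.

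It remains to control the two tails, which we would verify with Mathematica exactly as for the previous claim.
\begin{itemize}
    \item As $t\to+\infty$: the numerator and $\psi'$ are both governed by $\phi(t)$-scale terms. Indeed $1-\kappa_2\approx t\phi(t)$ and $\kappa_1\kappa_3\approx t^2\phi(t)^2$, while $\kappa_2\to1$ and $\sk'\approx -\kappa_3' \approx t^3\phi(t)$. We therefore expect $G\to\infty$, or at least a positive limit.
    \item As $t\to-\infty$: use the Mills-ratio expansions $\kappa_2\sim 1/t^2$ and $\kappa_1\sim t$. These give $\psi \sim -t^2$ up to constants, so $\psi' = \Theta(|t|)$. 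From $\sk\to-2$ we get $\sk' = \Theta(|t|^{-3})$ or similar. Then $\kappa_2^2\,\psi' \approx |t|^{-3}$, so $G$ tends to a positive constant.
\end{itemize}
The factor $\kappa_2^2$ in the statement is exactly what makes this tail finite and nonzero, so the main obstacle is the precise $t\to-\infty$ asymptotics. We would handle it by computing the limit with Mathematica, as in the proof of \Cref{CLAIM:ddtskewoverddtkappa2}, or by hand using the rational bounds $r_k$ of \Cref{FACT:rational-appx-of-imr}.

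Positive limits at both ends give $t_-<t_+$ with $G\ge c_0$ outside $[t_-,t_+]$. On the compact interval $[t_-,t_+]$, continuity and positivity give a positive minimum. This yields $C>0$ with $G\ge C$ on all of $\R$.

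\textbf{Second part.} Assume without loss of generality that $a<b$. Then
\[
    |\psi(a)-\psi(b)| \le \int_a^b |\psi'(s)|\,\mathrm{d}s \le \frac{1}{C}\int_a^b \frac{\sk'(s)}{\kappa_2(s)^2}\,\mathrm{d}s.
\]
By \Cref{claim:monotonicity-of-kappa2}, $\kappa_2$ is increasing, so $\kappa_2(s)^{-2}\le \kappa_2(a)^{-2}\le\max\{\kappa_2(a)^{-2},\kappa_2(b)^{-2}\}$ for all $s\in[a,b]$. Pulling this factor out of the integral and using $\sk'>0$ gives
\[
    |\psi(a)-\psi(b)| \le \frac{1}{C}\,|\sk(a)-\sk(b)|\cdot\max\{\kappa_2(a)^{-2},\kappa_2(b)^{-2}\},
\]
which is the claimed bound.
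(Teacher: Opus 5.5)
Your proposal is correct and follows essentially the same route as the paper. Both arguments have the same four parts:
\begin{itemize}
\item pointwise positivity of the ratio via \Cref{lemma:monotonicity-of-skewness};
\item tail limits obtained by explicit computation (the paper's Mathematica check gives $+\infty$ as $t\to+\infty$ and $6$ as $t\to-\infty$);
\item continuity on a compact middle interval;
\item the identical integration argument using that $\kappa_2$ and $\sk$ are increasing.
\end{itemize}
One minor point: the rate $\ddt\sk(t)=\Theta(|t|^{-3})$ does not follow from $\sk\to-2$ alone. It comes from the expansion $\sk(t)=-2+6/t^2+o(t^{-2})$, which also gives the limit $6$. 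Since you defer the tail limits to the same explicit computation as the paper, this is not a gap.
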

\begin{proof}
    We define the function $g(t) = \frac{\ddt \sk(t)}{\abs{\ddt \psi(t)} \cdot \kappa_2(t)^2}$.
    As in the proof of \Cref{CLAIM:ddtskewoverddtkappa2}, we have by \Cref{lemma:monotonicity-of-skewness} that $g(t) > 0$ point-wise and we want to show that also the limits $\lim_{t \to \infty} g(t)$ and $\lim_{t \to -\infty} g(t)$ are strictly positive. The following Mathematica code shows that $\lim_{t \to \infty} g(t) = \infty$ and $\lim_{t \to -\infty} g(t) = 6$.
\begin{lstlisting}
phi[t_] = PDF[NormalDistribution[0, 1], t];
Phi[t_] = CDF[NormalDistribution[0, 1], t];
kappa1[t_] = -(phi[t]/Phi[t]);
kappa2[t_] = 1 - t phi[t]/Phi[t] - (phi[t]/Phi[t])^2;
kappa3[t_] = 
  phi[t]/Phi[t] (1 - t^2 - 3 t phi[t]/Phi[t] - 2 (phi[t]/Phi[t])^2);
skew[t_] = kappa3[t]/kappa2[t]^(3/2);
dskew[t_] = D[skew[t], t];
psi[t_] = kappa1[t]/kappa2[t]^(1/2);
dpsi[t_] = D[psi[t], t];
Limit[dskew[t]/Abs[dpsi[t]*kappa2[t]^2], t -> -\[Infinity]]
Limit[dskew[t]/Abs[dpsi[t]*kappa2[t]^2], t -> \[Infinity]]
\end{lstlisting}
    Thus, as in the proof of \Cref{CLAIM:ddtskewoverddtkappa2}, since $g$ is continuous, we get that there exists a constant $C > 0$ such that $g(t) \geq C$ for all $t \in \R$.
    For the second part, we argue analogous to the proof of \Cref{CLAIM:ddtskewoverddtkappa2}. Since $\kappa_2$ is monotonically increasing by \Cref{claim:monotonicity-of-kappa2}, we have $\kappa_2(s)^{-2} \leq \max\{\kappa_2(a)^{-2}, \kappa_2(b)^{-2}\}$ for every $s$ between $a$ and $b$.
    Thus, we get that
    \begin{align*}
        |\psi(a)  - \psi(b)| &\leq \int_{\min(a,b)}^{\max(a,b)} \Abs{\ddt\psi(s)} \mathrm{d}s\\
        &\leq \frac{1}{C} \cdot \int_{\min(a,b)}^{\max(a,b)} \ddt\sk(s) \cdot \kappa_2(s)^{-2} \mathrm{d}s\\
        &\leq \frac{1}{C} \cdot \max\{\kappa_2(a)^{-2}, \kappa_2(b)^{-2}\}\int_{\min(a,b)}^{\max(a,b)} \ddt\sk(s) \cdot \mathrm{d}s\\
        &\leq \frac{1}{C} \cdot \max\{\kappa_2(a)^{-2}, \kappa_2(b)^{-2}\} \cdot \Abs{\sk(a)-\sk(b)},
    \end{align*}
    which completes the proof.    
\end{proof}

\begin{claim*}[Restatement of \Cref{CLAIM:ddtkappa2overddtkappa1}]\claimddtkappatwooverddtkappaonetext
\end{claim*}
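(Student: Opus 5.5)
We follow the same template as the proofs of \Cref{CLAIM:ddtskewoverddtkappa2,CLAIM:ddtskewoverddtpsi}. First we note that the ratio is pointwise well defined and positive. By \Cref{claim:monotonicity-of-kappa2} we have $\ddt\kappa_2(t) = -\kappa_3(t) > 0$ for every $t$. From the proof of \Cref{claim:1-lipschitz-k1} we have $\ddt\kappa_1(t) = 1-\kappa_2(t)$, and this is strictly positive. Strictness holds because $\kappa_2 < 1$: $\kappa_2$ is strictly increasing with limit $1$ as $t\to\infty$. Hence
\[
    g(t) \coloneqq \frac{\ddt\kappa_2(t)}{\abs{\ddt\kappa_1(t)}\cdot\kappa_2(t)^2} = \frac{-\kappa_3(t)}{(1-\kappa_2(t))\,\kappa_2(t)^2}
\]
is continuous and strictly positive on $\R$, using $\kappa_2>0$ from \Cref{claim:bound-on-kappa2}.

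The key step is to show $\inf_t g(t) > 0$. By continuity it suffices that $\liminf_{t\to\pm\infty} g(t) > 0$; the minimum over a compact middle interval $[t_-,t_+]$ is then attained and positive, exactly as before. We would verify the limits with Mathematica, as in the other claims, and also sketch them analytically.

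\textbf{As $t\to+\infty$.} Here $\kappa_2\to 1$. By \Cref{claim:one-minus-k2}-type estimates and the computation $1-\kappa_2(t) = h_-(t)\,(t + h_-(t))$ with $h_-(t) = \phi(t)/\Phi(t)\approx\phi(t)$, we get $1-\kappa_2 \approx t\phi(t)$. Meanwhile $-\kappa_3 \approx \phi(t)(t^2-1)$ by \Cref{fact:first-3-cumulants}. So $g(t)\approx t\to\infty$.

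\textbf{As $t\to-\infty$.} Here $1-\kappa_2\to 1$. By \Cref{claim:kappa2-lower} and the upper bound in the proof of \Cref{claim:kappa1kappa2bounded}, $\kappa_2(t)=\Theta(1/t^2)$. By \Cref{claim:kappa3-lower} and the bound in the proof of \Cref{claim:kappa1kappa3bounded}, $\abs{\kappa_3(t)}=\Theta(1/\abs{t}^3)$. Hence $g(t)=\Theta(\abs{t}^{-3}\cdot t^4)=\Theta(\abs{t})\to\infty$. In fact a precise expansion using the continued fraction bounds $r_k$ of \Cref{FACT:rational-appx-of-imr} gives $\kappa_2\sim 1/t^2$ and $\abs{\kappa_3}\sim 2/\abs{t}^3$, but only the order matters. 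The main obstacle is making these asymptotics rigorous with explicit $r_k$ bounds; the Mathematica limit computation handles this in practice.

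For the second part, fix $a,b$. Since $\kappa_1$ is monotone (its derivative $1-\kappa_2$ is positive), we can write
\[
    \abs{\kappa_1(a)-\kappa_1(b)} \le \int_{\min(a,b)}^{\max(a,b)}\abs{\ddt\kappa_1(s)}\,\mathrm{d}s \le \frac1C\int_{\min(a,b)}^{\max(a,b)} \ddt\kappa_2(s)\,\kappa_2(s)^{-2}\,\mathrm{d}s.
\]
Because $\kappa_2$ is increasing, $\kappa_2(s)^{-2}\le\max\{\kappa_2(a)^{-2},\kappa_2(b)^{-2}\}$ for $s$ between $a$ and $b$. Pulling this factor out of the integral gives
\[
    \abs{\kappa_1(a)-\kappa_1(b)} \le \frac1C\,\abs{\kappa_2(a)-\kappa_2(b)}\cdot\max\{\kappa_2(a)^{-2},\kappa_2(b)^{-2}\},
\]
which is the claimed bound.
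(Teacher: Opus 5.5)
Your proof is correct and follows the paper's argument: you show $g>0$ pointwise, that $g\to\infty$ at $\pm\infty$ (which the paper simply checks in Mathematica), and use continuity to get $\inf_t g(t)>0$; then you integrate $\abs{\ddt\kappa_1}\le \frac1C\,\ddt\kappa_2\,\kappa_2^{-2}$ and pull out $\max\{\kappa_2(a)^{-2},\kappa_2(b)^{-2}\}$ by monotonicity of $\kappa_2$, and your check that $\ddt\kappa_1=1-\kappa_2>0$ strictly is a detail the paper leaves implicit. One small correction to your sketch is that the proof of \Cref{claim:kappa1kappa3bounded} only gives $\abs{\kappa_3(t)}=O(1/\abs{t})$, not $O(1/\abs{t}^3)$, but you never need that direction: $g(t)\ge \abs{\kappa_3(t)}/\kappa_2(t)^2\ge \abs{t}/800$ for $t<-4$ by \Cref{claim:kappa3-lower} and $\kappa_2(t)\le 4/t^2$, and for $t\to+\infty$ the identity $-\kappa_3/(1-\kappa_2)=t+2\frac{\phi(t)}{\Phi(t)}-\bigl(t+\frac{\phi(t)}{\Phi(t)}\bigr)^{-1}$ likewise gives $g(t)\to\infty$ rigorously, so no extra $r_k$ analysis is required.
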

\begin{proof}
    We define the function $g(t) = \frac{\ddt \kappa_2(t)}{\abs{\ddt \kappa_1(t)} \cdot \kappa_2(t)^2}$. Analogous as for \Cref{CLAIM:ddtskewoverddtkappa2,CLAIM:ddtskewoverddtpsi}, we have by \Cref{claim:monotonicity-of-kappa2} that $g(t) > 0$ point-wise and we want to show that also the two limits $\lim_{t \to \infty} g(t)$ and $\lim_{t \to -\infty} g(t)$ are strictly positive. With the following Mathematica code we get that $\lim_{t \to \pm \infty} g(t) = \infty$.
\begin{lstlisting}
phi[t_] = PDF[NormalDistribution[0, 1], t];
Phi[t_] = CDF[NormalDistribution[0, 1], t];
kappa1[t_] = -(phi[t]/Phi[t]);
kappa2[t_] = 1 - t phi[t]/Phi[t] - (phi[t]/Phi[t])^2;
dkappa1[t_] = D[kappa1[t], t];
dkappa2[t_] = D[kappa2[t], t];
Limit[dkappa2[t]/Abs[dkappa1[t]*kappa2[t]^ 2], t -> -\[Infinity]]
Limit[dkappa2[t]/Abs[dkappa1[t]*kappa2[t]^2], t -> \[Infinity]]
\end{lstlisting}
    As for the proof of \Cref{CLAIM:ddtskewoverddtpsi}, this implies that $g(t) \geq C$ for all $t \in \R$ for some constant $C > 0$.
    Furthermore, also as in \Cref{CLAIM:ddtskewoverddtpsi}, we get that
    \begin{align*}
        \abs{\kappa_1(a) - \kappa_1(b)} &\leq \frac{1}{C} \int_{\min(a,b)}^{\max(a,b)} \ddt\kappa_2(s) \cdot \kappa_2(s)^{-2} \mathrm{d}s\\
        &\leq \frac{1}{C} \cdot \max\{\kappa_2(a)^{-2}, \kappa_2(b)^{-2}\} \cdot \Abs{\kappa_2(a)-\kappa_2(b)},
    \end{align*}
    which completes the proof.
\end{proof}

Finally, we prove the following fact that we used in \Cref{APP:facts}.
\begin{fact}\label{FACT:computationkappa2monotonic}
    The function
    \[
        g(t) = t^2-1+\frac{\exp\Paren{-t^2/2}}{\sqrt{\pi/2}+t}\cdot\Paren{3t+2\frac{\exp\Paren{-t^2/2}}{\sqrt{\pi/2}+t}}
    \]
    is monotonically increasing on the interval $(0,1)$.
\end{fact}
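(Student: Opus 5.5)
We need to show $g'(t)>0$ on $(0,1)$. Write $a(t)=\sqrt{\pi/2}+t$ and $E(t)=e^{-t^2/2}$, and set $q(t)=E(t)/a(t)$, so that
\[
g(t)=t^2-1+3t\,q(t)+2q(t)^2 .
\]
Since $E'=-tE$, we get
\[
q'(t)=-q(t)\Paren{t+\frac{1}{a(t)}}.
\]
Differentiating $g$ then gives
\[
g'(t)=2t+3q+3t q'+4q q' = 2t+3q-q\Paren{t+\frac1a}\Paren{3t+4q}.
\]
The plan is to show this expression is strictly positive on $(0,1)$. It can also be checked with Mathematica by \texttt{Reduce}, in the same style as the other facts in this appendix.

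For a human-checkable argument, bound $q$ above and below on $[0,1]$. For the upper bound use $E\le 1$ and $a\ge\sqrt{\pi/2}\approx1.2533$, so $q\le 0.798$. For the lower bound use $E\ge e^{-1/2}$ and $a\le 2.2533$, so $q\ge 0.269$.

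Group the terms as
\[
g'(t)=q\Paren{3-\frac{4q}{a}}+2t-3t^2q-\frac{3tq}{a}-4tq^2 .
\]
At $t=0$ we have $q=a^{-1}=\sqrt{2/\pi}$, so
\[
g'(0)=q(3-4q^2)=\sqrt{2/\pi}\,\Paren{3-8/\pi}>0.
\]
Near $t=1$, $q(1)=e^{-1/2}/2.2533\approx0.269$, and $g'(1)\approx 2+0.807-0.807(1.444)(4.077)/3.\ldots$ has to be evaluated carefully.

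The main obstacle is that the negative term $q(t+1/a)(3t+4q)$ has size comparable to the positive terms across the interval. Crude uniform bounds on $q$ may therefore lose the sign. The fallback is to split $(0,1)$ into a few subintervals, say of length $1/4$. On each subinterval use monotonicity: $q$ is decreasing, $t+1/a$ is increasing only mildly, and $3t+4q$ has a controllable derivative. This gives upper bounds on the negative part and lower bounds on $2t+3q$ from the endpoint values. The remaining finitely many numerical inequalities are then checked directly.

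If even this split is too tight, I would do what the paper does for the analogous facts and certify positivity with \texttt{Reduce} applied to $g'(t)>0$ for $0<t<1$.
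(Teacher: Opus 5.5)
Your derivative $g'(t)=2t+3q-q\bigl(t+\tfrac{1}{a}\bigr)(3t+4q)$ is correct, and certifying $g'(t)>0$ for $0<t<1$ with \texttt{Reduce} is exactly the paper's proof, so the proposal is sound. Be aware, though, that your hand-checkable variant with pieces of width $1/4$ would not close: on $[0,1/4]$ the endpoint bound from the monotonicity of each factor, $3q(1/4)-q(0)\bigl(\tfrac14+\tfrac{1}{a(1/4)}\bigr)\bigl(\tfrac34+4q(1/4)\bigr)$, is about $-0.50$. Numerically $g'\geq g'(0)=\sqrt{2/\pi}\,(3-8/\pi)\approx 0.36$ on $[0,1]$, and near $0$ this endpoint bound loses roughly $3h$ on a piece of width $h$, so you would need widths around $0.05$ (or simply the \texttt{Reduce} check).
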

\begin{proof}
    This can be verified by the following Mathematica code, which confirms that $\ddt g(t) > 0$ for all~$t \in (0,1)$.
\begin{lstlisting}
g[t_] = t^2 - 1 + 
Exp[-t^2/2]/(Sqrt[Pi/2] + t)*(3 t + 2 Exp[-t^2/2]/(Sqrt[Pi/2] + t));
dg[t_] = D[g[t], t];
Reduce[{dg[t] > 0, 0 < t < 1}]
\end{lstlisting}
\end{proof}

\section{\texorpdfstring{Determining $w$}{Determining w}}\label{APP:howtogetw}
In this appendix, we show that for a vector $u \propto \Sigma w$, we can compute a vector proportional to $w$ by computing $M_2^{-1}u$.
This is not needed for our algorithm, but we include it for completeness (instead, this would be needed if one wants to estimate the survival set as they do in \cite{lee2024}).
Note that if $\Sigma$ is \textit{isotropic} (or more generally, $w$ is an \textit{eigenvector} of $\Sigma$), then we can learn the direction of $w$ from the third central moment.
For arbitrary $w$ and $\Sigma$, we get the following (where we assume $\Sigma \succ 0$; otherwise we cannot invert $M_2$).

\begin{claim}
    Assume $\Sigma \succ 0$.
    Then the second central moment $M_2$ of the truncated distribution is invertible and for $u\propto \Sigma w$ we have $\widetilde{w} \coloneqq M_2^{-1}u \propto w$.
\end{claim}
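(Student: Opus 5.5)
The plan is to read everything off the closed form of $M_2$ from \Cref{lem:first3moments},
\[
    M_2 = \Sigma - (1-\kappa_2(\gamma))\,\frac{\Sigma w w^\top \Sigma}{\norm{\Sigma^{1/2}w}^2},
\]
and to handle invertibility and the identity $M_2^{-1}u \propto w$ directly from this formula.

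\textbf{Invertibility.} Factor $M_2$ as in the proof of \Cref{claim:preconditioner}:
\[
    M_2 = \Sigma^{1/2} B \Sigma^{1/2}, \qquad B = I_d - (1-\kappa_2(\gamma))\, v v^\top, \qquad v = \frac{\Sigma^{1/2}w}{\norm{\Sigma^{1/2}w}}.
\]
Since $v$ is a unit vector, $B$ has eigenvalue $\kappa_2(\gamma)$ in the direction $v$ and eigenvalue $1$ on $v^\perp$. By \Cref{claim:bound-on-kappa2}, $\kappa_2(\gamma) > 0$, so $B \succeq \kappa_2(\gamma)\, I_d \succ 0$. Together with $\Sigma \succ 0$ this gives $M_2 \succ 0$, and in particular $M_2$ is invertible.

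\textbf{Proportionality.} Apply $M_2$ to $w$, exactly as at the start of the proof of \Cref{lem:threshold-estimation:closeinskew}. Using $w^\top \Sigma w = \norm{\Sigma^{1/2}w}^2$,
\[
    M_2 w = \Sigma w - (1-\kappa_2(\gamma))\,\Sigma w = \kappa_2(\gamma)\,\Sigma w .
\]
Applying $M_2^{-1}$ to both sides yields $M_2^{-1}\Sigma w = \kappa_2(\gamma)^{-1} w$. Now write $u = c\,\Sigma w$ for some scalar $c$. By linearity,
\[
    \widetilde{w} = M_2^{-1}u = \frac{c}{\kappa_2(\gamma)}\, w \propto w .
\]
If $c \neq 0$, the proportionality constant is nonzero, so normalizing $\widetilde{w}$ recovers $w$ up to sign.

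No step is a real obstacle. The only point needing care is the positivity of $\kappa_2(\gamma)$, which is used twice: for invertibility of $B$, and for the division in the proportionality step. Both are covered by \Cref{claim:bound-on-kappa2}.
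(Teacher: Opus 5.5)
Your proof is correct and follows the paper's argument essentially verbatim: the same factorization $M_2 = \Sigma^{1/2} B \Sigma^{1/2}$ with $B \succeq \kappa_2(\gamma) I_d \succ 0$ gives invertibility, and the same identity $M_2 w = \kappa_2(\gamma)\Sigma w$ gives the proportionality (the paper only does these two steps in the opposite order). One small precision: the bound $B \succeq \kappa_2(\gamma) I_d$ uses $\kappa_2(\gamma) \le 1$ in addition to $\kappa_2(\gamma) > 0$, and \Cref{claim:bound-on-kappa2} provides both; positivity alone already suffices for $B \succ 0$.
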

\begin{proof}
    From \Cref{lem:first3moments} we know
    \[
        M_2=\Sigma-(1-\kappa_2(\gamma)) \Paren{\frac{\Sigma w}{\vnorm{\Sigma^{1/2}w}}}\Paren{\frac{\Sigma w}{\vnorm{\Sigma^{1/2}w}}}^\top,
    \]
    so we get
    \[
        M_2 w=\Sigma w-(1-\kappa_2(\gamma))\Sigma w=\kappa_2(\gamma) \Sigma w,
    \]
    where we used that $w^\top \Sigma w = \vnorm{\Sigma^{1/2} w}^2$.
    
    We now show that $M_2$ is invertible.
    Let $\Set{v_i}_{i=1}^{d-1}$ be an orthonormal basis for the orthogonal complement of $\Sigma^{1/2}w$ with respect to $\R^d$.
    Recall from \Cref{claim:bound-on-kappa2} that we have $\kappa_2(t)\in (0,1]$.
    It follows that
    \begin{align*}
        &I_d-(1-\kappa_2(\gamma))\Paren{\frac{\Sigma^{1/2} w}{\vnorm{\Sigma^{1/2}w}}}\Paren{\frac{\Sigma^{1/2} w}{\vnorm{\Sigma^{1/2}w}}}^\top\\
        & \quad =\sum_{j=1}^{d-1}v_jv_j^T+\Paren{\frac{\Sigma^{1/2} w}{\vnorm{\Sigma^{1/2}w}}}\Paren{\frac{\Sigma^{1/2} w}{\vnorm{\Sigma^{1/2}w}}}^\top - (1-\kappa_2(\gamma))\Paren{\frac{\Sigma^{1/2} w}{\vnorm{\Sigma^{1/2}w}}}\Paren{\frac{\Sigma^{1/2} w}{\vnorm{\Sigma^{1/2}w}}}^\top\\
        & \quad =\kappa_2(\gamma)\cdot\Paren{\frac{\Sigma^{1/2} w}{\vnorm{\Sigma^{1/2}w}}}\Paren{\frac{\Sigma^{1/2} w}{\vnorm{\Sigma^{1/2}w}}}^\top+\sum_{j=1}^{d-1}v_jv_j^T\\
        & \quad \succeq \kappa_2(\gamma)\Paren{\Paren{\frac{\Sigma^{1/2} w}{\vnorm{\Sigma^{1/2}w}}}\Paren{\frac{\Sigma^{1/2} w}{\vnorm{\Sigma^{1/2}w}}}^\top+\sum_{j=1}^{d-1}v_jv_j^T}=\kappa_2(\gamma)\cdot I_d
    \end{align*}
    where the last inequality follows since $\kappa_2(\gamma) \leq 1$.
    This implies that, using $\kappa_2(\gamma) > 0$,
    \begin{align*}
        M_2 &= \Sigma^{1/2}\Paren{I_d-(1-\kappa_2(\gamma))\Paren{\frac{\Sigma^{1/2} w}{\vnorm{\Sigma^{1/2}w}}}\Paren{\frac{\Sigma^{1/2} w}{\vnorm{\Sigma^{1/2}w}}}^\top}\Sigma^{1/2}\\
        &\succeq\Sigma^{1/2}\kappa_2(\gamma)I_d\Sigma^{1/2} =\kappa_2(\gamma)\Sigma \succ 0.
    \end{align*}
    Thus $M_2$ is invertible and we obtain
    \[
        w=\kappa_2(\gamma)M_2^{-1}\Sigma w.
    \]
    Hence, if $u \propto \Sigma w$, then we get
    \[
        \widetilde{w}=M_2^{-1} u\propto M_2^{-1}\Sigma w\propto w,
    \]
    which completes the proof.
\end{proof}